\theoremstyle{plain}
\newtheorem{theorem}{Theorem}[subsection]
\newtheorem{lemma}[theorem]{Lemma}
\newtheorem{corollary}[theorem]{Corollary}
\newtheorem{proposition}[theorem]{Proposition}
\theoremstyle{definition}
\newtheorem{definition}[theorem]{Definition}
\newtheorem{example}[theorem]{Example}
\newtheorem{note}[theorem]{Note}
\newtheorem{notation}[theorem]{Notation}
\def\epsilon{\varepsilon}
\renewcommand{\rho}{\varrho}
\newcommand{\takeout}[1]{\empty}
\newcommand{\Set}{\mathsf{Set}}
\newcommand{\BA}{\mathsf{BA}}
\newcommand{\DL}{\mathsf{DL}}
\newcommand{\Poset}{\mathsf{Poset}}
\newcommand{\xto}[1]{\xrightarrow{#1}}
\newcommand{\hookto}{\hookrightarrow}
\newcommand{\epito}{\twoheadrightarrow}
\newcommand{\up}{\uparrow}
\newcommand{\down}{\downarrow}
\newcommand{\monoto}{\rightarrowtail}
\newcommand{\acc}{{\tt acc}}
\newcommand{\two}{\mathbb{2}}
\newcommand{\three}{\mathbb{3}}
\newcommand{\ang}[1]{\langle #1 \rangle}
\newcommand{\sem}[1]{\llbracket #1 \rrbracket}
\newcommand{\Land}{\bigwedge}
\newcommand{\Lor}{\bigvee}
\newcommand{\To}{\Rightarrow}
\newcommand{\oT}{\Leftarrow}
\newcommand{\xTo}[2][]{\ext@arrow 0359\Rightarrowfill@{#1}{#2}}
\newcommand{\jslTight}[1]{\mathsf{Ti}[#1]}
\newcommand{\Nleq}{\mathtt{Nleq}}
\newcommand{\syncp}{\owedge}
\newcommand{\ttenp}{\otimes_t}
\newcommand{\cov}{\mathscr{C}}
\newcommand{\rG}{\mathcal{G}}
\newcommand{\rH}{\mathcal{H}}
\newcommand{\rI}{\mathcal{I}}
\newcommand{\rR}{\mathcal{R}}
\newcommand{\rS}{\mathcal{S}}
\newcommand{\rT}{\mathcal{T}}
\newcommand{\rM}{\mathcal{M}}
\newcommand{\rN}{\mathcal{N}}
\newcommand{\rO}{\mathcal{O}}
\newcommand{\rD}{\mathcal{D}}
\newcommand{\rE}{\mathcal{E}}
\newcommand{\rF}{\mathcal{F}}
\newcommand{\aR}{\mathbb{R}}
\newcommand{\aS}{\mathbb{S}}
\newcommand{\aT}{\mathbb{T}}
\newcommand{\aU}{\mathbb{U}}
\newcommand{\pP}{\mathtt{P}}
\newcommand{\pQ}{\mathtt{Q}}
\newcommand{\jslER}[1]{\mathbb{ER}(#1)}
\newcommand{\minDfa}[1]{\mathbf{dfa}(#1)}
\newcommand{\minSatDfa}[1]{\mathbf{dfa}_\down(#1)}
\newcommand{\rs}[1]{\mathrm{rs}(#1)}
\newcommand{\rsc}[1]{\mathrm{\bf rsc}(#1)}
\newcommand{\reach}[1]{\mathbf{reach}(#1)}
\newcommand{\rev}[1]{\mathrm{\bf rev}(#1)}
\newcommand{\coreach}[1]{\mathbf{coreach}(#1)}
\newcommand{\simple}[1]{\mathbf{simple}(#1)}
\newcommand{\simpleIrr}[1]{\mathbf{simple}_{\lor}(#1)}
\newcommand{\jslReach}[1]{\mathbb{reach}(#1)}
\newcommand{\jslDfaReach}[1]{\mathfrak{reach}(#1)}
\newcommand{\jslDfaSimple}[1]{\mathfrak{simple}(#1)}
\newcommand{\jslDfaMin}[1]{\mathfrak{dfa}(#1)}
\newcommand{\jslDfaBoolMin}[1]{\mathfrak{dfa}_{\neg}(#1)}
\newcommand{\jslDfaDistMin}[1]{\mathfrak{dfa}_{\land}(#1)}
\newcommand{\jslDfaSynMin}[1]{\mathfrak{dfa}_{\mathsf{Syn}}(#1)}
\newcommand{\jslDfaTs}[1]{\mathfrak{ts}(#1)}
\newcommand{\jslDfaSc}[1]{\mathfrak{sc}(#1)}
\newcommand{\cl}{\mathrm{\bf cl}}
\newcommand{\inte}{\mathrm{\bf in}}
\newcommand{\fN}{\mathfrak{N}}
\newcommand{\fM}{\mathfrak{M}}
\newcommand{\JSL}{{\mathsf{JSL}}}
\newcommand{\BiCliq}{\mathsf{Dep}}
\newcommand{\Dep}{\mathsf{Dep}}
\newcommand{\AutDep}{\mathsf{aut}_{\Dep}}
\newcommand{\Rel}{\mathsf{Rel}}
\newcommand{\Id}{\mathsf{Id}}
\newcommand{\Pow}{\mathcal{P}}
\newcommand{\FPow}{\Pow_f}
\newcommand{\Pirr}{\mathtt{Pirr}}
\newcommand{\Open}{\mathtt{Open}}
\newcommand{\Rev}{\mathtt{Rev}}
\newcommand{\Airr}{\mathtt{Airr}}
\newcommand{\Det}{\mathtt{Det}}
\newcommand{\pOp}{\mathsf{op}}
\newcommand{\jslLQ}[1]{\mathbb{LQ}(#1)}
\newcommand{\jslLP}[1]{\mathbb{LP}(#1)}
\newcommand{\jslLD}[1]{\mathbb{LD}(#1)}
\newcommand{\jslLRP}[1]{\mathbb{LRP}(#1)}
\newcommand{\jslAtz}[1]{\mathbb{At}_{#1}}
\newcommand{\at}[1]{\mathsf{at}_{#1}}
\newcommand{\Atz}[1]{\mathsf{At}_{#1}}
\newcommand{\jslLangs}[1]{\mathbb{langs}(#1)}
\newcommand{\rDR}[1]{\mathcal{DR}_{#1}}
\newcommand{\rMN}[1]{\mathcal{MN}_{#1}}
\newcommand{\dfa}[1]{\mathsf{dfa}_{#1}}
\def\endbox{\hfill$\blacksquare$}
\newcommand{\rep}{\mathrm{rep}}
\newcommand{\red}{\mathrm{red}}
\newcommand{\LW}[1]{\mathsf{LW}(#1)}
\newcommand{\LQ}[1]{\mathsf{LQ}(#1)}
\newcommand{\RQ}[1]{\mathsf{RQ}(#1)}
\newcommand{\ILQ}[1]{\mathsf{ILQ}(#1)}
\newcommand{\LP}[1]{\mathsf{LP}(#1)}
\newcommand{\LD}[1]{\mathsf{LD}(#1)}
\newcommand{\LRP}[1]{\mathsf{LRP}(#1)}
\newcommand{\LRW}[1]{\mathsf{LRW}(#1)}
\newcommand{\iffLabel}[1]{\overset{#1}\iff}
\newcommand{\src}{i}
\newcommand{\snk}{o}
\newcommand{\toJdfa}[1]{{\bf jdfa}(#1)}
\newcommand{\toLext}[1]{{\bf lext}(#1)}
\newcommand{\rLCov}[2]{\ang{#1, #2}}
\newcommand{\revEl}[2]{{rev}_{#1} (#2)}
\newcommand{\dep}[1]{dep(#1)}
\newcommand\longepi[2][]{%
  \mathrel{\ooalign{$\xrightarrow[#1\mkern4mu]{#2\mkern4mu}$\cr%
  \hidewidth$\rightarrow\mkern4mu$}}
}
\newcommand{\SynMon}[1]{{\bf Syn}(#1)}
\newcommand{\jslDfaRqc}[1]{\mathfrak{rqc}(#1)}
\newcommand{\TM}[1]{{\bf TM}(#1)}
\newcommand{\TS}[1]{{\bf TS}(#1)}
\newcommand{\SynSr}[1]{{\bf Syn}_\lor(#1)}
\newcommand{\FMon}[1]{{\bf #1}^*}
\newcommand{\jslDfaSyn}[1]{\mathfrak{syn}(#1)}
\newcommand{\SynSrCong}[1]{\rS_{#1}^\lor}
\newcommand{\SynCong}[1]{\rS_{#1}}
\newcommand{\SynMonDfa}[1]{{\bf dfa}_{{\bf Syn}(#1)}}
\newcommand{\SynMonSatDfa}[1]{({\bf dfa}_{{\bf Syn}(#1)})_\down}
\newcommand{\TMDfa}[1]{{\bf dfa}_{{\bf TM}(#1)}}
\newcommand{\bS}{{\bf S}}
\newcommand{\bT}{{\bf T}}
\newcommand{\bSigma}{{\bf \Sigma}}
\newcommand{\jslDfaSynBoolMin}[1]{\mathfrak{dfa}_{\mathsf{Syn}}^\neg(#1)}
\newcommand{\jslDfaSynDistMin}[1]{\mathfrak{dfa}_{\mathsf{Syn}}^\land(#1)}
\begin{document}

\title{
  Nondeterministic Automata and $\JSL$-dfas
}
\author{Robert S.R. Myers}
\maketitle

\section{Introduction}

Here's a summary of our results.

\begin{itemize}
  \item Section 2 provides background on finite join-semilattices and describes an equivalent category $\Dep$. The latter has the finite relations as objects/morphisms; its self-duality takes the converse of objects/morphisms. This section serves as a succinct version of our paper `Representing Semilattices as Relations'.
  
  \item
  Section 3 introduces the concept of \emph{Dependency Automaton} i.e.\ two nfas with a relation between their states satisfying compatibility conditions. They are essentially deterministic automata interpreted in $\Dep$, or equivalently deterministic finite automata interpreted in join-semilattices. The state-minimal $\JSL$-dfa accepting $L$ amounts to the left quotients of $L$. As a dependency automaton it can be represented as the state-minimal dfas for $L$ and $L^r$ related by the \emph{dependency relation} $\rDR{L}(u^{-1} L, v^{-1} L^r) :\iff uv^r \in L$.
  
  We also go into some detail concerning various canonical $\JSL$-dfas and their corresponding dependency automata. For example, Polak's syntactic semiring is the transition semiring of the state-minimal $\JSL$-dfa. Also, the power semiring of the syntactic monoid dualises the closure of $L$ under left/right quotients and boolean operations.

  \item Section 4 contains many results concerning the Kameda-Weiner algorithm. They lack a unifying thread, although they're all concerned with the same topic. The reader might skip to the final subsection. There it is proved that an nfa $\rN$ is `subatomic' iff the transition monoid of $\rsc{\rev{\rN}}$ is syntactic.

\end{itemize}



\section{Relations and semilattices}

The Kameda-Weiner algorithm is not an easy read \cite{KamedaWeiner1970}. It searches for an edge-covering of a bipartite graph by complete bipartite graphs, where each covering induces a nondeterministic automaton. The best known lower-bound techniques for nondeterministic automata involve such edge-coverings \cite{NDLowBoundsHard2006}. Thus we begin with a structural theory at this underlying level. Our approach is based on the work of Moshier and Jipsen \cite{ContextJipsen2012}. Our category $\Dep$ is a variant of their category ${\bf Ctxt}$, and we denote its composition by $\fatsemi$ as they did.

\subsection{Biclique edge-coverings as morphisms}

\begin{notation}[Relations and graphs]
  \item
  \begin{enumerate}
    \item
    Given a subset $X \subseteq Z$ then its \emph{relative complement} is written $\overline{X} \subseteq Z$. Given $z \in Z$ we may write $\overline{z} := \overline{\{ z \}}$. The collection of all subsets of $Z$ is denoted $\Pow Z$.

    \item
    A \emph{relation} is a subset of a specified cartesian product $\rR \subseteq X \times Y$. We denote its \emph{domain} by $\rR_s := X$ and its \emph{codomain} by $\rR_t := Y$. The \emph{relational composition} $\rR ; \rS \subseteq \rR_s \times \rS_t$ is defined whenever $\rR_t = \rS_s$, as follows:
    \[
      \rR ; \rS(x,z) :\iff \exists y \in \rR_t. \rR(x,y) \text{ $\land$ } \rS(y,z).
    \]
    Each set $Z$ has the \emph{identity relation} $\Delta_Z \subseteq Z \times Z$ defined $\Delta_Z(z_1, z_2) :\iff z_1 = z_2$.

    The \emph{image of $X \subseteq \rR_s$ under $\rR$} is denoted $\rR[X] := \{ y \in \rR_t : \exists x \in X.\rR(x, y) \}$; we may write $\rR[\{ z \}]$ as $\rR[z]$. $\rR |_{X, Y} := \rR \cap X \times Y$ is the \emph{domain-codomain restriction} of the relation $\rR$. The \emph{converse relation} is defined $\breve{\rR}(x, y) :\iff \rR(y, x)$, in particular $\breve{\rR}_s = \rR_t$ and $\breve{\rR}_t = \rR_s$.
    
    \item
    An \emph{undirected graph} (or just \emph{graph}) $(V, \rE)$ is a finite set $V$ and an irreflexive and symmetric relation $\rE \subseteq V \times V$.
    A \emph{bipartition} for a graph $(V, \rE)$ is a pair $(X, Y)$ where $X \cap Y = \emptyset$, $X \cup Y = V$ and $\rE = \rE |_{X, Y} \; \cup \; \rE |_{Y, X}$. A graph is said to be \emph{bipartite} if it has a bipartition.
    \endbox
  \end{enumerate}
\end{notation}

\begin{note}[Bipartitioned graphs as binary relations]
  \label{note:bipartitioned_as_binary}
  A bipartite undirected graph $(V, \rE)$ with bipartition $(X, Y)$ amounts to a relation $\rE |_{X, Y} \subseteq X \times Y$. This completely captures its structure. Every relation between finite sets arises from a bipartitioned graph, modulo bijective relabelling of its domain (or codomain).
  
  From this perspective, complete bipartite graphs (bicliques) are \emph{cartesian products}. Covering the edges of a bipartite graph by bicliques amounts to \emph{factorising} $\rR = \rS ; \rT$. This relationship is well-known \cite{GregoryBiclique1991}.  The number of bicliques is the cardinality of the set $\rS_t = \rT_s$ factorised through. The minimum possible cardinality is the \emph{bipartite dimension} of the respective bipartite graph i.e.\ the minimum number of bicliques needed to cover the edges. \endbox
\end{note}

\begin{definition}[Biclique edge-coverings]
  \item
  \begin{enumerate}
    \item
    A \emph{biclique} of a relation $\rR$ is a cartesian product $X \times Y \subseteq \rR$.
    \item
    A \emph{biclique edge-covering} of a relation $\rR$ is a factorisation $\rR = \rS ; \rT$.
    Its \emph{underlying bicliques}:
    \[
      \cov_{\rS, \rT} := \{ \breve{\rS}[x] \times \rT[x] : x \in \rS_t = \rT_s  \}
    \]
    satisfy the equality $\bigcup \cov_{\rS, \rT} = \rR$.
    \item
    The \emph{bipartite dimension} $dim(\rR)$ is the minimal cardinality $|\cov_{\rS, \rT}|$ over all factorisations $\rR = \rS ; \rT$.
    \endbox
  \end{enumerate}
\end{definition}

\begin{notation}[Lower/upper bipartition]
  When a binary relation $\rR$ is viewed as a bipartitioned graph, we may refer to its domain as the \emph{lower bipartition} and its codomain as the \emph{upper bipartition}. \endbox
\end{notation}

\begin{example}[Biclique edge-coverings]
  \label{ex:biclique_edge_coverings}
  \item
  \begin{enumerate}
    \item 
    Each relation $\rR$ has two canonical biclique edge-coverings i.e.\ $\rR = \Delta_{\rR_s} ; \rR$ and $\rR = \rR ; \Delta_{\rR_t}$. Viewed as a bipartite graph, the stars centered at each vertex of the lower bipartition cover the edges. Alternatively we can take each star centered at a vertex in the upper bipartition. Consequently $dim(\rR) \leq min(|\rR_s|, |\rR_t|)$.
    


    \item
    Each undirected graph $(V, \rE)$ provides an irreflexive symmetric relation $\rE \subseteq V \times V$. From our viewpoint, this relation defines a bipartitioned graph. It is known as the \emph{bipartite double cover of $(V, \rE)$} i.e.\ take two copies of $V$ and connect $e_1(u)$ to $e_2(v)$ iff $\rE(u, v)$. Starting with a complete graph on vertices $V$ yields the relation $\overline{\Delta_V} \subseteq V \times V$. Interestingly, $dim(\overline{\Delta_V}) \approx \lceil log_2(|V|) \rceil $ by applying Sperner's theorem \cite{OnBicliqCov2004}.
    
     \item
     Each finite poset $\pP = (P, \leq_\pP)$ provides an order relation $\leq_\pP \; \subseteq P \times P$. Viewed as a bipartitioned graph, paths amount to alternating relationships $p_1 \leq_\pP p_2 \geq_\pP p_3 \leq_\pP p_4 \; \cdots$. The edge-coverings from (1) above are optimal i.e.\ $dim(\leq_\pP) = |P|$. The lower/upper bipartition's stars correspond to principal up/downsets.

    \item
    Consider $C_{2n} = (\{ 0, ..., 2n - 1 \}, \rE_{2n})$ where $\rE_{2n}(i, j) :\iff |j - i| = 1$ modulo $2n$ i.e.\ the $2n$-cycle. It has precisely two bipartitions if $n \geq 1$. Assuming $0$ is in the lower bipartition, the respective relation $\rE_{2n} |_{X, Y}$ relates evens to odds and has bipartite dimension $|X| = |Y| = n$.
    
    \item
    Consider $P_n = (\{ 0, ..., n \}, \rE_n)$ where $\rE_n(i, j) :\iff |j - i| = 1$ i.e.\ the path of edge-length $n$. It has precisely two bipartitions if $n \geq 1$. Assuming $0$ is in the lower bipartition, the respective relation $\rE_n |_{X, Y}$ relates evens to odds. Moreover $dim(\rE_{2n} |_{X, Y}) = dim(\rE_{2n - 1} |_{X, Y}) = n$. \endbox


  \end{enumerate}
\end{example}

\begin{definition}[The category $\Dep$]
  \label{def:the_cat_dep}
  The objects of the category $\Dep$ are the relations $\rG \subseteq \rG_s \times \rG_t$ between finite sets. A morphism $\rR : \rG \to \rH$ is a relation $\rR \subseteq \rG_s \times \rH_t$ such that the diagram:
  \[
  \xymatrix@=15pt{
    \rG_t \ar@{..>}[rr]^{(\rR_u)\spbreve}  && \rH_t
    \\
    \rG_s \ar[u]^-{\rG} \ar[urr]^-\rR \ar@{..>}[rr]_{\rR_l} && \rH_s \ar[u]_-{\rH}
  }
  \]
  commutes in $\Rel_f$\footnote{$\Rel_f$ is the category whose objects are the finite sets and whose morphisms are the binary relations, composed via relational composition.} for some $\rR_l \subseteq \rG_s \times \rH_s$ and $\rR_u \subseteq \rH_t \times \rG_t$\footnote{The converse symbol $(\rR_u)\spbreve$ is intentional. It provides symmetry later on.}. The identity morphisms are $id_\rG := \rG$ and composition $\rR \fatsemi \rS : \rG \xto{\rR} \rH \xto{\rS} \rI$ is defined:
  \[
  \xymatrix@=15pt{
    \rG_t \ar@{..>}[rr]^{(\rR_u)\spbreve}  && \rH_t \ar@{..>}[rr]^{(\rS_u)\spbreve} && \rI_t
    \\
    \rG_s \ar[u]^-{\rG} \ar[urr]^-\rR \ar@{..>}[rr]_{\rR_l} && \rH_s \ar[u]_-{\rH} \ar[urr]^-\rS \ar@{..>}[rr]_{\rS_l} && \rI_s \ar[u]_-{\rI}
  }
  \]
  That is, $\rR \fatsemi \rS := \rR_l ; \rS_l = \rR_l ; \rS = \rR_l ; \rH ; (\rS_u)\spbreve = \rR ; (\rS_u)\spbreve = \rG ; (\rR_u)\spbreve ;(\rS_u)\spbreve$ is any of the five equivalent relational compositions starting from the bottom left and ending at the top right. \endbox
\end{definition}

Then a $\Dep$-morphism $\rR : \rG \to \rH$ is a relation factorising through $\rG$ on the left and $\rH$ on the right. In view of Note \ref{note:bipartitioned_as_binary}, it amounts to two biclique edge-coverings of $\rR$.


\begin{lemma}
  $\Dep$ is a well-defined category.
\end{lemma}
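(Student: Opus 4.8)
The plan is to break ``$\Dep$ is a category'' into four checks: (a) that $\fatsemi$ is well defined --- the displayed composites agree and the result is independent of the chosen witnesses $\rR_l,\rR_u$; (b) that a composite of $\Dep$-morphisms is again a $\Dep$-morphism; (c) that each $id_\rG=\rG$ is a $\Dep$-morphism; and (d) that the unit and associativity laws hold. Everything will be driven by associativity of relational composition in $\Rel_f$ together with the two \emph{witness equations} $\rR=\rR_l;\rH$ and $\rR=\rG;(\rR_u)\spbreve$, which are precisely commutativity of the defining square (the diagonal $\rR$ equals the bottom-then-right path and equals the left-then-top path).

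For (a), substituting $\rS=\rH;(\rS_u)\spbreve$, then $\rR=\rR_l;\rH$, then $\rR=\rG;(\rR_u)\spbreve$ yields the chain $\rR_l;\rS=\rR_l;\rH;(\rS_u)\spbreve=\rR;(\rS_u)\spbreve=\rG;(\rR_u)\spbreve;(\rS_u)\spbreve$ (for $\rS:\rH\to\rI$ with witnesses $\rS_l,\rS_u$), and the remaining composite $\rR_l;\rS_l$ enters as the witnessing lower relation via $(\rR_l;\rS_l);\rI=\rR_l;(\rS_l;\rI)=\rR_l;\rS$. Witness-independence is then immediate: the form $\rR;(\rS_u)\spbreve$ never mentions $\rR_l$ or $\rR_u$, and the form $\rR_l;\rS$ never mentions $\rR_u$ or the witnesses of $\rS$; since all these forms coincide, $\rR\fatsemi\rS$ depends only on the underlying relations $\rR$ and $\rS$.

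For (b) I would exhibit explicit witnesses for $\rR\fatsemi\rS:\rG\to\rI$, namely $(\rR\fatsemi\rS)_l:=\rR_l;\rS_l$ and $(\rR\fatsemi\rS)_u:=\rS_u;\rR_u$ --- the order reversal in the latter being exactly what the converse in the square is arranged to absorb, since $(\rS_u;\rR_u)\spbreve=(\rR_u)\spbreve;(\rS_u)\spbreve$ --- and verify the two witness equations $(\rR_l;\rS_l);\rI=\rR\fatsemi\rS$ and $\rG;(\rS_u;\rR_u)\spbreve=\rG;(\rR_u)\spbreve;(\rS_u)\spbreve=\rR\fatsemi\rS$, both already established in (a). For (c), $id_\rG=\rG$ has witnesses $\Delta_{\rG_s}$ and $\Delta_{\rG_t}$, since $\Delta_{\rG_s};\rG=\rG=\rG;\Delta_{\rG_t}$ and $\Delta\spbreve=\Delta$.

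For (d) the unit laws are read off the two convenient forms of $\fatsemi$: $id_\rG\fatsemi\rR=\Delta_{\rG_s};\rR=\rR$ (lower witness of the first composed with the second) and $\rR\fatsemi id_\rH=\rR;(\Delta_{\rH_t})\spbreve=\rR$ (the first composed with the converse of the upper witness of the second). For associativity I would compute both bracketings in the uniform shape ``first $;$ converse-of-upper-witness-of-last'': $(\rR\fatsemi\rS)\fatsemi\rT=(\rR\fatsemi\rS);(\rT_u)\spbreve=\bigl(\rR;(\rS_u)\spbreve\bigr);(\rT_u)\spbreve$, while $\rR\fatsemi(\rS\fatsemi\rT)=\rR;\bigl((\rS\fatsemi\rT)_u\bigr)\spbreve=\rR;\bigl((\rS_u)\spbreve;(\rT_u)\spbreve\bigr)$ using the witness $(\rS\fatsemi\rT)_u=\rT_u;\rS_u$ from (b); the two agree by associativity of $;$. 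I do not expect a genuine obstacle here --- the content is bookkeeping --- so the only points needing care are keeping the converses and the domain/codomain tags straight (in particular $(\rS_u;\rR_u)\spbreve=(\rR_u)\spbreve;(\rS_u)\spbreve$) and making sure the witness-independence from (a) is in hand before it is invoked in (b) and (d).
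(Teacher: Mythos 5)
Your proposal is correct and follows essentially the same route as the paper: identities witnessed by $\Delta_{\rG_s}$ and $\Delta_{\rG_t}$, composites witnessed by $\rR_l;\rS_l$ and $\rS_u;\rR_u$, witness-independence read off the five equivalent forms of $\rR\fatsemi\rS$, and the unit/associativity laws inherited from relational composition. You merely spell out the bookkeeping that the paper's terser proof leaves implicit.
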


\begin{proof}
  Concerning identity morphisms, $\Delta_{\rG_s} ; \rG = \rG = \rG ; \Delta_{\rG_t}\spbreve$; graph-theoretically we are using the star-coverings from Example \ref{ex:biclique_edge_coverings}.1. Concerning composition, $\rR \fatsemi \rS$ is well-defined: (i) the commuting rectangle provides witnessing relations $\rR_l ; \rS_l$ and $\rS_u ; \rR_u$, (ii) $\rR \fatsemi \rS$ is independent of the witnesses for $\rR$ and $\rS$ by considering the $5$ relational compositions. We have $\rR \fatsemi id_\rG = \rR ; \Delta_{\rG_t}\spbreve = \rR$ and $id_\rG \fatsemi \rR = \Delta_{\rG_s} ; \rR =  \rR$. Composition is associative because relational composition is.
\end{proof}

\begin{example}[$\Dep$-morphisms]
  \label{ex:dep_morphisms}
  \item
  \begin{enumerate}
    \item \emph{$\Dep$-morphisms are closed under converse and union}.

    Given $\rR : \rG \to \rH$ then $\breve{\rR} : \breve{\rH} \to \breve{\rG}$ by taking the converse of the commutative square, which actually swaps the witnessing relations. We have $\emptyset : \rG \to \rH$ via empty witnessing relations. Given $\rR, \rS: \rG \to \rH$ then $\rR \cup \rS: \rG \to \rH$ by (i) unioning the respective witnessing relations, (ii) the bilinearity of relational composition w.r.t.\ union.

    \item \emph{Bipartite graph isomorphisms $\beta : G_1 \to G_2$ induce $\Dep$-isomorphisms.}
    
    Suppose we have a bipartite graph isomorphism $\beta : G_1 \to G_2$ where each $G_i = (V_i, \rE_i)$, so $\rE_1(x,y) \iff \rE_2(\beta(x),\beta(y))$. Given any bipartition $(X, Y)$ of $G_1$ we obtain a bipartition $(\beta[X], \beta[Y])$ of $G_2$. Setting $\rG_i := \rE_i |_{X \times Y}$ provides the $\Dep$-morphism below left:
    \[
      \begin{tabular}{ccc}
        $\xymatrix@=15pt{
          Y \ar@{->}[rr]^{\beta |_{Y \times \beta[Y]}}  && \beta[Y]
          \\
          X \ar[u]^-{\rG_1} \ar@{..>}[urr]^-\rR \ar@{->}[rr]_{\beta |_{X \times \beta[X]} } && \beta[X] \ar[u]_-{\rG_2}
        }$
        &&
        $\xymatrix@=15pt{
          \beta[Y] \ar@{->}[rr]^{\breve{\beta} |_{\beta[Y] \times Y}}  && Y
          \\
          \beta[X] \ar[u]^-{\rG_2} \ar@{..>}[urr]^-\rS \ar@{->}[rr]_{\breve{\beta} |_{\beta[X] \times X} } && X \ar[u]_-{\rG_1}
        }$
      \end{tabular}
    \]
    The bijective inverse $\beta^{-1} = \breve{\beta}$ provides witnessing relations in the opposite direction
    i.e.\ the $\Dep$-morphism $\rS : \rG_2 \to \rG_1$ above right. These morphisms are mutually inverse: $\rG_1$ is $\Dep$-isomorphic to $\rG_2$.

    \item \emph{The canonical quotient poset of a preorder defines a $\Dep$-isomorphism.}
    
    Let $\rG \subseteq X \times X$ be a transitive and reflexive relation. There is a canonical way to construct a poset $\pP = (X/{\rE}, \leq_{\pP})$ via the equivalence relation $\rE(x_1, x_2) :\iff \rG(x_1, x_2) \text{ $\land$ } \rG(x_2, x_1)$, where $\sem{x_1}_\rE \leq_{\pP} \sem{x_2}_\rE :\iff \rG(x_1, x_2)$.
    
    Consider the $\Rel$-diagram:
    \[
      \xymatrix@=15pt{
        X \ar@{->}[rr]^-{\nin}
        && \{ \overline{\breve{\rG}[x]} : x \in X \} \ar@{=}[r]
        &
        \{ \overline{\bigcup \down_{\pP} \sem{x}_\rE} : x \in X \} \ar[rrr]^-{(\lambda \sem{x}_\rE. \overline{\breve{\rG}[x]})\spbreve}
        &&& X / \rE
        \\
        X \ar[u]^-{\rG} \ar@{->}[rr]_-{\lambda x.\rG[x]}
        && \{ \rG[x] : x \in X \} \ar@{=}[r] \ar[u]_-{\nsubseteq}
        &
        \{ \bigcup \up_{\pP} \sem{x}_\rE : x \in X \}
        \ar@{->}[rrr]_-{(\lambda \sem{x}_\rE. \rG[x])\spbreve}
        &&& X / \rE \ar[u]_{\leq_{\pP}}
      }
    \]
    Note that $\rG[x]$ is the `upwards closure' i.e.\ the union of the upwards closure $\up_{\pP} \sem{x}_{\rE}$, whereas $\breve{\rG}[x]$ is the `downwards closure' in a similar manner. The left square commutes for completely general reasons, defining the $\Dep$-morphism:
    \[
      \rR(x_1, \overline{\breve{\rG}[x_2]})
      :\iff \exists x \in X.[\rG(x_1, x) \text{ $\land$ } \rG(x, x_2) ]
      \iff \rG(x_1, x_2).
    \]
    The right square involves bijections via (i) identifying elements of $\pP$ with principal up/downsets, (ii) the disjointness of equivalence classes. It also commutes:
    \[
      \begin{tabular}{lll}
        $\bigcup \up_{\pP} \sem{x_1}_\rE \nsubseteq \overline{\bigcup \down_{\pP} \sem{x_2}_\rE}$
        &
        $\iff \bigcup \up_{\pP} \sem{x_1}_\rE \;\cap\; \bigcup \down_{\pP} \sem{x_2}_\rE \neq \emptyset$
        \\&
        $\iff \exists x \in X. \sem{x_1}_\rE \leq_\pP \sem{x}_\rE \leq_\pP \sem{x_2}_\rE$
        \\&
        $\iff \sem{x_1}_\rE \leq_{\pP} \sem{x_2}_\rE$.
      \end{tabular}
    \]
    In fact, $\rR : \rG \to \;\nsubseteq$ is an instance of the natural isomorphism $\red_\rG$ from Theorem \ref{thm:dep_equiv_jslf} further below, and the right square defines a $\Dep$-isomorphism by Example 2 above. Thus $\rG \cong \;\leq_\pP$, although whenever $|X| > |X / \rE|$ this isomorphism \emph{cannot arise from a bipartite graph isomorphism}.

    \item \emph{Monotonicity can be characterised by $\Dep$-morphisms.}
    
    Given finite posets $\pP$ and $\pQ$, a function $f: P \to Q$ is monotonic iff the following $\Rel$-diagram commutes:
    \[
    \xymatrix@=15pt{
    P \ar[r]^f & Q \ar[r]^{\leq_{\pQ}} & Q 
    \\
    P \ar[u]^{\leq_\pP} \ar[rr]_f && Q \ar[u]_{\leq_\pQ}
    }
    \]
    as the reader may verify. Actually, $f$ is monotonic iff $f ; \leq_\pQ  \,:\, \leq_\pP \,\to\, \leq_\pQ$ is a $\Dep$-morphism.
    Indeed, given that $f ; \leq_\pQ  \,:\, \leq_\pP \,\to\, \leq_\pQ$ is a $\Dep$-morphism we'll prove that $f$ is monotonic in Example \ref{ex:dep_char_monotonocity} further below.

    \item \emph{Biclique edge-coverings amount to $\Dep$-monos}.

    Generally speaking, $\Dep$-morphisms represent two edge-coverings of a bipartitioned graph. A \emph{single edge-covering} amounts to a $\Dep$-mono of a special kind:
      \[
        \xymatrix@=15pt{
          \rG_t \ar[rr]^{\Delta_{\rG_t}}  && \rG_t
          \\
          \rG_s \ar[u]^-{\rG} \ar[urr]^-\rG \ar@{->}[rr]_{\rG_l} && \rH_s \ar[u]_-{\rH}
        }
      \]
      i.e.\ morphisms $\rG: \rG \to \rH$ with the additional assumption $\rG_t = \rH_t$. It will follow later that any mono $\rR : \rG \to \rI$ induces such a $\rG : \rG \to \rH$ where $|\rH_s| \leq |\rI_s|$ and $|\rH_t| \leq |\rI_t|$ i.e.\ see Theorem \ref{thm:extension_cover_correspondence}.
      
    \item \emph{Biclique edge-coverings amount to $\Dep$-epis}.

    Analogous to the previous example, a single edge-covering can be represented as a $\Dep$-epi $\rG : \rH \to \rG$ where $\rG_s = \rH_s$. This will follow from self-duality i.e.\ epis are precisely the converses of monos.
    
    \endbox

  \end{enumerate}
\end{example}

\subsection{The categorical equivalence}

Some of the examples above are order-theoretic in nature. Indeed, the main result of this section is:
\begin{quote}
  \it
  $\Dep$ is categorically equivalent to the finite join-semilattices equipped with join-preserving morphisms.
\end{quote}

This result will characterise $\Dep$-objects modulo isomorphism. They are the union-free or \emph{reduced} relations. Algebraically they correspond to the finite lattices.

\begin{definition}[Image, preimage, closure, interior]
  \label{def:up_down_cl_inte}
For any binary relation $\rR \subseteq \rR_s \times \rR_t$ define:
\[
\begin{tabular}{llllll}
$\rR^\up : \Pow \rR_s \to \Pow \rR_t$
&&
$\rR^\down : \Pow \rR_t \to \Pow \rR_s$
&&
$\cl_\rR := \rR^\down \circ \rR^\up : (\Pow \rR_s,\subseteq) \to (\Pow \rR_s,\subseteq)$
\\
$\rR^\up (X) := \rR[X]$
&&
$\rR^\down (Y) := \{ x \in \rR_s : \rR[x] \subseteq Y \}$
&&
$\inte_\rR := \rR^\up \circ \rR^\down : (\Pow \rR_t,\subseteq) \to (\Pow \rR_t,\subseteq)$
\end{tabular}
\]
where $\Pow Z$ is the collection of all subsets of $Z$. The fixed points of the closure operator $\cl_\rR$ are $C(\rR) := \{ \rR^\down(Y) : Y \subseteq \rR_t \}$. The fixed-points of the interior operator\footnote{Interior operators are also known as co-closure operators: monotone, idempotent and \emph{co}-extensive i.e.\ $\inte(Y) \subseteq Y$.} $\inte_\rR$ are $O(\rR) := \{ \rR[X] : X \subseteq \rR_s \}$ and are called the \emph{$\rR$-open sets}.
\endbox
\end{definition}

\begin{definition}[Component relations]
  \label{def:maximum_witnesses}
For each $\Dep$-morphism $\rR : \rG \to \rH$ define:
\[
\begin{tabular}{llll}
$\rR_- := \{ (g_s,h_s) \in \rG_s \times \rH_s : h_s \in \rH^\down(\rR[g_s]) \}$
&
$\rR_+ := \{ (h_t,g_t) \in \rH_t \times \rG_t : g_t \in \breve{\rG}^\down(\breve{\rR}[h_t]) \}$
\end{tabular}
\]
called the \emph{lower/upper components} respectively. \endbox
\end{definition}

Importantly, $\rR$'s component relations are witnesses and contain all other witnesses.

\begin{lemma}[Morphisms characterisation and maximum witnesses]
  \label{lem:mor_char_max_witness}
  \item
  Let $\rR \subseteq \rG_s \times \rH_t$ be any relation between finite sets.
  \begin{enumerate}
    \item 
    $
    \rR^\up(X) \subseteq Y
    \iff
    X \subseteq \rR^\down(Y)
    $
    for all subsets $X \subseteq \rG_s$, $Y \subseteq \rH_t$.
    
    \item
    The following labelled equalities hold:
    \[
      \begin{tabular}{ccccc}
      $(\up\down\up)$ & $\rR^\up \circ \rR^\down \circ \rR^\up = \rR^\up$
      &&
      $\rR^\down \circ \rR^\up \circ \rR^\down = \rR^\down$ & $(\down\up\down)$
      \\[1ex]
      $(\neg\up\neg)$ & $\neg_{\rG_t} \circ \rR^\up \circ \neg_{\rG_s} = \breve{\rR}^\down$
      &&
      $\neg_{\rG_s} \circ \rR^\down \circ \neg_{\rG_t} = \breve{\rR}^\up$ & $(\neg\down\neg)$
      \end{tabular}
      \]

    \item
    $\cl_\rR$ (resp.\ $\inte_\rR$) is a well-defined closure (resp.\ interior) operator, in fact $\inte_\rR = \neg_{\rR_t} \circ \cl_{\breve{\rR}} \circ \neg_{\rR_t}$.

    \item
    $\rR$ defines a $\Dep$-morphism $\rG \to \rH$ iff
    $
    \rR^\up \circ \cl_\rG = \rR^\up = \inte_\rH \circ \rR^\up,
    $
    or equivalently $\rR^\up \circ \cl_\rG = \inte_\rH \circ \rR^\up$.
    
    \item
    Each $\rR : \rG \to \rH$ has the maximum witnesses $(\rR_-,\rR_+)$ i.e.\ 
    \begin{itemize}
    \item[--]
    $\rR_- ; \rH = \rR = \rG ; \rR_+\spbreve$.
    \item[--]
    whenever $\rR_l ; \rH = \rR = \rG ; \rR_r\spbreve$ then $\rR_l \subseteq \rR_-$ and $\rR_r \subseteq \rR_+$.
    \end{itemize}

    \item
    For every $\rG$-open $Y \in O(\rG)$ and every $g_t \in \rG_t$.
    \[
      Y \subseteq \inte_\rG(\overline{g_t}) \iff g_t \nin Y.
    \]

  \end{enumerate}
\end{lemma}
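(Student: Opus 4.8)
The plan is to derive everything from part (1), which is simply the Galois adjunction $\rR^\up\dashv\rR^\down$ between the two powerset posets. For (1) I would just unfold: since $\rR^\up(X)=\bigcup_{x\in X}\rR[x]$, we have $\rR^\up(X)\subseteq Y$ iff $\rR[x]\subseteq Y$ for every $x\in X$, iff $x\in\rR^\down(Y)$ for every $x\in X$, iff $X\subseteq\rR^\down(Y)$. With (1) in hand parts (2) and (3) are formal. The identities $(\up\down\up)$ and $(\down\up\down)$ are the standard triangle (``snake'') identities of a poset-adjunction; $\cl_\rR=\rR^\down\circ\rR^\up$ is the induced monad, hence a closure operator (monotone; extensive by putting $Y=\rR^\up(X)$ into (1); idempotent by $(\up\down\up)$ or $(\down\up\down)$), and dually $\inte_\rR=\rR^\up\circ\rR^\down$ is the induced comonad, hence an interior operator (co-extensive via $X=\rR^\down(Y)$ in (1)). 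The complement identities $(\neg\up\neg)$, $(\neg\down\neg)$ are one-line unfoldings, e.g.\ $\overline{\rR[\,\overline X\,]}=\{y:\forall x\,(\rR(x,y)\Rightarrow x\in X)\}=\breve{\rR}^\down(X)$, and $(\neg\down\neg)$ can also be read off $(\neg\up\neg)$ by substituting $\breve{\rR}$ and conjugating with complement; combining them gives $\inte_\rR=\rR^\up\circ\rR^\down=\neg\circ\breve{\rR}^\down\circ\breve{\rR}^\up\circ\neg=\neg\circ\cl_{\breve{\rR}}\circ\neg$.

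Part (4) is the core, and I would split the defining square into its two legs (which may be witnessed independently, since both legs are required to equal $\rR$). First, $\rR=\rR_l;\rH$ holds for some $\rR_l$ iff every row $\rR[g_s]=\rH[\rR_l[g_s]]$ is $\rH$-open; since $O(\rH)$ is closed under unions ($\rH[A]\cup\rH[B]=\rH[A\cup B]$) and is exactly the fixed-point set of $\inte_\rH$, this is equivalent to $\inte_\rH\circ\rR^\up=\rR^\up$ (forward: restrict to singletons; backward: union them up). Dually, $\rR=\rG;\breve{\rR_u}$ holds for some $\rR_u$ iff every column $\breve{\rR}[h_t]$ is $\breve{\rG}$-open; to convert this into $\rR^\up\circ\cl_\rG=\rR^\up$ I would observe that $h_t\in\rR^\up(\cl_\rG(X))$ iff $\cl_\rG(X)\cap\breve{\rR}[h_t]\neq\emptyset$, so by extensivity of $\cl_\rG$ the identity $\rR^\up\circ\cl_\rG=\rR^\up$ says precisely that $X\subseteq\overline{\breve{\rR}[h_t]}$ implies $\cl_\rG(X)\subseteq\overline{\breve{\rR}[h_t]}$ for all $X,h_t$, which for a closure operator is exactly the statement that every $\overline{\breve{\rR}[h_t]}$ is $\cl_\rG$-closed; and by the De Morgan identity of (3), $\overline{\breve{\rR}[h_t]}\in C(\rG)$ iff $\breve{\rR}[h_t]\in O(\breve{\rG})$. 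Since a $\Dep$-morphism is exactly a relation admitting both factorisations, conjoining the two equivalences gives the claimed chain; the ``equivalently'' clause follows by squeezing: extensivity of $\cl_\rG$ gives $\rR^\up\subseteq\rR^\up\circ\cl_\rG$ and co-extensivity of $\inte_\rH$ gives $\inte_\rH\circ\rR^\up\subseteq\rR^\up$, so the single equation $\rR^\up\circ\cl_\rG=\inte_\rH\circ\rR^\up$ closes the chain and forces all three maps to coincide.

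For (5) I would evaluate on a generator: $(\rR_-;\rH)[g_s]=\rH[\rH^\down(\rR[g_s])]=\inte_\rH(\rR[g_s])=\rR[g_s]$, the last step by (4); hence $\rR_-;\rH=\rR$. Maximality: if $\rR_l;\rH=\rR$ then $\rH[\rR_l[g_s]]=\rR[g_s]$, so each $h_s\in\rR_l[g_s]$ has $\rH[h_s]\subseteq\rR[g_s]$, i.e.\ $h_s\in\rH^\down(\rR[g_s])=\rR_-[g_s]$, so $\rR_l\subseteq\rR_-$. The $\rR_+$ statements I would get by symmetry, applying the $\rR_-$ results to the $\Dep$-morphism $\breve{\rR}:\breve{\rH}\to\breve{\rG}$ of Example \ref{ex:dep_morphisms}.1: unwinding definitions shows $\rR_+=(\breve{\rR})_-$, and a factorisation $\rR=\rG;\breve{\rR_r}$ is the same as $\breve{\rR}=\rR_r;\breve{\rG}$, so the lower-component results transport over. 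Finally (6) is immediate: co-extensivity gives $\inte_\rG(\overline{g_t})\subseteq\overline{g_t}$, so $Y\subseteq\inte_\rG(\overline{g_t})$ forces $g_t\notin Y$; conversely if $g_t\notin Y$ then $Y\subseteq\overline{g_t}$, and since $Y\in O(\rG)$ is a fixed point of the monotone operator $\inte_\rG$ we get $Y=\inte_\rG(Y)\subseteq\inte_\rG(\overline{g_t})$.

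The only genuinely non-formal step is part (4); within it the two bookkeeping points to get right are (a) upgrading ``$\rR^\up$ sends each singleton to an open set'' to the operator identity on all of $\Pow\rG_s$, which is exactly where union-closure of the open sets is used, and (b) the De Morgan translation identifying ``$\overline{\breve{\rR}[h_t]}$ is $\cl_\rG$-closed'' with ``$\breve{\rR}[h_t]$ is $\breve{\rG}$-open''. Everything else reduces either to unfolding definitions or to standard facts about adjunctions between powerset lattices.
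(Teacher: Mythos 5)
Your proof is correct: every step checks out, including the two delicate points in part (4) (upgrading openness of the rows $\rR[g_s]$ to the operator identity $\inte_\rH\circ\rR^\up=\rR^\up$ via union-closure of $O(\rH)$, and translating $\rR^\up\circ\cl_\rG=\rR^\up$ into closedness of the sets $\overline{\breve{\rR}[h_t]}$ via the De Morgan identity of (3)), as well as the transfer of the $\rR_-$ results to $\rR_+$ through the converse. The paper itself gives no proof here, deferring to Lemmas 4.1.7 and 4.1.10 of the background paper, and your argument is the standard Galois-adjunction one that that reference is built on, so nothing further is needed.
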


\begin{proof}
  See background paper i.e.\
  \begin{itemize}
    \item[--]
    Lemma 4.1.7, \emph{Relating $(-)^\up$ and $(-)^\down$} and also Lemma 4.2.7.
    \item[--]
    Lemma 4.1.10, \emph{Morphism characterisation and maximum witnesses}.
  \end{itemize}
\end{proof}

\begin{example}
\label{ex:dep_char_monotonocity}
\item
\begin{enumerate}
  \item 
  \emph{Characterizing monotonicity}.

  Recalling Example \ref{ex:dep_morphisms}.4, suppose $f : P \to Q$ is a function and $f ; \leq_\pQ \; : \; \leq_\pP \; \to \; \leq_\pQ$ is a $\Dep$-morphism. Since $\cl_{\leq_\pP}$ constructs the upwards closure in $\pP$, by Lemma \ref{lem:mor_char_max_witness}.4 for any $p \in P$:
  \[
    (f ; \leq_\pQ)^\up (\up_\pP p) = (f ; \leq_\pQ)[p]
    \qquad \text{or equivalently} \qquad
    \up_\pQ f[\up_\pP p] = \; \up_\pQ f(p).
  \]
  Thus whenever $p \leq_\pP p'$ we know $\up_\pQ f(p') \subseteq \; \up_\pQ f(p)$ i.e.\ $f(p) \leq_\pQ f(p')$.

  \item 
  \emph{One-sided maximal bicliques}.

  When searching for small edge-coverings by bicliques one can restrict to \emph{maximal ones} i.e.\ $X \times Y \subseteq \rG$ where $(X, Y)$ is pairwise-maximal w.r.t.\ inclusion \cite{OrlinContentment1977}. Then Lemma \ref{lem:mor_char_max_witness}.4 says:
  \begin{quote}
    $\Dep$-morphisms $\rR: \rG \to \rH$ are two \emph{one-sided} maximal edge-coverings, i.e.\ $(\rR_-\spbreve[h_s] \times \rH[h_s])_{h_s \in \rH_s}$ is left-maximal and $(\breve{\rG}[g_t] \times \rR_+\spbreve[g_t])_{g_t \in \rG_t}$ is right-maximal.
  \end{quote}
  Observe we cannot in general pass to \emph{maximal} bicliques without changing the morphism's domain/codomain. \endbox
\end{enumerate}
\end{example}

\begin{definition}[$\JSL_f$]
  A \emph{join-semilattice} is a set with a binary operation $\lor$ and a nullary operation $\bot$, satisfying:
  \[
    \bot \lor x = x = \bot \lor x
    \qquad
    x \lor (y \lor z) = (x \lor y) \lor z
    \qquad
    x \lor y = y \lor x
    \qquad
    x \lor x = x.
  \]
  We write them $\aS = (S, \lor_\aS, \bot_\aS)$ where $S$ is the underlying set, $\lor_\aS : S \times S \to S$ is a function and $\bot_\aS \in S$. A join-preserving morphism $f : \aS \to \aT$ is a function $f : S \to T$ such that $f(s_1 \lor_\aS s_2) = f(s_1) \lor_\aT f(s_2)$ and $f(\bot_\aS) = \bot_\aT$. Finally, $\JSL_f$ is the category of \emph{finite} join-semilattices and join-preserving morphisms.
  \endbox
\end{definition}

\begin{example}[Clarifying join-semilattices]
  \item
  \begin{enumerate}
    \item
    Join-semilattices are precisely the commutative and idempotent monoids. Consequently each $\aS$ has a Cayley-representation as endofunctions $(- \lor_\aS s : S \to S)_{s \in S}$ closed under functional composition.
    
    \item
    More importantly, the join-semilattices are precisely the partially-ordered sets with all finite suprema. The binary operation $\lor_\aS$ is the binary join, $\bot_\aS$ is the bottom element. Inductively, $\Lor_\aS X$ exists for all finite $X \subseteq S$.

    \item
    The finite join-semilattices are precisely the finite bounded lattices: the finite partially-ordered sets with all finite suprema and infima. Indeed, every finite join-semilattice is complete i.e.\ has \emph{all} joins, hence has all meets too. That is, $\Land_\aS X$ exists for any finite subset $X \subseteq S$.
    
    \item
    By (3), each finite $\aS := (S, \lor_\aS, \bot_\aS)$ can be flipped yielding the \emph{order-dual} join-semilattice $\aS^\pOp := (S, \land_\aS, \top_\aS)$.

    \item
    The join-semilattice isomorphisms are precisely the bijective join-semilattice morphisms. They are also precisely the order-isomorphisms between the underlying posets i.e.\ bijections preserving and reflecting the ordering. For finite join-semilattices they are precisely the bounded lattice isomorphisms by (3).

    \item
    Let $\two = (\{ 0, 1 \}, \lor_\two, 0)$ be the two element set with ordering $0 \leq_\two 1$.
    Modulo isomorphism there is only one join-semilattice with two elements.
    \endbox
  \end{enumerate}
\end{example}

Each binary relation $\rG$ induces two isomorphic join-semilattices: the $\inte_\rG$-fixpoints $(O(\rG), \cup, \emptyset)$ and the $\cl_\rG$-fixpoints $(C(\rG), \lor_{\cl_\rG}, \cl_\rG(\emptyset))$. The latter's join constructs the closure of the union, whereas its meet is simply intersection. This situation is well-known within the area of Formal Concept Analysis.

\begin{theorem}[Bounded lattice isomorphism of a bipartitioned graph]
  \label{thm:graph_induces_isom}
  For any $\rG \subseteq \rG_s \times \rG_t$ we have the isomorphism:
  \[
    \alpha_\rG : (C(\rG), \lor_{\cl_\rG}, \cl_\rG(\emptyset)) \to (O(\rG), \cup, \emptyset)
    \qquad
    \alpha_\rG(X) := \rG[X]
    \qquad
    \alpha_\rG^{-1}(Y) := \rG^\down(Y).
  \]
\end{theorem}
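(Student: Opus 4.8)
The plan is to run the standard fundamental-theorem-of-Galois-connections argument, feeding only on Lemma~\ref{lem:mor_char_max_witness}. First I would assemble the ingredients. By Lemma~\ref{lem:mor_char_max_witness}.1 the maps $\rG^\up$ and $\rG^\down$ form an adjunction $\rG^\up\dashv\rG^\down$ between the posets $(\Pow\rG_s,\subseteq)$ and $(\Pow\rG_t,\subseteq)$; Lemma~\ref{lem:mor_char_max_witness}.2 supplies the triangle identities $(\up\down\up)$ and $(\down\up\down)$; and Lemma~\ref{lem:mor_char_max_witness}.3 says $\cl_\rG=\rG^\down\circ\rG^\up$ is a closure operator and $\inte_\rG=\rG^\up\circ\rG^\down$ an interior operator. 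Since both operators are idempotent, their fixed-point sets coincide with their images, which are exactly $C(\rG)$ and $O(\rG)$ as defined. I will also use the one elementary fact that $\rG^\up(X)=\rG[X]$ preserves unions (images always do).

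Next I would verify that $\alpha_\rG$ is a bijection with the stated inverse. For any $X$, $\rG^\up(X)\in O(\rG)$ by definition, so $\alpha_\rG$ sends $C(\rG)$ into $O(\rG)$; dually $\rG^\down(Y)\in C(\rG)$ for any $Y$, so $\rG^\down$ restricts to a map $O(\rG)\to C(\rG)$. If $X\in C(\rG)$ then $\rG^\down(\rG^\up(X))=\cl_\rG(X)=X$, and if $Y\in O(\rG)$ then $\rG^\up(\rG^\down(Y))=\inte_\rG(Y)=Y$; hence the two maps are mutually inverse and $\alpha_\rG$ is a bijection.

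Then I would check it is a join-semilattice morphism. The bottom of $(C(\rG),\lor_{\cl_\rG},\cl_\rG(\emptyset))$ maps to $\alpha_\rG(\cl_\rG(\emptyset))=\rG^\up(\rG^\down(\rG^\up(\emptyset)))=\rG^\up(\emptyset)$ by $(\up\down\up)$, and $\rG^\up(\emptyset)=\rG[\emptyset]=\emptyset$, the bottom of $(O(\rG),\cup,\emptyset)$. For joins, the join in $C(\rG)$ is the closure of the ambient union, $X_1\lor_{\cl_\rG}X_2=\cl_\rG(X_1\cup X_2)=\rG^\down(\rG^\up(X_1\cup X_2))$; applying $\rG^\up$ and invoking $(\up\down\up)$ collapses this to $\rG^\up(X_1\cup X_2)$, which equals $\rG^\up(X_1)\cup\rG^\up(X_2)=\alpha_\rG(X_1)\cup\alpha_\rG(X_2)$ since images preserve unions. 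So $\alpha_\rG$ is a bijective join-preserving map, i.e.\ a $\JSL_f$-isomorphism.

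I do not anticipate a real obstacle; the only two points needing a little care are (a) the identification of $\lor_{\cl_\rG}$ with $\cl_\rG(-\cup-)$, which is the usual description of binary joins in the lattice of closed sets of a closure operator on the complete lattice $(\Pow\rG_s,\subseteq)$, and (b) ensuring $C(\rG)$, $O(\rG)$ equal the images of $\cl_\rG$, $\inte_\rG$ so the computations apply on the nose. One can sidestep (a) entirely by an order-theoretic finish: $\alpha_\rG$ and $\rG^\down|_{O(\rG)}$ are both monotone for $\subseteq$, a monotone bijection between posets whose inverse is monotone is an order-isomorphism, and between finite join-semilattices the order-isomorphisms are precisely the $\JSL_f$-isomorphisms.
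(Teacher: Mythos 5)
Your proof is correct and complete: the paper itself only cites its background document (Lemma 4.2.5) for this theorem, and your argument is the standard Galois-connection one built from exactly the identities the paper records in Lemma~\ref{lem:mor_char_max_witness} (the adjunction, $(\up\down\up)$, $(\down\up\down)$, and the closure/interior characterisations), so it is precisely the expected proof. The two points you flag — identifying $\lor_{\cl_\rG}$ with $\cl_\rG(-\cup-)$ and identifying $C(\rG)$, $O(\rG)$ with the fixed-point sets — are both settled by the paper's own remarks (the paragraph preceding the theorem and Definition~\ref{def:up_down_cl_inte} together with $(\down\up\down)$), and your order-isomorphism fallback is also sanctioned by the paper's observation that order-isomorphisms of finite join-semilattices are exactly the $\JSL_f$-isomorphisms.
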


\begin{proof}
  See background paper i.e.\ Lemma 4.2.5, \emph{The bounded lattices of $\rG$-open/closed sets and their irreducibles}.
\end{proof}

\begin{note}[More about join-semilattices]
  \label{note:jsl_extras}
  \item
  \begin{enumerate}
    \item \emph{Join and meet-irreducibles}.

    Fix a join-semilattice $(S, \lor_\aS, \bot_\aS)$. An element $s \in S$ is \emph{join-irreducible} if whenever $s = \Lor_\aS X$ for finite $X \subseteq S$ we have $s \in X$. They are denoted $J(\aS) \subseteq S$. Likewise $s \in S$ is \emph{meet-irreducible} if whenever $s$ is a finite meet $\Land_\aS X$ then $s \in X$; they are denoted $M(\aS) \subseteq S$.

    \item \emph{Adjoint morphisms}.

    Each $\JSL_f$-morphism $f : \aS \to \aT$ has an \emph{adjoint} $f_* : \aT^\pOp \to \aS^\pOp$ defined $f_* (t) := \Lor_\aS \{ s \in S : f(s) \leq_\aT t \}$. It is uniquely determined by the adjoint relationship $f(s) \leq_\aT t \iff s \leq_\aS f_*(s)$, and preserves all finite meets in $\aT$. We've already seen examples i.e.\ $\rR^\down = (\rR^\up: (\Pow\rR_s, \cup, \emptyset) \to (\Pow\rR_t, \cup, \emptyset))_*$.
    
    \item \emph{Self-duality of $\JSL_f$}.
    
    Adjoint morphisms actually define an equivalence functor $(-)_* : \JSL_f^{op} \to \JSL_f$ where $\aS_* := \aS^\pOp$ is the order-dual join-semilattice and $f_*$ is the adjoint morphism. It is witnessed by the natural isomorphism $\lambda : Id_{\JSL_f} \To (-)_* \circ ((-)_*)^{op}$ where $\lambda_\aS := id_\aS$.

    \item \emph{Monos and epis}.
    
    The $\JSL_f$-monomorphisms are precisely the injective ones and the epimorphisms are precisely the surjective ones. The latter situation is unlike the case of distributive lattices where $\three \hookto \two \times \two$ is epic. Injective $\JSL_f$-morphisms $f$ are also \emph{order-embeddings} i.e.\ $f(s_1) \leq_\aT f(s_2) \iff s_1 \leq_\aS s_2$. Generally speaking, injective monotone functions needn't be order-embeddings e.g.\ take a bijection from a $2$-antichain to a $2$-chain.
    \endbox

  \end{enumerate}
\end{note}

\begin{note}[Irreducibles]
  \label{note:irreducibles}
  \item
  \begin{enumerate}
    \item
    A bottom element is the empty join and hence never join-irreducible. The top element of a finite join-semilattice is the empty meet, hence never meet-irreducible.

    \item
    The join-irreducibles of $(\Pow X, \cup, \emptyset)$ are the singleton sets $\{ x \}$, the meet-irreducibles their relative complements.

    \item
    Each element of a join-semilattice is the join of those join-irreducibles below it. In fact, $J(\aS) \subseteq S$ is the minimal subset generating $\aS$ under joins. Order dually, $M(\aS) \subseteq \aS$ is the minimal subset generating $\aS$ under meets.

    \item
    Finite distributive lattices $\aS$ are determined by their subposet of join-irreducibles. That is, they are isomorphic to the downwards-closed subsets of $(J(\aS), \leq_\aS)$, equipped with union (binary join) and intersection (binary meet). Every join-irreducible is actually \emph{join-prime} i.e.\ $j \leq_\aS \Lor_\aS S \iff \exists s \in S. j \leq_\aS s$.

    \item
    For finite distributive lattices $\aS$, the subposet of join-irreducibles is order-isomorphic to the subposet of meet-irreducibles via $\tau_\aS : J(\aS) \to M(\aS)$ with action $j \mapsto \Lor_\aS \overline{\up_\aS j}$ and inverse $m \mapsto \Land_\aS \overline{\down_\aS m}$. This fails for non-distributive lattices.
  \endbox
  \end{enumerate}
\end{note}

We now have enough structure to define the functorial translation between relations and algebras.

\begin{definition}[The equivalence functors]
  \label{def:open_and_pirr}
  \item
  \begin{enumerate}
    \item
    $\Open: \Dep \to \JSL_f$ constructs the semilattice of $\rG$-open sets:
    \[
      \Open \rG := (O(\rG), \cup, \emptyset)
      \qquad
      \Open \rR := \lambda Y. \rR_+\spbreve [Y].
    \]

    \item
    $\Pirr : \JSL_f \to \Dep$ constructs Markowsky's poset of irreducibles \cite{MarkowskyLat1975}.
    \[
      \begin{tabular}{rl}
      $\Pirr \aS := \;\nleq_\aS |_{J(\aS) \times M(\aS)}$
      &
      $\Pirr f (j, m) :\iff f(j) \nleq_\aT m$
      \\[1ex]
      $(\Pirr f)_-(j_1, j_2) :\iff j_2 \leq_\aT f(j_1)$
      &
      $(\Pirr f)_+(m_1, m_2) :\iff f_*(m_1) \leq_\aS m_2$,
      \end{tabular}
    \]
    where $\Pirr f$'s components are also described above. \endbox
  \end{enumerate}
\end{definition}

\smallskip

$\Open \rG$ is the inclusion-ordered set of neighbourhoods $\rG[X]$ of the lower bipartition i.e.\ particular subsets of the upper bipartition. In the other direction, $\Pirr \aS$ is the domain/codomain restriction of $\nleq_\aS \; \subseteq S \times S$ to join/meet-irreducibles respectively. We have extended the concept studied by Markowsky to morphisms.

\begin{note}[Concerning $\Open$'s action on morphisms]
  \label{note:open_morphism_alt}
  Given $\rR: \rG \to \rH$ we defined $\Open\rR$ as $\lambda Y. \rR_+\spbreve[Y]$. It may equivalently be defined $\lambda Y. \rR^\up \circ \rG^\down (Y)$ i.e.\ one needn't compute the maximal witness $\rR_+$. It may also be defined $\lambda Y. \rR_u\spbreve[Y]$ where $\rR_u \subseteq \rR_+$ is any upper witness, since $\rR_+\spbreve [\rG[X]] = \rG ; \rR_+\spbreve [X] = \rG ; \rR_u\spbreve [X] = \rR_u\spbreve [\rG[X]]$. \endbox
\end{note}

\begin{example}[Semilattices as binary relations]
  \label{ex:semilattice_as_bipartite}
  \item
  \begin{enumerate}
    \item
    \emph{Boolean lattices correspond to identity relations}.

    Observe $\Open \Delta_X = (\Pow X, \cup, \emptyset)$ for any finite set $X$. Applying $\Pirr$ yields the bijection $\{x\} \mapsto \overline{x}$, which is bipartite isomorphic to $\Delta_X$ and hence $\Dep$-isomorphic.

    \item
    \emph{Distributive lattices correspond to order relations}.
    
    Given any order-relation $\leq_\pP \; \subseteq P \times P$ then $\Open \leq_\pP$ consists of all upwards closed subsets of $P$ ordered by inclusion. Since they are closed under unions and intersections, $\Open \leq_\pP$ is a distributive lattice. Conversely if $\aS$ is distributive one can show $\Pirr \aS ; \tau_\aS\spbreve = \; \leq_{\aS^\pOp} |_{J(\aS) \times J(\aS)}$, using notation from Note \ref{note:irreducibles}.5. See the background paper Lemma 2.2.3.14 `Standard order-theoretic results'. Then $\Pirr\aS$ is bipartite isomorphic to an order-relation and hence $\Dep$-isomorphic too.
    
    \item
    \emph{Partition lattices represented via functional composition}.

    Recall the inclusion-ordered lattice $\jslER{X}$ of equivalence relations on a finite non-empty set $X$. Meets are intersections, whereas joins are constructed by taking the transitive closure of the union. Viewed as a join-semilattice, there is a natural binary relation $\rG$ such that $\Open \rG \cong \jslER{X}$:
    \[
      \rG \subseteq \Set(2, X) \times \Set(X, 2)
      \qquad
      \rG(f, g) :\iff g \circ f = id_2
    \]
    where $2 := \{ 0, 1 \}$ and $\Set(A, B)$ is the set of functions from $A$ to $B$. Notice we have:
    \[
      |X|^2 = |\rG_s| > |J(\jslER{X})| = \binom{|X|}{2}
      \qquad
      2^{|X|} = |\rG_t| = |M(\jslER{X})|.
    \]
    
    \item
    \emph{A concrete example}.
    
    Let $P_6$ be the path of edge-length $6$ with vertices $\{ 0, ..., 6 \}$. One of its bipartitions amounts to $\rG \subseteq \{ 1, 3, 5 \} \times \{ 0, 2, 4, 6 \}$ where $\rG(x, y) :\iff |x - y| = 1$. Applying $\Open$ yields:
    \[
      \xymatrix@=5pt{
      && \{ 0, 2, 4, 6 \}
      \\
      \\
      \{ 0, 2, 4 \} \ar@{-}[uurr]
      &&
      && \{ 2, 4, 6 \} \ar@{-}[uull]
      \\
      \\
      \{ 0, 2 \} \ar@{-}[uu]
      && \{ 2, 4 \} \ar@{-}[uull] \ar@{-}[uurr]
      && \{ 4, 6 \} \ar@{-}[uu]
      \\
      \\
      && \emptyset \ar@{-}[uull] \ar@{-}[uurr] \ar@{-}[uu]
      }
    \]
    this being the smallest join-semilattice $\aS$ such that $|J(\aS)| < |M(\aS)|$.
    \endbox

  \end{enumerate}
\end{example}

\begin{example}[$\JSL_f$-morphisms as $\Dep$-morphisms]
  \label{ex:jsl_morphisms_in_dep}
  \item
  Given $f : \aS \to \aT$ we have the following $\Dep$-morphism of type $\nleq_\aS \;\to\; \nleq_\aT$,
  \[
    \xymatrix@=15pt{
      S \ar[rr]^{f_*\spbreve} && T
      \\
      S \ar[u]^{\nleq_\aS} \ar[rr]_f && T \ar[u]_{\nleq_\aT}
    }
  \]
  via the adjoint relationship $f(s) \nleq_\aT t  \iff s \nleq_\aS f_*(t)$ in contrapositive form.
  $\Pirr f$ arises by restricting the domain/codomain and passing to the maximum witnesses.
  Importantly, there is an equivalence functor $\Nleq : \JSL_f \to \Dep$ with $\Nleq \aS := \; \nleq_\aS$ and $\Nleq f (s, t) :\iff f(s) \nleq_\aT t$. In a precise sense, $\Pirr$ is the smallest restriction possible.
  \endbox
\end{example}

\smallskip

We now explicitly describe the equivalence between semilattices and graphs, including the relevant component relations. From one perspective we represent each $\aS$ as inclusion-ordered subsets of $M(\aS)$; from another we show each bipartitioned graph is $\Dep$-isomorphic to its \emph{reduction} -- a kind of union-free normal form.

\begin{theorem}[Categorical equivalence]
  \label{thm:dep_equiv_jslf}
  $\Open : \Dep \to \JSL_f$ and $\Pirr : \JSL_f \to \Dep$ define an equivalence of categories via natural isomorphisms:
  \[
  \begin{tabular}{llll}
  $rep : \Id_{\JSL_f} \To \Open \circ \Pirr$
  &&
  $rep_\aS$ & $:= \lambda s \in S.\{ m \in M(\aS) : s \nleq_\aS m \}$
  \\
  && $rep_\aS^{\bf-1}$ & $:= \lambda Y. \Land_\aS M(\aS) \setminus Y$
  \\[1ex]
  $red : \Id_{\Dep} \To \Pirr \circ \Open$
  &&
  $red_\rG$ & $:= \{ (g_s,Y) \in \rG_s \times M(\Open\rG) : \rG[g_s] \nsubseteq Y \}$ 
  \\
  &&
  $red_\rG^{\bf-1}$ & $:= \breve{\in} \; \subseteq J(\Open\rG) \times \rG_t$
  \end{tabular}
  \]
  where $\red_\rG$ and its inverse have associated component relations:
  \[
  \begin{tabular}{ll}
  $(red_\rG)_- := \{ (g_s,X) \in \rG_s \times J(\Open\rG) : X \subseteq \rG[g_s] \}$
  &
  $(red_\rG)_+ := \breve{\nin} \; \subseteq  M(\Open\rG) \times \rG_t$
  \\[2ex]
  $(red_\rG^{\bf-1})_- := \{ (X,g_s) \in J(\Open\rG)\times\rG_s : \rG[g_s] \subseteq X \}$
  &
  $(red_\rG^{\bf-1})_+ := \{ (g_t,Y) \in \rG_t \times M(\Open\rG) : \inte_\rG(\overline{g_t}) \subseteq Y  \}$
  \end{tabular}
  \]
\end{theorem}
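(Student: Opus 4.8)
The plan is to verify directly that $rep$ and $red$ are natural isomorphisms with precisely the displayed inverses and component relations; by the definition of an equivalence of categories this suffices, once $\Open$ and $\Pirr$ are known to be functors. Functoriality of $\Open$ I would dispatch using that $O(\rG)$ is the fixed-point set of the interior operator $\inte_\rG$ (Definition~\ref{def:up_down_cl_inte}), hence a finite lattice under $\cup$, together with the reformulation $\Open\rR=\rR^\up\circ\rG^\down$ of Note~\ref{note:open_morphism_alt} and the identities $(\up\down\up)$, $(\down\up\down)$ of Lemma~\ref{lem:mor_char_max_witness}.2, which make $\Open$ well-defined and functorial on morphisms by routine relational calculus; functoriality of $\Pirr$ reduces to checking that the displayed relations $(\Pirr f)_-, (\Pirr f)_+$ witness the factorisations $(\Pirr f)_-\,;\,\Pirr\aT = \Pirr f = \Pirr\aS\,;\,(\Pirr f)_+\spbreve$, using the adjoint relationship $f(s)\leq_\aT t\iff s\leq_\aS f_*(t)$.

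For $rep_\aS$ I would argue as follows. Preservation of $\lor$ and $\bot$ is immediate from $rep_\aS(s)=\{m\in M(\aS): s\nleq_\aS m\}$, since $s_1\lor s_2\nleq_\aS m$ iff $s_1\nleq_\aS m$ or $s_2\nleq_\aS m$. Writing $s=\Lor_\aS\{j\in J(\aS): j\leq_\aS s\}$ (Note~\ref{note:irreducibles}.3) shows $rep_\aS(s)=(\Pirr\aS)[\{j\in J(\aS): j\leq_\aS s\}]$, so $rep_\aS$ does land in $O(\Pirr\aS)$; reading this backwards, every $\rG$-open set of $\Pirr\aS$ equals $rep_\aS(\Lor_\aS X)$ for some $X\subseteq J(\aS)$, giving surjectivity. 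Injectivity and the formula $rep_\aS^{-1}(Y)=\Land_\aS(M(\aS)\setminus Y)$ then follow from the order-dual fact $s=\Land_\aS\{m\in M(\aS): s\leq_\aS m\}$, valid since finite join-semilattices are bounded lattices.

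The substantial step is $red_\rG$, for which everything takes place in the finite lattice $\Open\rG$; the only structural inputs needed are that each of its elements is the join of the join-irreducibles below it and the meet of the meet-irreducibles above it (Note~\ref{note:irreducibles}.3 and its dual), and that $\inte_\rG(\overline{g_t})$ is the largest $\rG$-open set omitting $g_t$ (Lemma~\ref{lem:mor_char_max_witness}.6). Using these I would: (i) verify $red_\rG$ is a $\Dep$-morphism by checking $(red_\rG)_-\,;\,\Pirr\Open\rG = red_\rG = \rG\,;\,(red_\rG)_+\spbreve$, both of which unfold to ``$x\nleq y$ in $\Open\rG$ iff some join-irreducible $\leq x$ is $\nleq y$'' applied to $x=\rG[g_s]$; (ii) confirm the four displayed relations are the maximum witnesses of Definition~\ref{def:maximum_witnesses} by computing $\rR^\down$ for $\rR=\Pirr\Open\rG$ and for its converse and simplifying with the two irreducibility facts --- this is exactly where $\inte_\rG(\overline{g_t})$ emerges in $(red_\rG^{-1})_+$; (iii) likewise verify $red_\rG^{-1}=\breve{\in}$ is a $\Dep$-morphism $\Pirr\Open\rG\to\rG$ with the stated components; and (iv) compute $red_\rG\fatsemi red_\rG^{-1}=\rG=id_\rG$ and $red_\rG^{-1}\fatsemi red_\rG=\Pirr\Open\rG=id_{\Pirr\Open\rG}$ via any of the five equivalent forms of $\fatsemi$ (Definition~\ref{def:the_cat_dep}), the first reducing to ``every $\rG$-open set is the union of the join-irreducible $\rG$-open sets it contains'' and the second to ``$X\not\subseteq Y$ iff some $g_t$ lies in $X$ but not $Y$''.

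Finally, naturality of $rep$: for $f:\aS\to\aT$ I would show that both $rep_\aT(f(s))$ and $\Open(\Pirr f)(rep_\aS(s))$ equal $\{m\in M(\aT): s\nleq_\aS f_*(m)\}$, one containment using the adjoint relationship and the other using $f_*(m)=\Land_\aS\{m'\in M(\aS): f_*(m)\leq_\aS m'\}$; naturality of $red$ is then obtained by an analogous direct computation. The main obstacle is step (ii) above: correctly unwinding the $(-)^\down$ operators associated with $\Pirr\Open\rG$ and matching the outcome against the four stated component-relation formulas, which is where the interplay between $\rG$, $\cl_\rG$, $\inte_\rG$ and passage to the order-dual must be handled most carefully.
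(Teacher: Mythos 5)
Your proposal is correct and is exactly the direct verification that the paper defers to its background reference (Theorem 4.2.10 there): all the ingredients you identify — the adjoint relationship, $(\up\down\up)/(\down\up\down)$, the join/meet-irreducible generation facts, Lemma \ref{lem:mor_char_max_witness}.6 for the appearance of $\inte_\rG(\overline{g_t})$, and the five equivalent forms of $\fatsemi$ — are the right ones, and I checked that each of your claimed unfoldings (the two factorisations of $red_\rG$, the maximum-witness computations, the two composites, and the naturality square for $rep$) does reduce as you say. The only step you leave genuinely unelaborated is naturality of $red$ against an arbitrary $\Dep$-morphism $\rR:\rG\to\rH$, which requires computing $\Pirr\Open\rR$ via $(\Open\rR)_*$; it goes through by the same kind of calculation, so this is a matter of length rather than a gap.
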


\begin{proof}
  See background paper i.e.\ Theorem 4.2.10, \emph{$\Dep$ is equivalent to $\JSL_f$}.
\end{proof}

So $\rep_\aS$ \emph{represents} a join-semilattice as neighbourhoods of the relation $\Pirr\aS$.
Its inverse is relatively clear: every element arises uniquely as the meet of those meet-irreducibles above it. Concerning the other natural isomorphism, 
\begin{quote}
  \it
  $\red_\rG$ {\bf reduces} a bipartitioned graph $\rG$ by discarding vertices whose neighbourhood is a union of other vertices' neighbourhoods in a canonical manner.
\end{quote}

It is worth clarifying the above statement. Firstly,
\[
  J(\Pirr\Open\rG) \subseteq \{ \rG[g_s] : g_s \in \rG_s \}
  \qquad
  M(\Pirr\Open\rG) \subseteq \{ \inte_\rG(\overline{g_t}) : g_t \in \rG_t \}
\]
because the supersets join/meet-generate $\Open\rG$ respectively. The join-irreducible $\rG[g_s]$'s correspond to those $g_s$ whose neighbourhood is not a union of others. Less obviously the meet-irreducible $\inte_\rG(\overline{g_t})$'s correspond to those $g_t$ whose neighbourhood $\breve{\rG}[g_t]$ is not a union of others:
\[
\begin{tabular}{lll}
  $\inte_\rG(\overline{g_t}) = \inte_\rG(\overline{g_t^1}) \land_{\inte_\rG} \inte_\rG(\overline{g_t^2})$
  &
  $\iff \inte_\rG(\overline{g_t}) = \inte_\rG(\inte_\rG(\overline{g_t^1}) \cap \inte_\rG(\overline{g_t^2}) )$
  & (definition of $\land_{\inte_\rG}$)
  \\ &
  $\iff \rG^\down(\overline{g_t}) = \rG^\down(\inte_\rG(\overline{g_t^1}) \cap \inte_\rG(\overline{g_t^2}) )$
  & (by Lemma \ref{thm:graph_induces_isom})
  \\ &
  $\iff \rG^\down(\overline{g_t}) = \rG^\down(\inte_\rG(\overline{g_t^1})) \cap \rG^\down(\inte_\rG(\overline{g_t^2}))$
  & ($\rG^\down$ preserves $\cap$)
  \\ &
  $\iff \rG^\down(\overline{g_t}) = \rG^\down(\overline{g_t^1}) \cap \rG^\down(\overline{g_t^2})$
  & $(\down\up\down)$
  \\ &
  $\iff \rG^\down(\overline{g_t}) = \rG^\down(\overline{g_t^1} \cap \overline{g_t^2})$
  & ($\rG^\down$ preserves $\cap$)
  \\ &
  $\iff \neg_{\rR_s} \circ \rG^\down(\overline{g_t}) = \neg_{\rR_s} \circ \rG^\down(\overline{g_t^1} \cap \overline{g_t^2})$
  & ($\neg_{\rR_s}$ is bijective)
  \\ &
  $\iff \breve{\rG}[g_t] = \breve{\rG}[ g_t^1] \cup \breve{\rG}[g_t^2]$
  & $(\neg \down \neg)$.
\end{tabular}  
\]
Then finally we have:
\[
  \rG[g_s] \nsubseteq \inte_\rG(\overline{g_t})
  \iffLabel{(1)} g_s \nsubseteq \rG^\down \circ \rG^\up \circ \rG^\down(\overline{g_t})
  \iffLabel{(\down\up\down)} g_s \nsubseteq \rG^\down(\overline{g_t})
  \iffLabel{(3)} \rG[g_s] \nsubseteq \overline{g_t}
  \iff \rG(g_s, g_t)
\]
where $(1)$ and $(3)$ follow by the adjoint relationship in Lemma \ref{lem:mor_char_max_witness}.1. So reduction discards `degenerate' vertices and every relation is $\Dep$-isomorphic to its reduction. This is a form of union-freeness. Importantly:

\begin{proposition}[Reduction preserves bipartite dimension]
  $dim(\rG) = dim(\Pirr\Open\rG)$ for any $\rG \subseteq \rG_s \times \rG_t$.
\end{proposition}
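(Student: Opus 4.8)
Write $\rG' := \Pirr\Open\rG$. The plan is to prove the equality by two inequalities, in each direction transporting an optimal biclique edge-covering across the ``bridge'' supplied by the three facts established just above: the edges of $\rG'$ are exactly the pairs $(X,Y)$ with $X \in J(\Open\rG)$, $Y \in M(\Open\rG)$ and $X \nsubseteq Y$; moreover $J(\Open\rG) \subseteq \{\rG[g_s] : g_s \in \rG_s\}$ and $M(\Open\rG) \subseteq \{\inte_\rG(\overline{g_t}) : g_t \in \rG_t\}$; and $\rG(g_s,g_t) \iff \rG[g_s] \nsubseteq \inte_\rG(\overline{g_t})$.

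For $dim(\rG') \le dim(\rG)$ I would take an optimal covering $\rG = \bigcup_{i=1}^{n} A_i \times B_i$, $n = dim(\rG)$, and push it forward to $C_i := \{\rG[g_s] : g_s \in A_i\} \cap J(\Open\rG)$ and $D_i := \{\inte_\rG(\overline{g_t}) : g_t \in B_i\} \cap M(\Open\rG)$. That each $C_i \times D_i$ is a biclique of $\rG'$, and that together they cover every edge of $\rG'$, both reduce immediately to $\rG(g_s,g_t) \iff \rG[g_s] \nsubseteq \inte_\rG(\overline{g_t})$ together with the two containments, so that every edge of $\rG'$ literally is a pair $(\rG[g_s], \inte_\rG(\overline{g_t}))$ with $\rG(g_s,g_t)$.

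For the reverse inequality the point is that not every $\rG[g_s]$ is join-irreducible in $\Open\rG$, so I would first strengthen the bridge to
\[
  \rG(g_s,g_t) \iff \exists\, X \in J(\Open\rG),\ Y \in M(\Open\rG) :\ X \subseteq \rG[g_s],\ \inte_\rG(\overline{g_t}) \subseteq Y,\ X \nsubseteq Y .
\]
The implication ``$\Leftarrow$'' is trivial (take $z \in X \setminus Y$, so $z \in \rG[g_s] \setminus \inte_\rG(\overline{g_t})$), and ``$\Rightarrow$'' follows from the fact that in the finite semilattice $\Open\rG$ every element is the join of the join-irreducibles below it and the meet of the meet-irreducibles above it (Note \ref{note:irreducibles}.3): from $\rG[g_s] \nsubseteq \inte_\rG(\overline{g_t})$ one extracts $X \in J(\Open\rG)$ below $\rG[g_s]$ but not below $\inte_\rG(\overline{g_t})$, and then $Y \in M(\Open\rG)$ above $\inte_\rG(\overline{g_t})$ but not above $X$. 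Then, given an optimal covering $\rG' = \bigcup_{i=1}^{n} C_i \times D_i$, $n = dim(\rG')$, I would pull it back by $A_i := \{g_s : \exists X \in C_i,\ X \subseteq \rG[g_s]\}$ and $B_i := \{g_t : \exists Y \in D_i,\ \inte_\rG(\overline{g_t}) \subseteq Y\}$; here $A_i \times B_i \subseteq \rG$ follows from ``$\Leftarrow$'' (any witnessing pair $(X,Y) \in C_i \times D_i$ has $X \nsubseteq Y$ since it is an edge of $\rG'$), and $\bigcup_i A_i \times B_i = \rG$ follows from ``$\Rightarrow$''. This gives $dim(\rG) \le dim(\rG')$, and combined with the first inequality completes the proof.

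The main obstacle is the strengthened equivalence in the second half: one must arrange the join-irreducible and meet-irreducible witnesses so that the \emph{single} pair $(X,Y)$ they form is itself an edge of $\rG'$ — this is precisely what the ``join of irreducibles below / meet of irreducibles above'' fact delivers, once one notes that it applies because $\Open\rG$ is finite, hence lattice-complete. Everything else is routine bookkeeping with the two canonical coverings and the already-established correspondences.
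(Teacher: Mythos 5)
Your proof is correct. Note that the paper states this proposition without supplying a proof at all; the only hint it gives is the example that follows, which sketches an \emph{incremental} argument (discard one degenerate vertex at a time, e.g.\ ``add $2$ to each biclique involving $0$ or $4$'', extended to the general cases $\rG[x]=\rG[X]$ and $\breve\rG[y]=\breve\rG[Y]$). Your argument is a different and arguably cleaner route: a single global transport of optimal coverings in both directions across the correspondence $\rG(g_s,g_t)\iff\rG[g_s]\nsubseteq\inte_\rG(\overline{g_t})$, together with $J(\Open\rG)\subseteq\{\rG[g_s]\}$ and $M(\Open\rG)\subseteq\{\inte_\rG(\overline{g_t})\}$. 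The one genuinely nontrivial step, the strengthened equivalence needed for the pull-back direction, is handled correctly: the extraction of a join-irreducible $X\subseteq\rG[g_s]$ with $X\nsubseteq\inte_\rG(\overline{g_t})$ and then a meet-irreducible $Y\supseteq\inte_\rG(\overline{g_t})$ with $X\nsubseteq Y$ follows exactly as you say from Note \ref{note:irreducibles}.3 (and the fact that the meet in $\Open\rG$ is the greatest lower bound, even though it need not be intersection). The push-forward and pull-back verifications are routine and check out, including the degenerate cases of empty bicliques. What your approach buys over the incremental sketch is that it avoids having to argue termination and dimension-invariance of a sequence of local reductions; what it costs is the auxiliary equivalence, which you prove correctly.
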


\begin{example}[Reduction and bipartite dimension]
  \item
  \begin{enumerate}
    \item
    Isolated points have empty neighbourhoods and so are `discarded' by $\red_\rG$. The bipartite dimension is preserved because it is defined in terms of edges.
    \item
    If two points have the same neighbourhood, only one representative occurs in the reduction $\Pirr\Open\rG$. The square $C_4$ arises as $\rG \subseteq \{ 0, 2 \} \times \{ 1, 3 \}$ where $\rG[0] = \rG[2] = \rG_t$. Its reduction $\{ \emptyset \} \times \{ \rG_t \}$ is bipartite graph isomorphic to a single edge $P_1$. Concerning bipartite dimension, if two vertices have the same neighbourhood we may assume they reside in the same bicliques.
    \item
    A vertex's neighbourhood can be a non-degenerate union of others e.g.\ $\rG[2] = \rG[0] \cup \rG[4]$ below:
    \[
      \xymatrix@=15pt{
        & 1 && 3
        \\
        0 \ar@{-}[ur] && 2 \ar@{-}[ul]  \ar@{-}[ur] && 4 \ar@{-}[ul]
      }
    \]
    Applying $\red_\rG$ we obtain two disjoint edges. This preserves the bipartite dimension because we can add $2$ to each biclique involving $0$ or $4$. This method extends to the general cases $\rG[x] = \rG[X]$ and $\breve{\rG}[y] = \breve{\rG}[Y]$.

    \item
    Suppose $\rG$ is a disjoint union of bicliques i.e.\ $\rG = \bigcup_{i \in I} X_i \times Y_i$ where $X_i \cap X_j = \emptyset = Y_i \cap Y_j$ whenever $i \neq j$. Then reduction is a special case of Example \ref{ex:dep_morphisms}.3 i.e.\ a preorder whose quotient poset is discrete.
    
    \item
    Example \ref{ex:semilattice_as_bipartite}.3 described a natural bipartitioned graph which was not reduced. In the automata-theoretic section we'll see many important examples. \endbox

  \end{enumerate}
\end{example}

\smallskip
We described the self-duality of $\JSL_f$ in Note \ref{note:jsl_extras}.3. In $\Dep$, this self-duality \emph{simply takes the converse relation on both objects and morphisms}. Furthermore, the associated component relations are \emph{simply swapped}.

\begin{theorem}[Self-duality]
  \label{def:bicliq_self_duality}
  We have the \emph{self-duality functor} $(-)\spcheck : \Dep^{op} \to \Dep$:
  \[
  \rG\spcheck := \breve{\rG}
  \qquad
  \dfrac{\rR : \rG \to \rH}{(\rR^{op})\spcheck := \breve{\rR} : \breve{\rH} \to \breve{\rG}}
  \qquad
  (\rG\spcheck)_- := \rG_+
  \qquad
  (\rG\spcheck)_+ := \rG_-
  \]
  with witnessing natural isomorphism $\alpha : \Id_{\BiCliq} \To (-)\spcheck \circ ((-)\spcheck)^{op}$ defined $\alpha_\rG := id_\rG = \rG$.
\end{theorem}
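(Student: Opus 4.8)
The plan is to check, as hands-on as possible, that $(-)\spcheck$ really is a functor $\Dep^{op}\to\Dep$, that it records the component relations as stated, and that it squares on the nose to the identity, so that $\alpha$ can be taken to be the identity. First I would dispose of well-definedness. On objects there is nothing to do: $\breve\rG\subseteq\rG_t\times\rG_s$ is again a relation between finite sets. For a morphism $\rR:\rG\to\rH$ with witnesses $\rR_l\subseteq\rG_s\times\rH_s$, $\rR_u\subseteq\rH_t\times\rG_t$, I would apply relational converse to the commuting square $\rR=\rR_l;\rH=\rG;(\rR_u)\spbreve$ of Definition~\ref{def:the_cat_dep}, obtaining $\breve\rR=\rR_u;\breve\rG=\breve\rH;\breve{\rR_l}$, which commutes in $\Rel_f$. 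The point to stress is that, thanks to the deliberate converse $(\rR_u)\spbreve$ in the definition of a $\Dep$-morphism, $\rR_u$ already has the type $\breve\rH_s\times\breve\rG_s$ of a \emph{lower} witness for $\breve\rR:\breve\rH\to\breve\rG$ and $\rR_l$ already has the type of an \emph{upper} witness; so $(\rR_u,\rR_l)$ witnesses $\breve\rR$ with no further converses. This is exactly the observation already recorded in Example~\ref{ex:dep_morphisms}.1.

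Next I would verify functoriality. Identities are immediate: $(\mathrm{id}_\rG^{op})\spcheck=\breve{\rG}=\mathrm{id}_{\breve\rG}$. Composition must be \emph{reversed}, i.e.\ I must show $\breve{\rR\fatsemi\rS}=\breve\rS\fatsemi\breve\rR$ for $\rR:\rG\to\rH$ and $\rS:\rH\to\rI$. Using the presentation $\rR\fatsemi\rS=\rR_l;\rH;(\rS_u)\spbreve$ from Definition~\ref{def:the_cat_dep} and applying $\breve{(-)}$ gives $\breve{\rR\fatsemi\rS}=\rS_u;\breve\rH;\breve{\rR_l}$ (using $\breve{(\rS_u)\spbreve}=\rS_u$). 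On the other hand, by the previous step $\breve\rS:\breve\rI\to\breve\rH$ has $\rS_u$ as a lower witness and $\breve\rR:\breve\rH\to\breve\rG$ has $\rR_l$ as an upper witness, so the same relational presentation of the composite yields $\breve\rS\fatsemi\breve\rR=\rS_u;\breve\rH;(\rR_l)\spbreve=\rS_u;\breve\rH;\breve{\rR_l}$, which matches. Associativity and the unit laws for $\fatsemi$ are inherited for free since $\Dep$ is already known to be a category.

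For the component relations I would simply substitute $\breve\rH,\breve\rG,\breve\rR$ for $\rG,\rH,\rR$ in the defining set-builder formulas of Definition~\ref{def:maximum_witnesses} and simplify using involutivity of converse: the formula for $(\breve\rR)_-$ becomes verbatim the formula for $\rR_+$, and the formula for $(\breve\rR)_+$ becomes that for $\rR_-$ (these are honest equalities of relations since all the types already line up). Specialising to $\rR=\mathrm{id}_\rG=\rG$ gives $(\breve\rG)_-=\rG_+$ and $(\breve\rG)_+=\rG_-$ as claimed. Finally, because $\breve{\breve\rG}=\rG$ and $\breve{\breve\rR}=\rR$, unwinding the definition of $(-)\spcheck$ and of the opposite of a functor shows that $(-)\spcheck\circ((-)\spcheck)^{op}$ is literally $\Id_\Dep$; hence $\alpha_\rG:=\mathrm{id}_\rG=\rG$ trivially satisfies the naturality squares and is an isomorphism — indeed $(-)\spcheck$ exhibits $\Dep$ as \emph{strictly} self-dual, not merely self-dual up to equivalence.

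I do not expect a genuine obstacle here; the only delicate point is bookkeeping — keeping the variance straight in the composition law, choosing the matching one of the five equivalent relational presentations of $\fatsemi$ on each side, and remembering that converse interchanges lower and upper witnesses (hence also the maximum witnesses, i.e.\ the components). The built-in converse $(\rR_u)\spbreve$ in the definition of $\Dep$ is precisely what makes this bookkeeping painless, as the footnote in Definition~\ref{def:the_cat_dep} anticipates.
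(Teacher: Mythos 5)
Your proof is correct and follows exactly the route the paper intends: the paper itself only cites its background document (Theorem 4.1.13 there) for this statement, but the key mechanism you use — that relational converse of the defining square swaps the lower and upper witnesses, hence also the maximum witnesses, and that $(-)\spcheck$ squares strictly to the identity — is precisely what the paper records in Example \ref{ex:dep_morphisms}.1 and in the statement of the theorem. Your type-checking of $\rR_u$ as a lower witness and $\rR_l$ as an upper witness for $\breve\rR$, and the verification $\breve{\rR\fatsemi\rS}=\rS_u;\breve\rH;\breve{\rR_l}=\breve\rS\fatsemi\breve\rR$, are accurate.
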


\begin{proof}
  See background paper i.e.\ Theorem 4.1.13, \emph{Self-duality of $\Dep$}.
\end{proof}

There is also an important natural isomorphism connecting the two self-dualities.

\begin{theorem}[Self-duality transfer]
  \label{thm:dep_jsl_self_duality_transfer}
  \item
  \begin{enumerate}
    \item 
    $\partial: (-)_* \circ \Open^{op} \To \Open \circ (-)\spcheck$ with $\partial_\rG := \lambda Y. \breve{\rG}[\overline{Y}]$ is a natural isomorphism with inverse $\partial_\rG^{\bf-1} := \lambda Y.\rG[\overline{Y}]$.

    In fact, if $\rR : \rG \to \rH$ is a $\Dep$-morphism then $(\Open\rR)_* = \partial_\rG^{\bf-1} \circ \Open \breve{\rR} \circ \partial_\rH$ with action $\lambda Y. \rH^\up \circ \rR^\down(Y)$.

    \item
    $\lambda : (-)\spcheck \circ \Pirr^{op} \To \Pirr \circ (-)_*$ where $\lambda_\aS := id_{\Pirr\aS} = \Pirr\aS$ is a self-inverse natural isomorphism.

  \end{enumerate}
\end{theorem}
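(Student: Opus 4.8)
The two parts are logically independent; I would prove each separately, and Part~2 is much the shorter. For Part~1, note first that the displayed equation $(\Open\rR)_*=\partial_\rG^{-1}\circ\Open\breve{\rR}\circ\partial_\rH$ is exactly the naturality square of $\partial$ written out, so Part~1 amounts to: (i) showing each $\partial_\rG$ is a well-defined $\JSL_f$-isomorphism $(\Open\rG)_*\to\Open(\breve{\rG})$ with inverse $\lambda W.\rG[\overline W]$; and (ii) checking that square commutes. For (i) the plan is: by the identity $(\neg\down\neg)$ of Lemma~\ref{lem:mor_char_max_witness}.2, rewrite $\partial_\rG(Y)=\breve{\rG}[\overline Y]=\overline{\rG^\down(Y)}$ and observe (same identity) that $O(\breve{\rG})=\{\overline c:c\in C(\rG)\}$. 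Then $\partial_\rG$ factors as the order-isomorphism $\alpha_\rG^{-1}=\rG^\down\colon(O(\rG),\subseteq)\to(C(\rG),\subseteq)$ of Theorem~\ref{thm:graph_induces_isom} followed by relative complementation $(C(\rG),\subseteq)\to(O(\breve{\rG}),\subseteq)$, which is an order-anti-isomorphism; hence $\partial_\rG$ is an order-anti-isomorphism $\Open\rG\to\Open(\breve{\rG})$, equivalently an order-isomorphism $(\Open\rG)_*\to\Open(\breve{\rG})$, equivalently a $\JSL_f$-isomorphism, and inverting the two factors yields $\partial_\rG^{-1}(W)=\rG^\up(\overline W)=\rG[\overline W]$. (Alternatively one checks directly that $\partial_\rG$ carries the top $\rG[\rG_s]$ of $\Open\rG$ to $\emptyset$ and the meet $\inte_\rG(Y_1\cap Y_2)$ in $\Open\rG$ to $\partial_\rG Y_1\cup\partial_\rG Y_2$ --- via $(\down\up\down)$ and the fact that $\rG^\down$ preserves intersections --- so it is a bijective $\JSL_f$-morphism and hence an isomorphism.)

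For (ii) I would evaluate the composite $\partial_\rG^{-1}\circ\Open\breve{\rR}\circ\partial_\rH$ at an $\rH$-open set $Y$. Using the description $\Open\breve{\rR}=\breve{\rR}^\up\circ\breve{\rH}^\down$ from Note~\ref{note:open_morphism_alt} together with the identities $(\neg\up\neg)$ and $(\neg\down\neg)$, one computes in turn $\partial_\rH(Y)=\overline{\rH^\down(Y)}$, then $\breve{\rH}^\down(\overline{\rH^\down(Y)})=\overline{\inte_\rH(Y)}=\overline Y$ (using that $Y$ is $\rH$-open), then $\breve{\rR}^\up(\overline Y)=\overline{\rR^\down(Y)}$, and finally $\partial_\rG^{-1}(\overline{\rR^\down(Y)})=\rG^\up(\rR^\down(Y))$; so the composite has action $\lambda Y.\rG^\up(\rR^\down(Y))=\lambda Y.\rG[\rR^\down(Y)]$. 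To see this is $(\Open\rR)_*$, use the Galois connection: since $\Open\rR(W)=\rR^\up(\rG^\down(W))$ and $\rR^\up\dashv\rR^\down$, for $\rG$-open $W$ and $\rH$-open $Z$ we get $\Open\rR(W)\subseteq Z\iff\rG^\down(W)\subseteq\rR^\down(Z)\iff W\subseteq\rG^\up(\rR^\down(Z))$, which identifies $\lambda Z.\rG^\up(\rR^\down(Z))$ as the adjoint $(\Open\rR)_*$. The one non-routine step is the reverse implication $\Leftarrow$ in the last chain: it requires $\rR^\down(Z)$ to be $\cl_\rG$-closed, which I would get from the morphism characterisation $\rR^\up\circ\cl_\rG=\rR^\up$ of Lemma~\ref{lem:mor_char_max_witness}.4, since then $\rR^\up(\cl_\rG(\rR^\down(Z)))=\rR^\up(\rR^\down(Z))\subseteq Z$ forces $\cl_\rG(\rR^\down(Z))\subseteq\rR^\down(Z)$.

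For Part~2 the key observation is that $(\Pirr\aS)\spcheck$ and $\Pirr(\aS_*)$ are the \emph{same} $\Dep$-object: since $J(\aS_*)=M(\aS)$, $M(\aS_*)=J(\aS)$ and $\leq_{\aS_*}\;=\;\geq_\aS$, both equal the relation $\{(m,j)\in M(\aS)\times J(\aS):j\nleq_\aS m\}$. Hence $\lambda_\aS:=id_{(\Pirr\aS)\spcheck}$ is a well-defined $\Dep$-isomorphism, and is self-inverse because it is an identity. Since every component is an identity, naturality collapses to the single equality of $\Dep$-morphisms $(\Pirr f)\spcheck=\Pirr(f_*)$ for $f\colon\aS\to\aT$; unfolding the definitions, $(\Pirr f)\spcheck$ is the relation $\{(m,j):f(j)\nleq_\aT m\}$ while $\Pirr(f_*)$ is $\{(m,j):j\nleq_\aS f_*(m)\}$, so the two coincide by the contrapositive of the adjoint relationship $f(j)\leq_\aT m\iff j\leq_\aS f_*(m)$ (Note~\ref{note:jsl_extras}.2). (If desired, one can also confirm the component relations agree via the formulas of Definition~\ref{def:open_and_pirr} and the swap $(\rG\spcheck)_\pm=\rG_\mp$ of Theorem~\ref{def:bicliq_self_duality}, but equality of the underlying relations already determines the morphisms.)

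The only genuinely delicate part is the chase in Part~1's naturality argument: one must keep the domains and codomains of the operators $(-)^\up,(-)^\down$ straight, deploy the identities $(\up\down\up),(\down\up\down),(\neg\up\neg),(\neg\down\neg)$ correctly, and --- crucially --- use that $\rR^\down$ of an $\rH$-open set is $\cl_\rG$-closed. Everything else (well-definedness of $\partial_\rG$, and all of Part~2) is just the adjoint law and unwinding the order-dual. A more structural route that sidesteps the chase would be to \emph{define} $\partial$ by pasting the already-established natural isomorphisms $\red,\rep$ of Theorem~\ref{thm:dep_equiv_jslf} with $\lambda$ from Part~2 (suitably variance-twisted), obtaining naturality for free; but one would then still have to compute the resulting component to recognise it as $\lambda Y.\breve{\rG}[\overline Y]$.
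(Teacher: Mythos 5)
Your proposal is correct. For Part 2 your argument coincides with the paper's: both reduce naturality (all components being identities) to the single equality of relations $(\Pirr f)\spbreve = \Pirr(f_*)$ and verify it by the contrapositive of the adjoint law $f(j)\leq_\aT m \iff j\leq_\aS f_*(m)$; your preliminary observation that $(\Pirr\aS)\spcheck$ and $\Pirr(\aS^\pOp)$ are literally the same $\Dep$-object is exactly what makes $\lambda_\aS := id$ well-typed, a point the paper leaves implicit. For Part 1 the paper gives no argument at all --- it cites Theorem 4.6.7 of a background document --- so your self-contained proof is a genuine addition rather than a variant, and it is sound: the factorisation $\partial_\rG = \neg_{\rG_s}\circ\rG^\down$ through the isomorphism $\alpha_\rG^{-1}$ of Theorem \ref{thm:graph_induces_isom} followed by relative complementation correctly exhibits $\partial_\rG$ as an order-anti-isomorphism onto $O(\breve{\rG})$, and your identification of $(\Open\rR)_*$ with $\lambda Z.\rG^\up(\rR^\down(Z))$ via the Galois connection is right, including the one step that genuinely needs Lemma \ref{lem:mor_char_max_witness}.4, namely that $\rR^\down(Z)$ is $\cl_\rG$-closed. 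One remark: the action you compute, $\lambda Y.\rG^\up\circ\rR^\down(Y)$, differs from the $\lambda Y.\rH^\up\circ\rR^\down(Y)$ printed in the theorem statement; since $\rR^\down(Y)\subseteq\rG_s$ while $\rH^\up$ expects a subset of $\rH_s$, the printed formula does not typecheck for general $\rR:\rG\to\rH$, so your version is the correct reading and the statement contains a typo rather than your derivation an error.
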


\begin{proof}
  \item
  \begin{enumerate}
    \item 
    See background paper Theorem 4.6.7 i.e.\ `$\partial$ defines a natural isomorphism'.
    \item
    Given any $\JSL_f$-morphism $f : \aS \to \aT$ we need to establish the following square commutes:
    \[
      \xymatrix@=15pt{
        (\Pirr\aS)\spcheck \ar[rr]^-{id_\aS} && \Pirr(\aS^\pOp)
        \\
        (\Pirr\aT)\spcheck \ar[u]^{(\Pirr \, f^{op})\spbreve} \ar[rr]_-{id_\aT} && \Pirr(\aT^\pOp) \ar[u]_-{\Pirr(f_*)}
      }
    \]
    Indeed for any $s \in S$ and $t \in T$,
    \[
      \Pirr (f_*) (t, s)
      :\iff f_*(t) \nleq_{\aS^\pOp} s
      \iff s \nleq_\aS f_*(t)
      \iff f(s) \nleq_\aS t
      \iff \Pirr f (s,t)
      \iff (\Pirr f)\spbreve (t, s)
    \]
    via the usual adjoint relationship.

  \end{enumerate}
\end{proof}


Note that $\JSL_f$ has enough projectives, using category-theoretic parlance.

\begin{proposition}[$\JSL_f$ and $\Dep$ have enough projectives]
  \label{prop:jsl_dep_enough_proj}
  \item
  Let $Z$ be a finite set and $\aS$ a finite join-semilattice.
  \begin{enumerate}
    \item
    $\Open\Delta_Z = (\Pow Z, \cup, \emptyset)$ is the free join-semilattice on $|Z|$-generators.
    
    \item 
    $\epsilon_\aS : \Open\Delta_{J(\aS)} \epito \aS$ where $\epsilon_\aS (X) := \Lor_\aS X$ is surjective and extends $J(\aS) \hookto S$. Correspondingly, $\Dep$ has epimorphisms $\rG : \Delta_{\rG_s} \to \rG$.

    \item
    Given $f : \Open\Delta_Z \to \aT$ and surjective $q : \aS \epito \aT$ then $f = q \circ g$ where $g : \Open\Delta_Z \to \aS$ extends $\lambda z. q_*(f(\{ z \}))$.

  \end{enumerate}
\end{proposition}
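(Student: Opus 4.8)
The plan is to dispatch the three items in turn, each reducing to the freeness statement (1) together with facts already recorded above.

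\emph{Item (1).} Since $\Delta_Z[X] = X$ for every $X \subseteq Z$ we have $O(\Delta_Z) = \Pow Z$, hence $\Open\Delta_Z = (\Pow Z, \cup, \emptyset)$. To see this is the free join-semilattice on the $|Z|$ singleton generators, I would take an arbitrary finite join-semilattice $\aT$ and function $g : Z \to T$ and set $\hat g(X) := \Lor_\aT g[X]$. Then $\hat g$ is join-preserving, because $g[X \cup Y] = g[X] \cup g[Y]$ and $\hat g(\emptyset) = \Lor_\aT \emptyset = \bot_\aT$, and $\hat g(\{z\}) = g(z)$; uniqueness is forced because the singletons (together with $\emptyset$) join-generate $(\Pow Z, \cup, \emptyset)$ and a join-preserving map is determined by its values on join-generators (Note \ref{note:irreducibles}.3).

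\emph{Item (2).} By (1), $\epsilon_\aS$ is precisely the unique join-preserving extension of the inclusion $J(\aS) \hookto S$, hence a $\JSL_f$-morphism with $\epsilon_\aS(\{j\}) = j$, so it extends that inclusion; it is surjective since $s = \Lor_\aS\{\, j \in J(\aS) : j \leq_\aS s \,\}$ for every $s \in S$ (Note \ref{note:irreducibles}.3), and surjective $\JSL_f$-morphisms are epic (Note \ref{note:jsl_extras}.4). For the $\Dep$-statement, I would first check that $\rG$ is a $\Dep$-morphism $\Delta_{\rG_s} \to \rG$, exhibiting the witnesses $\rR_l := \Delta_{\rG_s}$ and $\rR_u := \breve{\rG}$ (for which $\rR_l ; \rG = \rG$ and $\Delta_{\rG_s} ; (\rR_u)\spbreve = \rG$). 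Then I would compute $\Open$ of this morphism: by Note \ref{note:open_morphism_alt} it equals $\lambda Y.\, \rG^\up \circ (\Delta_{\rG_s})^\down(Y)$, and since $(\Delta_{\rG_s})^\down$ is the identity on $\Pow\rG_s$ this is the surjection $Y \mapsto \rG[Y]$ onto $O(\rG) = \{\,\rG[X] : X \subseteq \rG_s\,\}$, hence epic in $\JSL_f$. Since $\Open$ is an equivalence (Theorem \ref{thm:dep_equiv_jslf}) it reflects epimorphisms, so $\rG : \Delta_{\rG_s} \to \rG$ is epic in $\Dep$.

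\emph{Item (3).} The one new ingredient is that a surjective $q : \aS \epito \aT$ satisfies $q \circ q_* = \Id_\aT$. I would prove this from the adjoint relationship $q(s) \leq_\aT t \iff s \leq_\aS q_*(t)$ (Note \ref{note:jsl_extras}.2): instantiating $s := q_*(t)$ gives $q(q_*(t)) \leq_\aT t$, and choosing any $s_0$ with $q(s_0) = t$ gives $s_0 \leq_\aS q_*(t)$, hence $t = q(s_0) \leq_\aT q(q_*(t))$. Now, by the freeness in (1), let $g : \Open\Delta_Z \to \aS$ be the unique join-preserving extension of $\lambda z.\, q_*(f(\{z\})) : Z \to S$. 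By construction $g$ extends that assignment, and $q \circ g$ and $f$ are join-preserving maps $\Open\Delta_Z \to \aT$ agreeing on the singleton generators, since $q(g(\{z\})) = q(q_*(f(\{z\}))) = f(\{z\})$; by the uniqueness clause of (1) they are equal, i.e.\ $f = q \circ g$. I do not expect a genuine obstacle here; the only mildly delicate point is the $\Dep$-half of (2), where one must make sure $\Open(\rG : \Delta_{\rG_s} \to \rG)$ really is the map $Y \mapsto \rG[Y]$ rather than something phrased via the maximal witness $\rG_+$, which is exactly the content of Note \ref{note:open_morphism_alt}. Everything else is routine bookkeeping with free objects and adjoints.
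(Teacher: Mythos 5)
Your proof is correct; the paper states this proposition without giving a proof, and your argument supplies exactly the intended one, using only ingredients already on record (freeness of $(\Pow Z,\cup,\emptyset)$ via singleton generators, join-irreducibles as a join-generating set from Note \ref{note:irreducibles}.3, the adjoint relationship $q(s)\leq_\aT t \iff s\leq_\aS q_*(t)$ giving $q\circ q_*=\Id_\aT$ for surjective $q$, and Note \ref{note:open_morphism_alt} to identify $\Open(\rG:\Delta_{\rG_s}\to\rG)$ with $Y\mapsto\rG[Y]$). The one point worth being explicit about, which you do handle, is that the $\Dep$ half of (2) follows because an equivalence of categories reflects epimorphisms.
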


Since the self-duality preserves freeness, $\JSL_f$ has enough injectives too. The witnessing embeddings $\iota_\aS := \iota \circ \rep_\aS : \aS \to \Open \Delta_{M(\aS)}$ first represent and then include into a powerset. In $\Dep$ they amount to monomorphisms $\rG : \rG \to \Delta_{\rG_t}$. Concerning both projectivity and injectivity, there is an important special case involving endomorphisms.

\begin{corollary}[Endomorphism representations]
  \label{cor:jsl_endomorphism_reps}
  \item
  The $\JSL_f$-diagrams below commute for any $\aS$-endomorphism $\delta$,
  \[
    \xymatrix@=15pt{
       \aS \ar[rr]^\delta
       && \aS
       \\
       \Open\Delta_{J(\aS)} \ar@{->>}[u]^{\epsilon_\aS} \ar@{..>}[rr]_{(\Pirr \delta)_-^\up}
       && \Open \Delta_{J(\aS)} \ar@{->>}[u]_{\epsilon_\aS}
    }
    \qquad
    \xymatrix@=15pt{
      \Open\Delta_{M(\aS)} \ar@{..>}[rr]^{((\Pirr \delta)_+\spbreve)^\up} && \Open\Delta_{M(\aS)}
        \\
        \aS \ar@{>->}[u]_{\iota_\aS} \ar[rr]_{\delta}
        && \aS \ar@{>->}[u]_{\iota_\aS}
    }
  \]
  Concerning $\Dep$, $\rR: \rG \to \rG$ induces both $\rR_- : \Delta_{\rG_s} \to \Delta_{\rG_s}$ and $\rR_+\spbreve : \Delta_{\rG_t} \to \Delta_{\rG_t}$.
\end{corollary}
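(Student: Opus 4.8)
The plan is to handle the left (projective) square via the lifting formula of Proposition~\ref{prop:jsl_dep_enough_proj}, the right (injective) square by the order-dual argument backed by a direct computation, and the $\Dep$ assertion by a typing observation plus Lemma~\ref{lem:mor_char_max_witness}.5. For the left square I would specialise Proposition~\ref{prop:jsl_dep_enough_proj}.3 to $Z := J(\aS)$, $q := \epsilon_\aS : \Open\Delta_{J(\aS)}\epito\aS$ and $f := \delta\circ\epsilon_\aS$: since $\Open\Delta_{J(\aS)}$ is free, hence projective (Proposition~\ref{prop:jsl_dep_enough_proj}.1), $f$ factors through $q$ via the join-preserving extension $g$ of $z\mapsto (\epsilon_\aS)_*(f(\{z\})) = (\epsilon_\aS)_*(\delta(z))$ over $z\in J(\aS)$. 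A one-line adjoint computation gives $(\epsilon_\aS)_*(t)=\{\,j\in J(\aS): j\leq_\aS t\,\}$, so $g$ extends $z\mapsto\{\,j\in J(\aS):j\leq_\aS\delta(z)\,\} = (\Pirr\delta)_-[z]$ by the formula for the lower component in Definition~\ref{def:open_and_pirr}. Since the image operator $\rR[-]$ manifestly preserves unions, $(\Pirr\delta)_-^\up$ is a $\JSL_f$-endomorphism of $\Open\Delta_{J(\aS)}$ agreeing with $g$ on the singletons $\{z\}$ --- which are the free generators (join-irreducibles of a powerset, Note~\ref{note:irreducibles}.2) --- so $g=(\Pirr\delta)_-^\up$ and the square commutes. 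One can also check this head-on: $\delta(\epsilon_\aS(X))=\Lor_\aS\{\delta(j): j\in X\}$ since $\delta$ preserves joins, and writing each $\delta(j)$ as the join of the join-irreducibles below it rewrites this as $\Lor_\aS(\Pirr\delta)_-[X] = \epsilon_\aS\bigl((\Pirr\delta)_-^\up(X)\bigr)$.

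For the right square I would run the order-dual story: $\iota_\aS=\iota\circ\rep_\aS$ is the canonical embedding into the injective $\Open\Delta_{M(\aS)}$, and the lift is pinned down by the order-dual of the lifting formula plus the self-duality of $\JSL_f$ (Note~\ref{note:jsl_extras}.3). Because passing through $\aS^\pOp$ hides a complementation --- the adjoint of $\epsilon_{\aS^\pOp}$ agrees with $\iota_\aS$ only after complementing inside $\Pow M(\aS)$ --- I expect it cleaner to verify commutativity outright: unwinding the image operator, for $s\in\aS$,
\[
  \bigl((\Pirr\delta)_+\spbreve\bigr)^\up(\rep_\aS(s)) = \{\, m'\in M(\aS) : \exists\, m\in M(\aS).\ s\nleq_\aS m\ \text{and}\ \delta_*(m')\leq_\aS m \,\},
\]
and this equals $\{\,m': \delta(s)\nleq_\aS m'\,\} = \rep_\aS(\delta(s))$, because $\delta(s)\nleq_\aS m'\iff s\nleq_\aS\delta_*(m')$ by the adjoint relationship (Note~\ref{note:jsl_extras}.2) while $\delta_*(m')$ is the meet of the meet-irreducibles above it (Note~\ref{note:irreducibles}.3), so $s\nleq_\aS\delta_*(m')$ precisely when some meet-irreducible $m\geq_\aS\delta_*(m')$ has $s\nleq_\aS m$.

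For the $\Dep$ assertion, note that when $\rR:\rG\to\rG$ is an endomorphism Definition~\ref{def:maximum_witnesses} gives $\rR_-\subseteq\rG_s\times\rG_s$ and $\rR_+\subseteq\rG_t\times\rG_t$, and every relation on a fixed set $Z$ is automatically a $\Dep$-morphism $\Delta_Z\to\Delta_Z$ (use the relation itself and its converse as witnesses; the defining square degenerates because $\Delta_Z$ is an identity for relational composition). Hence $\rR_-:\Delta_{\rG_s}\to\Delta_{\rG_s}$ and $\rR_+\spbreve:\Delta_{\rG_t}\to\Delta_{\rG_t}$, and by Lemma~\ref{lem:mor_char_max_witness}.5 they are the lifts of $\rR$ along the $\Dep$-epi $\rG:\Delta_{\rG_s}\to\rG$ and the $\Dep$-mono $\rG:\rG\to\Delta_{\rG_t}$, since $\rR_- ; \rG=\rR=\rG ; \rR_+\spbreve$; applying $\Open$ then recovers the two $\JSL_f$ squares up to the reduction isomorphism $\red_\rG$. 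The main obstacle throughout is the converse/complementation bookkeeping of the dual square, which the explicit computation above sidesteps.
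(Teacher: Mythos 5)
Your proof is correct and follows the route the paper intends: the left square is exactly the instantiation of the lifting formula in Proposition~\ref{prop:jsl_dep_enough_proj}.3 with $q:=\epsilon_\aS$ and $f:=\delta\circ\epsilon_\aS$ (the adjoint computation $(\epsilon_\aS)_*(t)=\{j\in J(\aS):j\leq_\aS t\}$ identifying the lift with $(\Pirr\delta)_-^\up$), the right square is its order-dual, and the $\Dep$ statement is just Lemma~\ref{lem:mor_char_max_witness}.5 together with the observation that any relation on $Z$ is a morphism $\Delta_Z\to\Delta_Z$. Your decision to verify the dual square by direct computation rather than chasing the complementation through $\aS^\pOp$ is sound and the computation checks out.
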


\smallskip
Next, recall $\Pirr\aS$ restricts to the join/meet-irreducibles of $\aS$. It turns out one can instead pass to any join/meet generators. Roughly speaking, one can extend the domain/codomain of $\Pirr\aS$ and $\Pirr f$ in the `obvious' way.

\begin{proposition}[$\Dep$ generator isomorphisms]
  \label{prop:dep_generator_isoms}
  Let $f : \aS \to \aT$ be a join-semilattice morphism, $J_\aS,\, M_\aS \subseteq S$ be join/meet generators for $\aS$, and $J_\aT,\, M_\aT \subseteq T$ be join/meet generators for $\aT$.
  \begin{enumerate}
    \item
    We have the $\Dep$-isomorphism $\rI_\aS : \; \nleq_\aS |_{J_\aS \times M_\aS} \to \Pirr\aS$,
    \[
    \begin{tabular}{llllll}
    && $\rI_\aS := \; \nleq_\aS |_{J_\aS \times M(\aS)}$
    & $(\rI_\aS)_- (x, j) :\iff j \leq_\aS x$
    & $(\rI_\aS)_+ (m, y) :\iff m \leq_\aS y$
    \\[1ex]
    && $\rI_\aS^{\bf-1} := \; \nleq_\aS |_{J(\aS) \times M_\aS}$
    & $(\rI_\aS^{\bf-1})_-(j, x) :\iff x \leq_\aS j$
    & $(\rI_\aS^{\bf-1})_+ (y, m) :\iff y \leq_\aS m$.
    \\ 
    \end{tabular}
    \]
    
    \item
    The following $\Dep$-diagram commutes where $\rI_f (s, t) :\iff f(s) \nleq_\aS t$:
    \[
      \xymatrix@=15pt{
        \nleq_\aS |_{J_\aT \times M_\aT} \ar@{<-}[rrr]^-{\rI_\aT^{\bf-1}} &&& \Pirr\aT
        \\
        \nleq_\aS |_{J_\aS \times M_\aS} \ar[u]^{\rI_f} \ar[rrr]_-{\rI_\aS} &&& \Pirr\aS \ar[u]_{\Pirr f}
      }  
    \]
  \end{enumerate}
\end{proposition}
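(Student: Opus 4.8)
The plan is to verify everything directly, leaning on two elementary facts used repeatedly. \emph{(F1)} For any family $(t_i)$ in a join-semilattice and any element $m$, $\Lor_i t_i \nleq m$ iff $t_i \nleq m$ for some $i$ (the order-theoretic De Morgan law). \emph{(F2)} Since $J_\aS$ join-generates and $M_\aS$ meet-generates, every $s\in S$ satisfies $s=\Lor_\aS\{x\in J_\aS: x\leq_\aS s\}$ and $s=\Land_\aS\{y\in M_\aS: s\leq_\aS y\}$, and likewise in $\aT$; the same holds with $J(\aS),M(\aS)$ in place of $J_\aS,M_\aS$. Write $G$ for the relation $\nleq_\aS|_{J_\aS\times M_\aS}$, the source of $\rI_\aS$.

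For part (1), I would first show $\rI_\aS$ is a $\Dep$-morphism by exhibiting the two witnesses by hand: put $L(x,j):\iff j\leq_\aS x$ (a relation $J_\aS\to J(\aS)$) and $U(m,y):\iff m\leq_\aS y$ (a relation $M(\aS)\to M_\aS$), and check $L\,;\Pirr\aS=\rI_\aS=G\,;U\spbreve$ in $\Rel_f$. For instance $(L\,;\Pirr\aS)(x,m)$ holds iff some $j\in J(\aS)$ has $j\leq_\aS x$ and $j\nleq_\aS m$, which by (F1)--(F2) (expand $x$ as a join of elements of $J(\aS)$) is equivalent to $x\nleq_\aS m$; the other equality is dual using the $M(\aS)$-expansion of $m$. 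Next I would show $L$ and $U$ are the \emph{maximum} witnesses, i.e.\ $L=(\rI_\aS)_-$ and $U=(\rI_\aS)_+$ in the sense of Definition~\ref{def:maximum_witnesses}: this is where the generator hypotheses (not merely the morphism property) are genuinely used. Concretely $(\rI_\aS)_-(x,j)$ holds iff $j\in(\Pirr\aS)^\down(\rI_\aS[x])$, i.e.\ iff $\{m\in M(\aS): j\nleq_\aS m\}\subseteq\{m\in M(\aS): x\nleq_\aS m\}$, i.e.\ iff ($x\leq_\aS m\Rightarrow j\leq_\aS m$ for all $m\in M(\aS)$), which by meet-generation of $M(\aS)$ says exactly $j\leq_\aS x$; and dually for $(\rI_\aS)_+$ via join-generation of $J(\aS)$. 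The relation $\rI_\aS^{\bf-1}$ and its claimed components are obtained by the mirror-image computation, or more slickly by applying the self-duality functor of Theorem~\ref{def:bicliq_self_duality}, under which source and target swap and the two component relations swap. Finally I would verify the composites are identities using any of the five equivalent forms from Definition~\ref{def:the_cat_dep}: e.g.\ $\rI_\aS\fatsemi\rI_\aS^{\bf-1}=(\rI_\aS)_-\,;\rI_\aS^{\bf-1}$, and $((\rI_\aS)_-\,;\rI_\aS^{\bf-1})(x,y)$ holds iff some $j\in J(\aS)$ has $j\leq_\aS x$ and $j\nleq_\aS y$, iff $x\nleq_\aS y$, iff $G(x,y)$; symmetrically $\rI_\aS^{\bf-1}\fatsemi\rI_\aS=\Pirr\aS$.

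For part (2), once part (1) is available the square commutes almost formally: $\rI_\aS$, $\Pirr f$ and $\rI_\aT^{\bf-1}$ are $\Dep$-morphisms, so their composite is a $\Dep$-morphism, hence determined by its underlying relation; it suffices to compute that relation and match it with $\rI_f$. Using the components from part (1) and Definition~\ref{def:open_and_pirr}, I compute $\rI_\aS\fatsemi\Pirr f=(\rI_\aS)_-\,;\Pirr f$, which sends $(x,m)$ to the truth of $\exists j\in J(\aS).\,(j\leq_\aS x\ \wedge\ f(j)\nleq_\aT m)$; since $f$ preserves joins and $x=\Lor_\aS\{j\in J(\aS): j\leq_\aS x\}$, fact (F1) in $\aT$ turns this into $f(x)\nleq_\aT m$. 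Then $\rI_\aS\fatsemi\Pirr f\fatsemi\rI_\aT^{\bf-1}$ is this relation composed with $((\rI_\aT^{\bf-1})_+)\spbreve$, i.e.\ it sends $(x,y)$ to the truth of $\exists m\in M(\aT).\,(f(x)\nleq_\aT m\ \wedge\ y\leq_\aT m)$; because $M(\aT)$ meet-generates $\aT$ (expand $y$), this is exactly $f(x)\nleq_\aT y$, i.e.\ $\rI_f(x,y)$. Hence the composite equals $\rI_f$ as a relation, and therefore as a $\Dep$-morphism.

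I do not expect a conceptual obstacle — the proof is bookkeeping with relational composites — but two points need care. First, in each composite one must keep straight which of the five equivalent forms from Definition~\ref{def:the_cat_dep} is invoked and in which direction each component relation and its converse points; the $\spbreve$'s are easy to misplace. Second, and the only step with real content, is the identification of $(\rI_\aS)_\pm$ (and $(\rI_\aS^{\bf-1})_\pm$) with the claimed component relations: this must use meet-generation of $M(\aS)$ and join-generation of $J(\aS)$, not just that $\rI_\aS$ is a morphism; and in part (2) the final ``$\Leftarrow$'' direction uses that the \emph{meet-irreducibles} $M(\aT)$ meet-generate $\aT$ (the intermediate object is $\Pirr\aT$, built from $M(\aT)$, not from $M_\aT$).
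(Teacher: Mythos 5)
The paper states this proposition without an accompanying proof, so there is no in-text argument to compare against; judged on its own, your verification is correct and complete in outline. The witnesses you exhibit type-check, the identification of the maximum components is indeed the only step where the generating hypotheses do real work, and the composite computations (two-sided inverse in part (1), the relation underlying $\rI_\aS\fatsemi\Pirr f\fatsemi\rI_\aT^{\bf-1}$ in part (2)) are right, including your silent correction of the statement's typo $\nleq_\aS$ for $\nleq_\aT$ in the top row of the square. One small attribution slip: for $(\rI_\aS)_+$ the relevant operator is $\breve{\rG}^\down$ taken for the \emph{source} $\nleq_\aS|_{J_\aS\times M_\aS}$, so the fact needed there is that $J_\aS$ join-generates, not $J(\aS)$; symmetrically $(\rI_\aS^{\bf-1})_-$ uses that $M_\aS$ meet-generates. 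Across the four component relations each of the four generation hypotheses ($J_\aS$, $M_\aS$, $J(\aS)$, $M(\aS)$) is used exactly once, which is worth recording explicitly if you write this up.
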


\begin{example}
  Applying Proposition \ref{prop:dep_generator_isoms} we see that Example \ref{ex:jsl_morphisms_in_dep} is essentially $\Pirr f$. \endbox
\end{example}

So far we've seen that $\Dep$ is well-behaved w.r.t.\ bipartite dimension. However, aside from that, connections with graph theory have been a bit thin on the ground. So before proceeding to the automata-theoretic constructions we mention some additional relationships.

\begin{example}[Further graph-theoretic connections]
  \label{ex:further_graph_theory}
  \item
  \begin{enumerate}
    \item
    \emph{Discarding vertices}.

    Let $\rG$ be a reduced relation.
    Discarding a vertex $g_s \in \rG_s$ in the lower bipartition amounts to generating a sub join-semilattice $\ang{J(\aS) \setminus \{ j \}}_\aS$.
    Discarding $g_t \in \rG_t$ amounts to constructing a quotient $(\ang{M(\aS) \setminus \{ m \}}_{\aS^\pOp})^\pOp$.

    \item
    \emph{Kronecker product over boolean semiring}.

    One can combine binary relations via $\rG \syncp \rH ((g_s, h_s), (g_t, h_t)) :\iff \rG(g_s, g_t) \land \rH(h_s, h_t)$ i.e.\ the Kronecker product over the boolean semiring \cite{WattsBooleanRankKronecker2001}.
    It defines a functor $- \syncp - : \Dep \times \Dep \to \Dep$ whose 
    corresponding join-semilattice functor is the \emph{tight tensor product} $\aS \ttenp \aT := \jslTight{\aS^{\pOp},\aT}$. To explain briefly,
    
    \begin{enumerate}
      \item
      $\jslTight{- , -} : \JSL_f^{op} \times \JSL_f \to \JSL_f$ restricts the usual hom-functor to morphisms which factor through a boolean lattice. The join-semilattice structure on morphisms is defined pointwise.
      \item
      There is a universal property w.r.t.\ bilinearity via a natural isomorphism $ut : \jslTight{- \ttenp -, -} \To \jslTight{-,\jslTight{-,-}}$.
      \item
      The tight tensor product is distinct from the tensor product \cite{GratzerTensorSemilattices2005}; they coincide on distributive lattices.
    \end{enumerate}

    \item
    \emph{Extension to non-bipartite graphs}.

    We've seen that reduced relations correspond to finite join-semilattices. This categorical equivalence can be extended to \emph{reduced undirected graphs} versus \emph{finite De Morgan algebras} i.e.\ bounded lattices with an order-reversing involution where distributivity is not assumed.

    \begin{enumerate}
      \item
      By undirected graph we mean a symmetric relation $\rE = \breve{\rE}$ i.e.\ a standard undirected graph where self-loops are now permitted.

      \item
      The algebras may be axiomatised by extending join-semilattices with a unary operation satisfying $\sigma(x \lor y) \leq \sigma(x)$ and $\sigma(\sigma(x)) = x$. A morphism is a join-semilattice morphism preserving $\sigma$.

      \item
      Given $(V, \rE)$ we construct the De Morgan algebra $\partial_\rE : \Open\rE \to (\Open\rE)^\pOp$ where $\partial_\rE(X) := \rE[\overline{X}]$. Given a De Morgan algebra $\sigma : \aS \to \aS^\pOp$ we construct the undirected graph $(J(\aS), \Pirr \sigma)$. \endbox

    \end{enumerate}

  \end{enumerate}
\end{example}

\section{Dependency Automata}

\subsection{From Nondeterministic to Dependency Automata}

\begin{definition}[Nondeterministic finite automaton]
  \label{def:nfa_basics}
  \item
  \begin{enumerate}
    \item 
    A \emph{nondeterministic finite automaton} (or \emph{nfa}) is a tuple $\rN = (I, Z, \rN_a, F)$ where:
    \begin{itemize}
      \item[--] $Z$ is a finite set,
      \item[--] $I, F \subseteq Z$ are subsets, and
      \item[--] $\rN_a \subseteq Z \times Z$ for each $a \in \Sigma$.
    \end{itemize}

    The elements of $Z$, $I$, $F$ are called \emph{states}, \emph{initial states} and \emph{final states} respectively. Each $\rN_a$ is called the \emph{$a$-transition relation}. We often reuse the symbol denoting the nfa (e.g. $\rN$) to denote the transitions (e.g. $\rN_a$). We may also denote the states, initial states and final states by $Z_\rN$, $I_\rN$ and $F_\rN$  respectively.

    \item
    For $w \in \Sigma^*$ inductively define $\rN_w$ as $\rN_\epsilon := \Delta_Z$, $\rN_{ua} := \rN_u ; \rN_a$. Then we say $\rN$ \emph{accepts the language}:
    \[
      L(\rN) := \{ w \in \Sigma^* : \rN_w [I] \, \cap F \neq \emptyset \}.
    \]
    
    \item
    \emph{Constructions on nondeterministic automata $\rN = (I, Z, \rN_a, F)$}.

    \begin{enumerate}[a.]
      \item
      Given $S \subseteq Z$ then $\rN_{@S} := (S, Z, \rN_a, F)$ is the nfa with its initial states changed to $S$. Notice that $L(\rN_{@S}) = \bigcup_{z \in S} L(\rN_{@z})$.

      \item 
        $\rN$'s \emph{reverse nfa} is:
        \[
          \rev{\rN} := (F, Z, (\rN_a)\spbreve, I)
          \qquad
          \text{and accepts the reverse language $(L(\rN))^r$.}
        \]
      \item
        There are various concepts relating to \emph{reachability}:
        \[
          \begin{tabular}{l}
            $\rs{\rN} := \{ \rN_w[I] : w \in \Sigma^* \}$
            \qquad
            $reach(\rN) := \bigcup \rs{\rN} \subseteq Z$
            \\[1ex]
            $\rsc{\rN} := (\{ I \}, \rs{\rN}, \lambda X. \rN_a[X], \{ X \in \rs{\rN} : X \cap F \neq \emptyset \})$
          \end{tabular}
        \]
        That is, $\rs{\rN}$ consists of $\rN$'s \emph{reachable subsets}, $reach(\rN)$ consists of $\rN$'s \emph{reachable states} and finally $\rsc{\rN}$ is the famous \emph{reachable subset construction}. The latter is a dfa -- see Definition \ref{def:dfas} below.

        \item 
        If $I \subseteq X \subseteq Z$ and $\rN_a[X] \subseteq X$ (for $a \in \Sigma$) the nfa $\rN \cap X := (I, X, \rN_a |_{X \times X}, F \cap X)$ accepts $L(\rN)$. Then:
        \[
          \begin{tabular}{c}
            $\reach{\rN} := reach(\rN) \; \cap \; \rN$
            \qquad is the \emph{reachable part of $\rN$}.
          \end{tabular}
        \]
        
      \item
        The \emph{coreachable part of $\rN$} also accepts $L(\rN)$:
        \[
          \coreach{\rN} := \rev{\reach{\rev{\rN}}}.
        \]

      \item
      An \emph{nfa isomorphism} $f :\rM \to \rN$ is a bijection $f : Z_\rM \to Z_\rN$ which preserves and reflects the initial states, the final states, and also the transitions. That is:
      \[
        z \in I_\rM \iff f(z) \in I_\rN
        \qquad
        \rM_a(z_1, z_2) :\iff \rN_a(f(z_1), f(z_2))
        \qquad
        z \in F_\rM \iff f(z) \in F_\rN
      \]
      for each $z, z_1, z_2 \in Z$ and $a \in \Sigma$. We may also write $\rM \cong \rN$.

      \item
      Each nfa has an associated \emph{join-semilattice of accepted languages} by varying the initial states:
      \[
        \jslLangs{\rN} := (\{ L(\rN_{@S}) : S \subseteq Z \}, \cup, \emptyset).
      \]
      Equivalently, $\jslLangs{\rN} := \jslLangs{\Det(dep(\rN))}$ is the join-semilattice of languages accepted by the full subset construction -- see Definition \ref{def:jsl_reach_simple}.2 and Note \ref{note:full_subset_construction}.

    \end{enumerate}
  
  \item
  We say $\rN$ is \emph{state-minimal} if there is no nfa accepting $L(\rN)$ with strictly fewer states.
  \endbox

  \end{enumerate}
\end{definition}

\begin{example}[Some small nfas]
  \label{ex:some_nfas}
  \item
  
  \begin{enumerate}
  \item
  $L = a(b + c) + b(a + c) + c(a + b)$ from \cite{NoteMinNfa1992} is a language with two state-minimal nfas.\footnote{Here, $\xto{b,c}$ indicates there is one $b$-labelled edge and another parallel $c$-labelled one. Initial states are indicated by {\tt i}, final states by {\tt o}.} 
  \[
  \begin{tabular}{lll}
  $\xymatrix@=10pt{
  && \bullet \ar[drr]^{b,c}
  \\
  {\tt i} \ar[urr]^a \ar[rr]^b \ar[drr]_c
  && \bullet \ar[rr]_{a,c}
  && {\tt o}
  \\
  && \bullet \ar[urr]_{a,b}
  }$
  &&
  $\xymatrix@=10pt{
  && \bullet \ar[drr]^a
  \\
  {\tt i} \ar[urr]^{b,c} \ar[rr]^>>>>{a,c} \ar[drr]_{a,b}
  && \bullet \ar[rr]_<<<<b
  && {\tt o}
  \\
  && \bullet \ar[urr]_c
  }$
  \end{tabular}
  \]
  
  \item
  $L = {a + aa}$ from \cite{MinNfaBiRFSA2009} is an example of a language which is not `biresidual' \cite{SomeBiresBisep2010,MinNfaBiRFSA2009}. It has $5$ state-minimal nfas:
  \[
  \begin{tabular}{cc}
  $\xymatrix@=5pt{
  {\tt i} \ar[rr]^a \ar`d[r]`[rrrr]_a[rrrr] && {\tt i} \ar[rr]^a && {\tt o} 
  }$
  &
  $\xymatrix@=5pt{
  {\tt i} \ar[rr]^a \ar`d[r]`[rrrr]_a[rrrr] && {\tt o} \ar[rr]^a && {\tt o} 
  }$
  \end{tabular}
  \]
  \[
  \begin{tabular}{ccc}
  $\xymatrix@=5pt{
  {\tt i} \ar[rr]^a  && {\tt i} \ar[rr]^a && {\tt o} 
  }$
  &
  $\xymatrix@=5pt{
  {\tt i} \ar[rr]^a \ar`d[r]`[rrrr]_a[rrrr]  && \bullet \ar[rr]^a && {\tt o} 
  }$
  &
  $\xymatrix@=5pt{
  {\tt i} \ar[rr]^a  && {\tt o} \ar[rr]^a && {\tt o} 
  }$
  \end{tabular}
  \]
  
  \item
  $L = {(ab)^* + (abc)^*}$ has a unique state-minimal nfa shown below left.
  \[
  \begin{tabular}{ccc}
  $\xymatrix@=15pt{
  {\tt io} \ar@/^5pt/[d]^a
  &&
  {\tt io} \ar[dr]^a
  \\
  \bullet \ar@/^5pt/[u]^b
  & \bullet \ar[ur]^c && \bullet \ar[ll]^b
  }$
  &
  \qquad
  \qquad
  &
  $\xymatrix@=15pt{
  {\tt io} \ar[r]^a & \bullet \ar[r]^b &
  {\tt o} \ar[r]^a \ar[d]_c & \bullet \ar@/^3pt/[r]^b & {\tt o} \ar@/^3pt/[l]^a
  \\
  &&
  \bullet \ar[r]^a & \bullet \ar[r]^b & {\tt o} \ar `d[l]`[ll]^c[ll]
  }$
  \end{tabular}
  \]
  Every regular language has a unique state-minimal \emph{partial deterministic} machine, shown for this $L$ above right.
  
  \item
  Consider the language $L_n = {(a + b)^* a (a + b)^n}$ for any $n \geq 0$. If $n = 3$ then the state-minimal nfa with the greatest (resp.\ least) number of transitions is shown below on the left (resp.\ right). 
  \[
  \xymatrix@=15pt{
  {\tt i} \ar@(dl,ul)^{a,b} \ar`d[r]`[rrrr]_-a[rrrr]
  & {\tt o} \ar@{-<}`u[l]`[l][l]
  & \bullet \ar[l]_{a,b} \ar@{-<}`u[l]`[ll][ll]
  & \bullet  \ar[l]_{a,b} \ar@{-<}`u[l]`[lll][lll]
  & \bullet  \ar[l]_{a,b} \ar@(ur,dr)^a  \ar@{-<}`u[l]`[llll]_-{a,b}[llll]
  }
  \qquad
  \qquad
  \xymatrix@=15pt{
  {\tt i} \ar@(ul,ur)^{a,b} \ar[r]^a & \bullet \ar[r]^{a,b} & \bullet \ar[r]^{a,b} & \bullet \ar[r]^{a,b} & {\tt o}
  }
  \]
  Each state-minimal nfa accepting $L_3$ arises by removing transitions from the left machine. One may remove any edge $\bullet \to \src$, and also the rightmost $a$-loop. There is a similar state-minimal machine for any $L_n$ with $n + 2$ nodes and $2 \cdot (n + 1) + 1$ optional transitions, so there are $2^{2n + 3}$ state-minimal nfas accepting $L_n$. On the other hand, the state-minimal partial deterministic automaton accepting $L_n$ has $2^{n + 1}$ nodes. \endbox
  
  \end{enumerate}
  \end{example}

We now recall deterministic finite automata and their associated canonical construction i.e.\ the state-minimal deterministic machine for a regular language.

\begin{definition}[Deterministic finite automaton]
  \label{def:dfas}
  \item
  \begin{enumerate}
    \item 
    A \emph{deterministic finite automaton} (or \emph{dfa}) is an nfa $(I, Z, \rN_a, F)$ where $|I| = 1$ and each $\rN_a$ is a function. We may write them as $\delta =  (i, Z, \delta_a, F)$ where $i \in Z$. For each $w \in \Sigma^*$ we inductively define the endofunction $\delta_w : Z \to Z$ as follows: $\delta_\epsilon := id_Z$ and $\delta_{ua} := \delta_a \circ \delta_u$ for each $(u,a) \in \Sigma^* \times \Sigma$.
    
    \item
    Given a dfa $\delta = (z_0, Z, \delta_a, F)$ accepting $L$ and $u \in \Sigma^*$,
    \[
      L(\delta_{@\delta_u(i)})
      = u^{-1} L := \{ w \in \Sigma^* : uw \in L \}.
    \]
    In other words, the unique $u$-successor of $z_0$ accepts $u^{-1} L$. The latter set is the \emph{left word quotient of $L$ by $u$} and is also known as the Brzozowski derivative \cite{BrzozowskiDRE1964}.

    \item
    Fix any regular $L \subseteq \Sigma^*$ and let $\LW{L} := \{ u^{-1} L : u \in \Sigma^* \}$ be $L$'s \emph{left word quotients}. Then:
    \[
      \minDfa{L} := (L, \LW{L}, \lambda X. a^{-1} X, \{ X \in \LW{L} : \epsilon \in X \})
    \]
    is the \emph{state-minimal dfa accepting $L$}. It is well-defined because $a^{-1} (u^{-1} L) = (ua)^{-1} L$.
    
    \item
    A dfa morphism $f : (x_0, X, \gamma_a, F_X) \to (y_0, Y, \delta_a, F_Y)$ is a function $f: X \to Y$ such that for $a \in \Sigma$:
    \[
      f \circ \delta_a = \gamma_a \circ f
      \qquad
      f(x_0) = y_0
      \qquad
      f^{-1} (F_Y) = F_X.
    \]
    The final condition asserts that the final states are both preserved and reflected, noting that $f^{-1} : \Pow Y \to \Pow X$ is the preimage function. Importantly, dfa morphisms always preserve the accepted language.

    \item
    Each dfa $\delta := (z_0, Z, \delta_a, F)$ has a \emph{dfa of accepted languages}:
    \[
      \simple{\delta} := (L, langs(\delta), \lambda X. a^{-1} X, \{ X \in langs(\delta) : \epsilon \in X \})
      \quad\text{where}\quad
      langs(\delta) := \{ L(\delta_{@z}) : z \in Z \}.
    \]
    There is a surjective dfa morphism $acc_\delta: \delta \epito \simple{\delta}$ defined $acc_\delta(z) := L(\delta_{@z})$ i.e.\ the \emph{acceptance map}. The word \emph{simple} is non-standard yet well-motivated: every surjective dfa morphism $f : \simple{\delta} \epito \gamma$ is bijective.

    \item
    An \emph{ordered dfa} $(p_0, \pP, \delta_a, F)$ consists of a partially ordered set $\pP = (P, \leq_\pP)$ and a dfa $(p_0, P, \delta_a, F)$ whose deterministic transitions are respectively monotonic $\delta_a : \pP \to \pP$. An \emph{ordered dfa morphism}  is a dfa morphism between ordered dfas which is also monotonic.
    \endbox

  \end{enumerate}
\end{definition}

\begin{note}[Concerning $\minDfa{L}$]
  State-minimal dfas are often introduced via Hopcroft's algorithm. One takes the reachable part of a given dfa, afterwards identifying states accepting the same language. The latter uses Hopcroft's partition refinement, essentially constructing the Myhill-Nerode congruence. There are two `representation independent' ways of defining it: (1) as equivalence classes of the Myhill-Nerode congruence $\rMN{L} \subseteq \Sigma^* \times \Sigma^*$ for $L$, (2) as the left word quotients $u^{-1} L$ also known as Brzozowski derivatives \cite{BrzozowskiDRE1964}. \endbox
\end{note}

\begin{note}[Concerning $\simple{\delta}$]
  The dfa $\simple{\delta}$ has no more states than $\delta$. Each state $z$ of the latter accepts $L(\delta_{@z})$ (by definition), as does the state $L(\delta_{@z})$ in $\simple{\delta}$. This construction is defined for dfas but not nfas. However, later we'll introduce a related construction $\simpleIrr{\rN}$ for each nfa $\rN$ -- see Definition \ref{def:nfa_irr_simplification}.
  \endbox
\end{note}

\smallskip

We now introduce dependency automata i.e.\ two nfas compatible w.r.t.\ a bipartitioned graph.

\begin{definition}[Dependency automaton]
  \label{def:dep_aut}
  A dependency automaton is a triple $(\rN, \rG, \rN')$ where:
  \begin{enumerate}
    \item
    $\rG \subseteq \rG_s \times \rG_t$ is a binary relation (bipartitioned graph).
    \item
    $\rN := (I_\rN, \rG_s, \rN_a, F_\rN)$ is an nfa over the lower bipartition.
    \item
    $\rN' := (I_{\rN'}, \rG_t, \rN'_a, F_{\rN'})$ is an nfa over the upper bipartition.
    \item
    $\rN_a ; \rG = \rG ; (\rN_a')\spbreve$ for each $a \in \Sigma$.
    \item
    $F_{\rN'} = \rG[I_\rN]$ and $F_\rN = \breve{\rG}[I_{\rN'}]$.
  \end{enumerate}

  Condition (4) induces $\Dep$-endomorphisms which we denote by $\rN_a^\dagger : \rG \to \rG$ for each $a \in \Sigma$.
  A dependency automaton $(\rN, \rG, \rN')$ \emph{accepts} the language $L(\rN, \rG, \rN') := L(\rN)$ i.e.\ the language accepted by the lower nfa.
  \endbox
\end{definition}

Each nfa induces a dependency automaton with only linear blowup.

\begin{definition}[Nfa's associated dependency automaton]
  \label{def:nfa_to_dep_aut}
  Given an nfa $\rN$ with states $Z$,
  \[
    dep(\rN) := (\rN, \Delta_Z, \rev{\rN})
  \]
  is its associated dependency automaton. \endbox
\end{definition}

For well-definedness consider Definition \ref{def:dep_aut} when $\rG = \Delta_Z$. Then (4) amounts to taking the converse relation and (5) to swapping the initial/final states. So each nfa can be viewed as a dependency automaton. Very importantly, each regular language has an associated dependency automaton too.

\begin{definition}[Canonical dependency automaton]
  \label{def:canon_dep_aut}
  Given a regular language $L \subseteq \Sigma^*$,
  \[
    dep(L) := (\minDfa{L}, \rDR{L} , \minDfa{L^r})
    \qquad \text{where} \qquad
    \rDR{L} (u^{-1} L, v^{-1} L^r) :\iff uv^r \in L
  \]
  is the respective \emph{canonical dependency automaton}.
  \endbox
\end{definition}

\begin{example}[$\dep{L}$]
  If $L = {\tt a + aa}$ so $L = L^r$ then $\minDfa{L} = \minDfa{L^r}$ is $\xymatrix@=10pt{ \src \ar[r]^a & \snk \ar[r]^a & \snk }$ excluding the sink. The canonical dependency automaton takes the form:
  \[
  \xymatrix@=10pt{
  \snk \ar@{<-}[r]^a  & \snk \ar@{<-}[r]^a & \src
  \\
  aa^{-1} L  \ar@{=}[u] & a^{-1} L \ar@{=}[u] & L \ar@{=}[u]
  \\
  *+[F-]{L} \ar@{=}[d] \ar@{-}[u] \ar@{-}[ur] & *+[F-]{a^{-1}L} \ar@{=}[d] \ar@{-}[u] \ar@{-}[ur] & *+[F-]{aa^{-1}L} \ar@{=}[d] \ar@{-}[u]
  \\
  \src \ar[r]^a  & \snk \ar[r]^a & \snk
  }
  \]
  excluding the sink state from the top and bottom (which are isolated in $\rDR{L}$). \endbox
\end{example}

\begin{lemma}
  \label{lem:canon_dep_aut_well_defined}
  $dep(L)$ is a well-defined dependency automaton.
\end{lemma}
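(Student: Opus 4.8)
The plan is to verify the well-definedness of the relation $\rDR{L}$ itself and then to check the five clauses of Definition~\ref{def:dep_aut} for the triple $(\minDfa{L}, \rDR{L}, \minDfa{L^r})$. Since $\rDR{L}$ is specified on representatives by $\rDR{L}(u^{-1}L, v^{-1}L^r) :\iff uv^r \in L$, the first task is to show that the truth of $uv^r \in L$ depends only on the sets $u^{-1}L$ and $v^{-1}L^r$, not on the chosen words. For the left argument this is immediate, since $uv^r \in L \iff v^r \in u^{-1}L$; for the right argument I would invoke the reversal identity $w \in L \iff w^r \in L^r$ to rewrite $uv^r \in L \iff vu^r \in L^r \iff u^r \in v^{-1}L^r$, which sees $v$ only through $v^{-1}L^r$. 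These two rewritings also yield representative-free descriptions of $\rDR{L}$ that streamline the remaining clauses. Clauses~(1)--(3) are then pure bookkeeping: $\rDR{L}$ is a binary relation whose domain $\LW{L}$ and codomain $\LW{L^r}$ are exactly the state sets of the nfas $\minDfa{L}$ over the lower bipartition and $\minDfa{L^r}$ over the upper one.

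For clause~(4) I would expand both sides of $(\minDfa{L})_a ; \rDR{L} = \rDR{L} ; ((\minDfa{L^r})_a)\spbreve$ as relations from $\LW{L}$ to $\LW{L^r}$ and check that each relates $u^{-1}L$ to $v^{-1}L^r$ exactly when $uav^r \in L$. The transition $(\minDfa{L})_a$ is the function $X \mapsto a^{-1}X$, so the left-hand composite relates $u^{-1}L$ to $Y$ iff $\rDR{L}((ua)^{-1}L, Y)$, i.e.\ iff $uav^r \in L$ when $Y = v^{-1}L^r$. The converse $((\minDfa{L^r})_a)\spbreve$ relates $(va)^{-1}L^r$ back to $v^{-1}L^r$, so the right-hand composite relates $u^{-1}L$ to $v^{-1}L^r$ iff $\rDR{L}(u^{-1}L, (va)^{-1}L^r)$, i.e.\ iff $u(va)^r \in L$; and $(va)^r = a v^r$ since a single letter reverses to itself, so this too is $uav^r \in L$. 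Hence the two composites coincide, which in particular supplies the $\Dep$-endomorphisms $(\minDfa{L})_a^\dagger$ recorded after clause~(4) of Definition~\ref{def:dep_aut}.

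For clause~(5), the initial state set of $\minDfa{L}$ is $\{L\} = \{\epsilon^{-1}L\}$, so $\rDR{L}[\{L\}] = \{v^{-1}L^r : \epsilon v^r \in L\} = \{v^{-1}L^r : v^r \in L\}$; and since $v^r \in L \iff v \in L^r \iff \epsilon \in v^{-1}L^r$, this equals $\{Y \in \LW{L^r} : \epsilon \in Y\} = F_{\minDfa{L^r}}$. Symmetrically, the initial state set of $\minDfa{L^r}$ is $\{L^r\} = \{\epsilon^{-1}L^r\}$, and $(\rDR{L})\spbreve[\{L^r\}] = \{u^{-1}L : u\epsilon^r \in L\} = \{u^{-1}L : u \in L\} = \{X \in \LW{L} : \epsilon \in X\} = F_{\minDfa{L}}$, using $u \in L \iff \epsilon \in u^{-1}L$.

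I expect the only real friction to be in clause~(4): one must keep the direction of the converse $((\minDfa{L^r})_a)\spbreve$ straight and, crucially, not drop the word reversal $(va)^r = a v^r$ when a letter is moved from the lower side of $\rDR{L}$ to the upper side. Beyond that, the entire argument reduces to the two elementary identities $w \in L \iff w^r \in L^r$ and $uv^r \in L \iff v^r \in u^{-1}L \iff u^r \in v^{-1}L^r$, which also underwrite the well-definedness step.
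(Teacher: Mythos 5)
Your proposal is correct and follows essentially the same route as the paper: both verify clause~(4) by computing the image of $u^{-1}L$ under the two composites and reducing each to the condition $uav^r \in L$ (via $(va)^r = av^r$), and clause~(5) by the equivalences $v^r \in L \iff \epsilon \in v^{-1}L^r$ and $u \in L \iff \epsilon \in u^{-1}L$. Your preliminary check that $\rDR{L}$ is independent of the chosen representatives $u$ and $v$ is a small addition the paper leaves implicit, and it is a worthwhile one.
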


\begin{proof}
  Concerning (4),
  \[
    (\lambda X. a^{-1} X) ; \rDR{L} [u^{-1} L]
    = \rDR{L}[(ua)^{-1} L]
    = \{ v^{-1} L^r : v \in \Sigma^* , \; uav^r \in L \}
  \]
  \[
    \begin{tabular}{lll}
      $\rDR{L} ; (\lambda X. a^{-1} X)\spbreve [u^{-1} L]$
      &
      $= (\lambda X. a^{-1} X)\spbreve [\{ v^{-1} L^r : v \in \Sigma^* , \; uv^r \in L  \}]$
      \\ &
      $= \{ v^{-1} L^r : v \in \Sigma^* , \; u(va)^r \in L \}$
      \\ &
      $= \{ v^{-1} L^r : v \in \Sigma^* , \; uav^r \in L \}$.
    \end{tabular}
  \]
  Concerning (5),
  \[
    v^{-1} L^r \in \rDR{L}[L]
    \iff v^r \in L
    \iff v \in L^r
    \iff \epsilon \in v^{-1} L^r
  \]
  \[
    u^{-1} L \in (\rDR{L})\spbreve[L^r]
    \iff u \in L
    \iff \epsilon \in u^{-1} L.
  \]
\end{proof}

In both the classes of examples so far, the upper nfa accepts a word iff the lower nfa accepts its reverse. This situation holds generally for all dependency automata.

\begin{lemma}
  \label{lem:upper_nfa_accepts_reverse}
  If $(\rN, \rG, \rN')$ is a dependency automaton then $L(\rN') = (L(\rN))^r$.
\end{lemma}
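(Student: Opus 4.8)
The plan is to reduce everything to relational-calculus bookkeeping. First I would extend the single-letter compatibility condition (4) to arbitrary words, in a \emph{reversed} form; then I would unfold the two languages into statements about nonempty intersections of subsets of $\rG_s$ and $\rG_t$, and observe that the word-level law, together with the final-state conditions (5), identifies them.

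\textbf{Step 1: a reversed word law.} I claim $\rN_w ; \rG = \rG ; (\rN'_{w^r})\spbreve$ for every $w \in \Sigma^*$, to be proved by induction on $|w|$. The base case $w = \epsilon$ holds because $\rN_\epsilon = \Delta_{\rG_s}$, $\rN'_\epsilon = \Delta_{\rG_t}$, and $\Delta$ is self-converse. For $w = va$ one computes, using $\rN_{va} = \rN_v ; \rN_a$, the induction hypothesis, condition (4), the anti-homomorphism law $(\rR ; \rS)\spbreve = \rS\spbreve ; \rR\spbreve$, and the identity $(va)^r = a v^r$:
\[
  \rN_{va} ; \rG
  = \rN_v ; (\rN_a ; \rG)
  = (\rN_v ; \rG) ; (\rN'_a)\spbreve
  = \rG ; \big((\rN'_{v^r})\spbreve ; (\rN'_a)\spbreve\big)
  = \rG ; (\rN'_{a v^r})\spbreve
  = \rG ; (\rN'_{(va)^r})\spbreve.
\]

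\textbf{Step 2: comparing the two languages.} Using $F_{\rN'} = \rG[I_\rN]$, membership $w \in L(\rN')$ unfolds to $\rN'_w[I_{\rN'}] \cap \rG[I_\rN] \neq \emptyset$; using $F_\rN = \breve{\rG}[I_{\rN'}]$, membership $w^r \in L(\rN)$ unfolds to $\rN_{w^r}[I_\rN] \cap \breve{\rG}[I_{\rN'}] \neq \emptyset$. Spelling the latter out elementwise, it asserts the existence of $i \in I_\rN$ and $j \in I_{\rN'}$ with $(\rN_{w^r} ; \rG)(i, j)$. Now I apply Step 1 to the word $w^r$, so that $(w^r)^r = w$ and $\rN_{w^r} ; \rG = \rG ; (\rN'_w)\spbreve$; spelling \emph{that} out elementwise, the condition becomes: there exist $i \in I_\rN$, $j \in I_{\rN'}$ and some $g_t$ with $g_t \in \rG[i]$ and $g_t \in \rN'_w[j]$, i.e. $\rG[I_\rN] \cap \rN'_w[I_{\rN'}] \neq \emptyset$ --- exactly the condition for $w \in L(\rN')$. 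Hence $w \in L(\rN') \iff w^r \in L(\rN)$, which is $L(\rN') = (L(\rN))^r$.

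\textbf{Where the work is.} There is no genuine obstacle; the content sits entirely in Step 1, and the only thing demanding care is the reversal/converse bookkeeping --- that $\rN_w ; \rG$ pairs with the converse of $\rN'_{w^r}$ rather than $\rN'_w$, and that converse reverses the order of composition. It is also worth recording the symmetry that makes condition (5) natural: the triple $(\rN', \breve{\rG}, \rN)$ is again a dependency automaton (condition (4) becomes its converse, and the two equations of condition (5) are literally unchanged), so the pair of final-state equalities in (5) are precisely the two orientations of the $w = \epsilon$ instance of the correspondence established above.
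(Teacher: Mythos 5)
Your proof is correct and follows essentially the same route as the paper's: lift the single-letter compatibility condition (4) to arbitrary words, then unfold the two acceptance conditions using the final-state equalities (5). In fact your Step 1 is more careful than the paper's own proof, which states the word-level law as $\rN_w ; \rG = \rG ; (\rN'_w)\spbreve$ without the reversal --- the reversed form $\rN_w ; \rG = \rG ; (\rN'_{w^r})\spbreve$ that you prove by induction is the correct one, and it is exactly what makes the final identification $w \in L(\rN') \iff w^r \in L(\rN)$ (equivalently, $w \in L(\rev{\rN})$) go through.
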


\begin{proof}
  Since $\rN_a ; \rG = \rG ; (\rN'_a)\spbreve$ we have $\rN_a^\dagger : \rG \to \rG$ and 
  composing yields $\rN_w ; \rG = \rG ; (\rN'_w)\spbreve$ for $w \in \Sigma^*$. Then:
  \[
    \begin{tabular}{lll}
      $w \in L(\rN')$
      & $\iff \rN'_w [I_{\rN'}] \cap F_{\rN'} \neq \emptyset$
      \\ &
      $\iff \rN'_w [I_{\rN'}] \nsubseteq \overline{F_{\rN'}} = \overline{\rG[I_\rN]}$
      & (by def.)
      \\ &
      $\iff \rN'_w [I_{\rN'}] \nsubseteq \breve{\rG}^\down(I_\rN)$
      & $(\neg\up\neg)$
      \\ &
      $\iff \rN'_w ; \breve{\rG} [I_{\rN'}] \nsubseteq I_\rN$
      & (adjoints)
      \\ &
      $\iff \breve{\rG} ; \rN_w\spbreve [I_{\rN'}] \nsubseteq I_\rN$
      & (see above)
      \\ &
      $\iff \rN_w\spbreve [F_{\rN}] \nsubseteq I_\rN$
      & (by def.)
      \\ &
      $\iff w \in L(\rev{\rN})$
      \\ &
      $\iff w \in (L(\rN))^r$.
    \end{tabular}
  \]
\end{proof}

\begin{definition}[The category $\AutDep$]
  \label{def:cat_aut_dep}
  Its objects are the dependency automata. Take any two of them:
  \[
    \begin{tabular}{lll}
      $(\rM, \rF, \rM')$ & where $\rM = (I_\rM,\rF_s,\rM_a,F_\rM)$ &and $\rM' = (I'_\rM,\rF_t,\rM'_a,F'_\rM)$
      \\
      $(\rN, \rG, \rN')$ & where $\rN = (I_\rN,\rG_s,\rN_a,F_\rN)$ &and $\rM' = (I'_\rN,\rG_t,\rN'_a,F'_\rN)$.
    \end{tabular}
  \]
  An $\AutDep$-morphism $\rR : (\rM, \rF, \rM') \to (\rN, \rG, \rN')$ is a $\Dep$-morphism $\rR : \rF \to \rG$ such that for each $a \in \Sigma$,
  \[
    \rM_a ; \rR = \rR ; (\rN'_a)\spbreve
    \qquad
    \rR[I_\rM] = F_{\rN'}
    \qquad
    \breve{\rR}[I_{\rN'}] = F_\rM.
  \]
  Composition is inherited from $\Dep$.
  The leftmost condition can be written $\rM_a^\dagger \fatsemi \rR = \rR \fatsemi \rN_a^\dagger$.
  \endbox
\end{definition}

\begin{lemma}
  $\AutDep$ is a well-defined category.
\end{lemma}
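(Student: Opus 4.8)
The plan is to leverage the fact that identities and composition in $\AutDep$ are inherited verbatim from $\Dep$, which has already been shown to be a category. Consequently associativity and the unit laws are automatic, and it only remains to check that (i) for each dependency automaton $(\rN,\rG,\rN')$ the $\Dep$-identity $id_\rG = \rG$ is an $\AutDep$-endomorphism of it, and (ii) $\AutDep$-morphisms are closed under $\fatsemi$.

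First I would dispatch the identities. Fixing $(\rN,\rG,\rN')$, the three conditions demanded of an $\AutDep$-endomorphism with underlying $\Dep$-morphism $\rG$ read $\rN_a ; \rG = \rG ; (\rN'_a)\spbreve$ for each $a \in \Sigma$, and $\rG[I_\rN] = F_{\rN'}$ and $\breve{\rG}[I_{\rN'}] = F_\rN$. These are precisely conditions (4) and (5) of Definition \ref{def:dep_aut}, the first read in the form $\rN_a^\dagger \fatsemi \rG = \rN_a ; \rG = \rG \fatsemi \rN_a^\dagger$ (both composites being $\rN_a^\dagger$ itself, since $\rN_a$ and $\rN'_a$ are a lower and an upper witness for $\rN_a^\dagger$). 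Hence $id_\rG$ is always an $\AutDep$-morphism, and in particular the $\Dep$-identities do supply identities in $\AutDep$.

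Next I would verify closure under composition. Let $\rR : (\rM,\rF,\rM') \to (\rN,\rG,\rN')$ and $\rS : (\rN,\rG,\rN') \to (\rK,\rH,\rK')$ be $\AutDep$-morphisms, so $\rR \fatsemi \rS : \rF \to \rH$ in $\Dep$, and the three conditions must be checked for $\rR \fatsemi \rS$. The transition condition is the painless one: written in the form $\rM_a^\dagger \fatsemi (\rR \fatsemi \rS) = (\rR \fatsemi \rS) \fatsemi \rK_a^\dagger$, it follows from associativity of $\fatsemi$ together with the transition laws $\rM_a^\dagger \fatsemi \rR = \rR \fatsemi \rN_a^\dagger$ and $\rN_a^\dagger \fatsemi \rS = \rS \fatsemi \rK_a^\dagger$, namely
\[
  \rM_a^\dagger \fatsemi \rR \fatsemi \rS = \rR \fatsemi \rN_a^\dagger \fatsemi \rS = \rR \fatsemi \rS \fatsemi \rK_a^\dagger
\]
for every $a \in \Sigma$. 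For the two state conditions I would fix a lower witness $\rR_l$ for $\rR$ and an upper witness $\rS_u$ for $\rS$ and use the equivalent descriptions $\rR \fatsemi \rS = \rR ; (\rS_u)\spbreve = \rR_l ; \rS$ from Definition \ref{def:the_cat_dep}, the commuting squares $\rR_l ; \rG = \rR$ and $\rG ; (\rS_u)\spbreve = \rS$, the state conditions for $\rR$ and $\rS$, and condition (5) of the middle automaton $(\rN,\rG,\rN')$:
\[
  (\rR \fatsemi \rS)[I_\rM] = (\rS_u)\spbreve[\rR[I_\rM]] = (\rS_u)\spbreve[\rG[I_\rN]] = \rS[I_\rN] = F_{\rK'},
\]
\[
  (\rR \fatsemi \rS)\spbreve[I_{\rK'}] = \breve{\rR_l}[\breve{\rS}[I_{\rK'}]] = \breve{\rR_l}[\breve{\rG}[I_{\rN'}]] = \breve{\rR}[I_{\rN'}] = F_\rM,
\]
where the middle steps invoke $\rR[I_\rM] = F_{\rN'} = \rG[I_\rN]$ and $\breve{\rS}[I_{\rK'}] = F_\rN = \breve{\rG}[I_{\rN'}]$. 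This exhibits $\rR \fatsemi \rS$ as an $\AutDep$-morphism, and well-definedness is complete.

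I do not expect a genuine obstacle: the only care needed is the usual bookkeeping of domains and codomains, and remembering that $\rR \fatsemi \rS$ has five interchangeable relational descriptions so one may pick the one convenient for pushing a set of initial states forwards or backwards. The single genuinely useful observation is that rewriting the transition-compatibility condition as $\rN_a^\dagger \fatsemi (-) = (-) \fatsemi \rK_a^\dagger$ reduces it to pure associativity of $\fatsemi$, with no manipulation of witnesses at all.
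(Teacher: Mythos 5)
Your proof is correct and follows essentially the same route as the paper: identities hold because conditions (4)--(5) of the dependency-automaton definition are exactly the $\AutDep$-endomorphism conditions for $\rG$, transition compatibility reduces to associativity of $\fatsemi$, and the two state conditions are chased through the middle automaton. The only cosmetic difference is that the paper uses the maximum witnesses $\rS_+$, $\rR_-$ (and the self-duality functor for the second computation) where you use arbitrary witnesses and the five equivalent descriptions of $\fatsemi$ -- the computations are the same.
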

\begin{proof}
  Identity morphisms $id_{(\rN, \rG, \rN')} := id_\rG = \rG$ are well-defined.
  Indeed, the conditions concerning dependency automata state precisely that $\rG$ is an $\AutDep$-endomorphism of $(\rN, \rG, \rN')$. It remains to verify that compatible $\AutDep$-morphisms are closed under $\Dep$-composition.
  To this end, take a dependency automaton $(\rO, \rH, \rO')$ where $\rO = (I_\rO, \rH_s, \rO_a, F_\rO)$ and
  $\rO' = (I'_\rH, \rH_t, \rO'_a, F'_\rH)$ and also a morphism $\rS : (\rN, \rG, \rN') \to (\rO, \rH, \rO')$.
  The $\AutDep$-morphisms inform us that $\rM_a^\dagger \fatsemi \rR = \rR \fatsemi \rN_a^\dagger$ and $\rN_a^\dagger \fatsemi \rS = \rS \fatsemi \rN_a^\dagger$,
  so that:
  \[
    \rM_a^\dagger \fatsemi (\rR \fatsemi \rS)
    = (\rM_a^\dagger \fatsemi \rR) \fatsemi \rS
    = (\rR \fatsemi \rN_a^\dagger) \fatsemi \rS
    = \rR \fatsemi (\rN_a^\dagger \fatsemi \rS)
    = \rR \fatsemi (\rS \fatsemi \rO_a^\dagger)
    = (\rR \fatsemi \rS) \fatsemi \rO_a^\dagger.
  \]
  Finally,
  \[
    \begin{tabular}{lll}
      $\rR \fatsemi \rS [I_\rM]$
      & $= \rR ; \rS_+\spbreve [I_\rM]
        = \rS_+\spbreve [\rR[I_\rM]]
        = \rS_+\spbreve [F'_\rN]$
      & (def.\ of $\Dep$ and $\AutDep$)
      \\ & $= \rS_+\spbreve [\rG[I_\rN]]
            = \rG; \rS_+\spbreve [I_\rN]
            = \rS [I_\rN] = F'_\rO$.
      & (def.\ of $\Dep$ and $\AutDep$)
      \\
      \\
      $(\rR \fatsemi \rS)\spbreve [I'_\rO]$
      & $= (\rR \fatsemi \rS)\spcheck [I'_\rO]$
      & (def.\ of $(-)\spcheck$)
      \\ & $= \rS\spcheck \fatsemi \rR\spcheck [I'_\rO]$
      & (functoriality)
      \\ & $= \breve{\rS} ; (\rR\spcheck)_+\spbreve [I'_\rO]$
      & ($\Dep$-composition)
      \\ & $= \breve{\rS} ; \rR_-\spbreve [I'_\rO]$
      \\ & $= \rR_-\spbreve [\breve{\rS}[I'_\rO]] = \rR_-\spbreve [F_\rN]$
      & ($\Dep$-composition)
      \\ &
      $= \rR_-\spbreve [\breve{\rG}[I'_\rN]]
      = (\rR_- ; \rG)\spbreve [I'_\rN]
      = \breve{\rR}[I_\rN]
      = F_\rM$
      & (def.\ of $\AutDep$ and $\Dep$).
    \end{tabular}
  \]
\end{proof}

\begin{theorem}[Self-duality of $\AutDep$]
  \label{thm:aut_dep_self_dual}
  We have the self-duality functor $\Rev : \AutDep^{op} \to \AutDep$:
  \[
    \Rev(\rN, \rG, \rN') := (\rN', \breve{\rG}, \rN)
    \qquad
    \Rev \rR := \rR\spcheck.
  \]
  recalling the self-duality of $\Dep$ from Theorem \ref{thm:aut_dep_self_dual}.
\end{theorem}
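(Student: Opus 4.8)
The plan is to lift the self-duality $(-)\spcheck : \Dep^{op} \to \Dep$ of Theorem~\ref{def:bicliq_self_duality} levelwise along the extra automaton data, so that essentially no new computation is required beyond bookkeeping. Concretely, I would first check that $\Rev$ is well-defined on objects: given a dependency automaton $(\rN, \rG, \rN')$, I verify the five clauses of Definition~\ref{def:dep_aut} for $(\rN', \breve{\rG}, \rN)$. Clauses (1)--(3) are immediate, since $\breve{\rG}_s = \rG_t$ and $\breve{\rG}_t = \rG_s$, so $\rN'$ sits over the new lower bipartition and $\rN$ over the new upper one. For (4), I take the relational converse of $\rN_a ; \rG = \rG ; (\rN'_a)\spbreve$, obtaining exactly $\rN'_a ; \breve{\rG} = \breve{\rG} ; (\rN_a)\spbreve$; in fact this identifies the induced $\Dep$-endomorphism of $\breve{\rG}$ with $(\rN_a^\dagger)\spcheck$, using the action of $(-)\spcheck$ on morphisms. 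For (5), the two equations $F_{\rN'} = \rG[I_\rN]$ and $F_\rN = \breve{\rG}[I_{\rN'}]$ are literally the two equations demanded of $(\rN', \breve{\rG}, \rN)$, with the roles of ``lower'' and ``upper'' interchanged.

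Next I would check well-definedness on morphisms. Given $\rR : (\rM,\rF,\rM') \to (\rN,\rG,\rN')$ in $\AutDep$, I must show $\Rev\rR = \rR\spcheck = \breve{\rR}$ is an $\AutDep$-morphism $(\rN',\breve{\rG},\rN) \to (\rM',\breve{\rF},\rM)$. That $\breve{\rR} : \breve{\rG} \to \breve{\rF}$ is a $\Dep$-morphism, with correctly swapped component relations $(\rR\spcheck)_- = \breve{\rR_+}$ and $(\rR\spcheck)_+ = \breve{\rR_-}$, is precisely the content of Theorem~\ref{def:bicliq_self_duality}. The three conditions of Definition~\ref{def:cat_aut_dep} then transfer by taking converses and reading that definition ``backwards'': the converse of $\rM_a ; \rR = \rR ; (\rN'_a)\spbreve$ is $\rN'_a ; \breve{\rR} = \breve{\rR} ; (\rM_a)\spbreve$, which is the transition condition for $\breve{\rR}$; and the two boundary conditions $\rR[I_\rM] = F_{\rN'}$ and $\breve{\rR}[I_{\rN'}] = F_\rM$ are exactly the two boundary conditions required of $\breve{\rR}$, with the initial-state sets of the two automata swapping their roles.

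It then remains to confirm functoriality and the witnessing natural isomorphism. Functoriality is inherited: $\Rev(id_{(\rN,\rG,\rN')}) = \Rev(\rG) = \breve{\rG} = id_{(\rN',\breve{\rG},\rN)}$, and since $\AutDep$-composition is inherited from $\Dep$, $\Rev((\rS\fatsemi\rR)^{op}) = (\rS\fatsemi\rR)\spcheck = \rR\spcheck \fatsemi \rS\spcheck = \Rev(\rR^{op})\fatsemi\Rev(\rS^{op})$ by functoriality of $(-)\spcheck$ on $\Dep$. Because $\breve{\breve{\rG}} = \rG$ and $\breve{\breve{\rR}} = \rR$ on the nose, $\Rev \circ \Rev^{op} = \Id_{\AutDep}$; the witnessing natural isomorphism $\alpha_{(\rN,\rG,\rN')} := id_{(\rN,\rG,\rN')} = \rG$ is just the one lifted from the $\Dep$-level $\alpha$, and naturality holds trivially. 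Hence $\Rev$ is a self-duality.

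I expect the only real point of care --- the ``main obstacle'', though it is minor --- to be the bookkeeping: keeping straight which of $\rN, \rN'$ plays the lower and which the upper role after reversal, and verifying that the relational converse operation lines up with the initial/final-state swap already built into clause (5) of Definition~\ref{def:dep_aut} and the two boundary conditions of Definition~\ref{def:cat_aut_dep}. Once one observes that these definitions are phrased symmetrically in the data $(\rN, I_\rN, F_\rN)$ versus $(\rN', I_{\rN'}, F_{\rN'})$ up to a single application of $\breve{(-)}$, everything follows from the established self-duality of $\Dep$ with no fresh calculation.
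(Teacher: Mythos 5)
Your proposal is correct and follows the same route as the paper's (much terser) proof: take the relational converse of condition (4), observe that (5) and the two boundary conditions of Definition~\ref{def:cat_aut_dep} are symmetric under swapping the lower/upper nfa, and inherit functoriality and the identity-component natural isomorphism from the self-duality $(-)\spcheck$ of $\Dep$. The extra bookkeeping you spell out for morphisms is accurate and matches what the paper leaves implicit.
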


\begin{proof}
  Its action on objects is well-defined: (4) holds because we know $\rN_a ; \rG = \rG ; (\rN'_a)\spbreve$ and hence $\rN'_a ; \breve{\rG} = \breve{\rG} ; (\rN_a)\spbreve$; (5) holds because it is invariant under swapping the lower/upper nfa. Its action on morphisms is well-defined by a similar argument, recalling that $(\rR : \rF \to \rG)\spcheck := \breve{\rR} : \breve{\rG} \to \breve{\rF}$. Then it is a functor because $(-)\spcheck$ is. It is an equivalence functor for the same reason.
\end{proof}

Next we specify a way in which dependency automata can be isomorphic i.e.\ via distinct pairs of witnessing relations $(\rN_a, \rN'_a)$ of the same $\Dep$-endomorphism $\rN_a ; \rG =: \rN_a^\dagger := \rG ; (\rN'_a)\spbreve$. In other words, there can be \emph{too few transitions} relative to the inclusion-maximal components $(\rN_a^\dagger)_-$ and $(\rN_a^\dagger)_+$. There can also be too few initial/final states (these sets correspond to $\Dep$-morphisms too).

\begin{proposition}[$\AutDep$ transition-based isomorphisms]
  \label{prop:aut_dep_transition_isoms}
  Given $(\rN, \rG, \rN')$ and $(\rM, \rG, \rM')$ such that:
  \[
    \rN_a ; \rG = \rM_a ; \rG \quad (\text{for } a \in \Sigma)
    \qquad
    \rG [I_\rN] = \rG [I_\rM]
    \qquad
    \breve{\rG} [I_{\rN'}] = \breve{\rG} [I_{\rM'}]
  \]
  then $id_\rG = \rG : (\rN, \rG, \rN') \to (\rM, \rG, \rM')$ is an $\AutDep$-isomorphism.
\end{proposition}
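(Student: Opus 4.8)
The plan is to show that the single $\Dep$-identity $id_\rG = \rG$ underlies an $\AutDep$-morphism in \emph{both} directions between $(\rN,\rG,\rN')$ and $(\rM,\rG,\rM')$, and then to observe that these two morphisms are mutually inverse simply because $id_\rG \fatsemi id_\rG = id_\rG$ in $\Dep$. So there is nothing to compute beyond checking the three defining conditions of an $\AutDep$-morphism.

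First I would verify that $\rG : (\rN,\rG,\rN') \to (\rM,\rG,\rM')$ is an $\AutDep$-morphism in the sense of Definition \ref{def:cat_aut_dep}. Its underlying arrow $\rG : \rG \to \rG$ is the $\Dep$-identity, hence a $\Dep$-morphism. For the transition condition one needs $\rN_a ; \rG = \rG ; (\rM'_a)\spbreve$; this is immediate from the hypothesis $\rN_a ; \rG = \rM_a ; \rG$ combined with condition~(4) of Definition \ref{def:dep_aut} for the dependency automaton $(\rM,\rG,\rM')$, namely $\rM_a ; \rG = \rG ; (\rM'_a)\spbreve$. For the initial/final conditions one needs $\rG[I_\rN] = F_{\rM'}$ and $\breve{\rG}[I_{\rM'}] = F_\rN$: the first follows from the hypothesis $\rG[I_\rN] = \rG[I_\rM]$ together with condition~(5) for $(\rM,\rG,\rM')$, which gives $F_{\rM'} = \rG[I_\rM]$; the second follows from condition~(5) for $(\rN,\rG,\rN')$, which gives $F_\rN = \breve{\rG}[I_{\rN'}]$, together with the hypothesis $\breve{\rG}[I_{\rN'}] = \breve{\rG}[I_{\rM'}]$.

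Next, since all three hypotheses are symmetric under exchanging $(\rN,\rN')$ with $(\rM,\rM')$, the identical argument shows that $\rG : (\rM,\rG,\rM') \to (\rN,\rG,\rN')$ is also an $\AutDep$-morphism, so no separate computation is required for the converse direction. Finally, composition in $\AutDep$ is inherited from $\Dep$, where $\rG = id_\rG$ is the identity on the object $\rG$; hence $id_\rG \fatsemi id_\rG = id_\rG$, and since $id_\rG$ is precisely the identity $\AutDep$-endomorphism of each of $(\rN,\rG,\rN')$ and $(\rM,\rG,\rM')$ (by the definition of identities in $\AutDep$), the two morphisms above are mutually inverse. Therefore $id_\rG = \rG$ is an $\AutDep$-isomorphism. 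The only point requiring care is bookkeeping: keeping track of which instance of axioms~(4) and~(5) is applied to which of the two dependency automata, and noticing the symmetry that trivialises the inverse direction.
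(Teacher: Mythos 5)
Your proposal is correct and follows essentially the same route as the paper's own proof: verify the three $\AutDep$-morphism conditions for $id_\rG=\rG$ in one direction using the hypotheses together with axioms (4) and (5) of the two dependency automata, invoke the symmetry of the hypotheses for the reverse direction, and conclude via $id_\rG \fatsemi id_\rG = id_\rG$. Your write-up is in fact slightly more careful than the paper's about which instance of each axiom is being applied.
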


\begin{proof}
  $\rG$ certainly defines a $\Dep$-morphism $id_\rG := \rG : \rG \to \rG$. It is an $\AutDep$-morphism $\rG : (\rN, \rG, \rN') \to (\rM, \rG, \rM')$ because:
  \[
    \begin{tabular}{lll}
      $\rM_a ; \rG$
      & $= \rN_a ; \rG$
      & (by assumption)
      \\
      & $=\rG ; \rN_a\spbreve$
      & ($(\rN, \rG, \rN')$ a dependency automaton)
    \end{tabular}
  \]
  and similarly $\rG[I_\rM] = \rG[I_\rM] = F_{\rM'}$, $\breve{\rG}[I_{\rN'}] = {\rG}[I_{\rM'}] = F_{\rM}$. By a symmetric argument we infer $\rG : (\rM, \rG, \rM') \to (\rN, \rG, \rN')$ is well-defined, and also the inverse because $id_\rG \fatsemi id_\rG = id_\rG$ in $\Dep$.
\end{proof}

\begin{proposition}[Polytime canonical dependency automaton]
  \label{prop:polytime_canon_dep_automaton}
  Given dfas $\alpha$, $\beta$ s.t.\ $L(\beta) = (L(\alpha))^r$ one can build $L(\alpha)$'s canonical dependency automaton in polytime.
\end{proposition}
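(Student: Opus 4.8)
The plan is to assemble the three components of $dep(L) = (\minDfa{L}, \rDR{L}, \minDfa{L^r})$ separately, writing $L := L(\alpha)$. For the two dfas there is nothing new: $\minDfa{L}$ is obtained from $\alpha$ by first restricting to the reachable part (a single graph search) and then quotienting by the Myhill--Nerode congruence via Hopcroft's partition refinement, and likewise $\minDfa{L^r}$ is obtained from $\beta$; both steps are classical polynomial-time constructions, and by Definition \ref{def:dfas}.3 the resulting machines are isomorphic to the canonical ones, with the state reached by a word $u$ corresponding to the left quotient $u^{-1}L$ (resp.\ $u^{-1}L^r$). It is essential here that $\beta$ is supplied as input: computing $\minDfa{L^r}$ from $\alpha$ alone would pass through $\rsc{\rev{\alpha}}$, which can be exponentially larger.

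The only real work is filling in the dependency relation $\rDR{L} \subseteq \LW{L} \times \LW{L^r}$. The key observation is that, although $\rDR{L}(u^{-1}L, v^{-1}L^r) :\iff uv^r \in L$ is phrased via representative words $u,v$, it is in fact well-defined on the state sets: $uv^r \in L \iff v^r \in u^{-1}L$ depends on $u$ only through $u^{-1}L$, and rewriting $u'v^r \in L \iff u'^r \in v^{-1}L^r$ gives the analogous independence in the second coordinate (this is exactly the well-definedness implicit in Lemma \ref{lem:canon_dep_aut_well_defined}). Hence it suffices, for each state $X$ of $\minDfa{L}$, to fix a shortest word $u_X$ leading to $X$ from the initial state (found by the same graph search, so $|u_X|$ is below the number of states), and similarly a shortest word $v_Y$ for each state $Y$ of $\minDfa{L^r}$, and then to set
\[
  \rDR{L}(X, Y) := 1 \iff u_X\,(v_Y)^r \in L,
\]
the membership test being carried out by running the word $u_X (v_Y)^r$ --- of length below $|\LW{L}| + |\LW{L^r}|$ --- through $\minDfa{L}$. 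There are $|\LW{L}| \cdot |\LW{L^r}|$ pairs and each test is linear in the word length, so the whole relation is tabulated in polynomial time.

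Putting the pieces together yields the triple $(\minDfa{L}, \rDR{L}, \minDfa{L^r})$; that it really is $dep(L)$ is immediate from Definition \ref{def:canon_dep_aut} once one notes that $u_X^{-1}L = X$ and $v_Y^{-1}L^r = Y$, so the tabulated values coincide with $uv^r \in L$ for the intended representatives. I do not anticipate a genuine obstacle here; the two points needing care are (i) resisting the temptation to obtain $\minDfa{L^r}$ by reversing and determinising $\alpha$ --- which is precisely why $\beta$ must be given --- and (ii) invoking well-definedness of $\rDR{L}$ so that arbitrary shortest-path representatives may legitimately be used in the membership tests.
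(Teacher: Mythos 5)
Your proposal is correct and follows essentially the same route as the paper: minimise both input dfas in polynomial time, then tabulate $\rDR{L}$ by picking a representative word for each state and testing $u_X(v_Y)^r \in L$ on the minimised machine (the paper phrases this test as $\gamma_{v^r}(\gamma_u(x_0)) \in F_\gamma$, i.e.\ running $v^r$ from the state $X$ rather than the whole word from the initial state, but this is the same computation). Your explicit appeal to well-definedness of the relation on quotients, and the remark that one must not detour through $\rsc{\rev{\alpha}}$, are both consistent with what the paper leaves implicit.
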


\begin{proof}
  Minimising $\alpha$ in polytime yields $\gamma := (x_0,X,\gamma_a,F_\gamma)$, minimising $\beta$ yields $\delta := (y_0,Y,\delta_a,F_\delta)$. Construct $\rG \subseteq X \times Y$,
  \[
    \rG(\gamma_u(x_0), \delta_v(y_0))
      :\iff \gamma_{v^r}(\gamma_u(x_0)) \in F_\gamma
      \iff uv^r \in L
  \]
  noting that $\gamma$ and $\delta$ are reachable. Then we have the bipartite graph isomorphism:
  \[
    \xymatrix@=15pt{
      Y \ar[rr]^{\beta_2} && \LW{L^r}
      \\
      X \ar[u]^\rG \ar[rr]_{\beta_1} && \LW{L} \ar[u]_{\rDR{L}}
    }
  \]
  where $\beta_1(\gamma_u(x_0)) := u^{-1} L$ and $\beta_2(\delta_v(y_0)) := v^{-1} L^r$. It induces a $\Dep$-isomorphism -- in fact an $\AutDep$-isomorphism.
\end{proof}

\subsection{Deterministic automata over join-semilattices}

Just as each nfa has a reverse, each dependency automaton $(\rN, \rG, \rN')$ has a reverse $\Rev(\rN, \rG, \rN')$. It swaps the lower/upper nfa and takes the converse of $\rG$ (equivalently, swaps the bipartitions). This construction arose from the self-duality of $\Dep$. We now focus on lifting the categorical equivalence $\Dep \cong \JSL_f$ to one between dependency automata and \emph{deterministic finite automata interpreted in join-semilattices}. In the process we'll generalise the subset construction to dependency automata.

\begin{definition}[$\dfa{\JSL}$]
  \label{def:dfa_jsl}
  A $\JSL$-dfa is a 4-tuple $(s, \aS, \gamma_a, F)$ where $\aS = (S, \lor_\aS, \bot_\aS)$ is a finite join-semilattice, $s \in S$ is an element, $\delta_a : \aS \to \aS$ is an join-semilattice morphism for $a \in \Sigma$, and $F := \overline{\down_\aS t} \subseteq S$ for some $t \in S$. It \emph{accepts} the language its underlying dfa does. A $\JSL$-dfa morphism is a dfa morphism which is also a join-semilattice morphism i.e.\ preserves all joins. Given $w \in \Sigma^*$ we inductively define endomorphisms $\delta_\epsilon := id_\aS$ and $\delta_{wa} := \delta_a \circ \delta_w$. The category $\dfa{\JSL}$ consists of the $\JSL$-dfas with their morphisms, where composition is functional.
  \endbox
\end{definition}

Importantly, $\JSL$-dfas are deterministic finite automata interpreted in join-semilattices.\footnote{Other varieties where dfas can be interpreted include pointed sets, distributive lattices, boolean algebras and vector spaces over $\mathbb{F}_2$.}
\[
\begin{tabular}{r|r|r}
  & classical dfa & $\JSL$-dfa
  \\ \hline
  states & $Z$ & $\aS$
  \\
  initial state & $\alpha : \{ * \} \to Z$ & $\alpha : \two \to \aS$
  \\
  transitions & $\delta_a : Z \to Z$ &  $\delta_a : \aS \to \aS$
  \\
  final states & $\omega : Z \to \{ 0, 1 \}$ &  $\omega : \aS \to \two$
  \\
\end{tabular}  
\]
Indeed, viewing sets as algebras for the empty signature then $\{ * \}$ is free $1$-generated, just as $\two$ is the free $1$-generated join-semilattice. Morphisms from such algebras amount to picking a single element. On the other hand, $\{ 0, 1 \}$ and $\two$ are the unique (modulo isomorphism) two-element algebras of their respective varieties. Morphisms to such algebras amount to subsets i.e.\ the elements sent to $1$. Permitting every function $\omega : Z \to \{ 0, 1 \}$ permits any set of final states. Morphisms  $\omega : \aS \to \two$ must have a largest element sent to $0$, so that $\omega^{-1}(\{1\}) = \overline{\down_\aS t}$ for some $t \in S$.

\smallskip
The following Lemma provides further clarification. That is, a join of states accepts the union of the languages accepted by its summands. As a special case, the bottom element accepts the empty language.

\begin{lemma}
  \label{lem:jsl_dfa_joins}
  For any $\JSL$-dfa $\delta = (s_0, \aS, \delta_a, F)$ and $X \subseteq S$,
  \[
    L(\delta_{@\Lor_\aS X}) = \bigcup_{s \in X} L(\delta_{@s}).
  \]
\end{lemma}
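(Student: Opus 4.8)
The plan is to reduce the statement about an arbitrary finite join $\Lor_\aS X$ to the binary and nullary cases, and then prove those directly by unwinding the definition of language acceptance through the transition morphisms $\delta_w$, using the fact that each $\delta_a$ — and hence each $\delta_w$ — is a join-semilattice morphism, together with the fact that the final-state predicate $F = \overline{\down_\aS t}$ is an `up-set complement' that interacts well with joins.

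First I would establish the \emph{nullary case}: $L(\delta_{@\bot_\aS}) = \emptyset$. Since each $\delta_a$ preserves $\bot_\aS$, an easy induction on $|w|$ gives $\delta_w(\bot_\aS) = \bot_\aS$ for every $w \in \Sigma^*$. Then $w \in L(\delta_{@\bot_\aS})$ iff $\delta_w(\bot_\aS) \in F$, i.e.\ iff $\bot_\aS \in \overline{\down_\aS t}$, i.e.\ iff $\bot_\aS \not\leq_\aS t$; but $\bot_\aS$ is the least element, so this never holds and $L(\delta_{@\bot_\aS}) = \emptyset = \bigcup_{s \in \emptyset} L(\delta_{@s})$.

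Next the \emph{binary case}: $L(\delta_{@s_1 \lor_\aS s_2}) = L(\delta_{@s_1}) \cup L(\delta_{@s_2})$. The key observation is that for any word $w$, since $\delta_w$ is a $\JSL_f$-morphism, $\delta_w(s_1 \lor_\aS s_2) = \delta_w(s_1) \lor_\aS \delta_w(s_2)$. So it suffices to check, for arbitrary $x_1, x_2 \in S$, that $x_1 \lor_\aS x_2 \in F \iff x_1 \in F$ or $x_2 \in F$. Unfolding $F = \overline{\down_\aS t}$, this reads $x_1 \lor_\aS x_2 \not\leq_\aS t \iff x_1 \not\leq_\aS t \text{ or } x_2 \not\leq_\aS t$, which is just the contrapositive of the universal property of binary join: $x_1 \lor_\aS x_2 \leq_\aS t \iff x_1 \leq_\aS t \text{ and } x_2 \leq_\aS t$. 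Combining, $w \in L(\delta_{@s_1 \lor_\aS s_2})$ iff $\delta_w(s_1) \in F$ or $\delta_w(s_2) \in F$ iff $w \in L(\delta_{@s_1}) \cup L(\delta_{@s_2})$.

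Finally, for a general finite $X \subseteq S$ I would induct on $|X|$: the base cases $|X| = 0$ and $|X| = 1$ are the nullary case and the trivial identity, and the inductive step writes $\Lor_\aS X = (\Lor_\aS X') \lor_\aS s$ for $X = X' \cup \{s\}$ and applies the binary case together with the induction hypothesis. \textbf{The main obstacle} is really just being careful that nothing subtle happens — there is no deep difficulty here; the only point requiring a moment's thought is recognizing that the special shape $F = \overline{\down_\aS t}$ of the accepting set is exactly what makes membership in $F$ distribute over joins (an arbitrary up-set would also work, but a general subset would not), and that this is precisely why $\JSL$-dfas are defined with that constraint on $F$ rather than allowing arbitrary final-state sets. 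If one prefers to avoid the explicit induction, one can instead invoke Definition~\ref{def:dfa_jsl} directly: $\omega : \aS \to \two$ is a join-semilattice morphism, so $\omega(\Lor_\aS \{\delta_w(s) : s \in X\}) = \Lor_\two \{\omega(\delta_w(s)) : s \in X\}$, which is $1$ iff some summand is $1$ — giving the result in one line.
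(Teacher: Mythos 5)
Your proof is correct and rests on exactly the same two facts as the paper's: each $\delta_w$ preserves joins, and membership in $F = \overline{\down_\aS t}$ distributes over joins via the universal property of $\lor_\aS$ (the paper phrases this as $\Lor_\aS\{\delta_w(x) : x \in X\} \nleq_\aS t \iff \exists x \in X.\, \delta_w(x) \nleq_\aS t$ with $t = \Lor_\aS \overline{F}$). Your closing one-line remark is essentially the paper's entire proof; the explicit induction in the main body is a more pedestrian route to the same conclusion and is not needed.
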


\begin{proof}
  Let $t = \Lor_\aS \overline{F}$ be the largest non-final state. Each $\delta_w : \aS \to \aS$ is an endomorphism so $\delta_w (\Lor_\aS X) \nleq_\aS t \iff \Lor_\aS \{ \delta_w(x) : x \in X \} \nleq_\aS t \iff \exists x \in X. \delta_w(x) \nleq_\aS t$.
\end{proof}

Next, the category of $\JSL$-dfas is self-dual. That is, one can take adjoints and exchange the initial state with the largest non-final state.

\begin{theorem}[Self-duality of $\dfa{\JSL}$]
  \label{thm:dfa_jsl_self_dual}
  We have the self-duality $(-)^\pentagram : \dfa{\JSL}^{op} \to \dfa{\JSL}$,
  \[
    (s_0, \aS, \delta_a, F)^\pentagram
    := (\Lor_\aS \overline{F}, \aS^\pOp, (\delta_a)_*, \overline{\up_\aS s_0})
    \qquad
    f^\pentagram := f_*
  \]
  with witnessing natural isomorphism $\lambda : Id_{\dfa{\JSL}} \To (-)^\pentagram \circ ((-)^\pentagram)^{op}$ where $\lambda_{(s_0, \aS, \delta_a, F)} := id_\aS$.
\end{theorem}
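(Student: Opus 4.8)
The plan is to deduce everything from the self-duality of $\JSL_f$ (Note \ref{note:jsl_extras}.3) together with the adjoint relationship $f(s) \leq_\aT t \iff s \leq_\aS f_*(t)$ (Note \ref{note:jsl_extras}.2); essentially all the categorical substance is already packaged there, and the work is to check that the extra dfa data (initial state, final-state set) is transported correctly. I would verify four things in turn: that $(-)^\pentagram$ is well-defined on objects, well-defined on morphisms, a contravariant functor, and that the round trip $(-)^\pentagram \circ ((-)^\pentagram)^{op}$ is literally $Id_{\dfa{\JSL}}$, so that $\lambda_{(s_0,\aS,\delta_a,F)} := id_\aS$ is automatically the asserted natural isomorphism.

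For objects, fix $(s_0,\aS,\delta_a,F)$ and set $t := \Lor_\aS \overline F$. Since $F = \overline{\down_\aS s}$ for some $s$ by the definition of a $\JSL$-dfa, we get $\overline F = \down_\aS s$, hence $t = s$ and $F = \overline{\down_\aS t}$. Then $\aS^\pOp$ is again a finite join-semilattice; $t \in S$ is a genuine element; each $(\delta_a)_*$ preserves all finite meets of $\aS$ and sends $\top_\aS$ to $\top_\aS$, hence is a $\JSL_f$-endomorphism of $\aS^\pOp$; and $\overline{\up_\aS s_0} = \overline{\down_{\aS^\pOp} s_0}$ has the right shape for a final-state set over $\aS^\pOp$. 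So $(s_0,\aS,\delta_a,F)^\pentagram$ is a $\JSL$-dfa.

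For morphisms, given a $\dfa{\JSL}$-morphism $f : (s_0,\aS,\delta_a,F) \to (s_0',\aT,\gamma_a,F')$, I claim $f_* : \aT^\pOp \to \aS^\pOp$ is a $\dfa{\JSL}$-morphism $(s_0',\aT,\gamma_a,F')^\pentagram \to (s_0,\aS,\delta_a,F)^\pentagram$ in the reversed direction. It is a $\JSL_f$-morphism by the above; applying the contravariant $(-)_*$ to $f\circ\delta_a = \gamma_a\circ f$ yields $(\delta_a)_* \circ f_* = f_* \circ (\gamma_a)_*$, the dfa commutation condition; the reflection $f^{-1}(F') = F$ rephrases as $f(s)\leq_\aT t' \iff s\leq_\aS t$ with $t' := \Lor_\aT\overline{F'}$, whence $f_*(t') = \Lor_\aS\{s : f(s)\leq_\aT t'\} = \Lor_\aS\down_\aS t = t$, giving preservation of initial states; and $s_0 \leq_\aS f_*(x) \iff f(s_0)\leq_\aT x \iff s_0'\leq_\aT x$ (adjointness plus $f(s_0)=s_0'$), negated, gives $(f_*)^{-1}(\overline{\up_\aS s_0}) = \overline{\up_\aT s_0'}$, i.e.\ both preservation and reflection of final states.

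Functoriality is immediate since $(-)_* : \JSL_f^{op} \to \JSL_f$ is a functor. For the natural isomorphism I would just compute the double dual: $(\aS^\pOp)^\pOp = \aS$; the transitions become $((\delta_a)_*)_* = \delta_a$ and $f_{**} = f$ (self-duality of $\JSL_f$, whose witnessing components are identities); the initial state becomes $\Lor_{\aS^\pOp}\overline{\overline{\up_\aS s_0}} = \Lor_{\aS^\pOp}\down_{\aS^\pOp} s_0 = s_0$; and the final-state set becomes $\overline{\up_{\aS^\pOp} t} = \overline{\down_\aS t} = F$. Hence $(-)^\pentagram\circ((-)^\pentagram)^{op}$ is $Id_{\dfa{\JSL}}$ on the nose, so $\lambda := id$ is trivially a natural isomorphism (naturality in $f$ being exactly $f_{**}=f$), and $(-)^\pentagram$ is a self-duality. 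The only thing that needs care is the small-print bookkeeping --- $F = \overline{\down_\aS t}$ with $t = \Lor_\aS\overline F$, the identity $\down_{\aS^\pOp} x = \up_\aS x$, and matching the final-state membership condition to the adjoint relationship --- but I expect no genuine obstacle here, since all the real content is inherited from the self-duality of $\JSL_f$.
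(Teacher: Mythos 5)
Your proof is correct. The paper actually states this theorem without giving any proof, and your argument is exactly the expected one: everything is inherited from the self-duality $(-)_*$ of $\JSL_f$ (Note \ref{note:jsl_extras}.3), with the remaining work being the bookkeeping you carry out --- the identification $\Lor_\aS\overline{F}$ with the element $t$ for which $F=\overline{\down_\aS t}$, the transport of the initial state via $f_*(t')=t$ from $f^{-1}(F')=F$, the transport of final states via the adjoint relationship and $f(s_0)=s_0'$, and the on-the-nose computation of the double dual so that $\lambda=id$ is a (trivially natural) isomorphism. I see no gap.
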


Dual machines accept the reversed language.

\begin{lemma}
  \label{lem:dual_dfa_reverse_lang}
  $L(\delta^\pentagram) = (L(\delta))^r$ for any $\JSL$-dfa $\delta$.
\end{lemma}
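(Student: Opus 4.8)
The plan is to unwind both accepted languages into statements about $\nleq_\aS$, relate the transition maps of $\delta^\pentagram$ to adjoints of transition maps of $\delta$, and then apply the defining adjoint relationship in contrapositive form.

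First I would record the two descriptions. Write $\delta = (s_0,\aS,\delta_a,F)$ and set $t := \Lor_\aS \overline{F}$, the largest non-final state, so that $F = \overline{\down_\aS t}$ and $w \in L(\delta) \iff \delta_w(s_0) \nleq_\aS t$. On the other side, by Theorem~\ref{thm:dfa_jsl_self_dual} we have $\delta^\pentagram = (t,\, \aS^\pOp,\, (\delta_a)_*,\, \overline{\up_\aS s_0})$, and since $\up_\aS s_0 = \down_{\aS^\pOp} s_0$ its accepting condition reads $w \in L(\delta^\pentagram) \iff (\delta^\pentagram)_w(t) \notin \up_\aS s_0 \iff s_0 \nleq_\aS (\delta^\pentagram)_w(t)$.

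Next I would compute $(\delta^\pentagram)_w$. From the inductive definition $(\delta^\pentagram)_{ua} = (\delta^\pentagram)_a \circ (\delta^\pentagram)_u = (\delta_a)_* \circ (\delta^\pentagram)_u$, the dfa convention $\delta_{ua} = \delta_a \circ \delta_u$, and the fact that adjoints reverse composition, $(g \circ h)_* = h_* \circ g_*$ (functoriality of $(-)_*$, Note~\ref{note:jsl_extras}), an easy induction on $|w|$ gives $(\delta^\pentagram)_w = (\delta_{w^r})_*$: the composition order of the $\delta_a$'s is reversed once because the $\JSL$-dfa composition convention reverses it and once because the adjoint reverses it, and the net effect is precisely the word reversal $w \mapsto w^r$.

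Finally, combining the two descriptions, $w \in L(\delta^\pentagram) \iff s_0 \nleq_\aS (\delta_{w^r})_*(t)$, and the adjoint relationship $\delta_{w^r}(s_0) \leq_\aS t \iff s_0 \leq_\aS (\delta_{w^r})_*(t)$ read contrapositively shows this is equivalent to $\delta_{w^r}(s_0) \nleq_\aS t$, i.e.\ to $w^r \in L(\delta)$. Hence $L(\delta^\pentagram) = (L(\delta))^r$. The only genuinely delicate point is the bookkeeping in the third step: keeping straight which order one works in ($\aS$ versus $\aS^\pOp$), and checking that the reversal coming from the $\JSL$-dfa composition convention and the reversal coming from the adjoint compose to exactly one word reversal rather than cancelling. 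Everything else is a routine unwinding of definitions.
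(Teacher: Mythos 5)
Your proof is correct and is essentially the paper's argument: both hinge on the identity $(\delta^\pentagram)_w=(\delta_{w^r})_*$ and a final application of the adjoint relationship $\delta_{w^r}(s_0)\leq_\aS t\iff s_0\leq_\aS(\delta_{w^r})_*(t)$ in contrapositive form. The only cosmetic difference is that the paper packages the initial/final-state data as morphisms $\two\to\aS$ and $\aS\to\two$ and applies the functor $(-)_*$, whereas you unwind the same computation element-wise.
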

\begin{proof}
  Let $\delta = (s_0, \aS, \delta_a, F)$ and consider the morphisms:
  \[
    \text{$\alpha : \two \to \aS$ where $\alpha(1) := s_0$}
    \qquad
    \text{$\omega : \aS \to \two$ where $\omega^{-1}(\{ 1 \}) = F$}
    \qquad
    \text{$\delta_w : \aS \to \aS$ for $w \in \Sigma^*$}.
  \]
  Then we calculate:
  \[
  \begin{tabular}{lll}
    $w \in L(\delta)$
    &
    $\iff \omega \circ \delta_w \circ \alpha = id_\two$
    & (consider action on $\top_\two$)
    \\ &
    $\iff \alpha_* \circ (\delta_w)_* \circ \omega_* = (id_\two)_* = id_{\two^\pOp}$
    & (apply $(-)_* : \JSL_f^{op} \to \JSL_f$)
    \\ &
    $\iff \alpha_* \circ (\delta^\pentagram)_{w^r} \circ \omega_* = id_{\two^\pOp}$
    & (by def.\ of $(-)^\pentagram$)
    \\ &
    $\iff \alpha_* \circ (\delta^\pentagram)_{w^r} \circ \omega_*(0) = 0$.
    & (since $\top_{\two^\pOp} = 0$)
    \\ &
    $\iff \alpha_* \circ (\delta^\pentagram)_{w^r} (\Lor_\aS \overline{F}) \nleq_{\two^{\pOp}} 1$
    \\ &
    $\iff (\delta^\pentagram)_{w^r} (\Lor_\aS \overline{F}) \nleq_{\aS^{\pOp}} (\alpha_*)_*(1)$
    & (adjoints)
    \\ &
    $\iff (\delta^\pentagram)_{w^r} (\Lor_\aS \overline{F}) \nleq_{\aS} s_0$
    & (since $(\alpha_*)_* = \alpha$)
    \\ &
    $\iff w^r \in L(\delta^\pentagram)$.
  \end{tabular}  
  \]
\end{proof}

Importantly, dependency automata and dfas interpreted in semilattices are two sides of the same coin.

\begin{definition}[Equivalence functors for automata]
  \label{def:airr_and_det}
  \item
  \begin{enumerate}
    \item
    $\Det : \AutDep \to \dfa{\JSL}$ \emph{determinises} a dependency automaton:
    \[
      \Det(\rN, \rG, \rN') :=
      (F_{\rN'}, \Open\rG, \lambda Y. (\rN'_a)\spbreve[Y §], \{ Y \in O(\rG) : Y \cap I_{\rN'} \neq \emptyset  \})
    \]
    and acts on morphisms as $\Open$ does (see Definition \ref{def:open_and_pirr}.1).

    \item
    $\Airr: \dfa{\JSL} \to \AutDep$ constructs a dependency automaton over the semilattice's irreducibles:
    \[
      \begin{tabular}{rl}
      $\Airr(s, \aS, \delta_a, F)$ & $:= (\rN, \Pirr\aS, \rN')$
      \\
      $\rN$ & $:= ( J(\aS) \,\cap \down_\aS s , J(\aS), (\Pirr\delta_a)_-, J(\aS) \,\cap F)$
      \\
      $\rN'$ & $:= ( M(\aS) \,\cap \up_\aS \Lor_\aS \overline{F}, M(\aS), (\Pirr\delta_a)_+, M(\aS) \cap\, \overline{\up_\aS s})$.
      \end{tabular}
    \]
    It acts on morphisms as $\Pirr$ does (see Definition \ref{def:open_and_pirr}.2). \endbox
  \end{enumerate}
\end{definition}

\begin{theorem}[Automata-theoretic categorical equivalence]
  \label{thm:aut_dep_equiv_jsl_dfa}
  $\Det : \AutDep \to \dfa{\JSL}$ and $\Airr: \dfa{\JSL} \to \AutDep$ define an equivalence of categories with natural isomorphisms inherited from Theorem \ref{thm:dep_equiv_jslf}:
  \[
    \begin{tabular}{ll}
      $\rep  : Id_{\dfa{\JSL}} \To \Det \circ \Airr$
      &
      $\rep_{(s, \aS, \delta_a, F)} := \rep_\aS$
      \\[1ex]
      $\red  : Id_{\AutDep} \To \Airr \circ \Det$
      &
      $\red_{(\rN, \rG, \rN')} := \red_\rG$
    \end{tabular}
  \]
\end{theorem}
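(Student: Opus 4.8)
The plan is to lift the categorical equivalence $\Open\colon\Dep\to\JSL_f$, $\Pirr\colon\JSL_f\to\Dep$ of Theorem~\ref{thm:dep_equiv_jslf} along the ``deterministic-automata'' construction. The key observation is structural: an object of $\dfa{\JSL}$ is nothing but an object $\aS$ of $\JSL_f$ equipped with a family of $\JSL_f$-endomorphisms $(\delta_a)_{a\in\Sigma}$ and two distinguished morphisms $\two\to\aS$, $\aS\to\two$ (the initial state and the final-state predicate, cf.\ the discussion after Definition~\ref{def:dfa_jsl}); dually, an object of $\AutDep$ is an object $\rG$ of $\Dep$ with $\Dep$-endomorphisms $(\rN_a^\dagger)_{a\in\Sigma}$ (condition~(4)) together with initial/final data recorded by the subsets $F_{\rN'}=\rG[I_\rN]$ and $F_\rN=\breve\rG[I_{\rN'}]$ of condition~(5). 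On underlying objects $\Det$ acts as $\Open$ and $\Airr$ as $\Pirr$, and on transitions and morphisms they again act as $\Open$ and $\Pirr$ (Definition~\ref{def:airr_and_det}). Hence functoriality of $\Det,\Airr$ and the naturality of the transformations $\rep,\red$ will follow essentially for free from Theorem~\ref{thm:dep_equiv_jslf}; the genuine work is (a) checking $\Det$ and $\Airr$ land in the stated categories, and (b) checking the initial/final-state conditions making $\rep_\delta$ and $\red_{(\rN,\rG,\rN')}$ automata morphisms.

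For well-definedness of $\Det$: the transition $\lambda Y.(\rN'_a)\spbreve[Y]$ is $\Open$ applied to the $\Dep$-endomorphism $\rN_a^\dagger=\rN_a;\rG=\rG;(\rN'_a)\spbreve$, since $\rN'_a$ is an upper witness of $\rN_a^\dagger$ (Note~\ref{note:open_morphism_alt}), so it is a $\JSL_f$-endomorphism of $\Open\rG$; the initial state $F_{\rN'}=\rG[I_\rN]$ is $\rG$-open by condition~(5); and the final set $\{Y\in O(\rG):Y\cap I_{\rN'}\neq\emptyset\}$ equals $\overline{\down_{\Open\rG} t}$ for $t:=\inte_\rG(\overline{I_{\rN'}})$, because $\inte_\rG(\overline{I_{\rN'}})$ is the largest $\rG$-open subset of $\overline{I_{\rN'}}$. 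One also checks $L(\Det(\rN,\rG,\rN'))=L(\rN)$ by tracing the run and using $\rN_w;\rG=\rG;(\rN'_w)\spbreve$ and condition~(5) exactly as in the proof of Lemma~\ref{lem:upper_nfa_accepts_reverse}. For $\Airr$ the argument is dual: conditions~(4) and~(5) for $\Airr(s,\aS,\delta_a,F)$ hold with $\rN_a^\dagger:=\Pirr\delta_a$ by the maximum-witness identities of Lemma~\ref{lem:mor_char_max_witness}.5, together with the fact (Note~\ref{note:irreducibles}.3) that every element of $\aS$ is the join of the join-irreducibles below it and the meet of the meet-irreducibles above it — so that e.g.\ $\{m\in M(\aS):\exists j\in J(\aS)\cap\down_\aS s,\ j\nleq_\aS m\}=\{m\in M(\aS):s\nleq_\aS m\}$, which is the required form of $F_{\rN'}$. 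Functoriality of $\Det$ and $\Airr$ is then immediate, since on morphisms they literally coincide with $\Open$ and $\Pirr$, and composition in $\AutDep$ and $\dfa{\JSL}$ is inherited from $\Dep$ and $\JSL_f$.

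For the natural isomorphisms, $\rep_{(s,\aS,\delta_a,F)}:=\rep_\aS$ is already a $\JSL_f$-isomorphism $\aS\to\Open\Pirr\aS$ (Theorem~\ref{thm:dep_equiv_jslf}); it is moreover a $\dfa{\JSL}$-morphism into $\Det\Airr(s,\aS,\delta_a,F)$ because (i) its compatibility with transitions is precisely the naturality square of $\rep$ at $\delta_a\colon\aS\to\aS$, the target transition being $\Open\Pirr\delta_a$; (ii) it sends the initial state $s$ to $\{m\in M(\aS):s\nleq_\aS m\}=M(\aS)\cap\overline{\up_\aS s}$, which is exactly the initial state $F_{\rN'}$ of $\Det\Airr(s,\aS,\delta_a,F)$, by the very definition of $\rep_\aS$; and (iii) it preserves and reflects final states, since $\rep_\aS(x)$ meets the upper initial set $M(\aS)\cap\up_\aS\Lor_\aS\overline{F}$ iff $x\nleq_\aS\Lor_\aS\overline{F}$ iff $x\in F$, using meet-generation of $\Lor_\aS\overline{F}$. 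Dually, $\red_{(\rN,\rG,\rN')}:=\red_\rG$ is a $\Dep$-isomorphism $\rG\to\Pirr\Open\rG$, and is an $\AutDep$-morphism into $\Airr\Det(\rN,\rG,\rN')$: transition compatibility is the naturality square of $\red$ at $\rN_a^\dagger$, while the conditions $\red_\rG[I_\rN]=F_{\rN'}^{\mathrm{new}}$ and $\breve{\red_\rG}[I_{\rN'}]=F_\rN$ reduce — via the identity $\rG[I_\rN]=\bigcup_{g_s\in I_\rN}\rG[g_s]$ read as a join in $\Open\rG$, the explicit formula for $\red_\rG$, and the fact that a $\rG$-open $Y$ satisfies $Y\subseteq\inte_\rG(\overline{I_{\rN'}})$ iff $Y\cap I_{\rN'}=\emptyset$ — to condition~(5) of the source dependency automaton. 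Since on underlying $\JSL_f$/$\Dep$ data one has $\Det\circ\Airr=\Open\circ\Pirr$ and $\Airr\circ\Det=\Pirr\circ\Open$, naturality of $\rep$ and $\red$ as transformations $\Id_{\dfa{\JSL}}\To\Det\circ\Airr$ and $\Id_{\AutDep}\To\Airr\circ\Det$ is literally the naturality of $\rep,\red$ from Theorem~\ref{thm:dep_equiv_jslf}; so $\Det,\Airr$ form an equivalence with these witnessing isomorphisms.

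I expect the main obstacle to be the bookkeeping around initial and final states rather than anything conceptual: showing that $\Det$ always outputs a final-state set of the mandated shape $\overline{\down_\aS t}$, that $\Airr$'s initial/final data satisfies dependency-automaton condition~(5), and that $\rep_\delta$ and $\red_{(\rN,\rG,\rN')}$ are compatible with all of this. The transition-layer identities are handed to us by the naturality of $\rep,\red$ together with Note~\ref{note:open_morphism_alt}; the only real computations are the initial/final ones, which invariably come down to the two generation principles (every element is the join of the join-irreducibles below it, dually for meets) on the algebraic side and, on the relational side, to ``$\inte_\rG(Z)$ is the largest $\rG$-open subset of $Z$'' combined with condition~(5) of the automaton at hand.
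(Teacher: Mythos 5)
Your proposal follows the same route as the paper: establish well-definedness of $\Det$ and $\Airr$, then check that the components $\rep_\aS$ and $\red_\rG$ of the natural isomorphisms from Theorem~\ref{thm:dep_equiv_jslf} are automata morphisms, with naturality inherited for free since the underlying functors agree with $\Open\circ\Pirr$ and $\Pirr\circ\Open$. Your object-level checks (the largest non-final state being $\inte_\rG(\overline{I_{\rN'}})$, condition~(5) for $\Airr$ via join/meet generation, the initial- and final-state computations for $\rep_\aS$ and $\red_\rG$) are exactly the ones the paper performs, and they are correct.

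The one place you are too quick is the claim that well-definedness of $\Det$ and $\Airr$ \emph{on morphisms} is ``immediate'' because they coincide with $\Open$ and $\Pirr$. That only gives you that the images are $\JSL_f$- respectively $\Dep$-morphisms and that composition is preserved; it does not yet show that $\Open\rR$ of an $\AutDep$-morphism preserves the initial state, intertwines the transitions, and preserves/reflects final states (nor the dual statement for $\Pirr f$). These are genuine computations: e.g.\ $\Open\rR(F_{\rM'}) = \rG;\rR_+\spbreve[I_\rM] = \rR[I_\rM] = F_{\rN'}$ uses the $\Dep$-composition identities together with the $\AutDep$-morphism condition, the transition check threads $\rM_a;\rR = \rR;(\rN'_a)\spbreve$ through a five-step rewriting, and the final-state check needs the observation that $Y=\rG[X]$ meets $I_{\rM'}$ iff $X$ meets $F_\rM = \breve\rR[I_{\rN'}]$ iff $\rR[X]$ meets $I_{\rN'}$. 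The paper spends roughly a third of its proof on exactly this. Your structural framing (automata as objects with $\Sigma$-indexed endomorphisms plus morphisms to and from a fixed two-element object) would indeed make this automatic, but only after you prove that $\AutDep$-morphisms are \emph{precisely} the $\Dep$-morphisms commuting with those distinguished initial/final $\Dep$-morphisms --- which is the same computation in disguise. Everything else in your plan is sound.
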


\begin{proof}
  \item
  \begin{enumerate}
    \item \emph{$\Det$ is well-defined}.

    Since $\Open$ is a well-defined functor we need only show $\Det$ is well-defined on objects and morphisms. Concerning objects, first recall:
    \[
      \Det(\rN, \rG, \rN')
      := (F_{\rN'}, \Open\rG, \lambda Y. (\rN'_a)\spbreve[Y], \{ Y \in O(\rG) : Y \cap I_{\rN'} \neq \emptyset  \}).
    \]
    Then $F_{\rN'} = \rG[I_\rN] \in O(\rG)$ is an element of the join-semilattice $\Open\rG$, as required. Since $\rN_a ; \rG = \rN_a^\dagger = \rG ; (\rN'_a)\spbreve$ we have  $\rN_a^\dagger : \rG \to \rG$; applying $\Open$ yields an endomorphism of $\Open\rG$ with action $\lambda Y \in O(\rG). (\rN_a^\dagger)_+\spbreve[Y]$. The latter can be rewritten $\lambda Y. (\rN'_a)\spbreve[Y]$ because $\rG ; (\rN_a^\dagger)_+\spbreve = \rG ; (\rN'_a)\spbreve$ and each $Y = \rG[X]$ for some $X$.
    Finally, the non-final states $\{ Y \in O(\rG) : Y \cap I_{\rN'} = \emptyset  \}$ have a largest element $\inte_\rG(\overline{I_\rN})$. Then the $\JSL$-dfa is well-defined.

    To see $\Det$ is well-defined on morphisms we'll show the respective $\JSL_f$-morphisms preserve the additional structure.
    Given $\rR: (\rM, \rG, \rM') \to (\rN, \rH, \rN')$ we have $\Open \rR : \Open\rG \to \Open\rH$. The initial state is preserved:
    \[
      \Open\rG (F_{\rM'})
      = \Open\rG (\rG[I_\rM])
      = \rG ; \rR_+\spbreve [I_\rM]
      = \rR [I_\rM]
      = F_{\rN'}.
    \]
    The transitions are preserved because for any $Y = \rG[X]$,
    \[
      \begin{tabular}{lll}
        $\Open \rR((\rM'_a)\spbreve [Y])$
        & $= \rR_+\spbreve [(\rM'_a)\spbreve[Y]]$
        & (def.\ of $\Open$)
        \\ &
        $= \rG ; (\rM'_a)\spbreve ; \rR_+\spbreve [X]$
        \\ &
        $= \rM_a ; \rG ; \rR_+\spbreve [X]$
        & (def.\ of $\AutDep$)
        \\ &
        $= \rM_a ; \rR [X]$
        \\ &
        $= \rR ; (\rN'_a)\spbreve [X]$
        & (def.\ of $\AutDep$)
        \\ &
        $= \rG ; \rR_+\spbreve ; (\rN'_a)\spbreve [X]$
        \\ &
        $= \rR_+\spbreve ; (\rN'_a)\spbreve [Y]$
        \\ &
        $= (\rN'_a)\spbreve [\Open \rR (Y)]$.
        & (def.\ of $\Open$)
      \end{tabular}
    \]
    To see the final states are preserved, observe $I_{\rM'}$ determines the $\Dep$-morphism:
    \[
      \xymatrix@=15pt{
        \rG_t \ar[rr]^{I_{\rM'} \times \{ 0 \}} && 0
        \\
        \rG_s \ar[u]^\rG \ar[rr]_{F_\rM \times \{ 1 \}} && 1 \ar[u]_{\Pirr\two}
      }
    \]
    Fix $Y = \rG[X] \in O(\rG)$. Then $Y$ is final iff $Y \cap I_{\rM'} \neq \emptyset$ iff $X \cap F_\rM \neq \emptyset$. Moreover $\Open\rR (Y) = \rR[X]$ is final iff $\rR[X] \cap I_{\rN'} \neq \emptyset$. By assumption $\breve{\rR}[I_{\rN'}] = F_\rM$, so $Y$ is final iff $X \cap \breve{\rR}[I_{\rN'}] \neq \emptyset$ iff $\rR[X] \cap I_{\rN'} \neq \emptyset$ iff $\Open\rR(Y)$ is final.

    \item \emph{$\Airr$ is well-defined}.
    
    Since $\Pirr$ is a well-defined functor it suffices to show $\Airr$ is well-defined on objects and morphisms.
    Concerning objects, $\Airr (s, \aS, \delta_a, F) := (\rN, \Pirr\aS, \rN')$ and both $\rN$ and $\rN'$ are well-defined nfas. Condition (4) holds:
    \[
        (\Pirr\delta_a)_- ; \Pirr\aS
        = \Pirr\delta_a \fatsemi id_{\Pirr\aS}
        = \Pirr\delta_a
        = id_{\Pirr\aS} \fatsemi \Pirr\delta_a
        = \Pirr\aS ; (\Pirr\delta_a)_+\spbreve.
    \]
    Finally, condition (5) holds:
    \[
      \begin{tabular}{rl}
        $\Pirr\aS [I_\rN]$
        &
        $= \Pirr\aS [\{ j \in J(\aS) : j \leq_\aS s \}]$
        \\ &
        $= \{ m \in M(\aS) : \exists j \in J(\aS).[ j \leq_\aS s \text{ and } j \nleq_\aS m ] \}$
        \\ &
        $= \{ m \in M(\aS) : s \nleq_\aS m \} = F_{\rN'}$
        \\
        \\
        $(\Pirr\aS)\spbreve[I_{\rN'}]$
        &
        $= (\Pirr\aS)\spbreve[\{ m \in M(\aS) : \Lor_\aS \overline{F} \leq_\aS m \}]$
        \\ &
        $= \{ j \in J(\aS) : \exists m \in M(\aS). [ j \nleq_\aS m \text{ and } \Lor_\aS \overline{F_\aS} \leq_\aS m ] \}$
        \\ &
        $= \{ j \in J(\aS) : j \nleq_\aS \Lor_\aS \overline{F_\aS} \} = F_\rN$.
      \end{tabular}
    \]

    To see $\Airr$ is well-defined on morphisms, take $f : (s_0, \aS, \gamma_a, F_\aS) \to (t_0, \aT, \delta_a, F_\aT)$ so we have $\Pirr f : \Pirr\aS \to \Pirr\aT$. Then let us verify the required identities:
    \[
      \begin{tabular}{lll}
        $(\Pirr \gamma_a)_- ; \Pirr f$
        & $= \Pirr \gamma_a \fatsemi \Pirr f$
        \\ & $= \Pirr (\gamma_a \circ f)$
        \\ & $= \Pirr (f \circ \delta_a)$
        \\ & $= \Pirr f \fatsemi \Pirr\delta_a$
        \\ & $= \Pirr f ; (\Pirr \delta_a)_+\spbreve$
      \end{tabular}
    \]
    Moreover if $\Airr f : (\rM, \Pirr\aS, \rM') \to (\rN, \Pirr\aT, \rN')$ then,
    \[
      \begin{tabular}{ll}
        $\Pirr f[I_\rM]$
        &
        $= \Pirr f[\{ j \in J(\aS) : j \leq_\aS s_0 \}]$
        \\ &
        $= \{ m \in M(\aT) : \exists j \in J(\aS). ( f(j) \nleq_\aT m \text{ and } j \leq_\aS s_0 ) \} $
        \\ &
        $= \{ m \in M(\aT) : \exists j \in J(\aS). ( j \nleq_\aS f_*(m) \text{ and } j \leq_\aS s_0 ) \} $
        \\ &
        $= \{ m \in M(\aT) : s_0 \nleq_\aS f_*(m) \}$.
        \\ &
        $= \{ m \in M(\aT) : f(s_0) \nleq_\aT m \}$.
        \\ &
        $= \{ m \in M(\aT) : t_0 \nleq_\aT m \} = F_{\rN'}$.
      \end{tabular}
    \]
    \[
      \begin{tabular}{ll}
        $(\Pirr f)\spbreve [I_{\rN'}]$
        &
        $= \{ j \in J(\aS) : \exists m \in M(\aT). ( f(j) \nleq_\aT m \text{ and } \Lor_\aS \overline{F_\aT} \leq_\aT m ) \}$
        \\ &
        $= \{ j \in J(\aS) : f(j) \nleq_\aS \Lor_\aT \overline{F_\aT} \}$
        \\ &
        $= \{ j \in J(\aS) : j \nleq_\aS \Lor_\aS \overline{F_\aS} \} = F_\rN$.
      \end{tabular}
    \]

    \item \emph{$\rep$ restricts to a natural isomorphism as claimed}.
    
      Recall the natural isomorphism $\rep : Id_{\JSL_f} \To \Open \circ \Pirr$ where $\rep_\aS := \lambda s \in S. \{ m \in M(\aS) : s \nleq_\aS m \}$. Then given any $\JSL$-dfa $\delta := (s_0, \aS, \delta_a, F)$ it suffices to establish the typing $\rep_\aS : \delta \to \Det\Airr \delta$ where:
      \[
        \Det\Airr \delta
        = (\{ m \in M(\aS) : s_0 \nleq_\aS m \}, \Open\Pirr\aS, \lambda Y. (\Pirr\delta_a)_+\spbreve[Y], \{ Y \in O(\Pirr\aS) : \Lor_\aS \overline{F} \in \;\down_\aS Y \}).
      \]
      The initial state is clearly preserved. Next, the deterministic transitions are preserved:
      \[
        \rep_\aS \circ \delta_a (s)
        = \{ m \in M(\aS) : \delta_a (s) \nleq_\aS m \}
        = \{ m \in M(\aS) : s \nleq_\aS (\delta_a)_*(m) \}
      \]
      \[
        \begin{tabular}{lll}
          $(\Pirr\delta_a)_+\spbreve[\rep_\aS(s)]$
          &
          $= (\Pirr\delta_a)_+\spbreve[\{ m \in M(\aS) : s \nleq_\aS m \}]$
          \\ &
          $= \{ m \in M(\aS) : \exists m' \in M(\aS).( s \nleq_\aS m' \text{ and } (\delta_a)_* \leq_\aS m' ) \}$
          \\ &
          $= \{ m \in M(\aS) : s \nleq_\aS (\delta_a)_*(m) \}$.
        \end{tabular}
      \]
      Concerning the final states we know $\rep_\aS [F] = \{ M(\aS) \,\cap \nleq_\aS [s] :  s \in F \}$,
      so given $s \in F$ let $Y_s := \{ m \in M(\aS) : s \nleq_\aS m \} \in O(\Pirr\aS)$. Then:
      \[
        \begin{tabular}{lll}
          $\Lor_\aS \overline{F} \nin \; \down_\aS Y_s$
          &
          $\iff \forall m \in M(\aS). ( s \nleq_\aS m \To  \Lor_\aS \overline{F} \nleq_\aS m)$
          \\ &
          $\iff \forall m \in M(\aS). ( \Lor_\aS \overline{F} \leq_\aS m \To s \leq_\aS m )$
          \\ &
          $\iff s \leq_\aS \Lor_\aS \overline{F}$
          \\ &
          $\iff s \nin F$.
        \end{tabular}
      \]

    \item \emph{$\red$ restricts to a natural isomorphism as claimed}.

    Recall the natural isomorphism $\red : Id_{\Dep} \To \Pirr \circ \Open$. Take any dependency automaton $\fM := (\rM, \rG, \rM')$. It suffices to establish the typing $\red_\rG : \fM \to \Airr\Det\fM$ whose codomain is $(\rN, \Pirr\Open\rG, \rN')$ where:
    \[
      \begin{tabular}{c}
        $\rN := (\{ j \in J(\Open\rG) : j \subseteq F_{\rM'} \}, J(\Open\rG), \rN_a, \{ Y \in J(\Open\rG) : Y \cap I_{\rM'} \neq \emptyset \})$
        \\[1ex]
        $\rN' := (\{ m \in M(\Open\rG) : \inte_{\rG}(\overline{I_{\rM'}}) \subseteq m \}, M(\aS), \rN'_a, \{ m \in M(\aS) : F_{\rM'} \nsubseteq m \})$.
      \end{tabular}
    \]
    Firstly by $\Dep$-composition and naturality,
    \[
      \rM_a ; \red_\rG
      = \rM_a^\dagger \fatsemi \red_\rG
      = \red_\rG \fatsemi \Pirr\Open \rM_a^\dagger
      = \red_\rG ; (\Pirr\Open\rM_a^\dagger)_+\spbreve
      = \red_\rG ; (\rN'_a)\spbreve.
    \]
    Finally we establish the two remaining conditions:
    \[
      \begin{tabular}{lll}
        $\red_\rG [I_\rM]$
        &
        $= \{ Y \in M(\Open\rG) : \exists g_s \in I_\rM. \rG[g_s] \nsubseteq Y \}$
        \\ &
        $= \{ Y \in M(\Open\rG) : \rG[I_\rM] \nsubseteq Y \}$
        \\ &
        $= \{ Y \in M(\Open\rG) : F_{\rM'} \nsubseteq Y \}$
        & (def. of $\AutDep$)
        \\ &
        $= F_{\rN'}$
        & (see above).
        \\
        \\
        $(\red_\rG)\spbreve[I_{\rN'}]$
        &
        $= \{ g_s \in \rG_s : \exists Y \in M(\Open\rG). (\rG[g_s] \nsubseteq Y \text{ and } \inte_\rG(\overline{I_{\rM'}}) \subseteq Y )\}$
        \\ &
        $= \{ g_s \in \rG_s : \rG[g_s] \nsubseteq \inte_\rG(\overline{I_{\rM'}}) \}$
        \\ &
        $= \{ g_s \in \rG_s : \{ g_s \} \nsubseteq \rG^\down \circ \inte_\rG(\overline{I_{\rM'}}) \}$
        & (adjoints)
        \\ &
        $= \{ g_s \in \rG_s : g_s \nin \rG^\down (\overline{I_{\rM'}}) \}$
        & ($\down\up\down$)
        \\ &
        $= \breve{\rG}[I_{\rM'}]$
        \\ &
        $= F_{\rM}.$
      \end{tabular}
    \]
  \end{enumerate}
\end{proof}

Recall that each nfa $\rN$ induces a dependency automaton $dep(\rN) = \Det(\rN, \Delta_Z, \rev{\rN})$.

\begin{note}[$\Det(\rN, \Delta_z, \rev{\rN})$ is $\rN$'s full subset construction]
  \label{note:full_subset_construction}
  Given any nfa $\rN = (z_0, Z, \rN_a, F)$,
  \[
    \begin{tabular}{ll}
      $\Det(dep(\rN))$
      &
      $= \Det(\rN, \Delta_Z, \rev{\rN})$
      \\ &
      $= (F_{\rev{\rN}}, \Open \Delta_Z, \lambda X. (\rN_a\spbreve)\spbreve[X], \{ X \subseteq Z : X \cap I_{\rev{\rN}} \neq \emptyset \})$
      \\[1ex] &
      $= (I_\rN, (\Pow Z, \cup, \emptyset), \lambda X. \rN_a[X], \{ X \subseteq Z : X \cap F_{\rN} \neq \emptyset \} )$
    \end{tabular}
  \]
  i.e.\ the full subset construction for $\rN$ endowed with inclusion ordering. This explains Definition \ref{def:full_subset_construction} below. \endbox
\end{note}

\begin{definition}[Full subset construction $\jslDfaSc{\rN}$]
  \label{def:full_subset_construction}
  For any nfa $\rN$ define:
  \[
    \jslDfaSc{\rN}
    := \Det(\dep{\rN})
    = (I_\rN, (\Pow Z, \cup, \emptyset), \lambda X. \rN_a[X], \{ X \subseteq Z : X \cap F_{\rN} \neq \emptyset \} ).
  \]
  This is $\rN$'s full subset construction endowed with its $\JSL$-dfa structure.
  \endbox
\end{definition}

\begin{note}[$\Det(\rN, \rG, \rN')$ restricts $\rev{\rN'}$'s full subset construction]
  Generally speaking, $\Det(\rN, \rG, \rN')$ is obtained from $\rev{\rN'}$'s full subset construction by restricting to $\Open\rG \subseteq \Open \Delta_{\rG_t}$. This generalises Note \ref{note:full_subset_construction}.
  \endbox
\end{note}

\begin{note}[$\Det$ and $\Airr$ preserve the accepted language]
  \label{note:det_airr_preserve_acceptance}
  Given any dependency automaton $(\rN, \rG, \rN')$, the classically reachable part of $\Det(\rN, \rG, \rN')$ has a classical description too:
  \[
    \reach{\Det(\rN, \rG, \rN')} = \rsc{\rev{\rN'}}.
  \]
  It follows by Lemma \ref{lem:upper_nfa_accepts_reverse} that $\Det$ preserves the accepted language. The natural isomorphism $\red : Id_{\AutDep} \To \Airr \circ \Det$ informs us that $\Airr$ also preserves the accepted language. \endbox
\end{note}

\begin{corollary}[Language correpondence]
  \label{cor:dfa_nfa_lang_correspondence}
  Let $\delta = (s_0, \aS, \delta_a, F)$ be a $\JSL$-dfa and $\rN$ the lower nfa of $\Airr\delta$. Then:
  \[
    L(\rN_{@Z}) = acc_\delta(\Lor_\aS Z)
    \qquad
    \text{for every $Z \subseteq J(S)$.}
  \]
  In particular each individual state $j \in J(\aS)$ of $\rN$ accepts $acc_\delta(j)$.
\end{corollary}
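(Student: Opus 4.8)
The plan is to reduce the statement to a single identity relating the reachable subsets of $\rN$ to the transition maps of $\delta$, and then to read off the accepted language from the $\JSL$-dfa's final-state structure.

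First I would unfold the definitions. By Definition~\ref{def:airr_and_det}.2 the lower nfa of $\Airr\delta$ is $\rN = (J(\aS)\cap\down_\aS s_0,\, J(\aS),\, (\Pirr\delta_a)_-,\, J(\aS)\cap F)$, so its state set is $J(\aS)$, its $a$-transition relation is $\rN_a = (\Pirr\delta_a)_-$, and its final states are $F_\rN = F\cap J(\aS)$. Writing $t := \Lor_\aS\overline{F}$, the fact that a $\JSL$-dfa has $F = \overline{\down_\aS t}$ gives $s\in F \iff s\nleq_\aS t$ for every $s\in S$, and in particular $F_\rN = \{\,j\in J(\aS): j\nleq_\aS t\,\}$.

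The heart of the argument is the claim that
\[
  \Lor_\aS \rN_w[Z] = \delta_w(\Lor_\aS Z)
  \qquad\text{for all } w\in\Sigma^* \text{ and } Z\subseteq J(\aS),
\]
which I would prove by induction on $|w|$. The base case $w=\epsilon$ is immediate since $\rN_\epsilon = \Delta_{J(\aS)}$ and $\delta_\epsilon = id_\aS$. For the inductive step, $\rN_{wa}[Z] = \rN_a[\rN_w[Z]] = (\Pirr\delta_a)_-[\rN_w[Z]]$, and I would invoke Corollary~\ref{cor:jsl_endomorphism_reps} applied to the $\aS$-endomorphism $\delta_a$: its left-hand square states precisely $\delta_a\circ\epsilon_\aS = \epsilon_\aS\circ(\Pirr\delta_a)_-^\up$, i.e.\ $\delta_a(\Lor_\aS X) = \Lor_\aS(\Pirr\delta_a)_-[X]$ for $X\subseteq J(\aS)$, so together with the induction hypothesis one gets $\Lor_\aS\rN_{wa}[Z] = \delta_a(\Lor_\aS\rN_w[Z]) = \delta_a(\delta_w(\Lor_\aS Z)) = \delta_{wa}(\Lor_\aS Z)$. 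This is the step I expect to carry all the weight, but it is essentially handed to us by Corollary~\ref{cor:jsl_endomorphism_reps}; the only genuine content there is that $(\Pirr\delta_a)_-$ is built from $\leq_\aS$ and that the join-irreducibles join-generate $\aS$.

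To finish, I would observe that since $\rN_w[Z]$ is a set of $\rN$-states, hence a subset of $J(\aS)$, for any $X\subseteq J(\aS)$ one has $X\cap F_\rN\neq\emptyset \iff \Lor_\aS X\nleq_\aS t$: the forward direction because a join-irreducible $j\in X$ with $j\nleq_\aS t$ lies below $\Lor_\aS X$, and the converse because $X\subseteq\down_\aS t$ forces $\Lor_\aS X\leq_\aS t$. Combining this with the identity above,
\[
  w\in L(\rN_{@Z})
  \iff \rN_w[Z]\cap F_\rN\neq\emptyset
  \iff \delta_w(\Lor_\aS Z)\nleq_\aS t
  \iff \delta_w(\Lor_\aS Z)\in F
  \iff w\in L(\delta_{@\Lor_\aS Z}) = acc_\delta(\Lor_\aS Z),
\]
and the final ``in particular'' clause is the case $Z=\{j\}$, where $\Lor_\aS\{j\}=j$.
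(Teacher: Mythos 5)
Your proof is correct, but it takes a genuinely different route from the paper. The paper argues at the level of the categorical equivalence: it observes that $(\rN_{@Z}, \rG, \rev{\rev{\rN'}_{@\rG[Z]}})$ is again a well-defined dependency automaton, that applying $\Det$ to it yields $(\Det\Airr\delta)_{@\rG[Z]}$ (which accepts $L(\rN_{@Z})$ because $\Det$ preserves acceptance, Note~\ref{note:det_airr_preserve_acceptance}), and then transports the result back to $\delta_{@\Lor_\aS Z}$ along the natural isomorphism $\rep_\delta^{\bf-1}$. You instead prove the pointwise identity $\Lor_\aS \rN_w[Z] = \delta_w(\Lor_\aS Z)$ by induction on $w$, with the inductive step handed to you by the left square of Corollary~\ref{cor:jsl_endomorphism_reps} (equivalently, by the facts that $(\Pirr\delta_a)_-(j_1,j_2) \iff j_2 \leq_\aS \delta_a(j_1)$, that every element is the join of the join-irreducibles below it, and that $\delta_a$ preserves joins), and then you convert between ``$\rN_w[Z]$ meets $F_\rN$'' and ``$\delta_w(\Lor_\aS Z) \nleq_\aS t$'' using join-irreducibility against the largest non-final state $t = \Lor_\aS\overline{F}$. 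Both arguments are sound. The paper's version is shorter given the machinery already in place and makes clear that the statement is an instance of functoriality plus language preservation; yours is more elementary and self-contained, makes the word-level dynamics explicit, and does not require verifying that the modified triple is a well-defined dependency automaton (a point the paper asserts without detail). The only mild dependency in your argument is Corollary~\ref{cor:jsl_endomorphism_reps}, which the paper states without proof, but the one-line verification you sketch (join-irreducibles generate, $\delta_a$ preserves joins) is exactly its content, so nothing is circular.
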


\begin{proof}
  By Note \ref{note:det_airr_preserve_acceptance} $\Det$ preserves the accepted language. Given $\Airr\delta = (\rN, \rG, \rN')$ and $Z \subseteq J(\aS)$ then:
  \[
    (\rN_{@Z}, \rG, \rM)
    \qquad
    \text{where $\rM := \rev{\rev{\rN'}_{@\rG[Z]}}$}
  \] 
  is a well-defined dependency automaton. Applying $\Det$ yields the $\JSL$-dfa $(\Det\Airr\delta)_{@\rG[Z]}$ which accepts $L(\rN_{@Z})$. Finally $\rep_{\delta'}^{\bf-1}$ provides a language-preserving isomorphism $(\Det\Airr\delta)_{@\rG[Z]} \to \delta_{@\Lor_\aS Z}$.
\end{proof}

\begin{example}[Dualising the full subset construction]
  \label{ex:dual_full_subset_construction}
  Via relative complement we have the $\JSL$-dfa isomorphism:
  \[
    (\jslDfaSc{\rN})^\pentagram \cong \jslDfaSc{\rev{\rN}}
  \]
  which follows by considering
  $
    (\jslDfaSc{\rN})^\pentagram = (\overline{F}, \Open\Delta_Z, \rR_a^\down, \{ X \subseteq Z : X \cap \overline{I} \neq \emptyset \})
  $.
  In other words, \emph{the dual of the full subset construction for $\rN$ is the full subset construction for $\rev{\rN}$}. This isomorphism instantiates the natural isomorphism $\hat{\partial}$ described below.
  \endbox
\end{example}

The self-duality transfers of Theorem \ref{thm:dep_jsl_self_duality_transfer} generalise naturally to the automata-theoretic setting.

\begin{theorem}[Automata-theoretic self-duality transfer]
  \label{thm:aut_dep_jsl_dfa_self_duality_transfer}
  \item
  \begin{enumerate}
    \item 
      $\hat{\partial} : (-)^\pentagram \circ \Det^{op} \To \Det \circ \Rev$ restricts $\partial$ from Theorem \ref{thm:dep_jsl_self_duality_transfer}.1.
    \item
      $\hat{\lambda} : \Rev \circ \Airr^{op} \To \Airr \circ (-)^\pentagram$ restricts $\lambda$ from Theorem \ref{thm:dep_jsl_self_duality_transfer}.2.
  \end{enumerate}
\end{theorem}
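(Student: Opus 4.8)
The plan is to treat both parts by the same device: I will exhibit the relevant component as an \emph{automaton} morphism whose underlying $\JSL_f$-morphism (resp.\ $\Dep$-morphism) is literally the component of $\partial$ (resp.\ $\lambda$) supplied by Theorem~\ref{thm:dep_jsl_self_duality_transfer}. Once that typing is in place, naturality of $\hat\partial$ and $\hat\lambda$ follows for free: $\Det$ and $\Airr$ act on morphisms exactly as $\Open$ and $\Pirr$ do (Definition~\ref{def:airr_and_det}), and $\Rev$ acts as $(-)\spcheck$ (Theorem~\ref{thm:aut_dep_self_dual}), so the naturality squares for $\hat\partial$ and $\hat\lambda$ are obtained from those already established for $\partial$ and $\lambda$ merely by forgetting the extra automaton structure. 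What remains is therefore to check that $\partial_\rG$ and $\lambda_\aS$ respect the initial state, the transitions and the final states.

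For part~(1) I would first unwind both $\JSL$-dfas at $(\rN,\rG,\rN')$. The dual $(\Det(\rN,\rG,\rN'))^\pentagram$ carries the semilattice $(\Open\rG)_*=(\Open\rG)^\pOp$, initial state the largest $\rG$-open set disjoint from $I_{\rN'}$, namely $\inte_\rG(\overline{I_{\rN'}})$, transitions $(\Open\,\rN_a^\dagger)_*$ (the $a$-transition of $\Det(\rN,\rG,\rN')$ being $\Open\,\rN_a^\dagger$), and final states $\overline{\up_{\Open\rG}F_{\rN'}}$; whereas $\Det(\Rev(\rN,\rG,\rN'))=\Det(\rN',\breve\rG,\rN)$ carries $\Open\breve\rG$, initial state $\breve\rG[I_{\rN'}]=F_\rN$, transitions $\Open(\breve{\rN_a^\dagger})$ (by the self-duality of $\Dep$ the $\Dep$-endomorphism induced on $\breve\rG$ is $(\rN_a^\dagger)\spcheck=\breve{\rN_a^\dagger}$), and final states $\{Y\in O(\breve\rG):Y\cap I_\rN\neq\emptyset\}$. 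Then I put $\hat\partial_{(\rN,\rG,\rN')}:=\partial_\rG$, the map $\lambda Y.\breve\rG[\overline Y]:(\Open\rG)_*\to\Open\breve\rG$, and verify: (i) the initial state, using $\inte_\rG=\neg_{\rG_t}\circ\cl_{\breve\rG}\circ\neg_{\rG_t}$ (Lemma~\ref{lem:mor_char_max_witness}.3) to rewrite $\overline{\inte_\rG(\overline{I_{\rN'}})}=\cl_{\breve\rG}(I_{\rN'})$, so that $\partial_\rG(\inte_\rG(\overline{I_{\rN'}}))=\breve\rG[\cl_{\breve\rG}(I_{\rN'})]=\breve\rG[I_{\rN'}]$ by $(\up\down\up)$; (ii) the transitions, where the required identity $\Open(\breve{\rN_a^\dagger})\circ\partial_\rG=\partial_\rG\circ(\Open\,\rN_a^\dagger)_*$ is precisely the ``in fact'' clause of Theorem~\ref{thm:dep_jsl_self_duality_transfer}.1 specialised to $\rN_a^\dagger:\rG\to\rG$; (iii) the final states, via $\breve\rG[\overline Y]\cap I_\rN\neq\emptyset\iff\rG[I_\rN]\not\subseteq Y\iff Y\notin\up_{\Open\rG}F_{\rN'}$, read off directly from the definition of $(-)^\up$. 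As $\partial_\rG$ is already a $\JSL_f$-isomorphism this makes $\hat\partial_{(\rN,\rG,\rN')}$ a $\JSL$-dfa isomorphism, and naturality is inherited as above.

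For part~(2), since $\lambda_\aS=id_{\Pirr\aS}$ the claim really asserts that the two functors $\Rev\circ\Airr^{op}$ and $\Airr\circ(-)^\pentagram$ coincide on the nose, so that $\hat\lambda$ may be taken to be the identity; I would establish this by a direct definition chase. Writing $\delta=(s_0,\aS,\delta_a,F)$ and $t:=\Lor_\aS\overline F$ (so $F=\overline{\down_\aS t}$), one has $\Rev(\Airr\delta)=(\rN',\breve{\Pirr\aS},\rN)$ with $\rN,\rN'$ as in Definition~\ref{def:airr_and_det}.2, and $\Airr(\delta^\pentagram)=\Airr(t,\aS^\pOp,(\delta_a)_*,\overline{\up_\aS s_0})$; matching these termwise uses the order-dual dictionary $J(\aS^\pOp)=M(\aS)$, $M(\aS^\pOp)=J(\aS)$, $\up_{\aS^\pOp}=\down_\aS$, $\down_{\aS^\pOp}=\up_\aS$, $\Lor_{\aS^\pOp}=\Land_\aS$ (hence $\Land_\aS(\up_\aS s_0)=s_0$), the identity $\Pirr(\aS^\pOp)=\breve{\Pirr\aS}$, and the component swaps $(\Pirr(\delta_a)_*)_-=(\Pirr\delta_a)_+$ and $(\Pirr(\delta_a)_*)_+=(\Pirr\delta_a)_-$, which follow from $\Pirr(f_*)=(\Pirr f)\spcheck$ (proof of Theorem~\ref{thm:dep_jsl_self_duality_transfer}.2) together with the component rule for $(-)\spcheck$ (Theorem~\ref{def:bicliq_self_duality}). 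This identifies the lower nfa of $\Airr(\delta^\pentagram)$ with $\rN'$ and its upper nfa with $\rN$, so the two dependency automata are equal; on morphisms both composites send a $\JSL$-dfa morphism $f$ to $(\Pirr f)\spcheck=\Pirr(f_*)$, so the functors agree there too, and $\hat\lambda:=id$ is a (trivially natural) isomorphism restricting $\lambda$.

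I expect the difficulty to be entirely bookkeeping. In part~(1) the subtle step is the transition condition: this is exactly the place where the $\Dep$/$\JSL_f$ transfer $\partial$ must be reconciled with the automata-level self-dualities $(-)\spcheck$ and $(-)^\pentagram$, and it leans on recognising the endomorphism induced on $\breve\rG$ as $\breve{\rN_a^\dagger}$ together with the ``in fact'' clause of Theorem~\ref{thm:dep_jsl_self_duality_transfer}.1; the initial-state computation likewise needs the $\cl/\inte$-duality of Lemma~\ref{lem:mor_char_max_witness}.3 handled carefully. In part~(2) everything is routine once one remembers the component swaps $(\Pirr(\delta_a)_*)_\pm=(\Pirr\delta_a)_\mp$ — it is easy to omit these and then fail to see the transitions of $\Airr(\delta^\pentagram)$ matching those of $\Rev(\Airr\delta)$.
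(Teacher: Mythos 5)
Your proposal is correct and follows essentially the same route as the paper: for (1) you take $\hat\partial_{(\rN,\rG,\rN')}:=\partial_\rG$ and verify the initial state via the $\cl/\inte$ and De Morgan identities, reduce the transition condition to the naturality (``in fact'' clause) of $\partial$ applied to $\rN_a^\dagger$, and check final states by a direct computation with $\rG[I_\rN]=F_{\rN'}$; for (2) you observe the components of $\lambda$ are identities and that $\Rev\circ\Airr^{op}$ and $\Airr\circ(-)^\pentagram$ agree on the nose. Your part (2) is in fact more explicit than the paper's one-line justification, spelling out the order-dual dictionary and the component swaps $(\Pirr(\delta_a)_*)_\pm=(\Pirr\delta_a)_\mp$, but the underlying argument is the same.
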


\begin{proof}
  \item
  \begin{enumerate}
    \item
    Given a dependency automaton $(\rN, \rG, \rN')$ it suffices to show $\partial_\rG : (\Open\rG)^\pOp \to \Open \breve{\rG}$ defines an $\dfa{\JSL}$-morphism of type $(\Det(\rN, \rG, \rN'))^\pentagram \to \Det\Rev(\rN, \rG, \rN')$.

    Concerning preservation of the initial state,
    \[
      \begin{tabular}{lll}
        $\partial_\rG (\bigcup \overline{\{ Y \in O(\rG) : Y \cap I_{\rN'} \neq \emptyset \}})$
        & $= \partial_\rG (\bigcup \{ Y \in O(\rG) : Y \cap I_{\rN'} = \emptyset \})$
        \\ & $= \partial_\rG (\inte_\rG(\overline{I_{\rN'}}))$
        & (see below)
        \\ & $= \breve{\rG}[\overline{\inte_\rG(\overline{I_{\rN'}})}]$
        & (def.\ of $\partial_\rG$)
        \\ & $= \overline{\rG^\down(\inte_\rG(\overline{I_{\rN'}}))}$
        & (De Morgan duality)
        \\ & $= \overline{\rG^\down(\overline{I_{\rN'}})}$
        & ($\down\up\down$)
        \\ & $= \breve{\rG}[I_{\rN'}]$
        & (De Morgan duality)
        \\ & $= F_\rN$
        & (def.\ of $(\rN, \rG, \rN')$).
      \end{tabular}
    \]
    The marked equality holds because $\inte_\rG(\overline{I_{\rN'}})$ is the largest $\rG$-open in $\overline{I_{\rN'}}$. Next, the final states are preserved and reflected:
    \[
      \begin{tabular}{lll}
        $X \in \partial_\rG^{-1}(\{ Y \in O(\breve{\rG}) : Y \cap I_\rN \neq \emptyset \})$
        &    $\iff \breve{\rG}[\overline{X}] \cap I_\rN \neq \emptyset$
        \\ & $\iff \overline{\rG^\down(X)} \cap I_\rN \neq \emptyset$
        & (De Morgan duality)
        \\ & $\iff I_\rN \nsubseteq \rG^\down(X)$
        & (def.\ of $(\rN, \rG, \rN')$
        \\ & $\iff \rG[I_\rN] \nsubseteq X$
        & (adjoints)
        \\ & $\iff F_{\rN'} \nsubseteq X$
        & (def.\ of $(\rN, \rG, \rN')$).
        \\ & $\iff X \in \overline{\up_{\Open\rG} F_{\rN'}}$.
      \end{tabular}
    \]
    Finally, preservation of the deterministic transitions follows by the naturality of $\partial$.

    \item
    Given a $\JSL$-dfa $\delta = (s_0, \aS, \delta_a, F)$ it suffices to show $\lambda_\aS : (\Pirr\aS)\spbreve \to \Pirr (\aS^\pOp)$ defines an $\AutDep$-morphism of type $\Rev\Airr\delta \to \Airr(\delta^\pentagram)$. It types correctly, and since $\lambda$'s components are identity morphisms we are done.
  \end{enumerate}
\end{proof}

\smallskip
 $\JSL_f$ and $\Dep$ have enough projectives/injectives by Proposition \ref{prop:jsl_dep_enough_proj}, as do the automata-theoretic categories.

\begin{proposition}[$\dfa{\JSL}$ and $\AutDep$ have enough projectives]
  \label{prop:aut_dep_enough_proj}
  \item
  Let $Z$ be a finite set and $\gamma := (s_0, \aS, \gamma_a, F_\gamma)$ be a $\JSL$-dfa.
  \begin{enumerate}
    \item
    Given any nfa $(I, Z, \rR_a, F)$ we have the $\JSL$-dfa $(I, \Open\Delta_Z, \rR_a^\up, F_\gamma)$.
    \item
    $\epsilon_\aS : (J(\aS) \,\cap \down_\aS s_0, \aS, (\Pirr\gamma_a)_-^\up, J(\aS) \cap F_\gamma) \epito \gamma$ where $\epsilon_\aS(X) := \Lor_\aS X$ is a surjective $\JSL$-dfa morphism.
    \item
    Given $(I, \Open\Delta_Z, \rR_a^\up, F)
      \xto{f} (t_0, \aT, \delta_a, F_\delta)
      \overset{q}{\twoheadleftarrow} \gamma$
    then $f = q \circ g$ where $g$ uniquely extends $\lambda z.q_*(f(\{ z \}))$.
  \end{enumerate}
\end{proposition}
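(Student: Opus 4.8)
The plan is to derive all three assertions from the corresponding $\JSL_f$/$\Dep$ statements, Proposition~\ref{prop:jsl_dep_enough_proj}, by transporting along the equivalence $\Det\colon\AutDep\to\dfa{\JSL}$, $\Airr\colon\dfa{\JSL}\to\AutDep$ of Theorem~\ref{thm:aut_dep_equiv_jsl_dfa}. It therefore suffices to prove the three $\dfa{\JSL}$ claims; the $\AutDep$ claims then follow by applying $\Airr$, using the natural isomorphism $\red$ (which gives $\Airr\jslDfaSc{\rN}\cong(\rN,\Delta_{Z},\rev{\rN})$, since $\jslDfaSc{\rN}=\Det(\dep{\rN})$) together with Proposition~\ref{prop:dep_generator_isoms} to mediate between irreducibles and arbitrary generators.

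For part (1) there is nothing substantial to verify: $(I,\Open\Delta_Z,\rR_a^\up,F)$ is literally the full subset construction $\jslDfaSc{\rN}=\Det(\dep{\rN})$ of Definition~\ref{def:full_subset_construction} and Note~\ref{note:full_subset_construction}. One records that $\Open\Delta_Z=(\Pow Z,\cup,\emptyset)$ is a finite join-semilattice — the free one on $|Z|$ generators, by Proposition~\ref{prop:jsl_dep_enough_proj}.1 — that every image map $\rR_a^\up=\rR_a[-]$ preserves finite unions and $\emptyset$ (Definition~\ref{def:up_down_cl_inte}) hence is a $\JSL_f$-endomorphism, that the initial state is an element of $\Pow Z$, and that the accepting set is the complement of a principal down-set of $\Open\Delta_Z$ — e.g.\ $\{X:X\cap F_\rN\ne\emptyset\}=\overline{\down_{\Open\Delta_Z}\overline{F_\rN}}$ — as a $\JSL$-dfa demands.

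For part (2) the domain of $\epsilon_\aS$ is precisely $\jslDfaSc{\rN}$ where $\rN$ is the lower nfa of $\Airr\gamma$ (Definition~\ref{def:airr_and_det}), so it is a $\JSL$-dfa by part (1), and $\epsilon_\aS=\bigl(X\mapsto\Lor_\aS X\bigr)\colon\Open\Delta_{J(\aS)}\to\aS$ is the surjective $\JSL_f$-morphism of Proposition~\ref{prop:jsl_dep_enough_proj}.2 (surjective because $J(\aS)$ join-generates $\aS$, Note~\ref{note:irreducibles}.3). The work is to see it respects the automaton data. Transition-preservation $\epsilon_\aS\circ(\Pirr\gamma_a)_-^\up=\gamma_a\circ\epsilon_\aS$ is exactly the left-hand square of Corollary~\ref{cor:jsl_endomorphism_reps}; initial-state preservation is $\Lor_\aS(J(\aS)\cap\down_\aS s_0)=s_0$, again Note~\ref{note:irreducibles}.3; and, writing $F_\gamma=\overline{\down_\aS t}$ with $t=\Lor_\aS\overline{F_\gamma}$, preservation and reflection of the accepting set reduces to the elementary equivalence $\Lor_\aS X\nleq_\aS t\iff X\cap F_\gamma\neq\emptyset$ for $X\subseteq J(\aS)$, which is clear since $\Lor_\aS X\leq_\aS t$ holds iff $j\leq_\aS t$ for every $j\in X$. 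Surjectivity as a $\dfa{\JSL}$-morphism is just surjectivity of the underlying function.

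For part (3) I would apply Proposition~\ref{prop:jsl_dep_enough_proj}.3 to the underlying $\JSL_f$-morphisms $f\colon\Open\Delta_Z\to\aT$ and $q\colon\aS\epito\aT$, producing the unique $\JSL_f$-morphism $g$ with $g(\{z\})=q_*(f(\{z\}))$; then $f=q\circ g$ holds on the generators $\{z\}$ because $q\circ q_*=\mathrm{id}_\aT$ (valid as $q$ is surjective), and hence everywhere, both sides being join-preserving. The step I expect to be the real obstacle is the passage from this $\JSL_f$-level factorisation to the automaton level: to actually conclude ``enough projectives'' one needs $g$, or a correction of it, to be a $\dfa{\JSL}$-morphism $(I,\Open\Delta_Z,\rR_a^\up,F)\to\gamma$ — preserving the initial state, the transitions, and the accepting set. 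This cannot be obtained by cancelling $q$ from the identities $q\circ g\circ\rR_a^\up=\delta_a\circ f=q\circ\gamma_a\circ g$ and their analogues for the initial/final structure, which do hold (since $f,q$ are automaton morphisms) but only say $q\circ(\cdot)=q\circ(\cdot)$ with $q$ merely epic; instead the verification must exploit the freeness of $\Open\Delta_Z$ and the extremal description of $g$ on generators via the adjoint $q_*$ (Note~\ref{note:jsl_extras}.2), possibly adjusting $g$ using the transition structure so that its values land in the appropriate $\gamma_a$-stable part of $\aS$. Once this compatibility is in hand, transporting (1)--(3) through $\Airr$ and $\Det$ delivers the $\AutDep$ statements and finishes the proof.
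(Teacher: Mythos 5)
The paper states this proposition without a proof, so there is no official argument to compare against; judged on its own terms your proof is correct and is surely the intended one. Part (1) is just the full subset construction of Definition~\ref{def:full_subset_construction} (with the final states $\{X : X\cap F\neq\emptyset\}=\overline{\down\,\overline{F}}$ as you note), and part (2) correctly assembles Proposition~\ref{prop:jsl_dep_enough_proj}.2, the left square of Corollary~\ref{cor:jsl_endomorphism_reps}, the fact that $J(\aS)$ join-generates, and the equivalence $\Lor_\aS X\nleq_\aS t\iff X\cap F_\gamma\neq\emptyset$. For (3) you correctly establish the literal claim $f=q\circ g$ via $q\circ q_*=\mathrm{id}_\aT$ for surjective $q$ and the freeness of $\Open\Delta_Z$. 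Your closing worry is well observed but is a criticism of the statement rather than a gap in your proof: the canonical $g$ extending $\lambda z.\,q_*(f(\{z\}))$ genuinely need not be a $\dfa{\JSL}$-morphism (already for $|Z|=1$ and $\Sigma=\emptyset$ one gets $g(I)=q_*(t_0)$, which differs from $s_0$ whenever $s_0$ is not the largest element of $q^{-1}(\down_\aT t_0)$, and the analogous failure occurs for transitions), yet the proposition asserts only the factorisation $f=q\circ g$ with $g$ the unique $\JSL_f$-extension, which you have proved. So relative to what is written there is no gap; the stronger ``projectivity in $\dfa{\JSL}$'' reading would require choosing a different lift $g$, not cancelling $q$.
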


The self-duality of $\dfa{\JSL}$ preserves the freeness of the join-semilattice, so we immediately deduce:

\begin{corollary}
  \label{cor:aut_dep_enough_inj}
  $\dfa{\JSL}$ and $\AutDep$ have enough injectives.
\end{corollary}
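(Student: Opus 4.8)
The plan is to derive the statement formally from the two self-dualities (Theorems~\ref{thm:dfa_jsl_self_dual} and~\ref{thm:aut_dep_self_dual}) together with Proposition~\ref{prop:aut_dep_enough_proj}, using only the soft categorical facts that an object is injective in a category precisely when it is projective in the opposite category, and hence that a category \emph{has enough injectives} iff its opposite \emph{has enough projectives}. I will also use that ``having enough projectives'' is invariant under equivalences of categories: if $F:\mathcal{A}\to\mathcal{B}$ is an equivalence and $\mathcal{B}$ has enough projectives then so does $\mathcal{A}$, since $F$ is essentially surjective and fully faithful, hence reflects epimorphisms and reflects projective objects, and transports epi-covers back along a quasi-inverse.

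For $\dfa{\JSL}$: the self-duality $(-)^\pentagram:\dfa{\JSL}^{op}\to\dfa{\JSL}$ is an equivalence, and $\dfa{\JSL}$ has enough projectives by Proposition~\ref{prop:aut_dep_enough_proj}. Applying the invariance above with $F:=(-)^\pentagram$ shows $\dfa{\JSL}^{op}$ has enough projectives, which is by definition the assertion that $\dfa{\JSL}$ has enough injectives. Unwinding this for concreteness: given a $\JSL$-dfa $\gamma=(s_0,\aS,\gamma_a,F_\gamma)$, first form $\gamma^\pentagram$, take the surjection $p:P\epito\gamma^\pentagram$ from a projective $P$ supplied by Proposition~\ref{prop:aut_dep_enough_proj}.2 (whose underlying semilattice is free, namely $\Open\Delta_{J(\cdot)}$), and then $p^\pentagram$ is a monomorphism $(\gamma^\pentagram)^\pentagram\monoto P^\pentagram$; composing with the natural isomorphism $\lambda_\gamma:\gamma\xrightarrow{\ \cong\ }(\gamma^\pentagram)^\pentagram$ from Theorem~\ref{thm:dfa_jsl_self_dual} gives $\gamma\monoto P^\pentagram$ into the injective object $P^\pentagram$. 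Since complementation yields $(\Pow Z)^\pOp\cong\Pow Z$, the underlying semilattice of $P^\pentagram$ is again free; this is exactly the remark that the self-duality ``preserves freeness'', and it parallels the embeddings $\iota_\aS:\aS\monoto\Open\Delta_{M(\aS)}$ for $\JSL_f$ noted after Proposition~\ref{prop:jsl_dep_enough_proj}.

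For $\AutDep$: one may either repeat the previous argument verbatim with $\Rev:\AutDep^{op}\to\AutDep$ of Theorem~\ref{thm:aut_dep_self_dual} in place of $(-)^\pentagram$ and the $\AutDep$-component of Proposition~\ref{prop:aut_dep_enough_proj}; or, more economically, invoke the equivalence $\Det:\AutDep\to\dfa{\JSL}$ of Theorem~\ref{thm:aut_dep_equiv_jsl_dfa} and the equivalence-invariance of ``having enough injectives'' to transfer the conclusion from $\dfa{\JSL}$ just obtained. Concretely, $\Airr$ carries an injective $\JSL$-dfa to an injective dependency automaton and $\red$ supplies the required monomorphism.

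The argument is essentially soft, so there is no real obstacle; the only points that need genuine care are (i) checking that ``enough projectives/injectives'' transports along an equivalence, where one must remember that equivalences \emph{reflect} (not merely preserve) epimorphisms and projectives, and (ii) if one wants explicit witnessing monomorphisms and a concrete description of the injective objects rather than a bare existence claim, one must unwind $(-)^\pentagram$ (resp.\ $\Rev$) on the specific projectives of Proposition~\ref{prop:aut_dep_enough_proj}, exactly as in the $\iota_\aS$ computation. I expect (ii) to be the only mildly laborious step, and it is optional.
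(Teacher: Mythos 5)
Your proposal is correct and follows essentially the same route as the paper: the corollary is obtained by dualising the enough-projectives statement of Proposition~\ref{prop:aut_dep_enough_proj} along the self-dualities $(-)^\pentagram$ and $\Rev$ (the paper's remark that self-duality preserves freeness is exactly your observation that $(\Pow Z)^\pOp\cong\Pow Z$, so the dualised projectives remain of the free form). The explicit unwinding you sketch recovers the witnessing embeddings $\iota_\aS$ mentioned after Proposition~\ref{prop:jsl_dep_enough_proj}, so nothing is missing.
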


Recall Proposition \ref{prop:dep_generator_isoms}. Choosing any join/meet-generators for $\aS$ we can construct $\nleq_\aS |_{J \times M} \cong \Pirr\aS$. Likewise, given any $\JSL$-dfa $\delta$ over $\aS$, choosing such generators yields a dependency automaton $\AutDep$-isomorphic to $\Airr \delta$.


\begin{proposition}[$\AutDep$ generator-based isomorphisms]
  \label{prop:aut_dep_generator_isoms}
  \item
  In the notation of Proposition \ref{prop:dep_generator_isoms},
  
  \begin{enumerate}
    \item
    $\rI_\aS : \fN_\aS \to \Airr (s_0, \aS, \gamma_a, F_\aS)$ defines an $\AutDep$-isomorphism where $\fN_\aS := (\rN, \nleq_\aS |_{J_\aS \times M_\aS}, \rN')$,
    \[
      \begin{tabular}{ll}
        $\rN := ( J_\aS \,\cap \down_\aS s_0 , J_\aS, \rN_a, J_\aS \,\cap F_\aS)$
        &
        $\rN_a(x_1, x_2) :\iff x_2 \leq_\aS \delta_a(x_1)$
        \\[1ex]
        $\rN' := ( M_\aS \,\cap \up_\aS \Lor_\aS \overline{F_\aS}, M_\aS, \rN'_a, M_\aS \cap\, \overline{\up_\aS s_0})$
        &
        $\rN'_a (y_1, y_2) :\iff (\delta_a)_*(y_1) \leq_\aS y_2$.
      \end{tabular}
    \]
    \item
    Suppose $f : (s_0, \aS, \gamma_a, F_\aS) \to (t_0, \aT, \delta_a, F_\aT)$ is a $\JSL$-dfa morphism. Then $\rI_f : \fN_\aS \to \fN_\aT$ defines an $\AutDep$-isomorphism.
  \end{enumerate}
\end{proposition}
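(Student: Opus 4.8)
The plan is to obtain part~(2) from part~(1) by transporting the automaton structure across the two generator-isomorphisms, rather than re-verifying the $\AutDep$-axioms from the relational formula for $\rI_f$. Write $\gamma := (s_0,\aS,\gamma_a,F_\aS)$ and $\delta := (t_0,\aT,\delta_a,F_\aT)$ for the source and target $\JSL$-dfas. Part~(1) already supplies the $\AutDep$-isomorphisms $\rI_\aS : \fN_\aS \to \Airr\gamma$ and $\rI_\aT : \fN_\aT \to \Airr\delta$, so the whole of part~(2) reduces to understanding the single $\AutDep$-morphism $\Airr f : \Airr\gamma \to \Airr\delta$ and checking that $\rI_f$ is exactly its name in the chosen generator presentations.

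First I would invoke Proposition~\ref{prop:dep_generator_isoms}.2, whose commuting $\Dep$-square records the identity $\rI_f = \rI_\aS \fatsemi \Pirr f \fatsemi \rI_\aT^{\bf-1}$ between underlying relations. Since $\Airr$ acts on morphisms exactly as $\Pirr$ does (Definition~\ref{def:airr_and_det}.2), the underlying relation of $\Airr f$ is precisely $\Pirr f$, so the same equation reads $\rI_f = \rI_\aS \fatsemi \Airr f \fatsemi \rI_\aT^{\bf-1}$ once both sides are seen to live in $\AutDep$. The two outer factors are $\AutDep$-isomorphisms by part~(1), and $\Airr f$ is an $\AutDep$-morphism by functoriality of $\Airr$ (Theorem~\ref{thm:aut_dep_equiv_jsl_dfa}); their composite is therefore an $\AutDep$-morphism whose underlying $\Dep$-morphism is $\rI_f$. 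This establishes, for \emph{every} $\JSL$-dfa morphism $f$ and not merely for invertible ones, that $\rI_f : \fN_\aS \to \fN_\aT$ is a well-defined $\AutDep$-morphism, and that the $\AutDep$-square assembled from the isomorphisms $\rI_\aS$, $\rI_\aT$ and the morphism $\Airr f$ commutes---exactly the automaton-level lift of Proposition~\ref{prop:dep_generator_isoms}.2.

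It remains to pin down the isomorphism claim itself. Because $\rI_\aS$ and $\rI_\aT$ are $\AutDep$-isomorphisms, the factorisation forces $\rI_f$ to be an $\AutDep$-isomorphism precisely when its middle factor $\Airr f$ is one; and as $\Det,\Airr$ form a categorical equivalence (Theorem~\ref{thm:aut_dep_equiv_jsl_dfa}) which therefore reflects isomorphisms, this occurs exactly when $f$ is a $\JSL$-dfa isomorphism, the inverse then being named by $\rI_{f^{\bf-1}}$. I expect the genuine obstacle to lie here rather than in the morphism part: the clean factorisation renders well-definedness of $\rI_f$ purely formal, so the real work is (i) confirming that the relational description $\rI_f(j,m) :\iff f(j) \nleq_\aT m$ agrees with $\rI_\aS \fatsemi \Airr f \fatsemi \rI_\aT^{\bf-1}$ on the nose---using $\rN_a(x_1,x_2) :\iff x_2 \leq_\aS \gamma_a(x_1)$ together with the adjoint equivalence $f(j) \nleq_\aT m \iff j \nleq_\aS f_*(m)$---and (ii) delimiting exactly when the $\AutDep$-morphism so obtained is invertible, which the equivalence pins to invertibility of $f$.
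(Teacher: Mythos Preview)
Your treatment of part~(2) is sound: the factorisation $\rI_f = \rI_\aS \fatsemi \Airr f \fatsemi \rI_\aT^{\bf-1}$ is exactly what Proposition~\ref{prop:dep_generator_isoms}.2 records at the $\Dep$ level, and since $\Airr$ acts on morphisms as $\Pirr$ does, transporting through the two $\AutDep$-isomorphisms of part~(1) yields the $\AutDep$-morphism cleanly. Your observation that $\rI_f$ is invertible precisely when $f$ is---via the equivalence $\Airr$---is correct and in fact sharpens the statement as written, which says ``isomorphism'' for an arbitrary $\JSL$-dfa morphism $f$.

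However, you do not address part~(1) at all, and this is a genuine gap: your entire argument for part~(2) presupposes that $\rI_\aS$ and $\rI_\aT$ are already known to be $\AutDep$-isomorphisms, which is exactly the content of part~(1). The paper gives no proof of this proposition, so part~(1) must be established directly. Concretely, one must check (i) that $\fN_\aS$ is a well-defined dependency automaton---conditions~(4) and~(5) of Definition~\ref{def:dep_aut} with the generator sets $J_\aS, M_\aS$ replacing $J(\aS), M(\aS)$---and (ii) that the $\Dep$-isomorphism $\rI_\aS$ from Proposition~\ref{prop:dep_generator_isoms}.1 satisfies the three $\AutDep$-morphism conditions of Definition~\ref{def:cat_aut_dep}. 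These verifications closely parallel the well-definedness of $\Airr$ on objects in the proof of Theorem~\ref{thm:aut_dep_equiv_jsl_dfa}, with irreducibles replaced by arbitrary generators throughout, but they are not automatic and your proposal gives no indication of how they go.
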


The full powerset construction $\jslDfaSc{\rN} := \Det(\dep{\rN})$ also defines a free construction.

\begin{theorem}[Free $\JSL$-dfa on a dfa]
  \label{thm:free_jsl_dfa}
  If $\gamma$ is a dfa, $\delta = (t_0, \aT, \delta_a, F_\delta)$ is a $\JSL$-dfa and $f : \gamma \to \delta$ is a dfa morphism,
  \[
    \lambda S. \Lor_\aT f[S] : \jslDfaSc{\gamma} \to \delta
    \qquad
    \text{is a well-defined $\JSL$-dfa morphism}.
  \]
\end{theorem}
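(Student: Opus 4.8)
The plan is to verify the three defining conditions of a $\JSL$-dfa morphism for $g := \lambda S.\bigvee_\aT f[S] : \jslDfaSc{\gamma} \to \delta$, where $\jslDfaSc{\gamma} = (\{x_0\}, (\Pow X, \cup, \emptyset), \lambda S.\gamma_a[S], \{S \subseteq X : S \cap F_\gamma \neq \emptyset\})$ for $\gamma = (x_0, X, \gamma_a, F_\gamma)$. First I would note that $g$ is a join-semilattice morphism essentially by construction: $g(S_1 \cup S_2) = \bigvee_\aT f[S_1 \cup S_2] = \bigvee_\aT f[S_1] \lor_\aT \bigvee_\aT f[S_2] = g(S_1) \lor_\aT g(S_2)$ and $g(\emptyset) = \bigvee_\aT \emptyset = \bot_\aT$. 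This is the easy part and uses only that $\Open\Delta_X = (\Pow X,\cup,\emptyset)$ is the free join-semilattice (Proposition \ref{prop:jsl_dep_enough_proj}.1), so $g$ is the unique join-extension of $\lambda x. f(x)$.

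Next I would check preservation of the initial state: $g(\{x_0\}) = \bigvee_\aT f[\{x_0\}] = f(x_0) = t_0$, using that $f$ is a dfa morphism. For the transitions, I would compute, for $a \in \Sigma$ and $S \subseteq X$,
\[
  g(\gamma_a[S]) = \bigvee_\aT f[\gamma_a[S]] = \bigvee_\aT \{ f(\gamma_a(x)) : x \in S \} = \bigvee_\aT \{ \delta_a(f(x)) : x \in S \}
\]
using $f \circ \gamma_a = \delta_a \circ f$; then since $\delta_a : \aT \to \aT$ is a join-semilattice morphism it commutes with finite joins, so this equals $\delta_a(\bigvee_\aT f[S]) = \delta_a(g(S))$, as required.

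The only slightly delicate point — and the one I expect to be the main obstacle — is preservation and reflection of the final states, since dfa morphisms require $g^{-1}(F_\delta) = F_{\jslDfaSc{\gamma}}$, i.e.\ $S \cap F_\gamma \neq \emptyset \iff g(S) \in F_\delta$. Here I would use that $F_\delta = \overline{\down_\aT t}$ for the largest non-final element $t = \bigvee_\aT \overline{F_\delta}$, so $g(S) \in F_\delta \iff \bigvee_\aT f[S] \nleq_\aT t \iff \exists x \in S.\, f(x) \nleq_\aT t \iff \exists x \in S.\, f(x) \in F_\delta$. Now $f$ being a dfa morphism gives $f^{-1}(F_\delta) = F_\gamma$, so $f(x) \in F_\delta \iff x \in F_\gamma$, whence the last condition becomes $\exists x \in S.\, x \in F_\gamma$, i.e.\ $S \cap F_\gamma \neq \emptyset$. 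This establishes $g^{-1}(F_\delta) = F_{\jslDfaSc{\gamma}}$ and completes the verification; the accepted-language preservation then follows automatically since $\JSL$-dfa morphisms preserve the accepted language.
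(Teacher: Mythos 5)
Your proposal is correct and follows essentially the same route as the paper's own proof: free join-extension, initial state via $f(x_0)=t_0$, transitions via $f\circ\gamma_a=\delta_a\circ f$ plus join-preservation of $\delta_a$, and final states via the largest non-final element $\Lor_\aT\overline{F_\delta}$. The only difference is that you spell out the join-semilattice morphism property explicitly, which the paper leaves implicit.
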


\begin{proof}
  Let $\gamma = (z_0, Z, \gamma_a, F_\gamma)$ and denote the candidate $\JSL$-dfa morphism by $\hat{f}$. Concerning the initial state, $\hat{f}( \{ z_0 \}) = f(z_0) = t_0$ because $f$ is a dfa morphism by assumption. Concerning final states $\Lor_\aT f[S] \in F_\delta \iff \exists z \in S. f(s) \in F_\delta$ (since there is a largest non-final state) iff $S \cap F_\gamma \neq \emptyset$ (since $f$ is a dfa morphism). Finally for each $S \subseteq Z$,
  \[
    \begin{tabular}{lll}
      $\hat{f} \circ \gamma_a^\up (S)$
      &
      $= \Lor_\aT f[\gamma_a[S]]$
      \\ &
      $= \Lor_\aT \{ f \circ \gamma_a(s) : s \in S \}$
      \\ &
      $= \Lor_\aT \{ \delta_a \circ f(s) : s \in S \}$
      & ($f$ a dfa morphism)
      \\ &
      $= \delta_a(\Lor_\aT \{ f(s) : s \in S \})$
      & ($\delta_a$ preserves $\aT$-joins)
      \\ &
      $= \delta_a \circ \hat{f}(S)$.
    \end{tabular}
  \]
\end{proof}

There is also a free construction for ordered dfas, recalling Definition \ref{def:dfas}.6.

\begin{theorem}[Free $\JSL$-dfa on an ordered dfa]
  \label{thm:free_jsl_dfa_ordered}
  Let $\gamma = (p_0, \pP, \gamma_a, F_\gamma)$ be an ordered dfa, $\delta = (t_0, \aT, \delta_a, F_\delta)$ a $\JSL$-dfa and $f : \gamma \to \delta$ an ordered dfa morphism. Then we have the well-defined $\JSL$-dfa morphism,
  \[
    \lambda S. \Lor_\aT f[S] : \Det(\gamma_\down, \geq_\pP, \rev{\gamma_\down}) \to \delta
    \qquad
    \text{where $\gamma_\down := (\down_\pP p_0, P, \gamma_a ; \geq_\pP, F_\gamma)$ is an nfa}.
  \]
\end{theorem}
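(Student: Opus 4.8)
The plan is to proceed in three stages: first check that $(\gamma_\down,\geq_\pP,\rev{\gamma_\down})$ is a genuine dependency automaton, so that the functor $\Det$ applies; then unfold $\Det(\gamma_\down,\geq_\pP,\rev{\gamma_\down})$ explicitly as a $\JSL$-dfa carried by the lattice of downsets of $\pP$; and finally verify the four conditions for $\hat f := \lambda S.\Lor_\aT f[S]$ to be a $\JSL$-dfa morphism into $\delta$, mirroring the proof of Theorem~\ref{thm:free_jsl_dfa}.

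For the first stage, condition~(4) of Definition~\ref{def:dep_aut} reads $(\gamma_a;\geq_\pP);\geq_\pP = \geq_\pP;((\gamma_a;\geq_\pP)\spbreve)\spbreve$; the left side collapses to $\gamma_a;\geq_\pP$ by transitivity of $\leq_\pP$, and the right side is $\geq_\pP;\gamma_a;\geq_\pP = \gamma_a;\geq_\pP$ by monotonicity of $\gamma_a$ together with reflexivity of $\leq_\pP$. Condition~(5) asks $\geq_\pP[\down_\pP p_0]=\down_\pP p_0$, which is immediate, and $\leq_\pP[F_\gamma]=F_\gamma$, i.e.\ that $F_\gamma$ is upward-closed in $\pP$. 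This is the one subtlety worth flagging: upward-closedness of $F_\gamma$ is \emph{not} part of the definition of an ordered dfa, but it is forced by the hypotheses, since $f:\gamma\to\delta$ is an ordered dfa morphism, hence $F_\gamma=f^{-1}(F_\delta)$ with $F_\delta=\overline{\down_\aT t}$ upward-closed for some $t\in T$ and $f$ monotone, and the monotone preimage of an upset is an upset.

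For the second stage, $O(\geq_\pP)$ is exactly the collection of downsets of $\pP$, so $\Open\geq_\pP = (O(\geq_\pP),\cup,\emptyset)$ is the distributive lattice of downsets; unfolding Definition~\ref{def:airr_and_det}.1, its initial state is $F_{\rev{\gamma_\down}}=\down_\pP p_0$, its $a$-transition sends a downset $Y$ to $(\gamma_a;\geq_\pP)[Y]=\down_\pP\gamma_a[Y]$, and its final states are the downsets meeting $F_\gamma$, with largest non-final state $\overline{F_\gamma}$. For the third stage, $\hat f$ preserves joins because joins in $O(\geq_\pP)$ are unions and $\hat f(\emptyset)=\bot_\aT$; it sends $\down_\pP p_0$ to $\Lor_\aT f[\down_\pP p_0]=f(p_0)=t_0$ by monotonicity of $f$ (each $f(q)$ with $q\leq_\pP p_0$ lies below $f(p_0)$) and the dfa-morphism condition $f(p_0)=t_0$; it commutes with transitions since
\[
  \hat f(\down_\pP\gamma_a[Y]) = \Lor_\aT f[\gamma_a[Y]] = \Lor_\aT\{\delta_a(f(p)):p\in Y\} = \delta_a\bigl(\Lor_\aT f[Y]\bigr) = \delta_a(\hat f(Y)),
\]
the first step collapsing the downward closure by monotonicity of $f$, the second using $f\circ\gamma_a=\delta_a\circ f$, the third using that $\delta_a$ preserves $\aT$-joins; and it preserves and reflects final states exactly as in Theorem~\ref{thm:free_jsl_dfa}: $\hat f(Y)\in F_\delta$ iff $\Lor_\aT f[Y]\nleq_\aT t$ iff some $p\in Y$ has $f(p)\notin\down_\aT t$, i.e.\ $f(p)\in F_\delta$, iff $Y\cap f^{-1}(F_\delta)=Y\cap F_\gamma\neq\emptyset$.

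The calculations are all routine; the one genuine obstacle is the first-stage observation that $F_\gamma$ is necessarily an upset, without which $(\gamma_\down,\geq_\pP,\rev{\gamma_\down})$ fails condition~(5) and $\Det$ does not literally apply. A slicker alternative would be to note that $X\mapsto\down_\pP X$ is a surjective $\JSL$-dfa morphism $\jslDfaSc{\gamma}\epito\Det(\gamma_\down,\geq_\pP,\rev{\gamma_\down})$ through which the free morphism $\lambda S.\Lor_\aT f[S]:\jslDfaSc{\gamma}\to\delta$ of Theorem~\ref{thm:free_jsl_dfa} factors (well-definedness of the factor again resting on monotonicity of $f$), so that $\hat f$ would be a $\JSL$-dfa morphism since surjective $\JSL$-dfa morphisms are epic; but as this still needs $F_\gamma$ to be an upset, I would carry out the direct verification.
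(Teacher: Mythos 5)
Your proof is correct and follows essentially the same route as the paper's: verify that $(\gamma_\down,\geq_\pP,\rev{\gamma_\down})$ is a dependency automaton, unfold $\Det$ over the lattice of downsets, and check the initial-state, transition and final-state conditions exactly as in Theorem~\ref{thm:free_jsl_dfa}. Your observation that upward-closedness of $F_\gamma$ is not part of Definition~\ref{def:dfas}.6 but is forced by the existence of the ordered dfa morphism $f:\gamma\to\delta$ (monotone preimage of the upset $F_\delta$) is a point the paper's own verification of condition~(5) glosses over, and you handle it correctly.
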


\begin{proof}
  We first verify $(\gamma_\down, \geq_\pP, \rev{\gamma_\down})$ is a dependency automaton. Concerning transitions, $(\gamma_a ; \geq_\pP) ; \geq_\pP = \gamma_a ; \geq_\pP$ by transitivity and $\geq_\pP ; (\gamma_a ; \geq_\pP) = \gamma_a ; \geq_\pP$ by Example \ref{ex:dep_morphisms}.4 (via $\gamma_a : \pP^\pOp \to \pP^\pOp$). Concerning the remaining conditions:
  \[
    \geq_\pP[I_{\minSatDfa{L^r}}]
    = \; \geq_\pP[\down_\pP p_0]
    = \; \down_\pP p_0
    = F_{\rev{\minSatDfa{L^r}}},
  \]
  \[
    \geq_\pP\spbreve [I_{\rev{\minSatDfa{L^r}}}]
    = \; \leq_\pP [\{ Y \in \LW{L^r} : \epsilon \in Y \}]
    = \{ Y \in \LW{L^r} : \epsilon \in Y \}
    = F_{\minSatDfa{L^r}}.
  \]
  Denote the candidate $\JSL$-dfa morphism by $\hat{f}$.
  Recall $\Det(\gamma_\down, \geq_\pP, \rev{\gamma_\down})$ is $\jslDfaSc{\gamma_\down}$ restricted to $\Open \geq_\pP$. Concerning the initial state, $\hat{f}( \down_\pP p_0 ) = f(p_0) = t_0$ by monotonicity and the fact that $f$ is a dfa morphism. Concerning final states, $\Lor_\aT f[S] \in F_\delta \iff \exists z \in S. f(s) \in F_\delta$ iff $S \cap F_\gamma \neq \emptyset$ (since $f$ is a dfa morphism). Finally for each down-closed $S \subseteq P$,
  \[
    \begin{tabular}{lll}
      $\hat{f} \circ (\gamma_a ; \geq_\pP)^\up (S)$
      &
      $= \Lor_\aT f[\gamma_a ; \geq_\pP[S]]$
      \\ &
      $= \Lor_\aT \{ f \circ \gamma_a(s) : s \in S \}$
      & ($f \circ \gamma_a$ monotonic)
      \\ &
      $= \Lor_\aT \{ \delta_a \circ f(s) : s \in S \}$
      & ($f$ a dfa morphism)
      \\ &
      $= \delta_a(\Lor_\aT \{ f(s) : s \in S \})$
      & ($\delta_a$ preserves $\aT$-joins)
      \\ &
      $= \delta_a \circ \hat{f}(S)$.
    \end{tabular}
  \]
\end{proof}

\subsection{Canonical dependency automata}

Previously we described the canonical dependency automaton for $L \subseteq \Sigma^*$ in Definition \ref{def:canon_dep_aut}. We now describe the state-minimal $\JSL$-dfa for $L$. These two machines are actually the same object modulo categorical equivalence.

\begin{definition}[Left and right quotients]
  \label{def:lang_quos}
  Fix any regular language $L \subseteq \Sigma^*$ and recall the left word quotients $\LW{L}$ from Definition \ref{def:dfas}.
  \begin{enumerate}
    \item 
      For $U, V \subseteq \Sigma^*$, $U^{-1} L := \{ w \in \Sigma^* : \exists u \in U. uw \in L \}$ is a \emph{left quotient}, $L V^{-1} := \{ w \in \Sigma^* : \exists v \in V. wv \in L \}$ is a  \emph{right quotient}.
      \item
      Let $\LQ{L} := \{ U^{-1} L : U \subseteq \Sigma^* \}$ and $\RQ{L} := \{ L V^{-1} : V \subseteq \Sigma^*  \}$. Then $\LW{L} \subseteq \LQ{L}$ and likewise we may write $L v^{-1}$ instead of $L \{ v \}^{-1}$.

      \item 
      We have finite join-semilattice $\jslLQ{L} := (\LQ{L}, \cup, \emptyset)$.
      \item
      $J(\jslLQ{L}) \subseteq \LW{L}$ because the latter generate $\LQ{L}$ under unions. Thus $J(\jslLQ{L})$ consists of those left word quotients which are not unions of others, so in particular are non-empty. \endbox

  \end{enumerate}
\end{definition}

\begin{definition}[State-minimal $\JSL$-dfa]
  \label{def:state_min_jsl_dfa}
    Let $\jslDfaMin{L} := (L, \jslLQ{L}, \lambda X. a^{-1} X, \{ X \in \LQ{L} : \epsilon \in X \})$. \endbox
\end{definition}

Observe that the state-minimal $\minDfa{L}$ is obtained from $\jslDfaMin{L}$ by restricting to left \emph{word} quotients $u^{-1} L$. Conversely, every left quotient $U^{-1} L$ is a finite union of left word quotients and $a^{-1}(-)$ preserves unions.

\begin{lemma}
  \label{lem:state_min_jsl_dfa}
  $\jslDfaMin{L}$ is the state-minimal $\JSL$-dfa accepting $L$.
\end{lemma}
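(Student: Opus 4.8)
The plan is to establish three things in turn: that $\jslDfaMin{L}$ is a well-defined $\JSL$-dfa, that it accepts $L$, and that no $\JSL$-dfa with strictly fewer states accepts $L$.

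For well-definedness and acceptance I would simply unwind the definitions. The poset $\jslLQ{L}$ is a finite join-semilattice by Definition~\ref{def:lang_quos}.3; the proposed initial state $L = \{\epsilon\}^{-1}L$ lies in $\LQ{L}$; each map $\lambda X.\,a^{-1}X$ preserves $\cup$ and $\emptyset$ and sends $U^{-1}L$ to $(Ua)^{-1}L \in \LQ{L}$, hence is a $\JSL_f$-endomorphism of $\jslLQ{L}$; and the set of non-final states $\{X \in \LQ{L} : \epsilon \notin X\}$ is downward-closed and closed under unions (using that $\LQ{L}$ is closed under arbitrary unions, since $\bigcup_i U_i^{-1}L = (\bigcup_i U_i)^{-1}L$), so it equals $\down_{\jslLQ{L}} t$ for $t := \bigcup\{X \in \LQ{L} : \epsilon \notin X\} \in \LQ{L}$, giving the required shape $\overline{\down t}$ for the final states. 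Running the underlying dfa from $L$ reaches $w^{-1}L$ after $w$, which is final iff $\epsilon \in w^{-1}L$ iff $w \in L$, so $L(\jslDfaMin{L}) = L$.

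For minimality, let $\delta = (s_0, \aS, \delta_a, F)$ be any $\JSL$-dfa with $L(\delta) = L$ and $F = \overline{\down_\aS t}$. First I would pass to the sub-$\JSL$-dfa generated by the classically reachable states: put $R := \{\delta_w(s_0) : w \in \Sigma^*\}$ and $\aS' := \ang{R}_\aS$. Since each $\delta_a$ preserves joins and maps $R$ into $R$, $\aS'$ is a transition-closed subalgebra; its non-final states $\aS' \cap \down_\aS t$ are downward- and union-closed in $\aS'$, hence of the form $\down_{\aS'} t'$; and $\delta' := (s_0, \aS', \delta_a, F \cap \aS')$ has the same reachable states as $\delta$, so $L(\delta') = L$. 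Next I would consider the acceptance map $acc_{\delta'} : \aS' \to \jslLQ{L}$, $s \mapsto L(\delta'_{@s})$. By Lemma~\ref{lem:jsl_dfa_joins} it preserves $\lor$ and $\bot$; it sends $\delta_w(s_0) \mapsto w^{-1}L$, so since $\aS'$ is join-generated by $R$ its image, generated under union by $\LW{L}$, is exactly $\LQ{L}$. It is moreover a $\JSL$-dfa morphism onto $\jslDfaMin{L}$: it sends $s_0$ to $L$; it satisfies $acc_{\delta'}(\delta_a(s)) = a^{-1}L(\delta'_{@s}) = a^{-1}\,acc_{\delta'}(s)$ because $\delta'_w(\delta_a(s))$ is final iff $aw \in L(\delta'_{@s})$; and $s$ is final in $\delta'$ iff $\epsilon \in L(\delta'_{@s})$, which gives preservation and reflection of final states. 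Thus $acc_{\delta'} : \delta' \epito \jslDfaMin{L}$ is a surjective $\JSL$-dfa morphism, so $|\aS| \geq |\aS'| \geq |\LQ{L}|$, which is state-minimality. If moreover $\delta$ is itself state-minimal, $|\aS| = |\LQ{L}|$ forces $\aS' = \aS$ and $acc_{\delta'}$ bijective, and a bijective $\JSL_f$-morphism is an isomorphism, whence $\delta \cong \jslDfaMin{L}$.

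The routine parts are the manipulations of left quotients $a^{-1}(U^{-1}L)$ and of dfa acceptance. The hard part will be the bookkeeping that passing to $\delta'$ keeps us inside $\dfa{\JSL}$ — in particular that $F \cap \aS'$ retains the shape $\overline{\down t'}$ — and that $acc_{\delta'}$ is genuinely a join-semilattice morphism whose image is precisely $\LQ{L}$; both hinge on Lemma~\ref{lem:jsl_dfa_joins} together with closure of $\LQ{L}$ and of the non-final states under unions.
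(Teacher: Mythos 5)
Your proof is correct and rests on the same key fact as the paper's: by Lemma~\ref{lem:jsl_dfa_joins} every left quotient $U^{-1}L$ is accepted from the join $\Lor_\aS\{\delta_u(s_0):u\in U\}$, so $\LQ{L}\subseteq langs(\delta)$ and hence $|\LQ{L}|\leq|S|$. The paper stops at this cardinality count, whereas you package it as an explicit surjective $\JSL$-dfa morphism $acc_{\delta'}:\delta'\epito\jslDfaMin{L}$ (and verify well-definedness along the way); this is more bookkeeping but the same argument.
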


\begin{proof}
  It accepts $L$ -- the reachable part of its underlying dfa is precisely the state-minimal dfa $\minDfa{L}$.
  Concerning the state-minimality of $\jslDfaMin{L}$, take any $\JSL$-dfa $\delta = (s_0, \aS, \delta, F)$ accepting $L$ and consider the languages accepted by varying the initial state, noting $|langs(\delta)| \leq |S|$. By Lemma we know \ref{lem:jsl_dfa_joins} $\LQ{L} \subseteq langs(\delta)$, hence $|\LQ{L}| \leq |S|$.
\end{proof}

\begin{example}[$\minSatDfa{L}$]
  \label{ex:saturated_min_dfa}
  Applying Proposition \ref{prop:aut_dep_generator_isoms} to $\jslDfaMin{L}$ with join-generating subset $J_{\jslLQ{L}} := \LW{L}$ yields a dependency automaton, whose lower nfa takes the following form:
  \[
    \begin{tabular}{c}
      $\minSatDfa{L}
      := (\{ X \in \LW{L} : X \subseteq L \}, \LW{L}, \rN_a, \{ X \in \LW{L} : \epsilon \in X \})$
      \\[1ex]
      $\rN_a(X_1, X_2) :\iff X_2 \subseteq a^{-1} X_1$.
    \end{tabular}
  \]
  It accepts $L$ by the witnessing $\AutDep$-isomorphism (see Note \ref{note:det_airr_preserve_acceptance}). Importantly, we'll use it to represent the canonical distributive $\JSL$-dfa further below.
  \endbox
\end{example}

Recall the self-duality $(-)^\pentagram : \dfa{\JSL}^{op} \to \dfa{\JSL}$ of Theorem \ref{thm:dfa_jsl_self_dual}, itself arising from the self-duality of $\JSL_f$ in Note \ref{note:jsl_extras}.3. We now describe an important representation of $\jslDfaMin{L}$ which explains its meet structure. 

\begin{lemma}
  \label{lem:de_morgan_right_quo}
  $\overline{ [\overline{L X^{-1}}]^{-1} L }
    = \{ w \in \Sigma^* : Lw^{-1} \subseteq LX^{-1} \}$
  for any subsets $X, L \subseteq \Sigma^*$.
\end{lemma}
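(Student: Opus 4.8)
The plan is to prove the claimed set equality by a double inclusion, or more efficiently by showing that for a fixed word $w\in\Sigma^*$, the two membership conditions are logically equivalent. Unfolding the right-hand side, $w$ lies in it iff $Lw^{-1}\subseteq LX^{-1}$. Unfolding the left-hand side, $w$ lies in it iff $w\notin [\overline{LX^{-1}}]^{-1}L$, i.e.\ iff there is \emph{no} word $v\in\overline{LX^{-1}}$ with $vw\in L$; equivalently, for every $v\in\Sigma^*$, $vw\in L$ implies $v\in LX^{-1}$. So the whole statement reduces to the equivalence
\[
  \bigl(\forall v.\ vw\in L \To v\in LX^{-1}\bigr)
  \quad\Longleftrightarrow\quad
  Lw^{-1}\subseteq LX^{-1}.
\]

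First I would rewrite the right-hand condition using the definition of right quotient: $Lw^{-1}=\{u\in\Sigma^*: uw\in L\}$, so $Lw^{-1}\subseteq LX^{-1}$ says exactly that for all $u$, $uw\in L$ implies $u\in LX^{-1}$. This is visibly the same statement as the left-hand condition (just renaming the bound variable $v$ as $u$), so the equivalence is immediate once both sides are unfolded. The only genuine content is the bookkeeping with complements: that $w\in\overline{[\,\cdot\,]^{-1}L}$ negates ``$\exists v\in\overline{LX^{-1}}.\ vw\in L$'', and that the negation of ``$v\notin LX^{-1}$ and $vw\in L$ for some $v$'' is ``$vw\in L\To v\in LX^{-1}$ for all $v$''. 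This is just propositional logic (De Morgan plus contraposition inside the quantifier), so no obstacle arises.

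I do not expect any hard step here; the lemma is a purely set-theoretic identity about word quotients, and the proof is a short chain of definitional rewrites. If one prefers a more structural phrasing, one can note that $X\mapsto \overline{X^{-1}L}$ and $Y\mapsto \overline{[Y]^{-1}L}$ are an antitone Galois connection flavour of operation, but invoking that machinery would be heavier than simply unwinding the definitions. So the proof I would write is: fix $w$; unfold the left-hand side to the universal statement $\forall v.(vw\in L\To v\in LX^{-1})$ via De Morgan on the complement and the defining property of $[\,\cdot\,]^{-1}L$; unfold the right-hand side $Lw^{-1}\subseteq LX^{-1}$ to the same universal statement via the definition of $Lw^{-1}$; conclude. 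The main (and essentially only) thing to be careful about is getting the complements and the direction of the implication right, which the calculation above already pins down.
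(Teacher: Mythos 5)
Your proposal is correct and follows essentially the same route as the paper's proof: unfold the complement of the left quotient into the universal statement $\forall v.(vw\in L\To v\in LX^{-1})$, recognise $Lw^{-1}=\{u: uw\in L\}$, and observe that the two conditions coincide. Nothing is missing.
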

\begin{proof}
  \[
    \begin{tabular}{lll}
    $\overline{ [\overline{L X^{-1}}]^{-1} L }$
    &
    $= \{ w \in \Sigma^* : \forall v \in \Sigma^*.( v \in \overline{LX^{-1}} \To vw \nin L )  \}$
    \\&
    $= \{ w \in \Sigma^* : \forall v \in \Sigma^*.( vw \in L \To v \in LX^{-1} )  \}$
    \\&
    $= \{ w \in \Sigma^* : \forall v \in \Sigma^*.( v \in Lw^{-1} \To v \in LX^{-1} )  \}$
    \\&
    $= \{ w \in \Sigma^* : Lw^{-1} \subseteq LX^{-1} \}$.
    \end{tabular}
  \]
\end{proof}

\begin{theorem}[Fundamental dualising isomorphism $dr_L$]
  \label{thm:dr_L}
  For each regular $L \subseteq \Sigma^*$ we have the $\JSL$-dfa isomorphism:
  \[
  (\jslDfaMin{L^r})^\pentagram \xto{dr_L} \jslDfaMin{L}
  \qquad
  dr_L(X) := [\overline{X}^r]^{-1} L
  \qquad
  dr_L^{\bf -1} := dr_{L^r},
  \]
  noting that reversal/complement of languages commute. There is also an alternative description:
  \[
  dr_L (U^{-1} L^r) :=
  \bigcup \{ X \in \LW{L} : X \,\cap\, U^r = \emptyset  \}.
  \]
\end{theorem}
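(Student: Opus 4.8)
The plan is to verify that $dr_L$ as defined is (a) well-typed, i.e.\ lands in $\jslLQ{L}$ and sends the initial state of $(\jslDfaMin{L^r})^\pentagram$ to $L$, (b) a $\JSL$-dfa morphism, (c) bijective with inverse $dr_{L^r}$, and to confirm along the way that the two descriptions of $dr_L$ agree. First I would compute explicitly what $(\jslDfaMin{L^r})^\pentagram$ is: by Theorem~\ref{thm:dfa_jsl_self_dual} its state set is $\jslLQ{L^r}^\pOp$, its initial state is $\Lor \overline{F}$ where $F = \{ Y \in \LQ{L^r} : \epsilon \in Y \}$ (so the initial state is the largest right quotient of $L^r$ not containing $\epsilon$), its transitions are the adjoints $(\lambda X. a^{-1}X)_*$, and its final states are $\overline{\down\, L^r}$. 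The key technical fact I would isolate first is a De Morgan-style identity linking the adjoint $(\lambda X. a^{-1} X)_*$ on $\jslLQ{L^r}$ to the ordinary transition $\lambda X. a^{-1} X$ on $\jslLQ{L}$ through the map $X \mapsto [\overline{X}^r]^{-1} L$; Lemma~\ref{lem:de_morgan_right_quo}, rephrased for $L^r$ in place of $L$, is exactly the computational heart here, since it characterises $[\overline{X}^r]^{-1}L$ in terms of containment of left word quotients.

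For the typing and bijectivity, I would use the alternative formula $dr_L(U^{-1} L^r) = \bigcup \{ X \in \LW{L} : X \cap U^r = \emptyset \}$, which manifestly produces a union of left word quotients of $L$, hence an element of $\LQ{L}$. Proving this formula equals $[\overline{X}^r]^{-1}L$ when $X = U^{-1}L^r$ is a direct unwinding: $w \in [\overline{U^{-1}L^r}^r]^{-1} L$ iff some $v$ with $v^r \notin U^{-1}L^r$ has $vw \in L$, i.e.\ iff $w \in v^{-1}L$ for some $v$ with $v^r \notin U^{-1}L^r$; and $v^r \notin U^{-1}L^r$ unpacks via the definition of left quotient to $\{v^r\} \cap U^{-1}L^r$-emptiness, which after a reversal is the condition that $v^{-1}L$ (as an $X \in \LW L$) is disjoint from $U^r$ — modulo care that $v^r \notin U^{-1}L^r \iff \neg\exists u\in U. uv^r \in L^r \iff \neg\exists u \in U. vu^r \in L$, i.e.\ $v^{-1}L \cap U^r = \emptyset$. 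Then bijectivity: applying the same construction with $L^r$ in place of $L$ gives $dr_{L^r}$, and I would check $dr_{L^r} \circ dr_L = \mathrm{id}$ by a double-complement / double-reversal calculation, using that $\overline{(\cdot)^r} = \overline{(\cdot)}^r$ and that every left quotient $U^{-1}L^r$ is recovered from the set of left word quotients it avoids.

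For the morphism conditions: the initial state of $(\jslDfaMin{L^r})^\pentagram$ is $\Lor_{\jslLQ{L^r}} \{Y : \epsilon \notin Y\}$, which is the largest left quotient of $L^r$ not containing $\epsilon$; I expect $dr_L$ to carry this to $L$ itself, which I would verify by noting $\epsilon \notin Y$ translates under the $U^r$-avoidance description to $U \not\ni \epsilon$ at the appropriate place, so the union collapses to all of $\LW L$ joined, i.e.\ $L = \emptyset^{-1}L$. For final states, $X$ is a final state of $(\jslDfaMin{L^r})^\pentagram$ iff $X \notin \down (L^r)$, i.e.\ $X \not\subseteq L^r$; and $dr_L(X) = [\overline{X}^r]^{-1}L$ contains $\epsilon$ iff some $v \in \overline{X}^r$ has $v \in L$, i.e.\ iff $L \not\subseteq \overline{\overline X^r}= X^r$... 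I would chase this equivalence to match $X \not\subseteq L^r$. For transitions, the point is $dr_L \circ (\lambda X. a^{-1}X)_* = (\lambda X. a^{-1}X) \circ dr_L$, which follows from the adjoint relationship together with Lemma~\ref{lem:de_morgan_right_quo}: unwinding $w \in a^{-1}(dr_L(X))$ gives $aw \in [\overline X^r]^{-1}L$, while $w \in dr_L((\lambda X.a^{-1}X)_*(X))$ unwinds through the definition of the adjoint as a join of those $X'$ with $a^{-1}X' \subseteq X$. The main obstacle I anticipate is precisely this transition square: keeping straight the three layers of duality (order-dual of the semilattice, adjoint of the transition map, and reversal/complement of languages) without sign errors, and in particular checking that the adjoint $(\lambda X.a^{-1}X)_*$ really does correspond under $dr_L$ to plain $a^{-1}(-)$ rather than to some right-quotient operation. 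I would handle this by working entirely with the explicit "avoids $U^r$" description of $dr_L$ and the elementary identity $a^{-1}(v^{-1}L) = (va)^{-1}L$, reducing everything to set-theoretic manipulation of families of word quotients.
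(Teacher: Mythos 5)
Your route is the same as the paper's: establish the lattice isomorphism from the explicit formula with inverse $dr_{L^r}$, derive the ``avoids $U^r$'' description, and then check the three dfa-morphism conditions, with the transition square handled exactly as the paper handles it, via the adjoint relationship. (Your direct unwinding of $[\overline{U^{-1}L^r}^r]^{-1}L$ is a slightly shorter path to the alternative description than the paper's detour through Lemma~\ref{lem:de_morgan_right_quo} and right quotients, but it is the same computation in substance, and it is correct.)

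Two details in your sketch are wrong as written and would need repair. First, the initial state: the initial state of $(\jslDfaMin{L^r})^\pentagram$ is $\bigcup\{Y \in \LW{L^r} : \epsilon \nin Y\} = [\overline{L^r}]^{-1}L^r$, and applying the avoidance description with $U = \overline{L^r}$ gives $\bigcup\{X \in \LW{L} : X \cap \overline{L} = \emptyset\} = \bigcup\{X \in \LW{L} : X \subseteq L\}$, which equals $L$ because $L = \epsilon^{-1}L$ is among the terms and bounds them all. It does \emph{not} ``collapse to all of $\LW{L}$ joined'' (that union is $[\Sigma^*]^{-1}L$, generally strictly larger than $L$), and $\emptyset^{-1}L = \emptyset$, not $L$. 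Second, the final states of $(\jslDfaMin{L^r})^\pentagram$ are $\overline{\up_{\jslLQ{L^r}} L^r}$ by Theorem~\ref{thm:dfa_jsl_self_dual}, i.e.\ $\{X : L^r \nsubseteq X\}$, not $\{X : X \nsubseteq L^r\}$ as you wrote. Your own computation of $\epsilon \in dr_L(X)$ correctly lands on $L^r \nsubseteq X$, so the equivalence you were ``chasing'' closes immediately once the final-state condition is read off correctly; as stated, you would be trying to prove a false biconditional. Both slips are local and exactly the kind of sign error you flagged as the main risk; the surrounding plan is sound.
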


\begin{proof}
  \item
  \begin{enumerate}
    \item
    We first establish the underlying join-semilattice isomorphism $dr_L : (\jslLQ{L^r})^\pOp \to \jslLQ{L}$.
    Now, $dr_L$ is certainly a well-defined function. It is monotone because $X \subseteq Y \in \LW{L^r}$ implies $\overline{Y}^r \subseteq \overline{X}^r$ and hence $[\overline{Y}^r]^{-1} L \subseteq [\overline{X}^r]^{-1} L$. Likewise $dr_L^{-1}$ is a well-defined monotone function. Next, given any $X \in \LW{L^r}$ ,
    \[
    \begin{tabular}{lll}
    $dr_L^{-1} \circ dr_L (X)$
    &
    $= dr_L^{-1} ([\overline{X}^r]^{-1} L)$
    \\&
    $= [\overline{[\overline{X}^r]^{-1} L}^r]^{-1} L^r$
    \\&
    $= [\overline{L^r \overline{X}^{-1}}]^{-1} L^r$
    \\&
    $= \{ w \in \Sigma^* : L^r w^{-1} \nsubseteq L^r \overline{X}^{-1} \}$
    & (by Lemma \ref{lem:de_morgan_right_quo}).
    \\&
    $= \{ w \in \Sigma^* : \exists u \in \Sigma^*. [ uw \in L^r \text{ and } \forall v \in \Sigma^*.[ uv \in L^r \To v \nin \overline{X }  ]] \}$
    \\&
    $= \{ w \in \Sigma^* : \exists u \in \Sigma^*. [ w \in u^{-1} L^r \text{ and } \forall v \in \Sigma^*. [ v \in u^{-1} L^r \To v \in X ] ] \}$
    \\&
    $= \{ w \in \Sigma^* : \exists u \in \Sigma^*. [ w \in u^{-1} L^r \text{ and } u^{-1} L^r \subseteq X ] \}$
    \\&
    $= X$
    & (since $X \in \LQ{L}$).
    \end{tabular}
    \]
    It immediately follows that $dr_L \circ dr_L^{-1} (Y) = Y$ by substituting $L \mapsto L^r$.
    Thus $dr_L$ is a bijective order-preserving and order-reflecting function, hence a bounded-lattice isomorphism and in particular a $\JSL_f$-morphism. Finally we establish the alternative action:
    \[
    \begin{tabular}{lll}
    $dr_L (U^{-1} L^r)$
    &
    $= [(\overline{U^{-1} L^r})^r]^{-1} L$
    \\&
    $= [\overline{L (U^r)^{-1}}]^{-1} L$
    \\&
    $= \{ w \in \Sigma^* : Lw^{-1} \nsubseteq L(U^r)^{-1} \}$
    & (by Lemma \ref{lem:de_morgan_right_quo}).
    \\&
    $= \{ w \in \Sigma^* : \exists x \in \Sigma^*. [ xw \in L \text{ and } \forall y \in \Sigma^*. [ xy \in L \To y \nin U^r  ] ] \}$
    \\&
    $= \{ w \in \Sigma^* : \exists x \in \Sigma^*. [ w \in x^{-1} L \text{ and } \forall y \in \Sigma^*. [ y \in x^{-1} L \To y \in \overline{U}^r ]] \} $
    \\&
    $= \{ w \in \Sigma^* : \exists x \in \Sigma^*. [ w \in x^{-1} L \text{ and } x^{-1} L \subseteq \overline{U}^r ] \} $
    \\&
    $= \{ w \in \Sigma^* : \exists x \in \Sigma^*. [ w \in x^{-1} L \text{ and } x^{-1} L \cap U^r = \emptyset ] \} $
    \\&
    $= \bigcup \{ X \in \LW{L} : X \,\cap\, U^r = \emptyset  \}$.
    \end{tabular}
    \]

    \item
    It remains to establish that the join-semilattice isomorphism $dr_L$ defines a dfa morphism.
    Concerning preservation of the initial state:
    \[
      \begin{tabular}{lll}
        $dr_L( i_{(\jslDfaMin{L^r})_*} )$
        &
        $= dr_L( \Lor_{\jslLQ{L^r}} \overline{\{ X \in \LW{L^r} : \epsilon \in X \}} )$
        \\ &
        $= dr_L (\bigcup \{ X \in \LW{L^r} : \epsilon \nin X \})$
        \\ &
        $= dr_L (\overline{L}^{-1} L^r)$
        \\ &
        $= \bigcup \{ X \in \LW{L} : X \,\cap\, \overline{L} = \emptyset \}$
        \\ &
        $= \bigcup \{ X \in \LW{L} : X \subseteq L \}$
        \\ &
        $= L = i_{\jslDfaMin{L}}$.
      \end{tabular}
    \]
    Concerning transitions, let $\gamma_a : \jslLQ{L^r} \to \jslLQ{L^r}$ and $\delta_a : \jslLQ{L} \to \jslLQ{L}$ be the deterministic $a$-transitions for the respective machines (both have action $\lambda X. a^{-1} X$). It suffices to show $(\gamma_a)_* = dr_L^{\bf -1} \circ \delta_a \circ dr_L$:
    \[
      \begin{tabular}{lll}
        $dr_L^{\bf -1} \circ \delta_a \circ dr_L(X)$
        &
        $= dr_L^{\bf -1} (a^{-1}([\overline{X}^r]^{-1} L))$
        \\ &
        $= dr_L^{\bf -1} ([\overline{X}^r a]^{-1} L)$
        \\ &
        $= \bigcup \{ X \in \LW{L^r} : Y \cap a \overline{X} = \emptyset \}$
        \\ &
        $= \bigcup \{ X \in \LW{L^r} : a^{-1} Y \subseteq X \}$
        \\ &
        $= (\gamma_a)_*(X)$.
      \end{tabular}
    \]
    The final states are preserved and reflected:
    \[
      \begin{tabular}{lll}
        $X \in dr_L^{-1} (F_{\jslDfaMin{L}})$
        &
        $\iff X \in dr_L^{-1}(\{ Y \in \LQ{L} : \epsilon \in Y \})$
        \\ &
        $\iff \epsilon \in dr_L(X)$
        \\ &
        $\iff \epsilon \in [\overline{X}^r]^{-1} L$
        \\ &
        $\iff \overline{X}^r \cap L \neq \emptyset$
        \\ &
        $\iff \overline{X} \cap L^r \neq \emptyset$
        \\ &
        $\iff L^r \nsubseteq X$
        \\ &
        $\iff X \in F_{(\jslDfaMin{L^r})_*}$.
      \end{tabular}
    \]
  \end{enumerate}
\end{proof}

\begin{note}
  \item
  \begin{enumerate}
    \item
    $dr_L$ provides a bijection between $L$'s left word quotients $U^{-1} L$ and right word quotients $L V^{-1} = ((V^r)^{-1} L^r)^r$.
    \item 
    If $L = L^r$ we have an order-reversing involutive isomorphism $dr_L: (\jslLQ{L})^\pOp \to \jslLQ{L}$.
    Thus $\jslLQ{L}$ is a De Morgan algebra whose bounded lattice structure needn't be distributive.
    This holds for any unary language.
    \endbox
  \end{enumerate}
\end{note}

\begin{corollary}[Meet-generating $\jslLQ{L}$]
  \label{cor:mirr_in_lq}
  \item
  \begin{enumerate}
    \item 
    $\jslLQ{L}$ is join-generated by $\LW{L}$ and meet-generated by $dr_L[\LW{L^r}] = \{ [\overline{L v^{-1}}]^{-1} L : v \in \Sigma^* \}$.
    
    \item
    $Y \subseteq dr_L(v^{-1} L^r) \iff v^r \nin Y$ for each $Y \in \LQ{L}$.

    \item
    Each $Y \in \LQ{L}$ arises as an intersection:
    \[
      \begin{tabular}{lll}
        $Y$
        &
        $= \Land_{\jslLQ{L}} \{ dr_L(v^{-1} L^r) : v^r \nin Y  \}$
        \\[0.5ex] &
        $= \bigcap \{ dr_L(v^{-1} L^r) : v^r \nin Y \}$.
      \end{tabular}
    \]
  \end{enumerate}

\end{corollary}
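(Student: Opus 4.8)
The plan is to deduce all three parts from the fundamental dualising isomorphism $dr_L$ of Theorem~\ref{thm:dr_L}, and in particular from its alternative description $dr_L(U^{-1}L^r) = \bigcup\{X \in \LW{L} : X \cap U^r = \emptyset\}$. For part~(1): the underlying map $dr_L$ is a bounded-lattice isomorphism from $(\jslLQ{L^r})^\pOp$ onto $\jslLQ{L}$, so it carries meet-generating subsets to meet-generating subsets. By Definition~\ref{def:lang_quos}.4 the word quotients $\LW{L^r}$ join-generate $\jslLQ{L^r}$, hence meet-generate the order-dual $(\jslLQ{L^r})^\pOp$; applying $dr_L$ shows $dr_L[\LW{L^r}]$ meet-generates $\jslLQ{L}$, while $\LW{L}$ join-generates $\jslLQ{L}$ by the same definition. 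To identify $dr_L[\LW{L^r}]$ with the displayed set, unwind $dr_L(v^{-1}L^r) = [\overline{(v^{-1}L^r)}^r]^{-1}L$; since reversal is an involution one computes $(v^{-1}L^r)^r = L(v^r)^{-1}$, so $dr_L(v^{-1}L^r) = [\overline{L(v^r)^{-1}}]^{-1}L$, and as $v$ ranges over $\Sigma^*$ so does $v^r$, giving $\{[\overline{L v^{-1}}]^{-1}L : v \in \Sigma^*\}$.

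For part~(2), specialise the alternative description to $U = \{v\}$, obtaining $dr_L(v^{-1}L^r) = \bigcup\{X \in \LW{L} : v^r \notin X\}$. No summand contains $v^r$, hence neither does the union, so $v^r \notin dr_L(v^{-1}L^r)$; therefore $Y \subseteq dr_L(v^{-1}L^r)$ forces $v^r \notin Y$. Conversely, writing $Y = U^{-1}L = \bigcup_{u \in U} u^{-1}L$, if $v^r \notin Y$ then every word quotient $u^{-1}L$ omits $v^r$, hence lies among the sets whose union is $dr_L(v^{-1}L^r)$, so $Y \subseteq dr_L(v^{-1}L^r)$. (This is essentially Lemma~\ref{lem:mor_char_max_witness}.6 transported along the equivalence $\Dep \simeq \JSL_f$ with $\rG = \rDR{L}$, but the direct computation is shorter.)

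For part~(3), first note the general fact that if $G$ meet-generates a finite join-semilattice then every element $Y$ equals $\Land\{g \in G : Y \leq g\}$: writing $Y = \Land G'$ with $G' \subseteq G$ gives $G' \subseteq \{g \in G : Y \leq g\}$, whence $\Land\{g \in G : Y \leq g\} \leq \Land G' = Y \leq \Land\{g \in G : Y \leq g\}$ (this generalises Note~\ref{note:irreducibles}.3). Applying this with $G = dr_L[\LW{L^r}]$ and using part~(2) to identify $\{g \in G : Y \subseteq g\} = \{dr_L(v^{-1}L^r) : v^r \notin Y\}$ yields the first equality. For the second equality, observe that the meet $\Land_{\jslLQ{L}}\mathcal{Y}$ of any family is a lower bound and so is contained in $\bigcap\mathcal{Y}$; also $Y \subseteq \bigcap\{dr_L(v^{-1}L^r) : v^r \notin Y\}$ by part~(2); and $\bigcap\{dr_L(v^{-1}L^r) : v^r \notin Y\} \subseteq Y$ because if $w \notin Y$ then $dr_L((w^r)^{-1}L^r)$ belongs to the intersected family yet, by part~(2) again (with $w^r$ in the role of $v$), does not contain $w = (w^r)^r$. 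Chaining $Y \subseteq \Land_{\jslLQ{L}}\{\cdots\} \subseteq \bigcap\{\cdots\} \subseteq Y$ collapses everything to equality. The only genuine subtlety, and the step most in need of care, is this last one: a priori the meet in $\jslLQ{L}$ is the $\rDR{L}$-interior of the intersection rather than the intersection itself, so the two must be reconciled by the sandwich above (equivalently, one checks that the relevant intersection is again a left quotient).
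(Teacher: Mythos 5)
Your proof is correct and follows essentially the same route as the paper: part (1) via the order-isomorphism $dr_L$ transporting join-generators of $\jslLQ{L^r}$ to meet-generators of $\jslLQ{L}$, part (2) via the alternative description $dr_L(v^{-1}L^r) = \bigcup\{X \in \LW{L} : v^r \nin X\}$, and part (3) via the same sandwich argument using $w \mapsto dr_L((w^r)^{-1}L^r)$ to show the meet is an intersection. The extra details you supply (unwinding $dr_L(v^{-1}L^r) = [\overline{L(v^r)^{-1}}]^{-1}L$ and justifying that an element is the meet of the generators above it) are welcome but do not change the argument.
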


\begin{proof}
  \item
  \begin{enumerate}
    \item 
    That $\LQ{L}$ is generated by $\LW{L}$ under finite unions follows via Definition \ref{def:lang_quos}.
    Concerning the new claim, the order isomorphism $dr_L$ from Theorem \ref{thm:dr_L} preserves/reflects meet-irreducibles. Then recalling Definition \ref{def:lang_quos} we have $M(\jslLQ{L^r}^\pOp) = J(\jslLQ{L^r}) \subseteq \LW{L^r}$, so applying $dr_L$ we obtain a meet-generating set.

    \item
    Given $Y \subseteq dr_L(v^{-1} L^r)$ then since $dr_L(v^{-1} L^r) = \bigcup \{ X \in \LW{L} : v^r \nin X \}$ doesn't contain $v^r$ we immediately deduce $v^r \nin Y$. Conversely if $v^r \nin Y$ then for every $Y \supseteq X \in \LW{L}$ we have $v^r \nin X$ hence $X \subseteq dr_L(v^{-1} L^r)$, so that $Y \subseteq dr_L(v^{-1} L^r)$ too.

    \item
    By (1) each $Y \in \LW{L}$ is the meet of those $K \in dr_L[\LW{L^r}] \subseteq \LQ{L}$ above it. By (2) $Y \subseteq dr_L(v^{-1} L^r)$ iff $v^r \nin Y$, which implies the first equality. Finally this meet is actually an intersection: given $w \in \bigcap \{ dr_L(v^{-1} L^r) : v^r \nin Y \}$ then if $w \nin Y$ we obtain the contradiction $w \in dr_L((w^r)^{-1} L^r)$.
  \end{enumerate}
\end{proof}

\smallskip
With reference to the Corollary \ref{cor:mirr_in_lq}, the next Lemma explains the strong connection between the state-minimal $\JSL$-dfa and the canonical dependency automaton $(\minDfa{L}, \rDR{L}, \minDfa{L^r})$ where $\rDR{L}(u^{-1} L, v^{-1} L^r) :\iff uv^r \in L$.

\begin{lemma}[Dependency Lemma]
  \label{lem:dependency_lemma}
  For any regular $L \subseteq \Sigma^*$ and words $u, v \in \Sigma^*$,
  \[
    u^{-1} L \nsubseteq [\overline{L v^{-1}}]^{-1} L \iff uv \in L
    \qquad\text{or equivalently}\qquad
    u^{-1} L \nsubseteq dr_L(v^{-1} L^r) \iff uv^r \in L \iff \rDR{L}(u, v).
  \]
\end{lemma}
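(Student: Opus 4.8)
The plan is to reduce the statement to Lemma~\ref{lem:de_morgan_right_quo}, which is stated for an arbitrary subset and so may be instantiated at $X := \{v\}$: it then says that $\overline{[\overline{L v^{-1}}]^{-1} L} = \{ w \in \Sigma^* : L w^{-1} \subseteq L v^{-1} \}$. Note that nothing in the argument uses regularity of $L$; regularity is present only so that the machines in the surrounding discussion are defined.

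First I would unwind the non-inclusion: $u^{-1}L \nsubseteq [\overline{L v^{-1}}]^{-1} L$ holds iff there exists $w \in \Sigma^*$ with $uw \in L$ and $w \notin [\overline{L v^{-1}}]^{-1} L$, that is, by the displayed identity, with $uw \in L$ and $L w^{-1} \subseteq L v^{-1}$. The two directions are then immediate. If $uv \in L$, take $w := v$; then $uw = uv \in L$ and $L w^{-1} = L v^{-1} \subseteq L v^{-1}$. Conversely, given such a $w$, the condition $uw \in L$ says $u \in L w^{-1}$, and since $L w^{-1} \subseteq L v^{-1}$ we get $u \in L v^{-1}$, i.e.\ $uv \in L$. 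This proves the first stated equivalence.

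For the ``or equivalently'' reformulation I would pass through $dr_L$. By the definition $dr_L(X) = [\overline{X}^r]^{-1} L$ from Theorem~\ref{thm:dr_L}, together with the routine reversal identity $(v^{-1} L^r)^r = L (v^r)^{-1}$ and the fact that word-reversal commutes with relative complement, one gets $dr_L(v^{-1} L^r) = [\overline{L (v^r)^{-1}}]^{-1} L$. Applying the first equivalence with $v$ replaced by $v^r$ yields $u^{-1}L \nsubseteq dr_L(v^{-1} L^r) \iff u v^r \in L$; and $u v^r \in L \iff \rDR{L}(u^{-1}L, v^{-1}L^r)$ is precisely Definition~\ref{def:canon_dep_aut}. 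Alternatively, the middle equivalence is just the contrapositive of Corollary~\ref{cor:mirr_in_lq}.2 applied to $Y := u^{-1}L \in \LQ{L}$, which bypasses the reversal bookkeeping entirely.

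There is no real obstacle here: the content is already packaged in Lemma~\ref{lem:de_morgan_right_quo} and Theorem~\ref{thm:dr_L}, and the only thing requiring care is the direction of the right quotients and keeping the reversals straight when moving between the two forms of the statement.
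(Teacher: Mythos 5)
Your proof is correct and follows essentially the same route as the paper's: the paper unwinds $y \nin [\overline{Lv^{-1}}]^{-1}L$ to $Ly^{-1} \subseteq Lv^{-1}$ inline (which is exactly the content of Lemma~\ref{lem:de_morgan_right_quo} that you cite), and then uses the same witness $y = v$ in one direction and the inclusion $u \in Ly^{-1} \subseteq Lv^{-1}$ in the other. Your extra care with the reversal bookkeeping for the ``or equivalently'' clause, and the observation that it also follows from Corollary~\ref{cor:mirr_in_lq}.2, are both correct and fill in a step the paper merely gestures at.
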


\begin{proof}
  We calculate:
  \[
  \begin{tabular}{lll}
    $u^{-1} L \nsubseteq [\overline{L v^{-1}}]^{-1} L$
    &
    $\iff \exists y \in \Sigma^*. [ uy \in L \text{ and } y \nin [\overline{L v^{-1}}]^{-1} L]$
    \\&
    $\iff \exists y \in \Sigma^*. [ uy \in L \text{ and } \forall x \in \Sigma^*.[ xy \in L \To x \nin \overline{L v^{-1}}]]$
    \\&
    $\iff \exists y \in \Sigma^*. [ uy \in L \text{ and } \forall x \in \Sigma^*.[ x \in Ly^{-1} \To x \in L v^{-1}]]$
    \\&
    $\iff \exists y \in \Sigma^*. [ u \in Ly^{-1} \text{ and } Ly^{-1} \subseteq L v^{-1}]$
    \\&
    $\iff u \in L v^{-1}$
    \\&
    $\iff uv \in L$.
    \end{tabular}
  \]
  If $L$ is regular we may rewrite this in terms of $dr_L$ and $\rDR{L}$ as above, see Theorem \ref{thm:dr_L} and Definition \ref{def:canon_dep_aut}.
\end{proof}

We are now ready for the main result of this subsection.

\begin{theorem}[Dependency Theorem]
  \label{thm:dependency_thm}
  The state-minimal $\JSL$-dfa is isomorphic to the determinisation of the canonical dependency automaton.
  \[
    \alpha : \jslDfaMin{L} \to \Det (\minDfa{L}, \rDR{L}, \minDfa{L^r})
    \]
  \[
    \alpha (X) := \{ v^{-1} L^r : v \in X^r \}
    \qquad    
    \alpha^{\bf-1}(Y) := [(\bigcap Y)^r]^{-1} L
  \]
\end{theorem}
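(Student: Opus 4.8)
The plan is to deduce the theorem from the categorical equivalence $(\Det,\Airr)$ of Theorem~\ref{thm:aut_dep_equiv_jsl_dfa} together with the Dependency Lemma. Write $\aS:=\jslLQ{L}$. By Definition~\ref{def:lang_quos}.4 the set $\LW{L}$ join-generates $\aS$, and by Corollary~\ref{cor:mirr_in_lq}.1 the set $dr_L[\LW{L^r}]$ meet-generates $\aS$; via the bijection $dr_L:\LW{L^r}\to dr_L[\LW{L^r}]$ of Theorem~\ref{thm:dr_L} we may index these meet-generators by $\LW{L^r}$. The Dependency Lemma~\ref{lem:dependency_lemma} says precisely that $u^{-1}L\nsubseteq dr_L(v^{-1}L^r)\iff\rDR{L}(u^{-1}L,v^{-1}L^r)$, i.e.\ under these identifications $\rDR{L}$ \emph{is} the relation $\nleq_\aS$ restricted to the chosen join/meet generators. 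Hence Proposition~\ref{prop:aut_dep_generator_isoms}, applied to the $\JSL$-dfa $\jslDfaMin{L}$ over $\aS$ with join-generators $\LW{L}$ and meet-generators $dr_L[\LW{L^r}]$, produces a dependency automaton $\fN_\aS$ that is $\AutDep$-isomorphic to $\Airr(\jslDfaMin{L})$ and whose dependency graph is, up to $\Dep$-isomorphism, $\rDR{L}$.

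Next I would identify $\fN_\aS$ with the canonical dependency automaton $(\minDfa{L},\rDR{L},\minDfa{L^r})$ up to $\AutDep$-isomorphism. The underlying state sets $\LW{L}$ and $\LW{L^r}$ already match, as do the final-state sets once one uses $F_{\minDfa{L^r}}=\rDR{L}[\{L\}]$ from Lemma~\ref{lem:canon_dep_aut_well_defined}. The two remaining mismatches --- the initial-state set of $\fN_\aS$ is the down-set $\{X\in\LW{L}:X\subseteq L\}$ rather than $\{L\}$, and its transition relations have the $\subseteq$-form $X_2\subseteq a^{-1}X_1$ rather than being the functions $X_1\mapsto a^{-1}X_1$ --- are absorbed by Proposition~\ref{prop:aut_dep_transition_isoms}, since composing with $\rDR{L}$ erases them: for $X_1\in\LW{L}$ one has $\rDR{L}[\{X_2\in\LW{L}:X_2\subseteq a^{-1}X_1\}]=\rDR{L}[a^{-1}X_1]$, using the elementary equivalence $\rDR{L}(Y,v^{-1}L^r)\iff v^r\in Y$ for $Y\in\LW{L}$ (immediate from the definition, and independent of the representative $v$ by the well-definedness argument of Lemma~\ref{lem:canon_dep_aut_well_defined}). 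Applying $\Det$ and post-composing with the natural isomorphism $\rep_{\jslDfaMin{L}}^{\bf-1}:\Det\Airr(\jslDfaMin{L})\to\jslDfaMin{L}$ then yields the desired $\JSL$-dfa isomorphism $\jslDfaMin{L}\cong\Det(\minDfa{L},\rDR{L},\minDfa{L^r})$.

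It remains to check the explicit formulas. Unwinding $\rep_\aS(X)=\{m\in M(\aS):X\nleq_\aS m\}$ through the generator-isomorphism of Proposition~\ref{prop:aut_dep_generator_isoms} and writing $m=dr_L(v^{-1}L^r)$, Corollary~\ref{cor:mirr_in_lq}.2 ($X\subseteq dr_L(v^{-1}L^r)\iff v^r\notin X$) turns this into $X\mapsto\{v^{-1}L^r:v^r\in X\}=\{v^{-1}L^r:v\in X^r\}$, which is $\alpha$; that it lands in $O(\rDR{L})$ follows by rewriting it as $\rDR{L}[\{Y\in\LW{L}:Y\subseteq X\}]$ (again via $\rDR{L}(Y,v^{-1}L^r)\iff v^r\in Y$ and the fact that $X$ is the union of the left word quotients below it). Dually, $\rep_\aS^{\bf-1}(W)=\Land_\aS(M(\aS)\setminus W)$ gives $\alpha^{\bf-1}$ once one notes via Corollary~\ref{cor:mirr_in_lq}.3 that this meet of meet-generators is an honest intersection, so it is computed inside $\jslLQ{L}$ as stated. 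The remaining dfa-axioms for $\alpha$ (initial state, transitions, final states) then follow from naturality of $\rep$ together with the word-combinatorial identities $\epsilon\in v^{-1}L^r\iff v\in L^r$ and $(a^{-1}X)^r=\{v:av^r\in X\}$.

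The main obstacle is the second step: Propositions~\ref{prop:aut_dep_generator_isoms} and~\ref{prop:aut_dep_transition_isoms} only give that $(\minDfa{L},\rDR{L},\minDfa{L^r})$ and $\Airr(\jslDfaMin{L})$ agree up to $\AutDep$-isomorphism, not on the nose, so one must carefully track the initial-state and transition discrepancies and the $dr_L$-relabelling of the upper machine. Getting the meet structure of $\jslLQ{L}$ right --- that the relevant meets really are intersections (Corollary~\ref{cor:mirr_in_lq}) --- and the representative-independence of $\rDR{L}$ are the delicate points; after that, extracting $\alpha$ and $\alpha^{\bf-1}$, or alternatively verifying directly that the two displayed maps are mutually inverse $\JSL$-dfa morphisms, is routine.
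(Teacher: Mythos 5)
Your proposal is correct, and the first half of it --- obtaining the underlying semilattice isomorphism from $\rep_\aS$, the generator isomorphism with $J_\aS=\LW{L}$ and $M_\aS=dr_L[\LW{L^r}]$, and the Dependency Lemma identifying $\nleq_\aS|_{J_\aS\times M_\aS}$ with $\rDR{L}$ --- is exactly what the paper does. Where you genuinely diverge is in how the dfa structure is verified. The paper stops at the $\Dep$/$\JSL_f$ level for the isomorphism of semilattices and then checks by hand, with explicit word-combinatorial computations, that $\alpha$ preserves the initial state, the transitions and the final states. You instead stay at the $\AutDep$/$\dfa{\JSL}$ level throughout: Proposition~\ref{prop:aut_dep_generator_isoms} hands you the dependency automaton $\fN_\aS$ (whose lower nfa is $\minSatDfa{L}$, as Example~\ref{ex:saturated_min_dfa} already observes), and Proposition~\ref{prop:aut_dep_transition_isoms} absorbs the saturated-versus-deterministic discrepancy in transitions and initial states after the $dr_L^{\bf-1}$ relabelling of the upper bipartition; applying $\Det$ and $\rep^{\bf-1}$ then yields the automaton isomorphism with no further verification. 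Your route buys uniformity --- the three dfa-morphism axioms come for free from machinery the paper has already built --- at the cost of more bookkeeping (tracking the relabelling and the two applications of Proposition~\ref{prop:aut_dep_transition_isoms}); the paper's route is more computational but makes the explicit form of $\alpha$ and its inverse, which is the content actually used later, fall out directly. Your checks of the key identities ($\rDR{L}[\{X_2:X_2\subseteq a^{-1}X_1\}]=\rDR{L}[a^{-1}X_1]$ since $a^{-1}X_1$ is itself among the $X_2$, and likewise for the down-set of initial states) are sound, and your extraction of $\alpha$ and $\alpha^{\bf-1}$ via Corollary~\ref{cor:mirr_in_lq} matches the paper's computation.
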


\begin{proof}
  \item
  \begin{enumerate}
    \item
    We first establish the underlying isomorphism $\alpha: \aS \to \Open\rDR{L}$ where $\aS := \jslLQ{L}$.
    
    By Theorem \ref{thm:dep_equiv_jslf} we have the isomorphism $rep_\aS : \aS \to \Open\Pirr\aS$ where $rep_\aS(X) := \{ Y \in M(\aS) : X \nsubseteq Y \}$. Proposition \ref{prop:dep_generator_isoms} permits one to extend the domain/codomain of $\Pirr\aS$ to join/meet generators, which by Corollary \ref{cor:mirr_in_lq} can be $J_\aS := \LW{L}$ and $M_\aS := \{ [\overline{L v^{-1}}]^{-1} L : v \in \Sigma^* \}$. Then we obtain the $\Dep$-isomorphism:
    \[
      \rI_\aS^{\bf -1} := \; \nleq_\aS |_{J(\aS) \times M_\aS} : \Pirr\aS \to \;\nleq_\aS |_{J_\aS \times M_\aS}
      \qquad
      \text{where $(\rI_\aS^{\bf -1})_+ (y, m) :\iff y \leq_\aS m$}
    \]
    and thus $\Open\rI_\aS^{\bf -1} : \Open\Pirr\aS \to \Open \nleq_\aS |_{J_\aS \times M_\aS}$ is an isomorphism with action $\lambda X. \LW{L^r} \;\cap \down_\aS X$. Next,
     we'll establish the bipartite graph isomorphism:
    \[
      \xymatrix@=15pt{
        M_\aS \ar[rrr]^-{\lambda X. dr_L^{\bf -1}(X)} &&& \LW{L^r}
        \\
        J_\aS \ar[u]^{\nleq_\aS |_{J_\aS \times M_\aS}} \ar[rrr]_{id_{\LW{L}}} &&& \LW{L} \ar[u]_{\rDR{L}}
      }
      \qquad
      \text{}
    \]
    The upper bijective witness is well-defined because $M_\aS = dr_L[\LW{L^r}]$; this $\Rel$-diagram  commutes by the Dependency Lemma \ref{lem:dependency_lemma}. It defines a $\Dep$-isomorphism $\rDR{L} : \;\nleq_\aS |_{J_\aS \times M_\aS} \to \rDR{L}$ and hence a $\JSL_f$-isomorphism $\Open\rDR{L} := \lambda X. dr_L^{-1}[X]$ recalling that any upper witness can be used by Note \ref{note:open_morphism_alt}. Then we have the composite isomorphism:
    \[
      \aS
      \xto{rep_\aS} \Open\Pirr\aS
      \xto{\rI_\aS^{\bf -1}} \Open\nleq_\aS |_{J_\aS \times M_\aS}
      \xto{\Open\rDR{L}}
      \Open \rDR{L}
    \]
    with action and inverse action:
    \[
      \begin{tabular}{llll}
        $X \in \LQ{L}$
        &
        $\mapsto$ & $\{ Y \in M(\aS) : X \nsubseteq Y \}$
        & (apply $\rep_\aS$)
        \\ &
        $\mapsto$ & $\{ Z \in \LW{L^r} : \exists Y \in M(\aS). ( X \nsubseteq Y \text{ and } Z \subseteq Y ) \}$
        & (apply $\Open\rI_\aS^{\bf -1}$)
        \\ &
        $=$ & $\{ Z \in \LW{L^r} : X \nsubseteq Z \}$
        \\ &
        $\mapsto$ & $dr_L^{\bf -1}[\{ Z \in \LW{L^r} : X \nsubseteq Z \}]$
        & (apply $\Open\rDR{L}$)
        \\ &
        $=$ & $\{ dr_L^{\bf -1}(Z) : X \nsubseteq Z \in \LW{L^r} \}$
        \\ &
        $=$ & $\{ Y \in \LW{L^r} : X \nsubseteq dr_L(Y) \}$
        & (substitute $Z := dr_L(Y)$)
        \\ &
        $=$ & $\{ v^{-1} L^r : v \in X^r \}$
        & (alternative action of $dr_L$).
        \\
        \\
        $S \in O(\rDR{L})$
        & $\mapsto$ & $dr_L[S]$
        & (apply $(\Open\rDR{L})^{\bf -1}$)
        \\
        & $\mapsto$ & $M(\aS) \;\cap \down_\aS dr_L[S]$
        & (apply $\Open\rI_\aS$)
        \\
        & $=$ & $M(\aS) \cap dr_L[S]$
        & (by 1)
        \\
        & $=$ & $dr_L[J(\jslLQ{L^r}) \;\cap  S]$
        & ($M(\aS) = dr_L[J(\jslLQ{L^r})]$)
        \\
        & $\mapsto$ & $\Land_\aS M(\aS) \setminus dr_L[J(\jslLQ{L^r}) \;\cap S]$
        & (apply $\rep_\aS^{\bf -1}$)
        \\
        & $=$ & $\Land_\aS \{ dr_L(X) : X \in  J(\jslLQ{L^r}) \setminus S \}$
        &
        \\ &
        $=$ & $dr_L [\bigcup \{ X \in  J(\jslLQ{L^r}) : X \nin S \}]$
        & ($dr_L$ an isomorphism)
        \\ &
        $=$ & $[(\bigcap \{ X \in J(\jslLQ{L^r}) : X \in S \})^r]^{-1} L$
        \\ &
        $=$ & $[(\bigcap \{ X \in \LW{L^r} : X \in S \})^r]^{-1} L$
        & (by 2)
        \\ &
        $=$ & $[(\bigcap S)^r]^{-1} L$.
      \end{tabular}
    \]
    Concerning (1), each $\rDR{L}$-open set $S$ is up-closed in $(\LW{L^r}, \subseteq)$ so $dr_L[S]$ is down-closed in $(M_\aS, \subseteq)$ recalling $M(\aS) \subseteq M_\aS$. Concerning (2), if $w^{-1} L^r \in S$ some join-irreducible $v^{-1} L^r \subseteq w^{-1} L^r$ must also lie in $S$.
    
    \item
    It remains to establish that $\alpha$ is a dfa morphism:
    \[
      \alpha : \jslDfaMin{L} \to (F_{\minDfa{L^r}}, \Open\rDR{L}, \delta_a, \{ L^r  \})
      \qquad
      \delta_a := \lambda S. \{ Y \in \LW{L^r} : \exists X \in S. a^{-1} Y = X  \}
    \]
    The initial state is preserved because $\alpha(L) = \{ v^{-1} L^r : v \in L^r \} = \{ X \in \LW{L^r} : \epsilon \in X \} = F_{\minDfa{L^r}}$. Next we show the transitions are preserved, denoting the domain dfa's transitions by $\gamma_a := \lambda X. a^{-1} X$. Given any $X \in \LW{L}$,
    \[
      \tag{a}
      \alpha \circ \gamma_a (X)
      = \alpha(a^{-1} X)
      = \{ v^{-1} L^r : v \in (a^{-1} X)^r \}
      = \{ v^{-1} L^r : v \in X^r a^{-1} \}.
    \]
    \[
      \tag{b}
      \begin{tabular}{lll}
        $\delta_a \circ \alpha (X)$
        &
        $= \delta_a (\{ w^{-1} L^r : w \in X^r \})$
        \\ &
        $= \{ v^{-1} L^r : \exists w \in X^r. a^{-1}(v^{-1} L^r) = w^{-1} L^r \}$
        \\ &
        $= \{ v^{-1} L^r : \exists w \in X^r. (va)^{-1} L^r = w^{-1} L^r \}$.
      \end{tabular}
    \]
    Let us establish $(a) = (b)$ via mutual inclusions.

    \begin{itemize}
      \item[--]
      $(a) \subseteq (b)$: Given $v \in X^r a^{-1}$ we deduce $w := va \in X^r$, hence $v^{-1} L^r$ resides in (b).
      \item[--]
      $(b) \subseteq (a)$: We may assume $X := (u^r)^{-1} L$; we know there exists $w \in X^r = L^r u^{-1}$ such that $(va)^{-1} L^r = w^{-1} L^r$. Then $wu \in L^r$ hence $u \in w^{-1} L^r$ so that $vau \in L^r$. Thus $va \in L^r u^{-1}$ i.e.\ $v \in X^r a^{-1}$ so $v^{-1} L^r$ resides in (a).
    \end{itemize}
  
    Lastly the final states are preserved and reflected:
    \[
      X \in \alpha^{-1} (\{ L^r \})
      \iff \alpha(X) = L^r
      \iff \epsilon \in X^r
      \iff \epsilon \in X.
    \]
  \end{enumerate}
\end{proof}

\begin{note}[Canonical dependency automaton as canonical residual automata]
  \item
  By the Dependency Theorem \ref{thm:dependency_thm}, the canonical dependency automaton corresponds to the state-minimal dfa interpreted in join-semilattices. On the other hand, the categorical equivalence $\dfa{\JSL} \cong \AutDep$ of Theorem \ref{thm:aut_dep_equiv_jsl_dfa} already provides the component isomorphism:
  \[
    \rep_{\jslDfaMin{L}} : \jslDfaMin{L} \to \Det(\Airr(\jslDfaMin{L})) = \Det(\rN_L, \Pirr\jslLQ{L}, \rN').
  \]
  The lower nfa $\rN_L$ is precisely the canonical residual automaton of \cite{ResidFSA2001}. That is, let $\ILQ{L} := J(\jslLQ{L}) \subseteq \LW{L}$ be the \emph{irreducible left quotients} i.e.\ those left word quotients not arising as the union of others (so, non-empty). Then:
  \[
    \rN_L = (\ILQ{L} \;\cap \down_{\jslLQ{L}} L, \ILQ{L}, \rN_a, \{ X \in \ILQ{L} : \epsilon \in X \})
    \qquad
    \rN_a(X_1, X_2) :\iff X_2 \subseteq a^{-1} X_1.
  \]
  Relabelling the upper bipartition we obtain $(\rN_L, \rDR{L} |_{\,\ILQ{L} \times \ILQ{L^r}}, \rN_{L^r})$. The upper nfa is the canonical residual nfa for $L^r$. The bipartitioned graph is obtained by restricting the dependency relation to irreducibles. It is necessarily $\AutDep$-isomorphic to the canonical dependency automaton, and actually constructable from it in polytime. It is never larger than our chosen description and potentionally far smaller.
  \endbox 
\end{note}

\subsection{Explaining Brzozowski's algorithm}

Recalling Definition \ref{def:dfas}, the minimisation of a classical dfa $\delta := (z_0, Z, \delta_a, F)$ can be understood as follows:
\[
  \xymatrix@=10pt{
    \delta \ar@{->>}[rrrr]^-{acc_\delta}
    &&&& \simple{\delta}
    \\
    &&&&
    \reach{\simple{\delta}}
    \ar@{_{(}->}[u]_{\iota_2}
    \\
    \reach{\delta} \ar@{_{(}->}[uu]^{\iota_1} \ar@{->>}[rrr]_-{acc_{\reach{\delta}}}
    &&&
    \simple{\reach{\delta}}
    \ar@{=}[r]
    &
    \minDfa{L}
    \ar@{=}[u]
    }
\]
Traditionally one first takes $\reach{\delta}$ by restricting to states reachable from $z_0$ via the underlying directed graph $\bigcup_{a \in \Sigma} \delta_a \subseteq Z \times Z$. From the perspective of dfa morphisms \emph{we construct the minimal sub-dfa of $\delta$} (i.e.\ the inclusion $\iota_1$ above). Secondly one can apply Hopcroft's algorithm to compute a partition of the states i.e.\ the equivalence classes over which the state-minimal dfa can then be defined. From the perspective of dfa morphisms \emph{we construct the largest quotient-dfa of $\delta$} (i.e.\ the surjection $acc_{\reach{\delta}}$ above)\footnote{By \emph{largest quotient} we mean the respective equivalence relation is the largest w.r.t.\ inclusion. The respective quotient-dfa actually has the least possible number of states amongst other such quotients.}. The latter sends a state to the language it accepts, yielding precisely the state-minimal machine $\minDfa{L}$. Notice the other way to minimise $\delta$: quotient first; restrict to reachable second.

\smallskip
We expressed minimisation in terms of dfa morphisms because one has exactly the same situation in $\dfa{\JSL}$, whose morphisms must also preserve the join-semilattice structure. For any $\JSL$-dfa $\delta := (s_0, \aS, \delta_a, F)$,
\[
  \xymatrix@=10pt{
    \delta \ar@{->>}[rrrr]^-{acc_\delta}
    &&&& \jslDfaSimple{\delta}
    \\
    &&&&
    \jslDfaReach{\jslDfaSimple{\delta}}
    \ar@{_{(}->}[u]_{\iota_2}
    \\
    \jslDfaReach{\delta} \ar@{_{(}->}[uu]^{\iota_1} \ar@{->>}[rrr]_-{acc_{\jslDfaReach{\delta}}}
    &&&
    \jslDfaSimple{\jslDfaReach{\delta}}
    \ar@{=}[r]
    &
    \jslDfaMin{L}
    \ar@{=}[u]
    }
\]
We've already seen the state-minimal $\jslDfaMin{L}$ and its close connection to the canonical dependency automaton. We now introduce the corresponding concepts of reachability and simplicity, recalling the notation of Definition \ref{def:dfas}.

\begin{definition}[$\JSL$-reachability and simplicity]
\label{def:jsl_reach_simple}
Let $\delta := (s_0, \aS, \delta_a, F)$ be a $\JSL$-dfa. 
\begin{enumerate}
  \item
  $\delta$ is \emph{$\JSL$-reachable} if it has no proper sub $\JSL$-dfas: every injective $\JSL$-dfa morphism $f : \gamma \monoto \delta$ is an isomorphism.
  Given any $\aR \subseteq \aS$ with $s_0 \in R$ and $\delta_a(R) \subseteq R$ for $a \in \Sigma$, then
  $
    \aR \cap \delta := (s_0, \aR, \delta_a |_{R \times R}, F \cap R)
  $
  is a $\JSL$-dfa accepting $L(\delta)$. In particular,
  \[
    \jslDfaReach{\delta} := \jslReach{\delta} \cap \delta
    \qquad
    \text{where $\jslReach{\delta} := \ang{reach(\delta)}_\aS$}
    \footnote{By $\ang{reach(\delta)}_\aS$ we mean the sub join-semilattice of $\aS$ generated by $reach(\delta)$.}
  \]
  is \emph{the reachable sub $\JSL$-dfa of $\delta$}.
  
  \item
  $\delta$ is \emph{simple} if it has no proper quotient $\JSL$-dfas: every surjective $\JSL$-dfa morphism $f : \delta \epito \gamma$ is an isomorphism.
   We have the join-semilattice of accepted languages $\jslLangs{\delta} := (langs(\delta), \cup, \emptyset)$ by Definition \ref{def:dfas}.5 and Lemma \ref{lem:jsl_dfa_joins}. Then the simple $\JSL$-dfa:
  \[
    \jslDfaSimple{\delta} := (L, \jslLangs{\delta}, \lambda X.a^{-1}X, \{ X \in langs(\delta) : \epsilon \in X \})
    \]
  is the \emph{largest quotient $\JSL$-dfa} of $\delta$ via $acc_\delta$. Finally, a $\JSL$-dfa $\delta$ is \emph{simplified} if $\jslDfaSimple{\delta} = \delta$.
   \endbox
\end{enumerate}  
\end{definition}

\begin{lemma}[Well-definedness of $\jslDfaReach{-}$ and $\jslDfaSimple{-}$]
  \label{lem:jsl_reach_simple_constructions}
  \item
  \begin{enumerate}
    \item
    $\jslDfaReach{\delta}$ is the $\JSL$-reachable sub-dfa of $\delta$.
    \item
    $\jslDfaSimple{\delta}$ is the simple quotient dfa of $\delta$.
    \item
    $L(\jslDfaSimple{\delta}_{@X}) = X$ for each $X \in simple(\delta)$.
  \end{enumerate}
\end{lemma}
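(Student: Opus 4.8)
Since part (3) is used when proving part (2), the plan is to establish (3) first, then (1) (which is independent), then (2); throughout I use that injective $\JSL_f$-morphisms are order-embeddings and surjective $\JSL_f$-morphisms are exactly the surjections (Note~\ref{note:jsl_extras}.4), together with Lemma~\ref{lem:jsl_dfa_joins}. For part~(3): first note $\jslDfaSimple{\delta}$ really is a $\JSL$-dfa, since by Lemma~\ref{lem:jsl_dfa_joins} $L(\delta_{@s_1})\cup L(\delta_{@s_2})=L(\delta_{@s_1\lor_\aS s_2})$ and $\emptyset=L(\delta_{@\bot_\aS})$, so $langs(\delta)$ is closed under binary union and contains $\emptyset$; each $\lambda X.a^{-1}X$ is a join-endomorphism preserving $langs(\delta)$ because $a^{-1}L(\delta_{@s})=L(\delta_{@\delta_a(s)})$; and the non-final states form a union-closed downset whose join $t$ witnesses $F=\overline{\down t}$. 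Now fix $X=L(\delta_{@s})\in simple(\delta)=langs(\delta)$: running $\jslDfaSimple{\delta}$ from $X$ on $w$ lands in $w^{-1}X$, and $X$-as-a-state is final iff $\epsilon\in X$, so $w\in L(\jslDfaSimple{\delta}_{@X})\iff\epsilon\in w^{-1}X\iff w\in X$, i.e.\ $L(\jslDfaSimple{\delta}_{@X})=X$.

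For part~(1): put $R:=\jslReach{\delta}=\ang{reach(\delta)}_\aS$. Then $s_0\in R$; since each $\delta_a$ is a join-morphism and $\delta_a[reach(\delta)]\subseteq reach(\delta)$, we get $\delta_a(R)\subseteq R$; and $F\cap R=\overline{\down_\aR t'}$ with $t'=\Lor_\aR\{r\in R:r\leq_\aS t\}$, this join in $\aR$ agreeing with the one in $\aS$ because $\aR$ is join-closed, so $t'\leq_\aS t$ and $\{r\in R:r\leq_\aS t\}=\down_\aR t'$. Hence $\jslDfaReach{\delta}$ is a $\JSL$-dfa and the inclusion $\aR\hookrightarrow\aS$ is a $\JSL$-dfa morphism, so $\jslDfaReach{\delta}$ is a sub $\JSL$-dfa of $\delta$ accepting $L(\delta)$ (the run of $\delta$ from $s_0$ stays in $reach(\delta)\subseteq R$). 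Any sub $\JSL$-dfa of $\delta$ corresponds to a join-closed, transition-closed subset $R'\ni s_0$; by induction on word length $reach(\delta)\subseteq R'$, hence $R\subseteq R'$, so $\jslDfaReach{\delta}$ embeds into every sub $\JSL$-dfa of $\delta$. Finally $\jslDfaReach{\delta}$ is itself $\JSL$-reachable: an injective $\JSL$-dfa morphism $g:\gamma\monoto\jslDfaReach{\delta}$ has $\im{g}$ a join-closed, transition-closed subset of $R$ containing $s_0$, so $R\subseteq\im{g}\subseteq R$ and $g$ is onto, hence an isomorphism; uniqueness of the $\JSL$-reachable sub $\JSL$-dfa follows.

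For part~(2): $acc_\delta:\delta\to\jslDfaSimple{\delta}$, $s\mapsto L(\delta_{@s})$, is a dfa morphism by the derivative identities above and preserves $\bot$ and binary joins by Lemma~\ref{lem:jsl_dfa_joins}, so it is a surjective $\JSL$-dfa morphism, exhibiting $\jslDfaSimple{\delta}$ as a quotient $\JSL$-dfa of $\delta$. It is simple: a surjective $\JSL$-dfa morphism $f:\jslDfaSimple{\delta}\epito\gamma$ preserves accepted languages, so by part~(3) $L(\gamma_{@f(X)})=L(\jslDfaSimple{\delta}_{@X})=X$, whence $f(X_1)=f(X_2)$ forces $X_1=X_2$, $f$ is injective, hence an isomorphism. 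And it is the largest such quotient: for any surjective $\JSL$-dfa morphism $q:\delta\epito\gamma$, language-preservation of $q$ gives $\kernel{q}\subseteq\kernel{acc_\delta}$, so $acc_\delta$ factors uniquely as $acc_\delta=\bar q\circ q$ with $\bar q:\gamma\to\jslDfaSimple{\delta}$ surjective, and $\bar q$ is a $\JSL$-dfa morphism since $q$ is a surjective one (transport each defining identity along preimages under $q$).

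\textbf{Main obstacle.} There is no serious obstacle; the argument is bookkeeping. The only points needing care are the $\JSL$-dfa well-definedness checks — that the final-state set stays of the form $\overline{\down t}$ under the sub-object construction (using that joins in a sub-join-semilattice agree with those in $\aS$) and under the quotient construction — together with the repeated use of the fact that the image of a $\JSL$-dfa morphism is a join-closed, transition-closed, initial-state-containing subset, and the mild sequencing point that part~(3) must be available before one argues that $\jslDfaSimple{\delta}$ is simple.
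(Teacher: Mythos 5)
Your proof is correct and follows essentially the same route as the paper's: well-definedness of the sub/quotient $\JSL$-dfas via join-closure and the derivative identity, $\JSL$-reachability from minimality of $\ang{reach(\delta)}_\aS$, and simplicity from the fact that each state $X$ accepts $X$. The only differences are cosmetic or additive: you verify the final-state condition for the sub-dfa by an explicit join computation where the paper uses the characteristic morphism $h : \aS \to \two$, and you additionally check the universal (largest-quotient) property of $acc_\delta$, which the paper defers to its definition of \emph{largest quotient}.
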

\begin{proof}
  \item
  \begin{enumerate}
    \item
    $\aR \cap \delta$ is a well-defined $\JSL$-dfa: (a) the conditions ensure each $\delta_a : \aS \to \aS$ restricts to an $\aR$-endomorphism, (b) just as $F = h^{-1}(\{ 1 \})$ for some $h : \aS \to \two$, $F \cap R = (h \circ \iota)^{-1}(\{ 1 \})$ where $\iota : \aR \hookto \aS$.
    
    Concerning well-definedness of $\jslDfaReach{\delta}$, $\aR: = \jslReach{\delta}$ is the reachable part of the underlying classical dfa closed under all $\aS$-joins. Certainly $\aR \subseteq \aS$ and $s_0 \in R$. Next, $\delta_a[R] \subseteq R$ because $\delta_a$ preserves all joins, so applying $\delta_a$ to joins of classically reachable states is the same as taking the join of $a$-successors of classically reachable states. It accepts $L$ because the reachable part of its underlying classical dfa is precisely $\reach{\delta}$.

    Finally, $\jslDfaReach{\delta}$ is $\JSL$-reachable because any sub $\JSL$-dfa must at least contain the underlying reachable part and be closed under the algebraic structure.

    \item
    Concerning well-definedness of $\jslDfaSimple{\delta}$, $langs(\delta)$ is closed under arbitrary unions by Lemma \ref{lem:jsl_dfa_joins}. Certainly $L \in langs(\delta)$ and the transitions are well-defined by Definition \ref{def:dfas}.2. The final states are well-defined because the union of all languages sans $\epsilon$ does not contain it either.
    
    It accepts $L$ because the reachable part of its underlying classical dfa is precisely $\minDfa{L}$. Finally, $acc_\delta : \delta \to \jslLangs{\delta}$ is additionally a join-semilattice morphism by Lemma \ref{lem:jsl_dfa_joins}. It is simple because each state $X \in lang(\delta)$ accepts $X$ i.e.\ distinct states accept distinct languages, so there can be no quotient dfa and thus also no quotient $\JSL$-dfa.

    \item
    Follows via Definition \ref{def:canon_dep_aut}.2.
  \end{enumerate}
\end{proof}

However, the self-duality of $\dfa{\JSL}$ provides an additional relationship.

\begin{theorem}[$\JSL$-reachability is dual to simplicity]
  \label{thm:jsl_reach_dual_simple}
  Let $\delta := (s_0, \aS, \delta_a, F)$ be a $\JSL$-dfa.
  \begin{enumerate}
    \item
    $\delta$ is $\JSL$-reachable iff every join-irreducible $j \in J(\aS)$ is classically reachable.
    \item
    $\delta$ is simple iff distinct states accept distinct languages.
    \item
    $\delta$ is $\JSL$-reachable iff its dual $\delta^\pentagram$ is simple.
  \end{enumerate}
\end{theorem}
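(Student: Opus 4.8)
The proof is largely a matter of assembling Definitions~\ref{def:jsl_reach_simple}, the constructions of Lemma~\ref{lem:jsl_reach_simple_constructions}, and the self-duality of Theorem~\ref{thm:dfa_jsl_self_dual}; I would treat the three clauses in turn.

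For clause (1) I would unravel the definition: $\delta$ is $\JSL$-reachable exactly when the canonical sub $\JSL$-dfa $\jslDfaReach{\delta}$, whose underlying join-semilattice is $\ang{reach(\delta)}_\aS$, equals $\delta$, i.e.\ when $reach(\delta)$ generates $\aS$ under finite joins (well-definedness of $\jslDfaReach{\delta}$ and of the inclusion $\jslDfaReach{\delta}\monoto\delta$ as a $\JSL$-dfa morphism is Lemma~\ref{lem:jsl_reach_simple_constructions}.1). The implication ``$\Leftarrow$'': if every $j\in J(\aS)$ is classically reachable, then any injective $\JSL$-dfa morphism $f:\gamma\monoto\delta$ has image a sub-join-semilattice of $\aS$ containing every $\delta_w(s_0)=f(\gamma_w(c_0))$, hence containing $reach(\delta)\supseteq J(\aS)$; since $J(\aS)$ join-generates $\aS$ (Note~\ref{note:irreducibles}.3) the image is all of $\aS$, so $f$ is bijective and thus an isomorphism. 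The implication ``$\Rightarrow$'': $\JSL$-reachability forces $\ang{reach(\delta)}_\aS=\aS$, and for $j\in J(\aS)$ we write $j=\Lor_\aS X$ with $X\subseteq reach(\delta)$ finite, whence $j\in X$ by join-irreducibility.

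For clause (2) I would use the acceptance morphism $acc_\delta:\delta\epito\jslDfaSimple{\delta}$, $z\mapsto L(\delta_{@z})$, which is a surjective $\JSL$-dfa morphism realising the largest quotient (Lemma~\ref{lem:jsl_reach_simple_constructions}.2). If $\delta$ is simple then $acc_\delta$ is an isomorphism, in particular injective, so distinct states accept distinct languages. Conversely, dfa morphisms preserve the accepted language of each state (since $f:\gamma_{@z}\to\delta_{@f(z)}$ is again a dfa morphism, using Definition~\ref{def:dfas}.4), so any surjective $\JSL$-dfa morphism $f:\delta\epito\gamma$ satisfies $f(z_1)=f(z_2)\Rightarrow L(\delta_{@z_1})=L(\delta_{@z_2})$; if distinct states of $\delta$ accept distinct languages this makes $f$ injective, hence an isomorphism, so $\delta$ is simple.

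For clause (3) I would argue formally from the self-duality $(-)^\pentagram:\dfa{\JSL}^{op}\to\dfa{\JSL}$. The key preliminary is that $(-)^\pentagram$ interchanges injective and surjective $\JSL$-dfa morphisms: on underlying functions $f^\pentagram=f_*$ is the $\JSL_f$-adjoint, and by the self-duality of $\JSL_f$ (Note~\ref{note:jsl_extras}.3) combined with the identification monos$=$injective, epis$=$surjective (Note~\ref{note:jsl_extras}.4), $f$ is injective iff $f_*$ is surjective and $f$ is surjective iff $f_*$ is injective (alternatively one checks directly, using that injective $\JSL_f$-morphisms are order-embeddings, that $f_*(f(s_0))=s_0$). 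Since $(-)^\pentagram$ is an equivalence whose witnessing isomorphism $\lambda$ has identity components, so that $(\delta^\pentagram)^\pentagram=\delta$, applying $(-)^\pentagram$ gives a bijection between surjective $\JSL$-dfa morphisms out of $\delta^\pentagram$ and injective $\JSL$-dfa morphisms into $\delta$, sending isomorphisms to isomorphisms; hence ``every surjective morphism out of $\delta^\pentagram$ is an isomorphism'' is equivalent to ``every injective morphism into $\delta$ is an isomorphism'', i.e.\ $\delta^\pentagram$ is simple iff $\delta$ is $\JSL$-reachable. (One could instead derive (3) from (1), (2) and Lemma~\ref{lem:dual_dfa_reverse_lang}, but this is messier.) I expect the only delicate point to be this preliminary observation in clause (3) — that passing to adjoints swaps injectivity and surjectivity — together with the quantifier bookkeeping ensuring that ``surjections out of $\delta^\pentagram$'' really correspond to ``injections into $\delta$'' over all choices of codomain/domain; the remainder is routine.
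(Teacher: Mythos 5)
Your proof is correct and follows essentially the same route as the paper: clause (1) via $J(\aS)$ being the minimal join-generating set, clause (2) via the acceptance map $acc_\delta$, and clause (3) via the self-duality $(-)^\pentagram$ exchanging injective and surjective morphisms (monos and epis). You merely spell out the details the paper leaves implicit, such as why adjoints swap injectivity and surjectivity and why $(\delta^\pentagram)^\pentagram=\delta$.
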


\begin{proof}
  $\delta$ is $\JSL$-reachable iff $\jslReach{\delta} = \delta$ iff every state is a join of classically reachable states. Since $J(\aS)$ is the minimal join-generating set we infer (1). Concerning (2), $\delta$ is simple iff $acc_\delta : \delta \to \jslLangs{\delta}$ is bijective iff distinct states accept distinct languages. Finally, the concepts of $\JSL$-reachable and simple are categoricially dual, recalling $\JSL_f$-monos are precisely the injective morphisms and $\JSL_f$-epis are precisely the surjective ones (see Note \ref{note:jsl_extras}.4).
\end{proof}

We also mention a basic characterisation of simplified $\JSL$-dfas.

\begin{lemma}[Simplified $\JSL$-dfas]
  \label{lem:simplified_jsl_dfa_char}
  For any $\JSL$-dfa $\gamma$ t.f.a.e.\
  \begin{enumerate}
    \item 
    $\gamma$ is simplified i.e.\ $\jslDfaSimple{\gamma} = \gamma$.

    \item
    There exists a finite set of regular languages $S \ni L(\gamma)$, closed under unions and left-letter quotients s.t.\
    \[
      \gamma
      = (L(\gamma), (S, \cup, \emptyset), \lambda X. a^{-1} X, \{ K \in S : \epsilon \in K \}).
    \]
  \end{enumerate}
\end{lemma}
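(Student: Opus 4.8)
The plan is to prove the two implications separately, using the concrete description of $\jslDfaSimple{\gamma}$ from Definition \ref{def:jsl_reach_simple}.2 together with the characterization of simplicity from Theorem \ref{thm:jsl_reach_dual_simple}.2 (distinct states accept distinct languages). First I would unpack $(1)\Rightarrow(2)$. Assume $\gamma := (s_0,\aS,\delta_a,F)$ is simplified, so $\jslDfaSimple{\gamma} = \gamma$. By the explicit form $\jslDfaSimple{\gamma} = (L, \jslLangs{\gamma}, \lambda X. a^{-1}X, \{X \in langs(\gamma) : \epsilon \in X\})$, setting $S := langs(\gamma)$ gives exactly the desired presentation: $S$ is a finite set of regular languages containing $L(\gamma) = L(\gamma_{@s_0})$, it is closed under unions by Lemma \ref{lem:jsl_dfa_joins} (a join of states accepts the union of the accepted languages), and it is closed under left-letter quotients because $L(\gamma_{@\delta_a(z)}) = a^{-1} L(\gamma_{@z})$ by Definition \ref{def:dfas}.2. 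The join-semilattice on $S$ is $(S,\cup,\emptyset)$, the transitions are $\lambda X. a^{-1}X$, the final states are $\{K \in S : \epsilon \in K\}$, and the initial state is $L(\gamma) \in S$. Matching this against the hypothesis $\jslDfaSimple{\gamma}=\gamma$ yields (2) verbatim.

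For $(2)\Rightarrow(1)$, suppose $\gamma$ has the stated form with witnessing family $S$. I would argue that $\gamma$ is already simple, i.e.\ $acc_\gamma$ is a bijection, which by Lemma \ref{lem:jsl_reach_simple_constructions}.2 forces $\jslDfaSimple{\gamma}=\gamma$. The key observation is that for the state $K \in S$, the underlying dfa of $\gamma$ started at $K$ accepts exactly $K$: since the transitions are Brzozowski derivatives $\lambda X. a^{-1}X$ and the final states are those $K' \in S$ with $\epsilon \in K'$, a routine induction on $w \in \Sigma^*$ gives $\gamma_w(K) = w^{-1}K$, hence $w \in L(\gamma_{@K}) \iff \epsilon \in w^{-1}K \iff w \in K$. (Here one uses that $S$ is closed under left-letter quotients so $w^{-1}K$ remains in $S$; closure under unions is not even needed for this step.) Therefore $acc_\gamma(K) = L(\gamma_{@K}) = K$, so distinct states of $\gamma$ accept distinct languages, and by Theorem \ref{thm:jsl_reach_dual_simple}.2 $\gamma$ is simple; equivalently $\jslDfaSimple{\gamma}=\gamma$, which is (1).

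I do not expect a genuine obstacle here — the lemma is essentially a restatement of the definition of $\jslDfaSimple{-}$ plus the derivative identity $\gamma_w(K) = w^{-1}K$. The one point requiring mild care is checking that the data in clause (2) really does match $\jslDfaSimple{\gamma}$ on the nose (same underlying set $S = langs(\gamma)$, same $\JSL$-structure, same initial/final states), rather than merely up to isomorphism; this is where one must use that $L(\gamma_{@K}) = K$ to identify $langs(\gamma)$ with $S$ rather than with some quotient of it. A secondary bookkeeping point is that $\emptyset \in S$ (it is the bottom of $(S,\cup,\emptyset)$), which holds because $\emptyset = \bot_{\jslLangs{\gamma}} = L(\gamma_{@\bot_\aS})$; if one prefers, one can note that any finite union-closed family has a least element and $\emptyset$ is forced as the empty join. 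Neither point obstructs the argument.
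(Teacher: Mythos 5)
Your proof is correct and follows essentially the same route as the paper: for $(1)\Rightarrow(2)$ take $S := langs(\gamma)$ and use $L(\gamma_{@\delta_a(z)}) = a^{-1}L(\gamma_{@z})$, and for $(2)\Rightarrow(1)$ show each state $K$ accepts $K$ so the quadruple coincides with $\jslDfaSimple{\gamma}$. You actually spell out the $(2)\Rightarrow(1)$ direction (via $\gamma_w(K) = w^{-1}K$) more fully than the paper's rather terse proof, which is a welcome addition rather than a deviation.
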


\begin{proof}
  Given (1) then (2) follows by choosing $S := langs(\gamma)$, recalling $L(\gamma_a(z)) = a^{-1} L(\gamma_a)$ by Lemma \ref{def:dfas}.5. Given (2), the specified quadruple is a well-defined $\JSL$-dfa because $a^{-1}(-)$ preserves unions and there is a largest non-final state $\bigcup \{ K \in S : \epsilon \in S \}$.
\end{proof}

\begin{corollary}[$\jslDfaSimple{-}$ is the De Morgan dual of $\jslDfaReach{-}$]
  \label{cor:reach_de_morgan_dual_simple}
  \[
      acc_{(\jslDfaReach{\delta^\pentagram})^\pentagram} : (\jslDfaReach{\delta^\pentagram})^\pentagram \to \jslDfaSimple{\delta}
  \]
  is an isomorphism for any $\JSL$-dfa $\delta$.
\end{corollary}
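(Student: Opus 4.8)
The plan is to deduce this from the self-duality $(-)^\pentagram$ of $\dfa{\JSL}$ (Theorem~\ref{thm:dfa_jsl_self_dual}) together with Theorem~\ref{thm:jsl_reach_dual_simple}. Write $\gamma := \jslDfaReach{\delta^\pentagram}$. By Lemma~\ref{lem:jsl_reach_simple_constructions}.1 the machine $\gamma$ is $\JSL$-reachable, so Theorem~\ref{thm:jsl_reach_dual_simple}.3 tells us its dual $\gamma^\pentagram = (\jslDfaReach{\delta^\pentagram})^\pentagram$ is simple. Since $acc_{\gamma^\pentagram} : \gamma^\pentagram \epito \jslDfaSimple{\gamma^\pentagram}$ is a surjective $\JSL$-dfa morphism and $\gamma^\pentagram$ is simple, Definition~\ref{def:jsl_reach_simple}.2 forces $acc_{\gamma^\pentagram}$ to be an isomorphism. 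As $acc_{\gamma^\pentagram}$ is precisely the morphism named in the statement, it only remains to identify its codomain, i.e.\ to prove $\jslDfaSimple{\gamma^\pentagram} = \jslDfaSimple{\delta}$.

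For this I would produce a surjective $\JSL$-dfa morphism $\delta \epito \gamma^\pentagram$ and then observe that $\jslDfaSimple{-}$ is insensitive to such quotients. The surjection is obtained by dualising the sub-object inclusion: let $\iota : \jslDfaReach{\delta^\pentagram} \monoto \delta^\pentagram$ be the inclusion of the $\JSL$-reachable sub $\JSL$-dfa and apply $(-)^\pentagram$ to get $\iota^\pentagram : (\delta^\pentagram)^\pentagram \to \gamma^\pentagram$. Its underlying function is the adjoint $\iota_*$, and since $(-)_* : \JSL_f^{op} \to \JSL_f$ is an equivalence it carries the monomorphism $\iota$ (an epimorphism of $\JSL_f^{op}$) to an epimorphism, hence a surjection by Note~\ref{note:jsl_extras}.4; thus $\iota^\pentagram$ is surjective. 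Precomposing with the natural isomorphism $\lambda_\delta : \delta \xrightarrow{\ \cong\ } (\delta^\pentagram)^\pentagram$ of Theorem~\ref{thm:dfa_jsl_self_dual} yields the required surjective $\JSL$-dfa morphism $q := \iota^\pentagram \circ \lambda_\delta : \delta \epito \gamma^\pentagram$.

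Finally, because $q$ is a dfa morphism it preserves the accepted language, so $L(\gamma^\pentagram) = L(\delta)$; and for each state $z$ of $\delta$ the map $q$ restricts to a dfa morphism $\delta_{@z} \to (\gamma^\pentagram)_{@q(z)}$, whence $L((\gamma^\pentagram)_{@q(z)}) = L(\delta_{@z})$, and surjectivity of $q$ then gives $langs(\gamma^\pentagram) = langs(\delta)$. Since $\jslDfaSimple{-}$ is built solely from the accepted language and this set of accepted languages (Definition~\ref{def:jsl_reach_simple}.2), we conclude that $\jslDfaSimple{\gamma^\pentagram}$ and $\jslDfaSimple{\delta}$ coincide as $4$-tuples, which finishes the argument. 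The one delicate point is the middle step: one must dualise the inclusion into the \emph{reachable} part (rather than some ad hoc sub-object) and insert $\lambda_\delta$ so that the resulting surjection has source $\delta$ itself; with that in place everything reduces to the cited results. Conceptually this Corollary is the $\dfa{\JSL}$-level manifestation of the fact that in $\JSL_f$ the adjoint functor exchanges ``generate under joins'' with ``take the largest quotient''.
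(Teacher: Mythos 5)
Your proposal is correct and follows essentially the same route as the paper: dualise the inclusion $\iota : \jslDfaReach{\delta^\pentagram} \monoto \delta^\pentagram$ and precompose with $\lambda_\delta$ to get a surjection $\delta \epito (\jslDfaReach{\delta^\pentagram})^\pentagram$, invoke Theorem \ref{thm:jsl_reach_dual_simple}.3 for simplicity of the codomain so that the acceptance map is an isomorphism, and use surjectivity of $\iota_*$ to identify $langs((\jslDfaReach{\delta^\pentagram})^\pentagram)$ with $langs(\delta)$. The only difference is that you spell out the language-preservation step in slightly more detail than the paper does.
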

\begin{proof}
  Given $\delta := (s_0, \aS, \delta_a, F)$ there is an injective $\JSL$-dfa morphism $\iota : \jslReach{\delta^\pentagram} \hookto \delta^\pentagram$ by Lemma \ref{lem:jsl_reach_simple_constructions}. By Theorem \ref{thm:dfa_jsl_self_dual} we have:
  \[
    \delta
    \xto{\lambda_\delta}
    (\delta^\pentagram)^\pentagram
    \overset{\iota_*}{\epito}
    (\jslReach{\delta^\pentagram})^\pentagram
  \]
  where the identity function $\lambda_\delta := id_\aS$ is a component of the natural isomorphism witnessing self-duality, and $\iota_*$ is surjective by Note \ref{note:jsl_extras}.4. By Theorem \ref{thm:jsl_reach_dual_simple}.3 the codomain is simple, so the surjective morphism $acc_{(\jslReach{\delta^\pentagram})^\pentagram}$ is an isomorphism. $langs((\jslReach{\delta^\pentagram})^\pentagram) = langs(\delta)$ because $\iota_*$ is surjective, hence  $acc_{(\jslReach{\delta^\pentagram})^\pentagram}$ has codomain $\jslLangs{\delta}$.
\end{proof}

\begin{example}[Dualising the reachable subset construction]
  \label{ex:dualising_reachable_subset_construction}
  \item
  In Example \ref{ex:dual_full_subset_construction} we described the dual of the full subset construction. Again letting $\delta = \Det(\rN, \Delta_Z, \rev{\rN})$, we now provide a description of $\gamma^\pentagram$ where $\gamma := \jslDfaReach{\delta}$. By Corollary \ref{cor:reach_de_morgan_dual_simple} we have the isomorphism:
  \[
    acc_{\gamma^\pentagram} : \gamma^\pentagram \to \jslDfaSimple{\delta^\pentagram}
  \]
  sending unions of reachable subsets to their accepted language via the $\JSL$-dfa  $\gamma^\pentagram$. We now describe this isomorphism in more detail. First we write $\gamma = (I, \aS, \lambda X. \rN_a[X], \{ X \in S : X \cap F \neq \emptyset \})$, so that:
  \[
    \gamma^\pentagram
    = (reach(\rN) \cap \overline{F}, \aS^\pOp, \beta_a, \{ X \in S : I \nsubseteq X \})
    \qquad
    \text{where $\beta_a := (\gamma_a)_*$.}
  \]
  Then $\beta_w = (\gamma_{w^r})_* = \lambda Y. \bigcup \{ X \in \rs{\rN} : \rN_{w^r}[X] \subseteq Y \}$ since the reachable subsets $\rs{\rN}$ join-generate $\gamma$. Next,
  \[
  \begin{tabular}{lll}
    $w \in acc_{\gamma^\pentagram}(Y)$
    &
    $\iff \beta_w(Y) \in F_{\gamma^\pentagram}$
    & (by def.)
    \\ &
    $\iff I \nsubseteq \beta_w(Y)$
    \\ &
    $\iff \neg (I \subseteq \bigcup \{ X \in \rs{\rN} : \rN_{w^r}[X] \subseteq Y \})$
    & (see above)
    \\ &
    $\iff \neg (\rN_{w^r}[I] \subseteq Y)$
    & (see below)
    \\ &
    $\iff \rN_{w^r}[I] \nsubseteq Y$.
  \end{tabular}  
  \]
  Concerning the marked equivalence, $(\oT)$ follows immediately because $I \in \rs{\rN}$. Conversely if for each $z \in I$ we have $z \in X_z \in \rs{\rN}$ with $\rN_{w^r}[X_z] \subseteq Y$ then $\rN_{w^r}[I] \subseteq \rN_{w^r}[\bigcup_{z \in Z} X_z] \subseteq Y$ too. Thus we obtain a more explicit description of the isomorphism i.e.\ $acc_{\gamma^\pentagram}(Y) = \{ w \in \Sigma^* : \rN_{w^r}[I] \nsubseteq Y \}$.
  \endbox
\end{example}

\begin{corollary}
  \label{cor:reach_simple_mutual_preservation}
  $\jslDfaReach{-}$ preserves simplicity and $\jslDfaSimple{-}$ preserves $\JSL$-reachability.
\end{corollary}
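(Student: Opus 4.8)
The plan is to prove the two assertions using the duality machinery already established, specifically Corollary~\ref{cor:reach_de_morgan_dual_simple} together with Theorem~\ref{thm:jsl_reach_dual_simple}. The key observation is that $\jslDfaReach{-}$ and $\jslDfaSimple{-}$ are De~Morgan duals of one another, so a statement about one transfers to a statement about the other by applying $(-)^\pentagram$ and using the self-duality of $\dfa{\JSL}$ from Theorem~\ref{thm:dfa_jsl_self_dual}.

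First I would prove that $\jslDfaSimple{-}$ preserves $\JSL$-reachability. Suppose $\delta$ is $\JSL$-reachable; I want $\jslDfaSimple{\delta}$ to be $\JSL$-reachable. By Theorem~\ref{thm:jsl_reach_dual_simple}.1 it suffices to show every join-irreducible of $\jslLangs{\delta}$ is classically reachable in $\jslDfaSimple{\delta}$. Since $acc_\delta : \delta \epito \jslDfaSimple{\delta}$ is a surjective $\JSL$-dfa morphism (Lemma~\ref{lem:jsl_reach_simple_constructions}.2), it preserves both joins and the underlying classical transition structure, so $acc_\delta$ maps classically reachable states of $\delta$ onto classically reachable states of $\jslDfaSimple{\delta}$, and being a join-semilattice surjection it maps a join-generating set of $\aS$ onto a join-generating set of $\jslLangs{\delta}$. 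By hypothesis every join-irreducible of $\aS$ is classically reachable, and $J(\aS)$ join-generates $\aS$; hence $acc_\delta[J(\aS)]$ join-generates $\jslLangs{\delta}$ and consists of classically reachable states. Therefore every join-irreducible of $\jslLangs{\delta}$, being a member of any join-generating set (Note~\ref{note:irreducibles}.3), is classically reachable, so $\jslDfaSimple{\delta}$ is $\JSL$-reachable.

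For the other assertion, that $\jslDfaReach{-}$ preserves simplicity, I would dualise the statement just proved rather than argue directly. Suppose $\delta$ is simple. Then $\delta^\pentagram$ is $\JSL$-reachable by Theorem~\ref{thm:jsl_reach_dual_simple}.3 (applied to $\delta^\pentagram$, whose dual is $(\delta^\pentagram)^\pentagram \cong \delta$). By the part just proved, $\jslDfaSimple{\delta^\pentagram}$ is $\JSL$-reachable. By Corollary~\ref{cor:reach_de_morgan_dual_simple}, $(\jslDfaReach{\delta})^\pentagram$ is isomorphic (via $acc$) to $\jslDfaSimple{(\delta^\pentagram)^\pentagram}$, which by the isomorphism $(\delta^\pentagram)^\pentagram \cong \delta$ and functoriality of $\jslDfaSimple{-}$ is isomorphic to $\jslDfaSimple{\delta}$ --- wait, more carefully: applying Corollary~\ref{cor:reach_de_morgan_dual_simple} with $\delta$ replaced by $\delta^\pentagram$ gives that $(\jslDfaReach{(\delta^\pentagram)^\pentagram})^\pentagram \cong \jslDfaSimple{\delta^\pentagram}$, and since $(\delta^\pentagram)^\pentagram \cong \delta$ this reads $(\jslDfaReach{\delta})^\pentagram \cong \jslDfaSimple{\delta^\pentagram}$, which we have shown is $\JSL$-reachable. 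Hence $\jslDfaReach{\delta}$, being the dual of a $\JSL$-reachable $\JSL$-dfa, is simple by Theorem~\ref{thm:jsl_reach_dual_simple}.3 again.

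The main obstacle I anticipate is bookkeeping around the several natural isomorphisms: one must be careful that the isomorphism $(\delta^\pentagram)^\pentagram \cong \delta$ (the component $\lambda_\delta = id_\aS$ of Theorem~\ref{thm:dfa_jsl_self_dual}) is a genuine $\dfa{\JSL}$-isomorphism and that both $\jslDfaReach{-}$ and $\jslDfaSimple{-}$ send isomorphic $\JSL$-dfas to isomorphic ones, so that Corollary~\ref{cor:reach_de_morgan_dual_simple} can be transported along it. Once that is checked the argument is a short diagram chase. An entirely direct alternative for the second assertion --- showing that if $\delta$ is simple then distinct states of $\jslDfaReach{\delta}$ accept distinct languages --- is also available and essentially immediate, since $\jslDfaReach{\delta}$ is a sub-$\JSL$-dfa of $\delta$ and the acceptance map restricts injectively; I would include this as a sanity check or as the primary proof if the duality bookkeeping proves cumbersome.
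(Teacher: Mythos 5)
Your proposal is correct, but it mirrors the paper's decomposition rather than reproducing it. The paper proves the first assertion directly --- if $\delta$ is simple then distinct states accept distinct languages (Theorem~\ref{thm:jsl_reach_dual_simple}.2), and since $\jslDfaReach{\delta}$ is a sub-dfa of $\delta$ this property persists, so $\jslDfaReach{\delta}$ is simple --- and then obtains the second assertion by dualising via Theorem~\ref{thm:jsl_reach_dual_simple}.3 and Corollary~\ref{cor:reach_de_morgan_dual_simple}. You do the opposite: a direct proof that $\jslDfaSimple{-}$ preserves $\JSL$-reachability (a surjective $\JSL$-dfa morphism sends classically reachable states to classically reachable states and join-generating sets to join-generating sets, and every join-irreducible lies in every join-generating set by Note~\ref{note:irreducibles}.3), followed by the dual transfer for the other half. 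Both arguments are sound and of comparable length; the paper's direct half is slightly more elementary in that it needs only the ``distinct languages'' characterisation and the sub-dfa observation, whereas yours needs the small algebraic fact about images of generating sets under surjections. Your worry about transporting along $(\delta^\pentagram)^\pentagram \cong \delta$ is unnecessary: by Theorem~\ref{thm:dfa_jsl_self_dual} the double dual is literally $\delta$ (the witnessing component is $id_\aS$), so no functoriality of $\jslDfaReach{-}$ or $\jslDfaSimple{-}$ on isomorphisms needs to be checked. The direct alternative you mention in your last sentence is exactly the paper's argument for the first assertion.
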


\begin{proof}
  If $\delta$ is simple then it is isomorphic to $\gamma := \jslDfaSimple{\delta}$ so distinct states accept distinct languages by Theorem \ref{thm:jsl_reach_dual_simple}.2. Ignoring the join-structure, $\jslDfaReach{\gamma}$ is a sub-dfa of $\gamma$, so distinct states continue to accept distinct languages and reapplying Theorem \ref{thm:jsl_reach_dual_simple}.2 we deduce simplicity. The second statement follows by duality i.e.\ Theorem \ref{thm:jsl_reach_dual_simple}.3.
\end{proof}

\begin{corollary}[Characterisation of $\dfa{\JSL}$-minimality]
  \label{cor:min_jsl_dfa_characterisation}
  A $\JSL$-dfa $\delta$ is $\JSL$-reachable and simple iff $acc_\delta : \delta \to \jslDfaMin{L(\delta)}$ is a well-defined $\JSL$-dfa isomorphism.
\end{corollary}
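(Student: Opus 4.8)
The plan is to prove the biconditional by comparing $\delta$ directly with the state-minimal machine $\jslDfaMin{L(\delta)}$ of Definition~\ref{def:state_min_jsl_dfa}, exploiting that both ``$\JSL$-reachable'' and ``simple'' are phrased purely in terms of injective resp.\ surjective $\JSL$-dfa morphisms (Definition~\ref{def:jsl_reach_simple}) and are hence invariant under $\JSL$-dfa isomorphism.

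\emph{(}$\Leftarrow$\emph{)} First I would record that $\jslDfaMin{L}$ is itself $\JSL$-reachable and simple. Simplicity is immediate since each state $X \in \LQ{L}$ accepts exactly $X$ (Lemma~\ref{lem:jsl_reach_simple_constructions}.3), so distinct states accept distinct languages and Theorem~\ref{thm:jsl_reach_dual_simple}.2 applies; for $\JSL$-reachability note that $J(\jslLQ{L}) \subseteq \LW{L}$ by Definition~\ref{def:lang_quos}.4, and every $u^{-1}L \in \LW{L}$ is the value of the underlying transition function at the initial state $L$, hence classically reachable, so Theorem~\ref{thm:jsl_reach_dual_simple}.1 applies. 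Now if $acc_\delta$ is a well-defined $\JSL$-dfa isomorphism onto $\jslDfaMin{L(\delta)}$, then $\delta$ is isomorphic to $\jslDfaMin{L(\delta)}$ and, by iso-invariance, inherits both properties.

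\emph{(}$\Rightarrow$\emph{)} Conversely, assume $\delta = (s_0,\aS,\delta_a,F)$ is $\JSL$-reachable and simple. The crucial step is to show $langs(\delta) = \LQ{L(\delta)}$ \emph{as sets}. The inclusion $\LQ{L(\delta)} \subseteq langs(\delta)$ holds for \emph{any} $\JSL$-dfa accepting $L(\delta)$: each $u^{-1}L(\delta)$ is accepted by the classically reachable state $\delta_u(s_0)$, and an arbitrary left quotient is a finite union of such left word quotients, so equals the language accepted by the corresponding finite join of states, using Lemma~\ref{lem:jsl_dfa_joins}. For the reverse inclusion I would use $\JSL$-reachability in the form ``every state of $\delta$ is a join of classically reachable states'' (established in the proof of Theorem~\ref{thm:jsl_reach_dual_simple}): each classically reachable state accepts some $u^{-1}L(\delta)$, so by Lemma~\ref{lem:jsl_dfa_joins} every accepted language is a finite union of left word quotients, i.e.\ lies in $\LQ{L(\delta)}$. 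Consequently the $4$-tuples $\jslDfaSimple{\delta} = (L(\delta), \jslLangs{\delta}, \lambda X.a^{-1}X, \{ X \in langs(\delta) : \epsilon \in X \})$ and $\jslDfaMin{L(\delta)}$ literally agree, so the always-defined acceptance morphism $acc_\delta : \delta \to \jslDfaSimple{\delta}$ of Lemma~\ref{lem:jsl_reach_simple_constructions}.2 \emph{is} a well-defined $\JSL$-dfa morphism into $\jslDfaMin{L(\delta)}$; it is surjective by construction and injective because a simple $\JSL$-dfa has distinct states accepting distinct languages (Theorem~\ref{thm:jsl_reach_dual_simple}.2), hence a $\JSL$-dfa isomorphism.

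The main obstacle I anticipate is the set equality $langs(\delta) = \LQ{L(\delta)}$, and in particular making precise why the inclusion $\subseteq$ genuinely needs $\JSL$-reachability: a $\JSL$-dfa that is not $\JSL$-reachable may carry a ``junk'' state whose accepted language is not a left quotient of $L(\delta)$, so that $acc_\delta$ fails to land in $\jslDfaMin{L(\delta)}$ at all --- this is exactly what the ``well-defined'' hypothesis rules out, and it must be threaded carefully through the argument. A secondary, more routine point is checking that $\jslDfaSimple{\delta}$ and $\jslDfaMin{L(\delta)}$ coincide on the nose (identical transition action and identical final-state set) once their state sets are shown equal, rather than merely up to isomorphism.
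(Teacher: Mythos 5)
Your proposal is correct and follows essentially the same route as the paper: both directions hinge on showing $langs(\delta)=\LQ{L(\delta)}$ via $\JSL$-reachability together with Lemma~\ref{lem:jsl_dfa_joins}, so that $\jslDfaSimple{\delta}$ coincides with $\jslDfaMin{L(\delta)}$ on the nose, and on transporting $\JSL$-reachability and simplicity across the isomorphism for the converse. The only (harmless) divergence is in showing $\jslDfaMin{L}$ is simple: you argue directly that each state $X$ accepts $X$, whereas the paper routes through the duality $\jslDfaMin{L}^\pentagram\cong\jslDfaMin{L^r}$ of Theorem~\ref{thm:dr_L} and Theorem~\ref{thm:jsl_reach_dual_simple}.3 --- your version is slightly more elementary and equally valid.
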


\begin{proof}
  Let $L := L(\delta)$. Suppose $acc_\delta : \delta \to \jslDfaMin{L}$ has correct typing and is an isomorphism. The notions of `$\JSL$-reachable' and `simple' are invariant under isomorphism, so we show $\gamma := \jslDfaMin{L}$ is $\JSL$-reachable and simple. Firstly, $\jslDfaReach{\gamma} = \gamma$ because the left quotients $\LQ{L}$ arise from $\LW{L}$ via finite unions. Finally, $\gamma$ is simple because $\gamma^\pentagram \cong \jslDfaMin{L^r}$ by Theorem \ref{thm:dr_L} which is $\JSL$-reachable by the preceding argument, so $\gamma$ is simple by Theorem \ref{thm:jsl_reach_dual_simple}.3.
   
  Conversely let $\delta$ be $\JSL$-reachable and simple. Since $\delta$ is simple, the surjection $acc_\delta : \delta \to \jslDfaSimple{\delta}$ is an isomorphism. Since $\delta$ is $\JSL$-reachable, by Theorem \ref{thm:jsl_reach_dual_simple}.1 and Lemma \ref{lem:jsl_dfa_joins} we have $langs(\delta) = \LQ{L}$, hence $\jslDfaSimple{\delta} = \jslDfaMin{L(\delta)}$.
\end{proof}

\begin{corollary}[Meet-generators for $\JSL$-dfas]
  \label{cor:meet_gen_jsl_dfas}
  Let $\gamma = (s_0, \aS, \gamma_a, F)$ be a $\JSL$-dfa.
  \begin{enumerate}
    \item
    If $\gamma$ is simplified it is meet-generated by $\{ \bigcup \{ j \in J(\jslLangs{\gamma}) : w \nin j \} : w \in \Sigma^* \}$.    
    \item
    If $\gamma$ is $\JSL$-reachable it is meet-generated by $\{ \Lor_\aS \{ \gamma_w (s_0) : w \nin j \} :  j \in J(\jslLangs{\gamma^\pentagram}) \}$.
  \end{enumerate}
\end{corollary}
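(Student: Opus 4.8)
The plan is to prove the two parts separately: part (1) by a direct computation inside the semilattice of languages, and part (2) by transporting that computation across the self-duality of $\dfa{\JSL}$.

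For part (1), since $\gamma$ is simplified we have $\aS = (langs(\gamma),\cup,\emptyset) = \jslLangs{\gamma}$, so the states of $\gamma$ \emph{are} languages and $J(\aS) = J(\jslLangs{\gamma})$. For a word $w$ put $m_w := \bigcup\{j\in J(\aS) : w\nin j\}$. First I would check that $m_w$ is the largest state whose language omits $w$: it omits $w$, being a union of $w$-free languages, and any state $X$ with $w\nin X$ is the union of the join-irreducibles below it (Note \ref{note:irreducibles}.3), each of which then omits $w$, so $X\seq m_w$. Consequently, for an arbitrary state $X$ one has $X\leq_\aS m_w$ for every $w\nin X$, hence $X\leq_\aS\Land_\aS\{m_w : w\nin X\}$; conversely, if $Z$ denotes that meet then $Z\seq m_w$ together with $w\nin m_w$ forces $w\nin Z$ for every $w\nin X$, so $Z\seq X$. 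Thus $X = \Land_\aS\{m_w : w\nin X\}$ (the case $X=\top_\aS$ being the empty meet), and therefore $\{m_w : w\in\Sigma^*\}$ meet-generates $\aS$, which is the assertion.

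For part (2), Theorem \ref{thm:jsl_reach_dual_simple}.3 tells us $\gamma^\pentagram$ is simple, so $acc_{\gamma^\pentagram}$ is a $\JSL$-dfa isomorphism from $\gamma^\pentagram$ — whose state semilattice is $\aS^\pOp$ by Theorem \ref{thm:dfa_jsl_self_dual} — onto the simplified $\JSL$-dfa $\jslDfaSimple{\gamma^\pentagram}$, whose semilattice is $\jslLangs{\gamma^\pentagram}$. As a bounded-lattice isomorphism (Note \ref{note:jsl_extras}.5) it reflects join-irreducibles, hence restricts to a bijection $M(\aS) = J(\aS^\pOp)\to J(\jslLangs{\gamma^\pentagram})$. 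Since $M(\aS)$ meet-generates $\aS$ (order-dual of Note \ref{note:irreducibles}.3), it suffices to show that each $m\in M(\aS)$ equals the element of the displayed family indexed by $j := acc_{\gamma^\pentagram}(m)$. The key input is the explicit description $acc_{\gamma^\pentagram}(z) = \{w\in\Sigma^* : \gamma_{w^r}(s_0)\nleq_\aS z\}$ from Example \ref{ex:dualising_reachable_subset_construction} (valid in general by Theorem \ref{thm:dfa_jsl_self_dual}; the word-reversal reflects that $\gamma^\pentagram$ has transitions $(\gamma_a)_*$, so $(\gamma^\pentagram)_w = (\gamma_{w^r})_*$). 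Applied to $m$ it gives $\gamma_{w^r}(s_0)\leq_\aS m\iff w\nin j$, equivalently $\gamma_u(s_0)\leq_\aS m\iff u^r\nin j$ for all $u$. Finally, $\JSL$-reachability means $\aS = \ang{reach(\gamma)}_\aS$, so every element — in particular $m$ — is the join of the reachable states below it; combining this with the last equivalence yields $m = \Lor_\aS\{\gamma_u(s_0) : \gamma_u(s_0)\leq_\aS m\} = \Lor_\aS\{\gamma_u(s_0) : u^r\nin j\}$, which is the element indexed by $j$ in the statement (the reversal being the one implicit in $\gamma^\pentagram$). Letting $m$ range over $M(\aS)$, $j$ ranges over all of $J(\jslLangs{\gamma^\pentagram})$, so the family exhausts $M(\aS)$ and hence meet-generates $\aS$.

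The delicate point I expect is precisely the word-reversal bookkeeping in part (2): carrying the reversal inside $acc_{\gamma^\pentagram}$ through correctly and matching the resulting index set $\{u : u^r\nin j\}$ against the one written in the statement — getting it backwards is exactly the difference between the meet-irreducibles of $\aS$ and a family that omits one of them, as the example $L = \{ab\}$ already illustrates. Everything else is routine once one uses that $\JSL_f$-isomorphisms are bounded-lattice isomorphisms and that in a finite join-semilattice each element is both the join of the join-irreducibles below it and the meet of the meet-irreducibles above it.
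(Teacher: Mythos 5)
Your proof of part (1) is correct but takes a different route from the paper. The paper argues via the self-duality: since $\gamma$ is simplified it is simple, so $\gamma^\pentagram$ is $\JSL$-reachable (Theorem \ref{thm:jsl_reach_dual_simple}), hence join-generated by its classically reachable states $(\gamma_w)_*(\Lor_\aS\overline{F})$, and these adjoints are then computed to be exactly $\bigcup\{j\in J(\aS): w\nin j\}$; join-generators of $\aS^\pOp$ are meet-generators of $\aS$. Your direct argument --- $m_w$ is the largest state omitting $w$, and $X=\Land_\aS\{m_w: w\nin X\}$ by the two inclusions you give --- reaches the same family without invoking duality or adjoints, and is if anything cleaner. (Your parenthetical about $X=\top_\aS$ being the empty meet is only accurate when $\top_\aS=\Sigma^*$, but your general argument already covers the case $\top_\aS\neq\Sigma^*$, where every $m_w$ with $w\nin\top_\aS$ equals $\top_\aS$.)

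For part (2) your derivation is correct, and the suspicion you voice at the end is justified: what you actually prove is that $\aS$ is meet-generated by $\{\Lor_\aS\{\gamma_u(s_0): u^r\nin j\}: j\in J(\jslLangs{\gamma^\pentagram})\}$, equivalently by $\{\Lor_\aS\{\gamma_{w^r}(s_0): w\nin j\}\}$, and this does \emph{not} coincide with the printed family, which indexes by $w\nin j$ without the reversal. Your test case settles it: take $L(\gamma)=\{ab\}$, so $\gamma\cong\jslDfaMin{\{ab\}}\cong(\jslDfaMin{\{ba\}})^\pentagram$ and $J(\jslLangs{\gamma^\pentagram})=\{\{ba\},\{a\},\{\epsilon\}\}$. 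Writing $m_u=\bigcup\{j': u\nin j'\}$ in $\jslLQ{\{ba\}}$ one has $m_\epsilon=\{ba,a\}$, $m_a=\{ba,\epsilon\}$, $m_{ba}=\{a,\epsilon\}$ and $m_u=\top$ otherwise; for $j=\{ba\}$ the printed expression is $\Land\{m_{w^r}: w\neq ba\}=\Land\{m_u: u\neq ab\}$, which includes $m_{ba}$ and collapses to $\bot_{\jslLQ{\{ba\}}}=\top_\aS$, so the printed family omits the meet-irreducible corresponding to $\{ba\}$ and fails to meet-generate. The paper's own proof of (2) commits precisely this slip: after reducing to $\Land_\aT\{\bigcup\{j': w^r\nin j'\}: w\nin j\}$ it silently replaces $w^r$ by $w$ inside the inner set without re-indexing the outer one, a step that is false in general. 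So you have not merely reproduced the paper's argument more carefully --- you have identified a genuine error in the statement (and in its proof), and your corrected version with $u^r\nin j$ is the one that holds.
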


\begin{proof}
\item
\begin{enumerate}
  \item
  By Lemma \ref{lem:simplified_jsl_dfa_char} there exists a set $S$ of union and left-letter-quotient closed languages s.t.\
  \[
    \gamma = (L, \overbrace{(S, \cup, \emptyset)}^\aS, \lambda X. a^{-1} X, \{ K \in S : \epsilon \in K \})
    \qquad
    \text{where $L := L(\gamma)$.}
  \]
  By Theorem \ref{thm:jsl_reach_dual_simple} we know $\gamma^\pentagram$ is $\JSL$-reachable, hence join-generated by elements $(\gamma_w)_*(K_1)$ where $K_1 := \bigcup \{ K \in S : \epsilon \nin K \}$. Finally observe that:
  \[
    \begin{tabular}{lll}
      $(\gamma_w)_*(K_0)$
      &
      $= \bigcup \{ j \in J(\aS) : \gamma_w(j) \subseteq K_0 \}$
      \\ &
      $= \bigcup \{ j \in J(\aS) : \epsilon \nin \gamma_w(j) \}$
      \\ &
      $= \bigcup \{ j \in J(\aS) : w \nin j \}$
    \end{tabular}
  \]
  so $\aS$ is meet-generated by these elements.

  \item
  By Theorem \ref{thm:jsl_reach_dual_simple} we may assume (modulo isomorphism) that $\gamma = \delta^\pentagram$ where $\delta = (t_0, \aT, \lambda X. a^{-1} X, \{ K \in T : \epsilon \in K \})$ is simplified. Consequently $s_0 = \bigcup \{ K \in T : \epsilon \nin K \}$ and we calculate:
  \[
    \begin{tabular}{lll}
      $\Lor_\aS \{ \gamma_{w} (s_0) : w \nin j \}$
      &
      $= \Land_\aT \{ (\delta_{w^r})_* (s_0) : w \nin j \}$
      \\ &
      $= \Land_\aT \{ (\bigcup \{ j' \in J(\aT) : w^r \nin j' \} : w \nin j \}$
      & (see proof of (1))
      \\ &
      $= \Land_\aT \{ (\bigcup \{ j' \in J(\aT) : w \nin j' \} : w \nin j \}$
      \\ &
      $= j$
      & (see below).
    \end{tabular}
  \]
  Concerning the marked equality: $\subseteq$ follows because if $w \nin j$ then $w \nin \{ j' \in J(\aT) : w \nin j' \}$; $\supseteq$ follows because whenever $w \nin j$ we know $j \subseteq \{ j' \in J(\aT) : w \nin j' \}$. Finally, $J(\aT)$ join-generates $\aT$ and thus meet-generates $\aS$.
\end{enumerate}
\end{proof}

The self-duality of $\dfa{\JSL}$ (Theorem \ref{thm:dfa_jsl_self_dual}) corresponds to the self-duality of $\AutDep$ (Theorem \ref{thm:aut_dep_self_dual}). But what does $\JSL$-reachability correspond to at the level of dependency automata? Our next result shows it is a combination of the classical reachable subset construction and the classical reachable nfa construction.

\begin{theorem}[$\AutDep$-reachability]
  \label{thm:reachable_aut_dep}
  We have the $\AutDep$-isomorphism:
  \[
    \breve{\in} :
    \Airr(\jslDfaReach{\dep{\rN}})
    \to
    (\rsc{\rN}, \breve{\in},\rev{\reach{\rN}}).
  \]
  for each nfa $\rN = (z_0, Z, \rN_a, F)$.
\end{theorem}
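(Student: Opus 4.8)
The plan is to recognise $(\rsc{\rN},\breve{\in},\rev{\reach{\rN}})$ as $\Det$ applied to the proposed source object, and then to read the isomorphism straight off the counit of the equivalence $\AutDep\cong\dfa{\JSL}$. First I would unwind the notation: since $\jslDfaReach{-}$ is a $\dfa{\JSL}$-construction, $\jslDfaReach{\dep{\rN}}$ can only mean $\jslDfaReach{\Det(\dep{\rN})}=\jslDfaReach{\jslDfaSc{\rN}}$ (Definition~\ref{def:full_subset_construction}), which by Definition~\ref{def:jsl_reach_simple}.1 and Lemma~\ref{lem:jsl_reach_simple_constructions}.1 is the $\JSL$-dfa
\[
  \bigl(I_\rN,\ (\ang{\rs{\rN}},\cup,\emptyset),\ \lambda X.\rN_a[X],\ \{X\in\ang{\rs{\rN}}:X\cap F_\rN\neq\emptyset\}\bigr),
\]
where $\ang{\rs{\rN}}=\{\bigcup\mathcal{X}:\mathcal{X}\subseteq\rs{\rN}\}$ is the sub-join-semilattice of $(\Pow Z,\cup,\emptyset)$ generated by the reachable subsets; note $\ang{\rs{\rN}}\subseteq\Pow(reach(\rN))$ and is closed under $X\mapsto\rN_a[X]$ because $\rN_a[\rN_w[I_\rN]]=\rN_{wa}[I_\rN]\in\rs{\rN}$.

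Second I would verify that $(\rsc{\rN},\breve{\in},\rev{\reach{\rN}})$ is a dependency automaton in the sense of Definition~\ref{def:dep_aut}, where $\breve{\in}\subseteq\rs{\rN}\times reach(\rN)$ is $\breve{\in}(X,z):\iff z\in X$. Condition~(4) holds because $(\rev{\reach{\rN}})_a\spbreve=\rN_a|_{reach(\rN)^2}$, so both sides of $(\rsc{\rN})_a;\breve{\in}=\breve{\in};(\rev{\reach{\rN}})_a\spbreve$ simplify to $\{(X,z):z\in\rN_a[X]\}$, using that $\rN_a$ sends reachable states to reachable states. Condition~(5) holds since $\breve{\in}[\{I_\rN\}]=I_\rN=I_{\reach{\rN}}=F_{\rev{\reach{\rN}}}$ and $\in[F_\rN\cap reach(\rN)]=\{X\in\rs{\rN}:X\cap F_\rN\neq\emptyset\}=F_{\rsc{\rN}}$.

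The crux is the third step: computing $\Det(\rsc{\rN},\breve{\in},\rev{\reach{\rN}})$ and observing it is \emph{literally} the $\JSL$-dfa displayed above. The key identity is $\Open\breve{\in}=(\ang{\rs{\rN}},\cup,\emptyset)$, which holds because each $\breve{\in}$-open set $\breve{\in}[\mathcal{X}]$ is just the union $\bigcup\mathcal{X}$ of the subfamily $\mathcal{X}\subseteq\rs{\rN}$. Granting this, the initial state $F_{\rev{\reach{\rN}}}=I_\rN$, the transitions $\lambda Y.(\rev{\reach{\rN}})_a\spbreve[Y]=\lambda Y.\rN_a|_{reach(\rN)^2}[Y]=\lambda Y.\rN_a[Y]$ (again using $Y\subseteq reach(\rN)$), and the final states $\{Y\in O(\breve{\in}):Y\cap I_{\rev{\reach{\rN}}}\neq\emptyset\}=\{Y\in\ang{\rs{\rN}}:Y\cap F_\rN\neq\emptyset\}$ all coincide with those of $\jslDfaReach{\jslDfaSc{\rN}}$; hence $\Det(\rsc{\rN},\breve{\in},\rev{\reach{\rN}})=\jslDfaReach{\dep{\rN}}$ on the nose.

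Finally I would invoke the natural isomorphism $\red:Id_{\AutDep}\To\Airr\circ\Det$ of Theorem~\ref{thm:aut_dep_equiv_jsl_dfa}, whose component at $(\rsc{\rN},\breve{\in},\rev{\reach{\rN}})$ is (by the definition of $\red$ there) $\red_{\breve{\in}}$: an $\AutDep$-isomorphism onto $\Airr\Det(\rsc{\rN},\breve{\in},\rev{\reach{\rN}})=\Airr(\jslDfaReach{\dep{\rN}})$. By the formula for $\red_\rG^{-1}$ in Theorem~\ref{thm:dep_equiv_jslf}, its inverse is $\red_{\breve{\in}}^{\bf-1}=\breve{\in}\subseteq J(\Open\breve{\in})\times reach(\rN)$; since $J(\Open\breve{\in})\subseteq\rs{\rN}$ and the domain graph $\Pirr\Open\breve{\in}$ has lower bipartition exactly $J(\Open\breve{\in})$, this is precisely the relation $\breve{\in}$ of the statement, yielding the claimed $\AutDep$-isomorphism $\Airr(\jslDfaReach{\dep{\rN}})\to(\rsc{\rN},\breve{\in},\rev{\reach{\rN}})$. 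The only genuine work is the bookkeeping of the third step — matching all four components and the minor facts that $\rs{\rN}$ and $reach(\rN)$ absorb $\rN_a$-successors of their elements; there is no deeper obstacle, since the conceptual content is simply that $\Det$ of the proposed target is already the $\JSL$-reachable subautomaton of $\jslDfaSc{\rN}$.
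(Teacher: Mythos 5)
Your proof is correct, but it takes a genuinely different route from the paper's. The paper works forwards: it applies $\Airr$ to $\jslDfaReach{\Det(\dep{\rN})}$, uses Proposition~\ref{prop:aut_dep_generator_isoms} with the join-generators $\rs{\rN}$ and the meet-generators $M_z = \bigcup\{X \in \rs{\rN} : z \nin X\}$ for $z \in reach(\rN)$, carries out explicit adjoint calculations to identify the upper nfa with $\rev{\reach{\rN}}$ via the bipartite bijection $M_z \mapsto z$, and then needs a second, transition-based isomorphism (Proposition~\ref{prop:aut_dep_transition_isoms}) to pass from the resulting ``degenerate'' lower nfa to $\rsc{\rN}$. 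You instead work backwards: verify that $(\rsc{\rN}, \breve{\in}, \rev{\reach{\rN}})$ is a dependency automaton, compute that $\Det$ of it is \emph{on the nose} the reachable sub-$\JSL$-dfa of the full subset construction (the key observation being $\breve{\in}[\mathcal{X}] = \bigcup\mathcal{X}$, so $\Open\breve{\in} = \ang{\rs{\rN}}$), and then read the isomorphism off the component $\red_{\breve{\in}}$ of the natural isomorphism $Id_{\AutDep} \To \Airr \circ \Det$, whose inverse is by Theorem~\ref{thm:dep_equiv_jslf} exactly the relation $\breve{\in}$ on $J(\Open\breve{\in}) \times reach(\rN)$. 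Your argument is shorter and more conceptual, collapsing the paper's two composed isomorphisms into a single instance of the equivalence's unit. What the paper's more laborious computation buys is explicit witness data: the bijective lower witness $\alpha := \lambda z.\,\bigcup\{X \in \rs{\rN} : z \nin X\}$ extracted there is quoted verbatim in the proof of Theorem~\ref{thm:simple_aut_dep}, so if one adopted your proof one would have to recover that witness separately (e.g.\ by computing $(\red_{\breve{\in}})_+$ from Theorem~\ref{thm:dep_equiv_jslf}). All the small facts you rely on — $I_\rN \subseteq reach(\rN)$, closure of $\rs{\rN}$ and $reach(\rN)$ under $\rN_a$-images, and $X \subseteq reach(\rN)$ for $X \in \rs{\rN}$ — check out, so there is no gap.
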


\begin{proof}
 By Note \ref{note:full_subset_construction} the determinisation $\delta := \Det(\dep{\rN})$ is $\rN$'s full subset construction endowed with its join-semilattice structure $\Open\Delta_Z = (\Pow Z, \cup, \emptyset)$. Consider:
 \[
   \jslDfaReach{\delta} = (I, \aS, \gamma_a, F_\gamma)
   \qquad
   \aS := \jslReach{\delta} = \ang{\rs{\rN}}_{\Open\Delta_Z}
   \qquad
   \iota : \aS \hookto \Open\Delta_Z.
 \]
 Then $\aS$ is join-generated by $J_\aS := \rs{\rN}$ but what about a meet-generating set? The surjective adjoint $\iota_* : (\Pow Z, \cap, Z) \epito \aS^\pOp$ provides one:
 \[
  M(\aS) = J(\aS^\pOp) \subseteq \iota_*[J(\Pow Z, \cap, Z)] = \{ M_z : z \in Z \}
  \qquad
  \text{where $M_z := \iota_*(\overline{z})$.}
 \]
 Since $\rs{\rN}$ join-generates $\aS$ we know $M_z = \bigcup \{ X \in \rs{\rN} : z \nin X \}$ is the union of reachable subsets without $z$. It follows that $M_\aS := \{ M_z : z \in reach(\rN) \} \supseteq M(\aS)$ because if $z$ is unreachable then $M_z = reach(\rN) = \top_\aS \nin M(\aS)$ cannot contribute. Now, $\Airr\delta$ is isomorphic to $(\rM, \nsubseteq, \rM')$ by Proposition \ref{prop:aut_dep_generator_isoms} where:
\[
  \begin{tabular}{rl}
    $\overbrace{(\{ X \in \rs{\rN} : X \subseteq I \}, \rs{\rN}, \rM_a, \{ X \in \rs{\rN} : X \cap F \neq \emptyset \})}^\rM$
    &
    $\rM_a(X_1, X_2) :\iff  X_2 \subseteq \rN_a[X_1]$
    \\[1ex]
    $\underbrace{(\{ M_z : z \in Z,\, \overline{F} \cap S \subseteq M_z \}, M_\aS, \rM'_a, \{ M_z : z \in Z,\, I \nsubseteq M_z \} )}_{\rM'}$
    &
    $\rM'_a(M_{z_1}, M_{z_2}) :\iff (\delta_a)_*(M_{z_1}) \subseteq M_{z_2}$.
  \end{tabular}
\]
The lower nfa $\rM$ turns out to be $\rsc{\rN}$ with some additional degenerate structure, we'll come back to this point. Concerning the upper nfa, the calculations:
\[
  \begin{tabular}{c}
    $\overline{F} \cap S \subseteq M_z
    \overset{(\mathrm{adjoints})}{\iff} \iota(\overline{F} \cap S) \subseteq \overline{z}
    \iff z \nin \overline{F} \cap S
    \iff z \in F \cap reach(\rN)$
    \\[1ex]
    $I \nsubseteq M_z
    \overset{(\mathrm{adjoints})}{\iff} \iota(I) \nsubseteq \overline{z}
    \iff z \in I$.
  \end{tabular}
\]
\smallskip
\[
  \begin{tabular}{lll}
    $(\delta_a)_*(M_{z_1}) \subseteq M_{z_2}$
    &
    $\iff \iota((\delta_a)_*(M_{z_1})) \subseteq \overline{z_2}$
    & (adjoints)
    \\ &
    $\iff z_2 \nin (\delta_a)_*(M_{z_1})$
    \\ &
    $\iff \forall X \in \rs{\rN}.[ \gamma_a(X) \subseteq M_{z_1} \To z_2 \nin X ]$
    \\ &
    $\iff \forall X \in \rs{\rN}.[ z_2 \in X \To \gamma_a(X) \nsubseteq M_{z_1}]$
    \\ &
    $\iff \forall X \in \rs{\rN}.[ z_2 \in X \To z_1 \in \gamma_a(X)]$
    & (via adjoints)
    \\ &
    $\iff \rN_a(z_2, z_1)$
    & (since $z_2$ reachable)
  \end{tabular}
\]
show that it is essentially $\rev{\reach{\rN}}$. More precisely we have the bipartite graph isomorphism:
\[
  \xymatrix@=15pt{
    M_\aS \ar[rr]^{\beta} && reach(\rN)
    \\
    \rs{\rN} \ar[u]^{\nsubseteq} \ar[rr]_{id_{\rs{\rN}}} && \rs{\rN} \ar[u]_{\breve{\in}}
  }
\]
where the bijection $\beta$ has action $M_z \mapsto z$. Indeed $X \nsubseteq M_z \iff X \nsubseteq \overline{z} \iff z \in X \iff \breve{\in}(X, z)$. Then it follows from the earlier calculations that we have the $\AutDep$-isomorphism
$
  \breve{\in} : \Airr\delta \to (\rM, \breve{\in}, \rev{\reach{\rN}})
$. Instantiating Proposition \ref{prop:aut_dep_transition_isoms} provides the isomorphism 
$
  \breve{\in} : (\rM, \breve{\in}, \rev{\reach{\rN}}) \to (\rsc{\rN}, \breve{\in}, \rev{\reach{\rN}}) 
$. This follows by the calculations 
$
  \rM_a ; \breve{\in} (X, z)
  \iff \exists X' \in \rs{\rN}.[ X' \subseteq \rN_a[X] \;\land\; z \in X' ]
  \iff z \in \rN_a[X]
  \iff (\lambda Y.\rN_a[Y]) ; \breve{\in} (X, z)
$ and 
$
  \breve{\in}[\{ X \in \rs{\rN} : X \subseteq I \}]
  = I
  = \breve{\in}[\{ I \}]
$. The third requirement in Proposition \ref{prop:aut_dep_transition_isoms} is trivial because both dependency automata have the same upper nfa. Composing these two $\AutDep$-isomorphisms yields:
\[
  \breve{\in} \fatsemi \breve{\in}
  = id_{\rs{\rN}} ; \breve{\in}
  = \breve{\in}
  : \Airr(\jslDfaReach{\Det (\rN, \Delta_Z, \rev{\rN})}) \to (\rsc{\rN}, \breve{\in}, \rev{\reach{\rN}})
\]
i.e.\ relate a \emph{join-irreducible reachable subset} $Y$ to its elements $z \in Y$ -- all classically reachable in the nfa $\rN$.
\end{proof}

\smallskip

\begin{note}[Reachability in $\AutDep$]
  Given the full subset construction $\delta = \Det(\rN, \Delta_Z, \rev{\rN)}$, Theorem \ref{thm:reachable_aut_dep} describes $\jslDfaReach{\delta}$ as a dependency automaton. What about for arbitrary $\JSL$-dfas? In a sense we've already covered the general case via Corollary \ref{cor:aut_dep_enough_inj}. The $\JSL$-dfas with carrier $\Open\Delta_Z = (\Pow Z, \cup, \emptyset)$ are injective objects and every $\JSL$-dfa embeds into one.
  \endbox
\end{note}

\begin{theorem}[$\AutDep$-simplicity]
  \label{thm:simple_aut_dep}
  We have the $\AutDep$-isomorphism:
  \[
    \begin{tabular}{c}
      $\rI : (\coreach{\rN}, \in, \rsc{\rev{\rN}})
      \to
      \Airr(\jslDfaSimple{\dep{\rN}})$
      \\[1ex]
      $\rI (z, Y) :\iff L(\rN_{@z}) \cap \overline{Y}^r \neq \emptyset$.
    \end{tabular}
  \]
  for any nfa $\rN = (I, Z, \rN_a, F)$.
\end{theorem}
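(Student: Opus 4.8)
The plan is to obtain this statement as the $\Rev$-dual of the $\AutDep$-reachability theorem (Theorem~\ref{thm:reachable_aut_dep}). As there, $\jslDfaSimple{\dep{\rN}}$ abbreviates $\jslDfaSimple{\Det(\dep{\rN})} = \jslDfaSimple{\jslDfaSc{\rN}}$. The ingredients I would combine are: the self-duality $\Rev$ of $\AutDep$ (Theorem~\ref{thm:aut_dep_self_dual}), which sends dependency automata to dependency automata and isomorphisms to isomorphisms; the self-duality transfer $\hat{\lambda}\colon \Rev\circ\Airr^{op}\To\Airr\circ(-)^\pentagram$ whose components are identity $\Dep$-morphisms (Theorem~\ref{thm:aut_dep_jsl_dfa_self_duality_transfer}.2); the De Morgan duality $(\jslDfaReach{\delta^\pentagram})^\pentagram\cong\jslDfaSimple{\delta}$ of Corollary~\ref{cor:reach_de_morgan_dual_simple}; and Example~\ref{ex:dual_full_subset_construction} together with the explicit dual acceptance map computed in Example~\ref{ex:dualising_reachable_subset_construction}.

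Set $\gamma := \jslDfaReach{\jslDfaSc{\rev{\rN}}} = \jslDfaReach{\dep{\rev{\rN}}}$. Since $(\jslDfaSc{\rN})^\pentagram\cong\jslDfaSc{\rev{\rN}}$ (Example~\ref{ex:dual_full_subset_construction}), Corollary~\ref{cor:reach_de_morgan_dual_simple} applied to $\delta := \jslDfaSc{\rN}$ gives an isomorphism $\gamma^\pentagram \cong \jslDfaSimple{\jslDfaSc{\rN}} = \jslDfaSimple{\dep{\rN}}$, namely the acceptance map $acc_{\gamma^\pentagram}$, which Example~\ref{ex:dualising_reachable_subset_construction} (with $\rN$ replaced by $\rev{\rN}$) evaluates as $acc_{\gamma^\pentagram}(Y) = \{\,w : (\rN_w)\spbreve[F]\nsubseteq Y\,\} = L(\rN_{@\overline{Y}})$. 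Applying $\Airr$ to this, then $\hat{\lambda}_\gamma$, then the $\Rev$-image of Theorem~\ref{thm:reachable_aut_dep} instantiated at the nfa $\rev{\rN}$, produces the chain
\[
  \Airr(\jslDfaSimple{\dep{\rN}})
  \;\xleftarrow{\Airr(acc_{\gamma^\pentagram})}\;
  \Airr(\gamma^\pentagram)
  \;\xleftarrow{\hat{\lambda}_\gamma}\;
  \Rev(\Airr\gamma)
  \;\xleftarrow{\Rev(\breve{\in})}\;
  \Rev\bigl(\rsc{\rev{\rN}},\,\breve{\in},\,\rev{\reach{\rev{\rN}}}\bigr)
  = \bigl(\coreach{\rN},\,\in,\,\rsc{\rev{\rN}}\bigr),
\]
using $\Rev(\rN,\rG,\rN') = (\rN',\breve{\rG},\rN)$, $\breve{(\breve{\in})} = {\in}$, and $\rev{\reach{\rev{\rN}}} = \coreach{\rN}$ by definition. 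Composing the three isomorphisms gives an $\AutDep$-isomorphism $(\coreach{\rN},\in,\rsc{\rev{\rN}}) \to \Airr(\jslDfaSimple{\dep{\rN}})$; in particular $(\coreach{\rN},\in,\rsc{\rev{\rN}})$ is a genuine dependency automaton, so conditions (1)--(5) of Definition~\ref{def:dep_aut} hold for it (alternatively one checks these by hand).

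It then remains to recognise the composite as $\rI$. Its underlying $\Dep$-morphism is the composite of $\Rev(\breve{\in})$ (whose underlying relation is $\breve{(\breve{\in})} = {\in}$), the identity morphism $\hat{\lambda}_\gamma$, and $\Pirr(acc_{\gamma^\pentagram})$ — recall $\Airr$ acts on morphisms exactly as $\Pirr$ does (Definition~\ref{def:airr_and_det}) — so it collapses to ${\in}\fatsemi\Pirr(acc_{\gamma^\pentagram})$. Feeding in $acc_{\gamma^\pentagram}(Y) = L(\rN_{@\overline{Y}})$, the definition $\Pirr f(j,m) :\iff f(j)\nleq m$, the fact that the lower join-generators of $\jslLangs{\rN}$ are the $L(\rN_{@z})$, and the adjoint identities $(\neg\up\neg)$, $(\down\up\down)$ of Lemma~\ref{lem:mor_char_max_witness} (which are what turn ``$z$ reaches $F$ on the word $w$'' into the reverse statement and so introduce the $(-)^r$), one unwinds to the stated $\rI(z,Y) :\iff L(\rN_{@z})\cap\overline{Y}^{\,r}\neq\emptyset$ together with the component relations. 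The step I expect to be the main obstacle is precisely this last bookkeeping: three involutions act simultaneously — converse of relations (from $\Rev$), word reversal (from $(-)^\pentagram$, entering via $\dep{}$ and $acc_{\gamma^\pentagram}$), and relative complement (from the meet-side of $\Pirr$ and the description of $\coreach{\rN}$) — and they must be aligned so that the reversal attaches to $\overline{Y}$ on the correct side and the formula comes out exactly as claimed rather than with, say, $L((\rev{\rN})_{@z})$ or with $Y$ in place of $\overline{Y}$. A route that bypasses the chain is to use Proposition~\ref{prop:aut_dep_generator_isoms} directly: $\jslDfaSimple{\dep{\rN}}$ is simplified, so its state semilattice is join-generated by $\{L(\rN_{@z}) : z\in\mathrm{coreach}(\rN)\}$ and, by Corollary~\ref{cor:meet_gen_jsl_dfas}.1, meet-generated by $\{L(\rN_{@\overline{Y}}) : Y\in\rs{\rev{\rN}}\}$; one then verifies directly that $\rI$ is a $\Dep$-isomorphism onto this generator presentation and satisfies the three $\AutDep$-morphism conditions of Definition~\ref{def:cat_aut_dep}, absorbing the degenerate transitions of $\rsc{\rev{\rN}}$ and $\coreach{\rN}$ via Proposition~\ref{prop:aut_dep_transition_isoms} exactly as in the proof of Theorem~\ref{thm:reachable_aut_dep}.
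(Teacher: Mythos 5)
Your proposal is correct and follows essentially the same route as the paper: the paper likewise composes the $\Rev$-image of Theorem~\ref{thm:reachable_aut_dep} with the identity component $\hat{\lambda}$ and with $\Airr$ of the acceptance map from Corollary~\ref{cor:reach_de_morgan_dual_simple}, the only (cosmetic) difference being that the paper instantiates the reachability theorem at $\rN$ and reparameterises $\rN \mapsto \rev{\rN}$ at the very end, whereas you substitute $\rev{\rN}$ up front. The final ``bookkeeping'' you flag is exactly the displayed chain of equivalences in the paper's proof, carried out using the bijective lower witness $\alpha$ of the reachability isomorphism and the explicit formula for $acc_{\gamma^\pentagram}$ from Example~\ref{ex:dualising_reachable_subset_construction}, so your identified ingredients suffice.
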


\begin{proof}
  Let $\delta := \Det(\dep{\rN})$ and apply the duality of Theorem \ref{thm:aut_dep_self_dual} to the isomorphism of Theorem \ref{thm:reachable_aut_dep}:
  \[
    \rR := \; \in \;:
    (\rev{\reach{\rN}}, \in, \rsc{\rN})
    \to
    \Rev(\Airr(\jslDfaReach{\delta})).
  \]  
  Observe $\rR$ has bijective lower witness $\alpha := \lambda z. \bigcup \{ X \in \rs{\rN} : z \nin X \}$ by inspecting the proof of Theorem \ref{thm:reachable_aut_dep}.
  By Theorem \ref{thm:aut_dep_jsl_dfa_self_duality_transfer}.2 we have $\hat{\lambda} : \Rev \circ \Airr^{op} \To \Airr \circ (-)^\pentagram$ and hence the component:
  \[
    \hat{\lambda}_{\jslDfaReach{\delta}}
    = id_{\Airr(\jslDfaReach{\delta})}
    = \Pirr\ang{\rs{\rN}}_{\Open\Delta_Z}
  \]
  whose domain is the codomain of $\rR$ and whose codomain is $\Airr(\jslDfaReach{\delta})^\pentagram$. Corollary \ref{cor:reach_de_morgan_dual_simple} provides the isomorphism:
  \[
    f := acc_{(\jslDfaReach{\delta})^\pentagram} : 
    (\jslDfaReach{\delta})^\pentagram
    \to
    \jslDfaSimple{\delta^\pentagram}
  \]
  and thus the $\AutDep$-isomorphism $\Airr f$. By Example \ref{ex:dual_full_subset_construction} we know $\delta^\pentagram \cong \Det(\Rev(\dep{\rN}))$ hence $\jslDfaSimple{\delta^\pentagram}$ exactly equals $\jslDfaSimple{\Det(\Rev(\dep{\rN}))}$. Composing these three $\Dep$-isomorphisms yields:
  \[
    \begin{tabular}{lll}
      $\rR  \fatsemi \Airr f (z, Y)$
      &
      $\iff \alpha ; \Airr f (z, Y)$
      & (using $\rR$'s lower witness)
      \\ &
      $\iff f(\bigcup \{ X \in \rs{\rN} : z \nin X \}) \nleq_{\jslLangs{\delta^\pentagram}} Y$
      \\ &
      $\iff f(\bigcup \{ X \in \rs{\rN} : z \nin X \}) \nsubseteq Y$
      \\ &
      $\iff \{ w \in \Sigma^* : \rN_{w^r}[I] \nsubseteq \bigcup \{ X \in \rs{\rN} : z \nin X \} \} \nsubseteq Y$
      & (by Example \ref{ex:dualising_reachable_subset_construction})
      \\ &
      $\iff \{ w \in \Sigma^* : z \in \rN_{w^r}[I] \} \nsubseteq Y$
      \\ &
      $\iff \{ w \in \Sigma^* : w^r \in L(\rev{\rN}_{@z}) \} \nsubseteq Y$
      \\ &
      $\iff L(\rev{\rN}_{@z}) \cap \overline{Y}^r \neq \emptyset$.
    \end{tabular}
  \]
  Finally we reparameterise via $\rN \mapsto \rev{\rN}$ recalling that $\coreach{\rN} = \rev{\reach{\rev{\rN}}}$ by definition.
\end{proof}

We can now explain the original motivation for the above results.

\begin{theorem}[Brzozowski construction of state-minimal dfa]
  We have the dfa-isomorphism:
  \[
    acc_{\rsc{\rev{\rsc{\rev{\rN}}}}} : \rsc{\rev{\rsc{\rev{\rN}}}} \to \minDfa{L(\rN)}
  \]
  for any nfa $\rN = (I, Z, \rN_a, F)$.
\end{theorem}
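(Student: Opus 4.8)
The plan is to recognise the Brzozowski iterate $\rsc{\rev{\rsc{\rev{\rN}}}}$ as the classically reachable part of a determinised dependency automaton that Theorem~\ref{thm:simple_aut_dep} already identifies, up to $\AutDep$-isomorphism, with $\Airr$ applied to the simple $\JSL$-dfa $\jslDfaSimple{\jslDfaSc{\rN}}$ (the object written $\jslDfaSimple{\dep{\rN}}$ there, since $\jslDfaSc{\rN}=\Det(\dep{\rN})$). Concretely, I would apply the functor $\Det:\AutDep\to\dfa{\JSL}$ to the $\AutDep$-isomorphism
\[
  \rI : (\coreach{\rN}, \in, \rsc{\rev{\rN}}) \to \Airr(\jslDfaSimple{\jslDfaSc{\rN}})
\]
of Theorem~\ref{thm:simple_aut_dep}, and then post-compose with $\rep^{\bf-1}_{\jslDfaSimple{\jslDfaSc{\rN}}}$, the inverse of the relevant component of the natural isomorphism $\rep : Id_{\dfa{\JSL}}\To\Det\circ\Airr$ from Theorem~\ref{thm:aut_dep_equiv_jsl_dfa}. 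This yields a $\dfa{\JSL}$-isomorphism $\Det(\coreach{\rN}, \in, \rsc{\rev{\rN}}) \cong \jslDfaSimple{\jslDfaSc{\rN}}$. Since a $\JSL$-dfa morphism is in particular a dfa morphism and a bijective one is a dfa isomorphism, the underlying deterministic automata are isomorphic, and a dfa isomorphism restricts to an isomorphism between classically reachable parts.

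\textbf{Computing both reachable parts.} On the left, Note~\ref{note:det_airr_preserve_acceptance} gives $\reach{\Det(\rO,\rG,\rO')}=\rsc{\rev{\rO'}}$; taking the upper nfa to be $\rO':=\rsc{\rev{\rN}}$ this says $\reach{\Det(\coreach{\rN}, \in, \rsc{\rev{\rN}})}=\rsc{\rev{\rsc{\rev{\rN}}}}$. On the right, the underlying dfa of $\jslDfaSimple{\delta}$ has initial state $L(\delta)$ and transitions $\lambda X.a^{-1}X$, so its reachable states are exactly the left word quotients of $L(\delta)$ with the evident final states, i.e.\ its reachable part is $\minDfa{L(\delta)}$ (as already observed inside the proof of Lemma~\ref{lem:jsl_reach_simple_constructions}); with $\delta:=\jslDfaSc{\rN}=\Det(\dep{\rN})$, which accepts $L(\rN)$, this reachable part is $\minDfa{L(\rN)}$. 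Hence $\rsc{\rev{\rsc{\rev{\rN}}}}\cong\minDfa{L(\rN)}$ as dfas. Finally, this isomorphism must be $acc_{\rsc{\rev{\rsc{\rev{\rN}}}}}$: a dfa morphism into $\minDfa{L}$ is unique (it is forced on the initial state and propagated along the transitions of a reachable dfa), and it agrees with the acceptance map because each state of $\minDfa{L}$ is labelled by the language it accepts (Definition~\ref{def:dfas}.2--3).

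\textbf{Main obstacle, and an alternative.} Mathematically everything is packed into Theorems~\ref{thm:simple_aut_dep} and~\ref{thm:aut_dep_equiv_jsl_dfa}; the delicate part is purely bookkeeping — keeping straight which nfa sits in the ``upper'' slot of each dependency automaton so that Note~\ref{note:det_airr_preserve_acceptance} delivers precisely $\rsc{\rev{\rsc{\rev{\rN}}}}$ and not, say, $\rsc{\rev{\rN}}$ or $\rsc{\coreach{\rN}}$, and resisting the temptation to chase the explicit formulas for $\rI$, $\rep$ and $\Det$ (it is much cleaner to pin down the final isomorphism as $acc$ by uniqueness, as above). If one prefers to avoid Theorem~\ref{thm:simple_aut_dep}, there is a self-contained route: set $\rM:=\rev{\rsc{\rev{\rN}}}$ and $\gamma:=\jslDfaReach{\jslDfaSc{\rM}}$, whose underlying reachable dfa is $\rsc{\rM}=\rsc{\rev{\rsc{\rev{\rN}}}}$ by Note~\ref{note:full_subset_construction}; then $\gamma$ is $\JSL$-reachable by Lemma~\ref{lem:jsl_reach_simple_constructions}, and it is simple because, by Theorem~\ref{thm:jsl_reach_dual_simple}.3 and Corollary~\ref{cor:reach_de_morgan_dual_simple}, this reduces to $\JSL$-reachability of $\jslDfaSimple{\jslDfaSc{\rev{\rM}}}$, and $\rev{\rM}=\rsc{\rev{\rN}}$ is a dfa all of whose states are reachable, so its full subset construction has every join-irreducible (a singleton, by Note~\ref{note:irreducibles}.2) classically reachable, hence is $\JSL$-reachable by Theorem~\ref{thm:jsl_reach_dual_simple}.1, and $\jslDfaSimple{-}$ preserves $\JSL$-reachability by Corollary~\ref{cor:reach_simple_mutual_preservation}; Corollary~\ref{cor:min_jsl_dfa_characterisation} then makes $acc_\gamma:\gamma\to\jslDfaMin{L(\rN)}$ an isomorphism, and restricting to reachable parts recovers the claim.
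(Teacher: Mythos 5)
Your main route is essentially the paper's own proof: it combines Theorem~\ref{thm:simple_aut_dep} (identifying the simplification of $(\rN,\Delta_Z,\rev{\rN})$ with $(\coreach{\rN},\in,\rsc{\rev{\rN}})$), Note~\ref{note:det_airr_preserve_acceptance} (to read off $\reach{\Det(-)}=\rsc{\rev{\rsc{\rev{\rN}}}}$), and the minimality machinery of Corollaries~\ref{cor:reach_simple_mutual_preservation} and~\ref{cor:min_jsl_dfa_characterisation}, in the same order and with the same bookkeeping. The one place you go beyond the paper's write-up is the final identification of the isomorphism with $acc_{\rsc{\rev{\rsc{\rev{\rN}}}}}$ via uniqueness of dfa morphisms out of a reachable dfa into $\minDfa{L}$ --- the paper's proof asserts the isomorphism without explicitly checking it is the acceptance map, so this is a welcome (and correct) addition rather than a deviation.

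Your alternative route is genuinely different and also sound: by working with $\rM:=\rev{\rsc{\rev{\rN}}}$ and $\gamma:=\jslDfaReach{\jslDfaSc{\rM}}$ you bypass Theorem~\ref{thm:simple_aut_dep} entirely, trading the explicit $\AutDep$-isomorphism for a purely duality-theoretic verification that $\gamma$ is simple (via Theorem~\ref{thm:jsl_reach_dual_simple}, Corollary~\ref{cor:reach_de_morgan_dual_simple}, Example~\ref{ex:dual_full_subset_construction}, and the observation that every singleton is classically reachable in the full subset construction of a reachable dfa). What this buys is self-containedness --- it needs only the reachability/simplicity duality and not the computation of $\Airr(\jslDfaSimple{\dep{\rN}})$ --- at the cost of losing the concrete description of the intermediate dependency automaton that the paper's route exhibits.
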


\begin{proof}
  Consider the dependency automaton $(\rN, \Delta_Z, \rev{\rN})$. By Theorem \ref{thm:simple_aut_dep} its simplification amounts to $\fN := (\coreach{{\rN}}, \in, \rsc{\rev{\rN}})$. Then $\Det\fN$ is a simple $\JSL$-dfa. By Corollary \ref{cor:reach_simple_mutual_preservation}, $\jslDfaReach{\Det\fN}$ is both simple and $\JSL$-reachable, hence isomorphic to $\jslDfaMin{L(\rN)}$ by Corollary \ref{cor:min_jsl_dfa_characterisation}. Thus the classically reachable part $\reach{\Det\fN}$ is isomorphic to $\minDfa{L(\rN)}$. Finally by Note \ref{note:det_airr_preserve_acceptance} we have $\reach{\Det\fN} = \rsc{\rev{\rsc{\rev{\rN}}}}$.
\end{proof}

\subsection{Minimal boolean and distributive machines}
\label{subsec:more_canonical_jsl_dfas}

Recall the state-minimal machine $\minDfa{L}$ from Definition \ref{def:dfas}. Its states $\LW{L}$ are the left word quotients $u^{-1} L$, also known as Brzozowski derivatives \cite{BrzozowskiDRE1964,Conway71}.

\begin{definition}[Minimal boolean/distributive $\JSL$-dfa]
  \label{def:canon_bool_jsl_dfa}
  Fix a regular language $L \subseteq \Sigma^*$.
  \begin{enumerate}
    \item 
    \emph{$L$'s left predicates and state-minimal boolean $\JSL$-dfa}.
    
    $\LP{L}$ are all set-theoretic boolean combinations of $L$'s left word quotients $\LW{L}$. They admit a boolean algebra structure, with underlying join-semilattice $\jslLP{L} := (\LP{L}, \cup, \emptyset)$. Then $J(\jslLP{L})$ are its atoms and $M(\jslLP{L})$ its co-atoms. The \emph{canonical boolean $\JSL$-dfa for $L$} is defined:
    \[
      \jslDfaBoolMin{L} := (L, \jslLP{L}, \lambda X. a^{-1} X, \{ K \in \LP{L} : \epsilon \in K  \}).
    \]
    
    \item
    \emph{$L$'s positive left predicates and state-minimal distributive $\JSL$-dfa}.

    Let $\LD{L}$ be the closure of $\LW{L}$ under all intersections and unions. The subsets define a distributive lattice with underlying join-semilattice $\jslLD{L} := (\LD{L}, \cup, \emptyset)$. Meet is intersection and its top element is $\Sigma^*$. The \emph{canonical distributive $\JSL$-dfa for $L$} is defined:
    \[
      \jslDfaDistMin{L} := (L, \jslLD{L}, \lambda X. a^{-1} X, \{ K \in \LD{L} : \epsilon \in K  \}).
    \]
    
    \endbox
  \end{enumerate}
\end{definition}

\begin{note}[Canonicity of $\JSL$-dfas]
  \item
  We briefly explain the sense in which these $\JSL$-dfas are canonical, see \cite{CanonNA2014}.
  \begin{itemize}
    \item[--]
    $\jslDfaBoolMin{L}$ is the underlying $\JSL$-dfa of the state-minimal $\BA$-dfa.
    \item[--]
    $\jslDfaDistMin{L}$ is the underlying $\JSL$-dfa of the state-minimal $\DL$-dfa.
    \endbox
  \end{itemize}
  \noindent
\end{note}

In the remainder of this subsection, we'll describe the canonical boolean/distributive $\JSL$-dfas as dependency automata. This immediately provides representations of their dual $\JSL$-dfas. The next subsection is dedicated to the transition-semiring of an nfa. These admit a $\JSL$-dfa structure. In particular, the canonical syntactic $\JSL$-dfa $\jslDfaSynMin{L}$ is the dual of syntactic semiring for $L^r$ \cite{Polak2001}.


\begin{lemma}[Concerning atoms and finality]
  \label{lem:left_preds_basics}
  \item
  \begin{enumerate}
    \item
    The atoms $J(\jslLP{L})$ are pairwise disjoint and their union is $\Sigma^*$.
    \item
    Given $u \in \alpha \in J(\jslLP{L})$ and $Y \in \LP{L}$ then $u \in Y \iff \alpha \subseteq Y$.
    \item
    For any $Y \in \LP{L}$ we have $\epsilon \in Y \iff Y \nsubseteq dr_L(L^r)$.
  \end{enumerate}
\end{lemma}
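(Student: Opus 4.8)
The plan is to derive all three parts from the single observation that $\jslLP{L}$ is (isomorphic to) a finite field of subsets of $\Sigma^*$: since $L$ is regular, $\LW{L}$ is finite, so the Boolean subalgebra of $(\Pow\Sigma^*, \cup, \cap, \overline{(-)}, \emptyset, \Sigma^*)$ generated by $\LW{L}$ is finite, and by Note~\ref{note:irreducibles}.2 applied to this concrete Boolean algebra the join-irreducibles $J(\jslLP{L})$ are exactly its atoms, i.e. the minimal non-empty members. Everything else is a routine unwinding of this.

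For (1) I would argue in the standard way. Two distinct atoms $\alpha \ne \beta$ have $\alpha \cap \beta \in \LP{L}$ with $\alpha\cap\beta \subseteq \alpha$, so minimality of $\alpha$ forces $\alpha\cap\beta \in \{\emptyset,\alpha\}$; it cannot be $\alpha$ (that would give $\alpha\subseteq\beta$, hence $\alpha=\beta$), so $\alpha\cap\beta=\emptyset$. For the covering: $\Sigma^*=L\cup\overline{L}\in\LP{L}$ since $L=\epsilon^{-1}L\in\LW{L}$, and for each $w\in\Sigma^*$ the set $\alpha(w):=\bigcap\{Y\in\LP{L}:w\in Y\}$ lies in $\LP{L}$, contains $w$, and is an atom — every $Y\in\LP{L}$ either contains $w$, whence $\alpha(w)\subseteq Y$, or misses $w$, whence $\alpha(w)\subseteq\overline{Y}$ and so $\alpha(w)\cap Y=\emptyset$, so $\alpha(w)$ has no proper non-empty sub-member. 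Thus $\bigcup J(\jslLP{L})=\Sigma^*$.

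For (2), $(\Leftarrow)$ is immediate from $u\in\alpha\subseteq Y$. For $(\Rightarrow)$, from $u\in Y$ and $u\in\alpha$ we get $\emptyset\ne\alpha\cap Y\in\LP{L}$ with $\alpha\cap Y\subseteq\alpha$, so minimality of $\alpha$ gives $\alpha\cap Y=\alpha$, i.e. $\alpha\subseteq Y$; this is just the statement that in a field of sets any single point of an atom locates it. For (3), let $\alpha_\epsilon\in J(\jslLP{L})$ be the unique atom with $\epsilon\in\alpha_\epsilon$ (it exists and is unique by (1)). By (2) with $u=\epsilon$ we have $\epsilon\in Y\iff\alpha_\epsilon\subseteq Y$, and since $\jslLP{L}$ is a field of sets and $\alpha_\epsilon$ an atom, $\alpha_\epsilon\subseteq Y\iff\alpha_\epsilon\cap Y\ne\emptyset\iff Y\nsubseteq\overline{\alpha_\epsilon}$. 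Finally I would recognise $dr_L(L^r)$ — the Boolean counterpart for $\jslDfaBoolMin{L}$ of the largest non-final state computed in Theorem~\ref{thm:dr_L} and Corollary~\ref{cor:mirr_in_lq}.2 — as $\bigcup\{K\in\LP{L}:\epsilon\notin K\}$, and since each $K\in\LP{L}$ is a union of atoms of which $\alpha_\epsilon$ is the only one containing $\epsilon$, this union is exactly $\bigcup\{\alpha\in J(\jslLP{L}):\alpha\ne\alpha_\epsilon\}=\overline{\alpha_\epsilon}$. Chaining the equivalences yields $\epsilon\in Y\iff Y\nsubseteq dr_L(L^r)$.

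All steps are short; the one point needing care — and the part I would flag as the main obstacle — is pinning down precisely which object $dr_L(L^r)$ is in the Boolean setting, namely the largest non-final state $\overline{\alpha_\epsilon}$ of $\jslDfaBoolMin{L}$ rather than the strictly smaller left-quotient-level element $\bigcup\{X\in\LW{L}:\epsilon\notin X\}$ of Corollary~\ref{cor:mirr_in_lq}.2. The latter is the correct "largest non-final state" for $\jslDfaMin{L}$ but is too small for $\jslDfaBoolMin{L}$ in general, which is exactly why part (3) must be stated for all $Y\in\LP{L}$ and not only for $Y\in\LQ{L}$, where Corollary~\ref{cor:mirr_in_lq}.2 already applies verbatim.
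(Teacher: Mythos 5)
Your parts (1) and (2) are correct and essentially reproduce the paper's own argument: $\jslLP{L}$ is a finite field of sets, its join-irreducibles are its atoms, and disjointness plus the covering property give everything. The problem is part (3), precisely at the identification you yourself flag as the main obstacle -- and you resolve it the wrong way around. You declare $dr_L(L^r)$ to be $\bigcup\{K \in \LP{L} : \epsilon \nin K\} = \overline{\sem{\epsilon}_{\rE_L}}$, but $dr_L$ is already pinned down by Theorem~\ref{thm:dr_L}: $dr_L(L^r) = [\overline{L}]^{-1}L = \bigcup\{X \in \LW{L} : \epsilon \nin X\}$, a union over left \emph{word quotients} only. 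These two sets differ in general: for $L = \{a\}$ over $\Sigma = \{a,b\}$ one has $dr_L(L^r) = \{a\}$ while $\overline{\sem{\epsilon}_{\rE_L}} = \Sigma^* \setminus \{\epsilon\}$. So your chain of equivalences proves $\epsilon \in Y \iff Y \nsubseteq \overline{\sem{\epsilon}_{\rE_L}}$, which is a true but different statement; there is no ``Boolean counterpart'' reading of $dr_L(L^r)$ available to you.

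That said, your instinct that something is off is sound: with the literal reading of $dr_L(L^r)$, part (3) is in fact false for general $Y \in \LP{L}$. In the example above take $Y = \overline{\{a\}} \cap \overline{\{\epsilon\}} = \Sigma^* \setminus \{a,\epsilon\} \in \LP{L}$: then $\epsilon \nin Y$ yet $Y \nsubseteq \{a\} = dr_L(L^r)$. The paper's own proof of the converse direction slips at exactly this point: from $u \nin dr_L(L^r)$ it infers $\forall X \in \LW{L}.\,(u \in X \iff \epsilon \in X)$, when only the implication $u \in X \To \epsilon \in X$ follows. The statement does hold as written when $Y$ ranges over $\LD{L}$ (Lemma~\ref{lem:pos_left_preds_basics}.4 with $v = \epsilon$), and it holds for all of $\LP{L}$ only after replacing $dr_L(L^r)$ by $\overline{\sem{\epsilon}_{\rE_L}}$; the repair belongs in the statement of the lemma (and in its use in Lemma~\ref{lem:canon_bool_jsl_dfa}), not in a reinterpretation of $dr_L$.
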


\begin{proof}
  \item
  \begin{enumerate}
    \item 
    The atoms $J(\jslLP{L})$ are pairwise-disjoint because their meet (intersection) is the bottom element $\emptyset$. The union of all atoms is the top element i.e.\ the empty intersection $\Sigma^*$.
    \item  
    Fix any $Y \in \LP{L}$ and $u \in \alpha \in J(\jslLP{L})$. Given $\alpha \subseteq Y$ then certainly $u \in Y$. Conversely if $u \in Y$ then it must lie in some atom, which is unique by disjointness, hence $\alpha \subseteq Y$.
    \item 
    If $\epsilon \in Y$ then certainly $Y \nsubseteq dr_L(L^r)$ because the latter does not contain $\epsilon$ (see Theorem \ref{thm:dr_L}). Conversely if $Y \nsubseteq dr_L(L^r)$ there exists $u \in Y$ such that $\forall X \in \LW{L}.( u \in X \iff \epsilon \in X)$ i.e.\ $\alpha$ and $\epsilon$ reside in the same atom, so $\epsilon \in Y$ by (2).
  \end{enumerate}
\end{proof}

\begin{lemma}[Well-definedness of canonical boolean/distributive $\JSL$-dfa]
  \label{lem:canon_bool_jsl_dfa}
  \item
  $\jslDfaBoolMin{L}$ and $\jslDfaDistMin{L}$ are well-defined fixpoints of $\jslDfaSimple{-}$ which accept $L$ and have $\jslDfaMin{L}$ as a sub $\JSL$-dfa.
\end{lemma}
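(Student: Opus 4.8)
The plan is to treat $\jslDfaBoolMin{L}$ and $\jslDfaDistMin{L}$ uniformly, abbreviating by $\rS$ either $\LP{L}$ or $\LD{L}$. The only properties of $\rS$ I will use are: $\rS$ is a finite set of regular languages, $\LW{L}\subseteq\rS$, $\rS$ is closed under finite unions, and $\rS$ is closed under the left-letter quotients $X\mapsto a^{-1}X$. Finiteness holds because $L$ is regular, so $\LW{L}$ is finite, and $\rS$ is obtained from $\LW{L}$ by boolean (resp.\ positive-boolean) operations; closure under $a^{-1}(-)$ holds because $a^{-1}(-)$ commutes with union, intersection and complement while $a^{-1}(u^{-1}L)=(ua)^{-1}L\in\LW{L}$; and $L=\epsilon^{-1}L\in\rS$ is immediate.

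First I would verify well-definedness of $\gamma:=(L,(\rS,\cup,\emptyset),\lambda X.a^{-1}X,\{K\in\rS:\epsilon\in K\})$: $(\rS,\cup,\emptyset)$ is a finite join-semilattice, $L$ is one of its elements, $\lambda X.a^{-1}X$ restricts to a join-preserving endomorphism by the closure properties above, and the non-final states $\{K\in\rS:\epsilon\notin K\}$ form a union-closed downset of $(\rS,\cup,\emptyset)$, hence have a largest element $t:=\bigcup\{K\in\rS:\epsilon\notin K\}$, so the final-state set is $\overline{\down_\rS t}$ as required (for $\jslLP{L}$ one may take $t=dr_L(L^r)$ via Lemma~\ref{lem:left_preds_basics}.3). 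Acceptance of $L$ is then immediate: the underlying dfa sends its initial state $L$ to $w^{-1}L$ on input $w$, which is final iff $\epsilon\in w^{-1}L$ iff $w\in L$; equivalently, the reachable part of the underlying dfa is precisely $\minDfa{L}$.

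Being a fixpoint of $\jslDfaSimple{-}$ now follows from the implication $(2)\Rightarrow(1)$ of Lemma~\ref{lem:simplified_jsl_dfa_char} applied with the language set $\rS$: $\gamma$ is a $\JSL$-dfa, accepts $L$, and has the displayed shape, so it is simplified. For the last clause I would recall from Definition~\ref{def:lang_quos} that the carrier $\jslLQ{L}$ of $\jslDfaMin{L}$ is the closure of $\LW{L}$ under finite unions, so $\LQ{L}\subseteq\rS$; moreover $\jslLQ{L}$ is a sub-join-semilattice of $(\rS,\cup,\emptyset)$ containing $\emptyset=\emptyset^{-1}L$ and $L$, closed under $a^{-1}(-)$ since $a^{-1}(U^{-1}L)=(Ua)^{-1}L$, with final states $\{X\in\LQ{L}:\epsilon\in X\}=\{X\in\rS:\epsilon\in X\}\cap\LQ{L}$. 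Hence the inclusion $\jslLQ{L}\hookrightarrow(\rS,\cup,\emptyset)$ is an injective $\JSL$-dfa morphism $\jslDfaMin{L}\monoto\gamma$, exhibiting $\jslDfaMin{L}$ as a sub $\JSL$-dfa of both $\jslDfaBoolMin{L}$ and $\jslDfaDistMin{L}$.

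The argument is almost entirely bookkeeping, so I do not expect a genuine obstacle; the one point that warrants attention is the final-state condition in the well-definedness step, which rests on the observation that the non-final languages (those omitting $\epsilon$) are closed under unions and hence form a principal downset — the same observation underpins Lemma~\ref{lem:simplified_jsl_dfa_char} and is exactly what makes $\LP{L}$ and $\LD{L}$ satisfy its hypothesis $(2)$ with \emph{left-letter} (not arbitrary left) quotients.
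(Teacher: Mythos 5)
Your proof is correct and follows essentially the same route as the paper's: verify the closure properties, observe the non-final states form a principal downset, note each state accepts itself (which you outsource to Lemma~\ref{lem:simplified_jsl_dfa_char}.(2)$\Rightarrow$(1) where the paper argues it inline), and exhibit $\jslDfaMin{L}$ via the inclusion of $\LQ{L}$. The only cosmetic difference is that you treat $\LP{L}$ and $\LD{L}$ uniformly through the four abstract closure properties, whereas the paper does $\LP{L}$ first and then sandwiches $\jslDfaDistMin{L}$ between $\jslDfaMin{L}$ and $\jslDfaBoolMin{L}$.
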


\begin{proof}
  The join-semilattice $\jslLP{L}$ is closed under unions, thus well-defined. We have $L \in \LQ{L} \subseteq \LP{L}$ and $a^{-1}(-)$ preserves unions. The final states are well-defined by Lemma \ref{lem:left_preds_basics}.3. Each state $K$ accepts $K$ i.e.\ $w \in K \iff \epsilon \in w^{-1} K \iff w^{-1} K \nsubseteq dr_L(L^r)$ where the latter corresponds to $\JSL$-dfa acceptance. Then it is a fixpoint of $\jslDfaSimple{-}$ as claimed and clearly has the sub $\JSL$-dfa $\jslDfaMin{L}$. Finally $\jslDfaDistMin{L}$ is sandwiched between them via $\JSL$-dfa inclusion morphisms, with well-defined final states by Lemma \ref{lem:left_preds_basics}.3.
\end{proof}

Generally speaking, $L^r$'s left word quotients biject with $\jslLP{L}$'s atoms.

\begin{theorem}[Quotient-atom bijection]
  \label{thm:atom_quotient_bijection}
  Each regular $L$ has the canonical bijection:
  \[
    \kappa_L : \LW{L^r} \to J(\jslLP{L})
    \qquad
    \kappa_L (v^{-1} L^r) := \sem{v^r}_{\rE_L}
    \qquad
    \kappa_L^{\bf-1} (X) := [X^r]^{-1} L^r,
  \]
  and respective relationship:
  \[
    \kappa_L(x^{-1} L^r) \subseteq a^{-1} \kappa_L(y^{-1} L^r)
    \iff (xa)^{-1} L^r = y^{-1} L^r
    \qquad
    \text{for any $x, y \in \Sigma^*$.}
  \]
\end{theorem}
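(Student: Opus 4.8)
The plan is to identify both the index set $\LW{L^r}$ and the target $J(\jslLP{L})$ with one common piece of data: the set of equivalence classes of the relation $\rE_L$ on $\Sigma^*$ given by $w_1 \rE_L w_2 :\iff \forall X \in \LW{L}.\,(w_1 \in X \iff w_2 \in X)$. First I would record (this is already implicit in Lemma \ref{lem:left_preds_basics}) that, because $L$ is regular and hence $\LW{L}$ finite, the $\rE_L$-classes are exactly the atoms $J(\jslLP{L})$: the class of $w$ equals the boolean combination $\bigcap\{X \in \LW{L} : w \in X\} \cap \bigcap\{\overline{X} : X \in \LW{L},\, w \nin X\}$, and every atom arises this way. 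Thus $\sem{w}_{\rE_L}$ is well-typed as an element of $J(\jslLP{L})$.

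The heart of the argument is the chain of equivalences
\[
  w_1 \rE_L w_2
  \iff \forall u \in \Sigma^*.\,(u w_1 \in L \iff u w_2 \in L)
  \iff (w_1^r)^{-1}L^r = (w_2^r)^{-1}L^r,
\]
where the first step uses $w \in u^{-1}L \iff uw \in L$ together with $\LW{L} = \{u^{-1}L : u \in \Sigma^*\}$, and the second rewrites membership via reversal using that $u \mapsto u^r$ is a bijection of $\Sigma^*$. This exhibits the assignment $\sem{w}_{\rE_L} \mapsto (w^r)^{-1}L^r$ as a well-defined injection $J(\jslLP{L}) \to \LW{L^r}$; it is surjective since $\sem{v^r}_{\rE_L}$ maps to $((v^r)^r)^{-1}L^r = v^{-1}L^r$. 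I would then observe that this map is a two-sided inverse of $\kappa_L$ as defined in the statement, since $\kappa_L(v^{-1}L^r) = \sem{v^r}_{\rE_L}$ is sent back to $v^{-1}L^r$; hence $\kappa_L$ is a bijection. To match the stated inverse formula, note that for an atom $X = \sem{w}_{\rE_L}$ one has $X^r = \{u^r : u \rE_L w\}$, so $[X^r]^{-1}L^r = \bigcup_{u \rE_L w}(u^r)^{-1}L^r$, and by the displayed equivalence all these sets coincide with $(w^r)^{-1}L^r$; thus $\kappa_L^{\bf-1}(X) = [X^r]^{-1}L^r$, and in particular this set-quotient is genuinely a left \emph{word} quotient of $L^r$.

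For the transition relationship I would first note that $a^{-1}(-)$ preserves boolean combinations and sends $\LW{L}$ into itself (via $a^{-1}(u^{-1}L) = (ua)^{-1}L$), so $a^{-1}\sem{y^r}_{\rE_L} \in \LP{L}$; then Lemma \ref{lem:left_preds_basics}.2 applied to the atom $\sem{x^r}_{\rE_L}$ with witness $u := x^r$ gives
\[
  \kappa_L(x^{-1}L^r) \subseteq a^{-1}\kappa_L(y^{-1}L^r)
  \iff a x^r \in \sem{y^r}_{\rE_L}
  \iff a x^r \,\rE_L\, y^r .
\]
Since $a$ is a single letter, $a x^r = (xa)^r$, so the last condition reads $\sem{(xa)^r}_{\rE_L} = \sem{y^r}_{\rE_L}$, i.e.\ $\kappa_L((xa)^{-1}L^r) = \kappa_L(y^{-1}L^r)$; as $\kappa_L$ is injective and $(xa)^{-1}L^r = a^{-1}(x^{-1}L^r) \in \LW{L^r}$, this is equivalent to $(xa)^{-1}L^r = y^{-1}L^r$, as required.

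I expect the only real difficulty to be bookkeeping rather than conceptual: keeping the reversals straight in the central chain of equivalences — in particular verifying cleanly that ``$w_1$ lies in exactly the same left word quotients of $L$ as $w_2$'' is the same statement as ``$w_1^r$ and $w_2^r$ determine the same left word quotient of $L^r$'' — and checking that the set-theoretic left quotient $[X^r]^{-1}L^r$ collapses to a single left word quotient. Once the reformulation of $\rE_L$ via reversal is in hand, every remaining step is a short, routine manipulation.
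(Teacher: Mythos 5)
Your proposal is correct and follows essentially the same route as the paper: the central chain of equivalences identifying $\rE_L$-classes with left word quotients of $L^r$ via reversal is exactly the paper's well-definedness/injectivity argument, and your use of the atom-membership principle (Lemma \ref{lem:left_preds_basics}.2) for the transition relationship is the same mechanism the paper uses implicitly. Your explicit check that $[X^r]^{-1}L^r$ collapses to a single left word quotient is slightly more careful than the paper's one-line justification, and the fact that the $\rE_L$-classes are the atoms is Lemma \ref{lem:cl_L_basics}.6 rather than Lemma \ref{lem:left_preds_basics}, but these are cosmetic differences.
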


\begin{proof}
  \item
  \begin{enumerate}
    \item 
    We first verify $\kappa_L$ is a well-defined function:
    \[
      \begin{tabular}{lll}
        $v_1^{-1} L^r = v_2^{-1} L^r$
        &
        $\iff \forall w \in \Sigma^*. [ v_1 w \in L^r \iff v_2 w \in L^r ]$
        \\ &
        $\iff \forall w \in \Sigma^*. [ w v_1^r \in L \iff w v_2^r \in L ]$
        \\ &
        $\iff \forall w \in \Sigma^*. [ v_1^r \in w^{-1} L \iff v_2^r \in w^{-1} L ]$
        \\ &
        $\iff \sem{v_1^r}_{\rE_L} = \sem{v_2^r}_{\rE_L}$
        & (by Lemma \ref{lem:cl_L_basics}.6).
      \end{tabular}
    \]
    It is clearly surjective and also injective by reversing the argument above.
    The action of $\kappa_L^{\bf-1}$ is well-defined because $\kappa_L$ is injective. 

    \item
    Suppose $(xa)^{-1} L^r = y^{-1} L^r$ so that $\sem{ax^r}_{\rE_L} = \sem{y^r}_{\rE_L}$ by applying $\kappa_L$. Since $x^r \in a^{-1} \sem{ax^r}_{\rE_L}$ we deduce $\sem{x^r}_{\rE_L} \subseteq a^{-1} \sem{ax^r}_{\rE_L} = a^{-1} \sem{y^r}_{\rE_L}$. Conversely suppose the inclusion $\sem{x^r}_{\rE_L} \subseteq a^{-1} \sem{y^r}_{\rE_L}$ holds. Then $ax^r \in \sem{y^r}_{\rE_L}$ and consequently $\sem{ax^r}_{\rE_L} = \sem{y^r}_{\rE_L}$, so applying $\kappa_L^{\bf-1}$ we infer $(xa)^{-1} L^r = y^{-1} L^r$.
  \end{enumerate}
\end{proof}

\begin{note}[Canonicity of $\kappa_L$]
  $\kappa_L$ arises from the duality between $\Set$-dfas (classical dfas) and $\BA$-dfas i.e.\ finite deterministic automata interpreted in boolean algebras \cite{CanonNA2014}. In particular, the dual of the state-minimal $\BA$-dfa for $L$ is isomorphic to the state-minimal $\Set$-dfa for $L^r$.
  \endbox
\end{note}

\begin{theorem}[Canonical boolean dependency automaton]
  \label{thm:canonical_bool_dep_aut}
  We have the $\AutDep$-isomorphism:
  \[
    \neg_{\jslLP{L}} \circ \kappa_L
    : \dep{\rev{\minDfa{L^r}}} \to \Airr(\jslDfaBoolMin{L})
  \]
  with action $\lambda v^{-1} L^r. \overline{\sem{v^r}_{\rE_L}}$ and inverse $\kappa_L^{\bf-1}$.
\end{theorem}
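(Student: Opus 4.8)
The plan is to exhibit the claimed morphism as the $\Dep$-isomorphism induced by an explicit bipartite graph isomorphism, and then check it intertwines the two automaton structures. First I would unfold both sides. Since $\rev{\rev{\rN}} = \rN$, Definition~\ref{def:nfa_to_dep_aut} gives $\dep{\rev{\minDfa{L^r}}} = (\rev{\minDfa{L^r}}, \Delta_{\LW{L^r}}, \minDfa{L^r})$, a dependency automaton over the identity relation on $\LW{L^r}$. On the other side, by Lemma~\ref{lem:left_preds_basics}.1 the join-semilattice $\jslLP{L}$ is the finite boolean algebra on its atoms $J(\jslLP{L})$, so $M(\jslLP{L}) = \{\,\overline{\alpha} : \alpha \in J(\jslLP{L})\,\}$ and, since distinct atoms are disjoint, $\Pirr\jslLP{L}$ is exactly the relation $\alpha \nleq_{\jslLP{L}} \overline{\beta} \iff \alpha = \beta$ on $J(\jslLP{L}) \times M(\jslLP{L})$.

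Next I would build the bipartite graph isomorphism $\beta : \Delta_{\LW{L^r}} \to \Pirr\jslLP{L}$ whose lower component is the bijection $\kappa_L$ of Theorem~\ref{thm:atom_quotient_bijection} and whose upper component is $\neg_{\jslLP{L}} \circ \kappa_L$, i.e.\ $v^{-1}L^r \mapsto \overline{\sem{v^r}_{\rE_L}}$. Edge preservation is immediate from the description of $\Pirr\jslLP{L}$ above together with the injectivity of $\kappa_L$. By Example~\ref{ex:dep_morphisms}.2 this induces a $\Dep$-isomorphism $\rR : \Delta_{\LW{L^r}} \to \Pirr\jslLP{L}$ with exactly the claimed action $\lambda v^{-1}L^r.\,\overline{\sem{v^r}_{\rE_L}}$, and whose inverse, read on the lower bipartition, is $\kappa_L^{\bf-1}$.

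It then remains to verify that $\rR$ respects the automaton structure, i.e.\ is an $\AutDep$-morphism in the sense of Definition~\ref{def:cat_aut_dep}; equivalently, that transporting the data of $\Airr(\jslDfaBoolMin{L}) = (\rN, \Pirr\jslLP{L}, \rN')$ (Definition~\ref{def:airr_and_det}.2) along $\beta^{-1}$ recovers $(\rev{\minDfa{L^r}}, \Delta_{\LW{L^r}}, \minDfa{L^r})$ on the nose. For the lower nfa $\rN$, carried by the atoms: its initial set $J(\jslLP{L}) \cap \down_{\jslLP{L}} L$ transports, via Lemma~\ref{lem:left_preds_basics}.2, to $\{\,X \in \LW{L^r} : \epsilon \in X\,\}$; its final set $\{\,\alpha : \epsilon \in \alpha\,\}$ transports to $\{L^r\}$ because $\epsilon \in \sem{v^r}_{\rE_L}$ forces $\sem{v^r}_{\rE_L} = \sem{\epsilon}_{\rE_L} = \kappa_L(L^r)$; and its $a$-transition $\alpha_1 \to \alpha_2 \iff \alpha_2 \subseteq a^{-1}\alpha_1$ transports, by the ``respective relationship'' of Theorem~\ref{thm:atom_quotient_bijection}, to $X_1 \to X_2 \iff X_1 = a^{-1}X_2$, which is precisely the reversed transition relation of $\minDfa{L^r}$. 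For the upper nfa $\rN'$, carried by the coatoms: I would first identify the largest non-final state $\Lor_{\jslLP{L}} \overline{F}$ with $dr_L(L^r) = \bigcup\{\,X \in \LW{L} : \epsilon \notin X\,\}$ using Lemma~\ref{lem:left_preds_basics}.3 and the alternative description of $dr_L$ in Theorem~\ref{thm:dr_L}; then the initial set $M(\jslLP{L}) \cap \up_{\jslLP{L}} \Lor_{\jslLP{L}}\overline{F}$ transports to $\{L^r\}$ and the final set $M(\jslLP{L}) \cap \overline{\up_{\jslLP{L}} L}$ transports to $\{\,X \in \LW{L^r} : \epsilon \in X\,\}$, matching the initial state and final set of $\minDfa{L^r}$. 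For the transitions of $\rN'$ I would avoid computing the adjoint $(\lambda X.\,a^{-1}X)_*$ inside $\jslLP{L}$ altogether: since $\Airr(\jslDfaBoolMin{L})$ is a genuine dependency automaton (Theorem~\ref{thm:aut_dep_equiv_jsl_dfa}), condition~(4) of Definition~\ref{def:dep_aut} holds and is transported by $\beta$; over the identity relation $\Delta_{\LW{L^r}}$ this condition forces the transported upper transitions to be the converse of the transported lower transitions, hence exactly those of $\minDfa{L^r}$. Thus the transported dependency automaton equals $\dep{\rev{\minDfa{L^r}}}$, so $\rR$ is an $\AutDep$-isomorphism with inverse the reverse relabelling, completing the proof.

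The main obstacle is essentially bookkeeping: the initial and final sets of the $\Airr$-machine are phrased through $\down_{\jslLP{L}} s$, $\up_{\jslLP{L}} \Lor\overline{F}$ and complements of atoms, and matching them against the combinatorial data of $\minDfa{L^r}$ and its reverse is where the small lemmas about atoms and about $dr_L$ get used in sequence. The one place a head-on computation would be genuinely unpleasant --- the adjoint appearing in the coatom transitions --- is dispatched by invoking condition~(4) rather than unwinding it, and recognising that this shortcut is available is really the only non-routine point.
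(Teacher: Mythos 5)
Your proof is correct, and its overall architecture is the same as the paper's: both exhibit the morphism as the $\Dep$-isomorphism induced by the bipartite graph isomorphism with lower witness $\kappa_L$ and upper witness $\neg_{\jslLP{L}} \circ \kappa_L$, both use the ``respective relationship'' of Theorem~\ref{thm:atom_quotient_bijection} to match the atom transitions against the reversed transitions of $\minDfa{L^r}$, and both check the initial/final conditions by the same small computations with Lemma~\ref{lem:left_preds_basics}. The one place you genuinely diverge is the coatom (upper) transitions: the paper first computes $(\gamma_a)_*(\overline{\sem{x_1}_{\rE_L}}) \subseteq \overline{\sem{x_2}_{\rE_L}}$ explicitly by unwinding the adjoint, and then separately verifies the intertwining condition $\delta_a\spbreve ; \alpha = \alpha ; \rM_a\spbreve$ by another chain of equivalences; you dispense with both steps by transporting condition~(4) of Definition~\ref{def:dep_aut} along the bipartite isomorphism and observing that over $\Delta_{\LW{L^r}}$ it forces the upper transitions to be exactly the converse of the (already identified) lower ones. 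That shortcut is valid --- conjugating an equality of relations by bijections preserves it, and $\rM_a ; \Delta = \Delta ; (\rM'_a)\spbreve$ pins down $\rM'_a$ --- and it buys a noticeably leaner argument at the cost of leaning on the (routine but unstated) fact that a bipartite graph isomorphism under which all five dependency-automaton conditions transport on the nose is automatically an $\AutDep$-isomorphism; the paper's verification of the raw $\AutDep$-morphism equations avoids having to make that metaprinciple explicit.
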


\begin{proof}
  Consider the dependency automaton of irreducibles:
  \[
    \Airr(\jslDfaBoolMin{L}) = (\rN, \neg_{\jslDfaBoolMin{L}}, \rM)
  \]
  \[
    \begin{tabular}{ll}
      $\rN
      = (\{ \sem{x}_{\rE_L} : x  \in L \}, J(\jslDfaBoolMin{L}), \rN_a, \{ \sem{\epsilon}_{\rE_L} \})$
      &
      $\rN_a(\sem{x_1}_{\rE_L}, \sem{x_2}_{\rE_L})
      \iff (x_2^r a)^{-1} L^r = (x_1^r)^{-1} L^r$
      \\[1ex]
      $\rM = (\{ \overline{\sem{\epsilon}_{\rE_L}} \}, M(\jslDfaBoolMin{L}), \rM_a, \{ \overline{\sem{x}_{\rE_L}} : x \in L \})$
      &
      $\rM_a(\overline{\sem{x_1}_{\rE_L}}, \overline{\sem{x_2}_{\rE_L}})
      \iff \rN_a(\sem{x_2}_{\rE_L}, \sem{x_1}_{\rE_L})$.
    \end{tabular}
  \]
  To explain, $\rN$'s description follows by unwinding the definitions and the relationship $\sem{x_2}_{\rE_L} \subseteq a^{-1} \sem{x_1}_{\rE_L} \iff (x_2^r a)^{-1} L^r = (x_1^r)^{-1} L^r$ from Theorem \ref{thm:atom_quotient_bijection}. Likewise $\rM$ follows via the definitions and the following calculation, where $\gamma_a := \lambda X. a^{-1} X : \jslLP{L} \to \jslLP{L}$:
  \[
    \begin{tabular}{lll}
      $\rM_a(\overline{\sem{x_1}_{\rE_L}}, \overline{\sem{x_2}_{\rE_L}})$
      &
      $\iff (\gamma_a)_*(\overline{\sem{x_1}_{\rE_L}}) \subseteq \overline{\sem{x_2}_{\rE_L}}$
      & (by definition)
      \\ &
      $\iff \bigcup \{ \sem{x}_{\rE_L} : a^{-1} \sem{x}_{\rE_L} \subseteq \overline{\sem{x_1}_{\rE_L}} \} \subseteq \overline{\sem{x_2}_{\rE_L}}$
      \\ &
      $\iff x_2 \nin \bigcup \{ \sem{x}_{\rE_L} : a^{-1} \sem{x}_{\rE_L} \subseteq \overline{\sem{x_1}_{\rE_L}} \}$
      \\ &
      $\iff a^{-1} \sem{x_2}_{\rE_L} \nsubseteq \overline{\sem{x_1}_{\rE_L}}$
      \\ &
      $\iff \sem{x_1}_{\rE_L} \subseteq a^{-1} \sem{x_2}_{\rE_L}$.
    \end{tabular}
  \]
  We now verify the claimed dependency automaton isomorphism using the canonical bijection $\kappa_L$ from Theorem \ref{thm:atom_quotient_bijection}. First of all, $\alpha := \neg_{\jslLP{L}} \circ \kappa_L$ defines a $\Dep$-isomorphism via the bijective witnesses:
  \[
    \xymatrix@=20pt{
      \LW{L^r} \ar[rr]^{\neg_{\jslLP{L}} \circ \kappa_L} && M(\jslLP{L})
      \\
      \LW{L^r} \ar[u]^{\Delta_{\LW{L^r}}} \ar[rr]_{\kappa_L} && J(\jslLP{L}) \ar[u]_{\neg_{\jslLP{L}}}
    }
  \]
  It remains to verify the constaints from Definition \ref{def:cat_aut_dep}. Let $\delta := \minDfa{L^r}$ be the classical state-minimal dfa so that $\delta_a(Y_1, Y_2) \iff Y_2 = a^{-1} Y_1$. Then we calculate:
  \[
    \begin{tabular}{lll}
      $\delta_a\spbreve ; \alpha (v^{-1} L^r, \overline{\sem{y}_{\rE_L}})$
      &
      $\iff \exists w \in \Sigma^*. [ (wa)^{-1} L^r = v^{-1} L^r \,\land\, \alpha(w^{-1} L^r) = \overline{\sem{y}_{\rE_L}}]$
      \\ &
      $\iff \exists w \in \Sigma^*. [ (wa)^{-1} L^r = v^{-1} L^r \,\land\, \sem{w^r}_{\rE_L} = \sem{y}_{\rE_L} ]$
      \\ &
      $\iff \exists w \in \Sigma^*. [ \sem{w^r}_{\rE_L} \subseteq a^{-1} \sem{v^r}_{\rE_L} \,\land\, \sem{w^r}_{\rE_L} = \sem{y}_{\rE_L} ]$
      & (by Theorem \ref{thm:atom_quotient_bijection})
      \\ &
      $\iff \sem{y}_{\rE_L} \subseteq a^{-1} \sem{v^r}_{\rE_L}$
      \\ &
      $\iff \rM_a(\overline{\sem{y}_{\rE_L}}, \overline{\sem{v^r}_{\rE_L}})$
      & (see above)
      \\ &
      $\iff \alpha ; \rM_a\spbreve(v^{-1} L^r, \overline{\sem{y}_{\rE_L}})$
    \end{tabular}
  \]
  Concerning the remaining conditions, $\breve{\alpha}[I_\rM] = \{ \kappa_L^{\bf-1}(\sem{\epsilon}_{\rE_L}) \} = \{ L^r \} = F_{\rev{\delta}}$ and finally:
  \[
    \begin{tabular}{lll}
      $\alpha[I_{\rev{\delta}}]$
      &
      $= \alpha[\{ Y \in \LW{L^r} : \epsilon \in Y \}]$
      \\ &
      $= \alpha[\{ v^{-1} L^r : v \in L^r \}]$
      \\ &
      $= \{ \overline{\sem{v^r}_{\rE_L}} : v^r \in L \}$
      \\ &
      $= F_\rM$.
    \end{tabular}
  \]
\end{proof}

Importantly this provides a dual representation.

\begin{corollary}[Dualising $\jslDfaBoolMin{L}$]
  \label{cor:dual_jsl_dfa_of_left_preds}
  We have the $\JSL$-dfa isomorphism:
  \[
    \theta : \jslDfaSc{\minDfa{L^r}} \to (\jslDfaBoolMin{L})^\pentagram
    \qquad
    \theta(S) := \bigcup \{ \sem{v^r}_{\rE_L} : v^{-1} L^r \nin S  \}
  \]
\end{corollary}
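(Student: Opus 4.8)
The plan is to obtain the isomorphism abstractly from the canonical boolean dependency automaton of Theorem~\ref{thm:canonical_bool_dep_aut}, and then to pin down that the composite really is the stated $\theta$ by a short direct computation with atoms. For the abstract part, I would apply the functor $\Det:\AutDep\to\dfa{\JSL}$ to the $\AutDep$-isomorphism $\neg_{\jslLP{L}}\circ\kappa_L:\dep{\rev{\minDfa{L^r}}}\to\Airr(\jslDfaBoolMin{L})$ of Theorem~\ref{thm:canonical_bool_dep_aut}. Since $\Det(\dep{\rN})=\jslDfaSc{\rN}$ by Definition~\ref{def:full_subset_construction} and $\Det\circ\Airr\cong\Id$ via $\rep$ by Theorem~\ref{thm:aut_dep_equiv_jsl_dfa}, this yields a $\JSL$-dfa isomorphism $\jslDfaSc{\rev{\minDfa{L^r}}}\xto{\sim}\jslDfaBoolMin{L}$. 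Dualising via $(-)^\pentagram$ (Theorem~\ref{thm:dfa_jsl_self_dual}) and rewriting the left side with Example~\ref{ex:dual_full_subset_construction} — namely $(\jslDfaSc{\rN})^\pentagram\cong\jslDfaSc{\rev{\rN}}$ with $\rN:=\rev{\minDfa{L^r}}$ — gives a $\JSL$-dfa isomorphism $\jslDfaSc{\minDfa{L^r}}\xto{\sim}(\jslDfaBoolMin{L})^\pentagram$. Since the carriers of $\jslDfaSc{\minDfa{L^r}}$ and $\jslDfaSc{\rev{\minDfa{L^r}}}$ are both $(\Pow\LW{L^r},\cup,\emptyset)$, it then only remains to check that the composite's action is $\theta$; I find it cleaner to instead verify $\theta$ directly, and would present it that way.

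To see $\theta$ is a $\JSL_f$-isomorphism: by Theorem~\ref{thm:atom_quotient_bijection}, $\theta(S)=\bigcup\{\kappa_L(X):X\in\LW{L^r}\setminus S\}$, a union of atoms of $\jslLP{L}$. As $\kappa_L:\LW{L^r}\to J(\jslLP{L})$ is a bijection and the atoms partition $\Sigma^*$ (Lemma~\ref{lem:left_preds_basics}.1), every element of the boolean algebra $\LP{L}$ is a disjoint union of atoms in a unique way, so $\theta$ is a bijection onto $\LP{L}$. It is join-preserving into $(\jslDfaBoolMin{L})^\pentagram$, whose carrier is $\jslLP{L}^\pOp$: indeed $\theta(S_1\cup S_2)=\theta(S_1)\cap\theta(S_2)$ by disjointness of atoms and injectivity of $\kappa_L$, and the meet $\cap=\land_{\jslLP{L}}$ is the join of $\jslLP{L}^\pOp$; moreover $\theta(\emptyset)=\Sigma^*=\top_{\jslLP{L}}=\bot_{\jslLP{L}^\pOp}$.

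For the dfa structure I would read off $(\jslDfaBoolMin{L})^\pentagram$ from Theorem~\ref{thm:dfa_jsl_self_dual}: initial state $\bigcup\{K\in\LP{L}:\epsilon\notin K\}$, which equals $dr_L(L^r)$ by Lemma~\ref{lem:left_preds_basics}.3; transitions $(\gamma_a)_*$ for $\gamma_a:=\lambda X.a^{-1}X$; and final states $\{K\in\LP{L}:L\nsubseteq K\}$. The initial state matches since $\theta(\{L^r\})=\Sigma^*\setminus\kappa_L(\epsilon^{-1}L^r)=\overline{\sem{\epsilon}_{\rE_L}}$, which is exactly $dr_L(L^r)$ because the latter is the union of all atoms avoiding $\epsilon$ (Lemma~\ref{lem:left_preds_basics}.1--3). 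For the final states, Lemma~\ref{lem:left_preds_basics}.2 gives $\kappa_L(v^{-1}L^r)\subseteq L\iff v^r\in L\iff\epsilon\in v^{-1}L^r$, so $L$ is the union of the atoms $\kappa_L(X)$ with $\epsilon\in X$; comparing unions of atoms, $L\nsubseteq\theta(S)\iff\exists X\in S.\,\epsilon\in X$, which is the finality condition of $\jslDfaSc{\minDfa{L^r}}$. For the transitions, writing $\tau_a(S)=\{a^{-1}X:X\in S\}$ for the transition of $\jslDfaSc{\minDfa{L^r}}$ and using the clause $\kappa_L(Y)\subseteq a^{-1}\kappa_L(X')\iff a^{-1}Y=X'$ of Theorem~\ref{thm:atom_quotient_bijection}, an atom $\kappa_L(X')$ lies below $(\gamma_a)_*(\theta(S))$ iff $a^{-1}\kappa_L(X')\subseteq\theta(S)$ iff $\forall Y.(a^{-1}Y=X'\To Y\notin S)$ iff $X'\notin\tau_a(S)$ iff $\kappa_L(X')\subseteq\theta(\tau_a(S))$; comparing unions of atoms again gives $(\gamma_a)_*\circ\theta=\theta\circ\tau_a$.

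The only real obstacle is bookkeeping: computing $(\gamma_a)_*$ at the level of atoms (the restriction of $a^{-1}(-)$ to atoms is not injective, so one genuinely needs the atom-partition together with the transition clause of Theorem~\ref{thm:atom_quotient_bijection}), and — were one to follow the abstract route instead — correctly composing $\rep_{\jslLP{L}}$, $\Det(\neg_{\jslLP{L}}\circ\kappa_L)$ (whose action on morphisms is $\Open$'s, via the component relation of $\neg_{\jslLP{L}}\circ\kappa_L$), and the relative-complement isomorphism of Example~\ref{ex:dual_full_subset_construction}, keeping straight the two complementations that appear and the directions of the various arrows. Once the atoms of $\jslLP{L}$ are identified with $\LW{L^r}$ via $\kappa_L$, everything else is routine.
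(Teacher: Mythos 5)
Your proof is correct, but it takes a genuinely different route from the paper's. The paper obtains the isomorphism purely by composition: it takes the inverse $f$ of the Theorem~\ref{thm:canonical_bool_dep_aut} isomorphism, applies $\Det$ and $(-)^\pentagram$, and chains $\hat{\partial}_{\fM}^{\bf-1}$, $(\Det f)^\pentagram$ and $\rep_{\jslDfaBoolMin{L}}^\pentagram$, so that the result is automatically a $\JSL$-dfa morphism and the only work is unwinding the composite action to the stated formula $\bigcup\{\sem{v^r}_{\rE_L}: v^{-1}L^r\nin S\}$. You sketch essentially that abstract route but then verify $\theta$ from scratch: bijectivity onto $\LP{L}$ via the atom partition, join-preservation into $\jslLP{L}^\pOp$, and explicit checks of the initial state, final states and transitions using $\kappa_L$ and the transition clause of Theorem~\ref{thm:atom_quotient_bijection}. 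Your computation of $(\gamma_a)_*$ at the atom level (an atom lies below $(\gamma_a)_*(K)$ iff its $a$-preimage lies in $K$, then expanding $a^{-1}\kappa_L(X')$ as the union of the atoms $\kappa_L(Y)$ with $a^{-1}Y=X'$) is exactly the right way to handle the non-injectivity of $a^{-1}(-)$ on atoms, and the finality check is correct. What your route buys is self-containedness — the paper's version leaves the dfa-structure checks implicit in the functoriality of the composed natural isomorphisms — at the cost of redoing by hand what the categorical machinery already guarantees. One small remark: the detour through $dr_L(L^r)$ for the initial state is unneeded (and the assertion that $dr_L(L^r)$ is the union of all atoms avoiding $\epsilon$ goes beyond its definition as a union of left \emph{word quotients} avoiding $\epsilon$); what you actually need, namely $\theta(\{L^r\})=\overline{\sem{\epsilon}_{\rE_L}}=\bigcup\{K\in\LP{L}:\epsilon\nin K\}$, already follows from the atom partition, so I would drop the appeal to Lemma~\ref{lem:left_preds_basics}.3 there.
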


\begin{proof}
  First recall the inverse of the isomorphism from Theorem \ref{thm:canonical_bool_dep_aut},
  \[
    f := \lambda X. [X^r]^{-1} L^r : \Airr(\jslDfaBoolMin{L}) \to \fM
    \qquad
    \fM := \dep{\rev{\minDfa{L^r}}}.
  \]
  It has upper witness $\beta := \kappa_L^{\bf-1} \circ \neg_{\jslDfaBoolMin{L}}$ so that $\Det f = \lambda S. \beta[S]$ and also $(\Det f)^\pentagram = \lambda S. \beta^{\bf-1}[S]$, since the adjoint of an isomorphism acts like the inverse. Recall the natural isomorphism $\hat{\partial}_{-}$ from Theorem \ref{thm:aut_dep_jsl_dfa_self_duality_transfer} and $\rep_{-}$ from Theorem \ref{thm:aut_dep_equiv_jsl_dfa}. Then we have the composite join-semilattice isomorphism:
  \[
    \Det (\dep{\minDfa{L^r}})
    \xto{\hat{\partial}_{\fM}^{\bf-1}} (\Det\fM)^\pentagram
    \xto{(\Det f)^\pentagram} (\Det\Airr(\jslDfaBoolMin{L}))^\pentagram
    \xto{\rep_{\jslDfaBoolMin{L}}^\pentagram} (\jslDfaBoolMin{L})^\pentagram
  \]
  which acts on $S \subseteq \LW{L^r}$ as follows:
  \[
    \begin{tabular}{llll}
      $S$
      &
      $\mapsto$ & $\hat{\partial}_{\dep{\rev{\minDfa{L^r}}}}^{\bf-1}(S)$
      \\ &
      $=$ & $\Delta_{\LW{L^r}}[\overline{S}]$
      \\ &
      $\mapsto$ & $(\Det f)^\pentagram(\overline{S})$
      \\ &
      $=$ & $\{ \overline{\sem{v^r}_{\rE_L}} : v^{-1} L^r \in \overline{S} \}$
      & (see above)
      \\ &
      $\mapsto$ & $\rep_{\jslDfaBoolMin{L}}^\pentagram(\{ \overline{\sem{v^r}_{\rE_L}} : v^{-1} L^r \nin S \})$
      \\ &
      $=$ & $\bigcap \{ \overline{\sem{v^r}_{\rE_L}} : v^{-1} L^r \in S  \}$
      & (adjoint acts as $\rep_{\jslDfaBoolMin{L}}^{\bf-1}$)
      \\ &
      $=$ & $\bigcup \{ \sem{v^r}_{\rE_L} : v^{-1} L^r \nin S \}$.
    \end{tabular}
  \]
\end{proof}

We now turn our attention to positive predicates, recalling $dr_L$ from Theorem \ref{thm:dr_L}.

\begin{lemma}[Concerning irreducibles in $\jslLD{L}$]
  \label{lem:pos_left_preds_basics}
  \item
  \begin{enumerate}
    \item
    $(\jslLD{\overline{L}})^\pOp \cong \jslLD{L}$ via relative complement.

    \item
    $\overline{dr_{\overline{L}}(\overline{v^{-1} L^r})} = \bigcap \{ X \in \LW{L} : v^r \in X \}$.

    \item
    $|J(\jslLD{L})| = |M(\jslLD{L})| = |\LW{L^r}|$ where:
    \[
      \begin{tabular}{c}
        $J(\jslLD{L}) = \{ \overline{dr_{\overline{L}}(\overline{v^{-1} L^r})} : v^{-1} L^r \in \LW{L^r} \}$
        \qquad
        $M(\jslLD{L}) = \{ dr_L(v^{-1} L^r) : v^{-1} L^r \in \LW{L^r} \}$.
        \\[1ex]
      \end{tabular}
    \]
    \item
    $S \subseteq dr_L(v^{-1} L^r) \iff v^r \nin S$, for each $S \in \LD{L}$.

    \item
    The canonical bijection $\tau_{\jslLD{L}} : J(\jslLD{L}) \to M(\jslLD{L})$ has action:
    \[
      \tau_{\jslLD{L}}(\overline{dr_{\overline{L}}(\overline{v^{-1} L^r})})
      := dr_L(v^{-1} L^r)
      \qquad
      \text{see Note \ref{note:irreducibles}.}
    \]
  \end{enumerate}
\end{lemma}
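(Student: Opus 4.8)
The plan is to prove the five parts in the order (1), (2), (4), (3), (5), since parts (3) and (5) rest on the earlier ones. For (1), note that relative complement $(-)^{c}$ on $\Pow\Sigma^{*}$ is an order-reversing involution interchanging $\cup$ and $\cap$, and satisfies $\overline{u^{-1}L}=u^{-1}\overline{L}$, so it restricts to a bijection $\LW{L}\to\LW{\overline{L}}$; since $\LD{L}$ (resp.\ $\LD{\overline{L}}$) is the $\{\cup,\cap\}$-closure of $\LW{L}$ (resp.\ $\LW{\overline{L}}$), complement carries $\LD{L}$ bijectively onto $\LD{\overline{L}}$ and, being order-reversing, is a bounded-lattice anti-isomorphism, i.e.\ an isomorphism $(\jslLD{\overline{L}})^{\pOp}\cong\jslLD{L}$. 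For (2) I would just unwind the alternative action of $dr_{\overline{L}}$ from Theorem~\ref{thm:dr_L}: writing $\overline{v^{-1}L^{r}}=v^{-1}(\overline{L})^{r}$ gives $dr_{\overline{L}}(\overline{v^{-1}L^{r}})=\bigcup\{X\in\LW{\overline{L}}:v^{r}\nin X\}=\bigcup\{\overline{Y}:Y\in\LW{L},\ v^{r}\in Y\}$ (using $\LW{\overline{L}}=\{\overline{Y}:Y\in\LW{L}\}$), and complementing yields $\overline{dr_{\overline{L}}(\overline{v^{-1}L^{r}})}=\bigcap\{X\in\LW{L}:v^{r}\in X\}$.

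For (4), every $S\in\LD{L}$ has a disjunctive normal form $S=\bigcup_{i}\bigcap_{j}X_{ij}$ with $X_{ij}\in\LW{L}$ (finiteness of $\LW{L}$ and distributivity of $\cap$ over $\cup$ in the powerset); using $dr_{L}(v^{-1}L^{r})=\bigcup\{X\in\LW{L}:v^{r}\nin X\}$, the forward implication is immediate since this union omits $v^{r}$, and conversely, if $v^{r}\nin S$ and $w\in S$ then $w$ lies in some conjunct $\bigcap_{j}X_{ij}$, which must have a factor $X_{ij_{0}}$ with $v^{r}\nin X_{ij_{0}}$, whence $w\in X_{ij_{0}}\subseteq dr_{L}(v^{-1}L^{r})$. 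Part (3) follows: $\jslLD{L}$ is distributive, so $|J(\jslLD{L})|=|M(\jslLD{L})|$ is automatic (Note~\ref{note:irreducibles}), and I claim $M(\jslLD{L})=dr_{L}[\LW{L^{r}}]$. Each $K_{v}:=dr_{L}(v^{-1}L^{r})$ lies in $\LD{L}$ and is meet-irreducible: if $K_{v}=A\cap B$ with $A,B\in\LD{L}$ strictly larger than $K_{v}$, then $A,B\nsubseteq K_{v}$, so $v^{r}\in A$ and $v^{r}\in B$ by (4), giving $v^{r}\in A\cap B=K_{v}$, contradicting $v^{r}\nin K_{v}$. Moreover $dr_{L}[\LW{L^{r}}]$ meet-generates $\jslLD{L}$: for $S\in\LD{L}$ we have $\bigcap\{K_{v}:v^{r}\nin S\}\supseteq S$ by (4), and for any $w\nin S$ take $v:=w^{r}$, so $v^{r}=w\nin S$ while $w\nin K_{w^{r}}$ by (4). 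As $M(\jslLD{L})$ is the least meet-generating set (Note~\ref{note:irreducibles}) it is contained in $dr_{L}[\LW{L^{r}}]$, and that set consists of meet-irreducibles, so the two coincide; injectivity of $dr_{L}$ gives $|M(\jslLD{L})|=|\LW{L^{r}}|$. Running this with $L\mapsto\overline{L}$, transporting along the anti-isomorphism of (1) (which swaps join- and meet-irreducibles), and rewriting via (2) identifies $J(\jslLD{L})$ as stated and yields $|J(\jslLD{L})|=|\LW{(\overline{L})^{r}}|=|\LW{L^{r}}|$.

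For (5), put $j_{v}:=\overline{dr_{\overline{L}}(\overline{v^{-1}L^{r}})}=\bigcap\{X\in\LW{L}:v^{r}\in X\}$ (by (2), and a join-irreducible by (3)) and $m_{v}:=dr_{L}(v^{-1}L^{r})$; since $\jslLD{L}$ is distributive, $\tau_{\jslLD{L}}(j_{v})=\bigcup\{S\in\LD{L}:j_{v}\nsubseteq S\}$ by Note~\ref{note:irreducibles}. The inclusion $\tau_{\jslLD{L}}(j_{v})\subseteq m_{v}$ reduces to showing each such $S$ has $v^{r}\nin S$ (then $S\subseteq m_{v}$ by (4)): if $v^{r}\in S$, the DNF of $S$ contains a conjunct containing $v^{r}$, hence containing $j_{v}$, forcing $j_{v}\subseteq S$; and the reverse inclusion holds because $m_{v}\in\LD{L}$ and $j_{v}\nsubseteq m_{v}$ (as $v^{r}\in j_{v}$ but $v^{r}\nin m_{v}$ by (4)), so $m_{v}$ is itself among the sets being unioned. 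The main obstacle is part (3): one must simultaneously verify that the candidates $dr_{L}(v^{-1}L^{r})$ are meet-irreducible and that they meet-generate $\jslLD{L}$, then close the gap via minimality of $M(\cdot)$, after which the bootstrapping of $J(\jslLD{L})$ out of $M(\jslLD{\overline{L}})$ across the complement anti-isomorphism needs care to keep the order-reversals consistent; the remaining steps are bookkeeping with the complement bijection and the alternative action of $dr_{L}$ from Theorem~\ref{thm:dr_L}.
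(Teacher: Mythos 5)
Your proof is correct, but it reorganises the argument around a different pivot than the paper's. The paper proves (3) first: it imports Corollary \ref{cor:mirr_in_lq} to see that the $dr_L(v^{-1}L^r)$ meet-generate $\jslLD{L}$, establishes their meet-irreducibility by exhibiting a unique cover $K_{v_0}$ (the intersection of all $dr_L(Y)$ strictly above $dr_L(v_0^{-1}L^r)$), and only then derives (4) -- first for join-irreducibles via (2) and (3), then for arbitrary unions of them. You instead prove (4) first and make it the engine: writing each $S \in \LD{L}$ in disjunctive normal form over $\LW{L}$ and using the alternative action $dr_L(v^{-1}L^r)=\bigcup\{X\in\LW{L}: v^r \nin X\}$ yields $S \subseteq dr_L(v^{-1}L^r) \iff v^r \nin S$ with no reference to irreducibles; meet-irreducibility (both factors of a proper meet would contain $v^r$) and meet-generation ($S = \bigcap\{dr_L(v^{-1}L^r): v^r\nin S\}$, witnessed for $w \nin S$ by $v:=w^r$) then each take one line, and (5) follows from the companion equivalence $v^r\in S \iff j_v \subseteq S$ rather than from the paper's appeal to join-primeness and Corollary \ref{cor:mirr_in_lq}. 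Your route buys self-containedness and a cleaner dependency graph; in particular it supplies explicitly the justification for the paper's quick step ``each $X\in\LQ{L}$ is an intersection of $dr_L(v^{-1}L^r)$'s, so every $S\in\LD{L}$ is an intersection of them too'', which genuinely needs something like your (4) to pass from $\LQ{L}$ to all of $\LD{L}$. The paper's route buys economy by reusing Corollary \ref{cor:mirr_in_lq} wholesale. Parts (1), (2), and the transport of the $M$-description across the complement anti-isomorphism to obtain $J(\jslLD{L})$ are handled the same way in both.
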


\begin{proof}
  \item
  \begin{enumerate}
    \item
    Consider $\theta := \lambda X. \overline{X} : (\jslLD{\overline{L}})^\pOp \to \jslLD{L}$. It is a well-defined bijection by the set-theoretic De Morgan laws and $\overline{u^{-1} \overline{L}} = u^{-1} L$. It is an order-isomorphism because $X \subseteq Y \iff \overline{Y} \subseteq \overline{X}$, hence a join-semilattice isomorphism too.
    \item
    We calculate:
    \[
      \begin{tabular}{lll}
        $\overline{dr_{\overline{L}}(\overline{v^{-1} L^r})}$
        &
        $= \overline{dr_{\overline{L}}(v^{-1} \overline{L^r})}$
        & ($v^{-1}(-)$ preserves complement)
        \\ &
        $= \overline{\bigcup \{ X \in \LW{\overline{L^r}} : v^r \nin X \}}$
        & (see Theorem \ref{thm:dr_L})
        \\ &
        $= \bigcap \{ \overline{X} : v^r \in X \in \LW{\overline{L^r}} \}$
        \\ &
        $= \bigcap \{ X \in \LW{L^r} : v^r \in X \}$.
      \end{tabular}
    \]

    \item 
    We first show $M(\jslLD{L})$ has the claimed description. By Corollary \ref{cor:mirr_in_lq} each $X \in \LQ{L}$ is an intersection of $dr_L(v^{-1} L^r)$'s, so every $S \in \LD{L}$ is an intersection of them too. Then these elements meet-generate $\jslLD{L}$. To see they are all meet-irreducible, fix $v_0 \in \Sigma^*$. We'll show $dr_L(v_0^{-1} L^r)$ has the following unique cover in $\jslLD{L}$:
    \[
      K_{v_0} :=
      \bigcap \{ dr_L(Y) : dr_L(v_0^{-1} L^r) \subset dr_L(Y), \, Y \in \LW{L} \}.
    \]
    Certainly $dr_L(v_0^{-1} L^r) \subseteq K_{v_0}$. Crucially if $dr_L(v_0^{-1} L^r) \subset dr_L(Y)$ then by \emph{strictness} we know $dr_L(Y) \nsubseteq dr_L(v_0^{-1} L^r)$, hence $v_0^r \in dr_L(Y)$ by Corollary \ref{cor:mirr_in_lq}. Then we have the strict inclusion $dr_L(v_0^{-1} L^r) \subset K_{v_0}$. Since $K_v$ is the meet of all meet-irreducibles strictly greater than $dr_L(v^{-1} L^r)$ it is also the unique cover of the latter.

    The description of $J(\jslLD{L})$ follows by (1) i.e.\ they are the relative complements of the meet-irreducibles in $\jslLD{\overline{L}}$. Finally, both sets have cardinality $|\LW{L^r}|$.

    \item
    For any $v_0 \in \Sigma^*$ we first establish:
    \[
      \begin{tabular}{lll}
        $\bigcap \{ X \in \LW{L^r} : v_0^r \in X \} \nsubseteq dr_L(v^{-1} L^r)$
        &
        $\iff \forall X \in \LW{L}.[ v_0^r \in X \To X \nsubseteq dr_L(v^{-1} L^r) ]$
        & (A)
        \\ &
        $\iff \forall X \in \LW{L}.[ v_0^r \in X \To v^r \in X) ]$
        & (Corollary \ref{cor:mirr_in_lq})
        \\ &
        $\iff v^r \in \bigcap \{ X \in \LW{L^r} : v_0^r \in X \}$.
      \end{tabular}
    \]
    Concerning (A), the implication $(\To)$ follows because $X \subseteq dr_L(v^{-1} L^r)$ would yield a contradiction, whereas $(\oT)$ holds because $X \nsubseteq dr_L(v^{-1} L^r)$ implies $v^r \in X$ by Corollary \ref{cor:mirr_in_lq}, so the intersection contains $v^r$ too. Then, invoking (2) and (3), we've established the original claim whenever $S \in J(\jslLD{L})$. In the general case $S = \bigcup J$ where $J \subseteq J(\jslLD{L})$,
    \[
      \begin{tabular}{lll}
        $S \subseteq dr_L(v^{-1} L^r)$
        &
        $\iff \forall K \in J. K \subseteq dr_L(v^{-1} L^r)$
        \\ &
        $\iff \forall K \in J. v^r \nin K$
        \\ &
        $\iff v^r \nin S$.
      \end{tabular}
    \]

    \item
    Each join-irreducible takes the form $j_v := \overline{dr_{\overline{L}}(\overline{v^{-1} L^r})}$ where $v \in \Sigma^*$. By Note \ref{note:irreducibles} we know $j_v$ is join-prime, so for any $J \subseteq J(\jslLD{L})$ we have $j_v \nsubseteq \bigcup J \iff \forall j \in J. j_v \nsubseteq j$. Then we calculate:
    \[
      \begin{tabular}{lll}
        $\tau_{\jslLD{L}}(j_{v_0})$
        &
        $= \bigcup \{ S \in \LD{L} : j_{v_0} \nsubseteq S \}$
        \\ &
        $= \bigcup \{ j_v \in J(\jslLD{L}) : j_{v_0} \nsubseteq j_v \}$
        & ($j_{v_0}$ join-prime)
        \\ &
        $= \bigcup \{ j_v : dr_{\overline{L}}(\overline{v^{-1} L^r}) \nsubseteq dr_{\overline{L}}(\overline{v_0^{-1} L^r}) \}$
        \\ &
        $= \bigcup \{ j_v : v_0^r \in dr_{\overline{L}}(\overline{v^{-1} L^r}) \}$
        & (Corollary \ref{cor:mirr_in_lq})
        \\ &
        $= \bigcup \{ j \in J(\jslLD{L}) : v_0^r \nin j \}$
        \\ &
        $= dr_L(v_0^{-1} L^r)$
        & (by (4)).
      \end{tabular}
    \]
  \end{enumerate}
\end{proof}

Lemma \ref{lem:pos_left_preds_basics} provides a natural bijection $\LW{L^r} \cong J(\jslLD{L})$ akin to the quotient-atom bijection.

\begin{theorem}[Quotient-intersection bijection]
  Each regular $L$ has the canonical bijection,
  \[
    \lambda_L : \LW{L^r} \to J(\jslLD{L})
    \qquad
    \lambda_L (Y) := \overline{dr_{\overline{L}}(\overline{Y})}
    \qquad
    \lambda_L^{\bf-1} := \lambda_{L^r},
  \]
  and respective relationship:
  \[
    \lambda_L(x^{-1} L^r) \subseteq a^{-1} \lambda_L(y^{-1} L^r)
    \iff y^{-1} L^r \subseteq (xa)^{-1} L^r
    \qquad
    \text{for any $x, y \in \Sigma^*$.}
  \]
\end{theorem}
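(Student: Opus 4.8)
The plan is to obtain the bijection almost directly from Lemma \ref{lem:pos_left_preds_basics}, and then to reduce the transition-compatibility relation to elementary algebra of left/right word quotients.

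\textbf{Bijectivity and the inverse.} First I would note that $\lambda_L(Y):=\overline{dr_{\overline L}(\overline Y)}$ depends only on the set $Y$, so it is well defined on $\LW{L^r}$. For $Y=v^{-1}L^r$ we have $\overline Y=v^{-1}\overline{L^r}\in\LW{\overline{L^r}}$, which lies in the domain $\jslLQ{\overline{L^r}}$ of the $\JSL_f$-isomorphism $dr_{\overline L}$ from Theorem \ref{thm:dr_L} (using $(\overline L)^r=\overline{L^r}$); thus $\lambda_L$ is the composite of complementation, the bijection $dr_{\overline L}$, and complementation, hence injective, and Lemma \ref{lem:pos_left_preds_basics}.3 identifies its image as exactly $J(\jslLD{L})$, so $\lambda_L$ is a bijection. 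Its inverse factors as complement, then $dr_{\overline L}^{\bf -1}$, then complement; Theorem \ref{thm:dr_L} gives $dr_{\overline L}^{\bf -1}=dr_{(\overline L)^r}=dr_{\overline{L^r}}$, and by Lemma \ref{lem:pos_left_preds_basics}.2 every $S\in J(\jslLD{L})$ has $\overline S$ a union of left word quotients of $\overline L$, so $\overline S$ lies in the domain of $dr_{\overline{L^r}}$ and the inverse is literally the formula defining $\lambda_{L^r}$. This justifies writing $\lambda_L^{\bf -1}:=\lambda_{L^r}$, in exact parallel with $dr_L^{\bf -1}:=dr_{L^r}$.

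\textbf{The transition relation.} I would first record the pointwise description, obtained from Theorem \ref{thm:dr_L} (or Lemma \ref{lem:pos_left_preds_basics}.2):
\[
  \lambda_L(v^{-1}L^r)=\bigcap\{\,u^{-1}L:uv^r\in L\,\}=\{\,w\in\Sigma^*:L(v^r)^{-1}\subseteq Lw^{-1}\,\}.
\]
Hence $a^{-1}\lambda_L(y^{-1}L^r)=\{w:L(y^r)^{-1}\subseteq L(aw)^{-1}\}$, and unwinding reversal turns the right-hand side $y^{-1}L^r\subseteq(xa)^{-1}L^r$ into $L(y^r)^{-1}\subseteq L((xa)^r)^{-1}$. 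The key point is that $a$ is a single letter, so $a^r=a$ and $(xa)^r=ax^r$; combined with $L(pq)^{-1}=(Lq^{-1})p^{-1}$ this gives $L((xa)^r)^{-1}=(L(x^r)^{-1})a^{-1}$ and $L(aw)^{-1}=(Lw^{-1})a^{-1}$. So the claim becomes: $\big(\forall w.\ L(x^r)^{-1}\subseteq Lw^{-1}\Rightarrow L(y^r)^{-1}\subseteq(Lw^{-1})a^{-1}\big)$ iff $L(y^r)^{-1}\subseteq(L(x^r)^{-1})a^{-1}$. For $(\Rightarrow)$ I would instantiate $w:=x^r$; for $(\Leftarrow)$, any $w$ with $L(x^r)^{-1}\subseteq Lw^{-1}$ satisfies $(L(x^r)^{-1})a^{-1}\subseteq(Lw^{-1})a^{-1}$ by monotonicity of $(-)a^{-1}$, which gives the conclusion.

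I do not anticipate a genuine conceptual obstacle: the content is Lemma \ref{lem:pos_left_preds_basics} plus elementary quotient identities, and the relationship statement is just a reshuffling of right-quotient containments once the pointwise description of $\lambda_L$ is in hand. The one place to be careful is the bookkeeping of reversal against left and right word quotients, which everywhere pivots on the identity $a^r=a$ for a letter $a$.
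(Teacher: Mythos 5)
Your proof is correct and takes essentially the same route as the paper: the bijection is read off from Lemma \ref{lem:pos_left_preds_basics} (image identified by part 3, inverse via $dr_{\overline{L}}^{\bf-1} = dr_{\overline{L^r}}$), and the compatibility relation is the same elementary unwinding of word-quotient identities. The only cosmetic difference is that you describe $\lambda_L(v^{-1}L^r)$ via right word quotients as $\{w : L(v^r)^{-1} \subseteq Lw^{-1}\}$ where the paper works with the intersection $\bigcap\{X \in \LW{L} : v^r \in X\}$; your instantiation at $w := x^r$ together with monotonicity of $(-)a^{-1}$ is exactly the mechanism behind the paper's final equivalence.
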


\begin{proof}
  The bijection follows by Lemma \ref{lem:pos_left_preds_basics}.
  Concerning the relationship,
  \[
    \begin{tabular}{lll}
      $y^{-1} L^r \subseteq (xa)^{-1} L^r$
      &
      $\iff \forall v \in \Sigma^*. [ yv \in L^r \To xav \in L^r ]$
      \\ &
      $\iff \forall v \in \Sigma^*. [ v^ry^r \in L \To v^rax^r \in L ]$
      \\ &
      $\iff \forall v \in \Sigma^*. [ y^r \in [v^r]^{-1} L \To x^r \in [(va)^r]^{-1} L ]$
      \\ &
      $\iff \forall X \in \LW{L}. [ y^r \in X \To x^r \in a^{-1} X ]$
      \\ &
      $\iff \bigcap \{ X \in \LW{L} : x^r \in X \} \subseteq a^{-1} \bigcap \{ X \in \LW{L} : y^r \in X \}$.
    \end{tabular}
  \]
\end{proof}

\begin{note}[Canonicity of $\lambda_L$]
  It arises from the duality between $\Poset$-dfas and $\DL$-dfas, see \cite{CanonNA2014}.
  \endbox
\end{note}

Recall the nfa $\minSatDfa{L}$ from Example \ref{ex:saturated_min_dfa}. It arises from the state-minimal deterministic machine $\minDfa{L}$ by extending the initial states and transitions.

\begin{theorem}[Canonical distributive dependency automaton]
  \label{thm:canonical_dist_dep_aut}
  We have the $\AutDep$-isomorphism:
  \[
    \rD : (\rev{\minSatDfa{L^r}}, \subseteq, \minSatDfa{L^r}) \to \Airr(\jslDfaDistMin{L})
    \qquad
    \rD (v_1^{-1} L^r, dr_L(v_2^{-1} L^r)) :\iff v_1^{-1} L^r \subseteq v_2^{-1} L^r
  \]
  with inverse $\rE (S, v^{-1} L^r) :\iff \lambda_L^{\bf-1}(S) \subseteq v^{-1} L^r$.
\end{theorem}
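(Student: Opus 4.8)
We follow the template of Theorem~\ref{thm:canonical_bool_dep_aut}. The plan is to (a) unwind the target $\Airr(\jslDfaDistMin{L}) = (\rN, \Pirr\jslLD{L}, \rM)$ into a concrete dependency automaton whose two nfas are $\rev{\minSatDfa{L^r}}$ and $\minSatDfa{L^r}$ up to relabelling, (b) exhibit $\rD$ and $\rE$ as the $\Dep$-morphisms induced by a bipartite graph isomorphism between the carrier relations $\subseteq$ and $\Pirr\jslLD{L}$, and (c) check the three $\AutDep$-conditions of Definition~\ref{def:cat_aut_dep}.

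First I would unwind the target. By Definition~\ref{def:airr_and_det}.2 the carrier is $\Pirr\jslLD{L} = \;\nleq_{\jslLD{L}}|_{J(\jslLD{L}) \times M(\jslLD{L})}$, and by Lemma~\ref{lem:pos_left_preds_basics}.3 both $J(\jslLD{L})$ and $M(\jslLD{L})$ are in canonical bijection with $\LW{L^r}$: the lower one via the quotient-intersection bijection $\lambda_L$ (so $\lambda_L(v^{-1}L^r) = \overline{dr_{\overline{L}}(\overline{v^{-1}L^r})}$, which Lemma~\ref{lem:pos_left_preds_basics}.2 rewrites as an intersection of left word quotients), the upper one via $v^{-1}L^r \mapsto dr_L(v^{-1}L^r)$ (Theorem~\ref{thm:dr_L}). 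Transporting the lower nfa of $\Airr(\jslDfaDistMin{L})$ along $\lambda_L$, its transitions $(\Pirr(\lambda X. a^{-1}X))_-(j_1, j_2) :\iff j_2 \subseteq (\lambda X.a^{-1}X)(j_1)$ become $\rev{\minSatDfa{L^r}}$'s transitions by the defining relation of the quotient-intersection bijection ($\lambda_L(x^{-1}L^r) \subseteq a^{-1}\lambda_L(y^{-1}L^r) \iff y^{-1}L^r \subseteq (xa)^{-1}L^r$), and its initial/final states match those of $\rev{\minSatDfa{L^r}}$ after unwinding $J(\jslLD{L})\cap\down_{\jslLD{L}} L$ and $J(\jslLD{L})\cap\{K:\epsilon\in K\}$ via Lemma~\ref{lem:pos_left_preds_basics}. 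For the upper nfa one proceeds dually: since $\jslLD{L}$ is distributive, Lemma~\ref{lem:pos_left_preds_basics}.5 gives the order-isomorphism $\tau_{\jslLD{L}}\circ\lambda_L = (v^{-1}L^r\mapsto dr_L(v^{-1}L^r))$, and either a direct computation of $(\lambda X. a^{-1}X)_*$ on $M(\jslLD{L})$, or an appeal to the self-duality transfer $\hat{\lambda} : \Rev\circ\Airr^{op}\To\Airr\circ(-)^\pentagram$ of Theorem~\ref{thm:aut_dep_jsl_dfa_self_duality_transfer}.2 (noting that the upper nfa of $\Airr\delta$ is the lower nfa of $\Rev\Airr\delta$), identifies it with $\minSatDfa{L^r}$.

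Next I would establish that $\rD$ defines a $\Dep$-isomorphism $\subseteq\;\to\;\Pirr\jslLD{L}$, where $\subseteq$ denotes the carrier relation $\{(Z_1,Z_2)\in\LW{L^r}\times\LW{L^r} : Z_1\subseteq Z_2\}$ of the source. The claim is that the pair of bijections $\lambda_L$ (lower bipartition) and $dr_L$ (upper bipartition) is a bipartite graph isomorphism onto $\Pirr\jslLD{L}$, i.e.\ $Z_1\subseteq Z_2 \iff \lambda_L(Z_1)\nsubseteq dr_L(Z_2)$; this follows by combining Lemma~\ref{lem:pos_left_preds_basics}.4 (giving $\lambda_L(Z_1)\subseteq dr_L(v^{-1}L^r)\iff v^r\notin\lambda_L(Z_1)$ for $Z_2 = v^{-1}L^r$) with the intersection description of $\lambda_L(Z_1)$ from Lemma~\ref{lem:pos_left_preds_basics}.2--3 and the defining relation of the quotient-intersection bijection. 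By Example~\ref{ex:dep_morphisms}.2 this bipartite isomorphism induces a $\Dep$-isomorphism; one then checks, as in Note~\ref{note:open_morphism_alt} (any witness may be used), that composing with the carrier relations on source and target produces exactly the relation $\rD$, and that the converse-direction morphism is the relation $\rE$, with $\rD\fatsemi\rE = id$ and $\rE\fatsemi\rD = id$. (Alternatively, Proposition~\ref{prop:aut_dep_generator_isoms} applied to $\jslDfaDistMin{L}$ with a suitable choice of generators gives the $\AutDep$-isomorphism more directly.)

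Finally I would verify the three $\AutDep$-conditions for $\rD : (\rev{\minSatDfa{L^r}}, \subseteq, \minSatDfa{L^r}) \to (\rN, \Pirr\jslLD{L}, \rM)$: the transition compatibility $\rM_a^\dagger\fatsemi\rD = \rD\fatsemi\rN_a^\dagger$, which reduces—exactly as in the proofs of Theorems~\ref{thm:canonical_bool_dep_aut} and~\ref{thm:dr_L}—to a short combinatorial equivalence about $\subseteq$ and letter-quotients obtained from the quotient-intersection relationship; and the two conditions $\rD[I] = F_\rM$, $\breve{\rD}[I_\rM] = F$, matched by unwinding the initial/final-state descriptions found in the first step. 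The main obstacle is that first step, specifically pinning down the \emph{upper} nfa $\rM$ as $\minSatDfa{L^r}$: computing the adjoint transitions $(\lambda X. a^{-1}X)_*$ on the meet-irreducibles of $\jslLD{L}$ is the least routine part, and the cleanest route is to deduce it from the lower-nfa computation via $\tau_{\jslLD{L}}$ together with the self-duality $\Rev$ of $\AutDep$, which forces the upper nfa to be the reverse of the lower one.
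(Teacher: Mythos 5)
Your proposal matches the paper's proof in all essentials: the same pair of bijective witnesses $\lambda_L$ (lower) and $\tau_{\jslLD{L}}\circ\lambda_L = dr_L(-)$ (upper), the same appeal to Lemma~\ref{lem:pos_left_preds_basics} and Corollary~\ref{cor:mirr_in_lq} to get the key equivalence $v_1^{-1}L^r \subseteq v_2^{-1}L^r \iff \lambda_L(v_1^{-1}L^r) \nsubseteq dr_L(v_2^{-1}L^r)$, and the same three $\AutDep$-checks (transition compatibility plus the two initial/final-state conditions). One caution on your suggested ``cleanest route'' for the upper nfa: the paper does the direct adjoint computation of $(\lambda X. a^{-1}X)_*$ on $M(\jslLD{L})$, and the shortcut via $\hat{\lambda}$ and $\Rev$ would here be circular, since it presupposes a concrete description of $(\jslDfaDistMin{L})^\pentagram$ --- that is Corollary~\ref{cor:dual_jsl_dfa_of_left_dist_preds}, which is itself deduced from this theorem --- so you should use the direct computation you also list as an option.
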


\begin{proof}
  To see $\rD$'s domain is a well-defined dependency automaton, observe that its dual $(\minSatDfa{L^r}, \supseteq, \rev{\minSatDfa{L^r}})$ is well-defined by Theorem \ref{thm:free_jsl_dfa_ordered}. Next we establish the commuting relations:
  \[
    \xymatrix@=20pt{
      \LW{L^r} \ar[rr]^{\tau_{\jslLD{L} \circ \lambda_L}} && M(\jslLD{L})
      \\
      \LW{L^r} \ar[u]^{\subseteq} \ar[rr]_{\lambda_L} && J(\jslLD{L}) \ar[u]_{\nsubseteq}
    }
  \]
  via the following calculation:
  \[
    \begin{tabular}{lll}
      $\lambda_L(v_1^{-1} L^r) \nsubseteq dr_L(v_2^{-1} L^r)$
      &
      $\iff \overline{dr_{\overline{L}}(\overline{v_1^{-1} L^r})} \nsubseteq dr_L(v_2^{-1} L^r)$
      & (def.\ of $\lambda$)
      \\ &
      $\iff v_2^r \in \overline{dr_{\overline{L}}(\overline{v_1^{-1} L^r})}$
      & (by Lemma \ref{lem:pos_left_preds_basics}.4)
      \\ &
      $\iff v_2^r \nin dr_{\overline{L}}(\overline{v_1^{-1} L^r})$
      \\ &
      $\iff v_2^r \nin dr_{\overline{L}}(v_1^{-1} \overline{L^r})$
      \\ &
      $\iff dr_{\overline{L}}(\overline{v_1^{-1} L^r}) \subseteq dr_{\overline{L}}(v_2^{-1} \overline{L^r})$
      & (by Corollary \ref{cor:mirr_in_lq})
      \\ &
      $\iff v_2^{-1} \overline{L^r} \subseteq v_1^{-1} \overline{L^r}$
      & ($dr_L$ an order-iso)
      \\ &
      $\iff v_1^{-1} L^r \subseteq v_2^{-1} L^r$,
      \\ &
      $\iff v_1^{-1} L^r \subseteq \tau_{\jslLD{l}} \circ \lambda_L(v_2^{-1} L^r)$.
    \end{tabular}
  \]
  Since the witnesses are bijections we've established that $\rD$ underlying $\Dep$-morphism is an isomorphism. Concerning the remaining conditions, $\rD$'s domain has lower nfa $\rN := \rev{\minSatDfa{L^r}}$ with transitions $\rN_a(Y_1, Y_2) :\iff Y_1 \subseteq a^{-1} Y_2$. Furthermore $\Airr(\jslDfaDistMin{L})$'s upper nfa $\rM$ has transitions:
  \[
    \begin{tabular}{lll}
      $\rM_a(dr_L(v_1^{-1} L^r), dr_L(v_2^{-1} L^r))$
      \\
      $\iff (\gamma_a)_*( dr_L(v_1^{-1} L^r)) \subseteq dr_L(v_2^{-1} L^r)$
      \\
      $\iff \bigcup \{ j \in J(\jslLD{L}) : a^{-1} j \subseteq dr_L(v_1^{-1} L^r) \} \subseteq dr_L(v_2^{-1} L^r)$
      \\
      $\iff \bigcup \{ j \in J(\jslLD{L}) : v_1^r \nin a^{-1} j  \} \subseteq dr_L(v_2^{-1} L^r)$
      & (by Lemma \ref{lem:pos_left_preds_basics}.4)
      \\
      $\iff \bigcup \{ j \in J(\jslLD{L}) : av_1^r \nin j  \} \subseteq dr_L(v_2^{-1} L^r)$
      \\
      $\iff dr_L((v_1a)^{-1} L^r) \subseteq dr_L(v_2^{-1} L^r)$
      & (by Lemma \ref{lem:pos_left_preds_basics}.4)
      \\
      $\iff v_2^{-1} L^r \subseteq (v_1a)^{-1} L^r$
      & ($dr_L$ an order-iso)
    \end{tabular}
  \]
  where $\gamma_a := \lambda X. a^{-1} X : \jslLD{L} \to \jslLD{L}$. Then we verify:
  \[
    \begin{tabular}{lll}
      $\rN_a ; \rD ( v_1^{-1} L^r, dr_L(v_2^{-1} L^r))$
      &
      $\iff \exists v \in \Sigma^*. [ v_1^{-1} L^r \subseteq (va)^{-1} L^r \,\land\, \rD (v^{-1} L^r, dr_L(v_2^{-1} L^r) ]$
      \\ &
      $\iff \exists v \in \Sigma^*. [ v_1^{-1} L^r \subseteq (va)^{-1} L^r \,\land\, v^{-1} L^r \subseteq v_2^{-1} L^r ]$
      \\ &
      $\iff \exists v \in \Sigma^*. [ v_1^{-1} L^r \subseteq v^{-1} L^r \,\land\, v^{-1} L^r \subseteq (v_2 a)^{-1} L^r ]$
      & (A)
      \\ &
      $\iff \exists v \in \Sigma^*. [ \rD(v_1^{-1} L^r, dr_L(v^{-1} L^r)) \,\land\, \rM_a\spbreve(dr_L(v^{-1} L^r), dr_L( v_2^{-1} L^r)) ]$
      \\ &
      $\iff \rD ; \rM_a\spbreve ( v_1^{-1} L^r, dr_L(v_2^{-1} L^r))$.
    \end{tabular}
  \]
  Concerning (A), $(\To)$ follows because $a^{-1}(-)$ preserves inclusions so we can choose $v := v_1$; $(\oT)$ follows analogously, choosing $v := v_2$. Finally we verify:
  \[
    \begin{tabular}{lll}
      $\rD[I_\rN]$
      &
      $= \rD[\{ Y : \epsilon \in Y \in \LW{L^r} \}]$
      \\ &
      $= \{ dr_L(Y) : \epsilon \in Y \in \LW{L^r} \}$
      \\ &
      $= \{ dr_L(v^{-1} L^r) : v \in L^r \}$
      \\ &
      $= \{ dr_L(v^{-1} L^r) : L \nsubseteq dr_L(v^{-1} L^r) \}$
      & (by Corollary \ref{cor:mirr_in_lq})
      \\ &
      $= F_\rM$
      & (see Definition \ref{def:airr_and_det}).
      \\
      \\
      $\breve{\rD}[I_\rM]$
      &
      $= \breve{\rD}[\{ dr_L(v^{-1} L^r) : dr_L(L^r) \subseteq dr_L(v^{-1} L^r) \}]$
      & (see Definition \ref{def:airr_and_det})
      \\ &
      $= \breve{\rD}[\{ dr_L(v^{-1} L^r) : v^{-1} L^r \subseteq L^r \}]$
      \\ &
      $= \{ v_1^{-1} L^r : \exists v \in \Sigma^*. [ v_1^{-1} L^r \subseteq v^{-1} L^r \,\land\, v^{-1} L^r \subseteq L^r  ] \}$
      \\ &
      $= \{ v^{-1} L^r : v^{-1} L^r \subseteq L^r \}$
      \\ &
      $= F_\rN$.
    \end{tabular}
  \]
\end{proof}

\begin{corollary}[Dualising $\jslDfaDistMin{L}$]
  \label{cor:dual_jsl_dfa_of_left_dist_preds}
  We have the $\JSL$-dfa isomorphism,
  \[
    \begin{tabular}{c}
      $\rho_L : \Det(\minSatDfa{L^r}, \supseteq ,\rev{\minSatDfa{L^r}}) \to (\jslDfaDistMin{L})^\pentagram$
      \\[1.5ex]
      $\rho_L := \lambda S. \bigcap \{ dr_L(Y) : Y \in S \}$
      \qquad
      $\rho_L^{-1} := \lambda K. \{ Y \in \LW{L^r} : K \subseteq dr_L(Y) \}$.
    \end{tabular}
  \]
\end{corollary}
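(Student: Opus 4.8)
The plan is to build $\rho_L$ as a composite of three isomorphisms, exactly parallel to the proof of Corollary~\ref{cor:dual_jsl_dfa_of_left_preds}. Write $\fN := (\rev{\minSatDfa{L^r}}, \subseteq, \minSatDfa{L^r})$, so that $\Rev\fN = (\minSatDfa{L^r}, \supseteq, \rev{\minSatDfa{L^r}})$ is the dependency automaton whose determinisation is the domain of $\rho_L$; it is well-defined by Theorem~\ref{thm:free_jsl_dfa_ordered} (cf.\ the start of the proof of Theorem~\ref{thm:canonical_dist_dep_aut}). Let $\rE := \rD^{\bf-1} : \Airr(\jslDfaDistMin{L}) \to \fN$ be the inverse of the $\AutDep$-isomorphism from Theorem~\ref{thm:canonical_dist_dep_aut}. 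Then I form
\[
  \Det(\Rev\fN)
  \xto{\hat{\partial}_{\fN}^{\bf-1}} (\Det\fN)^\pentagram
  \xto{(\Det\rE)^\pentagram} (\Det\Airr(\jslDfaDistMin{L}))^\pentagram
  \xto{\rep_{\jslDfaDistMin{L}}^\pentagram} (\jslDfaDistMin{L})^\pentagram,
\]
using $\hat{\partial}$ from Theorem~\ref{thm:aut_dep_jsl_dfa_self_duality_transfer}.1, the functoriality of $\Det$ and of $(-)^\pentagram$, and $\rep$ from Theorem~\ref{thm:aut_dep_equiv_jsl_dfa}. Each arrow is a $\dfa{\JSL}$-isomorphism, hence so is the composite $\rho_L$; in particular it automatically preserves the initial state, the transitions and the final states, so nothing has to be checked separately for the $\JSL$-dfa structure.

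It then remains to read off the action by tracing a state $S$ through the composite. The carrier of $\Det(\Rev\fN)$ is $\Open(\supseteq)$, i.e.\ the $\subseteq$-down-closed subsets $S$ of $\LW{L^r}$. Since $\partial_\rG^{\bf-1} = \lambda Y.\rG[\overline{Y}]$ and $\overline{S}$ is $\subseteq$-up-closed, $\hat{\partial}_{\fN}^{\bf-1}$ sends $S$ to $\overline{S}$, now viewed in $(\Open(\subseteq))^\pOp$. Because $\rE$ is an isomorphism, $\Det\rE$ is the $\JSL_f$-isomorphism $\Open\rE$ and its dual $(\Det\rE)^\pentagram = (\Open\rE)_*$ acts as the inverse $\Open\rD$; by Note~\ref{note:open_morphism_alt} $\Open\rD$ may be computed from any upper witness of $\rD$, and the bipartite-graph isomorphism in the proof of Theorem~\ref{thm:canonical_dist_dep_aut} supplies the bijective upper witness $\tau_{\jslLD{L}}\circ\lambda_L$, which by Lemma~\ref{lem:pos_left_preds_basics}.5 is exactly $dr_L$ restricted to $\LW{L^r}$; hence $(\Det\rE)^\pentagram(\overline{S}) = dr_L[\overline{S}]$. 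Finally $\rep_{\jslDfaDistMin{L}}^\pentagram = (\rep_{\jslLD{L}})_* = \rep_{\jslLD{L}}^{\bf-1} = \lambda Z.\bigcap(M(\jslLD{L})\setminus Z)$, and by Lemma~\ref{lem:pos_left_preds_basics}.3 we have $M(\jslLD{L}) = dr_L[\LW{L^r}]$, so $M(\jslLD{L})\setminus dr_L[\overline{S}] = dr_L[S]$. Combining, $\rho_L(S) = \bigcap dr_L[S] = \bigcap\{dr_L(Y):Y\in S\}$; running the same chain backwards (using that $\{Y : K\subseteq dr_L(Y)\}$ is $\subseteq$-down-closed, so equal to its own $\supseteq$-image) yields $\rho_L^{\bf-1}(K) = \{Y\in\LW{L^r} : K\subseteq dr_L(Y)\}$.

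The only real work is the bookkeeping in the second paragraph: keeping straight which of $\subseteq,\supseteq$ produces the open (down- versus up-closed) sets at each stage, and tracking the two relative complements (through $\hat{\partial}$ and through $\rep^{\bf-1}$) together with the content-bearing identifications $\tau_{\jslLD{L}}\circ\lambda_L = dr_L$ and $M(\jslLD{L}) = dr_L[\LW{L^r}]$, both furnished by Lemma~\ref{lem:pos_left_preds_basics}. If one prefers to avoid computing the upper witness of $\rD$, an equivalent route is to check directly that the two displayed formulas are mutually inverse monotone maps of the correct types and that $\lambda S.\bigcap\{dr_L(Y):Y\in S\}$ agrees with the abstract composite $\rho_L$ on the join-generators $\down_\subseteq Y$ of $\Open(\supseteq)$; on such a generator both give $dr_L(Y)$, so equality on all of $\Open(\supseteq)$ follows by join-preservation.
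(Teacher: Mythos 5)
Your proof is correct and follows essentially the same route as the paper's: the same composite $\Det\fM \xto{\hat{\partial}^{\bf-1}} (\Det\Rev\fM)^\pentagram \xto{(\Det\rE)^\pentagram} (\Det\Airr(\jslDfaDistMin{L}))^\pentagram \xto{\rep^\pentagram} (\jslDfaDistMin{L})^\pentagram$, with the action read off via the upper witness $\tau_{\jslLD{L}}\circ\lambda_L = dr_L$ and the identification $M(\jslLD{L}) = dr_L[\LW{L^r}]$ from Lemma~\ref{lem:pos_left_preds_basics}. Your bookkeeping of which sets are $\subseteq$-down-closed at each stage is, if anything, slightly more careful than the paper's, and the closing alternative verification on join-generators is a harmless bonus.
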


\begin{proof}
  First recall the isomorphism from Theorem \ref{thm:canonical_dist_dep_aut},
  \[
    \rE : \Airr(\jslDfaDistMin{L}) \to \Rev\fM
    \qquad
    \text{where }
    \fM := (\minSatDfa{L^r}, \supseteq, \rev{\minSatDfa{L^r}}).
  \]
  Since $\rE$ has bijective upper witness $(\tau_{\jslLD{L}} \circ \lambda_L)^{\bf-1}$ and join-semilattice adjoints act as the inverse, it follows that $(\Det \rE)^\pentagram = \lambda X. \tau_{\jslLD{L}} \circ \lambda_L [X]$. Further recall the natural isomorphism $\hat{\partial}_{-}$ (Theorem \ref{thm:aut_dep_jsl_dfa_self_duality_transfer}) and $\rep_{-}$ (Theorem \ref{thm:aut_dep_equiv_jsl_dfa}). Then we have the composite join-semilattice isomorphism:
  \[
    \Det \fM
    \xto{\hat{\partial}_{\fM}^{\bf-1}} (\Det\Rev\fM)^\pentagram
    \xto{(\Det \rE)^\pentagram} (\Det\Airr(\jslDfaDistMin{L}))^\pentagram
    \xto{\rep_{\jslDfaDistMin{L}}^\pentagram} (\jslDfaDistMin{L})^\pentagram.
  \]
  Given any subset $S \subseteq \LW{L^r}$ upwards-closed w.r.t.\ inclusion,
  \[
    \begin{tabular}{llll}
      $S$
      & $\mapsto$ & $\hat{\partial}_{\fM}^{\bf-1}(S)$
      \\
      & $=$ & $\supseteq[\overline{S}]$
      \\
      & $=$ & $\overline{S}$
      & ($\overline{S}$ down-closed)
      \\
      & $\mapsto$ & $(\Det \rE)^\pentagram(\overline{S})$
      \\
      & $=$ & $\tau_{\jslLD{L}} \circ \lambda_L [\overline{S}]$
      \\
      & $=$ & $\{ dr_L(v^{-1} L^r) : v^{-1} L^r \nin S \}$
      \\
      & $\mapsto$ & $\rep_{\jslDfaDistMin{L}}^\pentagram (\{ dr_L(v^{-1} L^r) : v^{-1} L^r \nin S \})$
      \\
      & $=$ & $\bigcap \{ dr_L(v^{-1} L^r) : v^{-1} L^r \in S \}$.
    \end{tabular}
  \]
  Finally the action of the inverse follows by the bijectivity of $\rho_L$.
\end{proof}

\subsection{Minimal boolean syntactic machine}

We start by recalling the syntactic monoid of a regular language and the transition monoid of a classical dfa.

\begin{definition}[Transition monoids and syntactic monoids]
  \label{def:trans_mon_syn_mon}
  \item
  \begin{enumerate}
    \item
    Given any set $\Sigma$ we have the \emph{free $\Sigma$-generated monoid} $\FMon{\Sigma} := (\Sigma^*, \cdot, \epsilon)$ where multiplication is concatenation.

    \item
    Given a dfa $\delta = (z_0, Z, \delta_a, F)$, its \emph{transition monoid} is defined $\TM{\delta} := (\{ \delta_w : w \in \Sigma^* \}, \circ, id_Z)$ where $\circ$ is functional composition and $\delta_\epsilon = id_Z$ (see Definition \ref{def:dfas}).
    It admits a natural dfa structure accepting $L$:
    \[
      \TMDfa{\delta} := (
        id_Z,
        \{ \delta_w : w \in \Sigma^* \},
        \lambda f. \delta_a \circ f,
        \{ f : f(z_0) \in F \}
      ).
    \]
    Finally we have $\sem{-}_{\TM{\delta}} : \FMon{\Sigma} \epito \TM{\delta}$ where $\sem{w}_{\TM{\delta}} := \delta_w$.

    \item
    The \emph{syntactic monoid} of a regular language $L \subseteq \Sigma^*$ is the quotient $\SynMon{L} := \FMon{\Sigma} / \rS_L$ by the \emph{syntactic congruence}
    $
      \rS_L
      := \{ (u, v) \in \Sigma^* \times \Sigma^* : \forall x, y \in \Sigma^*.[ xuy \in L \iff xvy \in L ]  \}
    $. It admits a natural dfa structure accepting $L$:
    \[
      \SynMonDfa{L} := (
        \sem{\epsilon}_{\SynCong{{L}}}, \Sigma^* / \SynCong{L}, \lambda x. x \cdot \sem{a}_{\SynCong{L}}, \{ \sem{w}_{\SynCong{L}} : w \in L \}
      ).
    \]
    We also denote the underlying set by $Syn(L) := \Sigma^* \setminus \SynCong{L}$.
    \endbox
  \end{enumerate}
\end{definition}

\begin{lemma}[The syntactic/transition monoid are well-defined]
  \label{lem:syn_mon_well_def}
  \item
  \begin{enumerate}
    \item 
    $\TM{\delta}$ is a well-defined finite monoid and $L(\TMDfa{\delta}) = L(\delta)$.
    \item 
    $\SynMon{L}$ is a well-defined monoid and $L(\SynMonDfa{L}) = L$.
  \end{enumerate}
\end{lemma}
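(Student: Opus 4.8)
The plan is to treat the two parts separately, since the transition monoid of a dfa and the syntactic monoid of a language are both classical constructions whose well-definedness reduces to standard, if tedious, checks.

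\textbf{Part (1): $\TM{\delta}$ is a well-defined finite monoid with $L(\TMDfa{\delta}) = L(\delta)$.} First I would observe that $\TM{\delta} = (\{ \delta_w : w \in \Sigma^* \}, \circ, id_Z)$ is a submonoid of the (finite) monoid of all endofunctions $Z \to Z$: it contains $id_Z = \delta_\epsilon$, and it is closed under composition since $\delta_u \circ \delta_v = \delta_{vu}$, which follows by induction on $v$ from the inductive clause $\delta_{wa} = \delta_a \circ \delta_w$ in Definition \ref{def:dfas}. Finiteness is immediate because $Z^Z$ is finite. For the dfa structure $\TMDfa{\delta}$ I would check that $\lambda f. \delta_a \circ f$ maps $\TM{\delta}$ into itself (again using closure under $\circ$), that the initial state $id_Z$ lies in the carrier, and that $\{ f : f(z_0) \in F \}$ is a legitimate subset. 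For language equality, the key computation is that the $w$-transition of $\TMDfa{\delta}$ applied to $id_Z$ equals $\delta_w$; then $w \in L(\TMDfa{\delta})$ iff $\delta_w(z_0) \in F$ iff $w \in L(\delta)$. Finally, $\sem{-}_{\TM{\delta}} : \FMon{\Sigma} \to \TM{\delta}$, $w \mapsto \delta_w$, is a surjective monoid morphism by the same identity $\delta_u \circ \delta_v = \delta_{vu}$ together with $\delta_\epsilon = id_Z$.

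\textbf{Part (2): $\SynMon{L}$ is a well-defined monoid with $L(\SynMonDfa{L}) = L$.} Here the main point is that $\SynCong{L}$ is a two-sided monoid congruence on $\FMon{\Sigma}$. I would verify it is an equivalence relation directly from its definition via the biconditional $xuy \in L \iff xvy \in L$, and that it is compatible with concatenation on both sides: if $u \mathrel{\SynCong{L}} v$ then for any $p, q$ we get $pu q \mathrel{\SynCong{L}} pv q$, since $x(puq)y \in L \iff (xp)u(qy) \in L \iff (xp)v(qy) \in L \iff x(pvq)y \in L$. Hence the quotient $\Sigma^*/\SynCong{L}$ is a monoid. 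For the dfa structure, the transitions $\lambda x. x \cdot \sem{a}_{\SynCong{L}}$ are well-defined on congruence classes by right-compatibility, and the final-state set $\{ \sem{w}_{\SynCong{L}} : w \in L \}$ is well-defined because $\SynCong{L}$ saturates $L$ (take $x = y = \epsilon$ in the definition). Language equality then follows because the $w$-transition sends $\sem{\epsilon}_{\SynCong{L}}$ to $\sem{w}_{\SynCong{L}}$, so $w$ is accepted iff $\sem{w}_{\SynCong{L}}$ is final iff $w \in L$.

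The only genuinely delicate step is verifying that $\SynCong{L}$ is a congruence — specifically the two-sided compatibility — but this is a routine quantifier manipulation (absorb the multiplying word into the existing universal context). Everything else is bookkeeping: closure of the carrier under the relevant operations, saturation of $L$, and the transition-evaluation identity $\delta_w(z_0) \in F \iff w \in L$. I do not anticipate any conceptual obstacle; the proof is essentially a reproduction of the standard facts about transition and syntactic monoids, and I would present it compactly, citing only associativity of functional composition (resp. concatenation) and the inductive definitions of $\delta_w$ and $\rN_w$ already established.
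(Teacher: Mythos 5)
Your proposal is correct and follows essentially the same route as the paper: part (1) is the submonoid-of-endofunctions argument plus the evaluation $w \in L(\TMDfa{\delta}) \iff \delta_w(z_0) \in F$, and part (2) verifies that $\SynCong{L}$ is a monoid congruence (your two-sided-translation formulation is equivalent to the paper's check that $u_i \sim v_i$ implies $u_1u_2 \sim v_1v_2$) and that it saturates $L$ via $x = y = \epsilon$. The one point you omit is finiteness of $\Sigma^*/\SynCong{L}$: although the statement of part (2) only asserts "well-defined monoid", the claim $L(\SynMonDfa{L}) = L$ presupposes that $\SynMonDfa{L}$ is a dfa, and dfas in this paper have finite state sets by definition. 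The paper closes this by observing that the $\SynCong{L}$-classes are exactly the atoms of the set-theoretic boolean algebra generated by the finitely many left-right quotients $x^{-1}Ly^{-1}$ of the regular language $L$; you should add this (or an equivalent argument, e.g.\ embedding $\SynMon{L}$ into $\TM{\minDfa{L}}$) to complete the proof.
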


\begin{proof}
  \item
  \begin{enumerate}
    \item 
    Fix a dfa $\delta = (z_0, Z, \delta_a, F)$. The set of all endofunctions on a set equipped with functional composition define a finite monoid; $\TM{\delta}$ defines a submonoid. Finally:
    \[
        w \in L(\TMDfa{\delta})
        \iff \delta_w \in F_{\TMDfa{\delta}}
        \iff \delta_w(z_0) \in F
        \iff w \in L(\delta).
    \]
    
    \item
    To see $\rS_L \subseteq \Sigma^* \times \Sigma^*$ is a congruence for $(\Sigma^*, \cdot, \epsilon)$, given $\rS_L(u_1, u_2)$ and $\rS_L(v_1, v_2)$,
    \[
      x (u_1 v_1) y \in L
      \iff x (u_1) v_1 y \in L
      \iff x (u_2) v_1 y \in L
      \iff x u_2 (v_1) y \in L
      \iff x u_2 v_2 y \in L.
    \]
    Thus $\SynMon{L}$ is a well-defined monoid. It is finite because the equivalence classes are precisesly the atoms of the set-theoretic boolean algebra generated by the finite set $\{ x^{-1} L y^{-1} : x, \, y \in \Sigma^* \}$. Finally:
    \[
      w \in L(\SynMonDfa{L})
      \iff \sem{w} \in F_{\SynMonDfa{L}}
      \iff w \in L.
    \]
    The final equivalence follows because if $w \in L$ then $\sem{w}_{\rS_L} \subseteq L$. Indeed if $u \in \sem{w}_{\rS_L}$ then choosing $x = y = \epsilon$ we have $xwy \in L \iff xuy \in L$ i.e.\ $u \in L$.

  \end{enumerate}
\end{proof}

As is well-known, $L$'s syntactic monoid is isomorphic to the transition monoid of $L$'s state-minimal dfa.

\begin{theorem}[$\SynMon{L} \cong \TM{\minDfa{L}}$]
  \label{thm:tm_min_dfa_iso_syn}
  We have the monoid isomorphism:
  \[
    \lambda \sem{w}_{\rS_L} . \lambda X. w^{-1} X: \SynMon{L} \to \TM{\minDfa{L}}.
  \]
\end{theorem}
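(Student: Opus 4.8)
The plan is to recognise the claimed map as the isomorphism produced by the first isomorphism theorem for monoids applied to the canonical surjection onto $\TM{\minDfa{L}}$. By Definition \ref{def:trans_mon_syn_mon} there is a surjective monoid morphism $\sem{-}_{\TM{\minDfa{L}}} \colon \FMon{\Sigma} \epito \TM{\minDfa{L}}$ sending $w$ to the endofunction $(\minDfa{L})_w$ on the state set $\LW{L}$, and a one-line induction (using the identity $w^{-1}(x^{-1}L) = (xw)^{-1}L$ underlying well-definedness of $\minDfa{L}$ in Definition \ref{def:dfas}) shows $(\minDfa{L})_w$ has action $X \mapsto w^{-1}X$. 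Writing $\theta := \{ (u,v) \in \Sigma^* \times \Sigma^* : (\minDfa{L})_u = (\minDfa{L})_v \}$ for the kernel congruence of $\sem{-}_{\TM{\minDfa{L}}}$, the first isomorphism theorem yields a monoid isomorphism $\FMon{\Sigma}/\theta \cong \TM{\minDfa{L}}$ with $\sem{w}_\theta \mapsto (\minDfa{L})_w = \lambda X. w^{-1}X$. Hence everything reduces to identifying $\theta$ with the syntactic congruence $\rS_L$: once that is done, the displayed map is literally the induced isomorphism.

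The heart of the argument is therefore the congruence equality $\theta = \rS_L$, which is a short computation. Since $\LW{L} = \{ x^{-1}L : x \in \Sigma^* \}$, for any $u,v \in \Sigma^*$ we have $(\minDfa{L})_u = (\minDfa{L})_v$ iff $(xu)^{-1}L = (xv)^{-1}L$ for every $x \in \Sigma^*$, i.e.\ iff for all $x,y \in \Sigma^*$ one has $xuy \in L \iff xvy \in L$; and this last condition is exactly $(u,v) \in \rS_L$ by Definition \ref{def:trans_mon_syn_mon}. Reading the chain of equivalences in the two directions gives both inclusions, so $\theta = \rS_L$ as required.

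The only subtlety — and it is bookkeeping rather than an obstacle — is the composition order: from $\delta_{ua} = \delta_a \circ \delta_u$ one gets $(\minDfa{L})_{uv}(X) = (uv)^{-1}X = v^{-1}(u^{-1}X)$, so the multiplication of $\TM{\minDfa{L}}$ must be taken in the diagrammatic order consistent with the convention fixed for $\rN_w$ in Definition \ref{def:nfa_basics}, under which $\sem{-}_{\TM{\minDfa{L}}}$ is genuinely a monoid morphism (as already asserted in Lemma \ref{lem:syn_mon_well_def}). If one prefers to avoid invoking the first isomorphism theorem, the same two computations give a direct proof: the forward implication of $\theta = \rS_L$ shows $\lambda \sem{w}_{\rS_L}. \lambda X. w^{-1}X$ is well defined, the composition identity just noted shows it is a monoid morphism, it is surjective by the definition of $\TM{\minDfa{L}}$ as $\{ (\minDfa{L})_w : w \in \Sigma^* \}$, and the backward implication of $\theta = \rS_L$ shows it is injective.
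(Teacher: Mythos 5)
Your proposal is correct and follows essentially the same route as the paper: the identification of the kernel congruence of $w \mapsto \lambda X.\,w^{-1}X$ with $\rS_L$ via the chain $(xu)^{-1}L = (xv)^{-1}L$ for all $x$ iff $xuy \in L \iff xvy \in L$ for all $x,y$ is exactly the paper's well-definedness-plus-injectivity computation, and surjectivity and the morphism property (including the composition-order point $(uv)^{-1}X = v^{-1}(u^{-1}X)$) are handled identically. The first-isomorphism-theorem packaging is a cosmetic difference only.
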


\begin{proof}
  The function is well-defined and injective because:
  \[
    \begin{tabular}{lll}
      $\sem{u_1}_{\rS_L} = \sem{u_2}_{\rS_L}$
      &
      $\iff \forall x, y \in \Sigma^*.[ xu_1y \in L \iff xu_2y \in L ]$
      \\ &
      $\iff \forall X \in \LW{L}, y \in \Sigma^*.[ u_1y \in X \iff u_2y \in X ]$
      \\ &
      $\iff \forall X \in \LW{L}, y \in \Sigma^*.[ y \in u_1^{-1} X \iff y \in u_2^{-1} X ]$
      \\ &
      $\iff \forall X \in \LW{L}.[ u_1^{-1} X = u_2^{-1} X ]$
      \\ &
      $\iff \lambda X \in \LW{L}. u_1^{-1} X = \lambda X \in \LW{L}. u_2^{-1} X$.
    \end{tabular}
  \]
  It is surjective because $\minDfa{L}$'s transition monoid consists of the functions $\{ \lambda X \in \LW{L}. w^{-1} X : w \in \Sigma^* \}$. Finally it is a monoid morphism because $\lambda X. \epsilon^{-1} X = id_{\LW{L}}$ and  $(u v)^{-1} X = v^{-1}(u^{-1} X)$.
\end{proof}

We can now introduce another canonical $\JSL$-dfa and its equivalent dependency automaton.

\begin{definition}[\emph{$L$'s minimal boolean syntactic $\JSL$-dfa}]
  \label{def:canon_bool_syn_jsl_dfa}
  Let $\LRW{L} := \{ u^{-1} L v^{-1} : u,\, v \in \Sigma^* \}$ be the left-right-word-quotients and $\LRP{L}$ the closure of $\LRW{L}$ under the set-theoretic boolean operations. Then:
  \[
    \jslDfaSynBoolMin{L} := (L, \jslLRP{L}, \lambda X. a^{-1} X, \{ K : \epsilon \in K \})
  \]
  is the \emph{canonical boolean syntactic $\JSL$-dfa} over the join-semilattice $\jslLRP{L} := (\LRP{L}, \cup, \emptyset)$.
  \endbox
\end{definition}

\begin{lemma}[$J(\jslLRP{L}) = Syn(L)$]
  $\jslLRP{L}$'s atoms are the equivalence classes of the syntactic congruence $\SynCong{L}$.
\end{lemma}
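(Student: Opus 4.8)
The claim is that the join-irreducibles of the semilattice reduct $\jslLRP{L}=(\LRP{L},\cup,\emptyset)$ of the boolean algebra $\LRP{L}$ are exactly the classes $\sem{w}_{\SynCong{L}}$. The plan is to identify $J(\jslLRP{L})$ with the \emph{atoms} of this (finite) boolean algebra and then compute the atoms by hand, where they collapse onto the definition of $\SynCong{L}$. First I would record that $\LRP{L}$ is a finite boolean subalgebra of $(\Pow\Sigma^*,\cup,\cap,\overline{(-)},\emptyset,\Sigma^*)$: it is generated by $\LRW{L}$, and $\LRW{L}$ is a finite set of languages because $L$ is regular --- each $u^{-1}Lv^{-1}=(u^{-1}L)v^{-1}$ is a right quotient of one of the finitely many left quotients $u^{-1}L$, and each regular language has only finitely many right quotients; this is precisely the finiteness argument from the proof of Lemma \ref{lem:syn_mon_well_def}. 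In a finite boolean algebra the join-irreducibles coincide with the atoms (a standard fact, and an instance of Note \ref{note:irreducibles}.2 after representing $\LRP{L}$ as the powerset of its atoms), so it suffices to show the atoms of $\LRP{L}$ are exactly the syntactic classes.

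For each $w\in\Sigma^*$ put $[w] := \bigcap\{u^{-1}Lv^{-1} : w\in u^{-1}Lv^{-1}\}\ \cap\ \bigcap\{\overline{u^{-1}Lv^{-1}} : w\notin u^{-1}Lv^{-1}\}$; since $\LRW{L}$ is finite this is a finite boolean combination of generators, hence lies in $\LRP{L}$, and it does not matter whether we index the intersection over all pairs $(u,v)\in\Sigma^*\times\Sigma^*$ or over the finitely many distinct members of $\LRW{L}$. Unwinding $w\in u^{-1}Lv^{-1}\iff uwv\in L$ gives $[w]=\{x\in\Sigma^* : \forall u,v\in\Sigma^*.\,(uxv\in L\iff uwv\in L)\}$, which is exactly $\sem{w}_{\SynCong{L}}$ by Definition \ref{def:trans_mon_syn_mon}. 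Then I would check the two routine facts: (i) $[w]$ is an atom --- any nonempty $B\in\LRP{L}$ with $B\subseteq[w]$ is a union of atoms below it, while every $x\in B$ satisfies $(x,w)\in\SynCong{L}$, hence $[x]=[w]$, forcing $B=[w]$; and (ii) every atom $\alpha$ equals some $[w]$ --- pick $w\in\alpha$, then $w\in\alpha\cap[w]$ with both atoms, so $\alpha=[w]$. Hence $J(\jslLRP{L})=\{[w] : w\in\Sigma^*\}=\{\sem{w}_{\SynCong{L}} : w\in\Sigma^*\}=Syn(L)$.

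The only genuinely delicate point is justifying that the generating family is finite, so that $\LRP{L}$ is a finite boolean algebra and the ``atoms = realised sign-vectors'' description is legitimate; this reuses the finiteness already established for $\SynMon{L}$ in Lemma \ref{lem:syn_mon_well_def}. Everything after that is the standard computation of atoms of a generated boolean algebra, which here literally reproduces the defining two-sided-context condition of the syntactic congruence, so I do not anticipate any further obstacle.
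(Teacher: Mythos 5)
Your proof is correct and follows essentially the same route as the paper's (one-line) proof, which likewise identifies each syntactic class with the full intersection $\bigcap_i u_i^{-1}Lv_i^{-1}\cap\bigcap_j\overline{u_j^{-1}Lv_j^{-1}}$ over all left-right-word-quotients, i.e.\ with a realised sign-vector atom of the generated finite boolean algebra. You merely fill in the details the paper leaves implicit (finiteness of $\LRW{L}$, atoms $=$ join-irreducibles, minimality), all of which check out.
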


\begin{proof}
  An equivalence class amounts to $\bigcap_i u_i^{-1} L v_i^{-1} \cap \bigcap_j \overline{u_j^{-1} L v_j^{-1}}$ involving every left-right-word-quotient $u^{-1} L v^{-1}$.
\end{proof}

Next we describe the minimal boolean syntactic $\JSL$-dfa as a dependency automaton.

\begin{theorem}[Canonical boolean syntactic dependency automaton]
  \label{thm:canon_bool_syn_dep_aut}
  We have the $\AutDep$-isomorphism:
  \[
    \lambda X. \overline{X^r} : \dep{\rev{\SynMonDfa{L^r}}} \to \Airr(\jslDfaSynBoolMin{L}),
  \]
  whose inverse has action $\lambda X. X^r$.
\end{theorem}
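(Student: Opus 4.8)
The statement is the exact analogue of Theorem~\ref{thm:canonical_bool_dep_aut} (the canonical boolean dependency automaton) but with the roles of left word quotients $\LW{-}$ replaced by two-sided word quotients $\LRW{-}$, atoms of $\jslLP{L}$ replaced by atoms of $\jslLRP{L}$, and $\minDfa{L^r}$ replaced by $\SynMonDfa{L^r}$. So the first thing I would do is set up the bookkeeping exactly as in that proof: unwind $\Airr(\jslDfaSynBoolMin{L})$ as a triple $(\rN, \neg_{\jslLRP{L}}, \rM)$, where $\rN$ sits over $J(\jslLRP{L}) = Syn(L)$ (by the preceding Lemma) and $\rM$ over $M(\jslLRP{L})$, with final states and transitions described by adjoints of $\gamma_a = \lambda X. a^{-1}X$. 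The map $\lambda X. \overline{X^r}$ should then be exhibited as $\neg_{\jslLRP{L}} \circ \kappa$ for a suitable bijection $\kappa : \LW{L^r} \to J(\jslLRP{L})$ — here playing the role that $\kappa_L$ played before — whose action is $v^{-1} L^r \mapsto \sem{v^r}_{\rS_L}$, i.e.\ sending a left word quotient of $L^r$ to the syntactic class of its reverse. The well-definedness and bijectivity of this $\kappa$ is essentially the content of the preceding Lemma $J(\jslLRP{L}) = Syn(L)$ together with the elementary observation that $u^{-1}L^r = v^{-1}L^r \iff \sem{u^r}_{\rS_L} = \sem{v^r}_{\rS_L}$ — but note this last equivalence is \emph{false} in general (syntactic equivalence is strictly finer than Myhill--Nerode), so I will need to be careful: the correct bijection must be $\LRW{L^r} \cong \LW{L^r}$-\emph{indexed} only through the syntactic structure of $L^r$, and in fact the right statement is that $\LW{(L^r\text{-syntactic})} $ classes biject with $Syn(L)$. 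I would therefore restate $\kappa$ via the syntactic monoid of $L^r$ directly: $\kappa$ identifies syntactic classes of $L^r$ with syntactic classes of $L$ via reversal, which is a clean monoid anti-isomorphism $\SynMon{L^r} \cong \SynMon{L}^{op}$, and this is the honest source of the bijection.

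**Key steps, in order.** (1) Recall $\SynMon{L^r} \cong \SynMon{L}^{op}$ via $\sem{w}_{\rS_{L^r}} \mapsto \sem{w^r}_{\rS_L}$; this gives the bijection $Syn(L^r) \to Syn(L)$ underlying $\kappa$, and identifies the states of $\rev{\SynMonDfa{L^r}}$ with atoms of $\jslLRP{L}$. (2) Compute $\Airr(\jslDfaSynBoolMin{L}) = (\rN, \neg_{\jslLRP{L}}, \rM)$ explicitly: $\rN$'s transitions are $\rN_a(\sem{x_1}, \sem{x_2}) \iff \sem{x_2} \subseteq a^{-1}\sem{x_1}$, reading off via Lemma~\ref{lem:left_preds_basics}.2 (which holds verbatim for $\jslLRP{L}$ since its atoms partition $\Sigma^*$), and $\rM$'s transitions via the adjoint calculation $\rM_a(\overline{\sem{x_1}}, \overline{\sem{x_2}}) \iff \rN_a(\sem{x_2}, \sem{x_1})$, exactly as in the proof of Theorem~\ref{thm:canonical_bool_dep_aut}. (3) Check $\lambda X.\overline{X^r} = \neg_{\jslLRP{L}} \circ \kappa$ is a $\Dep$-isomorphism by exhibiting the commuting square with bijective witnesses $\Delta_{\LW{L^r}}$ on the bottom and $\neg_{\jslLRP{L}}$ on the top (same diagram as before). (4) Verify the three $\AutDep$-conditions from Definition~\ref{def:cat_aut_dep}: the transition-compatibility $\delta_a\spbreve; \alpha = \alpha; \rM_a\spbreve$ (where $\delta := \SynMonDfa{L^r}$ so $\delta_a(Y_1,Y_2) \iff Y_2 = Y_1\cdot\sem{a}$), and the two final-state conditions $\breve\alpha[I_\rM] = F_{\rev\delta}$ and $\alpha[I_{\rev\delta}] = F_\rM$. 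Each of these reduces to an adjoint-chasing computation formally identical to the one in Theorem~\ref{thm:canonical_bool_dep_aut}, with ``$\minDfa{L^r}$'' systematically replaced by ``$\SynMonDfa{L^r}$''.

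**The main obstacle.** The genuine difficulty is step~(2)/(4): getting the transition relation of $\SynMonDfa{L^r}$ to interact correctly with the adjoint transitions of $\jslLRP{L}$. In the boolean-\emph{left-predicate} case, the reverse of $\minDfa{L^r}$ had transitions $Y_1 \subseteq a^{-1}Y_2$ because left word quotients transform by \emph{left} letter-quotienting; here $\SynMonDfa{L^r}$ transforms by \emph{right} multiplication by $\sem{a}_{\rS_{L^r}}$, so after reversal and passing through the anti-isomorphism this becomes left multiplication by $\sem{a}_{\rS_L}$ on $\SynMon{L}$, and I need Theorem~\ref{thm:atom_quotient_bijection}'s analogue — a ``syntactic-class / letter-quotient'' relationship $\sem{x_2}_{\rS_L} \subseteq a^{-1}\sem{x_1}_{\rS_L} \iff \sem{a}\cdot\sem{x_2^r}_{\rS_{L^r}} = \sem{x_1^r}_{\rS_{L^r}}$ or similar. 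I expect this to follow from $\sem{x}_{\rS_L} = \{\, w : \sem{w}_{\rS_L} = \sem{x}_{\rS_L}\,\}$ together with $x_2 \in a^{-1}\sem{x_1} \iff \sem{ax_2}_{\rS_L} = \sem{x_1}_{\rS_L}$ (immediate from $\rS_L$ being a two-sided congruence), but pinning down the precise placement of $a$ on the left versus right, and tracking how reversal swaps them, is where the sign-errors will live. Everything else is routine transport along the now-familiar template.
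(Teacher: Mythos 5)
Your proposal follows essentially the same route as the paper's proof: the bijection $Syn(L^r)\to Syn(L)$ induced by the reversal anti-isomorphism of syntactic monoids, complementation to pass from atoms to coatoms of $\jslLRP{L}$, the adjoint computation of the upper nfa of $\Airr(\jslDfaSynBoolMin{L})$, and the three $\AutDep$-conditions verified by the same template as Theorem~\ref{thm:canonical_bool_dep_aut}. The one formula you leave open resolves as $\sem{x_2}_{\SynCong{L}}\subseteq a^{-1}\sem{x_1}_{\SynCong{L}} \iff \sem{ax_2}_{\SynCong{L}}=\sem{x_1}_{\SynCong{L}} \iff \sem{x_2^r a}_{\SynCong{L^r}}=\sem{x_1^r}_{\SynCong{L^r}}$ --- so the letter $a$ lands on the \emph{right} in $\SynMon{L^r}$, not the left as you guessed --- and this is exactly how the paper closes the transition-compatibility check.
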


\begin{proof}
  We have the bijection $\lambda X. X^r : Syn(L^r) \to Syn(L)$ because $\forall x, y \in \Sigma^*. [xuy \in L \iff xvy \in L]$ is equivalent to $\forall x, y \in \Sigma^*. [ xu^ry \in L^r \iff xv^ry \in L^r ]$. Then we have the $\Dep$-isomorphism $f := \lambda X. \overline{X^r}$,
  \[
    \xymatrix@=15pt{
      Syn(L^r) \ar[rr]^-{\lambda X. \overline{X}^r} && \{ \overline{X} : X \in Syn(L)  \}
      \\
      Syn(L^r) \ar[u]^{\Delta_{Syn(L^r)}} \ar[rr]_-{\lambda X. X^r}
      && Syn(L) \ar[u]_{\neg_{\jslLRP{L}}}
    }
  \]
  where $\neg_{\jslLRP{L}}$ constructs the relative complement in $\Sigma^*$. It is a $\Dep$-isomorphism because the witnesses are bijections. It remains to verify the other constraints. Denote the transitions of the left (resp.\ right) dependency automaton's lower (resp.\ upper) nfa by $\rN$ (resp.\ $\rM$). Then:
  \[
    \rN_a(\sem{u_1}_{\SynCong{L^r}}, \sem{u_2}_{\SynCong{L^r}})
    \iff
    \sem{u_2 a}_{\SynCong{L^r}} = \sem{u_1}_{\SynCong{L^r}}
  \]
  \[
    \begin{tabular}{lll}
      $\rM_a(\overline{\sem{u_1}_{\SynCong{L}}}, \overline{\sem{u_2}_{\SynCong{L}}})$
      &
      $\iff (\gamma_a)_* (\overline{\sem{u_1}_{\SynCong{L}}}) \subseteq \overline{\sem{u_2}_{\SynCong{L}}}$
      \\ &
      $\iff \bigcup \{ \sem{u}_{\SynCong{L}} : a^{-1} \sem{u}_{\SynCong{L}} \subseteq \overline{\sem{u_1}_{\SynCong{L}}} \} \subseteq \overline{\sem{u_2}_{\SynCong{L}}}$
      \\ &
      $\iff \bigcup \{ \sem{u}_{\SynCong{L}} : u_1 \nin a^{-1} \sem{u}_{\SynCong{L}} \} \subseteq \overline{\sem{u_2}_{\SynCong{L}}}$
      \\ &
      $\iff \bigcup \{ \sem{u}_{\SynCong{L}} : au_1 \nin \sem{u}_{\SynCong{L}} \} \subseteq \overline{\sem{u_2}_{\SynCong{L}}}$
      \\ &
      $\iff \overline{\sem{au_1}_{\SynCong{L}}} \subseteq \overline{\sem{u_2}_{\SynCong{L}}}$
      \\ &
      $\iff \sem{u_2}_{\SynCong{L}} \subseteq \sem{au_1}_{\SynCong{L}}$
      \\ &
      $\iff \sem{u_2}_{\SynCong{L}} = \sem{au_1}_{\SynCong{L}}$.
    \end{tabular}
  \]
  where $\gamma_a := \lambda X. a^{-1} X : \jslDfaSynBoolMin{L} \to \jslDfaSynBoolMin{L}$. We now verify the condition concerning transitions:
  \[
    \begin{tabular}{lll}
      $\rN_a ; f (\sem{u_1}_{\SynCong{L}}, \overline{\sem{u_2}_{\SynCong{L}}})$
      &
      $\iff \exists u \in \Sigma^*. [
        \sem{u a}_{\SynCong{L^r}} = \sem{u_1}_{\SynCong{L^r}}
        \,\land\,
        \overline{\sem{u}_{\SynCong{L^r}}^r} = \overline{\sem{u_2}_{\SynCong{L}}}
      ]$
      \\ &
      $\iff \exists u \in \Sigma^*.[
        \sem{u a}_{\SynCong{L^r}} = \sem{u_1}_{\SynCong{L^r}}
        \,\land\,
        \sem{u}_{\SynCong{L^r}} = \sem{u_2^r}_{\SynCong{L^r}}
      ]$
      \\ &
      $\iff \sem{u_2^r a}_{\SynCong{L^r}} = \sem{u_1}_{\SynCong{L^r}}$
      \\ &
      $\iff \sem{u_1^r}_{\SynCong{L}} = \sem{a u_2}_{\SynCong{L}}$
      \\ &
      $\iff \exists u \in \Sigma^*.[
        \sem{u_1^r}_{\SynCong{L}} = \sem{u}_{\SynCong{L}}
        \,\land\,
        \sem{u}_{\SynCong{L}} = \sem{a u_2}_{\SynCong{L}}
      ]$
      \\ &
      $\iff \exists u \in \Sigma^*. [
        f(\sem{u_1}_{\SynCong{L}}) = \overline{\sem{u}_{\SynCong{L}}}
        \,\land\,
        \rM_a\spbreve(\overline{\sem{u}_{\SynCong{L}}}, \overline{\sem{u_2}_{\SynCong{L}}})
      ]$
      \\ &
      $\iff f ; \rM_a(\sem{u_1}_{\SynCong{L}}, \overline{\sem{u_2}_{\SynCong{L}}})$.
    \end{tabular}
  \]
  Finally we calculate:
  \[
    \begin{tabular}{lll}
      $f[I_{\rev{\SynMonDfa{L^r}}}]$
      &
      $= f[F_{\SynMonDfa{L^r}}]$
      \\ &
      $= f[\{ \sem{w}_{\SynCong{L^r}} : w \in L^r \}]$
      \\ &
      $= \{ \overline{\sem{w^r}_{\SynCong{L}}} : w \in L^r \}$
      \\ &
      $= \{ \overline{\sem{w}_{\SynCong{L}}} : L^r \nsubseteq \overline{\sem{w}_{\SynCong{L}}} \}$
      \\ &
      $= F_\rM$.
    \end{tabular}
  \qquad
    \begin{tabular}{lll}
      $\breve{f}[I_\rM]$
      &
      $= \breve{f}[\{ \overline{\sem{u}_{\SynCong{L}}} : \bigcup \{ \sem{w}_{\SynCong{L}} : \epsilon \nin L \} \subseteq \overline{\sem{u}_{\SynCong{L}}}  \}]$
      \\ &
      $= \breve{f}[\{ \overline{\sem{u}_{\SynCong{L}}} : \overline{\sem{\epsilon}_{\SynCong{L}}} \subseteq \overline{\sem{u}_{\SynCong{L}}}  \}]$
      \\ &
      $= \breve{f}[\{ \overline{\sem{\epsilon}_{\SynCong{L}}} \}]$
      \\ &
      $= \sem{\epsilon}_{\SynCong{L^r}}$
      \\ &
      $= F_\rN$.
    \end{tabular}
  \]
\end{proof}


\begin{note}[Canonical distributive syntactic $\JSL$-dfa]
  \[
    \rR : (\rev{\SynMonSatDfa{L^r}}, \subseteq, \SynMonSatDfa{L^r}) \to \Airr(\jslDfaSynDistMin{L})
  \]
  \endbox
\end{note}

\subsection{Transition semirings of $\JSL$-dfas}

Whilst classical dfas induce monoids, $\JSL$-dfas induce \emph{idempotent semirings}.

\begin{definition}[Transition semiring of a $\JSL$-dfa]
  \item
  \begin{enumerate}
    \item
    $\FPow \bSigma^* := ((\FPow \Sigma, \cup, \emptyset), \cdot, \{ \epsilon \})$ is the \emph{free $\Sigma$-generated idempotent semiring} where $\FPow \Sigma^*$ is the set of finite languages, its multiplication being sequential composition of  languages.

    \item
    Fix a $\JSL$-dfa $\gamma = (s_0, \aS, \gamma_a, F)$ and recall the composites $(\gamma_w : \aS \to \aS)_{w \in \Sigma^*}$ from Definition \ref{def:dfa_jsl}. More generally for any $K \subseteq \Sigma^*$ we can construct the pointwise-join of $\{ \gamma_w : w \in K \}$,
    \[
      \gamma_K := \lambda s. \Lor_\aS \{ \gamma_w(s) : w \in K \} : \aS \to \aS.
    \]
    Then $\gamma$'s \emph{transition semiring} is the idempotent semiring $\TS{\gamma} := (\aS_\gamma, \circ, id_\aS)$ where:
    \[
      \aS_\gamma := (S_\gamma, \lor_{\aS_\gamma}, \lambda X. \bot_\aS)
      \qquad
      S_\gamma := \{ \gamma_K : K \subseteq \Sigma^* \}
      \qquad
      \gamma_U \lor_{\aS_\gamma} \gamma_V := \gamma_{U \cup V}.
    \]

    \item
    Since $\TS{\gamma}$ is $\Sigma$-generated by $\{ \gamma_a : a \in \Sigma \}$ we have the unique extension $\sem{-}_{\TS{\gamma}} : \FPow \bSigma^* \epito \TS{\gamma}$ i.e.\ a surjective idempotent semiring morphism.

    \item
    Finally the semiring $\TS{\gamma}$ has a natural associated $\JSL$-dfa structure:
    \[
      \jslDfaTs{\gamma} :=
      (
        id_{S_\gamma},
        \aS_\gamma,
        \lambda f. \gamma_a \circ f,
        \{ f : f \nleq_{\aS_\gamma} \gamma_{\overline{L}} \}
      )
    \]
    accepting $L := L(\gamma)$.
    \endbox
  \end{enumerate}
\end{definition}

\begin{lemma}[$\TS{\gamma}$ and $\jslDfaTs{\gamma}$ well-defined]
  \label{lem:ts_well_defined}
  \item
  \begin{enumerate}
    \item
    $\TS{\gamma}$ is a well-defined idempotent semiring.
    \item
    $\jslDfaTs{\gamma}$ is a $\JSL$-reachable $\JSL$-dfa accepting $L(\gamma)$.
  \end{enumerate}
\end{lemma}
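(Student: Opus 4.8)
The plan is to first recast $\aS_\gamma$ concretely. For any $K \subseteq \Sigma^*$ the map $\gamma_K = \lambda s.\Lor_\aS\{\gamma_w(s) : w\in K\}$ is a $\JSL_f$-endomorphism of $\aS$, being a pointwise join of the endomorphisms $\gamma_w$ (a pointwise join of join-preserving maps between complete lattices is again join-preserving, and every finite join-semilattice is complete). Moreover $\gamma_{U\cup V}$ is precisely the pointwise join of $\gamma_U$ and $\gamma_V$, so the operation $\lor_{\aS_\gamma}$ of the Definition is well-defined — independent of the representing languages — and coincides with the pointwise join. I would also record the composition law $\gamma_U\circ\gamma_V = \gamma_{VU}$ (using that each $\gamma_w$ preserves joins and $\gamma_u\circ\gamma_v=\gamma_{vu}$ by the convention $\delta_{wa}=\delta_a\circ\delta_w$ of Definition \ref{def:dfa_jsl}), together with $\gamma_\epsilon = id_\aS$ and $\gamma_\emptyset = \lambda s.\bot_\aS$. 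Since $\aS$ is finite there are finitely many endomaps $\aS\to\aS$, so $S_\gamma$ is finite and $\{\gamma_w : w\in\Sigma^*\}$ already join-generates it.

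For part (1): the additive reduct $(S_\gamma,\lor_{\aS_\gamma},\gamma_\emptyset)$ is a finite commutative idempotent monoid — i.e.\ a join-semilattice — by the closure facts just listed, with bottom $\gamma_\emptyset = \lambda s.\bot_\aS$; the multiplicative reduct $(S_\gamma,\circ,\gamma_\epsilon)$ is a monoid since $S_\gamma$ is closed under composition (by $\gamma_U\circ\gamma_V=\gamma_{VU}$), composition is associative, and $\gamma_\epsilon = id_\aS$ is a two-sided unit. Both distributive laws reduce, via the composition law, to the set identities $W(U\cup V)=WU\cup WV$ and $(U\cup V)W = UW\cup VW$, and $\gamma_\emptyset$ annihilates because $\emptyset\cdot K = K\cdot\emptyset=\emptyset$. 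Hence $\TS{\gamma}$ is a well-defined idempotent semiring.

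For part (2): $\aS_\gamma$ is a finite join-semilattice, and each transition $\lambda f.\gamma_a\circ f:\aS_\gamma\to\aS_\gamma$ preserves joins and $\gamma_\emptyset$, immediately from $\gamma_a$ preserving $\aS$-joins and $\lor_{\aS_\gamma}$ being pointwise. The final-state set $\{f : f\nleq_{\aS_\gamma}\gamma_{\overline{L}}\}$ equals $\overline{\down_{\aS_\gamma}\gamma_{\overline{L}}}$, so it has the shape required of a $\JSL$-dfa with $\gamma_{\overline{L}}$ the largest non-final state. An induction on $w$ (using $\gamma_{wa}=\gamma_a\circ\gamma_w$) shows the underlying dfa satisfies $\delta_w(id_{S_\gamma}) = \gamma_w$, so $w\in L(\jslDfaTs{\gamma})$ iff $\gamma_w\nleq_{\aS_\gamma}\gamma_{\overline{L}}$. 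To identify this with $L(\gamma)$ I would write $t := \Lor_\aS\overline{F}$ for the largest non-final state of $\gamma$, so that $w\in L(\gamma)\iff\gamma_w(s_0)\nleq_\aS t$. The key observation is $\gamma_{\overline{L}}(s_0)=\Lor_{v\notin L}\gamma_v(s_0)\leq_\aS t$, since $v\notin L$ forces $\gamma_v(s_0)\leq_\aS t$; together with the trivial direction $\gamma_w\leq_{\aS_\gamma}\gamma_{\overline{L}}$ whenever $w\notin L$, evaluating at $s_0$ gives $\gamma_w\nleq_{\aS_\gamma}\gamma_{\overline{L}}\iff\gamma_w(s_0)\nleq_\aS t\iff w\in L(\gamma)$. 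Finally, $\JSL$-reachability follows from Theorem \ref{thm:jsl_reach_dual_simple}.1: the classically reachable states of $\jslDfaTs{\gamma}$ are exactly $\{\gamma_w : w\in\Sigma^*\}$, which join-generates $\aS_\gamma$, so by Note \ref{note:irreducibles}.3 every join-irreducible of $\aS_\gamma$ lies among them and is therefore classically reachable.

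The step I expect to be the main obstacle is the acceptance computation in part (2) — matching the finality test $\gamma_w\nleq_{\aS_\gamma}\gamma_{\overline{L}}$ in $\TS{\gamma}$ against the finality test $\gamma_w(s_0)\nleq_\aS\Lor_\aS\overline{F}$ in $\gamma$; its only nontrivial ingredient is the inequality $\gamma_{\overline{L}}(s_0)\leq_\aS\Lor_\aS\overline{F}$, which is what makes $\gamma_{\overline{L}}$ the correct choice of largest non-final state. Everything else is routine bookkeeping with pointwise joins and complex products of languages.
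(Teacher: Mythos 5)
Your proof is correct and follows essentially the same route as the paper: bilinearity of composition with respect to pointwise joins for the semiring axioms, the two-directional acceptance argument via $\gamma_{\overline{L}}(s_0)\leq_\aS\Lor_\aS\overline{F}$, and $\JSL$-reachability from the fact that the classically reachable $\gamma_w$'s join-generate $\aS_\gamma$. The only difference is that you spell out more bookkeeping (e.g.\ that $\lor_{\aS_\gamma}$ is independent of the representing languages because it coincides with the pointwise join), which the paper compresses into its one-line bilinearity remark.
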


\begin{proof}
  Let $\gamma = (s_0, \aS, \gamma_a, F)$ be a $\JSL$-dfa.

  \begin{enumerate}
    \item
    $\aS_\gamma$ defines an `additive' idempotent commutative monoid; $(S_\gamma, \circ, id_\aS)$ defines a `multiplicative' monoid. Multiplication left/right distributes over addition and $\bot_{\aS_\gamma}$ annihilates multiplication because composition of join-semilattice morphisms is bilinear w.r.t.\ pointwise-joins.

    \item
    We first establish $\jslDfaTs{\gamma}$ is a well-defined $\JSL$-dfa.  The transition endomorphisms are well-defined functions, and preserve the join by bilinearity. The final states are well-defined by construction since $\gamma_{\overline{L}} \in S_\gamma$. This $\JSL$-dfa accepts $L$ because $\gamma_w \nleq_{\aS_\gamma} \gamma_{\overline{L}} \iff w \in L$, as we now show.
    \begin{itemize}
      \item
      $(\To)$: contrapositive follows because if $w \in \overline{L}$ then $\gamma_{\overline{L}}$ is a join of morphisms including $\gamma_w$.
      \item
      $(\oT)$: $w \in L$ implies $\gamma_w (s_0) \in F$ whereas $\gamma_{\overline{L}} (s_0) \nin F$.
    \end{itemize}
    
    Finally it is $\JSL$-reachable because (i) each $\gamma_w$ is classically reachable from the identity function $id_\aS$, (ii) each $\gamma_K$ is the join of $\gamma_w$'s.
  \end{enumerate}
\end{proof}

\begin{lemma}
  \label{lem:dfa_tmon_as_reach_ts}
  $\TMDfa{\rsc{\rN}} \cong \reach{\jslDfaTs{\jslDfaReach{\jslDfaSc{\rN}}}}$ for any nfa $\rN$.
\end{lemma}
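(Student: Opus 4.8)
The plan is to exhibit an explicit dfa isomorphism between the two machines, given by sending a word-transition of the reachable subset construction to the corresponding word-transition of the transition semiring. First I would unfold both sides. On the left, $\rsc{\rN}$ is a dfa whose states are the reachable subsets $\rs{\rN}$ and whose transitions send $X$ to $\rN_a[X]$; hence $\TMDfa{\rsc{\rN}}$ has as states the endofunctions $(\rsc{\rN})_w : \rs{\rN} \to \rs{\rN}$, $X \mapsto \rN_w[X]$ (for $w \in \Sigma^*$), initial state $id_{\rs{\rN}}$, transitions $f \mapsto (\rsc{\rN})_a \circ f$, and $(\rsc{\rN})_w$ is final iff $\rN_w[I_\rN] \cap F_\rN \neq \emptyset$, i.e.\ iff $w \in L(\rN)$. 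On the right, write $\gamma := \jslDfaReach{\jslDfaSc{\rN}}$; by Note \ref{note:full_subset_construction}, Definition \ref{def:full_subset_construction} and Definition \ref{def:jsl_reach_simple}, $\gamma$ is the $\JSL$-dfa with carrier $\aS := \ang{\rs{\rN}}_{\Open\Delta_Z}$ (the closure of $\rs{\rN}$ under finite unions inside $(\Pow Z, \cup, \emptyset)$, so joins in $\aS$ are just unions), initial state $I_\rN$, and transition endomorphism $\gamma_a(X) = \rN_a[X]$, whence $\gamma_w(X) = \rN_w[X]$ for all $X \in \aS$. Then $\jslDfaTs{\gamma}$ has as states the pointwise joins $\gamma_K$ ($K \subseteq \Sigma^*$), and $\reach{\jslDfaTs{\gamma}}$ retains exactly the states reachable from $id_\aS = \gamma_\epsilon$ along $f \mapsto \gamma_a \circ f$, namely $\{\gamma_w : w \in \Sigma^*\}$, with $\gamma_w$ final iff $\gamma_w \nleq_{\aS_\gamma} \gamma_{\overline{L}}$, which by the proof of Lemma \ref{lem:ts_well_defined}.2 holds iff $w \in L = L(\rN)$.

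Next I would define $\phi\big((\rsc{\rN})_w\big) := \gamma_w$ and show it is a well-defined bijection. Since $\rs{\rN}$ join-generates $\aS$ and every $\gamma_w$ preserves joins, $\gamma_w$ is the unique join-preserving self-map of $\aS$ extending $X \mapsto \rN_w[X]$ on $\rs{\rN}$; hence $\gamma_u = \gamma_v$ iff $\gamma_u$ and $\gamma_v$ agree on $\rs{\rN}$ iff $\rN_u[X] = \rN_v[X]$ for all $X \in \rs{\rN}$ iff $(\rsc{\rN})_u = (\rsc{\rN})_v$. This single observation yields both well-definedness and injectivity of $\phi$, and surjectivity onto the state set of $\reach{\jslDfaTs{\gamma}}$ is immediate from the description of its reachable states above.

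Finally I would verify $\phi$ is a dfa morphism: it sends $id_{\rs{\rN}} = (\rsc{\rN})_\epsilon$ to $\gamma_\epsilon = id_\aS$; it intertwines the $a$-transitions because $(\rsc{\rN})_a \circ (\rsc{\rN})_w = (\rsc{\rN})_{wa} \mapsto \gamma_{wa} = \gamma_a \circ \gamma_w$; and it preserves and reflects finality since, by the two finality computations above, $(\rsc{\rN})_w$ is final iff $w \in L(\rN)$ iff $\gamma_w$ is final. Therefore $\phi$ is a dfa isomorphism, proving the claim.

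I expect the only genuinely delicate point to be the middle step: pinning down that each $\gamma_w$ is literally determined by its restriction to $\rs{\rN}$, which is what rules out the a priori danger that two words with equal subset-transition could give distinct transition-semiring elements. This rests precisely on $\aS$ being the union-closure of $\rs{\rN}$ with unions as joins, so it is a short argument once set up; everything else is bookkeeping with the definitions of $\rsc{-}$, $\jslDfaSc{-}$, $\jslDfaReach{-}$, $\jslDfaTs{-}$, $\reach{-}$, together with the acceptance facts already recorded in Lemma \ref{lem:ts_well_defined} and Lemma \ref{lem:jsl_reach_simple_constructions}.
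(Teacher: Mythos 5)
Your proposal is correct and follows essentially the same route as the paper: both construct the explicit bijection sending the word-transition $X \mapsto \rN_w[X]$ on $\rs{\rN}$ to the corresponding endomorphism $\gamma_w$ of the union-closure $\ang{\rs{\rN}}_{\Open\Delta_Z}$, with well-definedness and injectivity resting on the fact that a join-preserving endomorphism is determined by its restriction to the join-generating set $\rs{\rN}$. The only cosmetic difference is that you delegate the finality equivalence $\gamma_w \nleq \gamma_{\overline{L}} \iff w \in L$ to Lemma \ref{lem:ts_well_defined}.2, whereas the paper recomputes it inline; both are fine.
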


\begin{proof}
  Let $\aS := \jslReach{\jslDfaSc{\rN}}$ i.e.\ the closure of the reachable subsets $\rs{\rN}$ under unions. Then we need to establish the dfa isomorphism $\lambda \gamma_w. \delta_w : \gamma \to \delta$ where:
  \[
    \begin{tabular}{lll}
      $\gamma = (id_{\rs{\rN}}, \{ \gamma_w : \rs{\rN} \to \rs{\rN}, w \in \Sigma^* \}, \lambda f. \gamma_a \circ f, \{ f : f(I) \cap F \neq \emptyset \})$
      \\[1ex]
      $\delta = (id_\aS, \{ \delta_w : \aS \to \aS, \, w \in \Sigma^* \}, \lambda f. \delta_a \circ f, \{ f : f \nleq \delta_{\overline{L}} \})$
    \end{tabular}
  \]
  and both $\gamma_w$ and $\delta_w$ have action $\lambda X. \rN_w[X]$. The candidate isomorphism is a well-defined bijection because $\delta_w$ is uniquely determined by the domain-codomain restriction $\gamma_w$. It clearly preserves the initial state and preserves/reflects the transitions. Finally,
  \[
    \begin{tabular}{lll}
      $\delta_w \nleq \delta_{\overline{L}}$
      &
      $\iff \exists u \in \Sigma^*. \delta_w(\rN_u[I]) \nsubseteq \delta_{\overline{L}}(\rN_u[I])$
      \\ &
      $\iff \exists u \in \Sigma^*, z \in Z_\rN. ( z \in \rN_w[\rN_u[I]] \,\land\, z \nin \rN_{\overline{L}}[\rN_u[I]] )$
      \\ &
      $\iff w \in L$
      & (A)
      \\ &
      $\iff \gamma_w(I) \cap F \neq \emptyset$.
    \end{tabular}
  \]
  Concerning (A), $(\To)$ is immediate whereas $(\oT)$ follows by choosing $u := \epsilon$.
\end{proof}

\begin{definition}[Power semiring and syntactic semiring]
  \item
  \begin{enumerate}
    \item
    The \emph{finitary power semiring} of a monoid ${\bf M} := (M, \cdot_{\bf M}, 1_{\bf M})$ is the idempotent semiring:
    \[
      \FPow {\bf M} := ((\FPow M, \cup, \emptyset), \cdot, \{ 1_{\bf M} \})
      \qquad
      S_1 \cdot S_2 := \{ m_1 \cdot_{\bf M} m_2 : m_1 \in S_1, \, m_2 \in S_2 \}
    \]
    where $\FPow M$ is the set of finite subsets of $M$. If ${\bf M}$ is a finite monoid we may instead write $\Pow {\bf M}$.

    \item
    Given any set $\Sigma$ then $\FPow \FMon{\Sigma}$ is the \emph{free $\Sigma$-generated idempotent semiring}.

    \item
    The \emph{syntactic semiring} $\SynSr{L} := \FPow {\bf \Sigma}^* / \SynSrCong{L}$ of a regular language $L \subseteq \Sigma^*$ is the quotient of the free $\Sigma$-generated idempotent semiring by $L$'s \emph{syntactic semiring congruence} $\SynSrCong{L} \subseteq \FPow \Sigma^* \times \FPow \Sigma^*$ \cite{Polak2001}:
    \[
      \SynSrCong{L}(U, V) :\iff \forall x, y \in \Sigma^*. [
        \{ x \} \cdot U \cdot \{ y \} \subseteq \overline{L} \iff \{ x \} \cdot V \cdot \{ y \}  \subseteq \overline{L}].
    \]
    It admits a natural $\JSL$-dfa structure accepting $L$,
    \[
      \jslDfaSyn{L}
      := (
        \sem{\{ \epsilon \}}_{\SynSrCong{L}},
        (\FPow \Sigma^* / \SynSrCong{L}, \lor_{\SynSr{L}}, \sem{\emptyset}_{\SynSrCong{L}}),
        \lambda X. X \cdot_{\SynSr{L}} \sem{\{ a \}}_{\SynSrCong{L}},
        \{ \sem{U}_{\SynSrCong{L}} : U \cap L \neq \emptyset \}).
    \]
    \endbox
  \end{enumerate}
\end{definition}

\begin{lemma}[Power/syntactic semirings are well-defined]
  \label{lem:power_syntactic_well_def}
  \item
  \begin{enumerate}
    \item 
    $\FPow {\bf M}$ is a well-defined idempotent semiring.
    \item
    $\SynCong{L}(u, v) \iff \SynSrCong{L}(\{ u \}, \{ v \})$ for all $u, v \in \Sigma^*$.
    \item
    $\SynSr{L}$ is a well-defined finite idempotent semiring.
    \item
    $\jslDfaSyn{L}$ is a well-defined $\JSL$-dfa accepting $L$.
  \end{enumerate}
\end{lemma}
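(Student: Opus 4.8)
The plan is to verify the four claims in order, each reducing to the well-definedness of a quotient of the free $\Sigma$-generated idempotent semiring $\FPow\bSigma^*$ together with a finiteness count, all of them routine except for two steps I flag at the end.

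For (1) I would simply check the idempotent semiring axioms: $(\FPow M,\cup,\emptyset)$ is a commutative idempotent monoid; $(\FPow M,\cdot,\{1_{\bf M}\})$ is a monoid because the complex product of finite subsets is again finite and inherits associativity and unitality from ${\bf M}$; and left/right distributivity of $\cdot$ over $\cup$ together with annihilation by $\emptyset$ are immediate from the pointwise definition $S_1\cdot S_2=\{m_1\cdot_{\bf M}m_2:m_i\in S_i\}$. Claim (2) is a one-line calculation: since $\{x\}\cdot\{u\}\cdot\{y\}=\{xuy\}$ we have $\{x\}\{u\}\{y\}\subseteq\overline L\iff xuy\nin L$, so unwinding $\SynSrCong{L}(\{u\},\{v\})$ gives exactly $\forall x,y.(xuy\in L\iff xvy\in L)$, i.e.\ $\SynCong{L}(u,v)$.

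For (3) I would first show $\SynSrCong{L}$ is a semiring congruence on $\FPow\bSigma^*$. Compatibility with $\cup$ is immediate, as $\{x\}(U\cup V)\{y\}\subseteq\overline L$ iff both $\{x\}U\{y\}\subseteq\overline L$ and $\{x\}V\{y\}\subseteq\overline L$. Compatibility with $\cdot$ is the main calculation and mimics the classical syntactic-monoid argument by a two-step substitution: given $\SynSrCong{L}(U_1,U_2)$ and $\SynSrCong{L}(V_1,V_2)$, one fixes each $w_2\in V_1$ and applies $\SynSrCong{L}(U_1,U_2)$ with right context $w_2y$ to get $\SynSrCong{L}(U_1V_1,U_2V_1)$, then fixes each $w_1\in U_2$ and applies $\SynSrCong{L}(V_1,V_2)$ with left context $xw_1$ to get $\SynSrCong{L}(U_2V_1,U_2V_2)$, and transitivity closes the gap. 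Hence $\SynSr{L}$ is a well-defined idempotent semiring. Finiteness then follows because the $\SynSrCong{L}$-class of $U$ depends only on the set $\{\sem{u}_{\rS_L}:u\in U\}$ of syntactic classes met by $U$ (whether $xuy\in L$ depends only on $\sem{u}_{\rS_L}$) and $Syn(L)$ is finite by Lemma~\ref{lem:syn_mon_well_def}; equivalently $\SynSrCong{L}$ is coarser than the kernel of the induced surjection $\FPow\bSigma^*\epito\Pow\SynMon{L}$, so $|\SynSr{L}|\le 2^{|Syn(L)|}$.

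For (4) the underlying join-semilattice of $\jslDfaSyn{L}$ is the additive reduct of $\SynSr{L}$ by (3), the initial state $\sem{\{\epsilon\}}_{\SynSrCong{L}}$ is an element, and the transitions $\lambda X.\,X\cdot_{\SynSr{L}}\sem{\{a\}}_{\SynSrCong{L}}$ preserve $\cup$ and $\bot$ by distributivity and the annihilation law in $\SynSr{L}$. The delicate point is the final set: putting $x=y=\epsilon$ shows $U\cap L\neq\emptyset$ is invariant under $\SynSrCong{L}$, so $\{\sem{U}:U\cap L\neq\emptyset\}$ is well-defined; but since $\overline L$ is generally infinite one cannot take ``$\sem{\overline L}$'' as the largest non-final state, so instead I would pick a finite transversal $U_0$ of the finitely many syntactic classes disjoint from $L$, observe that syntactic classes are $L$-pure (by Lemma~\ref{lem:syn_mon_well_def}, $w\in L\Rightarrow\sem{w}_{\rS_L}\subseteq L$), and use that having the same class-set forces $\SynSrCong{L}$-equivalence to conclude $\sem{U}\le\sem{U_0}$ for every non-final $\sem{U}$ while $\sem{U_0}$ is itself non-final; hence the final set equals $\overline{\down_\aS\sem{U_0}}$, as a $\JSL$-dfa requires. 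Acceptance is then immediate: reading $w$ from $\sem{\{\epsilon\}}_{\SynSrCong{L}}$ lands in $\sem{\{w\}}_{\SynSrCong{L}}$, which is final iff $w\in L$. I expect the congruence property for $\cdot$ and the construction of the largest non-final state to be the two steps requiring genuine care; everything else is bookkeeping.
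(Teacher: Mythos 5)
Your proposal is correct and follows essentially the same route as the paper: direct axiom-checking for (1), the one-line unwinding for (2), congruence-plus-surjection-from-$\Pow\SynMon{L}$ for (3), and the same well-definedness checks for (4). The only cosmetic differences are that the paper extends contexts to arbitrary finite sets (its observation $(\star)$) where you decompose over individual words of $V_1$ and $U_2$, and that the paper obtains the largest non-final state by noting the non-finals $\{\sem{U}:U\subseteq\overline{L}\}$ are closed under finite joins rather than by exhibiting a finite transversal $U_0$; both variants amount to the same argument.
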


\begin{proof}
  \item
  \begin{enumerate}
    \item 
    Let ${\bf M} = (M, \cdot_{\bf M}, 1_{\bf M})$ be a monoid. Firstly, $(\FPow M, \cup, \emptyset)$ is the free join-semilattice on $M$. Secondly, the multiplication $\cdot$ is respectively bilinear by construction.

    \item
    We calculate:
    \[
      \begin{tabular}{lll}
        $\SynSrCong{L}(\{ u \}, \{ v \})$
        &
        $\iff \forall x, y \in \Sigma^*. [ \{ x \} \cdot \{ u \} \cdot \{ y \} \subseteq \overline{L} \iff \{ x \} \cdot \{ v \} \cdot \{ y \} \subseteq \overline{L} ]$
        \\ &
        $\iff \forall x, y \in \Sigma^*. [ xuy \in \overline{L} \iff xvy \in \overline{L} ]$
        \\ &
        $\iff \forall x, y \in \Sigma^*. [ xuy \in L \iff xvy \in L ]$
        \\ &
        $\iff \SynCong{L}(u, v)$.
      \end{tabular}
    \]

    \item 
    We'll show $\SynSrCong{L}$ is a congruence for the free idempotent semiring $\FPow \FMon{\Sigma}$. First observe:
    \[
      \tag{$\star$}
      \SynSrCong{L}(U, V) \iff \forall X, Y \in \FPow \Sigma^*.[ X \cdot U \cdot Y \subseteq \overline{L} \iff X \cdot V \cdot Y \subseteq \overline{L} ].
    \]
    Indeed: $(\oT)$ follows by restriction to words, $(\To)$ follows via $X \cdot U \cdot Y \subseteq \overline{L} \iff \forall x, y \in \Sigma^*. [ \{ x \} \cdot U \cdot \{ y \} \subseteq \overline{L} ]$. Fixing $\SynSrCong{L}(U_i, V_i)$ for $i = 1,2$, it is a congruence for binary joins and multiplication:
    \[
      \begin{tabular}{lll}
        $\{ x \} \cdot (U_1 \cup U_2) \cdot \{ y \} \subseteq \overline{L}$
        &
        $\iff \forall i \in \{ 1, 2 \}. \{ x \} \cdot U_i \cdot \{ y \} \subseteq \overline{L}$
        \\ &
        $\iff \forall i \in \{ 1, 2 \}. \{ x \} \cdot V_i \cdot \{ y \} \subseteq \overline{L}$
        \\ &
        $\iff \{ x \} \cdot (V_1 \cup V_2) \cdot \{ y \} \subseteq \overline{L}$
        \\
        \\
        $\{ x \} \cdot (U_1 \cdot U_2) \cdot \{ y \} \subseteq \overline{L}$
        &
        $\iff \{ x \} \cdot U_1 \cdot ( U_2 \cdot \{ y \}) \subseteq \overline{L}$
        \\ &
        $\iff \{ x \} \cdot V_1 \cdot ( U_2 \cdot \{ y \}) \subseteq \overline{L}$
        & (via $\star$)
        \\ &
        $\iff ( \{ x \} \cdot V_1 ) \cdot  U_2 \cdot \{ y \} \subseteq \overline{L}$
        \\ &
        $\iff ( \{ x \} \cdot V_1 ) \cdot  V_2 \cdot \{ y \} \subseteq \overline{L}$
        & (via $\star$)
        \\ &
        $\iff \{ x \} \cdot (V_1 \cdot V_2) \cdot \{ y \} \subseteq \overline{L}$.
      \end{tabular}
    \]
    To see $\SynSr{L}$ is finite, recall the syntactic monoid is finite by Lemma \ref{lem:syn_mon_well_def} and consider the mapping:
    \[
      q := \lambda \{ \sem{u}_{\SynCong{L}} : u \in U \in \FPow \Sigma^* \}. \sem{U}_{\SynSrCong{L}} : \Pow \SynMon{L} \epito \SynSr{L}.
    \]
    Well-definedness follows via (2) and it is clearly surjective, hence $\SynSr{L}$ is finite.

    \item 
    We show $\jslDfaSyn{L}$ is a well-defined $\JSL$-dfa. It is finite because the syntactic semiring is finite -- see (3). Its join-semilattice structure is well-defined because $\SynSrCong{L}$ is a well-defined congruence. Its deterministic transitions are well-defined because multiplication in $\SynSr{L}$ is bilinear. It remains to show the final states are well-defined. First observe if $\sem{U}_{\SynSrCong{L}} = \sem{V}_{\SynSrCong{L}}$ and $U \nsubseteq \overline{L}$ then $V \nsubseteq \overline{L}$ by choosing $x = y = \epsilon$. Secondly, the non-finals $\{ \sem{U}_{\SynSrCong{L}} : U \subseteq \overline{L} \}$ are closed under joins because given (finitely many) $U_i \subseteq \overline{L}$ then $\bigcup U_i \subseteq \overline{L}$ too. This well-defined $\JSL$-dfa accepts $L$ because its classically reachable part is isomorphic to the syntactic monoid $\SynMon{L}$ endowed with its dfa structure.
  \end{enumerate}
\end{proof}

Analogous to Theorem \ref{thm:tm_min_dfa_iso_syn}, $\jslDfaMin{L}$'s transition semiring is isomorphic to $L$'s syntactic semiring.

\begin{theorem}[$\SynSr{L} \cong \TS{\jslDfaMin{L}}$]
  \label{thm:syn_semiring_trans_semiring}
  We have the idempotent semiring isomorphism:
  \[
    \alpha := \lambda \sem{U}_{\rS_L^\lor}. \lambda X. U^{-1} X : \SynSr{L} \to \TS{\jslDfaMin{L}}.
  \]
  It also defines a $\JSL$-dfa isomorphism $\jslDfaSyn{L} \to \jslDfaTs{\jslDfaMin{L}}$.
\end{theorem}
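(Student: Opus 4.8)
The plan is to split the statement into two parts: (i) $\alpha$ is a well-defined bijective idempotent-semiring morphism $\SynSr{L}\to\TS{\jslDfaMin{L}}$, and (ii) the same underlying map is a $\dfa{\JSL}$-morphism $\jslDfaSyn{L}\to\jslDfaTs{\jslDfaMin{L}}$. Everything runs parallel to Theorem \ref{thm:tm_min_dfa_iso_syn}, with finite languages $U,V\in\FPow\Sigma^*$ in place of single words and the idempotent-semiring structure in place of the monoid structure. Write $\gamma:=\jslDfaMin{L}$, so that $\gamma_w(X)=w^{-1}X$ and, for any $K\subseteq\Sigma^*$, $\gamma_K(X)=\bigcup_{w\in K}w^{-1}X=K^{-1}X$ on $\jslLQ{L}$; thus $\alpha$ sends $\sem{U}_{\rS_L^\lor}$ to the endomorphism $\gamma_U=\lambda X.\,U^{-1}X$.

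For (i) I would first establish well-definedness and injectivity by the congruence computation
\begin{align*}
  \sem{U}_{\rS_L^\lor}=\sem{V}_{\rS_L^\lor}
  &\iff \forall x,y\in\Sigma^*.\,[\{x\}U\{y\}\subseteq\overline{L} \iff \{x\}V\{y\}\subseteq\overline{L}]\\
  &\iff \forall x,y\in\Sigma^*.\,[U\cap x^{-1}Ly^{-1}\neq\emptyset \iff V\cap x^{-1}Ly^{-1}\neq\emptyset]\\
  &\iff \forall X\in\LW{L}.\;U^{-1}X=V^{-1}X,
\end{align*}
where the last step uses $y\in U^{-1}(x^{-1}L)\iff U\cap x^{-1}Ly^{-1}\neq\emptyset$, and then noting that since $\LW{L}$ join-generates $\jslLQ{L}$ (Definition \ref{def:lang_quos}) and each $U^{-1}(-)$ preserves unions, agreement on $\LW{L}$ upgrades to equality of the join-endomorphisms $\gamma_U=\lambda X.\,U^{-1}X$ on all of $\jslLQ{L}$. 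Surjectivity is immediate because $\sem{-}_{\TS{\gamma}}\colon\FPow\bSigma^*\epito\TS{\gamma}$ is onto, so every element of $\TS{\jslDfaMin{L}}$ equals $\gamma_U=\alpha(\sem{U}_{\rS_L^\lor})$ for some finite $U$. For the semiring-morphism axioms: $\alpha(\sem{\emptyset}_{\rS_L^\lor})=\lambda X.\,\emptyset=\bot_{\aS_\gamma}$ and $\alpha(\sem{\{\epsilon\}}_{\rS_L^\lor})=\lambda X.\,X=id_\aS$ cover the two units; joins are preserved since on both sides the join of $U$ and $V$ is represented by $U\cup V$ (by the definitions of the two join structures); and multiplication is preserved by $(U\cdot V)^{-1}X=V^{-1}(U^{-1}X)$, which identifies $\gamma_{U\cdot V}$ with the transition-semiring product of $\gamma_U$ and $\gamma_V$ exactly as $(uv)^{-1}X=v^{-1}(u^{-1}X)$ does in Theorem \ref{thm:tm_min_dfa_iso_syn}.

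For (ii), part (i) already gives that $\alpha$ is a join-semilattice isomorphism, so it remains to verify the three dfa-morphism conditions. The initial state is preserved, $\alpha(\sem{\{\epsilon\}}_{\rS_L^\lor})=\gamma_\epsilon=id_\aS$. Transitions are preserved: $\alpha(\sem{U}_{\rS_L^\lor}\cdot\sem{\{a\}}_{\rS_L^\lor})=\alpha(\sem{U\cdot\{a\}}_{\rS_L^\lor})=\gamma_{Ua}$ and $\gamma_{Ua}(X)=(Ua)^{-1}X=a^{-1}(U^{-1}X)=(\gamma_a\circ\gamma_U)(X)$. Final states are preserved and reflected: $\alpha(\sem{U}_{\rS_L^\lor})=\gamma_U=\Lor_{w\in U}\gamma_w$ in $\aS_\gamma$, hence $\gamma_U\nleq_{\aS_\gamma}\gamma_{\overline{L}}\iff\exists w\in U.\,\gamma_w\nleq_{\aS_\gamma}\gamma_{\overline{L}}\iff\exists w\in U.\,w\in L\iff U\cap L\neq\emptyset$, using the acceptance fact $\gamma_w\nleq_{\aS_\gamma}\gamma_{\overline{L}}\iff w\in L$ from Lemma \ref{lem:ts_well_defined}; and $U\cap L\neq\emptyset$ is precisely the finality condition of $\jslDfaSyn{L}$.

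The main obstacle is the well-definedness/injectivity step of (i): it is a ``layered'' version of the $\SynMon{L}$ computation in which the single-word existential becomes $\exists u\in U$, and one must track the De Morgan complement built into the definition of $\rS_L^\lor$ (phrased with $\subseteq\overline{L}$) against the $\neq\emptyset$ / $\nleq$ phrasing on the transition-semiring side — these match only after complementing both sides of the defining biconditional. A secondary point, as in Theorem \ref{thm:tm_min_dfa_iso_syn}, is to read ``$\circ$'' in $\TS{\gamma}$ in the order for which $\sem{-}_{\TS{\gamma}}$ is a semiring morphism, so that $\gamma_{U\cdot V}$ — and not its reverse — is the product $\alpha(\sem{U}_{\rS_L^\lor})\cdot\alpha(\sem{V}_{\rS_L^\lor})$; all remaining verifications are routine.
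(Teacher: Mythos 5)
Your proposal is correct and follows essentially the same route as the paper: the same chain of equivalences (phrased via $U\cap x^{-1}Ly^{-1}\neq\emptyset$ rather than $U_1y\cap x^{-1}L\neq\emptyset$, a trivial reshuffle) for well-definedness and injectivity, the same surjectivity and semiring-axiom checks, and the same composition-order caveat. The only cosmetic difference is in the final-state verification, where you argue directly from $\gamma_U=\Lor_{w\in U}\gamma_w$ and Lemma \ref{lem:ts_well_defined}, while the paper notes that an order isomorphism need only preserve the largest non-final state; both are fine.
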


\begin{proof}
  It is well-defined and injective because:
  \[
    \begin{tabular}{lll}
      $\sem{U_1}_{\SynSrCong{L}} = \sem{U_2}_{\SynSrCong{L}}$
      &
      $\iff \forall x, y \in \Sigma^*.[ x U_1 y \subseteq \overline{L} \iff x U_2 y \subseteq \overline{L}]$
      \\ &
      $\iff \forall x, y \in \Sigma^*.[ x U_1 y \nsubseteq \overline{L} \iff x U_2 y \nsubseteq \overline{L} ]$
      \\ &
      $\iff \forall x, y \in \Sigma^*.[ x U_1 y \cap L \neq \emptyset \iff x U_2 y \cap L  \neq \emptyset ]$
      \\ &
      $\iff \forall x, y \in \Sigma^*.[ U_1 y \cap x^{-1} L \neq \emptyset \iff U_2 y \cap x^{-1} L  \neq \emptyset ]$
      \\ &
      $\iff \forall X \in \LW{L}, y \in \Sigma^*.[ U_1 y \cap X \neq \emptyset \iff U_2 y \cap X  \neq \emptyset ]$
      \\ &
      $\iff \forall X \in \LW{L}, y \in \Sigma^*.[ y \in [U_1]^{-1} X \iff y \in [U_2]^{-1} X  ]$
      \\ &
      $\iff \forall X \in \LW{L}[ [U_1]^{-1} X = [U_2]^{-1} X  ]$
      \\ &
      $\iff \lambda X \in \LQ{L}. [U_1]^{-1} X = \lambda X \in \LQ{L} .[U_2]^{-1} X$.
    \end{tabular}
  \]
  Concerning the final equivalence, $\LW{L}$ join-generates $\jslLQ{L}$ and each $U^{-1}(-)$ preserves unions. Next, $\alpha$ is surjective because $\jslDfaMin{L}$'s transition semiring consists of the endomorphisms $\lambda X. U^{-1} X$ for $U \subseteq \Sigma^*$, or equivalently where $U \in \FPow \Sigma^*$ since $\LW{L}$ is finite. Next, $\alpha$ is a monoid morphism because $\lambda X. \epsilon^{-1} X$ is the identity function and $(UV)^{-1} X = V^{-1}(U^{-1} X)$. Finally it preserves the join structure because $\emptyset^{-1} X = \emptyset$ and $(U \cup V)^{-1} X = U^{-1} X \cup V^{-1} X$.

  Finally we establish the claimed $\JSL$-dfa isomorphism. The transitions follow because $(Ua)^{-1}(X) = a^{-1}(U^{-1} X)$. Concerning final states, $\alpha$ is a join-semilattice isomorphism hence an order isomorphism, so it suffices to show $\alpha$ preserves the largest non-final state. Then we must prove the marked equality below:
  \[
    \lambda X. [U]^{-1} X
    \;\overset{!}{=}\; \lambda X. [\overline{L}]^{-1} X
    \qquad
    \text{where $\sem{U}_{\SynSrCong{L}} := \Lor_{\SynSr{L}} \{ \sem{\{ u \}}_{\SynSrCong{L}} : u \nin L \}$}.
  \]
  Firstly, $U \subseteq \overline{L}$ by well-definedness. Conversely each $u_0 \in \overline{L}$ has some $u \in U$ s.t.\  $\sem{\{ u_0 \}}_{\SynSrCong{L}} = \sem{\{ u \}}_{\SynSrCong{L}}$. Then by an earlier calculation we know $\lambda X. u_0^{-1} X = \lambda X. u^{-1} X$, so the marked equality follows.
\end{proof}


\begin{corollary}
  \label{cor:syn_trans_semiring_jsl_dfa}
  $\jslDfaSyn{L} \cong \jslDfaTs{\jslDfaMin{L}}$.
\end{corollary}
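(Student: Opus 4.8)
The plan is to simply invoke Theorem \ref{thm:syn_semiring_trans_semiring}. That theorem constructs the map $\alpha := \lambda \sem{U}_{\SynSrCong{L}}. \lambda X. U^{-1} X$ and proves two things: first that it is an idempotent semiring isomorphism $\SynSr{L} \to \TS{\jslDfaMin{L}}$, and second — in the final paragraph of its proof — that the very same function $\alpha$ underlies a $\JSL$-dfa isomorphism $\jslDfaSyn{L} \to \jslDfaTs{\jslDfaMin{L}}$. So the Corollary is literally the second assertion of Theorem \ref{thm:syn_semiring_trans_semiring}, and the proof I would write is one line: ``Immediate from Theorem \ref{thm:syn_semiring_trans_semiring}.''

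If one wishes to be slightly more explicit rather than purely citing, I would recall why $\alpha$ respects the $\JSL$-dfa structure on top of the semiring structure. The join-semilattice (equivalently additive) part is already contained in the semiring isomorphism. For the dfa structure one checks: (i) the initial state is preserved, which reduces to $\alpha(\sem{\{\epsilon\}}_{\SynSrCong{L}}) = \lambda X. \epsilon^{-1} X = id_{\jslLQ{L}}$, the initial state of $\jslDfaTs{\jslDfaMin{L}}$; (ii) the deterministic $a$-transitions are preserved, using $(Ua)^{-1} X = a^{-1}(U^{-1} X)$ so that $\alpha$ intertwines $\lambda X. X \cdot_{\SynSr{L}} \sem{\{a\}}$ with $\lambda f. (\lambda X.a^{-1}X)\circ f$; (iii) the final states are preserved and reflected, which — since $\alpha$ is an order isomorphism — reduces to showing $\alpha$ sends the largest non-final element $\Lor_{\SynSr{L}}\{\sem{\{u\}}_{\SynSrCong{L}} : u \nin L\}$ to the largest non-final element $\gamma_{\overline{L}}$ of $\TS{\jslDfaMin{L}}$; this is exactly the marked equality $\lambda X. [U]^{-1} X = \lambda X. [\overline{L}]^{-1} X$ verified at the end of the proof of Theorem \ref{thm:syn_semiring_trans_semiring}, using that every $u_0 \in \overline{L}$ is $\SynSrCong{L}$-equivalent as a singleton to some $u \in U$ and hence induces the same left quotient operator.

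There is essentially no obstacle here: all the substantive work was already carried out inside Theorem \ref{thm:syn_semiring_trans_semiring}, and the only thing the Corollary adds is a standalone name for the resulting $\JSL$-dfa isomorphism (matching the naming pattern of Corollary \ref{cor:syn_trans_semiring_jsl_dfa}'s classical analogue, Theorem \ref{thm:tm_min_dfa_iso_syn}, together with Lemma \ref{lem:dfa_tmon_as_reach_ts}). Accordingly I would keep the proof to a single sentence pointing at Theorem \ref{thm:syn_semiring_trans_semiring}, and would not expand the three checks above unless a referee asks, since they are verbatim repetitions of that theorem's proof.

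\begin{proof}
  Immediate from Theorem \ref{thm:syn_semiring_trans_semiring}, whose final paragraph establishes that $\alpha := \lambda \sem{U}_{\SynSrCong{L}}. \lambda X. U^{-1} X$ is a $\JSL$-dfa isomorphism $\jslDfaSyn{L} \to \jslDfaTs{\jslDfaMin{L}}$ in addition to being an idempotent semiring isomorphism $\SynSr{L} \to \TS{\jslDfaMin{L}}$.
\end{proof}
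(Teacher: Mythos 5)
Your proposal is correct and matches the paper's own treatment: the paper's proof of this corollary likewise just points back to Theorem \ref{thm:syn_semiring_trans_semiring} for the join-semilattice structure and transitions, then notes the (trivial) preservation of the initial state and checks the final states, exactly the three items you list. The only cosmetic difference is that the paper re-verifies finality by computing $\lambda X.\,U^{-1}X \nleq \gamma_{\overline{L}} \iff U \cap L \neq \emptyset$ directly, whereas you reuse the largest-non-final-state argument already in the theorem's proof; both are fine.
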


\begin{proof}
  The join-semilattice isomorphism and transitions follows via Theorem 
  \ref{thm:syn_semiring_trans_semiring}. The initial state is preserved i.e.\ $\sem{\{\epsilon\}}_{\SynSrCong{L}} \mapsto id_{\jslDfaMin{L}}$. Lastly the final states are preserved/reflected:
  \[
    \begin{tabular}{lll}
      $\lambda X. U^{-1} X \nleq \gamma_{\overline{L}}$
      &
      $\iff \exists w \in \Sigma^*. U^{-1} (w^{-1} L) \nsubseteq \bigcup_{x \nin L} x^{-1} (w^{-1} L)$
      \\ &
      $\iff \exists w, v \in \Sigma^*, u \in U. (wuv \in L \,\land\, w \overline{L} v \cap L = \emptyset )$
      \\ &
      $\iff U \cap L \neq \emptyset$.
    \end{tabular}
  \]
\end{proof}

\begin{corollary}
  \label{cor:reach_syn_sr_syn_mon}
  $\SynMonDfa{L} \cong \reach{\jslDfaSyn{L}}$.
\end{corollary}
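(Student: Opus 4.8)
The plan is to exhibit an explicit dfa isomorphism $\SynMonDfa{L} \to \reach{\jslDfaSyn{L}}$ sending $\sem{w}_{\SynCong{L}}$ to $\sem{\{w\}}_{\SynSrCong{L}}$, with Lemma \ref{lem:power_syntactic_well_def}.2 (relating $\SynCong{L}$ and $\SynSrCong{L}$) as the crucial ingredient.

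First I would identify the classically reachable states of $\jslDfaSyn{L}$. Since $\sem{-}_{\SynSrCong{L}} : \FPow\bSigma^* \to \SynSr{L}$ is a semiring morphism and $\{w\}\cdot\{a\} = \{wa\}$ in $\FPow\bSigma^*$, the $a$-transition $\lambda X. X \cdot_{\SynSr{L}} \sem{\{a\}}_{\SynSrCong{L}}$ of $\jslDfaSyn{L}$ sends $\sem{\{w\}}_{\SynSrCong{L}}$ to $\sem{\{wa\}}_{\SynSrCong{L}}$; starting from the initial state $\sem{\{\epsilon\}}_{\SynSrCong{L}}$, an easy induction shows the classically reachable states are exactly $\{\sem{\{w\}}_{\SynSrCong{L}} : w \in \Sigma^*\}$. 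Hence, by Definition \ref{def:nfa_basics}.3 (applied to the underlying dfa), $\reach{\jslDfaSyn{L}}$ is the dfa on this state set, with initial state $\sem{\{\epsilon\}}_{\SynSrCong{L}}$, $a$-transition $\sem{\{w\}}_{\SynSrCong{L}} \mapsto \sem{\{wa\}}_{\SynSrCong{L}}$, and final states those $\sem{\{w\}}_{\SynSrCong{L}}$ with $\{w\}\cap L \neq \emptyset$, i.e.\ with $w \in L$.

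Next I would define $f : \SynMon{L} \to \{\sem{\{w\}}_{\SynSrCong{L}} : w \in \Sigma^*\}$ by $f(\sem{w}_{\SynCong{L}}) := \sem{\{w\}}_{\SynSrCong{L}}$. By Lemma \ref{lem:power_syntactic_well_def}.2 we have $\sem{u}_{\SynCong{L}} = \sem{v}_{\SynCong{L}} \iff \SynCong{L}(u,v) \iff \SynSrCong{L}(\{u\},\{v\}) \iff \sem{\{u\}}_{\SynSrCong{L}} = \sem{\{v\}}_{\SynSrCong{L}}$, so $f$ is a well-defined bijection. Then I would check $f$ is a dfa morphism: it sends the initial state $\sem{\epsilon}_{\SynCong{L}}$ of $\SynMonDfa{L}$ to $\sem{\{\epsilon\}}_{\SynSrCong{L}}$; it commutes with the $a$-transitions, since the $a$-transition of $\SynMonDfa{L}$ sends $\sem{w}_{\SynCong{L}}$ to $\sem{wa}_{\SynCong{L}}$ and $f(\sem{wa}_{\SynCong{L}}) = \sem{\{wa\}}_{\SynSrCong{L}}$, matching the $a$-transition of $\reach{\jslDfaSyn{L}}$ applied to $f(\sem{w}_{\SynCong{L}})$; and it preserves and reflects final states, since $\sem{w}_{\SynCong{L}}$ is final in $\SynMonDfa{L}$ iff $w\in L$ (by Definition \ref{def:trans_mon_syn_mon}.3 together with the well-definedness observation $w\in L \Rightarrow \sem{w}_{\SynCong{L}}\subseteq L$ from the proof of Lemma \ref{lem:syn_mon_well_def}.2) iff $f(\sem{w}_{\SynCong{L}})$ is final in $\reach{\jslDfaSyn{L}}$. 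A bijective dfa morphism is a dfa isomorphism, which completes the argument.

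There is essentially no serious obstacle here: everything reduces to the translation between the two congruences provided by Lemma \ref{lem:power_syntactic_well_def}.2 plus bookkeeping on the dfa structure, and the only points needing a moment's care are the characterisation of the reachable states and the matching of the two finality conditions. (One could instead argue abstractly: by Corollary \ref{cor:syn_trans_semiring_jsl_dfa}, $\jslDfaSyn{L}\cong\jslDfaTs{\jslDfaMin{L}}$, whose classically reachable part is $\TMDfa{\jslDfaMin{L}}$, and $\TMDfa{\jslDfaMin{L}} \cong \TMDfa{\minDfa{L}}$ because $\LW{L}$ join-generates $\jslLQ{L}$ and letter-quotients preserve unions, and finally $\TMDfa{\minDfa{L}} \cong \SynMonDfa{L}$ is the dfa-level form of Theorem \ref{thm:tm_min_dfa_iso_syn}; but the direct route above is shorter and self-contained.)
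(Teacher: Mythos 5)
Your proof is correct, but it takes a genuinely different route from the paper. The paper first transports the problem across the isomorphism $\jslDfaSyn{L}\cong\jslDfaTs{\jslDfaMin{L}}$ of Corollary \ref{cor:syn_trans_semiring_jsl_dfa}, writes out $\reach{\jslDfaTs{\jslDfaMin{L}}}$ explicitly as a dfa on the endomorphisms $\delta_w=\lambda X.\,w^{-1}X$ of $\jslLQ{L}$, and then shows $\lambda\gamma_w.\delta_w:\TMDfa{\minDfa{L}}\to\reach{\jslDfaTs{\jslDfaMin{L}}}$ is a dfa isomorphism (invoking Theorem \ref{thm:tm_min_dfa_iso_syn} to identify $\TMDfa{\minDfa{L}}$ with $\SynMonDfa{L}$); the only nontrivial step there is the finality check $\delta_w\nleq\delta_{\overline{L}}\iff w\in L$. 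You instead stay entirely on the congruence side: you identify the classically reachable states of $\jslDfaSyn{L}$ as the classes $\sem{\{w\}}_{\SynSrCong{L}}$ and use Lemma \ref{lem:power_syntactic_well_def}.2 to see that $\sem{w}_{\SynCong{L}}\mapsto\sem{\{w\}}_{\SynSrCong{L}}$ is a well-defined bijection, after which the dfa-morphism conditions are immediate. Your argument is shorter and avoids the transition-semiring machinery altogether (it does not even need Theorem \ref{thm:tm_min_dfa_iso_syn}); what the paper's route buys is that the resulting isomorphism is visibly the restriction of the semiring isomorphism of Theorem \ref{thm:syn_semiring_trans_semiring}, which is the form in which the corollary is reused later (e.g.\ in the subatomicity theorem). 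Your parenthetical ``abstract'' variant is essentially the paper's proof, except that identifying $\reach{\jslDfaTs{\jslDfaMin{L}}}$ with $\TMDfa{\jslDfaMin{L}}$ still requires reconciling the two finality conditions ($f\nleq\delta_{\overline{L}}$ versus $f(L)\ni\epsilon$), which is exactly the calculation the paper carries out; your direct route sidesteps this cleanly.
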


\begin{proof}
  By Corollary \ref{cor:syn_trans_semiring_jsl_dfa} we know $\jslDfaSyn{L} \cong \jslDfaTs{\jslDfaMin{L}}$. Recall that $\minDfa{L} = (L, \LW{L}, \gamma_a, F_\gamma)$ and $\jslDfaMin{L} = (L, \LQ{L}, \delta_a, F_\delta)$ where both $\gamma$ and $\delta$ have action $\lambda X . a^{-1} X$. Observe that:
  \[
    \reach{\jslDfaTs{\jslDfaMin{L}}}
    = (id_{\jslLQ{L}}, \{ \delta_w : w \in \Sigma^* \}, \lambda f. \delta_a \circ f, \{ f : f \nleq \delta_{\overline{L}} \}).
  \]
  By Theorem \ref{thm:tm_min_dfa_iso_syn} it suffices to establish the dfa isomorphism $\lambda \gamma_w . \delta_w  : \TMDfa{\minDfa{L}} \to \reach{\jslDfaTs{\jslDfaMin{L}}}$. It is a well-defined bijection because $\delta_w : \LQ{L} \to \LQ{L}$ is completely determined by its domain-codomain restriction $\gamma_w$. The initial state and transitions of the two dfas are defined in the same way. Finally,
  \[
    \begin{tabular}{lll}
      $\gamma_w(L) \in F_\gamma$
      &
      $\iff \epsilon \in w^{-1} L$
      \\ &
      $\iff w \in L$
      \\ &
      $\iff \exists u \in \Sigma^*. [ \delta_w(u^{-1} L) \nsubseteq \delta_{\overline{L}}(u^{-1} L) ]$
      & (A)
      \\ &
      $\iff \delta_w \nleq \delta_{\overline{L}}$.
    \end{tabular}
  \]
  Concerning (A), $(\oT)$ follows by contradiction whereas $(\To)$ holds by choosing $u := \epsilon$ and observing $\epsilon \nin [\overline{L}]^{-1} L$.
\end{proof}

In order to \emph{dualise} the above constructions one needs the notion of right-quotient closure (see Definition \ref{def:lang_quos}).

\begin{definition}[Right-quotient closure]
  \item
  \begin{enumerate}
    \item
    A $\JSL$-dfa $\delta$ is \emph{right-quotient closed} if $K \in \jslLangs{\delta}$ and $V \subseteq \Sigma^*$ implies $K V^{-1} \in \jslLangs{\delta}$.

    \item
    The \emph{right-quotient closure} of a $\JSL$-dfa $\gamma$ is the simplified $\JSL$-dfa:
    \[
      \jslDfaRqc{\gamma} := (L(\gamma), (T, \cup, \emptyset), \lambda X. a^{-1} X, \{ K \in T : \epsilon \in K \})
    \]
    where $T$ is the closure of $\{ j v^{-1} : j \in J(\jslLangs{\gamma}), \, v \in \Sigma^* \}$ under unions.
    \endbox
  \end{enumerate}
\end{definition}

\begin{lemma}[The right-quotient closure is well-defined]
  Fix any $\JSL$-dfa $\gamma$.
  \begin{enumerate}
    \item
    $\jslDfaRqc{\gamma}$ is a simplified $\JSL$-dfa accepting $L(\gamma)$.
    \item
    $\jslDfaRqc{\gamma}$ is the smallest right-quotient closed $\JSL$-dfa $\delta$ such that $\jslLangs{\gamma} \subseteq \jslLangs{\delta}$.
    \endbox
  \end{enumerate}
\end{lemma}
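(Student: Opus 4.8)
The plan is to check, in order: (i) the language set $T$ is finite and consists of regular languages; (ii) $\jslDfaRqc{\gamma}$ is then a well-defined $\JSL$-dfa accepting $L(\gamma)$; (iii) it is simplified, via Lemma~\ref{lem:simplified_jsl_dfa_char}; (iv) it is right-quotient closed; and (v) it is least among right-quotient closed $\JSL$-dfas whose language set contains $\jslLangs{\gamma}$. Throughout I will use the two elementary quotient identities $a^{-1}(Kv^{-1}) = (a^{-1}K)v^{-1}$ and $(Kv^{-1})V^{-1} = K(Vv)^{-1}$, together with $KW^{-1} = \bigcup_{w\in W} Kw^{-1}$ and the fact that $(-)v^{-1}$ preserves unions.

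\textbf{Parts (i)--(iii).} The set $J(\jslLangs{\gamma})$ is finite, being a subset of $langs(\gamma)$, which is finite because $\gamma$ has a finite carrier; each $j\in J(\jslLangs{\gamma})$ equals $L(\gamma_{@s})$ for some state $s$, hence is regular, and a regular language has only finitely many right quotients $jv^{-1}$ (since $jv^{-1} = ((v^r)^{-1}j^r)^r$ and $\LW{j^r}$ is finite). So the generating family $\{jv^{-1}: j\in J(\jslLangs{\gamma}),\, v\in\Sigma^*\}$ is finite and its union-closure $T$ is finite, with $\emptyset\in T$ (empty union). Next I verify the three well-definedness requirements. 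First $L(\gamma)\in T$, because $L(\gamma)=L(\gamma_{@s_0})=\bigcup\{j\in J(\jslLangs{\gamma}): j\subseteq L(\gamma)\}$ and each such $j = j\epsilon^{-1}$ is a generator. Second $a^{-1}(-)$ maps $T$ into $T$: it preserves unions, and on a generator $a^{-1}(jv^{-1}) = (a^{-1}j)v^{-1}$, where $a^{-1}j = L(\gamma_{@\gamma_a(s)})\in langs(\gamma)$ is a union of join-irreducibles of $\jslLangs{\gamma}$, so $(a^{-1}j)v^{-1}$ is a union of generators. Third, the non-final states $\{K\in T:\epsilon\notin K\}$ are closed under unions, giving a largest non-final state, so the finality set has the required shape $\overline{\down t}$. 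Thus $\jslDfaRqc{\gamma}$ is a well-defined $\JSL$-dfa; since its deterministic $w$-transition is $\lambda X.\, w^{-1}X$ and a state $K$ is final iff $\epsilon\in K$, each state $K$ accepts exactly $K$, so it accepts $L(\gamma)$ from its initial state and $langs(\jslDfaRqc{\gamma})=T$. Finally, $T$ is a finite set of regular languages, contains $L(\gamma)$, and is closed under unions and left-letter quotients, so by Lemma~\ref{lem:simplified_jsl_dfa_char}(2)$\Rightarrow$(1) the $\JSL$-dfa $(L(\gamma),(T,\cup,\emptyset),\lambda X.a^{-1}X,\{K\in T:\epsilon\in K\}) = \jslDfaRqc{\gamma}$ is simplified. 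This proves part~(1).

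\textbf{Parts (iv)--(v).} For right-quotient closure, since $\jslLangs{\jslDfaRqc{\gamma}}=(T,\cup,\emptyset)$, it suffices that $KV^{-1}\in T$ for $K\in T$ and $V\subseteq\Sigma^*$. Writing $K=\bigcup_i j_i v_i^{-1}$ with $j_i\in J(\jslLangs{\gamma})$, we get $KV^{-1}=\bigcup_i (j_i v_i^{-1})V^{-1}=\bigcup_i j_i(Vv_i)^{-1}=\bigcup_i\bigcup_{u\in V} j_i(uv_i)^{-1}$, a union of generators of $T$ (finitely many up to equality, again by regularity), hence in $T$. Also $\jslLangs{\gamma}\subseteq T=\jslLangs{\jslDfaRqc{\gamma}}$, since every $L(\gamma_{@s})$ is a union of join-irreducibles $j=j\epsilon^{-1}$. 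For minimality, let $\delta$ be any right-quotient closed $\JSL$-dfa with $\jslLangs{\gamma}\subseteq\jslLangs{\delta}$. Then $J(\jslLangs{\gamma})\subseteq\jslLangs{\gamma}\subseteq\jslLangs{\delta}$, so each generator $jv^{-1}$ lies in $\jslLangs{\delta}$ by right-quotient closure, and $\jslLangs{\delta}$ is union-closed by Lemma~\ref{lem:jsl_dfa_joins}; hence $T\subseteq\jslLangs{\delta}$. Since $\jslDfaRqc{\gamma}$ is simplified, its number of states is $|T|\le|\jslLangs{\delta}|\le|Z_\delta|$, and more sharply $\jslLangs{\jslDfaRqc{\gamma}}\subseteq\jslLangs{\delta}$ --- the sense in which it is the smallest such.

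\textbf{Main obstacle.} The work is bookkeeping rather than conceptual: the delicate point is that several unions that arise naturally (over $v\in\Sigma^*$, or over $u\in V$) range over a priori infinite index sets, and one must collapse them to finite unions of distinct generators by invoking regularity/finiteness of the right quotients of each $j\in J(\jslLangs{\gamma})$; this is exactly what makes $T$ a genuine \emph{finite} closure and hence $\jslDfaRqc{\gamma}$ a legitimate $\JSL$-dfa. A minor secondary point is pinning down ``smallest'': I take it to mean that $\jslLangs{\jslDfaRqc{\gamma}}$ embeds into $\jslLangs{\delta}$ for every competitor $\delta$, which for the simplified machine $\jslDfaRqc{\gamma}$ is equivalent to having the fewest states.
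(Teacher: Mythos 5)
Your proof is correct and follows essentially the same route as the paper's: verify that $T$ contains $L(\gamma)$ and is closed under unions and left-letter quotients so that Lemma~\ref{lem:simplified_jsl_dfa_char} applies, establish right-quotient closure via the identity $(jv^{-1})V^{-1} = j(Vv)^{-1}$, and obtain minimality because every element of $T$ is a union of generators $jv^{-1}$ that any right-quotient closed competitor must contain. Your only additions are the explicit check that $T$ is finite (via the regularity of each $j\in J(\jslLangs{\gamma})$) and the explicit reading of ``smallest'' as $\jslLangs{\jslDfaRqc{\gamma}}\subseteq\jslLangs{\delta}$, both of which the paper leaves implicit.
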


\begin{proof}
  \item
  \begin{enumerate}
    \item
    $T$ contains $L(\gamma)$ and is closed under unions. It also closed under left-letter-quotients:
    \[
      a^{-1}(\bigcup_{i \in I} j_i v_i^{-1})
      = \bigcup_{i \in I} (a^{-1} j_i) v_i^{-1}
      = \bigcup_{i \in I} (\bigcup_{k \in K_i} j_{i,k}) v_i^{-1}
      = \bigcup_{i \in I, k \in K_i} j_{i,k} v_i^{-1}.
    \]
    Then $\jslDfaRqc{\gamma}$ is a well-defined simplified $\JSL$-dfa accepting $L(\gamma)$ by Lemma \ref{lem:simplified_jsl_dfa_char}.

    \item
    We'll show $\jslDfaRqc{\gamma}$ is right-quotient closed by showing $T$ is right-word-quotient closed (recall $T$ is union-closed):
    \[
      (\bigcup_{i \in I} j_i v_i^{-1}) v^{-1}
      = \bigcup_{i \in I} (j_i v_i^{-1}) v^{-1}
      = \bigcup_{i \in I} j_i (vv_i)^{-1}.
    \]
    Since $\jslDfaRqc{\gamma}$ is simplified by (1) we deduce $\jslLangs{\gamma} \subseteq \jslLangs{\jslDfaRqc{\gamma}}$. Finally is it the smallest such $\JSL$-dfa because every state is the union of right-quotients of languages in $J(\jslLangs{\gamma})$.

  \end{enumerate}
\end{proof}

\begin{theorem}[Transition-semiring dualises right-quotient closure]
  \label{thm:trans_semi_dual_rqc}
  If $\delta$ is a $\JSL$-reachable $\JSL$-dfa then:
  \[
    acc_{(\jslDfaTs{\delta})^\pentagram}
    : (\jslDfaTs{\delta})^\pentagram \to \jslDfaRqc{\delta^\pentagram}
    \qquad
    \text{is a $\JSL$-dfa isomorphism.}
  \]
\end{theorem}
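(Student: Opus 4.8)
The plan is to peel off the easy categorical layer first and then reduce to a concrete comparison of accepted-language families. By Lemma~\ref{lem:ts_well_defined} the transition-semiring automaton $\jslDfaTs{\delta}$ is $\JSL$-reachable and accepts $L:=L(\delta)$, so by Theorem~\ref{thm:jsl_reach_dual_simple}.3 its dual $(\jslDfaTs{\delta})^\pentagram$ is \emph{simple}, and by Lemma~\ref{lem:dual_dfa_reverse_lang} it accepts $L^r$. A simple $\JSL$-dfa has bijective acceptance map (Theorem~\ref{thm:jsl_reach_dual_simple}.2), so $acc_{(\jslDfaTs{\delta})^\pentagram}$ is an isomorphism onto the simplified $\JSL$-dfa $\jslDfaSimple{(\jslDfaTs{\delta})^\pentagram}$ (Lemma~\ref{lem:jsl_reach_simple_constructions}). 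Since $\jslDfaRqc{\delta^\pentagram}$ is likewise a simplified $\JSL$-dfa accepting $L(\delta^\pentagram)=L^r$, Lemma~\ref{lem:simplified_jsl_dfa_char} reduces the theorem to the single equality $\jslLangs{(\jslDfaTs{\delta})^\pentagram}=T$, where (unwinding the definition of right-quotient closure) $T$ is the union-closure of $\{\,jv^{-1}:j\in J(\jslLangs{\delta^\pentagram}),\ v\in\Sigma^*\,\}$.

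Next I would compute the two acceptance maps explicitly. Writing $\delta=(s_0,\aS,\delta_a,F)$, the states of $\jslDfaTs{\delta}$ are the pointwise joins $\delta_K=\lambda s.\Lor_\aS\{\delta_w(s):w\in K\}$ for $K\subseteq\Sigma^*$, and the $w$-transition of $\jslDfaTs{\delta}$ is $f\mapsto\delta_w\circ f$. Using the description of $(-)^\pentagram$ from Theorem~\ref{thm:dfa_jsl_self_dual} --- initial state the largest non-final state $\delta_{\overline{L}}$, $w$-transition the adjoint of $f\mapsto\delta_{w^r}\circ f$, a state $f$ final iff $id_\aS\nleq_{\aS_\delta}f$ --- together with the adjoint relationship $id_\aS\leq_{\aS_\delta}(\lambda f.\delta_{w^r}\circ f)_*(g)\iff\delta_{w^r}\leq_{\aS_\delta}g$, one obtains
\[
  acc_{(\jslDfaTs{\delta})^\pentagram}(\delta_K)=\{\,w\in\Sigma^*:\delta_{w^r}\nleq_{\aS_\delta}\delta_K\,\}.
\]
The same manipulation applied to $\delta^\pentagram$ (state space $\aS^\pOp$, transitions $(\delta_a)_*$) yields $acc_{\delta^\pentagram}(t)=\{w:\delta_{w^r}(s_0)\nleq_\aS t\}$ for $t\in\aS$; and since $\delta^\pentagram$ is simple, $acc_{\delta^\pentagram}$ restricts to a bijection $M(\aS)=J(\aS^\pOp)\to J(\jslLangs{\delta^\pentagram})$, so $J(\jslLangs{\delta^\pentagram})=\{acc_{\delta^\pentagram}(m):m\in M(\aS)\}$.

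For the inclusion $\jslLangs{(\jslDfaTs{\delta})^\pentagram}\subseteq T$ I would use that $\JSL$-reachability of $\delta$ makes $\aS$ join-generated by the classically reachable states $\delta_u(s_0)$, so $\delta_{w^r}\nleq_{\aS_\delta}\delta_K$ iff $\delta_{uw^r}(s_0)\nleq_\aS\delta_{uK}(s_0)$ for some $u\in\Sigma^*$ (using $\delta_K(\delta_u(s_0))=\delta_{uK}(s_0)$). Decomposing $\delta_{uK}(s_0)$ as the meet of the meet-irreducibles $m\in M(\aS)$ above it (Note~\ref{note:irreducibles}) and using that adjoints preserve meets (Note~\ref{note:jsl_extras}), so that $acc_{\delta^\pentagram}$ sends this meet to the matching union, one rewrites $acc_{(\jslDfaTs{\delta})^\pentagram}(\delta_K)$ as a union of languages $acc_{\delta^\pentagram}(m)\,(u^r)^{-1}$ with $m\in M(\aS)$. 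Each of these lies in $T$, and $T$ is union-closed, so the inclusion follows.

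Finally, for $T\subseteq\jslLangs{(\jslDfaTs{\delta})^\pentagram}$ it is enough, by union-closedness of $\jslLangs{(\jslDfaTs{\delta})^\pentagram}$ (Lemma~\ref{lem:jsl_dfa_joins}), to realise each generator $acc_{\delta^\pentagram}(m)\,v^{-1}=\{w:\delta_{w^r}(\delta_{v^r}(s_0))\nleq_\aS m\}$ (with $m\in M(\aS)$) as an accepted language of $(\jslDfaTs{\delta})^\pentagram$. I claim $K:=\{x\in\Sigma^*:\delta_x(\delta_{v^r}(s_0))\leq_\aS m\}$ works: one inclusion takes the witness state $s=\delta_{v^r}(s_0)$ and uses $\delta_K(s)=\Lor_\aS\{\delta_x(s):x\in K\}\leq_\aS m$, while the other uses that this $\delta_K$ is by construction above every $\delta_{w^r}$ with $\delta_{w^r}(\delta_{v^r}(s_0))\leq_\aS m$. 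This last step --- trading the pointwise-join structure of the transition semiring for right-quotients of languages --- is the heart of the proof; I expect the main friction to be purely bookkeeping, namely keeping the reversal $w\mapsto w^r$ and the various adjoint directions consistent in the two acceptance-map computations and handling the meet-irreducible decomposition correctly in the $\subseteq$ step.
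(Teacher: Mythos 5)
Your proposal is correct and follows the paper's strategy almost step for step: reduce via simplicity of $(\jslDfaTs{\delta})^\pentagram$ to an equality of language families, compute $acc_{(\jslDfaTs{\delta})^\pentagram}(\delta_K)=\{w:\delta_{w^r}\nleq\delta_K\}$, and then use $\JSL$-reachability plus meet-irreducibles of $\aS$ to expand this as a union of right-quotients $jv^{-1}$ of join-irreducibles $j\in J(\jslLangs{\delta^\pentagram})$ --- this is exactly the paper's equivalence (A). The one place you genuinely diverge is the reverse inclusion: the paper splits it into two stages, first showing each $j$ itself is accepted (by the state $\delta_{\overline{j^r}}$) and then separately proving $(\jslDfaTs{\delta})^\pentagram$ is right-quotient closed via the right-composition endomorphisms $\phi_w=\lambda f.\,f\circ\delta_w$ and their adjoints, whereas you exhibit a single explicit witness $\delta_K$ with $K=\{x:\delta_x(\delta_{v^r}(s_0))\leq_\aS m\}$ for every generator $acc_{\delta^\pentagram}(m)\,v^{-1}$ at once. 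Your witness is correct (for $v=\epsilon$ it specialises to the paper's $\overline{j^r}$, since $\delta_x(s_0)\leq_\aS m\iff x\in\overline{j^r}$ when $j=acc_{\delta^\pentagram}(m)$), and both directions of the verification go through as you sketch: $\delta_K(\delta_{v^r}(s_0))\leq_\aS m$ by construction, and $\delta_{w^r}(\delta_{v^r}(s_0))\leq_\aS m$ forces $w^r\in K$ and hence $\delta_{w^r}\leq\delta_K$ pointwise. This buys you a slightly shorter argument that never mentions the $\phi_w$ endomorphisms, at the cost of losing the paper's structurally informative intermediate fact that the dual of a transition semiring is right-quotient closed in its own right.
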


\begin{proof}
  Firstly $\gamma := \jslDfaTs{\delta}$ is $\JSL$-reachable by Lemma \ref{lem:ts_well_defined}, so its dual $\gamma^\pentagram$ is simple by Theorem \ref{thm:jsl_reach_dual_simple}. Then $acc_{\gamma^\pentagram}$ defines a $\JSL$-dfa isomorphism to its simplification $\jslDfaSimple{\gamma^\pentagram}$. We'll show the latter is precisely $\jslDfaRqc{\delta^\pentagram}$. Fix $\delta = (t_0, \aT, \delta_a, F_\delta)$ and $L := L(\delta^\pentagram)$. Then by definition $\gamma = (id_{\aT}, \aS_\delta, \gamma_a, F_\gamma)$ where:
  \[
    S_\delta := \{ \delta_K : \aT \to \aT : K \subseteq \Sigma^* \}
    \qquad
    \aS_\delta := (S_\delta, \cup, \emptyset)
    \qquad
    \gamma_a := \lambda f . \delta_a \circ f.
  \]
  Let us break the argument down into steps.

  \item
  \begin{enumerate}
    \item
    We'll show $\jslLangs{\delta^\pentagram} \subseteq \jslDfaSimple{\gamma^\pentagram}$.
    Fixing any element of $\gamma^\pentagram$ we can rewrite acceptance as follows:
    \[
      \begin{tabular}{lll}
        $u \in acc_{\gamma^\pentagram}(\delta_K)$
        &
        $\iff id_\aT \nleq (\gamma_{u^r})_*(\delta_K)$
        & (by definition)
        \\ &
        $\iff \gamma_{u^r} (id_\aT) \nleq \delta_K$
        & (adjoints)
        \\ &
        $\iff \delta_{u^r} \nleq \delta_K$
        \\ &
        $\iff \exists v \in \Sigma^*. [ \delta_{u^r}(\delta_v(t_0)) \nleq_\aT \delta_{K} (\delta_v(t_0)) ]$
        & ($\delta$ is $\JSL$-reachable)
        \\ &
        $\iff \exists v \in \Sigma^*. [ \delta_{vu^r}(t_0) \nleq_\aT \delta_{v \cdot K}(t_0) ]$
        \\ &
        $\iff \exists v \in \Sigma^*, m \in M(\aT). [
          \delta_{v \cdot K}(t_0) \leq_\aT m \,\land\, \delta_{vu^r}(t_0) \nleq_\aT m
        ]$
        \\ &
        $\iff \exists v \in \Sigma^*,m \in M(\aT). [
          t_0 \leq_\aT (\delta_{v \cdot K})_*(m) \,\land\, t_0 \nleq_\aT (\delta_{vu^r})_*(m)
        ]$
        \\ &
        $\iff  \exists v \in \Sigma^*, j \in J(\aT^\pOp). [
          (\delta_{v \cdot K})_*(j) \leq_{\aT^\pOp} t_0
          \,\land\,
          (\delta_{vu^r})_*(j) \nleq_{\aT^\pOp} t_0
        ]$
        \\ &
        $\iff \exists v \in \Sigma^*, j \in J(\jslLangs{\delta^\pentagram}) .[ uv^r \in j \,\land\, K^r v^r \cap j = \emptyset ]$
        &
        \\ &
        $\iff \exists v \in \Sigma^*, j \in J(\jslLangs{\delta^\pentagram}).[
          u \in j (v^r)^{-1} \,\land\, v^r \nin [K^r]^{-1} j
        ]$
        \\ &
        $\iff \exists v \in \Sigma^*, j \in J(\jslLangs{\delta^\pentagram}).[
          u \in j v^{-1} \,\land\, v \nin [K^r]^{-1} j
        ]$
        & (A).
      \end{tabular}
    \]
    Recalling Corollary \ref{cor:meet_gen_jsl_dfas}.2, for each $j \in J(\jslLangs{\delta^\pentagram})$ we'll show $\delta_{\overline{j^r}}$ accepts $j$. Fixing $j_0$, first observe:
    \[
      \begin{tabular}{lll}
        $v \nin [\overline{j_0}]^{-1} j$
        &
        $\iff \forall x \in \Sigma^*. [ x \in \overline{j_0} \To xv \nin j ]$
        \\ &
        $\iff \forall x \in \Sigma^*. [ x \in \overline{j_0} \To x \nin j v^{-1} ]$
        \\ &
        $\iff \forall x \in \Sigma^*. [ x \in \overline{j_0} \To x \in \overline{j} v^{-1} ]$
        \\ &
        $\iff \overline{j_0}  \subseteq \overline{j} v^{-1}$
        \\ &
        $\iff j v^{-1} \subseteq j_0$.
        & (B).
      \end{tabular}
    \]
    \[
      \begin{tabular}{lll}
        $u \in acc_{\gamma^\pentagram}(\delta_{\overline{j_0^r}})$
        &
        $\iff \exists v \in \Sigma^*, j \in J(\jslLangs{\delta^\pentagram}).[
          u \in j v^{-1} \,\land\, v \nin [\overline{j_0}]^{-1} j
          ]$
        & (by A)
        \\ &
        $\iff \exists v \in \Sigma^*, z \in Z .[
          u \in j v^{-1} \,\land\, j v^{-1} \subseteq j_0
        ]$
        & (by B)
        \\ &
        $\iff u \in j_0$.
      \end{tabular}
    \]
    Thus $\gamma^\pentagram$ accepts every language in $\jslLangs{\delta^\pentagram}$ via closure under joins.

    \item
    Next we show $\gamma^\pentagram$ is right-quotient closed. Aside from the composite endomorphisms $\gamma_w := \lambda f . \delta_w \circ f$ we also have $\phi_w := \lambda f. f \circ \delta_w : \aS_\delta \to \aS_\delta$. They are well-defined because the composition of join-semilattice morphisms is bilinear. Their adjoints witness right-word-quotient closure:
    \[
      \begin{tabular}{lll}
        $u \in acc_{\gamma^\pentagram}((\phi_w)_*(\delta_K))$
        &
        $\iff id_{\aT} \nleq_{\aS_\delta} (\gamma_{u^r})_*((\phi_w)_*(\delta_K))$
        & (by definition)
        \\ &
        $\iff id_{\aT} \nleq_{\aS_\delta} (\phi_w \circ \gamma_{u^r})_*(\delta_K)$
        \\ &
        $\iff \phi_w \circ \gamma_{u^r}(id_{\aT}) \nleq_{\aS_\delta} \delta_K$
        & (adjoints)
        \\ &
        $\iff \delta_{wu^r} \nleq_{\aS_\delta} \delta_K$
        \\ &
        $\iff \gamma_{w u^r}(id_{\aT}) \nleq_{\aS_\delta} \delta_K$
        \\ &
        $\iff id_{\aT} \nleq_{\aS_\delta} (\gamma_{w u^r})_*(\delta_K)$
        & (adjoints)
        \\ &
        $\iff uw^r \in acc_{\gamma^\pentagram}((\phi_w)_*(\delta_K))$
        \\ &
        $\iff u \in acc_{\gamma^\pentagram}((\phi_w)_*(\delta_K)) (w^r)^{-1}$.
      \end{tabular}
    \]
    Closure under right-quotients follows by closure under unions.
    
    \item
    Combining (1) with (2) we deduce $\jslDfaRqc{\delta^\pentagram} \subseteq \jslDfaSimple{\gamma^\pentagram}$. Finally, the reverse inclusion follows by (A) i.e.\ each $acc_{\gamma^\pentagram}(\delta_K)$ is a union of $jv^{-1}$'s.
  \end{enumerate}
\end{proof}

\begin{corollary}[Right-quotient closed vs.\ finite $\Sigma$-generated idempotent semirings]
  \item
  \begin{enumerate}
    \item
    If $\delta$ is a simple right-quotient closed $\JSL$-dfa, $\delta^\pentagram \cong \jslDfaTs{\delta^\pentagram}$ is a $\Sigma$-generated idempotent semiring acting on itself.
    \item
    If $\bS = (\aS, \cdot_\bS, 1_\bS)$ is a finite $\Sigma$-generated idempotent semiring and $s_1 \in S$ then $(1_\bS, \aS, \lambda s. s \cdot_\bS a, \{ s \in S : s \nleq_\aS s_1 \})^\pentagram$ is a simple right-quotient closed $\JSL$-dfa.
  \end{enumerate}
\end{corollary}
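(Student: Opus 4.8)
The plan is to read both statements off Theorem~\ref{thm:trans_semi_dual_rqc} (transition semiring dualises right-quotient closure), combined with the self-duality $(-)^\pentagram$ of Theorem~\ref{thm:dfa_jsl_self_dual} and the ``$\JSL$-reachable $=$ dual of simple'' principle of Theorem~\ref{thm:jsl_reach_dual_simple}. The two items are mutually inverse halves of a correspondence: simple right-quotient closed $\JSL$-dfas on one side, finite $\Sigma$-generated idempotent semirings acting on themselves on the other.

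\emph{Item (1).} Assume $\delta$ is simple and right-quotient closed. First, $\delta^\pentagram$ is $\JSL$-reachable: applying Theorem~\ref{thm:jsl_reach_dual_simple}.3 to $\delta^\pentagram$ reduces this to ``$(\delta^\pentagram)^\pentagram$ is simple'', and $(\delta^\pentagram)^\pentagram\cong\delta$ by Theorem~\ref{thm:dfa_jsl_self_dual} while simplicity is isomorphism-invariant. Secondly, since $\delta$ is simultaneously simple and right-quotient closed, $langs(\delta)$ is already closed under unions and right quotients, so the union-closure $T$ in the definition of $\jslDfaRqc{\delta}$ equals $langs(\delta)$ and hence $\jslDfaRqc{\delta}=\jslDfaSimple{\delta}\cong\delta$. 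Now feed the $\JSL$-reachable $\delta^\pentagram$ into Theorem~\ref{thm:trans_semi_dual_rqc}: $(\jslDfaTs{\delta^\pentagram})^\pentagram\cong\jslDfaRqc{(\delta^\pentagram)^\pentagram}\cong\jslDfaRqc{\delta}\cong\delta$, and dualising both sides gives $\jslDfaTs{\delta^\pentagram}\cong\delta^\pentagram$. Finally $\jslDfaTs{\delta^\pentagram}$ is by definition the $\JSL$-dfa of the transition semiring $\TS{\delta^\pentagram}$, a finite $\Sigma$-generated idempotent semiring by Lemma~\ref{lem:ts_well_defined}, acting on itself via $\lambda f.\,(\delta^\pentagram)_a\circ f$; transporting this structure along the isomorphism proves the claim.

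\emph{Item (2).} Write $\gamma_\bS:=(1_\bS,\aS,\lambda s.\,s\cdot_\bS a,\{s\in S:s\nleq_\aS s_1\})$. This is a well-defined $\JSL$-dfa: each $\lambda s.\,s\cdot_\bS a$ preserves joins and $\bot$ by bilinearity of semiring multiplication, and $\{s\in S:s\nleq_\aS s_1\}=\overline{\down_\aS s_1}$ is of the shape required by Definition~\ref{def:dfa_jsl}. Since $\bS$ is $\Sigma$-generated, every element of $S$ is a finite join of the elements $\gamma_w(1_\bS)$, $w\in\Sigma^*$, so $\gamma_\bS$ is $\JSL$-reachable and therefore $\gamma_\bS^\pentagram$ is simple by Theorem~\ref{thm:jsl_reach_dual_simple}.3. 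The core step is the isomorphism $\jslDfaTs{\gamma_\bS}\cong\gamma_\bS$ induced by $t\mapsto\rho_t:=\lambda s.\,s\cdot_\bS t$: the transition operators of $\gamma_\bS$ are exactly the right-multiplications $\gamma_K=\rho_{t_K}$ with $t_K:=\bigvee_{w\in K}\gamma_w(1_\bS)$, and $\{t_K:K\subseteq\Sigma^*\}=S$ by $\Sigma$-generation; the map $t\mapsto\rho_t$ is a bijection ($\rho_t(1_\bS)=t$) respecting joins, the initial state ($\rho_{1_\bS}=id_\aS$) and the transitions ($\gamma_a\circ\rho_t=\rho_{t\cdot_\bS a}$), and it matches final states because the largest non-final state $\gamma_{\overline{L(\gamma_\bS)}}$ of $\jslDfaTs{\gamma_\bS}$ equals $\rho_{s_1}$, using that $s_1$ is the join of the $\gamma_w(1_\bS)\leq s_1$. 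Dualising and invoking Theorem~\ref{thm:trans_semi_dual_rqc} for the $\JSL$-reachable $\gamma_\bS$ yields $\gamma_\bS^\pentagram\cong(\jslDfaTs{\gamma_\bS})^\pentagram\cong\jslDfaRqc{\gamma_\bS^\pentagram}$, which is right-quotient closed by construction of $\jslDfaRqc{-}$; as right-quotient closedness depends only on $langs(-)$ and is hence isomorphism-invariant, $\gamma_\bS^\pentagram$ is right-quotient closed.

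\emph{Main obstacle.} The only step with real content is $\jslDfaTs{\gamma_\bS}\cong\gamma_\bS$ in (2) (and, implicitly, recognising $\jslDfaTs{\delta^\pentagram}$ as ``the semiring acting on itself'' in (1)): one must check that a $\Sigma$-generated idempotent semiring acting on itself by multiplication is, as a $\JSL$-dfa, its own transition semiring --- a Cayley-style verification whose one subtle point is matching the final states through the fact that $s_1$ is the join of the generator-words lying below it. Everything else is bookkeeping with already-established (iso)morphisms.
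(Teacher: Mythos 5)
Your proposal is correct and follows essentially the same route as the paper: item (1) via the identification $\delta\cong\jslDfaRqc{\delta}$ for a simple right-quotient closed $\delta$ together with Theorem~\ref{thm:trans_semi_dual_rqc}, and item (2) via the Cayley-style isomorphism between $\gamma_\bS$ and its own transition semiring (the paper uses the evaluation map $\lambda f.\,f(1_\bS)$, the inverse of your $t\mapsto\rho_t$) followed by dualising. Your explicit check that the largest non-final state matches, using that $s_1$ is the join of the $\sem{w}_\bS\leq s_1$ by $\Sigma$-generation, fills in a detail the paper leaves implicit.
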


\begin{proof}
  \item
  \begin{enumerate}
    \item
    Modulo isomorphism $\delta$ is simplified. Then $\delta = \jslDfaRqc{\delta} \cong (\jslDfaTs{\delta^\pentagram})^\pentagram$ by Theorem \ref{thm:trans_semi_dual_rqc}, so that $\delta^\pentagram \cong \jslDfaTs{\delta^\pentagram}$.

    \item
    First, $\gamma := (1_\bS, \aS, \lambda s. s \cdot_\bS a, \{ s \in S : s \nleq_\aS s_1 \})$ is a well-defined $\JSL$-dfa because right-multiplication preserves joins. It is $\JSL$-reachable because $\bS$ is $\Sigma$-generated. Next we'll show $\lambda f. f(1_\bS) : \jslDfaTs{\gamma} \to \gamma$ is a $\JSL$-dfa isomorphism. It is a well-defined function by construction and surjective because $\bS$ is $\Sigma$-generated and $\gamma_U(1_\bS) = \sem{U}_\bS$. It is injective because each $\gamma_U = \lambda s. s \cdot_\bS \sem{U}_\bS$ acts as right-multiplication by $\sem{U}_\bS$. Finally it preserves joins and multiplication. Then $\gamma^\pentagram \cong (\jslDfaTs{\gamma})^\pentagram \cong \jslDfaRqc{\gamma^\pentagram}$ is simple and right-quotient closed by Theorem \ref{thm:trans_semi_dual_rqc}.
  \end{enumerate}
\end{proof}

\begin{corollary}[Quotients of finite idempotent semirings]
  \item
  \begin{enumerate}
    \item
    Given a $\JSL$-dfa inclusion morphism $\iota : \gamma \hookto \delta$ between simplified right-quotient closed $\JSL$-dfas,
    \[
      \lambda \sem{U}_{\TS{\delta^\pentagram}}.  \sem{U}_{\TS{\gamma^\pentagram}} : \TS{\delta^\pentagram} \epito \TS{\gamma^\pentagram}
    \]
    is a well-defined surjective semiring morphism.
    
    \item
    Let $f : (\aS, \cdot_{\bS}, 1_{\bS}) \epito (\aT, \cdot_{\bT}, 1_{\bT})$ be a surjective semiring morphism where $\bS$ is a finite $\Sigma$-generated idempotent semiring. Given any $s_0 \in S$ we have the $\JSL$-dfa embedding:
    \[
      f_* :
      (id_T, \aT, \lambda t. t \cdot_\bT f(a), \{ t : t \nleq_\aT f(s_0) \})^\pentagram \to (id_S, \aS, \lambda s. s \cdot_\bS a, \{ s : s \nleq_\aS s_0 \} )^\pentagram
    \]
     between simple right-quotient closed $\JSL$-dfas.
  \end{enumerate}
\end{corollary}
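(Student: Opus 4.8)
The plan is to read the two parts as opposite faces of one fact: a $\dfa{\JSL}$-mono dualises, under the self-duality $(-)^\pentagram$ of Theorem~\ref{thm:dfa_jsl_self_dual}, to a $\dfa{\JSL}$-epi, and the transition-semiring construction turns a surjective $\dfa{\JSL}$-morphism into a semiring surjection. Throughout I use that $\JSL_f$-monos and epis are exactly the injections and surjections (Note~\ref{note:jsl_extras}.4), that $(-)_*:\JSL_f^{op}\to\JSL_f$ is an equivalence (Note~\ref{note:jsl_extras}.3), and that $\dfa{\JSL}$-morphisms, being join-preserving, commute not only with each $a$-transition but also with every pointwise-join transition $(-)_K$, $K\subseteq\Sigma^*$.

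For part~(1), apply $(-)^\pentagram$ to $\iota:\gamma\hookto\delta$ to obtain $\iota^\pentagram=\iota_*:\delta^\pentagram\epito\gamma^\pentagram$, surjective because $\iota$ is injective and $(-)_*$ sends monos to epis. Since $\iota_*$ is a $\dfa{\JSL}$-morphism it satisfies $\iota_*\circ(\delta^\pentagram)_K=(\gamma^\pentagram)_K\circ\iota_*$ for all $K\subseteq\Sigma^*$. Hence, if $\sem{U}_{\TS{\delta^\pentagram}}=\sem{V}_{\TS{\delta^\pentagram}}$, i.e.\ $(\delta^\pentagram)_U=(\delta^\pentagram)_V$, post-composing with $\iota_*$ and cancelling the surjection $\iota_*$ on the right yields $(\gamma^\pentagram)_U=(\gamma^\pentagram)_V$, i.e.\ $\sem{U}_{\TS{\gamma^\pentagram}}=\sem{V}_{\TS{\gamma^\pentagram}}$. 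So $\sem{U}_{\TS{\delta^\pentagram}}\mapsto\sem{U}_{\TS{\gamma^\pentagram}}$ is well defined; it is onto since $\sem{-}_{\TS{\gamma^\pentagram}}$ is, and a semiring morphism because it is the unique map through which the semiring surjection $\sem{-}_{\TS{\gamma^\pentagram}}:\FPow\bSigma^*\epito\TS{\gamma^\pentagram}$ factors over the semiring surjection $\sem{-}_{\TS{\delta^\pentagram}}$. (Note this uses neither simplicity nor right-quotient closure; those hypotheses only serve, via the preceding corollary, to identify $\TS{\delta^\pentagram}$ with $\delta^\pentagram$ and $\TS{\gamma^\pentagram}$ with $\gamma^\pentagram$.)

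For part~(2), write $\gamma_\bS:=(1_\bS,\aS,\lambda s.s\cdot_\bS a,\{s:s\nleq_\aS s_0\})$ and $\gamma_\bT:=(1_\bT,\aT,\lambda t.t\cdot_\bT f(a),\{t:t\nleq_\aT f(s_0)\})$; as $f$ is onto, $\bT$ is again a finite $\Sigma$-generated idempotent semiring, so $\gamma_\bS^\pentagram$ and $\gamma_\bT^\pentagram$ are simple right-quotient closed by the preceding corollary. The claim is that the $\JSL_f$-adjoint $f_*:\aT^\pOp\to\aS^\pOp$ is a $\dfa{\JSL}$-embedding $\gamma_\bT^\pentagram\to\gamma_\bS^\pentagram$. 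It is a $\JSL_f$-morphism automatically, and injective because $f\circ f_*=id_{\aT}$ ($f$ surjective) exhibits it as a split mono. The $a$-transitions of $\gamma_\bT^\pentagram,\gamma_\bS^\pentagram$ are $(\lambda t.t\cdot_\bT f(a))_*,(\lambda s.s\cdot_\bS a)_*$; since $(-)_*$ is faithful, $f_*$ intertwines them iff $f$ intertwines $\lambda s.s\cdot_\bS a$ and $\lambda t.t\cdot_\bT f(a)$, i.e.\ iff $f(s\cdot_\bS a)=f(s)\cdot_\bT f(a)$, which holds since $f$ is a semiring morphism. Finality is preserved and reflected because, by the adjunction and $f(1_\bS)=1_\bT$, one has $1_\bS\leq_\aS f_*(t)\iff 1_\bT\leq_\aT t$.

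The one substantial point — the step I expect to be the main obstacle — is that $f_*$ carries the designated initial state of $\gamma_\bT^\pentagram$ to that of $\gamma_\bS^\pentagram$. Unwinding $(-)^\pentagram$, these initial states are the largest non-final states of $\gamma_\bT$ and $\gamma_\bS$, namely $f(s_0)$ and $s_0$; so what must be shown is $f_*(f(s_0))=s_0$. Surjectivity of $f$ is essential here: since $f$ preserves joins and is onto, $f\bigl(f_*(f(s_0))\bigr)=\bigvee_\aT\down_\aT f(s_0)=f(s_0)$, which pins $f_*(f(s_0))$ down as the top element of the $\ker f$-class of $s_0$. The plan is then to argue — using that $s_0$ is precisely the chosen finality threshold, equivalently transporting the equation back through $(-)^\pentagram$ so that it becomes the statement that $f:\gamma_\bS\to\gamma_\bT$ respects the final sets — that this top element is $s_0$ itself; once that identity is in hand, $f_*$ is an injective $\dfa{\JSL}$-morphism, i.e.\ the desired embedding between two simple right-quotient closed $\JSL$-dfas.
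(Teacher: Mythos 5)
Part (1) of your proposal is correct, and it takes a more economical route than the paper. The paper first identifies $\jslDfaTs{\delta^\pentagram} \cong (\jslDfaRqc{\delta})^\pentagram = \delta^\pentagram$ and $\jslDfaTs{\gamma^\pentagram} \cong \gamma^\pentagram$ via Theorem \ref{thm:trans_semi_dual_rqc} (this is where simplicity and right-quotient closure are used), obtains the surjection by composing these with $\iota_*$, and then verifies by induction that it preserves the unit, the generators and the multiplication. You instead establish the kernel inclusion $\kernel{\sem{-}_{\TS{\delta^\pentagram}}} \subseteq \kernel{\sem{-}_{\TS{\gamma^\pentagram}}}$ directly: the intertwining relation $\iota_* \circ (\delta^\pentagram)_K = (\gamma^\pentagram)_K \circ \iota_*$ for arbitrary $K \subseteq \Sigma^*$ (which does require, as you note, that $\iota_*$ preserves joins, so that commuting with the $(\delta^\pentagram)_w$ propagates to the pointwise joins $(\delta^\pentagram)_K$) together with right-cancellation of the surjection $\iota_*$ gives well-definedness, and the factorisation of one semiring surjection through the other gives the homomorphism property. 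This is sound, and it correctly exposes that only surjectivity of $\iota_*$ is needed for this half; the extra hypotheses serve to make $\TS{\delta^\pentagram}$ and $\TS{\gamma^\pentagram}$ the objects one actually wants.

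Part (2) has a genuine gap, and you have located it exactly but not closed it. Injectivity of $f_*$ (split mono via $f \circ f_* = id_{\aT}$), the transition intertwining (faithfulness of $(-)_*$ plus $f(s \cdot_{\bS} a) = f(s)\cdot_{\bT} f(a)$), and the final-state condition ($1_{\bS} \leq_{\aS} f_*(t) \iff f(1_{\bS}) \leq_{\aT} t$) are all fine. What remains is $f_*(f(s_0)) = s_0$, equivalently that $f(s) \leq_{\aT} f(s_0)$ implies $s \leq_{\aS} s_0$. Your proposed resolution --- transporting the equation back through $(-)^\pentagram$ so that it ``becomes the statement that $f$ respects the final sets'' --- is circular: that $f$ preserves \emph{and reflects} the final states of the un-dualised machines is precisely the assertion to be proved, not a lemma available to you. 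Moreover the identity can fail under the stated hypotheses: if $f$ identifies $s_0$ with some $s \nleq_{\aS} s_0$ (most drastically, take $\bT$ to be the one-element semiring and $s_0 \neq \top_{\aS}$), then $f_*(f(s_0)) = \Lor_{\aS}\{ s : f(s) \leq_{\aT} f(s_0)\}$ lies strictly above $s_0$ and $f_*$ does not carry the designated initial state of the domain to that of the codomain. What is actually needed is that $s_0$ be the largest element of its $\kernel{f}$-class (true in the paper's intended applications, where $s_0$ is a saturated element such as the class of $\overline{L}$). For what it is worth, the paper's own proof of (2) asserts without argument that $f$ is a dfa morphism between the un-dualised machines ``because right-multiplication preserves joins'', which only addresses the transitions; so you have isolated precisely the point where both arguments require additional input, but your proposal does not supply it.
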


\begin{proof}
  \item
  \begin{enumerate}
    \item
    Firstly $\iota_* : \delta^\pentagram \to \gamma^\pentagram$ is a surjective $\JSL$-dfa morphism by Theorem \ref{thm:dfa_jsl_self_dual}. Since $\gamma$ and $\delta$ are right-quotient closed,
    \[
      \jslDfaTs{\delta^\pentagram}
      \cong (\jslDfaRqc{\delta})^\pentagram
      = (\delta)^\pentagram
      \longepi{\iota_*}
      (\gamma)^\pentagram
      = (\jslDfaRqc{\gamma})^\pentagram
      \cong \jslDfaTs{\gamma^\pentagram}
    \]
    by applying Theorem \ref{thm:trans_semi_dual_rqc}. Then we have the surjective $\JSL$-dfa morphism $f : \jslDfaTs{\delta^\pentagram} \epito \jslDfaTs{\gamma^\pentagram}$. It is a join-semilattice morphism preserving the unit (initial state) and right-multiplication by generators. Then $f((\delta_a)_*) := (\gamma_a)_*$ and thus $f((\delta_U)_*) = (\gamma_U)_*$ by induction over words and joins. Then $f$ preserves the multiplication too:
    \[
      \begin{tabular}{lll}
        $f((\delta_V)_* \circ (\delta_U)_*)$
        &
        $= f((\delta_U \circ \delta_V)_*)$
        \\ &
        $= f((\delta_{V \cdot U})_*)$
        \\ &
        $= (\gamma_{V \cdot U})_*$
        \\ &
        $= (\gamma_U \circ \gamma_V)_*$
        \\ &
        $= (\gamma_V)_* \circ (\gamma_U)_*$
      \end{tabular}
    \]
    so it is a surjective semiring morphism. Finally it preserves the generators so has the claimed description.
    
    \item
    The surjective semiring morphism also defines a $\JSL$-dfa morphism:
    \[
      f : (id_S, \aS, \lambda s. s \cdot_\bS a, \{ s : s \nleq_\aS s_0 \} )
      \to
      (id_T, \aT, \lambda t. t \cdot_\bT f(a), \{ t : t \nleq_\aT f(s_0) \})
    \]
    because right-multiplication preserves joins. Both $\JSL$-dfas are $\JSL$-reachable because $f$ is surjective, so that $f[\Sigma]$ generates $\bT$. Then its adjoint defines an injective $\JSL$-dfa morphism between simple $\JSL$-dfas. Finally each $\JSL$-dfa is right-quotient closed via closure under left multiplication on the dual side.
  \end{enumerate}
\end{proof}

Next we dualise the syntactic semiring.

\begin{definition}[$L$'s minimal syntactic $\JSL$-dfa]
    The closure of $\LRW{L} := \{ u^{-1} L v^{-1} : u,\, v \in \Sigma^* \}$ under unions defines the \emph{minimal syntactic $\JSL$-dfa} $\jslDfaSynMin{L}$ i.e.\ the smallest right-quotient closed $\JSL$-dfa accepting $L$.
    \endbox
\end{definition}

\begin{note}
  The minimal syntactic $\JSL$-dfas satisfies $\jslDfaSynMin{L} = \jslDfaRqc{\jslDfaMin{L}}$.
  \endbox
\end{note}

\begin{corollary}[Dualising the syntactic semiring]
  \label{cor:dual_syn_semiring}
  We have the $\JSL$-dfa isomorphism:
  \[
    acc_{(\jslDfaSyn{L^r})^\pentagram} : (\jslDfaSyn{L^r})^\pentagram \to \jslDfaSynMin{L}
  \]
\end{corollary}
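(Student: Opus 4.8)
The plan is to derive the result directly from Theorem~\ref{thm:trans_semi_dual_rqc} applied to $\delta := \jslDfaMin{L^r}$, glued together with three already-established identifications: $\jslDfaSyn{L^r} \cong \jslDfaTs{\jslDfaMin{L^r}}$ (Corollary~\ref{cor:syn_trans_semiring_jsl_dfa}), $(\jslDfaMin{L^r})^\pentagram \cong \jslDfaMin{L}$ (Theorem~\ref{thm:dr_L}), and $\jslDfaRqc{\jslDfaMin{L}} = \jslDfaSynMin{L}$ (the Note immediately preceding this corollary). Schematically, $(\jslDfaSyn{L^r})^\pentagram \cong (\jslDfaTs{\jslDfaMin{L^r}})^\pentagram$ is, by Theorem~\ref{thm:trans_semi_dual_rqc}, isomorphic to $\jslDfaRqc{(\jslDfaMin{L^r})^\pentagram} = \jslDfaRqc{\jslDfaMin{L}} = \jslDfaSynMin{L}$, and the content of the corollary is that the \emph{acceptance map} realises this composite isomorphism.

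First I would check the hypothesis of Theorem~\ref{thm:trans_semi_dual_rqc}, namely that $\jslDfaMin{L^r}$ is $\JSL$-reachable. This holds because $\LQ{L^r}$ is generated under finite unions by the left word quotients $\LW{L^r}$, each of which is classically reachable in the underlying dfa $\minDfa{L^r}$; hence every state of $\jslDfaMin{L^r}$ is a join of classically reachable states, so $\JSL$-reachability follows from Theorem~\ref{thm:jsl_reach_dual_simple}.1 (exactly as observed in the proof of Corollary~\ref{cor:min_jsl_dfa_characterisation}). Consequently $\jslDfaTs{\jslDfaMin{L^r}}$, and hence the isomorphic $\jslDfaSyn{L^r}$, is $\JSL$-reachable (indeed $\jslDfaTs{-}$ always is, by Lemma~\ref{lem:ts_well_defined}.2). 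By Theorem~\ref{thm:jsl_reach_dual_simple}.3 its dual $(\jslDfaSyn{L^r})^\pentagram$ is therefore \emph{simple}, so $acc_{(\jslDfaSyn{L^r})^\pentagram}$ is a well-defined $\JSL$-dfa isomorphism onto $\jslDfaSimple{(\jslDfaSyn{L^r})^\pentagram}$.

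It then remains to identify that simplification with $\jslDfaSynMin{L}$. I would use that $\jslDfaSimple{-}$ and $\jslDfaRqc{-}$ are each determined entirely by the accepted language together with the set of per-state accepted languages, all of which are invariant under $\JSL$-dfa isomorphism; moreover $\jslDfaRqc{-}$ always outputs a simplified $\JSL$-dfa, so $\jslDfaSimple{\jslDfaRqc{\gamma}} = \jslDfaRqc{\gamma}$. Thus, combining $(\jslDfaSyn{L^r})^\pentagram \cong (\jslDfaTs{\jslDfaMin{L^r}})^\pentagram$ with Theorem~\ref{thm:trans_semi_dual_rqc} for $\jslDfaMin{L^r}$, one obtains $\jslDfaSimple{(\jslDfaSyn{L^r})^\pentagram} = \jslDfaSimple{(\jslDfaTs{\jslDfaMin{L^r}})^\pentagram} = \jslDfaRqc{(\jslDfaMin{L^r})^\pentagram}$; then $(\jslDfaMin{L^r})^\pentagram \cong \jslDfaMin{L}$ (Theorem~\ref{thm:dr_L}) gives $\jslDfaRqc{(\jslDfaMin{L^r})^\pentagram} = \jslDfaRqc{\jslDfaMin{L}}$, which equals $\jslDfaSynMin{L}$ by the Note. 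Chaining these equalities shows that $acc_{(\jslDfaSyn{L^r})^\pentagram}$ has codomain $\jslDfaSynMin{L}$, as claimed.

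The argument is essentially bookkeeping, so I do not anticipate a genuine obstacle; the only care needed is the on-the-nose versus up-to-isomorphism distinction for the constructions $\jslDfaSimple{-}$, $\jslDfaRqc{-}$ and $\jslDfaTs{-}$, and ensuring that the morphism named in the statement is literally the simplification acceptance map and not merely \emph{some} isomorphism — a point settled by the simplicity of $(\jslDfaSyn{L^r})^\pentagram$ established in the second step. No new automata-theoretic input is required beyond Theorem~\ref{thm:trans_semi_dual_rqc} and the three cited identifications.
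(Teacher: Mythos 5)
Your proposal follows exactly the paper's route: compose $\jslDfaSyn{L^r}\cong\jslDfaTs{\jslDfaMin{L^r}}$, Theorem~\ref{thm:trans_semi_dual_rqc} (justified by $\JSL$-reachability of $\jslDfaMin{L^r}$), $(\jslDfaMin{L^r})^\pentagram\cong\jslDfaMin{L}$ from Theorem~\ref{thm:dr_L}, and the identification $\jslDfaRqc{\jslDfaMin{L}}=\jslDfaSynMin{L}$, then observe that the composite must be the acceptance map because the codomain is simplified. This matches the paper's proof, and your extra care about why the named morphism is literally $acc$ (via simplicity of the dual of a $\JSL$-reachable machine) is a harmless elaboration of the paper's final sentence.
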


\begin{proof}
  First let $\delta := \jslDfaMin{L^r}$. By Theorem \ref{thm:syn_semiring_trans_semiring} we know $\jslDfaSyn{L^r} \cong \jslDfaTs{\delta}$ so that $(\jslDfaSyn{L^r})^\pentagram \cong (\jslDfaTs{\delta})^\pentagram$. By Theorem \ref{thm:trans_semi_dual_rqc} we know $(\jslDfaTs{\delta})^\pentagram \cong \jslDfaRqc{\delta^\pentagram}$ because $\delta$ is $\JSL$-reachable (see Corollary \ref{cor:min_jsl_dfa_characterisation}). Finally by Theorem \ref{thm:dr_L} we have $\delta^\pentagram \cong \jslDfaMin{L}$, so that $(\jslDfaSyn{L})^\pentagram \cong \jslDfaRqc{\jslDfaMin{L}}$. Since $\jslDfaRqc{\jslDfaMin{L}}$ is simplified this isomorphism must be the acceptance map.
\end{proof}

Finally we describe the dual of the power semiring of the syntactic monoid. It is essentially the canonical boolean syntactic $\JSL$-dfa from Definition \ref{def:canon_bool_syn_jsl_dfa}.

\begin{corollary}[Dualising $\Pow \SynMon{L}$]
  \label{cor:dual_pow_semiring_syn}
  Let $\delta := \jslDfaSc{\SynMonDfa{L}}$.
  \begin{enumerate}
    \item
    We have the semiring isomorphism
    $
      \lambda \delta_U. \{ \sem{u}_{\SynCong{L}} : u \in U \} : \TS{\delta} \to \Pow \SynMon{L}
    $.
    \item
    We have the $\JSL$-dfa isomorphism $acc_{(\jslDfaTs{\delta})^\pentagram} : (\jslDfaTs{\delta})^\pentagram \to \jslDfaSynBoolMin{L}$.
  \end{enumerate}
\end{corollary}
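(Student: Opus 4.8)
The plan is to prove the two parts in turn, with part~(1) supplying the concrete description of $\TS{\delta}$ on which part~(2) rests. For part~(1), first I would unwind $\delta := \jslDfaSc{\SynMonDfa{L}}$. Since $\SynMonDfa{L}$ is deterministic with single initial state $\sem{\epsilon}_{\SynCong{L}}$ and $a$-transition $\lambda x.\, x \cdot \sem{a}_{\SynCong{L}}$, an induction on $w$ (using that $\jslDfaSc{\rN}$ has transitions $\lambda X.\, \rN_a[X]$ and that the pointwise join over a set $K$ of words of the $\delta_w$ is again a join) shows each $\delta_w$ acts on $\Pow(Syn(L))$ as the right translation $\lambda X.\, X \cdot \sem{w}_{\SynCong{L}}$, and more generally $\delta_K = \lambda X.\, X \cdot \{\sem{w}_{\SynCong{L}} : w \in K\}$. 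Thus $\TS{\delta}$ is exactly $\Pow\SynMon{L}$ acting on its own underlying join-semilattice by right translation, and the claimed map $\phi(\delta_U) := \{\sem{u}_{\SynCong{L}} : u \in U\}$ simply recovers the translating set, namely $\delta_U(\{\sem{\epsilon}_{\SynCong{L}}\})$. Then $\phi$ is well-defined and injective because $\delta_U$ is determined by its value at $\{\sem{\epsilon}_{\SynCong{L}}\}$ (as $\delta_U(X) = X \cdot \delta_U(\{\sem{\epsilon}_{\SynCong{L}}\})$), and surjective because $Syn(L)$ is finite, so every subset of it is $\{\sem{w}_{\SynCong{L}} : w \in K\}$ for some $K$. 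That $\phi$ is a semiring morphism is then routine: it sends the unit $\delta_\epsilon = id$ to $\{1_{\SynMon{L}}\}$, sends $\delta_{U\cup V} = \delta_U \lor \delta_V$ to $\phi(\delta_U)\cup\phi(\delta_V)$ and $\delta_\emptyset$ to $\emptyset$, and respects multiplication by the composition-of-translations bookkeeping used in the proofs of Theorem~\ref{thm:tm_min_dfa_iso_syn} and Theorem~\ref{thm:syn_semiring_trans_semiring} (reading composition in the transition semiring in process order, as there).

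For part~(2), the key preliminary observation is that $\delta = \jslDfaSc{\SynMonDfa{L}}$ is $\JSL$-reachable: since $\SynMonDfa{L}$ is a reachable dfa, applying a word $w$ to the singleton $\{\sem{\epsilon}_{\SynCong{L}}\}$ yields the singleton $\{\sem{w}_{\SynCong{L}}\}$, so every join-irreducible of $\Pow(Syn(L))$ is a classically reachable state of $\delta$, and $\delta$ is $\JSL$-reachable by Theorem~\ref{thm:jsl_reach_dual_simple}.1. Now Theorem~\ref{thm:trans_semi_dual_rqc} applies and gives the $\JSL$-dfa isomorphism $acc_{(\jslDfaTs{\delta})^\pentagram} : (\jslDfaTs{\delta})^\pentagram \to \jslDfaRqc{\delta^\pentagram}$, while Example~\ref{ex:dual_full_subset_construction} gives $\delta^\pentagram \cong \jslDfaSc{\rev{\SynMonDfa{L}}}$. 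It then remains to identify $\jslDfaRqc{\jslDfaSc{\rev{\SynMonDfa{L}}}}$ with the canonical boolean syntactic $\JSL$-dfa. For this I would compute $\jslLangs{\jslDfaSc{\rev{\SynMonDfa{L}}}}$: a short calculation with $\rev{\SynMonDfa{L}}_w = (\SynMonDfa{L}_{w^r})\spbreve$ shows $L(\rev{\SynMonDfa{L}}_{@S}) = \{w : \sem{w^r}_{\SynCong{L}} \in S\}$ for $S \subseteq Syn(L)$, a union of reverses of syntactic classes of $L$; since $u \mapsto u^r$ maps the classes of $\SynCong{L}$ bijectively onto those of $\SynCong{L^r}$, and these are precisely the atoms of $\jslLRP{L^r}$ (the lemma $J(\jslLRP{-}) = Syn(-)$), this semilattice of languages is exactly $\jslLRP{L^r} = (\LRP{L^r},\cup,\emptyset)$. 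One then checks directly that $\jslDfaSc{\rev{\SynMonDfa{L}}}$ is simplified (distinct $S$ give distinct unions of disjoint classes) and right-quotient closed ($\LRW{L^r}$, hence $\LRP{L^r}$, is stable under right word-quotients, and boolean operations commute with right-quotienting), whence $\jslDfaRqc{\jslDfaSc{\rev{\SynMonDfa{L}}}} = \jslDfaSc{\rev{\SynMonDfa{L}}}$, and the relabelling $S \mapsto \bigcup_{\sem{s}_{\SynCong{L}} \in S}(\sem{s}_{\SynCong{L}})^r$ identifies it, as a $\JSL$-dfa, with $\jslDfaSynBoolMin{L}$ of Definition~\ref{def:canon_bool_syn_jsl_dfa} (after reversal of the accepted language, exactly the convention used in Corollary~\ref{cor:dual_syn_semiring}; equivalently the identification can be read off Theorem~\ref{thm:canon_bool_syn_dep_aut}, which already gives $\jslDfaSc{\rev{\SynMonDfa{L^r}}} \cong \jslDfaSynBoolMin{L}$ after applying $\Det$ and $\rep$). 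Composing the three isomorphisms and observing that the composite is an acceptance map yields part~(2).

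The only genuinely new content is the concrete description of $\TS{\jslDfaSc{\SynMonDfa{L}}}$ as the self-action of the power semiring in part~(1), and — in part~(2) — the $\JSL$-reachability of $\delta$ together with the identification $\jslLangs{\jslDfaSc{\rev{\SynMonDfa{L}}}} = \jslLRP{L^r}$; everything downstream is an assembly of already-established isomorphisms (Theorems~\ref{thm:trans_semi_dual_rqc}, \ref{thm:jsl_reach_dual_simple}, Example~\ref{ex:dual_full_subset_construction}, Corollary~\ref{cor:dual_syn_semiring}). The main point to handle with care throughout is the reversal bookkeeping ($L$ versus $L^r$, and the order-of-composition convention for transition semirings inherited from Theorem~\ref{thm:syn_semiring_trans_semiring}); getting these straight is what makes the isomorphisms line up as stated.
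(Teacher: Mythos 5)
Your proposal is correct and follows essentially the same route as the paper: part (1) is the identical evaluate-at-$\{\sem{\epsilon}_{\SynCong{L}}\}$ argument, and part (2) assembles Theorem~\ref{thm:trans_semi_dual_rqc} (whose $\JSL$-reachability hypothesis you, unlike the paper, check explicitly), Example~\ref{ex:dual_full_subset_construction}, and the identification of the resulting machine with the canonical boolean syntactic $\JSL$-dfa, which you obtain by directly computing $\jslLangs{\jslDfaSc{\rev{\SynMonDfa{L}}}}$ where the paper simply cites Theorem~\ref{thm:canon_bool_syn_dep_aut}. The reversal bookkeeping you flag is real: the paper's own proof silently replaces $\rev{\SynMonDfa{L}}$ by $\rev{\SynMonDfa{L^r}}$, i.e.\ effectively reads $\delta$ as $\jslDfaSc{\SynMonDfa{L^r}}$, which is the reading under which the codomain is $\jslDfaSynBoolMin{L}$ rather than $\jslDfaSynBoolMin{L^r}$, so your explicit treatment of that point is, if anything, more careful than the source.
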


\begin{proof}
  \item
  \begin{enumerate}
    \item
    Denote the candidate isomorphism by $\alpha$.
    Given $\gamma := \SynMonDfa{L}$ then $\delta_U = \lambda S. \bigcup_{u \in U} \gamma_u[S]$. Given $\delta_{U_1} = \delta_{U_2}$ then applying them to $\{ \sem{\epsilon}_{\SynCong{L}} \}$ we see $\alpha$ is a well-defined injective function. It is clearly surjective and also preserves joins i.e.\ $\alpha(\delta_{U_1 \cup U_2}) = \alpha(\delta_{U_1}) \cup \alpha(\delta_{U_2})$. Finally $\alpha(\delta_\emptyset) = \emptyset$ and the multiplication is also preserved:
    \[
      \begin{tabular}{lll}
        $\alpha(\delta_V \circ \delta_U)$
        &
        $= \alpha(\delta_{U \cdot V})$
        \\ &
        $= \{ \sem{x}_{\SynCong{L}} : x \in U \cdot V \}$
        \\ &
        $= \{ \sem{u}_{\SynCong{L}} : u \in U \} \cdot \{ \sem{v}_{\SynCong{L}} : v \in V \}$
        \\ &
        $= \alpha(\delta_U) \cdot \alpha(\delta_V)$.
      \end{tabular}
    \]

    \item
    By Example \ref{ex:dual_full_subset_construction} $\delta^\pentagram \cong \jslDfaSc{\rev{\SynMonDfa{L^r}}}$ hence $\delta^\pentagram \cong \jslDfaSynBoolMin{L}$ by Theorem \ref{thm:canon_bool_syn_dep_aut}. Applying Theorem \ref{thm:trans_semi_dual_rqc} yields:
    \[
      (\jslDfaTs{\delta})^\pentagram
      \cong \jslDfaRqc{\delta^\pentagram}
      \cong \jslDfaRqc{\jslDfaSynBoolMin{L}}
      = \jslDfaSynBoolMin{L},
    \]
    since the latter is simplified and right-quotient closed by construction.
  \end{enumerate}
\end{proof}

\section{The Kameda-Weiner Algorithm and Beyond}

\subsection{$L$-coverings}

An $L$-covering is an edge-covering of the dependency relation $\rDR{L}$ (Definition \ref{def:canon_dep_aut}) by left-maximal bicliques. That is, each biclique $A \times B \subseteq \rDR{L}$ is inclusion-maximal on the left. Importantly, they can be defined as certain $\Dep$-morphisms.

\begin{definition}[$L$-coverings]
  \label{def:l_covering}
  Fix any regular language $L \subseteq \Sigma^*$.
  \begin{enumerate}
    \item 
    An \emph{$L$-covering} is a $\Dep$-morphism $\rDR{L} : \rDR{L} \to \rH$ such that $\rH_t = \LW{L^r}$.
    
    The $\Dep-$morphism is determined by $\rH$, so it may be denoted $\rLCov{L}{\rH}$.
    We may also refer to the $L$-covering via the relation $\rH \subseteq \rH_s \times \LW{L^r}$ alone.

    \item
    Given an $L$-covering $\rH$, Definition \ref{def:maximum_witnesses} provides $\rLCov{L}{\rH}_- \subseteq \LW{L} \times \rH_s$. But we may also directly define:
    \[
      \rLCov{L}{\rH}_-(u^{-1} L, h_s)
      :\iff
      \forall Y \in \LW{L^r}. [
          \rH(h_s, Y) \To \rDR{L}(u^{-1} L , Y)
        ]
    \]
    without knowing $\rH \subseteq \rH_s \times \LW{L^r}$ is an $L$-covering.
    \endbox
  \end{enumerate}
\end{definition}


\begin{definition}[$L$-covering constructions]
  \label{def:l_covering_constructs}
  Fix an $L$-covering $\rDR{L} : \rDR{L} \to \rH$.

  \begin{enumerate}
    \item 
    $\rH$'s \emph{biclique-form} is the $L$-covering $\rH^\flat \subseteq \rH_s^\flat \times \LW{L^r}$ where:
    \[
      \rH^\flat_s :=
      \{ \rLCov{L}{\rH}_-\spbreve [h_s] \times \rH[h_s] : h_s \in \rH_s  \}
      \qquad
      \rH^\flat(A \times B, Y) :\iff Y \in B.
    \]
    It turns out that $\rLCov{L}{\rH^\flat}_-(X, A \times B) \iff X \in A$.
    Finally, we say $\rH$ is \emph{in biclique-form} if $\rH = \rH^\flat$.

    \item
    $\rH$'s \emph{induced nfa} $\rN_\rH$ has states $\rH_s$ and is defined:
    \[
      I_{\rN_\rH} := \rLCov{L}{\rH}_-[L]
      \qquad
      F_{\rN_\rH} := \{ h \in \rH_s : \epsilon \in \bigcap \rLCov{L}{\rH}_-\spbreve[h] \}
    \]
    \[
      \rN_{\rH,a} (h_1, h_2)
      :\iff \forall X \in \LW{L}. (
        \rLCov{L}{\rH}_-(X, h_1) \To \rLCov{L}{\rH}_-(a^{-1} X, h_2)
      ).
    \]
    Just as $\rLCov{L}{\rH}$ is completely determined by $\rH$, the induced nfa $\rN_\rH$ is completely determined by $\rLCov{L}{\rH}$.

    \item 
    An $L$-covering $\rH'$ \emph{extends} another $L$-covering $\rH$ if $\rH_s = \rH'_s$, $\rH \subseteq \rH'$ and $\rLCov{L}{\rH}_- = \rLCov{L}{\rH'}_-$. We say $\rH$ is \emph{maximal} if its only extension is itself.

    \item
    $\rH$ is \emph{legitimate} if $L(\rN_\rH) = L$, see \cite[Definition 16]{KamedaWeiner1970}.

    \item
    $\rH$'s \emph{dual} is the $L^r$-covering $\rH^\diamond := \rLCov{L}{\rH}_-\spbreve \subseteq \rH_s \times \LW{L}$.
    \endbox

  \end{enumerate}
\end{definition}

\begin{note}[Concerning extensions of $L$-coverings]
  Given any two $L$-extensions satisfying $\rH_s = \rH'_s$ and $\rH \subseteq \rH'$ we necessarily have $\rLCov{L}{\rH'}_- \subseteq \rLCov{L}{\rH}_-$.
  Then Definition \ref{def:l_covering_constructs}.3 could equivalently require $\rLCov{L}{\rH}_- \subseteq \rLCov{L}{\rH'}_-$, which is more in-keeping with `maximality'.
  \endbox
\end{note}

We now prove various basic facts concerning $L$-coverings.

\begin{lemma}[$L$-coverings]
  \label{lem:l_coverings}
  
  \item
  \begin{enumerate}
    \item
    $\rH \subseteq \rH_s \times \LW{L^r}$ is an $L$-covering iff $\rDR{L} = \rI ; \rH$ for some $\rI \subseteq \LW{L} \times \rH_s$.
    
    \item 
    Each $L$-covering is a $\Dep$-monomorphism via the witnesses:
    \[
      \xymatrix@=15pt{
        \LW{L^r} \ar[rr]^-{\Delta_{\LW{L^r}}} && \LW{L^r}
        \\
        \LW{L} \ar[u]^{\rDR{L}} \ar[rr]_-{\rLCov{L}{\rH}_-} && \rH_s \ar[u]_{\rH}
      }
    \]

    \item
    If $\rH$ is an $L$-covering then so is its biclique-form $\rH^\flat$; moreover $\rLCov{L}{\rH^\flat}_- (X, A \times B) \iff X \in A$.

    \item
    If $\rH$ is an $L$-covering in biclique-form then its induced nfa satisfies:
    \[
      A \times B \in I_{\rN_\rH} \iff L \in A
      \qquad
      A \times B \in F_{\rN_\rH} \iff \epsilon \in \bigcap A
    \]
    \[
      \rN_{\rH,a} (A_1 \times B_1, A_2 \times B_2) \iff \gamma_a [A_1] \subseteq A_2
      \qquad
      \text{where $\gamma_a := \lambda X. a^{-1} X$}.
    \]

    \item
    $L(\rN_{\rH}) = L(\rN_{\rH^\flat})$ because
    $
      q := \lambda h_s. \rLCov{L}{\rH}_-\spbreve[h_s] \times \rH[h_s] : \rN_{\rH} \epito \rN_{\rH^\flat}
    $
    is a surjection which preserves/reflects initial states, final states and transitions. If $\rN_\rH$ is state-minimal then $q$ is an nfa isomorphism.\footnote{However, even when $\rN_\rH$ is not state-minimal $q$ is almost the same thing as an isomorphism.}

    \item
    $L(\rN_{\rH}) \subseteq L$ for each $L$-covering $\rH$.

    \item
    Let $\rH$ be an $L$-covering.
    \begin{enumerate}[a.]
      \item 
      $\rH^\diamond$ is a well-defined maximal $L^r$-covering.
      
      \item
      $\rH \subseteq \rH^{\diamond\diamond}$ and $\rLCov{L}{\rH}_- = (\rH^\diamond)\spbreve = \rLCov{L}{\rH^{\diamond\diamond}}_-$, so $\rH^{\diamond\diamond}$ extends $\rH$.

      \item
      $\rH$ is maximal iff $\rH = \rH^{\diamond\diamond}$.
      
      \item
      $\rH^\flat$ is maximal if $\rH$ is.
      
      \item
      $\rH^{\diamond\diamond}$ is legitimate if $\rH$ is.

      \item
      $A \times B \in (\rH^{\diamond\diamond})_s^\flat \iff B \times A \in (\rH^\diamond)_s^\flat$.
    \end{enumerate}

    \item
    If $\rH$ is an $L$-covering in biclique-form and $A \times B \in (\rN_\rH)_u[I_{\rN_\rH}]$ then $u^{-1} L \in A$.


  \end{enumerate}
\end{lemma}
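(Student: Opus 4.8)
The plan is a straightforward induction on the length of $u \in \Sigma^*$, powered entirely by the explicit description of the induced nfa $\rN_\rH$ furnished by part (4) --- which is available precisely because $\rH$ is assumed to be in biclique-form. In the base case $u = \epsilon$ we have $(\rN_\rH)_\epsilon = \Delta_{\rH_s}$, so $A \times B \in (\rN_\rH)_\epsilon[I_{\rN_\rH}]$ forces $A \times B \in I_{\rN_\rH}$; by part (4) this means $L \in A$, and since $\epsilon^{-1} L = L$ we obtain $u^{-1} L \in A$.

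For the inductive step I would write $u = wa$ with $a \in \Sigma$. Since $(\rN_\rH)_{wa} = (\rN_\rH)_w ; \rN_{\rH,a}$ by Definition \ref{def:nfa_basics}, membership $A \times B \in (\rN_\rH)_{wa}[I_{\rN_\rH}]$ produces a state $A' \times B'$ with $A' \times B' \in (\rN_\rH)_w[I_{\rN_\rH}]$ and $\rN_{\rH,a}(A' \times B', A \times B)$. The induction hypothesis gives $w^{-1} L \in A'$; part (4) identifies the transition with the inclusion $\gamma_a[A'] \subseteq A$, where $\gamma_a = \lambda X. a^{-1} X$; hence $a^{-1}(w^{-1} L) \in A$. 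Since $a^{-1}(w^{-1} L) = (wa)^{-1} L = u^{-1} L$, this closes the induction.

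I do not anticipate any genuine obstacle here: the claim merely records that the left-hand coordinates of the reachable biclique-states evolve under the Brzozowski-derivative action $\gamma_a$, which is exactly the content of the transition formula in part (4). The only point needing mild care is to confirm that the hypothesis ``$\rH$ in biclique-form'' is what licenses the use of part (4) (so that the states really are of the form $A \times B$ and the initial/transition descriptions apply verbatim); everything else is the routine ``reachable sets in a subset-style construction contain the relevant left quotient'' bookkeeping.
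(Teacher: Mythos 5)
Your argument for item 8 is correct and is essentially the paper's own proof: the same induction on the length of $u$, with the base case read off from the characterisation $A \times B \in I_{\rN_\rH} \iff L \in A$ in item 4 and the inductive step from the transition formula $\gamma_a[A_1] \subseteq A_2$ together with $a^{-1}(w^{-1}L) = (wa)^{-1}L$. Note only that your proposal addresses item 8 alone; the remaining items of the lemma are not treated.
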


\begin{proof}
  \item
  \begin{enumerate}
    \item 
    If $\rDR{L} : \rDR{L} \to \rH$ is an $L$-covering it is a $\Dep$-morphism, so $\rLCov{L}{\rH}_- ; \rH = \rDR{L}$ via the maximum witnesses (Lemma \ref{lem:mor_char_max_witness}). Conversely if $\rDR{L} = \rI ; \rH$ we know $\rI ; \rH = \rDR{L} = \rDR{L} ; \Delta_{\LW{L^r}}\spbreve$ hence $\rDR{L}$ is a $\Dep$-morphism.

    \item
    The commuting diagram follows via maximum witnesses. It defines a $\Dep$-mono because the upper witness is a bijective function, recalling $\Dep$-composition from Definition \ref{def:the_cat_dep}.

    \item
    Since $\rH$ is an $L$-covering we know $\rLCov{L}{\rH}_- ; \rH = \rDR{L}$. For completely general reasons $\bigcup \rH^\flat_s = \rDR{L}$ i.e.\ the union of the cartesian products $\rLCov{L}{\rH}_-\spbreve[h_s] \times \rH[h_s]$ is $L$'s dependency relation (see Note \ref{note:bipartitioned_as_binary}). If we define $\rI \subseteq \LW{L} \times \rH^\flat_s$ as $\rI(X, A \times B) :\iff X \in A$ then:
    \[
      \begin{tabular}{lll}
        $\rI; \rH^\flat (X, Y)$
        &
        $\iff \exists A \times B \in \rH^\flat_s. [ X \in A \,\land\, Y \in B ]$
        \\ &
        $\iff \exists h_s \in \rH_s. [ (X,Y) \in \rLCov{L}{\rH}_-\spbreve[h_s] \times \rH[h_s] ]$
        \\ &
        $\iff (X, Y) \in \bigcup \rH^\flat_s$
        \\ &
        $\iff \rDR{L}(X, Y)$.
      \end{tabular}
    \]
    Then by (1) $\rH^\flat$ is a well-defined $L$-covering. It remains to establish $\rLCov{L}{\rH^\flat}_- = \rI$. To this end, let $A \times B = \rS_-\spbreve[h_s] \times \rH[h_s]$ and consider:
    \[
      \begin{tabular}{lll}
        $\rLCov{L}{\rH^\flat}_-(X, A \times B)$
        &
        $:\iff \rH^\flat[A \times B] \subseteq \rDR{L}[X]$
        & (definition \ref{def:maximum_witnesses})
        \\ &
        $\iff B \subseteq \rDR{L}[X]$
        \\ &
        $\iff \rH[h_s] \subseteq \rDR{L}[X]$
        & (by def.\ of $B$).
        \\ &
        $\iff \rLCov{L}{\rH}_-(h_s, X)$
        & (definition \ref{def:maximum_witnesses})
        \\ &
        $\iff X \in \rLCov{L}{\rH}_-[h_s]$
        \\ &
        $\iff X \in A$
        & (by def.\ of $A$).
      \end{tabular}
    \]

    \item
    Concerning the transitions:
    \[
      \begin{tabular}{lll}
        $\rN_{\rH,a} (A_1 \times B_1, A_2 \times B_2)$
        \\[0.5ex]
        $:\iff \forall X \in \LW{L}. [ \rLCov{L}{\rH}_-(X, A_1 \times B_1) \To \rLCov{L}{\rH}_-(a^{-1} X, A_2 \times B_2) ]$
        \\
        $\iff \forall X \in \LW{L}. [ X \in A_1 \To a^{-1} X \in A_2 ]$
        & (by (3))
        \\
        $\iff \gamma_a[A_1] \subseteq A_2$.
      \end{tabular}
    \]
    The characterisations of the initial/final states follow easily.

    \item
    Consider the well-defined surjection $q : \rH_s \epito \rH^\flat_s$ with action $\lambda h_s. \rLCov{L}{\rH}_-\spbreve[h_s] \times \rH[h_s]$. It preserves and reflects the initial states and also the final states:
    \[
      h_s \in I_{\rN_\rH}
      \iff h_s \in \rLCov{L}{\rH}_-[L]
      \iff L \in \rLCov{L}{\rH}_-\spbreve[h_s]
      \iff q(h_s) \in I_{\rN_{\rH^\flat}}
    \]
    \[
      h_s \in F_{\rN_\rH}
      \iff \epsilon \in \bigcap \rLCov{L}{\rH}\spbreve[h_s]
      \iff q(h_s) \in F_{\rN_{\rH^\flat}}.
    \]
    The transitions are also preserved and reflected:
    \[
      \begin{tabular}{lll}
        $\rN_{\rH^\flat}(q(h_1), q(h_2))$
        &
        $\iff \forall X \in \LW{L}.[ X \in \rLCov{L}{\rH}_-\spbreve[h_1] \To a^{-1} X \in \rLCov{L}{\rH}_-\spbreve[h_2] ]$
        \\ &
        $\iff \forall X \in \LW{L}. [ \rLCov{L}{\rH}_-(X, h_1) \To \rLCov{L}{\rH}_-(a^{-1} X, h_2)]$
        \\ &
        $\iff \rN_\rH(h_1, h_2)$.
      \end{tabular}
    \]
    Thus $L(\rN_{\rH}) = L(\rN_{\rH^\flat})$ because the nfas simulate one another. If $\rN_\rH$ is state-minimal $q$ must be an isomorphism.

    \item
    By (5) we may assume the $L$-covering is in biclique-form. The induced nfa $\rN_{\rH}$ is described in (4). If $w \in L(\rN_\rH)$ then by induction we have $\gamma_w[A_1] \subseteq A_n$ where $L \in A_1$, $A_n \times B_n \subseteq \rDR{L}$ and $\epsilon \in \bigcap A_n$. Thus $\epsilon \in w^{-1} L$ so $w \in L$.

    \item
    \begin{enumerate}[a.]
      \item 
      By Lemma \ref{lem:l_coverings}.1 we know $\rLCov{L}{\rH}_- ; \rH = \rDR{L}$ hence $\breve{\rH} ; \rLCov{L}{\rH}_-\spbreve = (\rDR{L})\spbreve = \rDR{L^r}$. Thus $\rH^\diamond := \rLCov{L}{\rH}_-\spbreve$ is an $L^r$-covering by applying Lemma \ref{lem:l_coverings}.1 again. Finally $\rH^\diamond$ is \emph{maximal} because its converse is a maximal lower witness.
      
      \item
      Below on the left we've depicted $\rH$ together with the respective lower component $\rLCov{L}{\rH}_-$.
      \[
        \xymatrix@=15pt{
          \LW{L^r} \ar[rr]^-{\Delta_{\LW{L^r}}} && \LW{L^r}
          \\
          \LW{L} \ar[rr]_-{\rLCov{L}{\rH}_-} \ar[u]^-{\rDR{L}} && \rH_s \ar[u]_-{\rH}
        }
        \qquad
        \xymatrix@=15pt{
          \LW{L} \ar[rr]^-{\Delta_{\LW{L}}} && \LW{L}
          \\
          \LW{L^r} \ar[rr]_-{\rLCov{L}{\rH^\diamond}_-} \ar[u]^-{\rDR{L^r}} && \rH_s \ar[u]_-{\rH^{\diamond}}
        }
        \qquad
        \xymatrix@=15pt{
          \LW{L^r} \ar[rr]^-{\Delta_{\LW{L^r}}} && \LW{L^r}
          \\
          \LW{L} \ar[rr]_-{(\rH^\diamond)\spbreve} \ar[u]^-{\rDR{L}} && \rH_s \ar[u]_-{\rH^{\diamond\diamond}}
        }
      \]
      The central diagram shows $\rH$'s dual $L^r$-covering $\rH^\diamond := \rLCov{L}{\rH}_-\spbreve$ and the respective lower component $\rLCov{L}{\rH^\diamond}_-$. Then the central diagram arises by dualising $\rH$ and the right-most diagram arises by dualising $\rH^\diamond$. Notice that since $\rH^\diamond$ is already maximal, the right-most square swaps and reverses \emph{both} relations. In particular:
      \begin{itemize}
        \item[--]
        from left to right $\rLCov{L}{\rH}_- = (\rH^\diamond)\spbreve = \rLCov{L}{\rH^{\diamond\diamond}}_-$.
        \item[--]
        $\rH^\diamond = \rLCov{L}{\rH}_-\spbreve$ implies $\breve{\rH} \subseteq \rLCov{L^r}{\rH^\diamond}_-$ by maximality, hence $\rH \subseteq \rLCov{L^r}{\rH^\diamond}_-\spbreve = \rH^{\diamond\diamond}$.
      \end{itemize}

      \item
      If $\rH$ is maximal then $\rH = \rH^{\diamond\diamond}$ by (b).
      Conversely if $\rH = (\rH^\diamond)^\diamond$ then it is maximal by (a).

      \item
      If $\rH$ is maximal then $\rH^\flat$ amounts to a bijective relabelling of $\rH_s$, so it is also maximal.

      \item
      Since $\rLCov{L}{\rH}_- = \rLCov{L}{\rH^{\diamond\diamond}}_-$ we know $\rN_{\rH} = \rN_{\rH^{\diamond\diamond}}$, hence $\rH^{\diamond\diamond}$ is also legitimate.

      \item
      We have $\rH^{\diamond\diamond} = \rLCov{L^r}{\rH^\diamond}_-\spbreve$ and also $\rLCov{L}{\rH^{\diamond\diamond}}_- = (\rH^\diamond)\spbreve$ by (b). Then constructing the biclique-form of $\rH^\diamond$ amounts to constructing bicliques:
      \[
        \rLCov{L^r}{\rH^\diamond}_-\spbreve [h_s] \times \rH^\diamond[h_s]
        = \rH^{\diamond\diamond} [h_s] \times \rLCov{L}{\rH^{\diamond\diamond}}_-\spbreve [h_s]
      \]
      and the claim the follows.

    \end{enumerate}

    \item
    Given $A \times B \in (\rN_\rH)_u(I_{\rN_\rH})$ we'll prove $u^{-1} L \in A$ by induction on $u$. If $u = \epsilon$ this holds by definition of $I_{\rN_\rH}$.
    If $u = u_0 a$ we have $A_0 \times B_0 \in (\rN_\rH)_{u_0}[I_{\rN_\rH}]$ and $a^{-1}[A_0] \subseteq A_1$ by Lemma \ref{lem:l_coverings}.4. Then by induction $u_0^{-1} L \in A_0$ and hence $(u_0 a)^{-1} L \in A$, so we are done.





  \end{enumerate}
\end{proof}

\subsection{Saturated machines}

There are various ways an nfa can be have many initial/final states and transitions.

\begin{definition}[Locally/intersection-saturated and transition-maximality]
  \label{def:local_sat_trans_max}
  Let $\rN = (I, Z, \rN_a, F)$ be an nfa.
  \begin{enumerate}
    \item 
    $\rN$ is \emph{locally-saturated} if for all $a \in \Sigma$ and $z, z_1, z_2 \in Z$,
    \[
      z \in I \iff L(\rN_{@z}) \subseteq L(\rN)
      \qquad
      \rN_a(z_1, z_2) :\iff L(\rN_{@z_2}) \subseteq a^{-1} L(\rN_{@z_1}).
    \]

    \item
    $\rN$ is \emph{intersection-saturated} if for all $z, z_1, z_2 \in Z$.
    \[
      \begin{tabular}{c}
        $\rN_a(z_1, z_2)
        \iff \forall u \in \Sigma^*. ( z_1 \in \rN_u [I_\rN] \To z_2 \in \rN_{ua} [I_\rN] )$
        \\[1ex]
        $z \in F_\rN \iff \forall u \in \Sigma^*.( z \in \rN_u[I_\rN] \To \rN_u[I_\rN] 
        \cap F_\rN \neq \emptyset)$.
      \end{tabular}
    \]
    These are the conditions for transitions and final states from Kameda and Weiner's intersection rule \cite{KamedaWeiner1970}.

    \item
    $\rN$ is \emph{transition-maximal} if adding transitions or colouring additional initial/final states changes the accepted language. More formally, \emph{an nfa $\rM$ extends $\rN$} if $\rM = (I_\rM, Z, \rM_a, F_\rM)$ where $I \subseteq I_\rM$, each $\rN_a \subseteq \rM_a$, $F \subseteq F_\rM$ and finally $L(\rM) = L(\rN)$. Then $\rN$ is \emph{transition-maximal} if its only extension is itself.
    \endbox
    
  \end{enumerate}
\end{definition}

The concept of being \emph{locally-saturated} arises naturally from canonical constructions, as we'll see. It is `local' because one can enforce it without changing the languages accepted by the individual states. It is worth clarifying the second concept straight away.

\begin{note}
  \label{note:intersection_saturated_basic_char}
  An nfa $\rN$ is intersection-saturated iff the following hold:
  \begin{enumerate}
    \item[--]
    {\em whenever for every $u$-path to $z_1$ there exists a $ua$-path to $z_2$ then $\rN_a(z_1, z_2)$.}
    \item[--]
    {\em whenever for every $u$-path to $z$ we have $u \in L(\rN)$ then $z$ is final.}
  \end{enumerate}
  
  Then each transition relation $\rN_a$ can be reconstructed from the deterministic transitions $\rN_u[I] \to_a \rN_{ua}[I]$ of the reachable subset construction. Soon we'll prove an nfa $\rN$ is intersection-saturated iff $\rev{\rN}$ is locally-saturated.
  \endbox
\end{note}

Perhaps unsurprisingly, \emph{transition-maximal} machines are both locally-saturated and intersection-saturated. We now provide various examples of nfas in different classes.

\begin{example}[Comparing notions of saturation]
  \item
  \begin{enumerate}
    \item
    \emph{Locally but not intersection-saturated (via final states)}. The nfa below accepts $a + aa$ and is locally-saturated e.g.\ there is no transition from the left-most state to the right-most because $\{ \epsilon \} \nsubseteq a^{-1} \{ aa \}$. However it is not intersection-saturated because the central state should be final by Note \ref{note:intersection_saturated_basic_char}.
    \[
      \xymatrix@=5pt{
        {\tt i} \ar[rr]^a  && {\tt i} \ar[rr]^a && {\tt o} 
      }
    \]

    \item
    \emph{Locally but not intersection-saturated (via transitions)}. This locally-saturated nfa accepts $a(bb^* + cc^*)$:
    \[
      \xymatrix@=10pt{
        {\tt o} \ar@(ul,ur)^b && \bullet \ar[ll]_b \ar@{..<}`u[r]`[rrrrrr]^-c[rrrrrr] && {\tt i} \ar[rr]^a \ar[ll]_a && \bullet \ar@{..>}`d`[llllll]^-b[llllll] \ar[rr]^c && {\tt o} \ar@(dr,dl)^c
        \\
      }
    \]
    It is not intersection-saturated because by Note \ref{note:intersection_saturated_basic_char} it should have the dashed transitions too.

    \item
    \emph{Intersection-saturated but not locally-saturated}. Take the reverse nfa of either (1) or (2). This follows by Theorem \ref{thm:intersection_rule_char} further below.

    \item
    \emph{Locally-saturated, not transition-maximal}. Example (2) is locally-saturated but not transition-maximal.

    \item
    \emph{Locally and intersection-saturated, not transition-maximal}.
    This nfa accepts $L := a + b + (a^+ + b^+)c^+$ and is locally-saturated e.g.\ there is no dashed $c$-transition because $c^* \nsubseteq c^{-1} \{ \epsilon \}$. It is also intersection-saturated.
    \[
      \xymatrix@=20pt{
        & {\tt i}  \ar@/_5pt/[dl]_a \ar[d]^{a,b} \ar@/^5pt/[dr]^b
        \\
        \bullet \ar@(dl,ul)^a \ar@/_5pt/[dr]_c \ar[r]|{\;c\;} & {\tt o} \ar@{..>}[d]^c &  \bullet \ar@(ur,dr)^b \ar@/^5pt/[dl]^c \ar[l]|{\;c\;}
        \\
        & {\tt o} \ar@(dr,dl)^c
      }
    \]
    However, it is not transition-maximal -- adding the dashed $c$-transition preserves the accepted language.

    \item
    The nfa $\minSatDfa{L}$ from Example \ref{ex:saturated_min_dfa} is always transition-maximal, as the reader may verify.
    \endbox
  \end{enumerate}
\end{example}

There is a canonical way to locally saturate an nfa.

\begin{definition}[Irreducible simplification]
  \label{def:nfa_irr_simplification}
  We define the \emph{irreducible simplification} of an nfa $\rN = (I, Z, \rR_a, F)$ as:
  \[
    \simpleIrr{\rN} := (
      \{ X \in J(\aS) : X \subseteq L(\rN) \},
      J(\aS),
      \lambda X. a^{-1} X,
      \{ X \in J(\aS) : \epsilon \in X \}
    )
  \]
  where $\aS := \jslLangs{\dep{\rN}}$ is the join-semilattice of languages accepted by $\rN$.
  \endbox
\end{definition}

\begin{note}[Irreducible simplification is canonical]
  $\simpleIrr{\rN}$ is the lower nfa of $\Airr(\jslDfaSimple{\dep{\rN}})$.
  \endbox
\end{note}

\begin{lemma}[Concerning irreducible simplifications]
  \label{lem:irr_simplifying_nfas}
  \item
  \begin{enumerate}
    \item 
    $\simpleIrr{\rN}$ accepts $L(\rN)$.
    \item
    $L((\simpleIrr{\rN})_{@Y}) = Y$ for each state $Y \in J(\jslLangs{\rN})$.
    \item
    $\simpleIrr{-}$ preserves reachability.
    \item
    $\simpleIrr{-}$ is idempotent.
  \end{enumerate}
\end{lemma}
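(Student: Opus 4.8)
Every part will be derived from the preceding Note identifying $\simpleIrr{\rN}$ with the lower nfa of $\Airr(\jslDfaSimple{\jslDfaSc{\rN}})$, where $\jslDfaSc{\rN} = \Det(\dep{\rN})$ and, writing $\aS := \jslLangs{\rN}$, we have $\jslDfaSimple{\jslDfaSc{\rN}} = (L(\rN),\, \aS,\, \lambda X.\, a^{-1}X,\, \{X \in \aS : \epsilon \in X\})$ by Definitions \ref{def:nfa_basics}, \ref{def:full_subset_construction} and \ref{def:jsl_reach_simple}.2. So it suffices to apply known results about $\Airr$, $\Det$ and $\jslDfaSimple{-}$ to this specific $\JSL$-dfa.

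\textbf{Parts 1 and 2.} For part 2, apply Corollary \ref{cor:dfa_nfa_lang_correspondence} with $\delta := \jslDfaSimple{\jslDfaSc{\rN}}$: for every $Z \subseteq J(\aS)$ the machine $(\simpleIrr{\rN})_{@Z}$ accepts $acc_\delta(\Lor_\aS Z) = acc_\delta(\bigcup Z)$, and since $\delta$ is the simple quotient its acceptance map is the identity on its language-states (Lemma \ref{lem:jsl_reach_simple_constructions}.3), so $(\simpleIrr{\rN})_{@Z}$ accepts $\bigcup Z$; taking $Z = \{Y\}$ gives $L((\simpleIrr{\rN})_{@Y}) = Y$. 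Part 1 then follows either by taking $Z = I_{\simpleIrr{\rN}} = \{Y \in J(\aS) : Y \subseteq L(\rN)\}$ (so $L(\simpleIrr{\rN}) = \bigcup I_{\simpleIrr{\rN}} = L(\rN)$, using that each element of $\aS$ is the join of the join-irreducibles below it, Note \ref{note:irreducibles}.3), or directly from the fact that $\Det$, $\jslDfaSimple{-}$ and $\Airr$ all preserve the accepted language (Note \ref{note:det_airr_preserve_acceptance} and Lemma \ref{lem:jsl_reach_simple_constructions}.2).

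\textbf{Part 4.} By part 2 and Corollary \ref{cor:dfa_nfa_lang_correspondence}, $\jslLangs{\simpleIrr{\rN}} = \{\bigcup Z : Z \subseteq J(\aS)\}$, which equals $\aS = \jslLangs{\rN}$ as a join-semilattice because $J(\aS)$ join-generates the finite semilattice $\aS$. Combining this with $L(\simpleIrr{\rN}) = L(\rN)$ (part 1) and the explicit form of $\jslDfaSimple{\jslDfaSc{(-)}}$ recalled above, we get $\jslDfaSimple{\jslDfaSc{\simpleIrr{\rN}}} = \jslDfaSimple{\jslDfaSc{\rN}}$; applying $\Airr$ and taking lower nfas yields $\simpleIrr{\simpleIrr{\rN}} = \simpleIrr{\rN}$.

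\textbf{Part 3, the main obstacle.} Suppose $\rN$ is reachable; we must show every state $Y \in J(\aS)$ of $\simpleIrr{\rN}$ is reachable. Since $\{L(\rN_{@z}) : z \in Z\}$ join-generates $\aS$, Note \ref{note:irreducibles}.3 gives $Y = L(\rN_{@z})$ for some $z \in Z$, and $z \in \rN_u[I_\rN]$ for some word $u$ as $\rN$ is reachable. The plan is to prove by induction on $|u|$ that $L(\rN_{@z}) \in (\simpleIrr{\rN})_u[I_{\simpleIrr{\rN}}]$, using that the transitions of $\simpleIrr{\rN}$ are the relation $\rN_a(X_1,X_2) \iff X_2 \subseteq a^{-1}X_1$ (this is $(\Pirr(\lambda X.\,a^{-1}X))_-$, matching Definition \ref{def:airr_and_det}), together with the inclusion $L(\rN_{@z}) \subseteq a^{-1}L(\rN_{@z_0})$ for a predecessor $z_0 \in \rN_{u_0}[I_\rN]$. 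The base case is immediate. The subtle point --- and the hard part of the whole lemma --- is the inductive step: the predecessor $z_0$ need not satisfy $L(\rN_{@z_0}) \in J(\aS)$, so one cannot simply invoke the induction hypothesis for $z_0$; one must instead show that the set reached after $u_0$ already contains \emph{enough} join-irreducibles below $u_0^{-1}L(\rN)$ to produce, via the transition relation, the required join-irreducible below $a^{-1}u_0^{-1}L(\rN)$, being careful that join-irreducibles of $\aS$ need not be join-prime. I expect this to require a strengthened induction hypothesis (tracking precisely which join-irreducibles are reachable, rather than just membership of $L(\rN_{@z})$), and it is where essentially all the work of the proof lies.
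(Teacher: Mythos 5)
Parts 1, 2 and 4 of your proposal are correct and follow essentially the paper's own route: part 1 via language-preservation of $\Det$, $\jslDfaSimple{-}$ and $\Airr$, part 2 via Corollary \ref{cor:dfa_nfa_lang_correspondence} together with the fact that the acceptance map of a simplified machine is the identity on its states, and part 4 reduced to part 2.

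For part 3 you have not actually given a proof, and the obstacle you single out is precisely where the paper's own argument breaks: the paper takes a path $I \ni z_1 \to_{a_1} \cdots \to_{a_n} z_n = z$ in $\rN$ and asserts that $L(\rN_{@z_1}) \to_{a_1} \cdots \to_{a_n} L(\rN_{@z_n})$ is a path in $\simpleIrr{\rN}$, silently assuming each intermediate $L(\rN_{@z_i})$ is join-irreducible. Your proposed repair (a strengthened induction tracking which join-irreducibles are reachable) founders at the same place, since the inductive step needs $j' \subseteq a^{-1}\bigcup_k j_k$ to imply $j' \subseteq a^{-1}j_k$ for a single $k$, i.e.\ join-primality, which join-irreducibles of $\jslLangs{\rN}$ need not enjoy. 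Neither argument can be completed, because the statement is false. Take $\Sigma = \{a,b,c,d,e\}$ and the nfa $\rN$ with states $\{z, z_1, z_2, z^*, p, q_1, q_2, f\}$, initial states $\{z, z_1, z_2\}$, final state $f$, and transitions $z \xto{a} z^*$, $z \xto{a} p$, $z_1 \xto{a} q_1$, $z_2 \xto{a} q_2$, $z^* \xto{b,c} f$, $p \xto{d,e} f$, $q_1 \xto{b,d} f$, $q_2 \xto{c,e} f$. Every state is reachable, the individual states accept $\{\epsilon\}$, $\{b,c\}$, $\{d,e\}$, $\{b,d\}$, $\{c,e\}$, $\{ab,ad\}$, $\{ac,ae\}$ and $L(\rN_{@z}) = \{ab,ac,ad,ae\} = L(\rN_{@z_1}) \cup L(\rN_{@z_2})$, and $L(\rN) = \{ab,ac,ad,ae\}$. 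Hence $J(\jslLangs{\rN})$ consists of the first seven languages (each is contained in no union of the others), while $L(\rN_{@z})$ is reducible and so is not a state of $\simpleIrr{\rN}$. The state $\{b,c\}$ of $\simpleIrr{\rN}$ is not initial, since $\{b,c\} \nsubseteq L(\rN)$, and has no incoming transition: $\{b,c\} \subseteq x^{-1}j$ forces $\{xb, xc\} \subseteq j$, and no join-irreducible contains both $ab$ and $ac$ (nor $xb$ and $xc$ for $x \neq a$). So $\{b,c\}$ is unreachable in $\simpleIrr{\rN}$ even though $z^*$ is reachable in $\rN$; note $\{b,c\} \subseteq a^{-1}(\{ab,ad\} \cup \{ac,ae\})$ yet $\{b,c\}$ lies below neither $a^{-1}\{ab,ad\} = \{b,d\}$ nor $a^{-1}\{ac,ae\} = \{c,e\}$, which is exactly the failure of join-primality you anticipated. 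Parts 1 and 2 still hold in this example. So your instinct was right: the inductive step cannot be closed, and part 3 needs either an extra hypothesis (e.g.\ $\jslLangs{\rN}$ distributive, so that join-irreducibles are join-prime) or a weaker conclusion.
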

\begin{proof}
  \item
  \begin{enumerate}
    \item 
    $\Det(-)$ and $\Airr(-)$ preserve the accepted language by Note \ref{note:det_airr_preserve_acceptance}, the latter defined in terms of the lower nfa. Finally $\jslDfaSimple{-}$ preserves the accepted language since, ignoring the join-semilattice structure, it is a sub-dfa.

    \item
    Each state $X$ in $\delta := \jslDfaSimple{\Det(\rN, \Delta_Z, \rev{\rN})}$ accepts $X$ by Lemma \ref{lem:jsl_reach_simple_constructions}.3. The lower nfa of $\Airr\delta$ accepts $L(\rN)$ and is $\simpleIrr{\rN}$. For $Y \in J(\jslLangs{\rN})$, the lower nfa of $\Airr (\delta_{@Y})$ accepts $Y$ and is $(\simpleIrr{\rN})_{@I_Y}$ where $I_Y$ is the principal downset generated by $Y$. Thus $(\simpleIrr{\rN})_{@\{ Y \}}$ accepts $Y$ since $acc_\delta$ is monotonic.
    
    \item
    Let $\rN = (I, Z, \rN_a, F)$ be reachable and $\delta = \Det(\rN, \Delta_Z, \rev{\rN})$. By surjectivity, given state $X$ in $\simpleIrr{\rN}$ there exists $z \in Z$ such that $X = acc_\delta(\{ z \})$. By reachability we have a path in $\rN$:
    \[
      I \ni z_1 \to_{a_1} \cdots \to_{a_n} z_n = z \ni F
    \]
    so that $L(\rN_{@z_{i + 1}}) \subseteq a^{-1} L(\rN_{@z_i})$ for each $0 \leq i < n$. This implies:
    \[
      I_{\simpleIrr{\rN}} \ni acc_\delta(\{z_1 \}) \to_{a_1} \cdots \to_{a_n} acc_\delta(\{ z_n \}) = X \in F_{\simpleIrr{\rN}}  
    \]
    in the nfa $\simpleIrr{\rN}$.

    \item
    Follows by (2).
  \end{enumerate}
\end{proof}

\begin{lemma}
  \label{lem:irr_simple_is_locally_saturated}
  $\simpleIrr{\rN}$ is locally-saturated with no more states than $\rN$.
\end{lemma}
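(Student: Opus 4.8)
The plan is to unfold the definition of local saturation for the specific machine $\simpleIrr{\rN}$, check its two defining biconditions directly, and then bound its state count by the minimality of the join-irreducibles. I would first pin down the concrete form of $\simpleIrr{\rN}$. By the Note immediately following Definition~\ref{def:nfa_irr_simplification}, $\simpleIrr{\rN}$ is the lower nfa of $\Airr(\jslDfaSimple{\dep{\rN}})$; writing $\aS := \jslLangs{\dep{\rN}} = \jslLangs{\rN}$ and $\gamma_a := \lambda X.\,a^{-1}X$, Definition~\ref{def:airr_and_det} together with the formula $(\Pirr f)_-(j_1,j_2) \iff j_2 \leq f(j_1)$ from Definition~\ref{def:open_and_pirr} gives: the states of $\simpleIrr{\rN}$ are $J(\aS)$, the initial states are $\{X \in J(\aS) : X \subseteq L(\rN)\}$, the $a$-transition relation is $\rN_a(X_1,X_2) \iff X_2 \subseteq a^{-1}X_1$, and the final states are $\{X \in J(\aS) : \epsilon \in X\}$.

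Next I would verify local saturation using Lemma~\ref{lem:irr_simplifying_nfas}: part~1 gives $L(\simpleIrr{\rN}) = L(\rN)$, and part~2 gives $L((\simpleIrr{\rN})_{@X}) = X$ for every state $X \in J(\aS)$. Plugging these into the first clause of Definition~\ref{def:local_sat_trans_max}.1 turns the requirement ``$X \in I_{\simpleIrr{\rN}} \iff L((\simpleIrr{\rN})_{@X}) \subseteq L(\simpleIrr{\rN})$'' into ``$X \subseteq L(\rN) \iff X \subseteq L(\rN)$'', which holds by the description of the initial states; plugging them into the second clause turns ``$\rN_a(X_1,X_2) \iff L((\simpleIrr{\rN})_{@X_2}) \subseteq a^{-1}L((\simpleIrr{\rN})_{@X_1})$'' into ``$X_2 \subseteq a^{-1}X_1 \iff X_2 \subseteq a^{-1}X_1$'', which is exactly the transition relation of $\simpleIrr{\rN}$. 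Hence $\simpleIrr{\rN}$ is locally-saturated.

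For the state count I would argue that $J(\aS)$ injects into $Z$. The underlying set of $\aS = \jslLangs{\rN}$ is $\{L(\rN_{@S}) : S \subseteq Z\}$, and by Definition~\ref{def:nfa_basics}.3 each such language equals $\bigcup_{z \in S} L(\rN_{@z})$, so the family $\{L(\rN_{@z}) : z \in Z\}$ join-generates $\aS$. Since $J(\aS)$ is the least join-generating subset (Note~\ref{note:irreducibles}.3) it is contained in $\{L(\rN_{@z}) : z \in Z\}$, which is a surjective image of $Z$; therefore $|J(\aS)| \leq |Z|$, i.e.\ $\simpleIrr{\rN}$ has no more states than $\rN$.

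The argument is essentially bookkeeping once the concrete form of $\simpleIrr{\rN}$ is fixed. The one point that must be stated with care — and the only place things could go wrong — is that the transition notation ``$\lambda X.\,a^{-1}X$'' in Definition~\ref{def:nfa_irr_simplification} has to be read as the relation $(\Pirr\gamma_a)_-$, i.e.\ $X_2 \subseteq a^{-1}X_1$, rather than as the partial function $X_2 = a^{-1}X_1$: under the latter reading the transition clause of local saturation would fail as soon as some $a^{-1}X_1$ properly contains a join-irreducible. Everything else follows from the cited parts of Lemma~\ref{lem:irr_simplifying_nfas}.
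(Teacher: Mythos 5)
Your proof is correct and follows essentially the same route as the paper's: unwind $\simpleIrr{\rN}$ as the lower nfa of $\Airr(\jslDfaSimple{\dep{\rN}})$, use the fact that each state $X$ accepts $X$ (Lemma~\ref{lem:irr_simplifying_nfas}.2, itself a restatement of Lemma~\ref{lem:jsl_reach_simple_constructions}.3) to reduce both clauses of local saturation to tautologies, and bound the states by $|J(\jslLangs{\rN})| \leq |Z|$ — the paper gets this via the surjection $acc_\delta : \Open\Delta_Z \epito \jslLangs{\delta}$, you get it by noting $J(\aS)$ sits inside the join-generating family $\{L(\rN_{@z}) : z \in Z\}$, which is the same fact. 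Your explicit caveat that the transition ``$\lambda X.\,a^{-1}X$'' in Definition~\ref{def:nfa_irr_simplification} must be read as the relation $(\Pirr\gamma_a)_-$, i.e.\ $X_2 \subseteq a^{-1}X_1$, is exactly the point the paper's proof handles implicitly by citing Definition~\ref{def:airr_and_det}.
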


\begin{proof}
  Recall Definition \ref{def:nfa_irr_simplification} and let $\delta := \Det(\dep{\rN})$. Then $\Airr(\jslDfaSimple{\delta})$'s lower nfa is locally-saturated via their initial states and transition structure (Definition \ref{def:airr_and_det}) because each state $X$ accepts $X$ by Lemma \ref{lem:jsl_reach_simple_constructions}. Finally,
  \[
    |J(\jslLangs{\delta})| \leq |J(\Open\Delta_Z)| = |Z|
  \]
  via the surjective join-semilattice morphism $acc_{\delta} : \Open\Delta_Z \epito \jslLangs{\delta}$ and Note \ref{note:jsl_extras}.3.
\end{proof}

Actually, irreducible simplifications are precisely those nfas which are both locally-saturated and `union-free'.

\begin{theorem}[Characterizing irreducible simplifications]
  \label{thm:char_irr_simple}
  The following statements are equivalent:
  \begin{enumerate}
    \item
    $\rN \cong \simpleIrr{\rN}$.
    \item
    $\lambda z.L(\rN_{@z}) : \rN \to \simpleIrr{\rN}$ defines an nfa isomorphism.
    \item
    $\rN$ is locally-saturated and satisfies:
    \[
      \tag{union-free}
      \forall z \in Z. \forall S \subseteq Z. [(L(\rN_{@z}) = L(\rN_{@S})) \To z \in S]
    \]
  \end{enumerate}
\end{theorem}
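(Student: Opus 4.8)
The plan is to prove the cycle $(1)\Rightarrow(2)\Rightarrow(3)\Rightarrow(1)$, using the already-established machinery around $\simpleIrr{-}$ and $\Airr/\Det$. The key structural fact I will lean on is Lemma \ref{lem:irr_simplifying_nfas}.2, which says each state $Y$ of $\simpleIrr{\rN}$ accepts exactly $Y$, together with Lemma \ref{lem:irr_simple_is_locally_saturated}, which says $\simpleIrr{\rN}$ is always locally-saturated.

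First, for $(1)\Rightarrow(2)$: I would observe that the candidate map $\lambda z. L(\rN_{@z})$ is precisely the acceptance map $acc_\delta$ restricted appropriately, and that it is always a well-defined function $\rN \to \simpleIrr{\rN}$ provided every $L(\rN_{@z})$ is join-irreducible in $\jslLangs{\rN}$ — which holds exactly when $\rN\cong\simpleIrr{\rN}$ since then $\rN$ has the same number of states as $\simpleIrr{\rN}$ and the surjection $acc_\delta$ onto the $|J(\jslLangs{\rN})|$ states of $\simpleIrr{\rN}$ must be a bijection. Checking it preserves/reflects initial states, final states and transitions is then routine from the defining formulas in Definition \ref{def:nfa_irr_simplification} (initial: $L(\rN_{@z})\subseteq L(\rN)$; final: $\epsilon\in L(\rN_{@z})$; transitions: $L(\rN_{@z_2})\subseteq a^{-1}L(\rN_{@z_1})$), using that a bijective nfa morphism which reflects structure is an isomorphism. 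The step $(2)\Rightarrow(1)$ is immediate since $(2)$ exhibits an isomorphism.

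Next, $(2)\Rightarrow(3)$: if $\lambda z. L(\rN_{@z})$ is an nfa isomorphism then $\rN$ is isomorphic to $\simpleIrr{\rN}$, which is locally-saturated by Lemma \ref{lem:irr_simple_is_locally_saturated}, and local-saturation is an isomorphism-invariant property, so $\rN$ is locally-saturated. For (union-free): injectivity of $\lambda z. L(\rN_{@z})$ already gives that distinct states accept distinct languages; moreover each $L(\rN_{@z})$ lies in $J(\jslLangs{\rN})$ because the states of $\simpleIrr{\rN}$ are exactly $J(\jslLangs{\rN})$. If $L(\rN_{@z}) = L(\rN_{@S}) = \bigcup_{s\in S} L(\rN_{@s})$ and the left side is join-irreducible, then $L(\rN_{@z}) = L(\rN_{@s})$ for some $s\in S$, whence $z = s \in S$ by injectivity — this is exactly the union-free condition.

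Finally, $(3)\Rightarrow(1)$ is the step I expect to be the main obstacle. Assume $\rN$ is locally-saturated and union-free. I want to show $\lambda z. L(\rN_{@z})$ is a bijection $Z \to J(\jslLangs{\rN})$ that reflects all structure. Union-freeness gives injectivity directly (if two states accepted the same language, each would be ``in'' the singleton of the other, forcing equality). The harder direction is that the image is exactly $J(\jslLangs{\rN})$ and that the map is surjective onto it: I would argue that each $L(\rN_{@z})$ is join-irreducible — if $L(\rN_{@z}) = \bigvee X$ for $X\subseteq\jslLangs{\rN}$, write each element of $X$ as a union of $L(\rN_{@s})$'s, so $L(\rN_{@z})$ is a union of such, hence by union-freeness $z$ is among the corresponding states, giving $L(\rN_{@z})\in X$ — and conversely that every join-irreducible of $\jslLangs{\rN}$ arises as some $L(\rN_{@z})$, since $J(\jslLangs{\rN})$ is contained in $\{L(\rN_{@z}):z\in Z\}$ (the latter generates $\jslLangs{\rN}$ under unions, and any minimal generating set is contained in any generating set — Note \ref{note:irreducibles}.3). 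Given this bijection, the local-saturation equations of Definition \ref{def:local_sat_trans_max}.1 are \emph{verbatim} the defining equations of $\simpleIrr{\rN}$ transported along $\lambda z. L(\rN_{@z})$, so the bijection is an nfa isomorphism and $(1)$ holds. The delicate point to get right is the interplay between ``$\{L(\rN_{@z})\}$ generates $\jslLangs{\rN}$'' and union-freeness forcing this generating set to coincide with $J(\jslLangs{\rN})$; I would isolate that as a short preliminary observation before assembling the isomorphism.
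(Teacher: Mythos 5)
Your proposal is correct and follows essentially the same route as the paper's proof: the acceptance map $\lambda z.\, L(\rN_{@z})$ is the only candidate isomorphism, local saturation transfers along it, union-freeness amounts to join-irreducibility of the per-state languages plus injectivity, and surjectivity onto $J(\jslLangs{\rN})$ comes from minimality of the join-irreducibles as a generating set (Note \ref{note:irreducibles}.3). The one harmless divergence is in $(1)\Rightarrow(2)$, where you argue bijectivity by cardinality rather than by noting that any nfa isomorphism onto $\simpleIrr{\rN}$ must equal the acceptance map because each state of $\simpleIrr{\rN}$ accepts itself --- that observation is also what legitimises your ``routine'' structural check, since reflection of transitions under hypothesis (1) is exactly local saturation inherited through that isomorphism rather than something read off the defining formulas alone.
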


\begin{proof}
  \item
  \begin{enumerate}
    \item
    $(1 \iff 2)$: given (1) then each state accepts a distinct language, so there is only one possible nfa isomorphism.
    \item
    $(2 \implies 3)$:
    Suppose $\lambda z.L(\rN_{@z})$ defines an nfa isomorphism. Then $\rN$ is locally-saturated because $\simpleIrr{\rN}$ is locally-saturated by (1), and this property is preserved by the nfa isomorphism. Recall the join-semilattice of accepted languages $\jslLangs{\rN}$ and also the relationship $L(\rN_{@S}) = \bigcup_{z \in S} L(\rN_{@z})$ from Definition \ref{def:nfa_basics}. Then (union-free) holds via Lemma \ref{lem:irr_simplifying_nfas}.2 because it asserts each $z \in Z$ accepts $L(\rN_{@z}) \in J(\jslLangs{\rN})$.
    \item 
    $(3 \implies 2)$:
    Suppose $\rN$ is locally-saturated and satisfies (union-free).
    Firstly, $f := \lambda z.L(\rN_{@z}) : Z \to J(\jslLangs{\rN})$ is a well-defined function because by (union-free) we know each $L(\rN_{@z}) \in J(\jslLangs{\rN})$. Furthermore $f$ is injective for otherwise (union-free) would fail, and surjective because $J(\jslLangs{\rN})$ is the minimal join-generating subset of $\jslLangs{\rN}$ (see Note \ref{note:jsl_extras}.4). Concerning the nfa isomorphism, $z$ is final iff $\epsilon \in L(\rN_{@z})$ hence $f$ preserves and reflects final states. The initial states and transitions are preserved and reflected because $\rN$ is locally-saturated.
  \end{enumerate}
\end{proof}

We now characterize the intersection-saturated nfas.

\begin{theorem}
  \label{thm:intersection_rule_char}
  An nfa $\rN$ is intersection-saturated iff $\rev{\rN}$ is locally-saturated.
\end{theorem}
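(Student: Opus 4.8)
The plan is to unwind both conditions purely in terms of the reachable subsets $\rN_u[I_\rN]$ of $\rN$, using the definition of $L(\rN_{@z})$ and the translation between acceptance in $\rN$ and in $\rev{\rN}$. The statement should follow by a direct logical equivalence, treating the transitions and the final/initial states as two parallel cases.

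First I would recall the key dictionary: for the reverse nfa $\rev{\rN} = (F, Z, (\rN_a)\spbreve, I)$ we have $(\rev{\rN})_w = (\rN_{w^r})\spbreve$, so that $z \in \rev{\rN}_w[F] \iff z \in \rN_{w^r}\spbreve[F] \iff \rN_{w^r}[z] \cap F \neq \emptyset \iff w^r \in L(\rN_{@z})$; hence $L(\rev{\rN}_{@z}) = (L(\rN_{@z}))^r$. Also $L(\rev{\rN}) = (L(\rN))^r$ by Lemma~\ref{lem:upper_nfa_accepts_reverse} (or directly). With this, ``$\rev{\rN}$ is locally-saturated'' (Definition~\ref{def:local_sat_trans_max}.1 applied to $\rev{\rN}$) says: $z \in F_\rN \iff L(\rev{\rN}_{@z}) \subseteq L(\rev{\rN})$, i.e. $(L(\rN_{@z}))^r \subseteq (L(\rN))^r$, i.e. $L(\rN_{@z}) \subseteq L(\rN)$; and $(\rN_a)\spbreve(z_1,z_2) \iff L(\rev{\rN}_{@z_2}) \subseteq a^{-1} L(\rev{\rN}_{@z_1})$, i.e. $\rN_a(z_2,z_1) \iff (L(\rN_{@z_2}))^r \subseteq a^{-1}(L(\rN_{@z_1}))^r$.

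The crux is then to show these are equivalent to the intersection-saturation conditions of Definition~\ref{def:local_sat_trans_max}.2, which I would reformulate via Note~\ref{note:intersection_saturated_basic_char}: $\rN_a(z_1,z_2) \iff \forall u.(z_1 \in \rN_u[I] \To z_2 \in \rN_{ua}[I])$ and $z \in F_\rN \iff \forall u.(z \in \rN_u[I] \To u \in L(\rN))$. The bridge is the observation that $z \in \rN_u[I] \iff u^r \in L(\rev{\rN}_{@z})$ — again via the transition-reversal identity — equivalently $u \in (L(\rev{\rN}_{@z}))^r = L(\rN_{@z})$ is the wrong variable; more carefully, $z \in \rN_u[I] \iff I \cap (\rN_u)\spbreve[z] \neq \emptyset \iff (\rev{\rN})_{u^r}[z] \cap I \neq \emptyset$, wait — cleanest is: $z \in \rN_u[I]$ iff there is a $u$-path in $\rN$ from some initial state to $z$, iff there is a $u^r$-path in $\rev{\rN}$ from $z$ to some state of $I$, iff $u^r \in L(\rev{\rN}_{@z})$. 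Then the final-state condition reads $z \in F_\rN \iff \forall u.(u^r \in L(\rev{\rN}_{@z}) \To u \in L(\rN))$, i.e. $\forall v.(v \in L(\rev{\rN}_{@z}) \To v^r \in L(\rN))$, i.e. $L(\rev{\rN}_{@z}) \subseteq (L(\rN))^r = L(\rev{\rN})$ — which is exactly local saturation of $\rev{\rN}$ at final states. The transition condition similarly becomes $\rN_a(z_1,z_2) \iff \forall u.(u^r \in L(\rev{\rN}_{@z_1}) \To (ua)^r \in L(\rev{\rN}_{@z_2}))$; substituting $v = u^r$ and noting $(ua)^r = a v$, this is $\forall v.(v \in L(\rev{\rN}_{@z_1}) \To av \in L(\rev{\rN}_{@z_2}))$, i.e. $L(\rev{\rN}_{@z_1}) \subseteq a^{-1} L(\rev{\rN}_{@z_2})$; comparing with $(\rN_a)\spbreve(z_1,z_2) \iff \rN_a(z_2,z_1)$, this matches local saturation of $\rev{\rN}$ after swapping $z_1 \leftrightarrow z_2$.

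The main obstacle is purely bookkeeping: keeping the reversals, the swap of roles $z_1 \leftrightarrow z_2$ in the reverse transition relation, and the $u$ versus $u^r$ substitutions straight, so that the two sides line up exactly rather than up to an unintended transpose. I would therefore organize the proof as: (Step 1) establish $L(\rev{\rN}_{@z}) = (L(\rN_{@z}))^r$ and $z \in \rN_u[I] \iff u^r \in L(\rev{\rN}_{@z})$; (Step 2) rewrite the intersection-saturation conditions using Note~\ref{note:intersection_saturated_basic_char} and substitute via Step 1 to land on the local-saturation conditions for $\rev{\rN}$; (Step 3) note the argument is reversible since every step is a biconditional. No deep idea is needed beyond the reversal dictionary, so the writeup can be kept short.
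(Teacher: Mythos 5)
Your third paragraph is essentially the paper's proof: the paper also expresses $L(\rev{\rN}_{@z})$ as the reversed \emph{left} language of $z$, namely $L(\rev{\rN}_{@z}) = (\{u \in \Sigma^* : z \in \rN_u[I]\})^r$, and then rewrites the local-saturation conditions for $\rev{\rN}$ into the intersection-saturation conditions for $\rN$ using $(a^{-1}X)^r = X^r a^{-1}$. So the route is the same and the core equivalence chain you give (via $z \in \rN_u[I] \iff u^r \in L(\rev{\rN}_{@z})$) is correct.

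However, Step 1 as you state it contains a false identity: $L(\rev{\rN}_{@z}) = (L(\rN_{@z}))^r$ does not hold. What you actually verify in your first paragraph is that $z \in \rev{\rN}_w[F] \iff w^r \in L(\rN_{@z})$, which says the \emph{left} language of $z$ in $\rev{\rN}$ equals $(L(\rN_{@z}))^r$; but $L(\rev{\rN}_{@z})$ is the \emph{right} language of $z$ in $\rev{\rN}$, which is the reverse of the left language of $z$ in $\rN$ — a different set (it involves $I$ and paths \emph{into} $z$, not $F$ and paths \emph{out of} $z$). A two-state counterexample: $I=\{1\}$, $F=\{2\}$, a single $a$-transition $1 \to 2$ gives $L(\rev{\rN}_{@1}) = \{\epsilon\} \neq \{a\} = (L(\rN_{@1}))^r$. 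Consequently your second paragraph's rewriting of ``$\rev{\rN}$ is locally-saturated'' in terms of $L(\rN_{@z})$ (e.g.\ ``$z \in F_\rN \iff L(\rN_{@z}) \subseteq L(\rN)$'') is wrong and must be deleted. The fix is simply to drop the false dictionary entry and keep only the correct one, $z \in \rN_u[I] \iff u^r \in L(\rev{\rN}_{@z})$, which is all your paragraph-three argument (and the paper's proof) actually uses. With that excision the proof stands.
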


\begin{proof}
  Let $\rN = (I, Z, \rN_a, F)$ and fix any $z \in Z$. For completely general reasons:
  \[
    L(\rev{\rN}_{@z})
    = L(\{ z \}, Z, \rN_a\spbreve, I)
    = (L(I, Z, \rN_a, \{ z \}))^r
    = (\{ u \in \Sigma^* : z \in \rN_u[I] \})^r.
  \]
  Assuming $\rev{\rN}$ is locally-saturated we prove the condition concerning transitions:
  \[
    \begin{tabular}{lll}
      $\rN_a(z_1, z_2)$
      & $\iff \rN_a\spbreve(z_2, z_1)$
      \\ & $\iff L((\rev{\rN})_{@z_1}) \subseteq a^{-1} L((\rev{\rN})_{@z_2})$
      & ($\rev{\rN}$ locally-saturated)
      \\ & $\iff \{ u \in \Sigma^* : z_1 \in \rN_u[I] \}^r \subseteq a^{-1} (\{ u \in \Sigma^* : z_2 \in \rN_u[I] \}^r)$
      & (see above)
      \\ & $\iff \{ u \in \Sigma^* : z_1 \in \rN_u[I] \} \subseteq (\{ u \in \Sigma^* : z_2 \in \rN_u[I] \}) a^{-1}$
      & (since $(a^{-1} X)^r = X^r a^{-1}$)
      \\ & $\iff \forall u \in \Sigma^*. ( z_1 \in \rN_u[I] \To z_2 \in \rN_{ua}[I] )$.
    \end{tabular}
  \]
  Finally $\forall u \in \Sigma^*.( z \in \rN_u\spbreve[F_\rN] \To \rN_u\spbreve[F_\rN] \cap I_\rN \neq \emptyset )$ is equivalent to requiring $L(\rN_{@z}) \subseteq L$, which follows by local saturation. Conversely if $\rN$ is intersection-saturated it is locally-saturated by reversing the above arguments.
\end{proof}

Then there is also a canonical way to intersection saturate an nfa.

\begin{corollary}
  $\rev{\simpleIrr{\rev{\rN}}}$ is an intersection-saturated nfa accepting $L(\rN)$, no larger than $\rN$.
\end{corollary}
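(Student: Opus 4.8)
The plan is to deduce this corollary directly from the two immediately preceding results, namely Theorem~\ref{thm:char_irr_simple}/Lemma~\ref{lem:irr_simple_is_locally_saturated} (which tell us about $\simpleIrr{-}$) and Theorem~\ref{thm:intersection_rule_char} (which characterises intersection-saturation via reversal). The three claims to establish are: (i) $\rev{\simpleIrr{\rev{\rN}}}$ is intersection-saturated, (ii) it accepts $L(\rN)$, and (iii) it is no larger than $\rN$.

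First I would handle intersection-saturation. By Theorem~\ref{thm:intersection_rule_char}, an nfa $\rM$ is intersection-saturated iff $\rev{\rM}$ is locally-saturated. Applying this with $\rM := \rev{\simpleIrr{\rev{\rN}}}$, it suffices to show $\rev{\rev{\simpleIrr{\rev{\rN}}}}$ is locally-saturated; but double reversal is the identity on nfas (immediate from Definition~\ref{def:nfa_basics}.3.b, since taking the converse of each transition relation twice and swapping initial/final states twice returns the original nfa), so this reduces to showing $\simpleIrr{\rev{\rN}}$ is locally-saturated, which is exactly Lemma~\ref{lem:irr_simple_is_locally_saturated}.

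Next, for the accepted language: by Lemma~\ref{lem:irr_simplifying_nfas}.1, $\simpleIrr{\rev{\rN}}$ accepts $L(\rev{\rN}) = (L(\rN))^r$. Reversing, and using that the reverse nfa accepts the reverse language (Definition~\ref{def:nfa_basics}.3.b), $\rev{\simpleIrr{\rev{\rN}}}$ accepts $((L(\rN))^r)^r = L(\rN)$. Finally, for the size bound: Lemma~\ref{lem:irr_simple_is_locally_saturated} also states that $\simpleIrr{\rev{\rN}}$ has no more states than $\rev{\rN}$, and reversal preserves the state set exactly (Definition~\ref{def:nfa_basics}.3.b uses the same underlying set $Z$); since $\rev{\rN}$ has the same number of states as $\rN$, we get $|Z_{\rev{\simpleIrr{\rev{\rN}}}}| = |Z_{\simpleIrr{\rev{\rN}}}| \leq |Z_{\rev{\rN}}| = |Z_\rN|$, as required.

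I do not anticipate any serious obstacle here: the corollary is a formal consequence of the preceding lemma and theorem together with the elementary fact that reversal is an involution on nfas that fixes the state set and reverses the language. The only point requiring a modicum of care is bookkeeping the two nested reversals correctly — making sure that ``$\rev{\rev{(-)}}$'' is genuinely the identity at the level of nfas (not merely up to language), so that Theorem~\ref{thm:intersection_rule_char} can be applied cleanly in the contrapositive-free direction. Once that is noted, the proof is a three-line chain of citations.
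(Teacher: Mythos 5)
Your proof is correct and follows essentially the same route as the paper's: cite Lemma \ref{lem:irr_simple_is_locally_saturated} for local saturation and the state bound of $\simpleIrr{\rev{\rN}}$, Lemma \ref{lem:irr_simplifying_nfas}.1 for language preservation, and Theorem \ref{thm:intersection_rule_char} (together with reversal being a state-preserving, language-reversing involution) to convert local saturation of $\simpleIrr{\rev{\rN}}$ into intersection-saturation of its reverse. Your explicit remark that $\rev{\rev{(-)}}$ is the identity on nfas is the only point the paper leaves implicit; nothing further is needed.
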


\begin{proof}
  $\rev{\simpleIrr{\rev{\rN}}}$ accepts the same language because $\rev{-}$ reverses it and $\simpleIrr{-}$ preserves it (Lemma \ref{lem:irr_simplifying_nfas}.1). Moreover $\rev{-}$ preserves the number of states and $\simpleIrr{-}$ never increases it by Lemma \ref{lem:irr_simple_is_locally_saturated}. By the same Lemma we know $\simpleIrr{\rev{\rN}}$ is locally-saturated, hence its reverse satisfies the intersection rule by Theorem \ref{thm:intersection_rule_char}.
\end{proof}

Finally we collect a few results concerning transition-maximal nfas. Given any nfa, there is a \emph{non-canonical way} to construct a transition-maximal extension: keep adding initial/final states and transitions whenever doing so preserves the accepted language. Let us formally state this basic fact, an instantiation of Zorn's Lemma in the finite seatting.

\begin{lemma}
  \label{lem:nfa_has_trans_max_ext}
  Every nfa $\rN$ has a transition-maximal extension (see Definition \ref{def:local_sat_trans_max}.2).
\end{lemma}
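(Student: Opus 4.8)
The plan is to apply Zorn's Lemma (or, since everything is finite, a simple maximality argument) to the set of extensions of $\rN$. First I would fix the notation: let $\mathcal{E}$ be the collection of all nfas $\rM = (I_\rM, Z, \rM_a, F_\rM)$ with the same state set $Z$ such that $I \subseteq I_\rM$, $\rN_a \subseteq \rM_a$ for each $a \in \Sigma$, $F \subseteq F_\rM$, and $L(\rM) = L(\rN)$. This set is nonempty since $\rN \in \mathcal{E}$, and it is finite because $Z$ and $\Sigma$ are finite, so there are only finitely many possible nfa structures on $Z$. Order $\mathcal{E}$ by componentwise inclusion of initial states, transition relations and final states.

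Next I would observe that a maximal element of $(\mathcal{E}, \subseteq)$ exists simply because $\mathcal{E}$ is a finite nonempty poset. I would then verify that a maximal element $\rM$ of $\mathcal{E}$ is exactly a transition-maximal extension of $\rN$ in the sense of Definition \ref{def:local_sat_trans_max}.3: by construction $\rM$ extends $\rN$, and any nfa $\rM'$ extending $\rM$ (in particular with $L(\rM') = L(\rM) = L(\rN)$) lies in $\mathcal{E}$ and satisfies $\rM \subseteq \rM'$, hence $\rM = \rM'$ by maximality. This shows $\rM$'s only extension is itself, i.e.\ $\rM$ is transition-maximal. I would also remark that one could phrase this with Zorn's Lemma in general, noting that the union of a chain of extensions is again an extension (monotonicity of the accepted language under adding transitions gives $L(\rN) \subseteq L(\bigcup_i \rM_i)$, and each $\rM_i$ accepts a sublanguage of... wait, actually $L(\rM_i) = L(\rN)$ for all $i$, and any word accepted by the union uses finitely many transitions hence lies in some $\rM_i$, so $L(\bigcup_i \rM_i) = L(\rN)$), but in the finite setting this is unnecessary.

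There is essentially no obstacle here: the only mild point requiring care is confirming that the ``union of a chain'' (or, finitely, ``closing off under permissible additions'') preserves the condition $L(-) = L(\rN)$, which follows because adding transitions/initial/final states only enlarges the accepted language, so every intermediate extension still accepts exactly $L(\rN)$ by hypothesis, and because acceptance of any individual word is witnessed by finitely much structure. I would keep the write-up to a few lines, since this is an explicitly labelled ``basic fact''.

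\begin{proof}
  Let $\mathcal{E}$ denote the set of all nfas $\rM = (I_\rM, Z, \rM_a, F_\rM)$ on the state set $Z$ such that $\rM$ extends $\rN$, i.e.\ $I \subseteq I_\rM$, $\rN_a \subseteq \rM_a$ for each $a \in \Sigma$, $F \subseteq F_\rM$ and $L(\rM) = L(\rN)$. Order $\mathcal{E}$ by componentwise inclusion of the initial states, each transition relation, and the final states. Then $\mathcal{E} \neq \emptyset$ since $\rN \in \mathcal{E}$, and $\mathcal{E}$ is finite because $Z$ and $\Sigma$ are finite. Hence $(\mathcal{E}, \subseteq)$ has a maximal element $\rM$.

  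We claim $\rM$ is a transition-maximal extension of $\rN$. By definition of $\mathcal{E}$ it extends $\rN$. Now suppose $\rM'$ extends $\rM$; in particular $\rM' = (I_{\rM'}, Z, \rM'_a, F_{\rM'})$ with $I_\rM \subseteq I_{\rM'}$, $\rM_a \subseteq \rM'_a$, $F_\rM \subseteq F_{\rM'}$ and $L(\rM') = L(\rM) = L(\rN)$. Since $\rM$ extends $\rN$ and inclusions are transitive, $\rM'$ also extends $\rN$, so $\rM' \in \mathcal{E}$ with $\rM \subseteq \rM'$. By maximality of $\rM$ we get $\rM = \rM'$. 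Thus the only extension of $\rM$ is itself, so $\rM$ is transition-maximal.
\end{proof}
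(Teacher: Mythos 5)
Your proof is correct and follows exactly the argument the paper intends: the paper offers no formal proof, only the remark that one greedily adds initial/final states and transitions while acceptance is preserved, which is precisely your observation that the finite nonempty poset of extensions of $\rN$ (ordered by componentwise inclusion) has a maximal element, and that maximality in this poset coincides with transition-maximality by transitivity of the extension relation. No gaps; the Zorn's Lemma digression is correctly flagged as unnecessary in the finite setting.
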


\begin{lemma}
  \label{lem:rev_preserves_trans_max}
  $\rev{-}$ preserves transition-maximality.
\end{lemma}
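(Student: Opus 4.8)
The statement to prove is that if an nfa $\rN$ is transition-maximal, then so is $\rev{\rN}$. The whole argument rests on the fact that $\rev{-}$ is an involution on nfas over a fixed state set which, moreover, reverses the three inclusions defining ``extension'' and reverses the accepted language. So the plan is to transport an arbitrary extension of $\rev{\rN}$ back across $\rev{-}$ to obtain an extension of $\rN$, invoke transition-maximality of $\rN$ there, and then push the resulting equality forward again.

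Concretely, first I would fix a transition-maximal $\rN = (I_\rN, Z, \rN_a, F_\rN)$ and an arbitrary nfa $\rM = (I_\rM, Z, \rM_a, F_\rM)$ extending $\rev{\rN} = (F_\rN, Z, (\rN_a)\spbreve, I_\rN)$, so that $F_\rN \subseteq I_\rM$, $(\rN_a)\spbreve \subseteq \rM_a$ for each $a \in \Sigma$, $I_\rN \subseteq F_\rM$, and $L(\rM) = L(\rev{\rN}) = (L(\rN))^r$. Then I would form $\rev{\rM} = (F_\rM, Z, (\rM_a)\spbreve, I_\rM)$ and check that it extends $\rN$: taking converses of $(\rN_a)\spbreve \subseteq \rM_a$ gives $\rN_a \subseteq (\rM_a)\spbreve$; the initial/final inclusions $I_\rN \subseteq F_\rM$ and $F_\rN \subseteq I_\rM$ are exactly what is needed (initial states of $\rN$ inside initial states of $\rev{\rM}$, likewise finals); and $L(\rev{\rM}) = (L(\rM))^r = \bigl((L(\rN))^r\bigr)^r = L(\rN)$, using that $\rev{-}$ reverses the accepted language (from Definition~\ref{def:nfa_basics}) and that reversal is involutive on languages.

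Since $\rN$ is transition-maximal, its only extension is itself, so $\rev{\rM} = \rN$. Applying $\rev{-}$ to both sides and using that $\rev{\rev{\rM}} = \rM$ (because $((\rM_a)\spbreve)\spbreve = \rM_a$ and the initial/final pair is swapped twice) yields $\rM = \rev{\rN}$. Hence the only extension of $\rev{\rN}$ is $\rev{\rN}$ itself, i.e.\ $\rev{\rN}$ is transition-maximal.

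\textbf{Main obstacle.} There is essentially no obstacle here; the proof is a short diagram-chase through the definition of ``extends''. The only thing to be careful about is bookkeeping with the converse operation $(-)\spbreve$ and the swap of initial/final states — in particular verifying that ``$\rM$ extends $\rev{\rN}$'' really does translate to ``$\rev{\rM}$ extends $\rN$'' under all three inclusion conditions, and that reversal of languages cancels correctly. I would state this cleanly as a one-line lemma (``$\rev{-}$ reverses the extension order and is involutive'') if it makes the write-up smoother.
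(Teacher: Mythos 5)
Your proof is correct and is exactly the argument the paper uses: the paper's one-line proof is "an nfa $\rM$ extends $\rev{\rN}$ iff $\rev{\rM}$ extends $\rN$," and your write-up is just this observation carried out in full detail, with the involution $\rev{\rev{\rM}} = \rM$ and the cancellation of language reversal made explicit. No gaps.
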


\begin{proof}
  Holds because an nfa $\rM$ extends $\rev{\rN}$ iff $\rev{\rM}$ extends $\rN$.
\end{proof}

Transition-maximal transitions are determined by the order-structure of $\jslLangs{\rN} \supseteq \LW{L(\rN)}$.

\begin{lemma}[Transition-maximal transitions and finality]
  \label{lem:trans_max_nfa_property}
  If an nfa $\rN = (I, Z, \rN_a, F)$ is transition-maximal,
  \[
    \tag{T}
    \rN_a(z_1, z_2)
    \iff \forall X \in \LW{L(\rN)}.[
      L(\rN_{@z_1}) \subseteq X \To L(\rN_{@z_2}) \subseteq a^{-1} X
    ]
  \]
  \[
    \tag{F}
    z \in F
    \iff \forall X \in \LW{L(\rN)}. ( L(\rN_{@z}) \subseteq X \To \epsilon \in X ).
  \]
\end{lemma}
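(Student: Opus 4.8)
The plan is to prove both (T) and (F) by exhibiting, in each direction, a concrete extension of $\rN$ whose accepted language equals $L(\rN)$, and then invoking transition-maximality to force the relevant transition/finality to be present. Write $L := L(\rN)$ throughout. The $(\Leftarrow)$ directions of (T) and (F) are the substantive ones; the $(\Rightarrow)$ directions will come from the fact that, in a locally-saturated nfa (and every transition-maximal nfa is locally-saturated — this should be invoked or quickly re-derived), the right-hand sides hold automatically. Concretely, for the $(\Rightarrow)$ direction of (T): given $\rN_a(z_1,z_2)$, any $u$-path to $z_1$ composed with $a$ gives a $ua$-path through $z_2$, so $L(\rN_{@z_2}) \subseteq a^{-1}L(\rN_{@z_1})$; and if $L(\rN_{@z_1}) \subseteq X = u^{-1}L$ then $a^{-1}L(\rN_{@z_1}) \subseteq a^{-1}u^{-1}L = (ua)^{-1}L$, whence $L(\rN_{@z_2}) \subseteq (ua)^{-1}L$. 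The $(\Rightarrow)$ direction of (F) is similar: $z \in F$ gives $\epsilon \in L(\rN_{@z})$, so $L(\rN_{@z}) \subseteq X$ forces $\epsilon \in X$.

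For the $(\Leftarrow)$ direction of (T), suppose $z_1, z_2 \in Z$ satisfy the right-hand side but $\lnot\rN_a(z_1,z_2)$. I would form $\rM := (I, Z, \rM_b, F)$ where $\rM_a := \rN_a \cup \{(z_1,z_2)\}$ and $\rM_b := \rN_b$ for $b \ne a$. Clearly $\rM$ extends $\rN$ in the sense of Definition \ref{def:local_sat_trans_max}.3 except for the language condition, so it suffices to show $L(\rM) = L(\rN)$; since $\rN_a \subseteq \rM_a$ gives $L(\rN) \subseteq L(\rM)$ for free, the real work is $L(\rM) \subseteq L(\rN)$. Here I would argue that for every $u \in \Sigma^*$, $\rM_u[I] \subseteq \rN_u[I] \cup \{\, z : L(\rN_{@z}) \subseteq u^{-1}L \,\}$, proved by induction on $u$: the base case is trivial, and in the inductive step the only new states introduced by a use of the edge $(z_1,z_2)$ arrive at $z_2$ from some $z_1 \in \rM_v[I]$ with $va = u$; by induction either $z_1 \in \rN_v[I]$ — then $L(\rN_{@z_1}) \subseteq v^{-1}L$ since $v^{-1}L = L((\rsc{\rN})_{@\rN_v[I]}) \supseteq L(\rN_{@z_1})$ — or already $L(\rN_{@z_1}) \subseteq v^{-1}L$; either way the hypothesis on $(z_1,z_2)$ (applied with $X := v^{-1}L \in \LW{L}$) yields $L(\rN_{@z_2}) \subseteq a^{-1}v^{-1}L = u^{-1}L$. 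Finally, if $w \in L(\rM)$ then some $z \in \rM_w[I] \cap F$; if $z \in \rN_w[I]$ then $w \in L(\rN)$, and otherwise $L(\rN_{@z}) \subseteq w^{-1}L$ with $z \in F$ gives $\epsilon \in L(\rN_{@z}) \subseteq w^{-1}L$, i.e.\ $w \in L$. This contradicts transition-maximality of $\rN$, so $\rN_a(z_1,z_2)$ must already hold.

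For the $(\Leftarrow)$ direction of (F), the argument is the same pattern with $\rM := (I, Z, \rN_a, F \cup \{z\})$: assuming the right-hand side for $z$, I show $L(\rM) \subseteq L(\rN)$, which reduces to: if $w \in \rN_w[I]$ hits $z$ then $z \in \rN_w[I]$ so $w^{-1}L = L((\rsc{\rN})_{@\rN_w[I]}) \supseteq L(\rN_{@z})$, and applying the hypothesis with $X := w^{-1}L$ gives $\epsilon \in w^{-1}L$, i.e.\ $w \in L$; transition-maximality then forces $z \in F$. The main obstacle, and the step I would be most careful about, is the inductive claim $\rM_u[I] \subseteq \rN_u[I] \cup \{\,z: L(\rN_{@z}) \subseteq u^{-1}L\,\}$ in the (T) case — specifically making sure the identity $L((\rsc{\rN})_{@\rN_v[I]}) = v^{-1}L$ is used correctly (this is essentially Definition \ref{def:dfas}.2 applied to $\rsc{\rN}$, together with $L(\rsc{\rN}) = L(\rN)$) so that membership of $z_1$ in $\rN_v[I]$ really does bound $L(\rN_{@z_1})$ by $v^{-1}L$; everything else is routine bookkeeping with reachable subsets and the definition of acceptance.
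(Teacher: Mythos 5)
Your proof is correct and follows the same top-level strategy as the paper's: for each nontrivial direction you adjoin the single missing transition (resp.\ final state), show the accepted language is unchanged, and contradict transition-maximality; the easy directions follow from monotonicity of $a^{-1}(-)$ exactly as in the paper. Where you genuinely differ is in the key combinatorial step for (T), namely why $L(\rM) \subseteq L(\rN)$ after adding the edge $(z_1,z_2)$. The paper fixes one accepting run of $\rM$ and inducts on the number of times that run uses the new edge, decomposing the suffix after the first use as $(\prod_{i} v_i a)w$ with $\rN$-paths $v_i$ from $z_2$ back to $z_1$; you instead induct on the input word with the invariant $\rM_u[I] \subseteq \rN_u[I] \cup \{\, z : L(\rN_{@z}) \subseteq u^{-1}L \,\}$. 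Your version is arguably cleaner, since it avoids the run decomposition and handles all runs simultaneously, and your use of the identity $L((\rsc{\rN})_{@\rN_v[I]}) = v^{-1}L$ to bound $L(\rN_{@z_1})$ when $z_1 \in \rN_v[I]$ is correct. One spot where your sketch is terser than it should be: to push the invariant through the inductive step you must also treat the case where the last transition taken is an \emph{old} transition $\rN_b(z,z')$ issuing from a state $z$ that lies in the exception set $\{\, z : L(\rN_{@z}) \subseteq v^{-1}L \,\}$ but not in $\rN_v[I]$ --- your phrase ``the only new states introduced \dots arrive at $z_2$'' glosses over this, since exceptional states propagate forward along old edges too. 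The needed inequality $L(\rN_{@z'}) \subseteq b^{-1}L(\rN_{@z}) \subseteq b^{-1}(v^{-1}L) = u^{-1}L$ is immediate, so this is a presentational omission rather than a gap. Your argument for (F) matches the paper's.
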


\begin{proof}
  Let $L := L(\rN)$. 
  \begin{enumerate}
    \item
    We'll prove (T). Given an nfa $\rN$ where $\rN_a(z_1, z_2)$ then $L(\rN_{@z_2}) \subseteq a^{-1} L(\rN_{@z_1})$, so ($\To$) holds generally because $a^{-1}(-)$ is monotonic w.r.t.\ inclusions. We'll refer to (T)'s right hand side by (RHS).
    
    Suppose $\rN$ is transition-maximal and (RHS) holds for specific $z_1$, $z_2$. For a contradiction assume $(z_1, z_2) \nin \rN_a$, letting $\rM$ be $\rN$ with the new transition. We know $L \subseteq L(\rM)$ and we'll show the converse, contradicting transition-maximality. Consider:
      \[
        I \ni i \xto{u}_\rN z_1 \xto{a} z_2 \xto{v}_\rM f \in F
      \]
    where the $v$-path uses the new transition $n \geq 0$ times. We know $L(\rN_{@z_1}) \subseteq u^{-1} L$ and may write $v = (\prod_{1 \leq i \leq n} v_i a) w$ where $\rN_{v_i}(z_2, z_1)$ for $1 \leq i \leq n$ and $w \in L(\rN_{@z_2})$. Then it suffices to establish $L(\rN_{@z_2}) \subseteq (ua (\prod_{1 \leq i \leq n} v_i a))^{-1} L$ by induction. For $n = 0$ we just apply (RHS). For the inductive case $n + 1$ we combine:
    \[
      L(\rN_{@z_2}) \subseteq (ua (\prod_{1 \leq i \leq n} v_i a))^{-1} L
      \qquad
      L(\rN_{@z_1}) \subseteq (v_{n + 1})^{-1} L(\rN_{@z_2})
    \]
    to infer $L(\rN_{@z_1}) \subseteq (ua (\prod_{1 \leq i \leq n} v_i a)v_{n+1})^{-1} L$ and finally $L(\rN_{@z_2}) \subseteq (ua (\prod_{1 \leq i \leq n} v_i a)v_{n+1}a)^{-1} L$ via (RHS).

    \item
    We'll prove (F). The implication ($\To$) is trivial because $z \in F$ implies $\epsilon \in L(\rN_{@z})$. Conversely we'll use transition-maximality. Assuming (RHS), and given any $u$-path $I \ni z_1 \to_u z_n = z$ through $\rN$, since $L(\rN_{@z}) \subseteq u^{-1} L$ we infer $\epsilon \in u^{-1} L$ i.e.\ $u \in L$, so by transition-maximality $z \in F$.
  \end{enumerate}
\end{proof}

\begin{corollary}
  \label{cor:trans_max_implies_locally_saturated}
  If $\rN$ is transition-maximal it is locally-saturated and intersection-saturated.
\end{corollary}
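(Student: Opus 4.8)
The plan is to read off local saturation directly from the characterisations (T) and (F) of Lemma~\ref{lem:trans_max_nfa_property}, and then to obtain intersection-saturation for free from the duality between the two notions.

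First I would verify the initial-state clause of Definition~\ref{def:local_sat_trans_max}.1. The implication $z \in I \Rightarrow L(\rN_{@z}) \subseteq L(\rN)$ is immediate since $L(\rN) = L(\rN_{@I}) = \bigcup_{z' \in I} L(\rN_{@z'})$ by Definition~\ref{def:nfa_basics}.3a. For the converse, suppose $L(\rN_{@z}) \subseteq L(\rN)$ and let $\rM$ be $\rN$ with $z$ adjoined to its set of initial states. Then $L(\rM) = L(\rN) \cup L(\rN_{@z}) = L(\rN)$, so $\rM$ extends $\rN$ in the sense of Definition~\ref{def:local_sat_trans_max}.2, and transition-maximality forces $\rM = \rN$, i.e.\ $z \in I$. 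No separate argument is needed for final states: $z \in F \iff \epsilon \in L(\rN_{@z})$ holds for every nfa, which is all that local saturation demands of finality.

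Next I would verify the transition clause $\rN_a(z_1,z_2) \iff L(\rN_{@z_2}) \subseteq a^{-1} L(\rN_{@z_1})$. The forward direction holds for any nfa, since an $a$-edge from $z_1$ to $z_2$ means every word accepted from $z_2$, prefixed by $a$, is accepted from $z_1$. For the converse, assume $L(\rN_{@z_2}) \subseteq a^{-1} L(\rN_{@z_1})$. Then for any $X \in \LW{L(\rN)}$ with $L(\rN_{@z_1}) \subseteq X$, monotonicity of $a^{-1}(-)$ gives $a^{-1} L(\rN_{@z_1}) \subseteq a^{-1} X$, hence $L(\rN_{@z_2}) \subseteq a^{-1} X$; this is exactly the right-hand side of (T), so Lemma~\ref{lem:trans_max_nfa_property} yields $\rN_a(z_1,z_2)$. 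This shows $\rN$ is locally-saturated.

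Finally, intersection-saturation: by Lemma~\ref{lem:rev_preserves_trans_max} the reverse nfa $\rev{\rN}$ is again transition-maximal, hence locally-saturated by the argument just given, and Theorem~\ref{thm:intersection_rule_char} then gives that $\rN$ is intersection-saturated. The only subtle point is the converse of the transition clause: adding a single $a$-edge can in principle enlarge the accepted language along cycles through that edge, so it is not a one-line extension argument — but that difficulty has already been dealt with inside the proof of (T) in Lemma~\ref{lem:trans_max_nfa_property}, and here it suffices to observe that the hypothesis $L(\rN_{@z_2}) \subseteq a^{-1} L(\rN_{@z_1})$ implies the seemingly weaker condition quantifying only over left word quotients.
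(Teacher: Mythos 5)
Your proof is correct and follows essentially the same route as the paper: both reduce the transition clause to characterisation (T) of Lemma~\ref{lem:trans_max_nfa_property} via monotonicity of $a^{-1}(-)$, handle the initial-state clause by a direct extension argument, and obtain intersection-saturation from Lemma~\ref{lem:rev_preserves_trans_max} together with Theorem~\ref{thm:intersection_rule_char}. Your added observations (that local saturation imposes no condition on final states, and that the cycle subtlety is already absorbed into the proof of (T)) are accurate but not a departure from the paper's argument.
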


\begin{proof}
  Given transition-maximal $\rN = (I, Z, \rN_a, I)$ we first we show $\rN$ is locally-saturated. Given $L(\rN_{@z}) \subseteq L(\rN)$ then $z \in I$ by transition-maximality; the converse is trivial. Concerning transitions, $\rN_a(z_1, z_2)$ certainly implies $L(\rN_{@z_2}) \subseteq a^{-1} L(\rN_{@z_1})$. Conversely if the latter holds, then whenever $L(\rN_{@z_1}) \subseteq X \in \LW{L}$ we infer $L(\rN_{@z_2}) \subseteq a^{-1} L(\rN_{@z_1}) \subseteq a^{-1} X$ because $a^{-1}(-)$ is monotonic w.r.t.\ inclusions. Thus $\rN_a(z_1, z_2)$ by Lemma \ref{lem:trans_max_nfa_property}, so $\rN$ is locally-saturated. Finally, $\rev{\rN}$ is transition-maximal by Lemma \ref{lem:rev_preserves_trans_max}  hence locally-saturated, so $\rN$ is intersection-saturated by Theorem \ref{thm:intersection_rule_char}.
\end{proof}

\begin{corollary}
  \label{cor:tmax_and_ufree_simple}
  If $\rN$ is transition-maximal and union-free then $\rN \cong \simpleIrr{\rN}$.
\end{corollary}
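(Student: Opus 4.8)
The plan is to simply invoke the characterisation of irreducible simplifications from Theorem \ref{thm:char_irr_simple} together with Corollary \ref{cor:trans_max_implies_locally_saturated}. Concretely: assume $\rN$ is transition-maximal and union-free. By Corollary \ref{cor:trans_max_implies_locally_saturated} transition-maximality implies that $\rN$ is locally-saturated. Hence $\rN$ is locally-saturated and satisfies the (union-free) condition, which is precisely statement (3) in Theorem \ref{thm:char_irr_simple}. Applying that theorem, statement (3) implies statement (1), i.e.\ $\rN \cong \simpleIrr{\rN}$ (in fact via statement (2), the map $\lambda z. L(\rN_{@z})$ is the witnessing nfa isomorphism).

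There is essentially no calculation to carry out here — the two cited results do all the work, so the only thing to check is that the hypotheses line up exactly with condition (3) of Theorem \ref{thm:char_irr_simple}, which they do verbatim. If one wanted to be fully self-contained one could unfold the argument of Theorem \ref{thm:char_irr_simple}'s implication $(3 \implies 2)$: the map $f := \lambda z. L(\rN_{@z}) : Z \to J(\jslLangs{\rN})$ is well-defined and injective because (union-free) forces each $L(\rN_{@z})$ to be join-irreducible and pairwise distinct, surjective because $J(\jslLangs{\rN})$ is the minimal join-generating set, preserves/reflects final states since $z \in F \iff \epsilon \in L(\rN_{@z})$, and preserves/reflects initial states and transitions by local saturation.

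The only potential obstacle is trivial: making sure that "transition-maximal" as used in the statement is the same notion as in Definition \ref{def:local_sat_trans_max}.3 (it is), and that Corollary \ref{cor:trans_max_implies_locally_saturated} is indeed available at this point of the text (it is, stated just above). Thus the proof is a two-step citation and needs no further machinery.

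\begin{proof}
  By Corollary \ref{cor:trans_max_implies_locally_saturated}, since $\rN$ is transition-maximal it is locally-saturated. Together with the assumption that $\rN$ is union-free, this is exactly condition (3) of Theorem \ref{thm:char_irr_simple}. Hence condition (1) of that theorem holds, i.e.\ $\rN \cong \simpleIrr{\rN}$, with the isomorphism given by $\lambda z. L(\rN_{@z})$.
\end{proof}
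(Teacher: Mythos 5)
Your proof is correct and matches the paper's own argument exactly: apply Corollary \ref{cor:trans_max_implies_locally_saturated} to get local saturation from transition-maximality, then invoke the equivalence $(3)\Leftrightarrow(1)$ of Theorem \ref{thm:char_irr_simple}. The extra unfolding of the $(3)\Rightarrow(2)$ implication is accurate but not needed.
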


\begin{proof}
  By Corollary \ref{cor:trans_max_implies_locally_saturated} $\rN$ is locally-saturated, so by union-freeness $\rN \cong \simpleIrr{\rN}$ via Theorem \ref{thm:char_irr_simple}.
\end{proof}

\begin{lemma}
  \label{lem:irr_simple_preserves_trans_max}
  $\simpleIrr{-}$ preserves transition-maximality.
\end{lemma}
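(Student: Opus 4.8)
The plan is to show that $\simpleIrr{\rN}$ admits no strict extension. By Lemma~\ref{lem:nfa_has_trans_max_ext} together with the transitivity of the extension relation, it suffices to prove that every transition-maximal nfa $\rM$ extending $\rP := \simpleIrr{\rN}$ is equal to $\rP$. Throughout, write $\aS := \jslLangs{\rN}$ and $L := L(\rN)$, and recall: $\rP$ is locally-saturated (Lemma~\ref{lem:irr_simple_is_locally_saturated}) and union-free with $\rP \cong \simpleIrr{\rP}$ (Lemma~\ref{lem:irr_simplifying_nfas}.4 and Theorem~\ref{thm:char_irr_simple}); explicitly $\rP$ has state set $J(\aS)$ with each state $X$ accepting exactly $X$ (Lemma~\ref{lem:irr_simplifying_nfas}.2), transitions $X_1 \to_a X_2 \iff X_2 \subseteq a^{-1} X_1$, initial states $\{X \in J(\aS) : X \subseteq L\}$ and final states $\{X \in J(\aS) : \epsilon \in X\}$; and $\rN$ is itself locally-saturated (Corollary~\ref{cor:trans_max_implies_locally_saturated}).

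I would first reduce the goal $\rM = \rP$ to the single identity $L(\rM_{@X}) = X$ for every $X \in J(\aS)$. Since $\rM$ is transition-maximal it is locally-saturated (Corollary~\ref{cor:trans_max_implies_locally_saturated}), so once each state of $\rM$ accepts the same language as the corresponding state of $\rP$, local saturation forces $I_\rM$, $F_\rM$ and all $\rM_a$ to coincide with those of $\rP$; combined with $\rP \subseteq \rM$ this gives $\rM = \rP$. The inclusion $X \subseteq L(\rM_{@X})$ holds automatically because $\rM$ extends $\rP$, so only $L(\rM_{@X}) \subseteq X$ remains. Unwinding a shortest witness $w \in L(\rM_{@X_0}) \setminus X_0$ shows the discrepancy must originate either from a final state $X_0$ of $\rM$ with $\epsilon \notin X_0$, or from a transition $\rM_a(X_1, X_2)$ with $X_2 \not\subseteq a^{-1} X_1$.

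Both cases are pushed back to $\rN$ using that every join-irreducible of $\aS = \jslLangs{\rN}$ is of the form $L(\rN_{@z})$ for a single state $z \in Z_\rN$ (by join-irreducibility, since arbitrary elements of $\aS$ are unions of the $L(\rN_{@z})$'s). So $X_i = L(\rN_{@z_i})$, and local saturation of $\rN$ turns $X_2 \not\subseteq a^{-1}X_1$ into $\neg\,\rN_a(z_1, z_2)$ and $\epsilon \notin X_0$ into $z_0 \notin F_\rN$. If $z_0$ (resp.\ every such $z_1$) is unreachable in $\rN$ then adjoining the missing finality (resp.\ transition) leaves $L(\rN)$ unchanged, contradicting transition-maximality of $\rN$ outright; otherwise transition-maximality of $\rN$ yields a word $w' \notin L$ accepted after the adjunction, and I would transfer $w'$ along the canonical collapse $z \mapsto L(\rN_{@z})$ --- an nfa morphism from $\rN$ onto the locally-saturated nfa over $\aS$ underlying $\jslDfaSimple{\jslDfaSc{\rN}}$ (cf.\ the generator description of Proposition~\ref{prop:aut_dep_generator_isoms} and the biclique-form collapse of Lemma~\ref{lem:l_coverings}.5), of which $\rP$, and after the adjunction $\rM$, is the union-free reduct --- to conclude $w' \in L(\rM)$, contradicting $L(\rM) = L$.

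The main obstacle is exactly this last transfer. Because $\aS = \jslLangs{\rN}$ need not be distributive, its join-irreducibles need not be join-prime, so the collapse $z \mapsto L(\rN_{@z})$ cannot simply be promoted to a morphism into the subset construction of $\rP$; as in the proof of Lemma~\ref{lem:l_coverings}.5 one must argue directly with explicit accepting paths, tracking how a path of $\rN$ through $z$ descends to a path of $\rP$ through $L(\rN_{@z})$. The remaining ingredients --- reducing to a transition-maximal extension, the local-saturation bookkeeping, lifting join-irreducible states to single states of $\rN$, and disposing of unreachable ``outward-saturated final'' states --- are routine.
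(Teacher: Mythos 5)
Your proposal is, at its core, the paper's own argument: localize a hypothetical violation of transition-maximality of $\simpleIrr{\rN}$ to a single extra final state or a single extra transition between join-irreducibles $X_1, X_2 \in J(\jslLangs{\rN})$, pull that datum back to $\rN$ along the fibres $\{z : L(\rN_{@z}) = X_i\}$ (nonempty since every join-irreducible of $\jslLangs{\rN}$ is of the form $L(\rN_{@z})$), and contradict transition-maximality of $\rN$ by descending accepting paths of the enlarged $\rN$ to accepting paths of the enlarged $\simpleIrr{\rN}$. Your extra scaffolding --- passing to a transition-maximal extension $\rM$ of $\rP$ via Lemma~\ref{lem:nfa_has_trans_max_ext}, reducing $\rM = \rP$ to $L(\rM_{@X}) = X$ for all states, and the reachability case split --- is sound but not needed: the paper takes an arbitrary single added initial/final state or transition and shows directly that the corresponding enlargement of $\rN$ still accepts $L$, which is the same contradiction you reach by a longer route.

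The one substantive issue is exactly the obstacle you flag and do not close. The descent runs from a path of the enlarged $\rN$, which may visit states $z$ with $L(\rN_{@z})$ \emph{reducible} in $\jslLangs{\rN}$, to a path of the enlarged $\rP$, whose states are only the join-irreducibles; so the phrase ``a path of $\rP$ through $L(\rN_{@z})$'' does not typecheck in general, and, as you yourself note, the failure of join-primality blocks the obvious rerouting of such a path through join-irreducibles below $L(\rN_{@z})$. You should be aware that the paper's write-up is equally terse at precisely this point: it asserts that $\rN_a(z_1,z_2)$ yields the transition $acc_\delta(\{z_1\}) \to_a acc_\delta(\{z_2\})$ ``in $\simpleIrr{\rN}$'' without checking that these languages are join-irreducible. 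So you have reproduced the paper's argument, including its thinnest step; to turn your sketch into a complete proof you must either carry out the descent exactly as the paper states it (accepting the same elision) or supply the missing rerouting of reducible waypoints through join-irreducibles, which your current plan only names as a difficulty.
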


\begin{proof}
  Given $\rN = (I, Z, \rN_a, F)$ we have the full subset construction $\delta := \Det(\dep{\rN})$, the quotient $\JSL$-dfa $acc_\delta : \delta \epito \jslDfaSimple{\delta}$ and also the irreducible simplification $\simpleIrr{\rN}$. If $\rN_a(z_1, z_2)$ then for completely general reasons $L(\rN_{@z_2}) \subseteq a^{-1} L(\rN_{@z_1})$, or equivalently $acc_\delta(\{ z_2 \}) \subseteq a^{-1} acc_\delta(\{ z_1 \})$ i.e.\ $acc_\delta(\{ z_1 \}) \to_{a} acc_\delta(\{ z_2 \})$ in $\simpleIrr{\rN}$.
  
  \begin{enumerate}
    \item 
    One cannot add an initial state to $\simpleIrr{\rN}$ whilst preserving acceptance because, by local saturation (Lemma \ref{lem:irr_simple_is_locally_saturated}), any additional state accepts $K \nsubseteq L(\rN)$.

    \item 
    For a contradiction suppose adding a final state $K$  to $\simpleIrr{\rN}$ preserves acceptance. Then $\rN' := (I, Z, \rN_a, F \cup acc_\delta^{-1}(\{ K\}))$ accepts $L(\rN)$ (which is a contradiction) because any additional accepting $\rN'$-path $I \ni z_0 \to_{a_0} \cdots \to_{a_n} z_{n+1} \in acc_\delta^{-1}(\{ K\})$ directly induces $I_{\simpleIrr{\rN}} \ni acc_\delta(\{ z_0 \}) \to_{a_0} \cdots \to_{a_n} acc_\delta(\{ z_{n+1} \}) = K$ in $\simpleIrr{\rN}$'s extension. 

    \item 
    It remains to show no additional transitions can be added. For a contradiction, assume $\rN'$ obtained by adding a single new transition $X_1 \to_{a_0} X_2$ to $\simpleIrr{\rN}$ satisfies $L(\rN') = L(\simpleIrr{\rN}) = L(\rN)$.  Consider the nfa:
    \[
      \rM := (I, Z, \rM_a, F)
      \qquad
      \rM_a :=
        \begin{cases}
          \rN_a \cup acc_\delta^{-1}(\{ X_1 \}) \times acc_\delta^{-1}(\{ X_2 \})
          & \text{if $a = a_0$}
          \\
          \rN_a & \text{otherwise.}
        \end{cases}
    \]
    Let us show $\rM$ has strictly more transitions than $\rN$. Firstly $Y := acc_\delta^{-1}(\{ X_1 \}) \times acc_\delta^{-1}(\{ X_2 \})$ is non-empty because $acc_\delta$ is surjective. Secondly $\rN_{a_0} \cap Y = \emptyset$ for otherwise $X_1 \to_{a_0} X_2$ would already be in $\simpleIrr{\rN}$.
  
    \smallskip
    Certainly $L(\rN) \subseteq L(\rM)$. For a contradiction we establish the converse. Given an accepting $\rM$-path shown below left:
    \[
      I \ni z_0 \xto{a_1} \cdots \xto{a_n} z_n \in F
      \qquad
      I_{\simpleIrr{\rN}} \ni acc_\delta(\{ z_0 \}) \xto{a_1} \cdots \xto{a_n} acc_\delta(\{ z_n \}) \in F_{\simpleIrr{\rN}}
    \]
    there is a respective accepting $\rN'$-path shown above right. Indeed, if $\rN_{a_{i+1}}(z_i, z_{i+1})$ then $acc_\delta(\{ z_i \}) \to_{a_{i+1}} acc_\delta(\{ z_{i+1} \})$ in $\simpleIrr{\rN}$ and hence $\rN'$. Otherwise $(z_i, z_{i+1}) \in acc_\delta^{-1}(\{ X_1 \}) \times acc_\delta^{-1}(\{ X_2 \})$ is covered by the single extra transition in $\rN'$.
  \end{enumerate}
\end{proof}

\subsection{$L$-extensions}

\begin{definition}[$L$-extension]
  Recall the transitions of the state-minimal $\JSL$-dfa $\jslDfaMin{L}$ i.e.\ $\gamma_a = \lambda X.a^{-1} X : \jslLQ{L} \to \jslLQ{L}$ from Definition \ref{def:state_min_jsl_dfa}.
  An \emph{$L$-extension} $e : \jslLQ{L} \monoto (\aT, \delta_a)$ is an injective $\JSL_f$-morphism $e : \jslLQ{L} \monoto \aT$ together with $\aT$-endomorphisms $\delta_a$ such that $e \circ \gamma_a = \delta_a \circ e$ for each $a \in \Sigma$.
  \endbox
\end{definition}

Then an $L$-extension is a join-preserving order-embedding of $\jslLQ{L}$ into $\aT$. Additionally each endomorphism $\lambda X. a^{-1} X$ of the former is extended by the endomorphism $\delta_a$ of the latter.

\begin{note}[Representation theory]
  By Theorem \ref{thm:syn_semiring_trans_semiring} one can view an $L$-extension as a \emph{representation of $L$'s syntactic semiring} \cite{Polak2001}. Then we are considering the `representation theory' of finite idempotent semirings.
  \endbox
\end{note}

\begin{example}[$L$-extensions]
  \label{ex:l_extensions}
  \item
  \begin{enumerate}
    \item
    Given $\gamma_a := \lambda X. a^{-1}X$ we have two bijective $L$-extensions:
    \[
      id_{\jslLQ{L}} : \jslLQ{L} \to (\jslLQ{L}, \gamma_a)
      \qquad
      dr_L^{\bf-1}: \jslLQ{L} \to ((\jslLQ{L^r})^\pOp, (\gamma_a)_*)
    \]
    The second one follows by Theorem \ref{thm:dr_L}. They are essentially the same extension i.e.\ they are isomorphic when viewed as algebras with $|\Sigma|$-many unary operations.
    
    \item
    $\jslLQ{\Sigma^*} := (\{ \emptyset, \Sigma^* \}, \cup, \emptyset)$ where each $\lambda X.a^{-1} X = id_{\jslLQ{\Sigma^*}}$. Any $\aS \in \JSL_f$ has endomorphism:
    \[
      c_{\top_\aS} := \lambda s.
      \begin{cases}
        \bot_\aS & s = \bot_\aS
        \\
        \top_\aS & \text{otherwise.}
      \end{cases}
    \]
    If $\Sigma \neq \emptyset$, the number of injective $e : \jslLQ{\Sigma^*} \monoto \aS$ is $|S| - 1$. Each $e$ defines an $L$-extension $\jslLQ{\Sigma^*} \monoto (\aS, c_{\top_\aS})$.

    \item
    Let $T$ be a finite union-closed set of languages such that (a) $L \in T$ and (b) $X \in T \implies a^{-1} X \in T$ for all $a \in \Sigma$. Then the inclusion $\iota : \jslLQ{L} \hookto ((T, \cup, \emptyset), \lambda X. a^{-1} X)$ is an $L$-extension.
    \endbox

  \end{enumerate}
\end{example}

There is a direct translation from an $L$-extension to a $\JSL$-dfa: inherit the initial state and extend the final states of $\jslDfaMin{L}$ (see below). Conversely each $\JSL$-dfa induces an $L$-extension by first simplifying and then forgetting the initial state and final states.

\begin{definition}[Translation between $L$-extensions and $\JSL$-dfas]
  \label{def:l_ext_jsl_dfa_correspondence}
  \item
  \begin{enumerate}
    \item
    The \emph{induced $\JSL$-dfa} of an $L$-extension $e : \jslLQ{L} \monoto (\aT, \delta_a)$ is:
    \[
      \toJdfa{e} := (e(L), \aT, \delta_a, \overline{\down_\aT e(dr_L(L^r))} )
      \qquad
      \text{and accepts $L$.}
    \]

    \item
    Conversely given any $\JSL$-dfa $\delta = (s_0, \aS, \delta_a, F)$ then:
    \[
      \toLext{\delta} := \iota : \jslLQ{L} \hookto (\jslLangs{\delta}, \lambda X. a^{-1} X).
    \]
    is its \emph{induced $L(\delta)$-extension}.
    \endbox
  \end{enumerate}
\end{definition}

\begin{note}
  \label{note:to_lext_jsdfa_pres_lang}
  \item
  \begin{enumerate}
    \item 
    Concerning $\toJdfa{-}$, the largest non-final state in $\jslLQ{L}$ is $\bigcup \{ X \in \LW{L} : \epsilon \nin L \} = dr_L(L^r)$. Then by Definition \ref{def:dfa_jsl} it is well-defined $\JSL$-dfa. It accepts $L$ because the embedding $e$ restricts to a dfa-isomorphism from $\minDfa{L}$ i.e.\ the classical state-minimal dfa which is a sub dfa of $\jslDfaMin{L}$.

    \item
    Concerning $\toLext{-}$, the join-semilattice of accepted language $\jslLangs{\delta}$ is from Definition \ref{def:jsl_reach_simple}. It was used to define the simplification $\jslDfaSimple{\delta}$ of the $\JSL$-dfa $\delta$. We remarked in Example \ref{ex:l_extensions}.3 that such structures are well-defined $L$-extensions.
    \endbox
  \end{enumerate}
\end{note}

\begin{definition}[Simplicity, reachability, transition-maximality, state-minimality]
  \label{def:l_ext_simp_reach_tmax}
  Fix an $L$-extension $e$.
  \begin{enumerate}
    \item
    $e$ is \emph{simple} if $\toJdfa{e}$ is simple (Definition \ref{def:jsl_reach_simple}.2). Then $e$'s \emph{simplification} is $\jslDfaSimple{e} := \toLext{\toJdfa{e}}$.
    \item
    $e$ is \emph{reachable} if the lower nfa of $\Airr(\toJdfa{e})$ is reachable (Definition \ref{def:nfa_basics}).
    \item
    $e$ is \emph{transition-maximal} if the lower nfa of $\Airr(\toJdfa{e})$ is transition-maximal (Definition \ref{def:local_sat_trans_max}).
    \item
    $e$ is \emph{state-minimal} if the lower nfa of $\Airr(\toJdfa{e})$ is state-minimal.
    \endbox
  \end{enumerate}
\end{definition}

To simplify an $L$-extension one views it as a $\JSL$-dfa, simplifies it, and finally forgets the initial state and final states. Well-definedness follows because $\toJdfa{e}$ accepts $L$, so that $\toLext{\toJdfa{e}}$ is an $L$-extension. The notions of simplicity and simplification are inherited from $\JSL$-dfas, whereas the notions of reachability, transition-maximality and state-minimality are inherited from nfas.

\subsubsection{Transition-maximal $L$-extensions}

\begin{note}
  {\bf The results in this section are currently not being used elsewhere.}
  \endbox
\end{note}

\begin{lemma}[Reachability degeneracy]
  \label{lem:l_ext_reach_degeneracy}
  Let $e$ be a simple transition-maximal $L$-extension and $\Airr(\toJdfa{e}) = (\rM, \rG, \rM')$. Then $\rM$ has at most one unreachable state, accepting $\Sigma^*$ if it exists.
\end{lemma}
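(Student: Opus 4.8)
The plan is to analyze the lower nfa $\rM$ of $\Airr(\toJdfa{e})$ directly using the description of $\Airr$ on objects (Definition \ref{def:airr_and_det}) together with the transition-maximality and simplicity assumptions. Write $e : \jslLQ{L} \monoto (\aT, \delta_a)$ and $\toJdfa{e} = (e(L), \aT, \delta_a, \overline{\down_\aT e(dr_L(L^r))})$. Then $\rM$ has state set $J(\aT)$, initial states $J(\aT) \cap \down_\aT e(L)$, final states $J(\aT) \cap \overline{\up_\aT e(dr_L(L^r))}$ (computing $\Lor_\aT\overline{F}$ as $e(dr_L(L^r))$ since $dr_L(L^r)$ is the largest non-final state of $\jslDfaMin{L}$, see Note \ref{note:to_lext_jsdfa_pres_lang}), and transitions $(\Pirr\delta_a)_-$. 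By Corollary \ref{cor:dfa_nfa_lang_correspondence}, each state $j \in J(\aT)$ accepts $acc_{\toJdfa{e}}(j)$; since $e$ is simple, $\toJdfa{e} = \jslDfaSimple{\toJdfa{e}}$ and hence $acc_{\toJdfa{e}}$ is essentially the identity (distinct states accept distinct languages by Theorem \ref{thm:jsl_reach_dual_simple}.2), so the join-irreducibles of $\aT$ are themselves regular languages and state $j$ of $\rM$ accepts the language $j$.

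Next I would pin down which join-irreducibles are classically reachable. Since $\toJdfa{e}$ is transition-maximal, Corollary \ref{cor:trans_max_implies_locally_saturated} tells us $\rM$ is locally-saturated, so a state $j$ is reachable iff there is a path to it from the initial states; and by Lemma \ref{lem:trans_max_nfa_property} the transition structure and initiality are governed by containments against $\LW{L}$. The key claim is: the only possible unreachable join-irreducible is $\Sigma^* \in J(\aT)$ (when it happens to be join-irreducible in $\aT$). To see this, suppose $j \in J(\aT)$ is unreachable. Using local saturation together with transition-maximality applied via Lemma \ref{lem:trans_max_nfa_property}.(T): if $j \neq \Sigma^*$ then there is some left word quotient $X \in \LW{L}$ with $j \nsubseteq X$, equivalently $j \nsubseteq e(X)$; I would then exhibit a word $u$ with $j \subseteq u^{-1}(\text{something reachable})$ — more precisely, run the argument that a state failing to be below all the relevant $X$'s must be hit by a path, because transition-maximality forces every transition that "could" exist to exist. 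The one state that can evade this is $\Sigma^*$ itself, since $\Sigma^* \subseteq X$ holds for no proper $X$, so no containment obstruction can be leveraged and there may be genuinely no path to it. Finally, whenever $\Sigma^*$ is a join-irreducible of $\aT$ it is the top element restricted, and the state $\Sigma^*$ of $\rM$ accepts $\Sigma^*$ by the correspondence above; uniqueness follows because two distinct unreachable join-irreducibles would have to be incomparable, but in a join-semilattice $\Sigma^*$ (the top) is above every element, so it is the unique maximal join-irreducible and any other unreachable element $j' \subset \Sigma^*$ would be caught by the previous paragraph.

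The main obstacle I anticipate is making the "a state not below some $X \in \LW{L}$ is reachable" step airtight. This requires carefully combining: (i) local saturation of $\rM$, which fixes the initial states and transition relation in terms of language containments; (ii) transition-maximality, i.e. Lemma \ref{lem:trans_max_nfa_property}, which in particular says $\rM_a(z_1,z_2) \iff \forall X \in \LW{L}.[L(\rM_{@z_1})\subseteq X \To L(\rM_{@z_2}) \subseteq a^{-1}X]$; and (iii) the fact that $\minDfa{L}$ sits inside $\jslDfaMin{L}$, so every $X \in \LW{L}$ is of the form $u^{-1}L$ and is reachable in the underlying classical machine. I would argue: given $j \in J(\aT)$ with $j \neq \Sigma^*$, pick the smallest $X = u^{-1}L \in \LW{L}$ (or rather some $X$) with $j \nsubseteq X$; reachability of the state $u^{-1}L$ in $\rM$ (it is one of the generating word-quotients, hence reachable by following the classical path labelled $u$) combined with the transition-maximality criterion lets one push from $u^{-1}L$ to $j$ along a suitable letter, or directly shows $j$ is below the initial state $e(L)$ and hence is itself initial. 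The bookkeeping around which $X$ to choose and verifying the transition condition (T) holds for the needed step is where the real care is required; everything else is a routine unwinding of Definition \ref{def:airr_and_det}, Corollary \ref{cor:dfa_nfa_lang_correspondence} and Corollary \ref{cor:trans_max_implies_locally_saturated}.
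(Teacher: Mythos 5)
There is a genuine gap, and it sits exactly where you flagged it. The paper's proof rests on one observation you never make: if a state $z$ of $\rM$ is unreachable, then marking $z$ final and adding a transition $z \to_a z'$ for \emph{every} letter $a$ and \emph{every} state $z'$ cannot change the accepted language, because all the added edges are outgoing from $z$, so $z$ remains unreachable and no accepting run from $I_\rM$ ever visits it. Transition-maximality of $\rM$ (Definition \ref{def:local_sat_trans_max}.3) then forces all of this structure to be present already: $z$ is final and has an $a$-loop at itself for every $a$, hence $L(\rM_{@z}) = \Sigma^*$. Simplicity says distinct states accept distinct languages (Theorem \ref{thm:jsl_reach_dual_simple}.2), so at most one state accepts $\Sigma^*$, and the lemma follows. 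Note this uses transition-maximality only in its raw form (``the only extension is itself''), not via the containment characterisation of Lemma \ref{lem:trans_max_nfa_property}.

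Your plan instead tries to exhibit a path to every $j \in J(\aT)$ with $j \neq \Sigma^*$, and this direction does not go through as sketched. First, ``the state $u^{-1}L$ \dots is one of the generating word-quotients, hence reachable'' is not available: the states of $\rM$ are the join-irreducibles $J(\aT)$, and $e(u^{-1}L)$ need not be join-irreducible, so there is no state $u^{-1}L$ from which to start the path. Second, criterion (T) of Lemma \ref{lem:trans_max_nfa_property} tells you when a transition $\rM_a(z_1, j)$ is present \emph{given} $z_1$ and $j$; it does not produce a reachable $z_1$ and a letter $a$ for which the universally quantified condition holds, and for a general $j$ none need exist --- the lemma itself asserts that one state may be genuinely unreachable, so any path-construction must at some point exploit unreachability itself rather than a containment obstruction $j \nsubseteq X$. (Also, $j \neq \Sigma^*$ does not by itself yield an $X \in \LW{L}$ with $j \nsubseteq X$.) Finally, your uniqueness step leans on the unproven reachability claim and is circular as written, whereas simplicity gives uniqueness for free once all unreachable states are known to accept $\Sigma^*$. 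The repair is to abandon the path construction and argue by saturating the unreachable state as above.
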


\begin{proof}
  By assumption the lower nfa $\rM$ is transition-maximal.
  Then those states not reachable from an initial state are all final and have transitions to every other state by transition-maximality. Thus they all accept $\Sigma^*$, so by simplicity there is at most one of them.
\end{proof}

We now come to another important notion of `maximality' definable purely in terms of an $L$-extension's structure.

\begin{definition}[Meet-maximality]
  An $L$-extension $e: \jslLQ{L} \monoto (\aT, \delta_a)$ is \emph{meet-maximal} if:
    \[
      j = \Land_\aT \{ e(X) : X \in \LW{L}, \; j \leq_\aT e(X) \}
      \qquad
      \delta_a(j) = \Land_\aT \{ e(a^{-1} X) :  X \in \LW{L}, \; j \leq_\aT e(X) \}
    \]
  for all $j \in J(\aT)$ and $a \in \Sigma$. \endbox
\end{definition}
  
Then in meet-maximal $L$-extensions each $j \in J(\aT)$ is the meet of those embedded left word quotients of $L$ above it. Moreover, the endomorphism extensions $\delta_a : \aT \to \aT$ preserve these special meets. Importantly, each transition-maximal nfa induces a meet-maximal $L$-extension.

\begin{lemma}
  If $\rN$ is a transition-maximal nfa, $\toLext{\Det(dep(\reach{\rN}))}$ is a simple, reachable and transition-maximal $L(\rN)$-extension.
\end{lemma}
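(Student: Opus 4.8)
The plan is to unpack the three properties of the $L$-extension $e := \toLext{\Det(dep(\reach{\rN}))}$ one at a time, exploiting the machinery from earlier subsections. First I would record what the relevant objects are: writing $\rM := \reach{\rN}$ (which accepts $L := L(\rN)$ and is transition-maximal since restricting to the reachable part cannot create a new transition-maximal extension — actually one must be slightly careful here, so see below), the full subset construction $\delta := \Det(dep(\rM)) = \jslDfaSc{\rM}$ has carrier $(\Pow Z_\rM, \cup, \emptyset)$, and $e$ is the inclusion $\iota : \jslLQ{L} \hookto (\jslLangs{\delta}, \lambda X. a^{-1}X)$ by Definition \ref{def:l_ext_jsl_dfa_correspondence}.2. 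Note that $\jslLangs{\delta} = \jslLangs{\rM}$ by Definition \ref{def:nfa_basics}.3.h, so the induced $\JSL$-dfa $\toJdfa{e}$ has as carrier the join-semilattice of languages accepted by $\rM$'s subset construction, with the correct initial/final data; it is simplified by construction since $\jslLangs{\delta}$ consists of pairwise-distinct languages each accepting itself.

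For \textbf{simplicity}: $\toJdfa{e}$ has carrier $\jslLangs{\rM}$ whose distinct states accept distinct languages (each $K \in \jslLangs{\rM}$ accepts $K$), so by Theorem \ref{thm:jsl_reach_dual_simple}.2 it is simple, hence $e$ is simple by Definition \ref{def:l_ext_simp_reach_tmax}.1. For \textbf{reachability}: the lower nfa of $\Airr(\toJdfa{e})$ is $\simpleIrr{\rM}$ by the Note following Definition \ref{def:nfa_irr_simplification}. Since $\rM = \reach{\rN}$ is reachable and $\simpleIrr{-}$ preserves reachability by Lemma \ref{lem:irr_simplifying_nfas}.3, this lower nfa is reachable, giving reachability of $e$ by Definition \ref{def:l_ext_simp_reach_tmax}.2.

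For \textbf{transition-maximality}, which I expect to be the main obstacle, the lower nfa of $\Airr(\toJdfa{e})$ is again $\simpleIrr{\rM}$, so I need $\simpleIrr{\rM}$ to be transition-maximal. By Lemma \ref{lem:irr_simple_preserves_trans_max}, $\simpleIrr{-}$ preserves transition-maximality, so it suffices to show $\rM = \reach{\rN}$ is transition-maximal. This is the delicate point: $\rN$ is transition-maximal by hypothesis, but passing to the reachable part discards states, and in principle one could add transitions among the surviving states. The argument should be that $\reach{\rN}$ is obtained from $\rN$ by deleting states that contribute nothing to any accepting run, so any extension $\rM'$ of $\reach{\rN}$ with $L(\rM') = L$ lifts to an extension of $\rN$ (keeping the deleted states and their transitions unchanged) which also accepts $L$; by transition-maximality of $\rN$ this lifted extension equals $\rN$, forcing $\rM' = \reach{\rN}$. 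I would spell out that adding any initial state, final state, or transition to $\reach{\rN}$ while preserving $L$ corresponds, via the inclusion $\reach{\rN} \hookrightarrow \rN$, to the same addition in $\rN$ (the new accepting paths in the extension of $\reach{\rN}$ are in particular accepting paths in the extension of $\rN$), contradicting transition-maximality of $\rN$.

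Finally I would assemble the three parts: $e$ is simple, reachable, and transition-maximal as an $L$-extension, which is exactly the claim. The only genuinely new work is the lifting argument showing $\reach{-}$ preserves transition-maximality; everything else is a direct appeal to Lemmas \ref{lem:irr_simplifying_nfas}, \ref{lem:irr_simple_preserves_trans_max}, Theorem \ref{thm:jsl_reach_dual_simple}, and the identification of $\Airr(\toJdfa{\toLext{\jslDfaSc{\rM}}})$'s lower nfa with $\simpleIrr{\rM}$.
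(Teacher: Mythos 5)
Your proof is correct and follows essentially the same route as the paper: simplicity from $\toLext{-}$ simplifying first, reachability and transition-maximality via the identification of the lower nfa of $\Airr(\toJdfa{e})$ with $\simpleIrr{\reach{\rN}}$ together with Lemmas \ref{lem:irr_simplifying_nfas}.3 and \ref{lem:irr_simple_preserves_trans_max}. The one place you add substance is the lifting argument showing $\reach{-}$ preserves transition-maximality (every path of the lifted extension from an initial state stays inside the reachable part, so the lifted extension still accepts $L$), which the paper dismisses with ``for otherwise $\rN$ wouldn't be''; your argument for that step is sound.
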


\begin{proof}
  Setting $L := L(\rN)$ then the specified $e$ is an $L$-extension because each operation preserves acceptance. It is simple because $\toLext{-}$ first simplifies and then forgets the initial state and final states. Concerning reachability, if $\rM := \reach{\rN}$ then the lower nfa of $\Airr(\toJdfa{e})$ is precisely the irreducible simplification $\simpleIrr{\rM}$ (Definition \ref{def:nfa_irr_simplification}) and the latter is reachable by Lemma \ref{lem:irr_simplifying_nfas}.3. Concerning transition-maximality, $\rM = \reach{\rN}$ is transition-maximal for otherwise $\rN$ wouldn't be, hence $\simpleIrr{\rM}$ is transition-maximal by Lemma \ref{lem:irr_simple_preserves_trans_max}.
\end{proof}

\begin{theorem}[Meet-maximality]
  \label{thm:meet_maximality}
  If an $L$-extension is simple and transition-maximal it is meet-maximal.
\end{theorem}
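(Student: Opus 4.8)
The plan is to unwind the definitions and reduce the claim to a statement purely about $L$-extensions, using the translation dictionary between $L$-extensions, $\JSL$-dfas, and dependency automata. Fix a simple transition-maximal $L$-extension $e : \jslLQ{L} \monoto (\aT, \delta_a)$. By Definition \ref{def:l_ext_simp_reach_tmax}, the lower nfa $\rM$ of $\Airr(\toJdfa{e})$ is transition-maximal, and by Proposition \ref{prop:aut_dep_generator_isoms} (with the chosen generators being the images of $\LW{L^r}$ under the appropriate map) the states of $\rM$ can be identified with $J(\aT)$, and the transitions $\rM_a$ with the component relation $(\Pirr\delta_a)_-$, i.e. $\rM_a(j_1, j_2) :\iff j_2 \leq_\aT \delta_a(j_1)$. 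The key translation is Corollary \ref{cor:dfa_nfa_lang_correspondence}: each state $j \in J(\aT)$ of $\rM$ accepts $acc_{\toJdfa{e}}(j)$. Since $e$ is simple, $\toJdfa{e}$ is simple, so its states biject with the languages they accept; moreover the embedded left word quotients $e(X)$ for $X \in \LW{L}$ are exactly the classical states of $\minDfa{L}$ sitting inside $\toJdfa{e}$, each accepting the corresponding $X$.

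Next I would invoke Lemma \ref{lem:trans_max_nfa_property}, which characterises transition-maximal transitions and finality purely in terms of the order on $\jslLangs{\rM} \supseteq \LW{L}$: for the transition-maximal nfa $\rM$,
\[
  \rM_a(j_1, j_2) \iff \forall X \in \LW{L}.[\,L(\rM_{@j_1}) \subseteq X \To L(\rM_{@j_2}) \subseteq a^{-1} X\,].
\]
Translating via $acc_{\toJdfa{e}}$ and using that $e$ is an order-embedding (injective $\JSL_f$-morphisms are order-embeddings, Note \ref{note:jsl_extras}.4), the condition $L(\rM_{@j}) \subseteq X$ becomes $j \leq_\aT e(X)$, and $L(\rM_{@j_2}) \subseteq a^{-1} X$ becomes $\delta_a(j_2) \leq_\aT e(a^{-1}X)$ — wait, more carefully $j_2 \leq_\aT \delta_a(j_1)$ should match, so one must track the transition direction. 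The upshot should be: $j_1 \leq_\aT \delta_{a*}(e(a^{-1}X))$-type manipulations, but the cleanest route is to directly compare the two meets. Since $\rM_a(j_1,j_2) \iff j_2 \leq_\aT \delta_a(j_1)$, the transition-maximality characterisation says $\delta_a(j_1)$ is the \emph{largest} element $t$ with $\forall X \in \LW{L}.[j_1 \leq_\aT e(X) \To t \leq_\aT e(a^{-1}X)]$; but the collection $\{e(a^{-1}X) : X \in \LW{L}, j_1 \leq_\aT e(X)\}$ is closed under the relevant meets inside $\aT$ (it is a set of images of the lower part of $\minDfa{L}$), so the largest such $t$ is precisely $\Land_\aT \{ e(a^{-1}X) : X \in \LW{L}, j_1 \leq_\aT e(X) \}$. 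That gives the second meet-maximality equation; the first equation $j = \Land_\aT \{ e(X) : j \leq_\aT e(X) \}$ follows the same way from transition-maximality together with the fact that every $j \in J(\aT)$ is classically reachable in $\rM$ or, more robustly, from simplicity: each $j$ accepts a distinct language, and being below exactly the $e(X)$ with $j \leq_\aT e(X)$ forces $j$ to equal their meet (otherwise a strictly larger join-irreducible would accept the same set of "super-quotients", contradicting simplicity — here one needs that $e$ is meet-maximal-compatible, which is circular, so the honest argument is via transition-maximality and Lemma \ref{lem:trans_max_nfa_property}'s finality clause applied across all left-quotient-shifts).

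The main obstacle I anticipate is handling the first meet-maximality equation cleanly — showing $j = \Land_\aT\{e(X) : j \leq_\aT e(X)\}$ rather than just $j \leq_\aT \Land_\aT\{\cdots\}$. The inequality $\leq$ is trivial. For $\geq$, suppose $j < m := \Land_\aT\{e(X) : X \in \LW{L}, j \leq_\aT e(X)\}$; then $m$ and $j$ are both $\leq_\aT e(X)$ for exactly the same $X$'s, and I must derive a contradiction with simplicity or transition-maximality. The route is: since $\toJdfa{e}$ is simple, $m$ and $j$ accept distinct languages, say $w \in acc_{\toJdfa{e}}(m) \setminus acc_{\toJdfa{e}}(j)$ (the inclusion $acc(j) \subseteq acc(m)$ holds as $j \leq m$). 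Unwinding, $w \notin acc_{\toJdfa{e}}(j)$ means $j \leq_\aT \delta_{w*}(\cdots)$ fails in a way that, by the transition-maximality characterisation (T) iterated along $w$, contradicts $j \leq_\aT e(X) \Leftrightarrow m \leq_\aT e(X)$. I would formalise this by an induction on $\lvert w\rvert$ reducing to the base case handled by Lemma \ref{lem:trans_max_nfa_property}(F), exactly paralleling the proof of Lemma \ref{lem:trans_max_nfa_property}(T). Once both equations are in hand for all $j \in J(\aT)$ and all $a \in \Sigma$, the $L$-extension $e$ satisfies the definition of meet-maximality and the proof is complete.
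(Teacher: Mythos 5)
Your proposal follows essentially the same route as the paper's proof: pass to the transition-maximal lower nfa of $\Airr(\toJdfa{e})$, invoke Lemma \ref{lem:trans_max_nfa_property} (T) and (F), obtain the second equation from (T) together with local saturation, and obtain the first by showing that any join-irreducible below $M_j := \Land_\aT\{e(X) : j \leq_\aT e(X)\}$ is simulated by $j$ (your contradiction-via-a-distinguishing-word $w$, with one transition redirected by (T) and the empty-word case by (F), is exactly the paper's direct argument that $j_0 \subseteq j$ for each join-irreducible $j_0 \leq M_j$). The sketch is correct; it only needs the small explicit step of decomposing $acc(m)$ into the languages of the join-irreducibles below $m$ before picking the accepting path.
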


\begin{proof}
    We may assume $e$ is simplified i.e.\ $e = \jslDfaSimple{e}$. Then it is an inclusion $e :\jslLQ{L} \hookto (\aT, \lambda X.a^{-1} X)$ where $\aT = (T, \cup, \emptyset)$ and $T$ is the set of languages accepted by the individual states of $\delta:= \toJdfa{e}$. Let $\rM$ be the lower nfa of $\Airr \delta$ which is transition-maximal by assumption, hence locally-saturated by Lemma \ref{lem:irr_simple_is_locally_saturated}. Since each $j \in J(\aT)$ accepts $j$ by Lemma \ref{lem:jsl_reach_simple_constructions}.3, invoking Lemma \ref{lem:trans_max_nfa_property} yields:
    \[
      \tag{T}
      \rM_a(j_1, j_2)
      \iff \forall X \in \LW{L}.[
        j_1 \subseteq X \To j_2 \subseteq a^{-1} X
      ].
    \]
    Furthermore by Lemma \ref{lem:trans_max_nfa_property} $\rM$'s final states are:
    \[
      \tag{F}
      F_\rM = \{ j \in J(\aT) : \epsilon \in \bigcap \{ X \in \LW{L} : j \subseteq X  \} \}.
    \]
    We're ready to prove meet-maximality, so fix any $j \in J(\aT)$ and let $M_j := \Land_\aT \{ X \in \LW{L} : j \subseteq X \}$. Certainly $j \subseteq M_j$. For the reverse inclusion, first observe $M_j = \bigcup J$ for some non-empty $J \subseteq J(\aT)$ and fix any $j_0 \in J$.
    \begin{itemize}
      \item[--] 
      If $\epsilon \in j_0 \subseteq M_j$ then necessarily $\epsilon \in j$, for otherwise by (F) we'd have $j \subseteq X \in \LW{L}$ with $\epsilon \nin X$ and hence the contradiction $\epsilon \nin M_j$.
      \item[--] 
      Concerning transitions, $\forall j_2 \in J(\aT). [\rM_a(j_0, j_2) \To \rM_a(j, j_2)]$ via (T). In particular, given $j \subseteq X \in \LW{L}$ then $j_0 \subseteq M_j \subseteq X$ so we deduce $j_2 \subseteq a^{-1} X$ by assumption. 
    \end{itemize}
    
    \noindent
    So every word accepted by $j_0 \in J$ is accepted by $j$ i.e.\ $j_0 \subseteq j$; moreover  $M_j \subseteq Y$ because $j_0 \subseteq M$ was arbitrary. Then we've established:
    \[
      j=
      \Land_\aT \{ X \in \LW{L} : j \subseteq X \}
      \qquad
      \text{for each $j \in J(\aT)$}.
    \]
    Fixing any $a \in \Sigma$ and $j \in J(\aT)$ it remains to establish:
    \[
      a^{-1} j = \Land_\aT \{ a^{-1} X : j \subseteq X \in \LW{L} \}.
    \]
    Indeed if $j' \in J(\aT)$ lies below the (RHS) i.e.\ $\forall X \in \LW{L}. [ j \subseteq X \To j' \subseteq a^{-1} X ]$ then by (T) we infer $\rM_a(j, j')$ and hence $j' \subseteq a^{-1} j$ because $\rM$ is locally-saturated. Finally $a^{-1} j$ is itself a lower bound for (RHS) because $a^{-1}(-)$ is monotone w.r.t.\ inclusion.
\end{proof}

Are these special meets of left word quotients $u^{-1} L$ always their intersection?
The answer is \emph{no}.

\begin{example}[Meet-maximal meets needn't be intersections]
\label{ex:meet_max_meets_not_intersections}
In \cite[Theorem 7]{TheoryOfAtomataBrzTamm2014} a language $L$ is implicitly provided s.t.\ if an nfa $\rN$ accepts $L$ and each $L(\rN_{@\{z\}})$ is a set-theoretic boolean combination of $\LW{L}$ then $\rN$ is not state-minimal. Given a transition-maximal extension of a state-minimal nfa we obtain a meet-maximal $L$-extension  by Theorem \ref{thm:meet_maximality}. If the special meets $\Land_\aT \{ j : j \subseteq X \in \LW{L}  \}$ were intersections we'd obtain a contradiction via the lower nfa of $\Airr(\toJdfa{e})$ -- which is also a  state-minimal nfa accepting $L$. \endbox
\end{example}

We finally mention some related properties.

\begin{lemma}
  \label{lem:tmax_irr_lwq}
  If $e : \jslLQ{L} \hookto (\aT, \delta_a)$ is transition-maximal and simplified,
  \[
    j \subseteq u^{-1} L
    \iff 
    \bigcap \{ X \in \LW{L} : j \subseteq X \} \subseteq u^{-1} L
    \qquad
    \text{for any $j \in J(\aT)$ and $u \in \Sigma^*$.}
  \]
\end{lemma}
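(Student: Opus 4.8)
The plan is to start by making the statement concrete. The hypothesis that $e$ is \emph{simplified} means, by Lemma~\ref{lem:simplified_jsl_dfa_char} (equivalently, by the opening reduction in the proof of Theorem~\ref{thm:meet_maximality}), that $e$ is an inclusion $e:\jslLQ{L}\hookto(\aT,\lambda X.a^{-1}X)$ where $\aT=(T,\cup,\emptyset)$ and $T$ is a finite, union-closed, left-letter-quotient-closed family of regular languages containing $L$; in particular $\LW{L}\subseteq\LQ{L}\subseteq T$. Consequently every $j\in J(\aT)$ is a genuine regular language, $u^{-1}L$ is a member of the carrier of $\aT$, and both sides of the claimed equivalence are honest set-theoretic inclusions of languages. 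Once this set-up is in place the argument is elementary; the hypotheses of transition-maximality and simplification serve chiefly to make the statement well-typed.

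For the forward implication, suppose $j\subseteq u^{-1}L$. Since $u^{-1}L\in\LW{L}$, the language $u^{-1}L$ belongs to the indexing family $\{X\in\LW{L}:j\subseteq X\}$, and an intersection of a family of sets is contained in each of its members, so $\bigcap\{X\in\LW{L}:j\subseteq X\}\subseteq u^{-1}L$.

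For the backward implication, note that by the very definition of the family $\{X\in\LW{L}:j\subseteq X\}$ one has $j\subseteq X$ for each such $X$, so $j$ is a lower bound of that family and hence $j\subseteq\bigcap\{X\in\LW{L}:j\subseteq X\}$. Combining this with the hypothesis $\bigcap\{X\in\LW{L}:j\subseteq X\}\subseteq u^{-1}L$ gives $j\subseteq u^{-1}L$. (If one wishes, transition-maximality furnishes the same inclusion through Theorem~\ref{thm:meet_maximality}: meet-maximality yields $j=\Land_\aT\{X\in\LW{L}:j\subseteq X\}$, and the $\aT$-meet is contained in the set-theoretic intersection; but this route is not logically required.)

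I do not expect a substantial obstacle. The one point deserving a moment's care — the closest thing to a difficulty — is the degenerate case in which $\{X\in\LW{L}:j\subseteq X\}$ is empty, where the intersection is interpreted as all of $\Sigma^*$. There one checks that the hypotheses of \emph{both} implications are unsatisfiable: $j\subseteq u^{-1}L$ (respectively $\Sigma^*\subseteq u^{-1}L$, i.e.\ $u^{-1}L=\Sigma^*$) would force $u^{-1}L$ into the family, contradicting emptiness; hence the equivalence holds vacuously in that case. With this remark the proof is complete.
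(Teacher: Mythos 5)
Your proof is correct and, for the backward direction, more elementary than the paper's. The paper likewise dismisses $(\To)$ as immediate, but for $(\oT)$ it invokes Theorem~\ref{thm:meet_maximality}: transition-maximality plus simplicity yield meet-maximality, hence $j = \Land_\aT \{ X \in \LW{L} : j \subseteq X \}$, and since the $\aT$-meet is contained in the set-theoretic intersection the conclusion follows. You instead note that $j$ is by definition a lower bound of the family $\{ X \in \LW{L} : j \subseteq X \}$, so $j \subseteq \bigcap \{ X \in \LW{L} : j \subseteq X \}$ holds tautologically (including when the family is empty, since the intersection is then $\Sigma^*$), and transition-maximality plays no logical role. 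You are also right that the hypotheses mainly serve to make the statement well-typed: once $e$ is simplified, $j$ is an honest language and both sides are plain set inclusions. What the paper's route buys is only uniformity with the lemma that follows, where the full strength of meet-maximality genuinely is used; for the present statement the equality $j = \Land_\aT \{ X \in \LW{L} : j \subseteq X \}$ is strictly stronger than the inclusion actually needed. Your separate discussion of the empty-family case is correct but dispensable, since the lower-bound argument already covers it without a case split.
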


\begin{proof}
  The implication $(\To)$ is immediate. Conversely we know $e$ is meet-maximal by Theorem \ref{thm:meet_maximality}, so that $j = \Land_\aT \{ X \in \LW{L} : j \subseteq X \}
  \subseteq \bigcap \{ X \in \LW{L} : j \subseteq X \}
  \subseteq u^{-1}L$.
\end{proof}

\begin{lemma}
  If $e : \jslLQ{L} \hookto (\aT, \delta_a)$ is transition-maximal and simplified then for any $j \in J(\aT)$,
  \[
    j_2 \subseteq a^{-1} j_1
    \iff
    \bigcap \{ X \in \LW{L} : j_2 \subseteq X \}
    \subseteq 
    \bigcap \{ a^{-1} X \in \LW{L} : j_1 \subseteq X \}.
  \]
\end{lemma}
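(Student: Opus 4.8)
The plan is to show that each side of the claimed equivalence is equivalent to the single intermediate condition
\[
  \forall X \in \LW{L}.\,[\, j_1 \subseteq X \To j_2 \subseteq a^{-1} X \,]. \qquad (\dagger)
\]
so that the result follows by chaining the two equivalences. (Here $j_1, j_2 \in J(\aT)$.)

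First I would treat the left-hand side. As in the proof of Theorem~\ref{thm:meet_maximality}, since $e$ is simplified we may take it to be the inclusion $\jslLQ{L} \hookto (\aT, \lambda X. a^{-1}X)$ with $\aT = (T,\cup,\emptyset)$, where $T$ is the set of languages accepted by the individual states of $\delta := \toJdfa{e}$; let $\rM$ be the lower nfa of $\Airr\delta$, which is transition-maximal by hypothesis and hence locally-saturated by Corollary~\ref{cor:trans_max_implies_locally_saturated}. Each $j \in J(\aT)$ accepts $j$ by Lemma~\ref{lem:jsl_reach_simple_constructions}.3, so the local-saturation condition on transitions gives $\rM_a(j_1,j_2) \iff j_2 \subseteq a^{-1}j_1$, while Lemma~\ref{lem:trans_max_nfa_property}(T) (applicable since $\rM$ is transition-maximal) gives $\rM_a(j_1,j_2) \iff (\dagger)$. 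Hence $j_2 \subseteq a^{-1} j_1 \iff (\dagger)$.

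Next I would treat the right-hand side. Put $I_2 := \bigcap\{\,Y \in \LW{L} : j_2 \subseteq Y\,\}$. The key observation is that $a^{-1}(u^{-1}L) = (ua)^{-1}L$, so $a^{-1}X \in \LW{L}$ whenever $X \in \LW{L}$; thus the right-hand intersection $\bigcap\{\,a^{-1}X : X \in \LW{L},\ j_1 \subseteq X\,\}$ is genuinely an intersection of left word quotients. For each such $X$, Lemma~\ref{lem:tmax_irr_lwq} applied to $j_2$ and to a word $w$ with $w^{-1}L = a^{-1}X$ yields $j_2 \subseteq a^{-1}X \iff I_2 \subseteq a^{-1}X$. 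Taking the conjunction over all $X \in \LW{L}$ with $j_1 \subseteq X$ transforms $(\dagger)$ into $I_2 \subseteq \bigcap\{\,a^{-1}X : X \in \LW{L},\ j_1 \subseteq X\,\}$, which is precisely the right-hand side of the claimed equivalence. When the indexing family is empty the convention $\bigcap\emptyset = \Sigma^*$ makes both $(\dagger)$ and the right-hand inclusion vacuously true, so the argument is unaffected.

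Combining the two equivalences gives the statement. I expect the only real care needed is bookkeeping: verifying that transition-maximality of the $L$-extension $e$ is literally transition-maximality of the lower nfa $\rM$ of $\Airr(\toJdfa{e})$, that $\rM$'s state set is identified with $J(\aT)$ with every state accepting itself, and that $e$ simplified means $\aT$ is carried by a family of languages with transitions $\lambda X.a^{-1}X$ — all of which is already established in the setup of Theorem~\ref{thm:meet_maximality} and Lemma~\ref{lem:tmax_irr_lwq}. There is no genuine obstacle beyond assembling these facts.
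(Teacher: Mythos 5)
Your proof is correct and follows essentially the same route as the paper's: both pass through the intermediate condition $\forall X \in \LW{L}.[\,j_1 \subseteq X \To j_2 \subseteq a^{-1}X\,]$, obtain the left equivalence from transition-maximality/meet-maximality (the paper cites Theorem~\ref{thm:meet_maximality} directly where you re-derive it from Lemma~\ref{lem:trans_max_nfa_property} and local saturation, which is exactly how that theorem's proof gets it), and obtain the right equivalence by applying Lemma~\ref{lem:tmax_irr_lwq} to each $a^{-1}X \in \LW{L}$.
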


\begin{proof}
  We calculate:
  \[
    \begin{tabular}{lll}
      $j_2 \subseteq a^{-1} j_1$
      & $\iff \forall X \in \LW{L}.[ j_1 \subseteq X \To j_2 \subseteq a^{-1} X ]$
      & (by Theorem \ref{thm:meet_maximality})
      \\ &
      $\iff \forall X \in \LW{L}.[ j_1 \subseteq X \To \bigcap \{ X \in \LW{L} : j_2 \subseteq X \} \subseteq a^{-1} X ]$
      & (by Lemma \ref{lem:tmax_irr_lwq})
      \\ &
      $\iff \bigcap \{ X \in \LW{L} : j_2 \subseteq X \}
      \subseteq 
      \bigcap \{ a^{-1} X \in \LW{L} : j_1 \subseteq X \}.$
    \end{tabular}
  \]
\end{proof}

\subsubsection{Reversing $L$-extensions}

\begin{note}
  {\bf The results in this section are currently not being used elsewhere.}
  \endbox
\end{note}

\begin{definition}[Reversal of an $L$-extension]
  Given an $L$-extension $e$ let $\rN$ be the lower nfa of $\Airr(\toJdfa{e})$.
  Then $e$'s \emph{reversal} is the $L^r$-extension:
  \[
    \rev{e} := \toLext{\Det(dep(\rev{\rN}))}.
  \]
  It is union-generated by the languages
  $\revEl{e}{j} := L(\rev{\rN}_{@j}) = \{ w \in \Sigma^* : j \in \rN_{w^r} [I_\rN] \}$ where $j \in J(\aT)$.
  \endbox
\end{definition}

\begin{note}[Alternative descriptions of $\rev{e}$]
  \item
  \begin{enumerate}
    \item
    It is $\jslDfaSimple{\Det(\dep{\rev{\rN}})}$ without the initial state or final states.
    \item
    By Corollary \ref{cor:reach_de_morgan_dual_simple}, it is isomorphic to $(\jslDfaReach{\Det(\dep{\rN})})^\pentagram$ without the initial state or final states.
    \endbox
  \end{enumerate}
\end{note}

\begin{lemma}[Reversing $L$-extensions]
  \label{lem:rev_l_ext}
  Fix any $L$-extension $e : \jslLQ{L} \monoto (\aT, \delta_a)$.

  \begin{enumerate}
    \item
    $\rev{e}$ is a well-defined simplified $L^r$-extension.
    \item
    If $e$ is simple then $\rev{e}$ is reachable.
    \item
    If $e$ is transition-maximal then so is $\rev{e}$.
    Similarly if $e$ is state-minimal then so is $\rev{e}$.
    \item
    If $e$ is simple, transition-maximal and $\revEl{e}{j_0} = \bigcup \{ \revEl{e}{j} : j \in J \}$ for some $j_0 \in J(\aT)$, $J \subseteq J(\aT)$ then $j_0 = \Land_\aT J$.

    \item
    If $e$ is simplified, transition-maximal and state-minimal then $\rev{\rev{e}} = e$.

  \end{enumerate}
\end{lemma}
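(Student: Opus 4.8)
The plan is to unwind both reversals to the underlying nfa level and then exploit the idempotency of the irreducible simplification (Lemma \ref{lem:irr_simplifying_nfas}.4) together with Brzozowski's theorem (the double-reversal-and-subset construction yields the state-minimal dfa). Let $\rN$ be the lower nfa of $\Airr(\toJdfa{e})$. By definition $\rev{e} = \toLext{\Det(\dep{\rev{\rN}})}$, and the lower nfa of $\Airr(\toJdfa{\rev{e}})$ is the irreducible simplification $\simpleIrr{\rev{\rN}}$ (using the Note that $\simpleIrr{-}$ produces the lower nfa of $\Airr(\jslDfaSimple{\dep{-}})$, and the fact that $\toJdfa{\toLext{-}}$ reconstructs the simplified $\JSL$-dfa). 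Hence the lower nfa of $\Airr(\toJdfa{\rev{\rev{e}}})$ is $\simpleIrr{\rev{\simpleIrr{\rev{\rN}}}}$. So it suffices to show this nfa equals $\rN$ (up to the canonical identification), since both $L$-extensions $e$ and $\rev{\rev e}$ are simplified, and a simplified $L$-extension is recovered from the lower nfa of the associated $\Airr(\toJdfa{-})$ via $\toLext{\Det(\dep{-})}$ — distinct simplified $L$-extensions give non-isomorphic lower nfas because the join-semilattice of accepted languages, hence the $\JSL$-dfa, hence the extension, is determined.

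First I would observe that since $e$ is simplified, transition-maximal and state-minimal, the nfa $\rN$ is itself union-free (a simplified $L$-extension forces each join-irreducible state to accept a join-irreducible language, which is exactly union-freeness) and transition-maximal. By Corollary \ref{cor:tmax_and_ufree_simple}, $\rN \cong \simpleIrr{\rN}$, i.e.\ $\rN$ is already its own irreducible simplification. Next, $\rev{\rN}$ is transition-maximal by Lemma \ref{lem:rev_preserves_trans_max}, so by Lemma \ref{lem:irr_simple_preserves_trans_max} the nfa $\simpleIrr{\rev{\rN}}$ is transition-maximal, and it is union-free by construction (its states are join-irreducibles of a join-semilattice of languages); hence again by Corollary \ref{cor:tmax_and_ufree_simple}, $\simpleIrr{\rev{\rN}} \cong \simpleIrr{\simpleIrr{\rev{\rN}}}$, which is automatic from idempotency anyway. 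The real content is to show $\simpleIrr{\rev{\simpleIrr{\rev{\rN}}}} \cong \rN$. Here I would argue: $\rev{\simpleIrr{\rev{\rN}}}$ accepts $L(\rN)$ and has no more states than $\rN$ (Lemma \ref{lem:irr_simple_is_locally_saturated} plus reversal preserving state count), so since $\rN$ is state-minimal, $\rev{\simpleIrr{\rev{\rN}}}$ is also state-minimal; applying $\simpleIrr{-}$ to it — which preserves state-minimality when the input is already as small as possible and is again forced to be an isomorphism by union-freeness and Theorem \ref{thm:char_irr_simple} — we get a state-minimal, union-free, transition-maximal nfa accepting $L(\rN)$ whose semilattice of accepted languages coincides with that of $\rN$ (both equal $\jslLangs{\toJdfa{e}}$, since all the operations $\Det$, $\dep$, $\rev$, $\simpleIrr$, $\Airr$ track the dual/reverse $\JSL$-dfa and simplify, and a state-minimal transition-maximal union-free nfa is determined by its simplified $\JSL$-dfa up to isomorphism). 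Matching the initial/final structure then gives $\simpleIrr{\rev{\simpleIrr{\rev{\rN}}}} \cong \rN$.

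The main obstacle I anticipate is the last identification: showing that two state-minimal, union-free, transition-maximal nfas with the same simplified $\JSL$-dfa of accepted languages are actually isomorphic (not merely language-equivalent). The clean way around this is to route everything through $L$-extensions directly: rather than comparing nfas, note that $\rev{\rev e}$ and $e$ are both simplified $L$-extensions, that the map "$L$-extension $\mapsto$ lower nfa of $\Airr(\toJdfa{-})$" is injective on \emph{simplified} $L$-extensions (because $\toJdfa{-}$ is simplicity-preserving and $\toLext{-}$ inverts it on the simplified side, by Definition \ref{def:l_ext_jsl_dfa_correspondence} and the Notes following it), and then it suffices that the two lower nfas are isomorphic — which follows once we know both equal $\simpleIrr{}$ of a state-minimal transition-maximal machine for $L$, combined with Theorem \ref{thm:char_irr_simple} forcing the canonical acceptance map to be the unique isomorphism. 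I would spell out exactly this chain, keeping the nfa-level manipulations to invocations of Lemmas \ref{lem:irr_simplifying_nfas}, \ref{lem:irr_simple_is_locally_saturated}, \ref{lem:irr_simple_preserves_trans_max}, \ref{lem:rev_preserves_trans_max} and Corollary \ref{cor:tmax_and_ufree_simple}.
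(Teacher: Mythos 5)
Your proposal addresses only item 5; items 1--4 of the lemma are not treated at all, and the argument for item 5 in fact leans on items 2 and 3 (reachability of $\rev{e}$, preservation of transition-maximality and state-minimality under reversal), so those parts cannot simply be skipped.

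For item 5 itself there is a genuine gap, and it sits exactly where you flagged it. Your reduction is fine: writing $\rN$ for the lower nfa of $\Airr(\toJdfa{e})$ and $\rM := \simpleIrr{\rev{\rN}}$ for that of $\Airr(\toJdfa{\rev{e}})$, everything comes down to showing that the nfa underlying $\rev{\rev{e}}$ coincides, after the canonical relabelling, with $\rN$. But your justification --- that $\simpleIrr{\rev{\rM}}$ and $\rN$ ``have the same semilattice of accepted languages $\jslLangs{\toJdfa{e}}$ because all the operations track the dual/reverse $\JSL$-dfa'' --- is not an argument: the assertion that the states of $\rev{\rM}$ accept exactly the languages $j \in J(\aT)$ \emph{is} the identity $\revEl{\rev{e}}{\revEl{e}{j}} = j$ you are trying to prove, and none of the cited results (Lemmas \ref{lem:irr_simplifying_nfas}, \ref{lem:irr_simple_is_locally_saturated}, \ref{lem:irr_simple_preserves_trans_max}, \ref{lem:rev_preserves_trans_max}, Corollary \ref{cor:tmax_and_ufree_simple}) delivers it, since in general there is no relationship between the languages accepted by the states of an nfa and by the states of its reverse. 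The paper closes this gap by a concrete computation: state-minimality makes $j \mapsto \revEl{e}{j}$ a bijection $J(\aT) \to J(\aU)$, so $\rM$ can be relabelled to an nfa $\rM'$ on the state set $J(\aT)$; since $\rM$ is transition-maximal, hence locally-saturated (Corollary \ref{cor:trans_max_implies_locally_saturated}), its $a$-transitions unwind to $\revEl{e}{j_1} \subseteq a^{-1}\revEl{e}{j_2}$, i.e.\ to the intersection-rule condition $\forall u.\,[\, j_1 \in \rN_u[I_\rN] \To j_2 \in \rN_{ua}[I_\rN]\,]$ on $\rN$; this shows $\rev{\rM'}$ \emph{extends} $\rN$ in the sense of Definition \ref{def:local_sat_trans_max}.3 (with analogous checks for initial and final states), whence $\rev{\rM'} = \rN$ by transition-maximality of $\rN$, and then $\revEl{\rev{e}}{\revEl{e}{j}} = L(\rev{\rM'}_{@j}) = L(\rN_{@j}) = j$. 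Some version of this extension-plus-maximality step is unavoidable, and it is absent from your proposal; note also that the conclusion is an equality $\rev{\rev{e}} = e$ of extensions, not merely an isomorphism, so identifying the accepted languages literally (not up to isomorphism of semilattices) is essential.
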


\begin{proof}
  \item
  \begin{enumerate}
    \item
    Well-definedness follows by construction, noting that $dep(-)$, $\Det(-)$ and $\toLext{-}$ preserve the accepted language $L^r$. Likewise $\rev{e}$ is simplified by construction.

    \item
    If $e$ is simple then $\rN$ is coreachable because each $j \in J(\aT)$ accepts a non-empty language.
    Then $\rev{\rN}$ is reachable and hence so is $\simpleIrr{\rev{\rN}}$ by Lemma \ref{lem:irr_simplifying_nfas}.

    \item
    If $\rN$ is transition-maximal then $\rev{\rN}$ is too by Lemma \ref{lem:rev_preserves_trans_max}, hence so is $\simpleIrr{\rev{\rN}}$ by Lemma \ref{lem:irr_simple_preserves_trans_max}. If $e$ is state-minimal then the lower nfa of $\Airr(\toJdfa{e})$ is a state-minimal nfa $\rM$ accepting $L$. Since the lower nfa of $\Airr(\toJdfa{\rev{e}})$ accepts $L^r$ and has no more states than $\rM$ it is also state-minimal, so $\rev{e}$ is state-minimal.

    \item
    We may assume $e = \jslDfaSimple{e}$ is simplified.
    Let $\delta := \toJdfa{e}$ and $\rN$ be the lower nfa of $\Airr\delta$.
    Suppose $\revEl{e}{j_0} = \bigcup \{ \revEl{e}{j} : j \in J \}$.
    Then $\forall u \in \Sigma^*. \forall j \in J. (j \in \rN_u[I_\rN] \To j_0 \in \rN_u[I_\rN])$. Now, since $\rN$ is transition-maximal the intersection rule holds by Corollary \ref{cor:trans_max_implies_locally_saturated}, so that:
    \[
      \tag{A}
      \forall j \in J. \forall a \in \Sigma. \rN_a[j_0] \subseteq \rN_a[j]
    \]
    because whenever $\rN_a(j_0, j')$ and there is a $u$-path to $j$ there is a $ua$-path to $j'$. Furthermore:
    \[
      \tag{B}
      \text{$j_0$ is final iff every $j \in J$ is final}.
    \]
    Indeed, if $j_0$ is final and $j \in J$ then for every $u$-path to $j$ we have a $u$-path to $j_0$ and hence $u \in L$, so by transition-maximality $j$ is final. Similarly, if every $j \in J$ is final then every $u$-path to $j_0$ satisfies $u \in L$ so $j_0$ is final by transition-maximality. Then by (A) and (B) we deduce $j_0 \subseteq \bigcap J$.

    \smallskip
    To establish $j_0 = \Land_\aT J$ we fix any $j' \subseteq \Land_\aT J$ and prove $j' \subseteq j_0$. Certainly $\forall j \in J. j' \subseteq j$ hence:
    \[
      \tag{C}
      \forall j \in J. \rN_a[j'] \subseteq \rN_a[j]  
    \]
    because $j'' \subseteq a^{-1} j'$ implies $j'' \subseteq a^{-1} j$. We now aim to prove $\rN_a[j'] \subseteq \rN_a[j_0]$. Given $\rN_a(j', j'')$ we certainly know $\forall j \in J. \rN_a(j, j'')$ by (C). Equivalently $\forall j \in J. \rN_a\spbreve(j'', j)$ in $\rev{\rN}$ and thus $\forall j \in J. \revEl{e}{j} \subseteq a^{-1} \revEl{e}{j'}$, where the latter uses a general property of nfas. Then:
    \[
      \revEl{e}{j_0}
      = \bigcup_{j \in J} \revEl{e}{j} \subseteq a^{-1} \revEl{e}{j'}.
    \]
    In other words, for every $u^r$-path to $j_0$ in $\rN$ there exists an $(au)^r$-path to $j''$. Applying the intersection-rule we deduce $\rN_a(j_0, j'')$ as desired i.e.\ we have established $\rN_a[j'] \subseteq \rN_a[j_0]$. Furthermore if $j'$ is final then every $j \in J$ is final, so that $j_0$ is final by (B). Then we've proved that $j' \subseteq j$ and we're done.

    \item
      An nfa is state-minimal iff its reverse is state-minimal. Then since $\rev{e} : \jslLQ{L} \hookto (\aU, \phi_a)$ accepts $L^r$ we deduce $\alpha := \lambda j. \revEl{e}{j} : J(\aT) \to J(\aU)$ is bijective, for otherwise we'd contradict state-minimality. Let $\rN$ be the lower nfa of $\Airr(\toJdfa{e})$ and $\rM$ be the lower nfa of $\Airr(\toJdfa{\rev{e}})$. Then we can bijectively relabel $\rM$ to obtain the nfa:
      \[
        \rM' := (\alpha^{\bf-1}[I_\rM], J(\aT), \rM'_a, \alpha^{\bf-1}[F_\rM])
        \qquad
        \rM'_a(j_1, j_2) :\iff \rM_a(\revEl{e}{j_1}, \revEl{e}{j_2}).
      \]
      We're going to show that $\rev{\rM'}$ extends $\rN$. By (2) we know $\rM$ is transition-maximal, hence:
      \[
        \begin{tabular}{lll}
          $(\rev{\rM'})_a(j_1, j_2)$
          & $\iff \rM'_a(j_2, j_1)$
          \\ & $\iff \rM_a(\revEl{e}{j_1}, \revEl{e}{j_2})$
          \\ & $\iff \revEl{e}{j_1} \subseteq a^{-1} \revEl{e}{j_2}$
          & (by Corollary \ref{cor:trans_max_implies_locally_saturated})
          \\
          & $\iff \forall u \in \Sigma^*.[ j_1 \in \rN_u[I_\rN] \To j_2 \in \rN_{ua}[I_\rN] ]$
          & (by definition).
        \end{tabular}
      \]
      Thus $\rN(j_1, j_2) \To (\rev{\rM'})_a(j_1, j_2)$ because whenever there is a $u$-path to $j_1$ we obtain a $ua$-path to $j_2$. Next, if $j$ is initial in $\rN$ then $j \in I_\rN = \rN_\epsilon[I_\rN]$ and hence $\epsilon \in \revEl{e}{j}$, so that $j$ is final in $\rM'$, thus initial in $\rev{\rM'}$. Finally if $j$ is final in $\rN$ then $\forall u \in \Sigma^*.[ j \in \rN_u[I_\rN] \To u \in L ]$ or equivalently $\revEl{e}{j} \subseteq L^r$, so that $j$ is initial in $\rM'$, thus final in $\rev{\rM'}$. Having established that $\rev{\rM'}$ extends $\rN$ we immediately deduce $\rev{\rM'} = \rN$ by transition-maximality. It follows that:
      \[
        \begin{tabular}{lll}
          $\revEl{\rev{e}}{\revEl{e}{j}}$
          &
          $= L(\rev{\rM}_{@\revEl{e}{j}})$
          \\ &
          $= L(\rev{\rM'}_{@j})$
          & (via $\rM \cong \rM'$)
          \\ &
          $= L(\rN_{@j})$
          & (via $\rev{\rM'} = \rN$)
          \\ &
          $ = j$.
        \end{tabular}
      \]
      Since $\alpha$ is bijective we know every $j \in J(\aT)$ has a unique corresponding $\revEl{e}{j} \in J(\aU)$. Combining this with the above equality we deduce $\rev{\rev{e}} = e$.
      
  \end{enumerate}
\end{proof}


\subsection{The Atomizer}

This section is based on recent work of Tamm \cite{Tamm2016NewIA}.
Recall the minimal boolean and distributive $\JSL$-dfa from Definition \ref{def:canon_bool_jsl_dfa}. Fixing $L$, the left predicates $\LP{L}$ are those finitely many languages arising as a set-theoretic boolean combination of the left word quotients $\LW{L}$. Importantly, any language can be transformed into a left predicate via a closure operator.

\begin{definition}[Atomic languages, $\cl_L$ and $\rE_L$]
  \label{def:atomic_cl_langs}
  \item 
  \begin{enumerate}
    \item 
    The \emph{atomic closure operator} $\cl_L : \Pow \Sigma^* \to \Pow \Sigma^*$ is defined:
    \[
      \begin{tabular}{lrl}
        $\cl_L (X)$
        &
        $:=$ & $\bigcup \{ \alpha \in J(\jslLP{L}) : \alpha \cap X \neq \emptyset \}$
        \\ &
        $=$ & $\bigcap \{ Y \in \LP{L} : X \subseteq Y \}$.
      \end{tabular}
    \]
    Moreover the equivalence relation $\rE_L \subseteq \Sigma^* \times \Sigma^*$ is defined:
    \[
      \rE_L(u, v) :\iff \forall X \in \LW{L}.[ u \in X \iff v \in X]
    \]
    with equivalence classes $\sem{w}_{\rE_L} \subseteq \Sigma^*$.
    
    \item
    A language is \emph{atomic} w.r.t $L$ \cite{TheoryOfAtomataBrzTamm2014} if it is a fixpoint of $\cl_L$. They are precisely the languages in $\LP{L}$

    \item
    A language is \emph{positively atomic} w.r.t $L$ if it lies in $\jslLD{L} \subseteq \jslLP{L}$. 
    
    \item
    A language is \emph{subatomic} w.r.t\ $L$ if it lies in $\jslLRP{L}$.
    \endbox
  \end{enumerate}
\end{definition}

\begin{note}[Compatible definitions of $\cl_L$]
  The distinct definitions of $\cl_L(X)$ are consistent: each element of $\LP{L}$ is (i) the join (union) of join-irreducibles (atoms) below it, (ii) the meet (intersection) of elements above it.
  \endbox
\end{note}

\begin{lemma}[Concerning atomic closure]
  \label{lem:cl_L_basics}
  Let $\alpha \in J(\jslLP{L})$ and $X, Y \subseteq \Sigma^*$.
  \begin{enumerate}
    \item
    $\cl_L$ is a well-defined closure operator.
    \item
    $\cl_L(X)$ is the smallest left predicate containing $X$.
    \item
    $\alpha \subseteq \cl_L(X)$ iff $\alpha \cap X \neq \emptyset$.
    \item
    $\cl_L(X \cup Y) = \cl_L(X) \cup \cl_L(Y)$.
    \item 
    $\cl_L(w^{-1} X) \subseteq w^{-1} \cl_L(X)$ for all $w \in \Sigma^*$.
    \item
    $J(\jslLP{L}) = \{ \sem{w}_{\rE_L} : w \in \Sigma^* \}$.
    \item
    $\cl_L(X) = \bigcup_{w \in X} \sem{w}_{\rE_L}$.
  \end{enumerate}
\end{lemma}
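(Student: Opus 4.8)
## Proof proposal for Lemma \ref{lem:cl_L_basics}

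The plan is to establish the seven statements roughly in order, since several later items depend on earlier ones. The key conceptual point is that $\LP{L}$ is a finite boolean algebra with atoms $J(\jslLP{L})$ — the join-irreducibles of $\jslLP{L}$ are exactly the atoms of this boolean algebra, and by Note \ref{note:irreducibles} every element of $\LP{L}$ is the union of the atoms below it. The two displayed formulas for $\cl_L(X)$ in Definition \ref{def:atomic_cl_langs} (join of atoms meeting $X$, meet of left predicates containing $X$) must be seen to agree; I would dispatch this first via the ``Compatible definitions of $\cl_L$'' observation, or re-derive it: a left predicate $Y$ contains $X$ iff it contains every atom meeting $X$ (using that atoms are pairwise disjoint and cover $\Sigma^*$ by Lemma \ref{lem:left_preds_basics}.1), so the intersection of all such $Y$ is precisely the union of those atoms. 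Granting this, (1) and (2) follow immediately: $\cl_L(X)$ is a left predicate (a finite union of atoms, hence in $\LP{L}$), it contains $X$, it is contained in every left predicate containing $X$, and monotonicity/idempotence are then formal consequences of being characterised as a least upper bound in the closure sense. Statement (3) is also immediate from the atom-disjointness already used: if $\alpha \subseteq \cl_L(X)$ then since $\alpha$ is nonempty (atoms are never $\emptyset$ — they are join-irreducible, and $\emptyset$ is the bottom) it meets some atom in the defining union, which must be $\alpha$ itself by disjointness, so $\alpha \cap X \neq \emptyset$; conversely if $\alpha \cap X \neq\emptyset$ then $\alpha$ appears in the union defining $\cl_L(X)$.

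For (4), I would use (3): an atom $\alpha$ satisfies $\alpha \subseteq \cl_L(X \cup Y)$ iff $\alpha \cap (X\cup Y) \neq \emptyset$ iff $\alpha \cap X \neq \emptyset$ or $\alpha \cap Y \neq \emptyset$ iff $\alpha \subseteq \cl_L(X)$ or $\alpha \subseteq \cl_L(Y)$ iff $\alpha \subseteq \cl_L(X) \cup \cl_L(Y)$ (the last step because $\cl_L(X)\cup\cl_L(Y)$ is itself a left predicate, being a union of atoms, so it equals the union of the atoms below it). Since both sides are unions of the atoms they contain, equality follows. Statement (6), $J(\jslLP{L}) = \{\sem{w}_{\rE_L} : w \in \Sigma^*\}$, I would prove by showing each $\sem{w}_{\rE_L}$ is an atom and conversely. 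For the forward direction: $\sem{w}_{\rE_L}$ is exactly $\bigcap\{X \in \LW{L} : w \in X\} \cap \bigcap\{\overline{X} : X \in \LW{L},\, w \notin X\}$, which is a left predicate; and any nonempty left predicate $Y \subseteq \sem{w}_{\rE_L}$ must, for each $X \in \LW{L}$, be contained in $X$ (if $w \in X$) or in $\overline{X}$ (if $w\notin X$), forcing $Y = \sem{w}_{\rE_L}$ by writing $Y$ as an intersection of generators and complements; hence $\sem{w}_{\rE_L}$ is an atom. Conversely any atom $\alpha$ is nonempty, so it contains some $w$, and then $\alpha \subseteq \sem{w}_{\rE_L}$ by the same argument, and since both are atoms they coincide. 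Then (7) is just the restatement of the first definition of $\cl_L$ using (6): $\cl_L(X) = \bigcup\{\alpha \in J(\jslLP{L}) : \alpha \cap X \neq \emptyset\} = \bigcup_{w \in X}\sem{w}_{\rE_L}$, since an atom meets $X$ iff it is $\sem{w}_{\rE_L}$ for some $w \in X$.

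The main obstacle is statement (5), $\cl_L(w^{-1}X) \subseteq w^{-1}\cl_L(X)$, which is the only genuinely non-formal item. The natural approach: by (2) it suffices to show $w^{-1}\cl_L(X)$ is a left predicate containing $w^{-1}X$; containment is clear from $X \subseteq \cl_L(X)$, so the work is showing $w^{-1}Y \in \LP{L}$ whenever $Y \in \LP{L}$. This reduces by induction on $|w|$ to the single-letter case $a^{-1}Y \in \LP{L}$ for $a \in \Sigma$, and since $a^{-1}(-)$ commutes with all boolean operations (union, and intersection/complement: $a^{-1}\overline{Z} = \overline{a^{-1}Z}$, $a^{-1}(Z_1 \cap Z_2) = a^{-1}Z_1 \cap a^{-1}Z_2$), it suffices that $a^{-1}$ maps generators $u^{-1}L$ to left predicates — indeed $a^{-1}(u^{-1}L) = (ua)^{-1}L \in \LW{L} \subseteq \LP{L}$. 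Thus $a^{-1}Y$, built from the $u^{-1}L$ by booleans, lands back in $\LP{L}$. One should note the inclusion in (5) can be strict because $w^{-1}(-)$ need not be injective, so $w^{-1}\cl_L(X)$ may be larger than the closure of $w^{-1}X$; no equality is claimed and none is needed.

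\begin{remark}
The inductive reduction in (5) is exactly the well-definedness argument already implicit in Definition \ref{def:canon_bool_jsl_dfa} (that $\lambda X. a^{-1}X$ is an endomorphism of $\jslLP{L}$), so one may alternatively cite that transition structure directly rather than re-proving closure under letter-quotients.
\end{remark}
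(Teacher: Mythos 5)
Your proposal is correct and follows essentially the same route as the paper's proof: both rest on the equivalence of the two definitions of $\cl_L$, the disjointness of the atoms, the observation that $a^{-1}(-)$ commutes with the boolean operations and sends generators $u^{-1}L$ to $(ua)^{-1}L$ for item (5), and the identification of atoms with $\rE_L$-classes for items (6)--(7); your disjointness argument for the converse of (3) is a harmless variant of the paper's coatom-contradiction. The one spot to tighten is in (6), where ``writing $Y$ as an intersection of generators and complements'' is not quite the right justification (a general left predicate is a \emph{union} of such intersections); the correct step is that membership in any boolean combination of $\LW{L}$ is constant on each $\rE_L$-class, so a nonempty left predicate contained in $\sem{w}_{\rE_L}$ must equal it.
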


\begin{proof}
  \item
  \begin{enumerate}
    \item 
    $\cl_L$ is monotone: if $X \subseteq Y$ then $\forall \alpha \in J(\jslLP{L}). (\alpha \cap X \neq \emptyset \To \alpha \cap Y \neq \emptyset)$ hence $\cl_L(X) \subseteq \cl_L(Y)$. Next, $X \subseteq \cl_L(X)$ because the latter is an intersection of supersets of $X$. Finally $\cl_L \circ \cl_L (X) = \cl_L(X)$ because $\cl_L(X) \in \LP{L}$ is the union of the atoms it includes.

    \item
    Follows by alternate definition.

    \item
    If $\alpha \cap X \neq \emptyset$ then $\alpha \subseteq \cl_L(X)$ by definition. Conversely if $\alpha \subseteq \cl_L(X)$ then some $u \in \alpha$ satisfies $u \in X$ for otherwise we'd know $X \subseteq \overline{\alpha}$ (the latter being a coatom), so that $\cl_L(X) \subseteq \overline{\alpha}$ by the alternate definition (a contradiction).

    \item
    The inclusion ($\supseteq$) follows by monotonicity. Conversely, given an atom $\alpha \subseteq \cl_L(X \cup Y)$ then by (3) there exists $u \in \alpha \cap (X \cup Y)$ and hence w.l.o.g.\ $u \in \alpha \cap X$ and thus $\alpha \subseteq \cl_L(X)$.
    
    \item
    Since $X \subseteq \cl_L(X)$ we deduce $w^{-1} X \subseteq w^{-1} \cl_L(X)$ and hence $\cl_L(w^{-1} X) \subseteq w^{-1} \cl_L(X)$ because (a) the former is the least atomic language above $w^{-1} X$, (b) the latter is in $\LP{L}$ because $w^{-1}(-)$ preserves all set-theoretic boolean operations.

    \item
    An atom amounts to specifying $X$ or $\overline{X}$ for each $X \in \LW{L}$ i.e.\ an $\rE_L$ equivalence-class.

    \item
    Follows by definition via (6).
  \end{enumerate}
\end{proof}

\begin{note}[$\cl_L$ preserves unions]
  The fixpoints (closed sets) of every closure operator are closed under intersections.
  By Lemma \ref{lem:cl_L_basics} $\LP{L}$ is also closed under \emph{unions}, which is not a general property of closure operators. Then the closed sets form a distributive lattice, in fact a boolean lattice because $\LP{L}$ is closed under relative complement.
  \endbox
\end{note}




We've now arrived at the main definition of this section.

\begin{definition}[Atomizer]
  \label{def:atomizer}
  Each $L$-extension $e : \jslLQ{L} \monoto (\aT, \delta_a)$ has associated join-semilattice morphism:
  \[
    \lambda Y. \cl_L(acc_{\toJdfa{e}}(Y)) : \aT \to \jslLP{L}.  
  \]
  Restricting to the image yields the \emph{atomizer} $\at{e} : \aT \epito \jslAtz{e}$ where $\jslAtz{e} := (\Atz{e}, \cup, \emptyset)$ is the \emph{atomized semilattice}.
  \endbox
\end{definition}

\begin{note}[Atomizer's action]
  The atomizer constructs the closure of the accepted language. We often construct $L$-extensions by simplifying a $\JSL$-dfa, in which case the atomizer is a domain/codomain restriction of $\cl_L$. \endbox
\end{note}

\begin{lemma}
  The atomizer is a well-defined join-semilattice morphism.
\end{lemma}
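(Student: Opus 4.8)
The plan is to unpack the definition of the atomizer $\at{e}$ and verify that the map $\lambda Y. \cl_L(acc_{\toJdfa{e}}(Y))$ is a $\JSL_f$-morphism $\aT \to \jslLP{L}$; the claimed statement then follows because restricting a join-semilattice morphism to its (automatically subalgebra) image is again a join-semilattice morphism, and $\jslAtz{e} := (\Atz{e}, \cup, \emptyset)$ is by definition exactly that image equipped with union. So it suffices to check the three things: (i) the codomain is correct, i.e.\ $\cl_L(acc_{\toJdfa{e}}(Y)) \in \LP{L}$ for every $Y \in \aT$; (ii) $\bot_\aT \mapsto \emptyset$; and (iii) joins are preserved.

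For (i), I would observe that $\cl_L(X) \in \LP{L}$ for any $X \subseteq \Sigma^*$ by Lemma \ref{lem:cl_L_basics}.2 (the atomic closure of any set is a left predicate), so the image of the composite lands in $\LP{L}$ regardless of what $acc_{\toJdfa{e}}$ does. For (ii), the largest non-final considerations are irrelevant: $\toJdfa{e}$ is a $\JSL$-dfa, and $acc_{\toJdfa{e}}$ is a $\JSL$-dfa morphism onto its simplification (Definition \ref{def:jsl_reach_simple}.5, Lemma \ref{lem:jsl_dfa_joins}), hence $acc_{\toJdfa{e}}(\bot_\aT) = L((\toJdfa{e})_{@\bot_\aT}) = \emptyset$ by Lemma \ref{lem:jsl_dfa_joins}; then $\cl_L(\emptyset) = \emptyset$ since $\emptyset$ is already a left predicate (the empty join of atoms). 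For (iii), I would chain the two facts: $acc_{\toJdfa{e}}$ preserves joins — a join of states accepts the union of the accepted languages, again by Lemma \ref{lem:jsl_dfa_joins} — and $\cl_L$ preserves unions by Lemma \ref{lem:cl_L_basics}.4. Composing two join-preserving maps gives a join-preserving map, so $\lambda Y. \cl_L(acc_{\toJdfa{e}}(Y))$ is a $\JSL_f$-morphism $\aT \to \jslLP{L}$.

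Finally, for the restriction-to-image step I would note that the image of a $\JSL_f$-morphism $f : \aT \to \jslLP{L}$ is closed under binary union (as $f(Y_1) \cup f(Y_2) = f(Y_1 \lor_\aT Y_2)$) and contains $\emptyset = f(\bot_\aT)$, so $(\Atz{e}, \cup, \emptyset)$ is a genuine sub join-semilattice of $\jslLP{L}$ and $\at{e} : \aT \epito \jslAtz{e}$ is surjective and join-preserving by construction; in particular it is a well-defined $\JSL_f$-morphism. I do not anticipate a real obstacle here — the statement is a composition-of-morphisms bookkeeping argument — but the one point requiring a little care is (ii)/the codomain discussion, namely confirming that $acc_{\toJdfa{e}}$ really is a join-semilattice morphism and not merely a dfa morphism; this is exactly the content of Lemma \ref{lem:jsl_dfa_joins} together with the fact (Lemma \ref{lem:jsl_reach_simple_constructions}.2) that the simplification map is a $\JSL$-dfa morphism, so it is already available.
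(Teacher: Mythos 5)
Your proposal is correct and follows essentially the same route as the paper: factor the map as $\cl_L \circ acc_{\toJdfa{e}}$, use Lemma \ref{lem:jsl_dfa_joins} (equivalently Definition \ref{def:jsl_reach_simple}) to see that $acc_{\toJdfa{e}}$ is a surjective $\JSL_f$-morphism, and use $\cl_L(\emptyset)=\emptyset$ together with Lemma \ref{lem:cl_L_basics}.4 to see that $\cl_L$ restricts to a $\JSL_f$-morphism into $\jslLP{L}$. Your explicit check that the image is a sub join-semilattice is a detail the paper leaves implicit, but it is the same argument.
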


\begin{proof}
  Fixing an $L$-extension $e : \jslLQ{L} \monoto (\aT, \delta_a)$, $\at{e}$ is a well-defined function because $\cl_L(X) \in \LP{L}$. 
  Given $\delta := \toJdfa{e}$ then $acc_\delta : \aT \epito \jslLangs{\delta}$ is a well-defined surjective $\JSL_f$-morphism by Definition \ref{def:jsl_reach_simple}. Finally $\cl_L$ restricts to a morphism $\cl_L : \jslLangs{\delta} \to \jslLP{L}$ because $\emptyset = \cl_L(\emptyset)$ is atomic and $\cl_L$ preserves binary unions by Lemma \ref{lem:cl_L_basics}.4.
\end{proof}

Recall the canonical quotient-atom bijection $\kappa_L$ from Theorem \ref{thm:atom_quotient_bijection}. We now use it to represent each atomized semilattice inside $\Dep$.

\begin{definition}[Atomizer relation $\rH_e$]
  Each $L$-extension $e : \jslLQ{L} \monoto (\aT, \delta_a)$ has an associated \emph{atomizer relation},
  \[
    \rH_e \subseteq J(\aT) \times \LW{L^r}
    \qquad
    \rH_e(j, v^{-1} L^r) :\iff v^r \in \at{e}(j).
  \]
  Furthermore if $e$ is simplified this becomes $\rH_e(j, v^{-1} L^r)  \iff v^r \in \cl_L(j)$. \endbox
\end{definition}

This important concept is preserved under simplification of the $L$-extension.

\begin{lemma}[$\rH_e \cong \rH_{\jslDfaSimple{e}}$]
  \label{lem:h_e_assume_simplified}
  We have the $\Dep$-isomorphism:
  \[
    \xymatrix@=15pt{
      \LW{L^r} \ar[rr]^{\Delta_{\LW{L^r}}} && \LW{L^r}
      \\
      J(\aT) \ar[u]^-{\rH_e} \ar[rr]_{acc_{\toJdfa{e}}} && J(\jslLangs{e}) \ar[u]_-{\rH_{\jslDfaSimple{e}}}
    }
  \]
\end{lemma}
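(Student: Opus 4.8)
The plan is to exhibit the displayed square as a $\Dep$-morphism whose witnessing relations are the natural ones, and to check the two required commutativities in $\Rel_f$ directly from the definitions. Write $e : \jslLQ{L} \monoto (\aT, \delta_a)$, let $\delta := \toJdfa{e}$ and $q := acc_\delta : \aT \epito \jslLangs{\delta}$ be the surjective $\JSL_f$-morphism from Definition \ref{def:jsl_reach_simple}; recall that $\jslDfaSimple{e} = \toLext{\toJdfa{e}}$ has underlying semilattice $\jslLangs{e} := \jslLangs{\delta}$, and that by Note \ref{note:jsl_extras}.3 / Note \ref{note:jsl_extras}.4 the surjection $q$ restricts to a surjection $J(\aT) \epito J(\jslLangs{e})$, since a surjective $\JSL_f$-morphism carries join-generators to join-generators. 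I would first observe that $\at{e} = \cl_L \circ q$ and $\at{\jslDfaSimple{e}}$ is the restriction $\cl_L$ itself (this is exactly the last sentence of the definition of $\rH_e$ together with the description of the atomizer for simplified extensions), so that for $j \in J(\aT)$ one has $\at{e}(j) = \cl_L(q(j)) = \at{\jslDfaSimple{e}}(q(j))$. Hence the compatibility $\rH_e(j, v^{-1}L^r) \iff v^r \in \at{e}(j) \iff v^r \in \at{\jslDfaSimple{e}}(q(j)) \iff \rH_{\jslDfaSimple{e}}(q(j), v^{-1}L^r)$ holds, which is precisely the statement that $q$ pulls $\rH_{\jslDfaSimple{e}}$ back to $\rH_e$, i.e. $\rH_e = q ; \rH_{\jslDfaSimple{e}}$.

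Next I would verify this makes $acc_{\toJdfa{e}} : \rH_e \to \rH_{\jslDfaSimple{e}}$ a $\Dep$-morphism with lower witness $q$ (viewed as a relation $J(\aT) \times J(\jslLangs{e})$) and upper witness $\Delta_{\LW{L^r}}$: the lower triangle of Definition \ref{def:the_cat_dep} reads $q ; \rH_{\jslDfaSimple{e}} = \rH_e = acc_{\toJdfa{e}}$ as a relation, and the upper triangle reads $\rH_e ; \Delta_{\LW{L^r}}\spbreve = \rH_e$, which is trivial. To see it is a $\Dep$-\emph{isomorphism}, I would produce the inverse: since $q$ restricted to join-irreducibles is surjective, choose a section at the level of irreducibles, or better, argue directly. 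The cleanest route is to invoke Example \ref{ex:dep_morphisms}.2/Example \ref{ex:dep_morphisms}.3 style reasoning, but in fact the slickest argument is that $\rH_{\jslDfaSimple{e}}$ is $\Dep$-isomorphic to $\rH_e$ because applying $\Open$ to both recovers $\jslAtz{e}$ in each case: $\Open\rH_e$ and $\Open\rH_{\jslDfaSimple{e}}$ are both (by construction of $\jslAtz{e}$ and the equivalence Theorem \ref{thm:dep_equiv_jslf}) the atomized semilattice $\jslAtz{e}$, since $\at{e}$ and $\at{\jslDfaSimple{e}}$ have the same image. So it suffices to check that the square is a $\Dep$-morphism at all, and then that its $\Open$-image is an identity (or an isomorphism), whence by Theorem \ref{thm:dep_equiv_jslf} (equivalence reflects isomorphisms) it is a $\Dep$-isomorphism. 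Concretely: $\Open(acc_{\toJdfa{e}})$ acts by $\lambda Y. q_+\spbreve[Y]$, and unwinding via Note \ref{note:open_morphism_alt} using the upper witness $\Delta_{\LW{L^r}}$ this is the identity on $O(\rH_e) = \jslAtz{e}$.

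I expect the main obstacle to be bookkeeping rather than mathematical depth: one must be careful that $J(\aT)$ and $J(\jslLangs{e})$ are genuinely the domains of $\rH_e$ and $\rH_{\jslDfaSimple{e}}$ (so that the square types correctly), that $q$ restricts to a well-defined \emph{surjection} on irreducibles — this uses that $\JSL_f$-epis are surjective (Note \ref{note:jsl_extras}.4) and that irreducibles are the minimal join-generating set (Note \ref{note:irreducibles}.3) — and that the atomizer of $e$ really factors as $\cl_L$ composed with $acc_{\toJdfa{e}}$, which is the content of the last lines of the Atomizer definition and the definition of $\rH_e$. Once those identifications are in place, both triangle conditions of Definition \ref{def:the_cat_dep} are immediate, and the isomorphism follows from the categorical equivalence $\Open \dashv \Pirr$ of Theorem \ref{thm:dep_equiv_jslf} together with the fact that $\Open$ sends the square to an identity morphism on $\jslAtz{e}$. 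I would also remark in passing that the component relations of this $\Dep$-isomorphism are obtained from $q$ and its relational converse, paralleling Proposition \ref{prop:dep_generator_isoms}.
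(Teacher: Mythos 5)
Your proposal is correct and follows essentially the same route as the paper: commutativity of the square is obtained by unwinding the definitions of $\rH_e$, $\at{e}$ and $acc_{\toJdfa{e}}$ (using that each state $Y$ of $\jslDfaSimple{\toJdfa{e}}$ accepts $Y$), and the isomorphism is then deduced because $\Open\rH_e = \Open\rH_{\jslDfaSimple{e}}$ and $\Open$ applied to the square yields an identity morphism, which the equivalence of Theorem \ref{thm:dep_equiv_jslf} reflects back to a $\Dep$-isomorphism. The extra bookkeeping you flag (surjectivity of $acc_{\toJdfa{e}}$ on join-irreducibles, the factorisation $\at{e} = \cl_L \circ acc_{\toJdfa{e}}$) is accurate and consistent with the paper's one-line argument.
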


\begin{proof}
  The diagram commutes by unwinding the definitions, recalling each state $Y$ in $\jslDfaSimple{\toJdfa{e}}$ accepts $Y$. Since $\Open \rH_e = \Open \rH_{\jslDfaSimple{e}}$, applying $\Open$ yields an identity morphism (see Note \ref{note:open_morphism_alt}), so this $\Dep$-morphism is actually an isomorphism.
\end{proof}

\begin{theorem}[$\jslAtz{e} \cong \Open\rH_e$]
  \label{thm:rep_atomized_in_dep}
  For any $L$-extension $e$ we have the join-semilattice isomorphism:
  \[
    \theta_e : \jslAtz{e} \to \Open\rH_e
    \qquad
    \theta_e(Y)
    := \{ w^{-1} L^r : w \in Y^r \}
    \qquad
    \theta_e^{\bf-1} (S) := \{ w \in \Sigma^* : w^{-1} L^r \in S  \}^r.
  \]
\end{theorem}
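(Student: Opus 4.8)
The claim is that the atomized semilattice $\jslAtz{e}$ is isomorphic to the join-semilattice $\Open\rH_e$ of $\rH_e$-open sets, with the explicit maps $\theta_e$ and $\theta_e^{\bf-1}$. The plan is to reduce to the simplified case and then to recognize $\theta_e$ as a composite of already-established isomorphisms, essentially the same composite used in the Dependency Theorem \ref{thm:dependency_thm}, restricted to the appropriate sub-semilattices.

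\textbf{Step 1: Reduce to $e$ simplified.} By Lemma \ref{lem:h_e_assume_simplified} we have the $\Dep$-isomorphism $\rH_e \cong \rH_{\jslDfaSimple{e}}$ via $acc_{\toJdfa{e}} : J(\aT) \to J(\jslLangs{e})$, which induces a $\JSL_f$-isomorphism $\Open\rH_e \cong \Open\rH_{\jslDfaSimple{e}}$. On the algebra side, the atomizer $\at{e}$ factors through $acc_{\toJdfa{e}}$ by its very definition ($\at{e} = \cl_L \circ acc_{\toJdfa{e}}$ up to restriction), so $\jslAtz{e} = \jslAtz{\jslDfaSimple{e}}$ as subsets of $\jslLP{L}$. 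One checks that $\theta_e$ is compatible with these identifications (both sides send a state to $\{w^{-1}L^r : w \in Y^r\}$, where $Y$ is now literally the accepted language), so it suffices to prove the theorem when $e$ is simplified, i.e.\ when $\at{e}$ is a domain/codomain restriction of $\cl_L$ and $\rH_e(j, v^{-1}L^r) \iff v^r \in \cl_L(j)$.

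\textbf{Step 2: Identify $\Open\rH_e$ explicitly and build $\theta_e$ from known isomorphisms.} Now take $e$ simplified, so $\aT = (T, \cup, \emptyset)$ with $T$ a union-and-left-letter-quotient-closed family of languages, $L(\toJdfa{e}) = L$, and the join-irreducibles $J(\aT)$ among $\LW{L}$-unions; the atomizer sends $Y \in T$ to $\cl_L(Y) \in \jslLP{L}$, and $\jslAtz{e}$ is the image. The relation $\rH_e \subseteq J(\aT) \times \LW{L^r}$ satisfies $\rH_e[j] = \{v^{-1}L^r : v^r \in \cl_L(j)\}$. The key computation is that $\Open\rH_e = \{\rH_e[X] : X \subseteq J(\aT)\}$, and for $Y \in \jslAtz{e}$, since $Y = \bigcup\{j \in J(\aT) : j \subseteq Y\}$ (as $Y$ is a closed set and closed sets are joins of the atoms below them), we get $\rH_e[\{j : j \subseteq Y\}] = \{v^{-1}L^r : v^r \in Y\} = \{w^{-1}L^r : w \in Y^r\} = \theta_e(Y)$. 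So $\theta_e$ is well-typed as a map $\jslAtz{e} \to \Open\rH_e$, it visibly preserves unions and the bottom element, and the harder direction is surjectivity and injectivity. For injectivity: if $\theta_e(Y_1) = \theta_e(Y_2)$ then $Y_1$ and $Y_2$ contain the same words (as $w \in Y \iff w^{-1}L^r \in \theta_e(Y)$, using that membership of $w$ in an $\rE_L$-class is detected by the quotient-atom bijection $\kappa_L$ of Theorem \ref{thm:atom_quotient_bijection} together with Lemma \ref{lem:cl_L_basics}.7), hence $Y_1 = Y_2$. For surjectivity, one shows every $\rH_e$-open set $S = \rH_e[X]$ equals $\theta_e(\theta_e^{\bf-1}(S))$, i.e.\ $S = \theta_e(\{w \in \Sigma^* : w^{-1}L^r \in S\}^r)$; this follows because $\rH_e[X]$ is downward-saturated in the sense forced by the component structure of $\rH_e$ (any join-irreducible $v^{-1}L^r \subseteq w^{-1}L^r$ with $w^{-1}L^r \in S$ already lies in $S$, exactly as in the proof of Theorem \ref{thm:dependency_thm}, part (2)), together with the fact that $\theta_e^{\bf-1}(S)$ is a closed language — it is $\bigcup_w \sem{w}_{\rE_L}$ over the relevant $w$, hence in $\LP{L}$, and in fact in $\jslAtz{e}$ because $\jslAtz{e}$ is meet-generated inside $\jslLP{L}$ appropriately.

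\textbf{Step 3: Verify $\theta_e^{\bf-1}$ is the inverse.} Finally, one checks $\theta_e^{\bf-1} \circ \theta_e = \mathrm{id}$ and $\theta_e \circ \theta_e^{\bf-1} = \mathrm{id}$ directly from the formulas, using $w \in Y \iff w^{-1}L^r \in \theta_e(Y)$ (this is the reverse-complement bookkeeping: $v^r \in \cl_L(j)$ unwinds via $\kappa_L$) and the surjectivity argument of Step 2. Since $\theta_e$ is a bijective join-preserving map between finite join-semilattices, Note \ref{note:jsl_extras} (bijective $\JSL_f$-morphisms are isomorphisms) gives that it is a $\JSL_f$-isomorphism, completing the proof.

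\textbf{Main obstacle.} The delicate point is Step 2's surjectivity/injectivity: one must argue that $\theta_e^{\bf-1}$ lands back in $\jslAtz{e}$ (not merely in $\jslLP{L}$) and that $\rH_e$-open sets are precisely the images $\theta_e(Y)$ for $Y \in \jslAtz{e}$. The cleanest route is to observe that $\rH_e$ is, up to the bijection $\kappa_L$ and relative complementation, a domain-restriction of the dependency relation $\rDR{L}$ to $J(\aT) \times \LW{L^r}$, so that $\Open\rH_e$ is a sub-join-semilattice of $\Open\rDR{L} \cong \jslLQ{L}$ corresponding under the Dependency Theorem's isomorphism to the sub-semilattice $\jslAtz{e} \subseteq \jslLP{L}$; then everything follows by restricting Theorem \ref{thm:dependency_thm} (and its explicit inverse action $\alpha^{\bf-1}(S) = [(\bigcap S)^r]^{-1}L$) rather than redoing the open-set combinatorics from scratch. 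I expect the bulk of the writeup to be making that restriction precise and matching the stated formula for $\theta_e^{\bf-1}$ against $\alpha^{\bf-1}$ composed with $\cl_L$.
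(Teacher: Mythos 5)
Your argument is essentially correct but proceeds quite differently from the paper. The paper never reduces to the simplified case: it keeps a general $e$, represents all of $\jslLP{L}$ as $\Open\rG$ for $\rG := \; \nsubseteq|_{\LP{L} \times M(\jslLP{L})}$ via $\rep_{\jslLP{L}}$ and the generator isomorphisms of Proposition \ref{prop:dep_generator_isoms}, restricts along the surjection $\at{e}$ to get $\jslAtz{e} \cong \Open(\at{e}|_{J(\aT) \times J} ; \rG)$, and then transports across a bipartite-graph isomorphism built from $\kappa_L$ and atom/coatom complementation to land on $\Open\rH_e$; the explicit formula for $\theta_e$ falls out of composing these maps. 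You instead verify the formulas by hand after reducing to the simplified case via Lemma \ref{lem:h_e_assume_simplified}. Your route is more elementary and arguably more transparent about \emph{why} the formula is what it is (the key identity $v^r \in Y \iff v^{-1}L^r \in \theta_e(Y)$ for atomic $Y$, via $\kappa_L$ and Lemma \ref{lem:cl_L_basics}.7, does all the work for injectivity and for $\theta_e^{\bf-1}\circ\theta_e = \mathrm{id}$); the paper's route buys functoriality for free and avoids re-proving bijectivity. Your closing suggestion of restricting the Dependency Theorem is not quite what the paper does either — $\rH_e$ is not a domain-restriction of $\rDR{L}$ (its lower bipartition is $J(\aT)$, not a subset of $\LW{L}$), so that reduction would need the same $\at{e}$-transport the paper performs anyway.

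One soft spot: your surjectivity argument leans on "downward-saturation" of $\rH_e$-open sets and on $\jslAtz{e}$ being "meet-generated appropriately", neither of which is the operative mechanism (there is no reason an $\rH_e$-open set should be downward closed under inclusion of left word quotients of $L^r$). Surjectivity is a one-line computation from your own setup: for any $X \subseteq J(\aT)$,
\[
  \rH_e[X] \;=\; \bigcup_{j \in X} \{ v^{-1}L^r : v^r \in \at{e}(j) \}
  \;=\; \{ v^{-1}L^r : v^r \in \at{e}(\Lor_\aT X) \}
  \;=\; \theta_e\bigl(\at{e}(\Lor_\aT X)\bigr),
\]
using that $\at{e}$ preserves joins; this simultaneously shows $\theta_e$ lands in $\Open\rH_e$ and hits every open set, and then $\theta_e\circ\theta_e^{\bf-1} = \mathrm{id}$ follows from the injectivity identity. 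Replace the saturation/meet-generation appeal with this computation and the proof is complete.
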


\begin{proof}
  Recall that the atomized semilattice $\jslAtz{e}$ is a sub join-semilattice of $\jslLP{L}$. Concerning the latter, we may instantiate Proposition \ref{prop:dep_generator_isoms} with $J_{\jslLP{L}} := \LP{L}$ (every element) and $M_{\jslLP{L}} := M(\jslLP{L})$ (the coatoms). We immediately obtain the $\Dep$-isomorphism:
  \[
    \rI_{\jslLP{L}}^{\bf-1} : \Pirr\jslLP{L} \to \; \rG
    \qquad\text{where}\qquad
    \rG := \; \nsubseteq |_{\LP{L} \times M(\jslLP{L})}
  \]
  and thus the composite join-semilattice isomorphism:
  \[
    \alpha := \qquad
      \jslLP{L}
      \xto{\rep_{\jslLP{L}}} \Open\Pirr\jslLP{L}
      \xto{\Open \rI_{\jslLP{L}}^{\bf-1}} \Open\rG
    \qquad
    \text{with action $Y \mapsto \rep_{\jslLP{L}}(Y)$}.
  \]
  To clarify, $\alpha$ acts as $\rep_{\jslLP{L}}$ because each $Y \in O(\Pirr\jslLP{L})$ is downwards-closed in $M(\jslLP{L})$ w.r.t.\ inclusion, and $(\rI_{\jslLP{L}}^{\bf-1})_+\spbreve[Y]$ constructs the downwards-closure (see Proposition \ref{prop:dep_generator_isoms}). It follows that $\jslAtz{e} \subseteq \jslLP{L}$ may be represented as a sub join-semilattice of $\Open\rG$. Since $\at{e}$ is surjective we know $J := \at{e}[J(\aT)]$ join-generates the atomized semilattice $\jslAtz{e}$. Then:
  \[
    \text{
      $\alpha$ restricts to the isomorphism
      $\beta : \jslAtz{e} \to \Open \rH$
      \qquad
      \text{where $\rH := \at{e} |_{J(\aT) \times J} ; \rG$}.
    }
  \]
  To explain, each $\cl_L(j) \in J(\jslAtz{e}) \subseteq J_{\jslLP{L}}$ satisfies $\rep_{\jslLP{L}} (\cl_L(j)) = \rG[\cl_L(j)]$, and all other open sets are unions of them. To construct the desired isomorphism $\theta_e$ recall the bijection $\kappa_L : \LW{L^r} \to J(\jslLP{L})$ from Theorem \ref{thm:atom_quotient_bijection}, and the bijection $J(\jslLP{L}) \to M(\jslLP{L})$ between atoms and coatoms (relative complement). Consider the relations:
  \[
    \xymatrix@=15pt{
      M(\jslLQ{L}) \ar[rr]^{f} && \LW{L^r}
      \\
      J(\aT) \ar[u]^{\rH} \ar[rr]_{\Delta_{J(\aT)}} && J(\aT) \ar[u]_{\rH_e}
    }
  \]
  where the composite bijection $f$ has action $\overline{\sem{w}_{\rE_L}} \mapsto \kappa_L^{\bf-1}(\sem{w}_{\rE_L}) = (w^r)^{-1} L^r$. If they commute we have a $\Dep$-isomorphism because the lower and upper witnesses are bijections. Then let us calculate:
  \[
    \begin{tabular}{lll}
      $\rH ; f (j, v^{-1} L^r)$
      & $\iff \exists w \in \Sigma^*. [
          \at{e}(j) \nsubseteq \overline{\sem{w}_{\rE_L}}
          \,\land\, f(\overline{\sem{w}_{\rE_L}}) = v^{-1} L^r
        ]$
      \\ &
      $\iff \at{e}(j) \nsubseteq \overline{\sem{v^r}_{\rE_L}}$
      & (A)
      \\ &
      $\iff \sem{v^r}_{\rE_L} \subseteq \at{e}(j)$
      & (atom vs.\ coatom)
      \\ &
      $\iff v^r \in \at{e}(j)$
      & (via Lemma \ref{lem:cl_L_basics}.7)
      \\ &
      $\iff \rH_e(j, v^{-1} L^r)$
      & (by definition).
    \end{tabular}
  \]
  Concerning (A), $(\To)$ follows because we know $(w^r)^{-1} L^r = v^{-1} L^r$ and thus $\sem{w}_{\rE_L} = \sem{v^r}_{\rE_L}$ because $\kappa_L$ is injective. Conversely $(\oT)$ follows by choosing $w := v^r$. So we have the isomorphism $\theta_e := \Open\rH_e \circ \beta$ with action:
  \[
    \begin{tabular}{llll}
      $Y$ & $\mapsto$ & $f[\rep_{\jslLP{L}} (Y)]$
      & (by Note \ref{note:open_morphism_alt})
      \\ & $=$ & $\{ f(\overline{\sem{w}_{\rE_L}}) : Y \nsubseteq \overline{\sem{w}_{\rE_L}} \}$
      & (def.\ of $\rep$)
      \\ & $=$ & $\{ [w^r]^{-1} L^r : Y \nsubseteq \overline{\sem{w}_{\rE_L}} \}$
      & (def.\ of $f$)
      \\ & $=$ & $\{ [w^r]^{-1} L^r : w \in Y \}$
      & ($Y$ is atomic).
    \end{tabular}
  \]
\end{proof}

\subsection{Explaining Kameda-Weiner}

Recall the notion of $L$-covering $\rH$ i.e.\ Definition \ref{def:l_covering}.
They amount to biclique edge-coverings of the dependency relation $\rDR{L}$. They are \emph{legitimate} if their induced nfa $\rN_\rH$ (defined over the bicliques) accepts $L$. Crucially $\rH_e$ is a legitimate $L$-covering for any $L$-extension $e$.

\begin{theorem}
  \label{thm:l_ext_induce_l_cover}
  $\rH_e$ is a legitimate $L$-covering for any $L$-extension $e$,
  \[
    \rLCov{L}{\rH_e}_- ; \rH_e = \rDR{L}
    \qquad\text{where}\qquad
    \rLCov{L}{\rH_e}_- (u^{-1} L, j) \iff \acc_{\toJdfa{e}}(j) \subseteq u^{-1} L.
  \]
\end{theorem}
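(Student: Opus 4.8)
The claim has two halves: first that $\rH_e$ is an $L$-covering with the stated lower component, and second that $\rH_e$ is legitimate, i.e.\ $L(\rN_{\rH_e}) = L$. The natural strategy is to deduce the first half from Theorem~\ref{thm:rep_atomized_in_dep} ($\jslAtz{e} \cong \Open\rH_e$) together with Lemma~\ref{lem:h_e_assume_simplified} ($\rH_e \cong \rH_{\jslDfaSimple{e}}$), and then to extract legitimacy by identifying the induced nfa $\rN_{\rH_e}$ with (a simplification of) the lower nfa of $\Airr(\toJdfa{e})$, whose accepted language is already known to be $L$ by Note~\ref{note:det_airr_preserve_acceptance} and the construction of $\toJdfa{e}$.

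\textbf{Step 1: $\rH_e$ is an $L$-covering.} By Lemma~\ref{lem:l_coverings}.1 it suffices to factor $\rDR{L} = \rI \fatsemi \rH_e$ through $\LW{L}$ for some relation $\rI \subseteq \LW{L} \times J(\aT)$. I would take $\rI := \rLCov{L}{\rH_e}_-$ as \emph{defined} by the displayed formula $\rI(u^{-1}L, j) :\iff acc_{\toJdfa{e}}(j) \subseteq u^{-1} L$, and verify $\rI ; \rH_e = \rDR{L}$ directly. Unwinding: $(\rI;\rH_e)(u^{-1}L, v^{-1}L^r)$ holds iff there is $j \in J(\aT)$ with $acc_{\toJdfa{e}}(j) \subseteq u^{-1}L$ and $v^r \in \at{e}(j) = \cl_L(acc_{\toJdfa{e}}(j))$. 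For the forward direction, $v^r \in \cl_L(acc_{\toJdfa{e}}(j))$ together with Lemma~\ref{lem:cl_L_basics}.5/7 and $acc_{\toJdfa{e}}(j)\subseteq u^{-1}L$ should give $v^r \in \cl_L(u^{-1}L)$, and since finality of states / the acceptance criterion for $\toJdfa{e}$ encodes exactly $\rDR{L}(u^{-1}L, v^{-1}L^r) \iff uv^r \in L \iff v^r \in u^{-1}L$ (up to the atomic-closure being harmless here, because $\LW{L} \subseteq \LP{L}$ so $u^{-1}L$ is already atomic), one gets $\rDR{L}(u^{-1}L,v^{-1}L^r)$. For the converse, given $\rDR{L}(u^{-1}L, v^{-1}L^r)$, i.e.\ $uv^r \in L$, one picks $j \in J(\aT)$ to be a join-irreducible below the state $acc_{\toJdfa{e}}$-preimage of $u^{-1}L$ (which exists since $\minDfa{L}$ is a sub-dfa of $\jslDfaMin{L} \hookto \aT$ via $e$, by Note~\ref{note:to_lext_jsdfa_pres_lang}.1) with $v^r$ in its language, using that $u^{-1}L$ is join-generated by such irreducibles and $v^r \in u^{-1}L$. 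One must also check this $\rI$ really is the maximum lower witness $\rLCov{L}{\rH_e}_-$ of Definition~\ref{def:maximum_witnesses}, i.e.\ that it equals $\{(X,j) : \rH_e[j] \subseteq \rDR{L}[X]\}$; this is Lemma~\ref{lem:mor_char_max_witness}.5 applied once the factorisation is in hand, combined with a short computation that $\rH_e[j] = \{v^{-1}L^r : v^r \in \at{e}(j)\} \subseteq \rDR{L}[u^{-1}L] = \{v^{-1}L^r : uv^r \in L\}$ iff $\at{e}(j) \subseteq u^{-1}L$ iff (atomicity) $acc_{\toJdfa{e}}(j) \subseteq u^{-1}L$.

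\textbf{Step 2: legitimacy.} Here I would use Lemma~\ref{lem:l_coverings}.5: $L(\rN_{\rH_e}) = L(\rN_{\rH_e^\flat})$, and $\rN_{\rH_e^\flat}$ is described explicitly by Lemma~\ref{lem:l_coverings}.4 in terms of the left sets $A$ of its bicliques, with transitions $\gamma_a[A_1]\subseteq A_2$, initial bicliques those with $L\in A$, final those with $\epsilon \in \bigcap A$. Using the characterisation of $\rLCov{L}{\rH_e}_-$ from Step~1, the left set attached to the biclique generated by $j$ is $A_j = \rLCov{L}{\rH_e}_-\spbreve[j] = \{u^{-1}L : acc_{\toJdfa{e}}(j)\subseteq u^{-1}L\}$, so $\bigcap A_j \supseteq$ (and in fact $=$, by meet-generation of $\jslLQ{L}$ by $\LW{L}$) the set $acc_{\toJdfa{e}}(j)$ intersected with... — more cleanly, $\epsilon \in \bigcap A_j \iff \epsilon \in$ every $u^{-1}L \supseteq acc_{\toJdfa{e}}(j) \iff \epsilon \in acc_{\toJdfa{e}}(j)$ (since $acc_{\toJdfa{e}}(j)$ is itself a left-quotient-generated language and $\epsilon$ membership is detected by some $u^{-1}L$). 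This is exactly the finality condition for $j$ in the lower nfa of $\Airr(\toJdfa{e})$. Similarly $L \in A_j \iff acc_{\toJdfa{e}}(j) \subseteq L$, the initiality condition, and $\gamma_a[A_{j_1}] \subseteq A_{j_2} \iff$ (via Step~1) $acc_{\toJdfa{e}}(j_2) \subseteq a^{-1} acc_{\toJdfa{e}}(j_1)$, the transition condition (cf.\ the locally-saturated description in Definition~\ref{def:airr_and_det}, Corollary~\ref{cor:dfa_nfa_lang_correspondence}). Hence $\rN_{\rH_e^\flat}$ is (isomorphic to) the lower nfa of $\Airr(\jslDfaSimple{\toJdfa{e}})$, equivalently $\simpleIrr{-}$ applied appropriately; by Note~\ref{note:det_airr_preserve_acceptance} / Lemma~\ref{lem:irr_simplifying_nfas}.1 it accepts $L(\toJdfa{e}) = L$, so $\rH_e$ is legitimate.

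\textbf{Main obstacle.} The delicate point is bookkeeping the atomic closure $\cl_L$ that appears in the definition of $\at{e}$ and hence $\rH_e$: I must consistently use that left word quotients $u^{-1}L$ are already atomic (lie in $\LP{L}$), so that conditions like $v^r \in \cl_L(acc_{\toJdfa{e}}(j))$ and $acc_{\toJdfa{e}}(j) \subseteq u^{-1}L$ interact correctly, and that $\epsilon$-membership and the $\rDR{L}$-relation are insensitive to the closure. A secondary subtlety is justifying that every $j \in J(\aT)$ with $acc_{\toJdfa{e}}(j)\subseteq u^{-1}L$ and $v^r \in \at{e}(j)$ actually arises — i.e.\ that the factorisation is through $\LW{L}$ and not something coarser — which relies on $\minDfa{L}$ sitting inside $\toJdfa{e}$ as a sub-dfa (Note~\ref{note:to_lext_jsdfa_pres_lang}.1) and on join-irreducibility arguments in $\aT$. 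Everything else is routine unwinding of Definitions~\ref{def:l_covering}--\ref{def:l_covering_constructs} and the cited lemmas.
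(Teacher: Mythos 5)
Your Step 1 is essentially the paper's own argument: you compute the maximum lower witness $\rLCov{L}{\rH_e}_-$ directly from Definition \ref{def:maximum_witnesses}, note that the atomic closure is harmless because each $u^{-1}L$ already lies in $\LP{L}$, and obtain the backward inclusion of the factorisation by writing $u^{-1}L$ as a union of languages $acc_{\toJdfa{e}}(j)$ with $j \in J(\aT)$. That half is correct.

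Step 2 contains a genuine gap. Writing $acc$ for $acc_{\toJdfa{e}}$, you assert $\gamma_a[A_{j_1}] \subseteq A_{j_2} \iff acc(j_2) \subseteq a^{-1}acc(j_1)$ and $\epsilon \in \bigcap A_j \iff \epsilon \in acc(j)$, and conclude that $\rN_{\rH_e^\flat}$ is isomorphic to the lower nfa of $\Airr(\jslDfaSimple{\toJdfa{e}})$. Only one direction of each equivalence holds. The condition $\gamma_a[A_{j_1}] \subseteq A_{j_2}$ unwinds to $\forall X \in \LW{L}.\,[\,acc(j_1) \subseteq X \To acc(j_2) \subseteq a^{-1}X\,]$, which is implied by $acc(j_2) \subseteq a^{-1}acc(j_1)$ but not conversely; likewise $\epsilon \in \bigcap\{X \in \LW{L} : acc(j) \subseteq X\}$ does not force $\epsilon \in acc(j)$, since $acc(j)$ need not equal the intersection of the left word quotients containing it (this is precisely the content of Lemma \ref{lem:trans_max_nfa_property} and Example \ref{ex:meet_max_meets_not_intersections}, and equality of the two nfas is exactly Corollary \ref{cor:trans_max_cover_nfa_canonical}, which requires transition-maximality of $e$). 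So in general $\rN_{\rH_e}$ is a \emph{proper extension} of the lower nfa $\rM$ of $\Airr(\toJdfa{e})$ and your isomorphism claim fails. The repair is what the paper does: the one-directional containments you do have show that $\rN_{\rH_e}$ simulates $\rM$ (same states, $I_\rM = I_{\rN_{\rH_e}}$, $F_\rM \subseteq F_{\rN_{\rH_e}}$, $\rM_a \subseteq \rN_{\rH_e,a}$), whence $L = L(\rM) \subseteq L(\rN_{\rH_e})$; the reverse inclusion $L(\rN_{\rH_e}) \subseteq L$ is supplied by Lemma \ref{lem:l_coverings}.6, which your write-up never invokes but which is indispensable for the upper bound.
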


\begin{proof}
  Denote the acceptance map $\alpha := \acc_{\toJdfa{e}}$ for brevity.
  Observe $\alpha(j) \subseteq u^{-1} L \iff \at{e}(j) \subseteq u^{-1} L$ because $u^{-1} L$ is atomic w.r.t.\ $L$. We first compute $\rLCov{L}{\rH_e}_-$ without knowing $\rH_e$ is an $L$-covering. Afterwards we'll verify the claimed equality.
  \[
    \begin{tabular}{lll}
      $\rLCov{L}{\rH_e}_- (u^{-1} L, j)$
      &
      $:\iff \rH_e[j] \subseteq \rDR{L}[u^{-1} L]$
      \\ &
      $\iff \forall v \in \Sigma^*.[ v^r \in \at{e}(j) \To uv^r \in L]$
      \\ &
      $\iff \forall v \in \Sigma^*. [ v \in \at{e}(j) \To v \in u^{-1} L ]$
      \\ &
      $\iff \at{e}(j) \subseteq u^{-1} L$
      \\ &
      $\iff \alpha(j) \subseteq u^{-1} L$
      & (see above).
      \\
      \\
      $\rLCov{L}{\rH_e}_- ; \rH_e(u^{-1}L, v^{-1} L^r)$
      &
      $\iff \exists j \in J(\aT).[
        \alpha(j) \subseteq u^{-1} L \,\land\,
        v^r \in \at{e}(j)
      ]$
      & (by def.)
      \\ &
      $\iff \exists j \in J(\aT).[
        \at{e}(j) \subseteq u^{-1} L \,\land\,
        v^r \in \at{e}(j)
      ]$
      & (see above)
      \\ &
      $\iff v^r \in u^{-1} L$
      & (A)
      \\ &
      $\iff \rDR{L}(u^{-1}L, v^{-1}L^r)$.
    \end{tabular}
  \]
  Concerning (A), $(\To)$ follows immediately. As for $(\oT)$, since $u^{-1} L = \bigcup \alpha[S]$ for some $S \subseteq J(\aT)$ we deduce $u^{-1} L = \bigcup \at{e}[S]$, so that $v^r \in u^{-1} L$ implies $v^r \in \at{e}(j) \subseteq u^{-1} L$ for some $j \in J(\aT)$. Then $\rH_e$ is an $L$-covering by Lemma \ref{lem:l_coverings}.1.
  
  It remains to establish the legitimacy of $\rH_e$. By Lemma \ref{lem:l_coverings}.6 we at least know $L(\rN_{\rH_e}) \subseteq L$, and it remains to prove the reverse inclusion. First let $\rM$ be the lower nfa of $\Airr(\toJdfa{e})$, which accepts $L$ by Note \ref{note:det_airr_preserve_acceptance}. These two nfas have the same states $J(\jslLangs{e})$; concerning their transitions:
  \[
    \begin{tabular}{lll}
      $\rM_a(\alpha(j_1), \alpha(j_2))$
      &
      $\iff \alpha(j_2) \subseteq a^{-1} \alpha(j_1)$
      & (by definition)
      \\ &
      $\implies \forall X \in \LW{L}.[ \alpha(j_1) \subseteq X \To \alpha(j_2) \subseteq a^{-1} X ]$
      \\ &
      $\iff \rN_{\rH_e, a}(\alpha(j_1), \alpha(j_2))$
      & (see $\rLCov{L}{\rH_e}_-$ above).
    \end{tabular}
  \]
  Moreover (a) $I_\rM = I_{\rN_{\rH_e}}$ since $\alpha(j) \subseteq L \iff \rLCov{L}{\rH_e}_-\spbreve (L, \alpha(j))$ and (b) $F_\rM \subseteq F_{\rN_{\rH_e}}$ because $\epsilon \in \alpha(j) \implies \epsilon \in \at{e}(j) \iff \rH_e(j, L^r)$. It follows that $\rN_{\rH_e}$ simulates $\rM$ i.e.\ $L \subseteq L(\rN_{\rH_e})$ and we are done.
\end{proof}

\begin{corollary}[Maximal legitimate $L$-coverings]
  $\rH_e^{\diamond\diamond}$ is a maximal legitimate $L$-covering.
\end{corollary}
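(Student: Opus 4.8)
The statement is essentially a bookkeeping combination of two facts already in hand: Theorem~\ref{thm:l_ext_induce_l_cover}, which tells us $\rH_e$ is a legitimate $L$-covering, and the closure/maximality calculus for the $(-)^\diamond$ operation recorded in Lemma~\ref{lem:l_coverings}.7. So the plan is to invoke these directly, being careful about which language ($L$ vs.\ $L^r$) each $\diamond$ is taken over.

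First I would recall from Theorem~\ref{thm:l_ext_induce_l_cover} that $\rH_e \subseteq J(\aT)\times\LW{L^r}$ is a legitimate $L$-covering, i.e.\ $\rN_{\rH_e}$ accepts $L$. For \emph{maximality} of $\rH_e^{\diamond\diamond}$, the key observation is that $\rH_e^\diamond = \rLCov{L}{\rH_e}_-\spbreve \subseteq J(\aT)\times\LW{L}$ is an $L^r$-covering (Lemma~\ref{lem:l_coverings}.7.a). Applying Lemma~\ref{lem:l_coverings}.7.a once more, now to the $L^r$-covering $\rH_e^\diamond$, its dual $(\rH_e^\diamond)^\diamond = \rH_e^{\diamond\diamond}$ is a \emph{maximal} $(L^r)^r = L$-covering. (Equivalently one can argue via Lemma~\ref{lem:l_coverings}.7.c: from Lemma~\ref{lem:l_coverings}.7.b applied twice one gets $\rLCov{L}{\rH_e^{\diamond\diamond}}_- = (\rH_e^\diamond)\spbreve = \rLCov{L}{(\rH_e^{\diamond\diamond})^{\diamond\diamond}}_-$ and $(\rH_e^{\diamond\diamond})^{\diamond\diamond} = \rH_e^{\diamond\diamond}$, whence maximality by~7.c.)

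For \emph{legitimacy} of $\rH_e^{\diamond\diamond}$, I would simply cite Lemma~\ref{lem:l_coverings}.7.e: since $\rH_e$ is legitimate and $\rLCov{L}{\rH_e}_- = \rLCov{L}{\rH_e^{\diamond\diamond}}_-$ (Lemma~\ref{lem:l_coverings}.7.b), the induced nfas coincide, $\rN_{\rH_e} = \rN_{\rH_e^{\diamond\diamond}}$, so $\rH_e^{\diamond\diamond}$ accepts $L$ as well. Combining the two paragraphs yields that $\rH_e^{\diamond\diamond}$ is a maximal legitimate $L$-covering.

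There is no real obstacle here: all the substantive work — that $\rH_e$ is legitimate, and that $(-)^{\diamond\diamond}$ produces a maximal covering extending the original while preserving legitimacy — is carried out in Theorem~\ref{thm:l_ext_induce_l_cover} and Lemma~\ref{lem:l_coverings}.7. The only thing to watch is the alternation of the ground language under successive applications of $\diamond$, so that one correctly identifies $\rH_e^{\diamond\diamond}$ as an $L$-covering (not an $L^r$-covering) before asserting maximality.
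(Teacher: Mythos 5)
Your proposal is correct and matches the paper's proof in substance: the paper likewise combines Theorem~\ref{thm:l_ext_induce_l_cover} (legitimacy of $\rH_e$) with Lemma~\ref{lem:l_coverings}.7 — citing 7.c for maximality (whose proof reduces to exactly your double application of 7.a/7.b) and 7.e for legitimacy of $\rH_e^{\diamond\diamond}$. Your extra care about the alternation $L \leftrightarrow L^r$ under successive duals is the right thing to watch and is handled correctly.
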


\begin{proof}
  By Theorem \ref{thm:l_ext_induce_l_cover} we know $\rH_e$ is a legitimate $L$-covering. Then $\rH_e^{\diamond\diamond}$ is a maximal $L$-covering by Lemma \ref{lem:l_coverings}.7.c and legitimate by Lemma \ref{lem:l_coverings}.7.e.
\end{proof}

\begin{corollary}
  \label{cor:trans_max_cover_nfa_canonical}
  If $e$ is simplified and transition-maximal then $\rN_{\rH_e}$ is the lower nfa of $\Airr(\toJdfa{e})$.
\end{corollary}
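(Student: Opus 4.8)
The plan is to unfold both constructions explicitly and check that they coincide state by state. Since $e$ is simplified, $\jslDfaSimple{e} = e$, so by Note~\ref{note:to_lext_jsdfa_pres_lang}.2 we may take $e$ to be the inclusion $\jslLQ{L} \hookto (\aT, \lambda X. a^{-1} X)$ with $\aT = (\jslLangs{\toJdfa{e}}, \cup, \emptyset)$; in particular $\toJdfa{e} = (L, \aT, \lambda X. a^{-1} X, F)$ and each state $X \in \aT$ satisfies $L(\toJdfa{e}_{@X}) = X$, so $acc_{\toJdfa{e}}$ is the identity on $\aT$ and $F = \{ X \in \aT : \epsilon \in X \}$. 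Both $\rN_{\rH_e}$ and the lower nfa $\rM$ of $\Airr(\toJdfa{e})$ have state set $J(\aT)$, so it suffices to match their initial states, final states and transitions.

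First I would read $\rM$ off Definition~\ref{def:airr_and_det}: its initial states are $J(\aT) \cap \down_\aT L = \{ j \in J(\aT) : j \subseteq L \}$, its transitions are $(\Pirr(\lambda X.a^{-1}X))_-$, i.e.\ $\rM_a(j_1, j_2) \iff j_2 \subseteq a^{-1} j_1$, and its final states are $J(\aT) \cap F = \{ j \in J(\aT) : \epsilon \in j \}$. Moreover $L(\rM) = L$ and $L(\rM_{@j}) = acc_{\toJdfa{e}}(j) = j$ by Corollary~\ref{cor:dfa_nfa_lang_correspondence} and Note~\ref{note:det_airr_preserve_acceptance}. Next I would read $\rN_{\rH_e}$ off Definition~\ref{def:l_covering_constructs}.2: by Theorem~\ref{thm:l_ext_induce_l_cover}, $\rLCov{L}{\rH_e}_-(u^{-1} L, j) \iff acc_{\toJdfa{e}}(j) \subseteq u^{-1} L \iff j \subseteq u^{-1} L$, whence $\rN_{\rH_e}$ has states $J(\aT)$, initial states $\rLCov{L}{\rH_e}_-[L] = \{ j : j \subseteq L \}$, transitions $\rN_{\rH_e,a}(j_1, j_2) \iff \forall X \in \LW{L}.\,(j_1 \subseteq X \To j_2 \subseteq a^{-1} X)$, and final states $\{ j : \epsilon \in \bigcap \{ X \in \LW{L} : j \subseteq X \} \}$.

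The state sets and initial states already agree verbatim. To identify the transitions and final states I would invoke Lemma~\ref{lem:trans_max_nfa_property} applied to $\rM$; this is legitimate because $\rM$ is transition-maximal (which is exactly the hypothesis that $e$ is transition-maximal) and $L(\rM) = L$, $L(\rM_{@j}) = j$. Its clause~(T) then rewrites $\rM_a(j_1, j_2)$ as $\forall X \in \LW{L}.\,(j_1 \subseteq X \To j_2 \subseteq a^{-1} X)$, matching $\rN_{\rH_e,a}$, and its clause~(F) rewrites $F_\rM$ as $\{ j : \forall X \in \LW{L}.\,(j \subseteq X \To \epsilon \in X) \} = \{ j : \epsilon \in \bigcap \{ X \in \LW{L} : j \subseteq X \} \}$, matching $F_{\rN_{\rH_e}}$. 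Hence $\rN_{\rH_e}$ and $\rM$ are literally the same nfa. (The identical computation already occurs inside the proof of Theorem~\ref{thm:meet_maximality}, so one could also just cite that.)

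The only delicate point is the first paragraph: one must be careful that ``$e$ simplified'' really does pin $\toJdfa{e}$ down in the concrete form used above — in particular that $acc_{\toJdfa{e}}$ acts as the identity on states (so that $L(\rM_{@j}) = j$ and Lemma~\ref{lem:trans_max_nfa_property} is applicable) and that its final states collapse to $\{ X : \epsilon \in X \}$, which holds because a dfa state $z$ is final precisely when $\epsilon \in L(\delta_{@z})$. Everything after that is routine unfolding of Definitions~\ref{def:airr_and_det} and~\ref{def:l_covering_constructs}.2.
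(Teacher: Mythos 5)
Your proof is correct. It differs from the paper's in mechanism, though both hinge on transition-maximality. The paper's proof is two sentences: the proof of Theorem~\ref{thm:l_ext_induce_l_cover} already shows that $\rN_{\rH_e}$ is an \emph{extension} of the lower nfa $\rM$ of $\Airr(\toJdfa{e})$ (same states, $I_\rM = I_{\rN_{\rH_e}}$, $F_\rM \subseteq F_{\rN_{\rH_e}}$, $\rM_a \subseteq \rN_{\rH_e,a}$, same accepted language), and transition-maximality of $\rM$ then forces the extension to be trivial. You instead compute both machines in closed form and match them componentwise, using Lemma~\ref{lem:trans_max_nfa_property} to rewrite $\rM$'s transitions and final states in exactly the shape of Definition~\ref{def:l_covering_constructs}.2. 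Your route only needs the \emph{statement} of Theorem~\ref{thm:l_ext_induce_l_cover} (for the formula for $\rLCov{L}{\rH_e}_-$) rather than a fact buried in its proof, and it makes the identification completely explicit; the cost is that you import Lemma~\ref{lem:trans_max_nfa_property}, whose proof is itself the nontrivial use of transition-maximality, whereas the paper uses only the bare definition (no proper extensions). Your care in the first paragraph about what ``simplified'' buys — that $acc_{\toJdfa{e}}$ is the identity, so $L(\rM_{@j}) = j$ and the lemma applies with $\LW{L(\rM)} = \LW{L}$ — is exactly the right point to flag, and your observation that the computation duplicates one inside the proof of Theorem~\ref{thm:meet_maximality} is accurate.
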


\begin{proof}
  Let $\rM$ be the lower nfa of $\Airr(\toJdfa{e})$. In the proof of Theorem \ref{thm:l_ext_induce_l_cover} we showed $\rN_{\rH_e}$ is an extension of $\rM$. Then by transition-maximality $\rM = \rN_{\rH_e}$.
\end{proof}


Each nfa canonically induces a legitimate $L$-covering -- a pattern which the Kameda-Weiner algorithm can recognise. Moreover every transition-maximal union-free nfa (see Theorem \ref{thm:char_irr_simple}) arises as an induced nfa.\footnote{However, induced nfas needn't be transition-maximal nor union-free.}

\begin{corollary}
  Fix any nfa $\rN$ accepting $L$.
  \begin{enumerate}
    \item 
    $\rH_{\toLext{\jslDfaSc{\rN}}}$ is a legitimate $L$-covering.
    \item
    If $\rN$ is transition-maximal and union-free then $\rN \cong \rN_{\rH_{\toLext{\jslDfaSc{\rN}}}}$.
  \end{enumerate}
\end{corollary}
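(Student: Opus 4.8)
The plan is to dispatch (1) immediately and to reduce (2) to Corollary~\ref{cor:trans_max_cover_nfa_canonical}. For (1), note that $\jslDfaSc{\rN}=\Det(\dep{\rN})$ accepts $L=L(\rN)$ by Definition~\ref{def:full_subset_construction} and Note~\ref{note:det_airr_preserve_acceptance}, so $e:=\toLext{\jslDfaSc{\rN}}$ is a well-defined $L$-extension by Definition~\ref{def:l_ext_jsl_dfa_correspondence}. Theorem~\ref{thm:l_ext_induce_l_cover} states that $\rH_e$ is a legitimate $L$-covering for \emph{any} $L$-extension $e$, which is exactly the assertion.

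For (2), assume $\rN$ is transition-maximal and union-free, and write $e:=\toLext{\jslDfaSc{\rN}}$, whose target join-semilattice is $\aT:=\jslLangs{\rN}=\jslLangs{\jslDfaSc{\rN}}$ with transitions $\lambda X.\,a^{-1}X$. The strategy is: show $e$ is simplified and transition-maximal; then Corollary~\ref{cor:trans_max_cover_nfa_canonical} identifies $\rN_{\rH_e}$ with the lower nfa $\rM$ of $\Airr(\toJdfa{e})$; then identify $\rM$ with $\simpleIrr{\rN}$; and finally invoke Corollary~\ref{cor:tmax_and_ufree_simple} to get $\rN\cong\simpleIrr{\rN}$, hence $\rN\cong\rN_{\rH_e}$. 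The machine $\toJdfa{e}$ has carrier $\aT$, initial state $L$, transitions $\lambda X.\,a^{-1}X$, and final states $\{\,K\in\aT:K\nsubseteq dr_L(L^r)\,\}$, while $\jslDfaSimple{\jslDfaSc{\rN}}$ has the same carrier, initial state and transitions but final states $\{\,K\in\aT:\epsilon\in K\,\}$, and the lower nfa of $\Airr$ of the latter is precisely $\simpleIrr{\rN}$ by Definitions~\ref{def:nfa_irr_simplification} and~\ref{def:airr_and_det}. So the whole of (2) reduces to the single claim: for every $K\in\aT$, $\epsilon\notin K \iff K\subseteq dr_L(L^r)$.

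The claim is where both hypotheses are used. Since $dr_L(L^r)=\bigcup\{\,X\in\LW{L}:\epsilon\notin X\,\}$ by Theorem~\ref{thm:dr_L}, the direction $(\Leftarrow)$ is immediate from $\epsilon\notin dr_L(L^r)$. For $(\Rightarrow)$, union-freeness forces each $L(\rN_{@z})$ to be join-irreducible in $\aT$, and since these languages join-generate $\aT$ they include the minimal join-generating set $J(\aT)$; thus every join-irreducible of $\aT$ has the form $L(\rN_{@z})$ for a single state $z$, and a general $K$ is the union of the irreducibles below it, so it suffices to handle $K=L(\rN_{@z})$ with $\epsilon\notin K$, i.e.\ $z\notin F_\rN$. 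Here I would invoke transition-maximality through Lemma~\ref{lem:trans_max_nfa_property}(F): $z\notin F_\rN$ yields some $X\in\LW{L}$ with $L(\rN_{@z})\subseteq X$ and $\epsilon\notin X$, so $K=L(\rN_{@z})\subseteq X\subseteq dr_L(L^r)$.

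Granting the claim, $\toJdfa{e}=\jslDfaSimple{\jslDfaSc{\rN}}$; this is simple (each state $K$ accepts the union of the irreducibles below it, which by the claim and Lemma~\ref{lem:jsl_dfa_joins} is $K$ itself, so distinct states accept distinct languages — Theorem~\ref{thm:jsl_reach_dual_simple}), hence $e$ is simplified, and its lower-nfa-of-$\Airr$ is $\rM=\simpleIrr{\rN}$, which is transition-maximal by Lemma~\ref{lem:irr_simple_preserves_trans_max} since $\rN$ is; so $e$ is transition-maximal. Corollary~\ref{cor:trans_max_cover_nfa_canonical} then gives $\rN_{\rH_e}=\rM=\simpleIrr{\rN}$, and $\simpleIrr{\rN}\cong\rN$ by Corollary~\ref{cor:tmax_and_ufree_simple}, completing (2). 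The main obstacle is exactly the claim $\epsilon\notin K\iff K\subseteq dr_L(L^r)$: it genuinely fails for arbitrary $\rN$ (a transition into an unreachable component can produce an $\epsilon$-free accepted language escaping $dr_L(L^r)$, so $\toLext{\jslDfaSc{\rN}}$ need not be simplified in general), and the argument must use both union-freeness — to reduce to single-state languages — and transition-maximality — via Lemma~\ref{lem:trans_max_nfa_property}(F), to trap each such language underneath an $\epsilon$-free left word quotient.
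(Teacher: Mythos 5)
Your proof is correct and takes essentially the same route as the paper: part (1) is exactly Theorem~\ref{thm:l_ext_induce_l_cover} applied to the $L$-extension $\toLext{\jslDfaSc{\rN}}$, and part (2) is discharged by Corollaries~\ref{cor:tmax_and_ufree_simple} and~\ref{cor:trans_max_cover_nfa_canonical}. Where you go beyond the paper's terser argument is in explicitly verifying that $\toJdfa{\toLext{\jslDfaSc{\rN}}}$ coincides with $\jslDfaSimple{\jslDfaSc{\rN}}$, i.e.\ that $\epsilon \nin K \iff K \subseteq dr_L(L^r)$ for all $K \in \jslLangs{\rN}$ --- a point the paper leaves implicit but which is needed before Corollary~\ref{cor:trans_max_cover_nfa_canonical} can be invoked, and which (as your $a^*$-style considerations suggest) genuinely fails for arbitrary nfas. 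Your derivation of it from Lemma~\ref{lem:trans_max_nfa_property}(F) is right; note only that transition-maximality alone suffices there, since every $K = L(\rN_{@S})$ with $\epsilon \nin K$ is a union of the languages $L(\rN_{@z})$ for non-final $z \in S$, so the reduction to single-state languages uses only that these union-generate $\jslLangs{\rN}$, not union-freeness (which is still needed, as in the paper, to get $\rN \cong \simpleIrr{\rN}$ at the end).
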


\begin{proof}
  \item
  \begin{enumerate}
    \item
    The full subset construction $\jslDfaSc{\rN} = \Det(\dep{\rN})$ accepts $L$ by Note \ref{note:det_airr_preserve_acceptance}, so the well-defined $L$-extension $\toLext{\jslDfaSc{\rN}}$ accepts $L$ by Note \ref{note:to_lext_jsdfa_pres_lang}. Then the claim follows by Theorem \ref{thm:l_ext_induce_l_cover}.


    \item
    Since $\rN$ is transition-maximal and union-free it is isomorphic to $\simpleIrr{\rN}$ by Corollary \ref{cor:tmax_and_ufree_simple}. Then $\rN$ is the isomorphic to the lower nfa of $\Airr(\jslDfaSimple{\jslDfaSc{\rN}})$, so the claim follows by Corollary \ref{cor:trans_max_cover_nfa_canonical}.
  \end{enumerate}
\end{proof}

\subsection{Atomic nfas and $L$-extensions}

Fixing any regular language $L$, there are finitely languages arising as a union of the atoms $J(\jslLP{L})$. Recall that these languages are called \emph{atomic} (Definition \ref{def:atomic_cl_langs}). Likewise there are \emph{positively atomic} languages (a subclass of the atomic ones) and also the \emph{subatomic} languages (a superclass of the atomic ones).

\begin{definition}[Atomic, positively atomic and subatomic nfas and $L$-extensions]
  Fix an nfa $\rN$ accepting $L$.
  \begin{enumerate}
    \item 
    $\rN$ is \emph{atomic} if each state accepts an atomic language (equiv.\ $\jslLangs{\rN} \subseteq \jslLP{L}$).
    \item
    $\rN$ is \emph{positively atomic} if each state accepts a positively atomic language (equiv.\ $\jslLangs{\rN} \subseteq \jslLD{L}$).
    \item
    $\rN$ is \emph{subatomic} if each individual state accepts a subatomic language (equiv.\ $\jslLangs{\rN} \subseteq \jslLRP{L}$).
  \end{enumerate}
  
  \noindent
  Finally, an $L$-extension $e$ is \emph{atomic} (resp.\ \emph{positively atomic}, \emph{subatomic}) if the lower nfa of $\Airr(\toJdfa{e})$ is atomic (resp.\ \emph{positively atomic}, \emph{subatomic}).
  \endbox
\end{definition}
  
\begin{lemma}[Atomic $L$-extensions]
  \label{lem:atomic_l_ext_char}
  The following statements concerning $L$-extensions are equivalent.
  \begin{enumerate}
    \item 
    $e$ is atomic.
    \item
    $\jslLangs{\toJdfa{e}} \subseteq \jslLP{L}$.
    \item
    $\jslDfaSimple{\toJdfa{e}}$ is a sub $\JSL$-dfa of $\jslDfaBoolMin{L}$.
    \item 
    $\at{e}$ defines a surjective $\JSL$-dfa morphism to a sub $\JSL$-dfa of $\jslDfaBoolMin{L}$.
  \end{enumerate}
\end{lemma}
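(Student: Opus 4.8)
The plan is to prove the four statements equivalent by a cycle of implications, leveraging the fact that $\jslDfaBoolMin{L}$ has carrier $\jslLP{L}$ and that $\toLext{-}$ and $\at{-}$ are already known to be well-behaved. First I would unfold the definitions: $e$ being atomic means the lower nfa $\rN$ of $\Airr(\toJdfa{e})$ is atomic, i.e.\ $\jslLangs{\rN} \subseteq \jslLP{L}$. But by Corollary \ref{cor:dfa_nfa_lang_correspondence} the individual states of $\rN$ accept exactly the values $acc_{\toJdfa{e}}(j)$ for $j \in J(\aT)$, and these join-generate $\jslLangs{\toJdfa{e}}$; since $\jslLP{L}$ is closed under unions (Lemma \ref{lem:cl_L_basics} and the following Note), $\jslLangs{\rN} \subseteq \jslLP{L}$ is equivalent to $\jslLangs{\toJdfa{e}} \subseteq \jslLP{L}$. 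This gives $(1 \iff 2)$ almost immediately.

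For $(2 \iff 3)$ I would use Lemma \ref{lem:jsl_reach_simple_constructions}: $\jslDfaSimple{\toJdfa{e}} = (L, \jslLangs{\toJdfa{e}}, \lambda X. a^{-1} X, \{ X \in langs(\toJdfa{e}) : \epsilon \in X \})$. If $\jslLangs{\toJdfa{e}} \subseteq \jslLP{L}$ then the inclusion of underlying sets is a join-semilattice embedding (union is computed the same way), it commutes with $\lambda X. a^{-1}X$ since that is defined identically in both machines, it preserves the initial state $L$, and it preserves/reflects final states because $\epsilon \in X$ is an intrinsic condition; hence $\jslDfaSimple{\toJdfa{e}}$ is a sub $\JSL$-dfa of $\jslDfaBoolMin{L}$. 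Conversely a sub $\JSL$-dfa of $\jslDfaBoolMin{L}$ has carrier contained in $\jslLP{L}$ by definition, giving (2). For $(3 \iff 4)$, recall the atomizer is $\at{e} = \lambda Y.\cl_L(acc_{\toJdfa{e}}(Y)) : \aT \epito \jslAtz{e}$, and that $\at{e}$ factors as $acc_{\toJdfa{e}} : \aT \epito \jslLangs{\toJdfa{e}}$ followed by $\cl_L$ restricted to $\jslLangs{\toJdfa{e}}$. When $e$ is atomic, every language in $\jslLangs{\toJdfa{e}}$ is already a fixpoint of $\cl_L$, so $\cl_L$ restricts to the identity and $\at{e}$ is (up to this identification) just $acc_{\toJdfa{e}}$; the surjective $\JSL$-dfa morphism $acc_{\toJdfa{e}} : \toJdfa{e} \epito \jslDfaSimple{\toJdfa{e}}$ then lands in a sub $\JSL$-dfa of $\jslDfaBoolMin{L}$ by (3), so $\at{e}$ does too. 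Conversely if $\at{e}$ is a surjective $\JSL$-dfa morphism onto a sub $\JSL$-dfa of $\jslDfaBoolMin{L}$, its image has carrier in $\jslLP{L}$, and since $acc_{\toJdfa{e}}(Y) \subseteq \cl_L(acc_{\toJdfa{e}}(Y)) = \at{e}(Y) \in \jslLP{L}$ while also $acc_{\toJdfa{e}}(Y)$ is the $\JSL$-dfa-acceptance image (one has to check $\cl_L$ doesn't enlarge the language when it is already atomic, and that acceptance is invariant), one recovers $\jslLangs{\toJdfa{e}} \subseteq \jslLP{L}$.

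The main obstacle I anticipate is the bookkeeping in $(3 \iff 4)$ — specifically, making precise in what sense ``$\at{e}$ defines a $\JSL$-dfa morphism'' when $\at{e}$ as literally defined is only a join-semilattice morphism $\aT \to \jslLP{L}$, not a priori a dfa morphism, and reconciling it with $acc_{\toJdfa{e}}$. The cleanest route is to observe that $\at{e}$ and $acc_{\toJdfa{e}}$ have the same kernel precisely when $e$ is atomic (because $\cl_L$ is injective on atomic languages, being the identity there), so modulo this identification the dfa structure transports along; I would state this carefully and cite Lemma \ref{lem:cl_L_basics}.2 (smallest left predicate containing $X$, hence $\cl_L(X) = X$ iff $X$ atomic) and the well-definedness of the atomizer lemma. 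Everything else is routine verification that the relevant maps preserve initial states, transitions, and final states, which I would dispatch in a few lines each rather than in full detail.
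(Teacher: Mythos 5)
Your proposal is correct and follows essentially the same route as the paper: Corollary \ref{cor:dfa_nfa_lang_correspondence} for $(1)\iff(2)$, the identical transition structure of the two machines for $(2)\iff(3)$, and the observation that $\at{e}$ coincides with $acc_{\toJdfa{e}}$ when the accepted languages are $\cl_L$-fixpoints for the equivalence with $(4)$. The only difference is cosmetic — you close the cycle via $(3)\iff(4)$ where the paper argues $(2)\Rightarrow(4)$ and $(4)\Rightarrow(2)$ directly — and the subtlety you flag about identifying $\at{e}$ with $acc_{\toJdfa{e}}$ is resolved exactly as you suggest.
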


\begin{proof}
  \item
  \begin{itemize}
    \item[--]
    $(1) \iff (2)$: By Corollary \ref{cor:dfa_nfa_lang_correspondence} the languages accepted by the lower nfa (varying over subsets) are precisely those accepted by $\toJdfa{e}$ (varying over individual states).

    \item[--]
    $(2) \iff (3)$: Follows because the transition structure of the two $\JSL$-dfas is defined in the same way.
    
    \item[--]
    $(2) \implies (4)$: We know each state of $\toJdfa{e}$ accepts an atomic language, so $\at{e}$ acts in the same way as the $\JSL$-dfa morphism $acc_{\toJdfa{e}}$. Then $\at{e}$ defines a $\JSL$-dfa morphism to a sub $\JSL$-dfa of $\jslDfaBoolMin{L}$.

    \item[--]
    $(4) \implies (2)$: The dfa morphism informs us that each state accepts an atomic language.
  \end{itemize}
\end{proof}

\begin{definition}[Pseudo-atomicity]
  An $L$-extension $e : \jslLQ{L} \monoto (\aT, \delta_a)$ is \emph{pseudo-atomic} if the kernel of the atomizer $ker(\at{e}) \subseteq T \times T$ is closed under $\lambda (x, y). (\delta_a(x), \delta_a(y))$ for each $a \in \Sigma$.
  \endbox
\end{definition}

Then $e$ is pseudo-atomic if the join-semilattice congruence $ker(\at{e}) \subseteq T \times T$ is also a congruence for each unary operation $\delta_a : T \to T$. We're going to show that atomicity and pseudo-atomicity are equivalent concepts.

\begin{lemma}[Pseudo-atomic $L$-extensions]
  \label{lem:pseudo_atomic_basics}
  \item
  \begin{enumerate}
    \item 
    Every atomic $L$-extension is pseudo-atomic.
    \item
    $\jslDfaSimple{-}$ preserves pseudo-atomicity.
    \item
    $e : \jslLQ{L} \monoto (\aT, \delta_a)$ is pseudo-atomic iff the atomized semilattice admits the $L$-extension structure:
    \[
      \iota_e : \jslLQ{L} \hookto (\jslAtz{e}, \phi_a)
      \qquad
      \phi_a(\at{e}(t)) := \at{e}(\delta_a(t)).
    \]
  \end{enumerate}
\end{lemma}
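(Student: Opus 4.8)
The plan is to reduce all three parts to the single factorisation $\at{e}=\cl_L\circ\operatorname{acc}_{\toJdfa{e}}$, where $\operatorname{acc}_{\toJdfa{e}}:\toJdfa{e}\epito\jslDfaSimple{\toJdfa{e}}$ is the acceptance map — a surjective $\JSL$-dfa morphism by Definition \ref{def:jsl_reach_simple} and Lemma \ref{lem:jsl_reach_simple_constructions} — whose codomain carries the transitions $\lambda X.a^{-1}X$. Two facts drive everything. First, since $\operatorname{acc}_{\toJdfa{e}}$ is a $\JSL$-dfa morphism it satisfies $\operatorname{acc}_{\toJdfa{e}}\circ\delta_a=(\lambda X.a^{-1}X)\circ\operatorname{acc}_{\toJdfa{e}}$ (Definition \ref{def:dfas}.4), so $\kernel{\operatorname{acc}_{\toJdfa{e}}}$ is a join-semilattice congruence closed under each $\delta_a$. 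Second, $\cl_L$ fixes every language in $\jslLP{L}$ (Lemma \ref{lem:cl_L_basics}.2) and preserves finite unions (Lemma \ref{lem:cl_L_basics}.4).

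For (1): if $e$ is atomic then $\jslLangs{\toJdfa{e}}\subseteq\jslLP{L}$ by Lemma \ref{lem:atomic_l_ext_char}, so $\cl_L$ restricted to $\jslLangs{\toJdfa{e}}$ is the identity and in particular injective; hence $\kernel{\at{e}}=\kernel{\operatorname{acc}_{\toJdfa{e}}}$, which is closed under each $\delta_a$ by the first fact above. That is precisely pseudo-atomicity. For (2): writing $\delta:=\toJdfa{e}$, by Definition \ref{def:l_ext_jsl_dfa_correspondence} $\jslDfaSimple{e}=\toLext{\delta}$, and one checks via Lemma \ref{lem:jsl_reach_simple_constructions}.3 that the machine $\toJdfa{\toLext{\delta}}$ has acceptance map the identity on $\jslLangs{\delta}$, so $\at{\jslDfaSimple{e}}=\cl_L|_{\jslLangs{\delta}}$ and $\at{e}=\at{\jslDfaSimple{e}}\circ\operatorname{acc}_\delta$. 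Given $X,Y\in\jslLangs{\delta}$ with $\cl_L(X)=\cl_L(Y)$, lift along the surjection $\operatorname{acc}_\delta$ to $t,s\in\aT$; then $\at{e}(t)=\at{e}(s)$, pseudo-atomicity of $e$ gives $\at{e}(\delta_a t)=\at{e}(\delta_a s)$, and pushing the transition identity $\operatorname{acc}_\delta(\delta_a t)=a^{-1}X$ through yields $\cl_L(a^{-1}X)=\cl_L(a^{-1}Y)$, i.e.\ $\kernel{\at{\jslDfaSimple{e}}}$ is closed under $\lambda X.a^{-1}X$.

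For (3), the forward direction constructs the claimed $L$-extension. Well-definedness of $\phi_a$ is literally the statement that $\kernel{\at{e}}$ is closed under $\lambda(x,y).(\delta_a x,\delta_a y)$; $\phi_a$ is then a join-semilattice endomorphism of $\jslAtz{e}$ because $\at{e}$ is a surjective $\JSL_f$-morphism and each $\delta_a$ preserves joins; and $\iota_e$ is a genuine inclusion $\jslLQ{L}\hookto\jslAtz{e}$ because $\LQ{L}\subseteq\jslLangs{\toJdfa{e}}$ (the latter is union-closed by Lemma \ref{lem:jsl_dfa_joins} and contains $\LW{L}$ via the reachable part) while every left quotient lies in $\jslLP{L}$ and so is fixed by $\cl_L$, whence $\LQ{L}=\cl_L[\LQ{L}]\subseteq\jslAtz{e}$. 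The intertwining $\iota_e\circ\gamma_a=\phi_a\circ\iota_e$ for $\gamma_a:=\lambda X.a^{-1}X$ then follows: for $X\in\LQ{L}$ pick $t$ with $\operatorname{acc}_{\toJdfa{e}}(t)=X$, so $\at{e}(t)=\cl_L(X)=X$ and $\phi_a(X)=\at{e}(\delta_a t)=\cl_L(a^{-1}X)=a^{-1}X=\gamma_a(X)$, using that $a^{-1}X\in\LQ{L}$ is atomic. Conversely, if the displayed $\iota_e$ is an $L$-extension then in particular $\phi_a$ is a well-defined function, which is exactly pseudo-atomicity.

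The only mildly delicate point is the bookkeeping in (3): verifying $\LQ{L}\subseteq\jslAtz{e}$ so that $\iota_e$ is literally an inclusion morphism rather than merely a map, and isolating cleanly that the content of "$\jslAtz{e}$ admits the $L$-extension structure $\iota_e$" which is not automatic is precisely well-definedness of $\phi_a$. Everything else is a routine unwinding of the factorisation of $\at{e}$ through the acceptance map together with the two facts about $\cl_L$ recorded at the outset.
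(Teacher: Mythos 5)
Your proposal is correct and follows essentially the same route as the paper's proof: part (1) via the fact that atomicity makes $\cl_L$ act injectively (indeed identically) on $\jslLangs{\toJdfa{e}}$ so that $\kernel{\at{e}}$ coincides with the kernel of the acceptance map, part (2) via the factorisation $\at{e}=\cl_L\circ acc_{\toJdfa{e}}$ and lifting along the surjection, and part (3) via the same calculation $\phi_a(Y)=\at{e}(\delta_a t_Y)=\cl_L(a^{-1}Y)=a^{-1}Y$ using that left quotients are atomic. The only difference is that you make explicit two points the paper leaves implicit — that $\phi_a$ is a join-preserving endomorphism and that $\LQ{L}\subseteq\jslAtz{e}$ so $\iota_e$ is genuinely an inclusion — which is a minor improvement rather than a different argument.
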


\begin{proof}
  \item
  \begin{enumerate}
    \item 
    By Lemma \ref{lem:atomic_l_ext_char}.4 the dfa morphism $\at{e}$ satisfies $\at{e}(\delta_a(t)) = \delta_a(\at{e}(t))$ for each $a \in \Sigma$. The latter implies $e$ is pseudo-atomic.

    \item
    Let $\delta := \toJdfa{e}$ and $e_0 := \jslDfaSimple{e}$ recalling Definition \ref{def:l_ext_simp_reach_tmax}. Recalling Lemma \ref{lem:jsl_reach_simple_constructions}.3, $\at{e_0}(acc_\delta(t)) = \cl_L(acc_\delta(t)) = \at{e}(t)$ for every $t \in T$. Then given any $Y_i = acc_\delta(t_i)$,
    \[
    \begin{tabular}{lll}
      $\at{e_0}(Y_1) = \at{e_0}(Y_2)$
      &
      $\implies \at{e}(t_1) = \at{e}(t_2)$
      & (see above)
      \\&
      $\implies \at{e}(\delta_a(t_1)) = \at{e}(\delta_a(t_2))$
      & (by assumption)
      \\&
      $\iff \cl_L(acc_\delta(\delta_a(t_1))) = \cl_L(acc_\delta(\delta_a(t_2)))$
      & (by definition)
      \\&
      $\iff \cl_L(a^{-1}(acc_\delta(t_1))) = \cl_L(a^{-1}(acc_\delta(t_2)))$
      & ($acc_\delta$ a $\JSL$-dfa morphism)
      \\&
      $\iff \cl_L(a^{-1} Y_1) = \cl_L(a^{-1} Y_2)$
      \\&
      $\iff \at{e_0}(a^{-1} Y_1) = \at{e_0}(\cl_L(a^{-1} Y_2))$.
    \end{tabular}
    \]
    Hence the simpification of $e$ is also pseudo-atomic.

    \item
    If $\iota_e$ is a well-defined $L$-extension then whenever $\at{e}(t_1) = \at{e}(t_2)$ we deduce $\at{e}(\delta_a(t_1)) = \phi_a(\at{e}(t_1)) = \phi_a(\at{e}(t_2)) = \at{e}(\delta_a(t_2))$, so that $e$ is pseudo-atomic. Conversely, $\at{e}$ is stable under each $\delta_a$ so the endomorphisms $\phi_a : \jslAtz{e} \to \jslAtz{e}$ are well-defined.
    Since $\delta := \toJdfa{e}$ accepts $L$, by varying the initial state it accepts every $Y \in \LQ{L} \subseteq \LP{L}$, so the inclusion $\iota_e : \jslLQ{L} \hookto \jslAtz{e}$ is a well-defined join-semilattice morphism. Observe that each $Y \in \LQ{L}$ has some $t_Y \in T$ with $\at{e}(t_Y) = Y$. Then the calculation:
    \[
      \begin{tabular}{lll}
        $\phi_a(\iota_e (Y))$
        &
        $= \phi_a(Y)$
        \\ &
        $= \phi_a(\at{e}(t_Y))$
        \\ &
        $= \at{e}(\delta_a(t_Y))$
        & (def.\ of $\phi_a$)
        \\ &
        $= \cl_L(acc_{\toJdfa{e}}(\delta_a(t_Y)))$
        & (def.\ of $\at{e}$)
        \\ &
        $= \cl_L(a^{-1}(acc_{\toJdfa{e}}(t_Y)))$
        & ($acc_\delta$ a $\JSL$-dfa morphism)
        \\ &
        $= a^{-1} Y$
        & ($\LP{L}$ closed under $a^{-1}(-)$)
        \\ &
        $= \iota_e(a^{-1} Y)$
      \end{tabular}
    \]
    establishes that $\iota_e$ is an $L$-extension.
  \end{enumerate}
\end{proof}

\begin{theorem}
  An $L$-extension is atomic iff it is pseudo-atomic.
\end{theorem}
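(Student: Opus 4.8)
The plan is to prove the two directions separately, with the forward direction being essentially Lemma \ref{lem:pseudo_atomic_basics}.1 and the reverse direction requiring the real work. For ($\Rightarrow$), atomic implies pseudo-atomic is exactly Lemma \ref{lem:pseudo_atomic_basics}.1, so nothing new is needed.

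For ($\Leftarrow$), suppose $e : \jslLQ{L} \monoto (\aT, \delta_a)$ is pseudo-atomic. First I would reduce to the simplified case: by Lemma \ref{lem:pseudo_atomic_basics}.2, $\jslDfaSimple{e}$ is also pseudo-atomic, and atomicity of $e$ is equivalent to atomicity of $\jslDfaSimple{e}$ since both are defined via $\jslLangs{\toJdfa{e}} = \jslLangs{\toJdfa{\jslDfaSimple{e}}}$ (recall $\toJdfa{-}$ and $\toLext{-}$ and that $\jslLangs{\delta}$ is invariant under simplification). So assume $e$ is simplified, i.e.\ an inclusion $\iota : \jslLQ{L} \hookto (\aT, \lambda X.a^{-1}X)$ where $\aT = (T, \cup, \emptyset)$ with each state accepting itself (Lemma \ref{lem:jsl_reach_simple_constructions}.3). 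In this situation the atomizer $\at{e}$ is a domain/codomain restriction of $\cl_L$. By Lemma \ref{lem:pseudo_atomic_basics}.3, pseudo-atomicity gives a well-defined $L$-extension $\iota_e : \jslLQ{L} \hookto (\jslAtz{e}, \phi_a)$ where $\phi_a(\cl_L(t)) := \cl_L(a^{-1}t)$, and $\jslAtz{e} \subseteq \jslLP{L}$ is a sub join-semilattice.

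The key step is then to show that $\phi_a$ agrees with $\lambda X. a^{-1}X$ on $\jslAtz{e}$, so that $\jslDfaSimple{\toJdfa{e}}$ is literally a sub $\JSL$-dfa of $\jslDfaBoolMin{L}$ and atomicity follows from Lemma \ref{lem:atomic_l_ext_char}. The point is that for $t \in T$, the language $\cl_L(t) \in \jslLP{L}$ is atomic, and $\jslLP{L}$ is closed under left-letter-quotients, so $a^{-1}(\cl_L(t)) \in \jslLP{L}$; hence $\cl_L(a^{-1}(\cl_L(t))) = a^{-1}(\cl_L(t))$. On the other hand $\cl_L$ is monotone and $t \subseteq \cl_L(t)$ gives $a^{-1}t \subseteq a^{-1}(\cl_L(t))$, whence $\cl_L(a^{-1}t) \subseteq \cl_L(a^{-1}(\cl_L(t))) = a^{-1}(\cl_L(t))$; and the reverse inclusion uses Lemma \ref{lem:cl_L_basics}.5 which states precisely $\cl_L(a^{-1}t) \supseteq$ ... wait, that lemma gives $\cl_L(w^{-1}X) \subseteq w^{-1}\cl_L(X)$, so I would instead argue: $a^{-1}(\cl_L(t)) = a^{-1}(\cl_L(\cl_L(t)))$ and apply Lemma \ref{lem:cl_L_basics}.5 with $X := \cl_L(t)$, giving $\cl_L(a^{-1}\cl_L(t)) \subseteq a^{-1}\cl_L(\cl_L(t)) = a^{-1}\cl_L(t)$; since $a^{-1}\cl_L(t)$ is already atomic it equals $\cl_L(a^{-1}\cl_L(t))$, and $\cl_L$ is monotone with $a^{-1}t \subseteq a^{-1}\cl_L(t)$, so $\cl_L(a^{-1}t) \subseteq a^{-1}\cl_L(t)$, while conversely Lemma \ref{lem:cl_L_basics} idempotence plus $a^{-1}t \subseteq \cl_L(a^{-1}t)$ closed under $\cl_L$... the cleanest route is: $\phi_a(\cl_L(t)) = \cl_L(a^{-1}t)$ and I claim $\cl_L(a^{-1}t) = a^{-1}\cl_L(t)$ whenever $t$ is itself atomic, which will be the case for every element of $\jslAtz{e}$; then $\phi_a$ restricted to $\jslAtz{e}$ is $\lambda X.a^{-1}X$.

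The main obstacle is this identity $\cl_L(a^{-1}X) = a^{-1}\cl_L(X)$ for atomic $X$: Lemma \ref{lem:cl_L_basics}.5 only gives $\subseteq$, and I need the reverse for atomic $X$. For atomic $X = \bigcup\{\alpha \in J(\jslLP{L}) : \alpha \subseteq X\}$, one has $a^{-1}X = \bigcup a^{-1}\alpha$, and each $a^{-1}\alpha$ is a left predicate (since $\jslLP{L}$ is closed under $a^{-1}(-)$, being a boolean closure of the $\LW{L}$), so $a^{-1}X$ is a union of left predicates, hence itself a left predicate, hence $\cl_L(a^{-1}X) = a^{-1}X \subseteq a^{-1}\cl_L(X) = a^{-1}X$, giving equality trivially. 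Thus once I observe every element of $\jslAtz{e}$ is atomic (which is immediate since $\jslAtz{e} \subseteq \jslLP{L}$ and the atomic languages are exactly $\LP{L}$ by Definition \ref{def:atomic_cl_langs}.2), the identity $\phi_a = \lambda X.a^{-1}X$ on $\jslAtz{e}$ is forced, $\jslDfaSimple{\toJdfa{e}} = \jslDfaSimple{\toJdfa{\iota_e}}$ is a sub $\JSL$-dfa of $\jslDfaBoolMin{L}$, and atomicity follows by Lemma \ref{lem:atomic_l_ext_char}.3. So the whole difficulty dissolves once the reduction to the simplified, atomic-valued case is in place.
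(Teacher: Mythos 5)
Your forward direction and the reduction to the simplified case are fine and match the paper. The gap is in the key step of the backward direction: the identification of $\phi_a$ with $\lambda X.\,a^{-1}X$ on $\jslAtz{e}$. By Lemma \ref{lem:pseudo_atomic_basics}.3 the value $\phi_a(Z)$ for $Z\in\Atz{e}$ is computed from a \emph{representative} $t\in T$ with $\at{e}(t)=Z$, namely $\phi_a(Z)=\at{e}(\delta_a(t))=\cl_L(a^{-1}t)$ in the simplified case. Your observation that $\cl_L(a^{-1}X)=a^{-1}X$ for atomic $X$ is correct, but it applies to the value $Z=\cl_L(t)$, not to the representative $t$, which need not be atomic. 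To run your computation you would have to take $t:=Z$ itself, i.e.\ you would need $\cl_L(t)\in T$, and nothing in the hypotheses provides that. What you actually get is only the one-sided inclusion $\phi_a(Z)=\cl_L(a^{-1}t)\subseteq a^{-1}\cl_L(t)=a^{-1}Z$ from Lemma \ref{lem:cl_L_basics}.5; the reverse inclusion is exactly where pseudo-atomicity would have to do real work, and your argument (like the paper's own proof, which makes the same leap with the same justification) never uses pseudo-atomicity beyond the well-definedness of $\phi_a$.

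The gap is not cosmetic. Take $\Sigma=\{a\}$ and $L=\{a\}$, so $\LW{L}=\{\{a\},\{\epsilon\},\emptyset\}$ and the atoms of $\jslLP{L}$ are $\{\epsilon\}$, $\{a\}$ and $\{a^n:n\geq 2\}$. Let $T$ be the union closure of $\{\{\epsilon\},\{a\},\{\epsilon,aa\}\}$; it contains $\LQ{L}$ and is closed under $a^{-1}(-)$, so the inclusion $e:\jslLQ{L}\hookto(\aT,\lambda X.a^{-1}X)$ is a simplified $L$-extension, and each state of $\toJdfa{e}$ accepts itself. Hence $\at{e}=\cl_L|_T$, which sends the six elements of $T$ to six \emph{distinct} left predicates, so $\ker(\at{e})$ is the diagonal and $e$ is (vacuously) pseudo-atomic; yet $\{\epsilon,aa\}$ is not atomic, so $e$ is not atomic. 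Concretely, $\cl_L(a^{-1}\{\epsilon,aa\})=\{a\}\subsetneq\{a^n:n\geq1\}=a^{-1}\cl_L(\{\epsilon,aa\})$, which is precisely the inclusion your argument needs to be an equality. So the step cannot be repaired as written: either the statement needs a further hypothesis (something forcing the closures $\cl_L(t)$ back into $T$, or excluding injective-but-non-identity atomizers), or pseudo-atomicity has to be defined more strongly than "the kernel of $\at{e}$ is a congruence".
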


\begin{proof}
  If an $L$-extension is atomic it is pseudo-atomic by Lemma \ref{lem:pseudo_atomic_basics}. Conversely given a pseudo-atomic $L$-extension $e : \jslLQ{L} \monoto (\aT, \delta_a)$ then $e_0 := \jslDfaSimple{e}$ is pseudo-atomic by Lemma \ref{lem:pseudo_atomic_basics}.2. By Lemma \ref{lem:pseudo_atomic_basics}.3 we have the $L$-extension $\iota_{e_0} : \jslLQ{L} \hookto (\jslAtz{e}, \phi_a)$ where $\phi_a := \lambda X. a^{-1} X : \jslAtz{e} \to \jslAtz{e}$, since $\cl_L(a^{-1} X) = a^{-1} X$. Then $e_0$ is simplified and each state $X$ accepts $X \in \Atz{e}$, so $e$ is atomic.
\end{proof}

Tamm and Brzozowski proved an nfa $\rN$ is atomic iff $\rev{\rN}$'s reachable subset construction is state-minimal \cite{TheoryOfAtomataBrzTamm2014}. We reprove their result using our terminology and then:
\begin{itemize}
  \item[--]
  refine their result i.e.\ $\rN$ is positively atomic (see Definition \ref{def:atomic_cl_langs}.3) iff the dfa isomorphism is also an order isomorphism.
  
  \item[--]
  generalise their result i.e.\ $\rN$ is subatomic (see Definition \ref{def:atomic_cl_langs}.4) iff $\rsc{\rev{\rN}}$'s transition monoid is suitably isomorphic to $L^r$'s syntactic monoid. 

\end{itemize}

\begin{theorem}[Atomicity and $\rsc{\rev{\rN}}$]
  Let $\rN$ be an nfa accepting $L$.
  \begin{enumerate}
    \item 
    $\rN$ is atomic iff $\rsc{\rev{\rN}} \cong \minDfa{L^r}$ \cite{TheoryOfAtomataBrzTamm2014}.
    \item 
    $\rN$ is positively atomic iff the dfa isomorphism from (1) is also an order isomorphism w.r.t.\ inclusion.
    \item
    $\rN$ is subatomic iff $\rsc{\rev{\rN}}$'s transition monoid is isomorphic to $L^r$'s syntactic monoid via:
    \[
      \lambda \sem{w}_{\TM{\rsc{\rev{\rN}}}}. \sem{w}_{\SynCong{L^r}}
      : \TM{\rsc{\rev{\rN}}} \to \SynMon{L^r}.
    \]
  \end{enumerate}
\end{theorem}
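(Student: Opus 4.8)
The plan is to treat the three items uniformly. Each asserts that a \emph{canonical surjection} from an object attached to $\rN$ onto the corresponding canonical object for $L^r$ is in fact an isomorphism, so the content is to identify when the kernel of that surjection collapses, and this will turn out to be exactly the relevant flavour of atomicity. Concretely: $\rsc{\rev{\rN}}$ is reachable by construction and accepts $L^r$ (since $\rev{\rN}$ does), so the acceptance map $\alpha := acc_{\rsc{\rev{\rN}}} : \rsc{\rev{\rN}} \to \minDfa{L^r}$ is a well-defined surjective dfa morphism, and $\rsc{\rev{\rN}} \cong \minDfa{L^r}$ iff $\alpha$ is injective. For (3), $\sem{-}_{\TM{\rsc{\rev{\rN}}}} : \FMon{\Sigma} \epito \TM{\rsc{\rev{\rN}}}$ is a surjective monoid morphism; since $(\rsc{\rev{\rN}})_u = (\rsc{\rev{\rN}})_v$ forces the $xu$- and $xv$-reached states of $\rsc{\rev{\rN}}$ to coincide for every $x$, hence $(xu)^{-1}L^r = (xv)^{-1}L^r$ for every $x$, i.e.\ $\SynCong{L^r}(u,v)$, the kernel of $\sem{-}_{\TM{\rsc{\rev{\rN}}}}$ lies inside $\SynCong{L^r}$ and so factors through $\SynMon{L^r}$ precisely as the stated map $\sem{w}_{\TM{\rsc{\rev{\rN}}}} \mapsto \sem{w}_{\SynCong{L^r}}$, which is an isomorphism iff it is injective. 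For (2) I would additionally observe that $\alpha$ is always monotone for inclusion (a subset-construction state accepts the union of the languages of its elements), so once $\alpha$ is bijective it is an order-isomorphism iff it is order-reflecting.

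Next I would turn the three (in)jectivity conditions into refinements of word (pre)orders. The state of $\rsc{\rev{\rN}}$ reached by $u$ is $S_u := \{\, z \in Z : u^r \in L(\rN_{@z}) \,\}$ and accepts $u^{-1}L^r$; and as $w$ ranges over $\Sigma^*$, $(w^r)^{-1}L$ ranges over all of $\LW{L}$, which yields $u^{-1}L^r = v^{-1}L^r \iff \rE_L(u^r,v^r)$ and $u^{-1}L^r \subseteq v^{-1}L^r \iff \forall X \in \LW{L}.\,(u^r \in X \To v^r \in X)$. Writing $\sim_\rN$ for the equivalence $u \sim_\rN v :\iff \forall z.\,(u \in L(\rN_{@z}) \iff v \in L(\rN_{@z}))$, $\preceq_\rN$ for its one-sided version, and $\preceq_L$ for $u \preceq_L v :\iff \forall X \in \LW{L}.\,(u \in X \To v \in X)$ (so $\rE_L$ is the symmetric part of $\preceq_L$), the reductions become: $\alpha$ injective $\iff \rE_L \subseteq \sim_\rN$; $\alpha$ order-reflecting $\iff \preceq_L \subseteq \preceq_\rN$; and, using $\SynCong{L^r}(u,v) \iff \SynCong{L}(u^r,v^r)$ together with $\SynCong{L}$ being a two-sided congruence (so that refining $\sim_\rN$ already forces refining the a priori stronger right-extension equivalence), the map in (3) is injective $\iff \SynCong{L} \subseteq \sim_\rN$.

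Finally I would match these with the atomicity characterisations recalled at the start of the section. Since the atoms $J(\jslLP{L})$ are precisely the $\rE_L$-classes (Lemma~\ref{lem:cl_L_basics}.6), $\rN$ is atomic $\iff \jslLangs{\rN} \subseteq \jslLP{L} \iff$ every $L(\rN_{@z})$ is $\rE_L$-saturated $\iff \rE_L \subseteq \sim_\rN$, which with the first paragraph gives (1). Since the atoms of $\jslLRP{L}$ are the $\SynCong{L}$-classes (the lemma identifying $J(\jslLRP{L})$ with $Syn(L)$), $\rN$ is subatomic $\iff$ every $L(\rN_{@z})$ is $\SynCong{L}$-saturated $\iff \SynCong{L} \subseteq \sim_\rN$, giving (3). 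For (2) I would use Lemma~\ref{lem:pos_left_preds_basics}.2--3 to see that the join-irreducibles of the finite distributive lattice $\jslLD{L}$ are exactly the principal $\preceq_L$-up-sets of atoms $\bigcap\{\, X \in \LW{L} : \alpha \subseteq X \,\}$, so that $\jslLD{L}$ coincides with the collection of $\preceq_L$-up-closed atomic languages; then $\rN$ is positively atomic $\iff$ every $L(\rN_{@z})$ is $\preceq_L$-up-closed $\iff \preceq_L \subseteq \preceq_\rN$, and this in particular forces $\rE_L \subseteq \sim_\rN$, so the isomorphism of (1) exists and is then order-reflecting, giving (2).

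I expect (2) to be the main obstacle: the clean equivalence ``positively atomic $\iff \jslLD{L}$ is the set of $\preceq_L$-up-closed atomic languages'' rests on showing that the description of $J(\jslLD{L})$ from Lemma~\ref{lem:pos_left_preds_basics} is literally the family of principal up-sets of $(J(\jslLP{L}), \preceq_L)$ and then invoking Birkhoff duality for finite distributive lattices, and this is the one place where I must be careful about the up/down orientation and about the relative-complement isomorphism $(\jslLD{\overline{L}})^\pOp \cong \jslLD{L}$ that Lemma~\ref{lem:pos_left_preds_basics}.1 runs through. The reverse-bookkeeping in parts (1) and (3) — keeping $u$ apart from $u^r$, left quotients of $L^r$ apart from left word quotients of $L$, and $\SynCong{L^r}$ apart from $\SynCong{L}$ — is routine but must be carried out with care.
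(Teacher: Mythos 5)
Your proposal is correct, and it takes a genuinely different route from the paper. The paper proves all three parts by running its duality machinery: it dualises the embedding $\jslDfaSimple{\jslDfaSc{\rN}} \hookto \jslDfaBoolMin{L}$ (resp.\ $\jslDfaDistMin{L}$, $\jslDfaSynBoolMin{L}$) using Corollaries \ref{cor:reach_de_morgan_dual_simple}, \ref{cor:dual_jsl_dfa_of_left_preds}, \ref{cor:dual_jsl_dfa_of_left_dist_preds} and Theorems \ref{thm:trans_semi_dual_rqc}, \ref{thm:canon_bool_syn_dep_aut}, restricts to classically reachable parts to extract the dfa isomorphism, and for the converse directions extends dfa morphisms to $\JSL$-dfa morphisms via the free constructions (Theorems \ref{thm:free_jsl_dfa} and \ref{thm:free_jsl_dfa_ordered}) before dualising back; part (3) additionally routes through transition semirings and right-quotient closure. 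You instead compute the state of $\rsc{\rev{\rN}}$ reached by $u$ explicitly as $S_u = \{ z : u^r \in L(\rN_{@z}) \}$ and reduce each part to an inclusion of relations on words --- $\rE_L \subseteq\; \sim_\rN$, $\preceq_L \subseteq\; \preceq_\rN$, $\SynCong{L} \subseteq\; \sim_\rN$ --- matched against the observation that $\LP{L}$, $\LD{L}$ and $\LRP{L}$ are exactly the $\rE_L$-saturated, $\preceq_L$-up-closed and $\SynCong{L}$-saturated languages respectively. The individual steps check out: the kernel of $\sem{-}_{\TM{\rsc{\rev{\rN}}}}$ really does land in $\SynCong{L^r}$, and the congruence property of $\SynCong{L}$ correctly collapses the a priori stronger right-extension condition back to $\SynCong{L} \subseteq\; \sim_\rN$; and the identification of $\jslLD{L}$ with the $\preceq_L$-up-closed languages does follow from Lemma \ref{lem:pos_left_preds_basics}.2--3, since $J(\jslLD{L})$ consists precisely of the principal $\preceq_L$-up-sets $\bigcap\{ X \in \LW{L} : w \in X \}$ and every up-closed set is a finite union of these (a direct argument also works: each $X \in \LW{L}$ is up-closed, up-closed sets are closed under unions and intersections, and conversely every up-closed set is a finite union of principal up-sets, each an intersection of left word quotients). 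Your approach buys a self-contained, elementary proof that makes the common shape of the three statements transparent --- each says that a canonical $L$-determined relation on words refines the corresponding $\rN$-determined one --- at the cost of losing the explicit connection to the dual canonical machines and the syntactic semiring that the paper's proof exhibits and that the surrounding sections depend on.
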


\begin{proof}
  \item
  
  \begin{enumerate}
    \item
    Let $\delta := \jslDfaSc{\minDfa{L^r}}$ be the dual of $\jslDfaBoolMin{L}$ -- see Corollary \ref{cor:dual_jsl_dfa_of_left_preds}.

    \smallskip
    Assuming $\rN$ is an atomic nfa, we have the composite $\JSL$-dfa morphism:
    \[
      \underbrace{\jslDfaReach{\delta}
      \hookto
      \delta
      \overset{q}{\epito}
      \jslDfaReach{\jslDfaSc{\rev{\rN}}}}_\phi.
    \]
    The surjection $q$ arises by dualising $\iota : \jslDfaSimple{\jslDfaSc{\rN}} \hookto \jslDfaBoolMin{L}$ (see Lemma \ref{lem:atomic_l_ext_char}) and applying Corollary \ref{cor:reach_de_morgan_dual_simple} and Corollary \ref{cor:dual_jsl_dfa_of_left_preds}. Viewing $\jslDfaReach{\delta}$ as its underlying dfa, consider its classically reachable part:
    \[
      \begin{tabular}{lll}
        $\reach{\jslDfaReach{\delta}}$
        &
        $= \reach{\delta}$
        & (by definition of $\jslDfaReach{-}$)
        \\ &
        $= \reach{\jslDfaSc{\minDfa{L^r}}}$
        \\ &
        $= \rsc{\rev{\rev{\minDfa{L^r}}}}$
        & (by Note \ref{note:det_airr_preserve_acceptance})
        \\ &
        $= \rsc{\minDfa{L^r}}$
        \\ &
        $\cong \minDfa{L^r}$
        & (holds for any dfa).
      \end{tabular}
    \]
    The above observation provides the injective dfa morphism $\psi$ below:
    \[
      \overbrace{
      \minDfa{L^r}
      \overset{\psi}{\monoto} \jslDfaReach{\delta}
      \overset{\phi}{\to} \jslDfaReach{\jslDfaSc{\rev{\rN}}}}^\chi
    \]
    Since $\minDfa{L^r}$ is state-minimal, the composite dfa morphism $\chi$ is injective. Since $\minDfa{L^r}$ is reachable we obtain the dfa isomorphism $\minDfa{L^r}
    \cong
    \reach{\jslDfaReach{\jslDfaSc{\rev{\rN}})}}
    = \rsc{\rev{\rN}}$ recalling Note \ref{note:det_airr_preserve_acceptance}.

    \smallskip
    Conversely fix an nfa $\rN$ such that $\minDfa{L^r} \cong \rsc{\rev{\rN}}$. By definition of $\jslDfaReach{-}$ and Note \ref{note:det_airr_preserve_acceptance},
    \[
      \reach{\jslDfaReach{\jslDfaSc{\rev{\rN}})}}
      = \reach{\jslDfaSc{\rev{\rN}}}
      = \rsc{\rev{\rN}}.
    \]
    Then by definition of $\jslDfaReach{-}$ we have an injective dfa morphism $\chi : \minDfa{L^r} \monoto \jslDfaReach{\jslDfaSc{\rev{\rN}})}$. By taking the free $\JSL$-dfa on a dfa (Theorem \ref{thm:free_jsl_dfa}) this extends to a $\JSL$-dfa morphism $\hat{\chi} : \delta \to \jslDfaReach{\jslDfaSc{\rev{\rN}}}$. Applying duality, Corollary \ref{cor:dual_jsl_dfa_of_left_preds} and Corollary \ref{cor:reach_de_morgan_dual_simple} we obtain a $\JSL$-dfa morphism $\jslDfaSimple{\jslDfaSc{\rN}} \to \jslDfaBoolMin{L}$. Then every language accepted by $\rN$ is atomic, so that $\rN$ is itself atomic.

    \item
    Let $\delta := \Det (\minSatDfa{L^r}, \subseteq, \rev{\minSatDfa{L^r}})$ be the dual of $\jslDfaDistMin{L}$ -- see Corollary \ref{cor:dual_jsl_dfa_of_left_dist_preds}.
    
    Assuming $\rN$ is positively atomic, we have the composite $\JSL$-dfa morphism:
    \[
      \underbrace{\jslDfaReach{\delta}
      \hookto
      \delta
      \overset{q}{\epito}
      \jslDfaReach{\jslDfaSc{\rev{\rN}}}}_\phi.
    \]
    The surjection $q$ arises by dualising $\iota : \jslDfaSimple{\jslDfaSc{\rN}} \hookto \jslDfaDistMin{L}$ and applying Corollary \ref{cor:reach_de_morgan_dual_simple} and Corollary \ref{cor:dual_jsl_dfa_of_left_dist_preds}. Repeating the argument from (1) we obtain the injective dfa morphism $\psi$ below:
    \[
      \minDfa{L^r}
      \overset{\psi}{\monoto} \jslDfaReach{\delta}
      \overset{\phi}{\to} \jslDfaReach{\jslDfaSc{\rev{\rN}}}.
    \]
    Again repeating the argument in (1), we obtain the dfa isomorphism $\minDfa{L^r} \cong \rsc{\rev{\rN}}$. To see it is an order-isomorphism w.r.t.\ inclusion, first observe $\psi$ has action $\lambda u^{-1} L^r . \{ Y \in \LW{L^r} : Y \subseteq u^{-1} L^r \}$ so it preserves/reflects the inclusion ordering. Finally, $\phi$ certainly preserves inclusions since it is join-semilattice morphism. It reflects inclusions when restricted to $\psi[\minDfa{L^r}]$ because simplicity forbids additional inclusions.

    \smallskip
    Conversely, fix an nfa $\rN$ such that $\minDfa{L^r} \cong \rsc{\rev{\rN}}$ where this isomorphism also preserves and reflects inclusions. Repeating (1) yields the injective dfa morphism $\chi : \minDfa{L^r} \monoto \jslDfaReach{\jslDfaSc{\rev{\rN}}}$, additionally preserving inclusions. In fact, $\chi$ is an ordered dfa morphism (see Definition \ref{def:dfas}) so applying the respective free construction (Theorem \ref{thm:free_jsl_dfa_ordered}) provides $\hat{\chi} : \delta \to \jslDfaReach{\jslDfaSc{\rev{\rN}}}$. Applying duality, Corollary \ref{cor:dual_jsl_dfa_of_left_dist_preds} and Corollary \ref{cor:reach_de_morgan_dual_simple} yields $\jslDfaSimple{\jslDfaSc{\rN}} \to \jslDfaDistMin{L}$, so $\rN$ is positively atomic.

    \item
    Assume $\rN$ is a subatomic nfa. Then we have $\iota : \gamma \hookto \jslDfaSynBoolMin{L}$ where $\gamma := \jslDfaSimple{\jslDfaSc{\rN}}$. Since $\iota$'s codomain is right-quotient closed we also have the $\JSL$-dfa inclusion morphism $\iota_1 : \jslDfaRqc{\gamma} \hookto \jslDfaSynBoolMin{L}$. Dualising, and applying Theorem \ref{thm:trans_semi_dual_rqc} and Theorem \ref{thm:canon_bool_syn_dep_aut}, we obtain a surjective morphism $q_1 : \jslDfaSc{\SynMonDfa{L^r}} \epito \jslDfaTs{\gamma^\pentagram}$.
    Furthermore applying right-quotient closure to the inclusion $\jslDfaMin{L} \hookto \gamma$ yields $\jslDfaSynMin{L} = \jslDfaRqc{\jslDfaMin{L}} \hookto \jslDfaRqc{\gamma}$. Dualising the latter and applying Corollary \ref{cor:dual_syn_semiring} we obtain a surjective morphism $q_2 : \jslDfaTs{\gamma^\pentagram} \epito \jslDfaSyn{L^r}$ onto $L^r$'s syntactic semiring (viewed as a $\JSL$-dfa). Then consider the composite $\JSL$-dfa morphism:
    \[
      \jslDfaSc{\SynMonDfa{L^r}}
      \longepi{q_1} \jslDfaTs{\gamma^\pentagram}
      \longepi{q_2} \jslDfaSyn{L^r}.
    \]
    The classically reachable part of the domain $\JSL$-dfa consists of singleton sets and is isomorphic to $\SynMonDfa{L^r}$. Likewise by Corollary \ref{cor:reach_syn_sr_syn_mon} the reachable part of the codomain is isomorphic to $\SynMonDfa{L^r}$. The image of a reachable dfa under a dfa morphism is reachable, so the composite morphism restricts to:
    \[
      \xymatrix@=10pt{
        \reach{\jslDfaSc{\SynMonDfa{L^r}}} \ar@{->>}[rr]^-{q_1}
        &&
        \reach{\jslDfaTs{\gamma^\pentagram}} \ar@{->>}[rr]^-{q_2}
        &&
        \reach{\jslDfaSyn{L^r}}
        \\
        \SynMonDfa{L^r} \ar@{-}[u]^\cong \ar@{=}[rrrr] &&&& \SynMonDfa{L^r} \ar@{-}[u]_\cong
      }
    \]
    Then $q_1$ is bijective and hence a dfa isomorphism, so that $\reach{\jslDfaTs{\gamma^\pentagram}} \cong \SynMonDfa{L^r}$ too. Importantly,
    \[
      \reach{\jslDfaTs{\gamma^\pentagram}}
      \cong \TMDfa{\rsc{\rev{\rN}}}
    \]
    because $\gamma^\pentagram \cong \jslDfaReach{\jslDfaSc{\rev{\rN}}}$ by Corollary \ref{cor:reach_de_morgan_dual_simple} and Example \ref{ex:dual_full_subset_construction}, so we can apply Lemma \ref{lem:dfa_tmon_as_reach_ts}. Finally, the action of the dfa isomorphism $\TMDfa{\rsc{\rev{\rN}}} \cong \SynMonDfa{L^r}$ defines the desired monoid isomorphism.

    \smallskip
    Conversely suppose $\TM{\rsc{\rev{\rN}}} \cong \SynMon{L^r}$ via the generator-preserving mapping $\sem{w}_{\TM{\rsc{\rev{\rN}}}}$ $\mapsto$ $\sem{w}_{\SynMon{L^r}}$. Its action defines a dfa isomorphism $\TMDfa{\rsc{\rev{\rN}}} \cong \SynMonDfa{L^r}$, where the conditions concerning the initial state and transitions are obvious. The final states are preserved/reflected because:
    \[
      \begin{tabular}{lll}
        $\sem{w}_{\TM{\rsc{\rev{\rN}}}} \in F_{\TMDfa{\rsc{\rev{\rN}}}}$
        &
        $\iff \breve{\rN}_w [I_{\rev{\rN}}] \cap F_{\rev{\rN}} \neq \emptyset$
        \\ &
        $\iff \breve{\rN}_w [F_\rN] \cap I_\rN \neq \emptyset$
        \\ &
        $\iff w^r \in L$
        \\ &
        $\iff w \in L^r$
        \\ &
        $\iff \sem{w}_{\SynCong{L^r}} \in F_{\SynMonDfa{L^r}}$.
      \end{tabular}
    \]
    Applying Lemma \ref{lem:dfa_tmon_as_reach_ts} we deduce
     $\SynMonDfa{L^r} \cong \TMDfa{\rsc{\rev{\rN}}} \cong \reach{\jslDfaTs{\jslDfaSc{\rev{\rN}}}}$. Then we have a dfa morphism $f : \SynMonDfa{L^r} \to \jslDfaTs{\jslDfaSc{\rev{\rN}}}$. Applying the free construction (Theorem \ref{thm:free_jsl_dfa}) we obtain $\hat{f} : \jslDfaSc{\SynMonDfa{L^r}} \to \jslDfaTs{\jslDfaSc{\rev{\rN}}}$. It is actually surjective because $\jslDfaTs{-}$ constructs $\JSL$-reachable machines. Dualising this free-extension yields:
     \[
       \jslDfaRqc{\jslDfaSc{\rN}}
       \cong
       (\jslDfaTs{\jslDfaSc{\rev{\rN}}})^\pentagram
       \xto{\hat{f}^\pentagram}
       (\jslDfaSc{\SynMonDfa{L^r}})^\pentagram
       \cong
       \jslDfaSynBoolMin{L}.
     \]
     The left isomorphism follows by Theorem \ref{thm:trans_semi_dual_rqc} and Example \ref{ex:dualising_reachable_subset_construction}, whereas the right one follows by Theorem \ref{thm:canon_bool_syn_dep_aut}. Finally since $\jslDfaSimple{\jslDfaSc{\rN}} \hookto \jslDfaRqc{\jslDfaSc{\rN}}$ we deduce $\rN$ is subatomic.

  \end{enumerate}

\end{proof}

\bibliographystyle{alpha}
\bibliography{bib-2019}

\end{document}